\documentclass[10pt]{article}
\usepackage{amsmath}
\usepackage{amsthm}
\usepackage{amssymb}
\usepackage{fullpage}
\usepackage{cite}
\usepackage[linesnumbered,nofillcomment,noend]{algorithm2e}
\usepackage{hyperref}
\usepackage{tikz,ifthen,calculator,xcolor,comment}
\usepackage{rotating}
\usepackage{subcaption}
\usepackage{enumitem}
\setlist{itemsep=0mm}
\setlist{topsep=1mm}
\setlist{parsep=0mm}
\usepackage{apxproof}
\usepackage{pifont}
\input{commands-tam.sty}

\newtheorem{theorem}{Theorem}
\newtheorem{definition}{Definition}

\newtheorem{lemma}{Lemma}
\newtheorem{corollary}{Corollary}
\newtheorem{observation}{Observation}

\newcommand{\mydirection}{->}
\newcommand{\mystartpos}{0}
\newcommand{\myendpos}{0}
\newcommand{\myaboveorbelow}{above}
\newcommand{\mylevel}{0}
\newcommand{\myanglestart}{0}
\newcommand{\myanglemidin}{0}
\newcommand{\myanglemidout}{0}
\newcommand{\myangleend}{0}
\newcommand{\piw}{\pi_{\textmd{win}}}
\newcommand{\pil}{\pi_{\textmd{lose}}}

\newcommand{\arrow}[4]{
  \SUBTRACT{#1}{1}{\mystartpos}
  \SUBTRACT{#2}{1}{\myendpos}
  \ifthenelse{#3 < 0}{      
    \renewcommand{\myaboveorbelow}{below}
    \SUBTRACT{0}{#3}{\mylevel}
    \renewcommand{\myanglestart}{270}
    \ifthenelse{#1 < #2}{
      \renewcommand{\mydirection}{->}
      \renewcommand{\myanglemidin}{180}
      \renewcommand{\myanglemidout}{0}
    } {
      \renewcommand{\mydirection}{<-}
      \renewcommand{\myanglemidin}{0}
      \renewcommand{\myanglemidout}{180}
    }
    \renewcommand{\myangleend}{270}
  } {
    \renewcommand{\myaboveorbelow}{above}
    \renewcommand{\mylevel}{#3}
    \renewcommand{\myanglestart}{90}
    \ifthenelse{#1 < #2}{
      \renewcommand{\mydirection}{->}
      \renewcommand{\myanglemidin}{180}
      \renewcommand{\myanglemidout}{0}
    } {
      \renewcommand{\mydirection}{<-}
      \renewcommand{\myanglemidin}{0}
      \renewcommand{\myanglemidout}{180}
    }
    \renewcommand{\myangleend}{90}
  }
  
  \MULTIPLY{\mystartpos}{10}{\tmp}
  \ADD{6}{\tmp}{\mystartx}
  \ADD{6}{0}{\mystarty}
  
  \MULTIPLY{\myendpos}{10}{\tmp}
  \ADD{6}{\tmp}{\myendx}
  \ADD{6}{0}{\myendy}
  
  \ADD{\mystartx}{\myendx}{\tmp}
  \DIVIDE{\tmp}{2}{\mymidx}
  \MULTIPLY{\mylevel}{5}{\tmp}
  \ADD{\tmp}{6}{\mymidy}

  \ifthenelse{\equal{\myaboveorbelow}{below}}{      
    \SUBTRACT{0}{\mystarty}{\mystarty}
    \SUBTRACT{0}{\myendy}{\myendy}
    \SUBTRACT{0}{\mymidy}{\mymidy}
  } {}

  \draw[->] (\mystartx,\mystarty) to
       [out=\myanglestart,in=\myanglemidin] (\mymidx,\mymidy)
       to [out=\myanglemidout,in=\myangleend] (\myendx,\myendy);           
}

\newcommand{\goingforward}[1]{
  \vspace*{2mm}\noindent\fbox{
  \begin{minipage}{0.97\linewidth}
    Going forward, #1
  \end{minipage}
  }\vspace*{2mm}
}

\begin{document}

\title{\textbf{Sequential non-determinism in tile self-assembly: a general framework and an application to efficient temperature-1 self-assembly of squares}}

\author{%
David Furcy\thanks{Computer Science Department, University of Wisconsin Oshkosh, Oshkosh, WI 54901, USA,\protect\url{furcyd@uwosh.edu}.}
\and
Scott M. Summers\thanks{Computer Science Department, University of Wisconsin Oshkosh, Oshkosh, WI 54901, USA,\protect\url{summerss@uwosh.edu}. This author's research was supported in part by University of Wisconsin Oshkosh Research Sabbatical (S581) during Fall 2023. }
}

%\institute{}

\date{}
\maketitle

\begin{abstract}
In this paper, we work in a 2D version of the probabilistic variant of Winfree's abstract Tile Assembly Model defined by Chandran, Gopalkrishnan and Reif (SICOMP 2012) in which attaching tiles are sampled uniformly with replacement.
First, we develop a framework called ``sequential non-determinism'' for analyzing the probabilistic correctness of a non-deterministic, temperature-1 tile assembly system (TAS) in which most (but not all) tile attachments are deterministic and the non-deterministic attachments always occur in a specific order.
Our main sequential non-determinism result equates the probabilistic correctness of such a TAS to a finite product of probabilities, each of which 1) corresponds to the probability of the correct type of tile attaching at a point where it is possible for two different types to attach, and 2) ignores all other tile attachments that do not affect the non-deterministic attachment.
We then show that sequential non-determinism allows for efficient and geometrically expressive self-assembly.
To that end, we constructively prove that for any positive integer $N$ and any real $\delta \in (0,1)$, there exists a TAS that self-assembles into an $N \times N$ square with probability at least $1 - \delta$ using only $O\left( \log N + \log \frac{1}{\delta} \right)$ types of tiles.
Our bound improves upon the previous state-of-the-art bound for this problem by Cook, Fu and Schweller (SODA 2011). 
\end{abstract}

\section{Introduction}
\label{sec:introduction}
The notion of ``self-assembly'' is generally said to involve seemingly simple, fundamental components, evolving through instances of local interaction, to a terminal assembly whose complexity is greater than the sum of its parts. Self-assembly is ubiquitous in nature, e.g., atoms self-assemble into molecules, molecules self-assemble into macromolecules, etc. 

While self-assembly is ubiquitous throughout the natural world, researchers have recently begun investigating the extent to which the power of self-assembly can be harnessed for the systematic creation of atomically-precise computational, biomedical and mechanical devices at the nano-scale. For example, in the early 1980s, Nadrian Seeman introduced a revolutionary experimental technique for controlling nano-scale self-assembly known as ``DNA tile self-assembly'' \cite{Seem82} thus initiating the scientific study of algorithmic self-assembly. Erik Winfree's abstract Tile Assembly Model (aTAM) \cite{Winf98} is a discrete mathematical model of DNA tile self-assembly, which we describe briefly in the next paragraph. 

In the aTAM, a DNA tile is represented as an un-rotatable unit square {\em tile type}, each side of which has a corresponding {\em glue} comprising a string {\em label} and a positive integer {\em strength}. A {\em tile} is an instance of a tile type that is {\em placed} at some integer point in the 2D Cartesian coordinate space. The aTAM restricts the number of tile types in a tile set to be finite but infinitely many tiles of the same type may be placed. If two tiles are placed next to each other and their opposing glues match, then the tiles permanently {\em bind} with the strength of the glue, thus creating a tile {\em assembly}, i.e., a mapping of integer points to tile types, of size 2. Before the process of self-assembly begins, a positive integer {\em temperature} value is chosen, which is usually 1 or 2 and a fixed {\em seed} tile is placed at a designated integer point defining the initial seed-containing assembly. The tile set, along with the associated temperature and placement of the seed tile make up a {\em tile assembly system} (TAS).
Self-assembly in the aTAM is modeled by an {\em assembly sequence} of the TAS, which is a series of {\em tile attachment steps} such that in each step, the assembly sequence {\em attaches} a tile if the tile can be placed at an unoccupied integer point and bind to a tile of the seed-containing assembly with total strength at least the temperature.
If more than one tile may attach in a single step, then one tile is chosen non-deterministically to attach. 
Each tile attachment step in an assembly sequence {\em results} in a new assembly one tile larger than the assembly in the previous step, and we say the resulting assembly is {\em produced} by the TAS. The steps of an assembly sequence continue until a {\em terminal assembly} is reached to which no further tiles can bind. Figure~\ref{fig:intro-example1} depicts an example of a {\em non-deterministic}, temperature-1 TAS $\mathcal{T}$ in which two different terminal assemblies are produced. 
\begin{figure}[!ht]
  \centering
   \includegraphics[width=.65\linewidth]{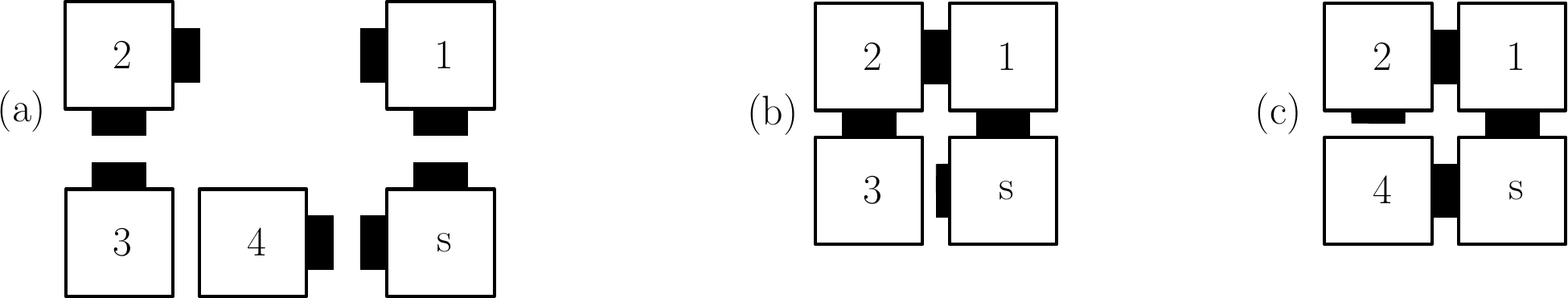}
  \caption{\label{fig:intro-example1} Example of a TAS $\mathcal{T}$ with the tile set shown in (a), in which the temperature is 1 and s is the seed tile. Sub-figures (b) and (c) show assembly sequences of $\mathcal{T}$ that result in two different terminal assemblies.}
\end{figure}

Observe that temperature-1 self-assembly, which involves a TAS whose temperature is set to 1, does not possess a synchronization mechanism that can be used to enforce cooperative binding, i.e., a tile attaching at a point only after and to at least two of its neighboring tiles attach. 
However, temperature-2 self-assembly provides a mechanism to enforce cooperative binding, which can be leveraged to exhibit 1) optimal self-assembly of squares \cite{RotWin00,AdlemanCGH01} and algorithmically-specified finite shapes \cite{SolWin07}, 2) efficient self-assembly of rectangles \cite{AGKS05g}, and 3) even an intrinsically universal tile set, i.e., a tile set that can, for an arbitrary input TAS, be configured to carry out a natural simulation of the input TAS but at a higher scale factor \cite{USA}.
With the lack of a cooperative binding mechanism in temperature-1 self-assembly, it would seem natural to speculate that the computational and geometric expressiveness of temperature-1 self-assembly is fundamentally limited.
Such speculation was formalized back in 2008 by Doty, Patitz and Summers \cite{LSAT1csp} through their pumpability conjecture in which they claimed that the only kinds of patterns that result from temperature-1 self-assembly in which a unique terminal assembly is produced effectively correspond to regular languages.
Even before the pumpability conjecture, temperature-1 self-assembly had been shown by Rothemund and Winfree \cite{RotWin00} to hinder the efficient self-assembly of shapes when tile assemblies are required to be fully connected, i.e., if two tiles are neighbors, then they bind to each other. 
Manuch, Stacho and Stoll \cite{ManuchSS10} generalized this result to temperature-1 self-assembly that always results in tile assemblies that merely contain no glue mismatches. 
Subsequent results showing that temperature-1 self-assembly is neither intrinsically universal  \cite{WindowMovieLemma,MeunierW17} nor capable of bounded Turing computation \cite{MeunierW17} continued to elucidate the supposed weakness of not being able to prevent non-cooperative binding.
A major breakthrough by Meunier, Regnault and Woods \cite{MeunierRW20} established a general pumping lemma for temperature-1 self-assembly.
Although it did not directly affirm the pumpability conjecture, Meunier and Regnault \cite{MeunierR21} used it in their proof of the pumpability conjecture.
Yet the lack of enforceable cooperative binding is not solely to blame for the weakness of temperature-1 self-assembly.
In fact, generalizations of the aTAM such as \cite{DDFIRSS07,SFTSAFT,FeketeHPRS15,FuPSS12,GilbertHPR16} enjoy a striking resemblance of computational and geometric expressiveness to that of temperature-2 self-assembly. 
This is also the case when the aTAM is only mildly generalized, e.g., by allowing: 1) the use of a unique negative glue \cite{SingleNegative}, 2) duple tiles \cite{HendricksPRS18}, 3) the attachment of cubic tiles in at most one additional plane \cite{CookFuSch11,jFurcyMickaSummers,FurcySummersWendlandtTCS,FurcyS18,FurcySW23,FurcySV23}, or 4) non-determinism \cite{CookFuSch11} in which multiple terminal assemblies may be produced.
The results in this paper are motivated by the problem of computing the probability that a particular terminal assembly is produced by non-deterministic, temperature-1 self-assembly. To study this problem for a given non-deterministic, temperature-1 TAS $\mathcal{T}$, we use the Markov chain $\mathcal{M}_{\mathcal{T}}$, in which the states are the assembly sequences of $\mathcal{T}$ and there is a transition in $\mathcal{M}_{\mathcal{T}}$ from one assembly sequence to another with a positive probability equal to one over the number of distinct assemblies that can result from attaching a single tile to the result of the former. 
This is the {\em tile concentration programming} aTAM subject to the restriction that all tile types have the same concentration (see \cite{KaoS08,Dot10} for a more general and thorough development of the tile concentration programming model). Note that this is also the same as the probabilistic aTAM (PTAM), which was defined by Chandran, Gopalkrishnan, and Reif \cite{ChandranGR12}, but extended to 2D, which we will refer to as the 2DPTAM going forward. 
It is easy to see that $\mathcal{M}_{\mathcal{T}}$ is a rooted tree. Then, the {\em probability} that a given TAS $\mathcal{T}$ produces a terminal assembly is the probability of $\mathcal{M}_{\mathcal{T}}$ reaching any state that corresponds to an assembly sequence of $\mathcal{T}$ that results in the terminal assembly. For example, the probability that the TAS shown in Figure~\ref{fig:intro-example1} produces the terminal assembly of part (c) can be easily computed as $\frac{7}{8}$ and the corresponding $\mathcal{M}_{\mathcal{T}}$ is shown in Figure~\ref{fig:intro-example1-noAB}. In that and all subsequent figures, all edge label values equal to 1 are omitted (but implied). 

\begin{figure}[ht!]
	\centering
	\includegraphics[width=.8in]{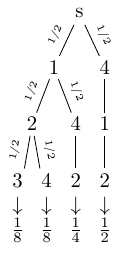}
	\caption{\label{fig:intro-example1-noAB} $\mathcal{M}_{\mathcal{T}}$ corresponding to the TAS in Figure~\ref{fig:intro-example1}. Each path from the root to a leaf node corresponds to an assembly sequence of $\mathcal{T}$. For the sake of simplification, the label of each node is merely the type of the last tile to attach in the assembly sequence to which that node corresponds, rather than the full assembly sequence, which can easily be reconstructed by following (and then reversing) the path back up to the root. Each leaf node corresponds to an assembly sequence that results in  a terminal assembly. The number under each leaf node is the probability of the corresponding sequence being produced by $\mathcal{T}$.}
\end{figure}

While $\mathcal{M}_{\mathcal{T}}$ can be equivalently formulated in terms of a directed acyclic graph (e.g., \cite{CookFuSch11}), we use a tree to aid in the visualization of some of our results.

Computing the probability that $\mathcal{T}$ from Figure~\ref{fig:intro-example1} produces one of its terminal assemblies is easy. However, consider the TAS shown in Figure~\ref{fig:intro-example2}, which we will denote as $\mathcal{T}$, and is the same as the $\mathcal{T}$ from Figure~\ref{fig:intro-example1} but with the addition of the A and B tile types, along with corresponding east-facing glues to the 1 and s tiles.

\begin{figure}[!ht]
  \centering
  \includegraphics[width=.65\linewidth]{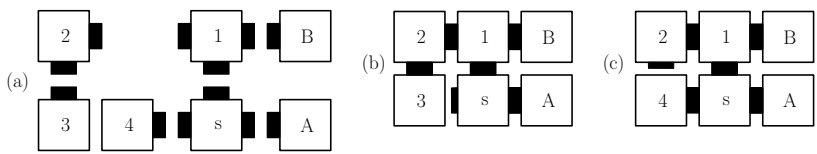}
  \caption{\label{fig:intro-example2} Example of another TAS $\mathcal{T}$ with the tile set shown in (a), in which the temperature is 1 and s is the seed tile. Sub-figures (b) and (c) show assembly sequences of $\mathcal{T}$ that result in two different terminal assemblies.}
\end{figure}

The presence of the A and B tiles in $\mathcal{T}$ does not affect, but rather complicates the computation of, the probability that $\mathcal{T}$ produces one of its terminal assemblies. To see this, Figure~\ref{fig:lnd-tree0} shows $\mathcal{M}_{\mathcal{T}}$, from which the probability that $\mathcal{T}$ produces the terminal assembly that contains the 4 tile can be derived.

\begin{figure}[ht!]
	\centering
	\includegraphics[width=\linewidth]{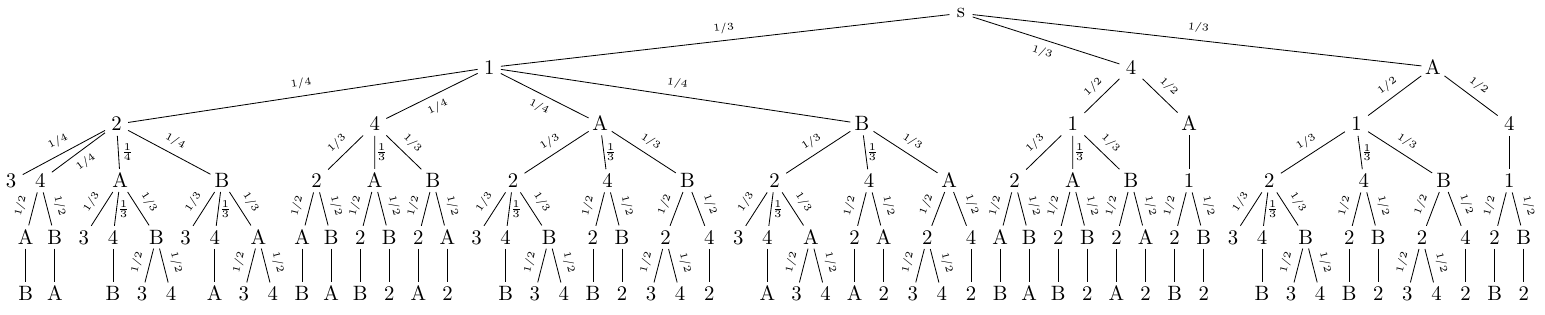}
  	\caption{\label{fig:lnd-tree0} Markov chain $\mathcal{M}_{\mathcal{T}}$, where $\mathcal{T}$ is the TAS from Figure~\ref{fig:intro-example2}. Note that this tree is not quite complete because it only contains the assembly sequences that result in the terminal assembly that contains the 4 tile. All of the sub-trees whose root is labeled with a $3$ and does not appear on the lowest level of the tree have been pruned below their root node (e.g., the leftmost node in the figure).}
\end{figure}

Based on $\mathcal{M}_{\mathcal{T}}$ shown in Figure~\ref{fig:lnd-tree0}, it is easy (although tedious) to see that the probability that $\mathcal{T}$ produces the terminal assembly that contains the 4 tile is:
\begin{multline*}
\begin{gathered}
\frac{1}{3} \frac{1}{4} \frac{1}{4} \frac{1}{2} + 				% s-1-2-4-a-b	1/96
\frac{1}{3} \frac{1}{4} \frac{1}{4} \frac{1}{2} + 				% s-1-2-4-b-a	1/96
\frac{1}{3} \frac{1}{4} \frac{1}{4} \frac{1}{3} + 				% s-1-2-a-4-b	1/144
\frac{1}{3} \frac{1}{4} \frac{1}{4} \frac{1}{3} \frac{1}{2} + 	% s-1-2-a-b-4	1/288
\frac{1}{3} \frac{1}{4} \frac{1}{4} \frac{1}{3} + 				% s-1-2-b-4-a	1/144
\frac{1}{3} \frac{1}{4} \frac{1}{4} \frac{1}{3} \frac{1}{2} +	% s-1-2-b-a-4	1/288
\frac{1}{3} \frac{1}{4} \frac{1}{3} \frac{1}{2} + 				% s-1-4-2-a-b	1/72
\frac{1}{3} \frac{1}{4} \frac{1}{3} \frac{1}{2} + 				% s-1-4-2-b-a	1/72
\frac{1}{3} \frac{1}{4} \frac{1}{3} \frac{1}{2} + 				% s-1-4-a-2-b	1/72
\frac{1}{3} \frac{1}{4} \frac{1}{3} \frac{1}{2} + 	\\			% s-1-4-a-b-2	1/72
\frac{1}{3} \frac{1}{4} \frac{1}{3} \frac{1}{2} + 				% s-1-4-b-2-a	1/72
\frac{1}{3} \frac{1}{4} \frac{1}{3} \frac{1}{2} + 				% s-1-4-b-a-2	1/72
\frac{1}{3} \frac{1}{4} \frac{1}{3} \frac{1}{3} + 				% s-1-a-2-4-b	1/108
\frac{1}{3} \frac{1}{4} \frac{1}{3} \frac{1}{3} \frac{1}{2} +	% s-1-a-2-b-4	1/216
\frac{1}{3} \frac{1}{4} \frac{1}{3} \frac{1}{2} + 				% s-1-a-4-2-b	1/72
\frac{1}{3} \frac{1}{4} \frac{1}{3} \frac{1}{2} +				% s-1-a-4-b-2	1/72
\frac{1}{3} \frac{1}{4} \frac{1}{3} \frac{1}{2} \frac{1}{2} +	% s-1-a-b-2-4	1/144
\frac{1}{3} \frac{1}{4} \frac{1}{3} \frac{1}{2} +				% s-1-a-b-4-2	1/72
\frac{1}{3} \frac{1}{4} \frac{1}{3} \frac{1}{3} +				% s-1-b-2-4-a	1/108
\frac{1}{3} \frac{1}{4} \frac{1}{3} \frac{1}{3} \frac{1}{2} +\\	% s-1-b-2-a-4	1/216
\frac{1}{3} \frac{1}{4} \frac{1}{3} \frac{1}{2} +				% s-1-b-4-2-a	1/72
\frac{1}{3} \frac{1}{4} \frac{1}{3} \frac{1}{2} +				% s-1-b-4-a-2	1/72
\frac{1}{3} \frac{1}{4} \frac{1}{3} \frac{1}{2} \frac{1}{2} +	% s-1-b-a-2-4	1/144
\frac{1}{3} \frac{1}{4} \frac{1}{3} \frac{1}{2} +				% s-1-b-a-4-2	1/72
\frac{1}{3} \frac{1}{2} \frac{1}{3} \frac{1}{2}	+				% s-4-1-2-a-b	1/36
\frac{1}{3} \frac{1}{2} \frac{1}{3} \frac{1}{2}	+				% s-4-1-2-b-a	1/36
\frac{1}{3} \frac{1}{2} \frac{1}{3} \frac{1}{2}	+				% s-4-1-a-2-b	1/36
\frac{1}{3} \frac{1}{2} \frac{1}{3} \frac{1}{2}	+				% s-4-1-a-b-2	1/36
\frac{1}{3} \frac{1}{2} \frac{1}{3} \frac{1}{2}	+ 				% s-4-1-b-2-a	1/36
\frac{1}{3} \frac{1}{2} \frac{1}{3} \frac{1}{2}	+ \\			% s-4-1-b-a-2	1/36
\frac{1}{3} \frac{1}{2} \frac{1}{2}	+							% s-4-a-1-2-b	1/12
\frac{1}{3} \frac{1}{2} \frac{1}{2}	+							% s-4-a-1-b-2	1/12
\frac{1}{3} \frac{1}{2} \frac{1}{3} \frac{1}{3}	+				% s-a-1-2-4-b	1/54
\frac{1}{3} \frac{1}{2} \frac{1}{3} \frac{1}{3} \frac{1}{2} +	% s-a-1-2-b-4	1/108
\frac{1}{3} \frac{1}{2} \frac{1}{3} \frac{1}{2}	+ 				% s-a-1-4-2-b	1/36
\frac{1}{3} \frac{1}{2} \frac{1}{3} \frac{1}{2}	+ 				% s-a-1-4-b-2	1/36
\frac{1}{3} \frac{1}{2} \frac{1}{3} \frac{1}{2} \frac{1}{2}	+	% s-a-1-b-2-4	1/72
\frac{1}{3} \frac{1}{2} \frac{1}{3} \frac{1}{2}	+				% s-a-1-b-4-2	1/36
\frac{1}{3} \frac{1}{2} \frac{1}{2}	+							% s-a-4-1-2-b	1/12
\frac{1}{3} \frac{1}{2} \frac{1}{2} = \frac{7}{8}.				% s-a-4-1-b-2	1/12
\end{gathered}
\end{multline*} 
Perhaps unsurprisingly, the above probability is equal to the probability of $\mathcal{T}$ producing its terminal assembly that contains the 4 tile. 
After all, the A and B tiles should not affect the probability of $\mathcal{T}$ producing its terminal assembly that contains the 4 tile. So $\mathcal{T}$ should be equivalent to $\mathcal{T}$ from Figure~\ref{fig:intro-example1} in a probabilistic sense. 
Nevertheless, the point of this example is to show how analyzing the probabilistic correctness of even a seemingly simple non-deterministic, temperature-1 TAS in the 2DPTAM can become unwieldy very quickly. 
This is because the probability that the TAS produces some terminal assembly depends on all the tile attachment steps of all the assembly sequences of the TAS that result in the assembly, including tile attachment steps in which tiles attach in such a way that should not affect the probability.
In this paper, we formalize the intuition that tiles that attach in such a way that should not affect the probability of a certain kind of non-deterministic, temperature-1 TAS in fact do not affect the probability.
To that end, we develop a first-of-its-kind general framework called ``sequential non-determinism'' for analyzing the probabilistic correctness of such a TAS, e.g., ${\mathcal{T}}$ shown in Figure~\ref{fig:intro-example2}.
In sequential non-determinism, we assume that for a given non-deterministic, temperature-1 TAS and in every step of every assembly sequence thereof, a tile initially attaches via exactly one of its glues and is the only type of tile that can attach at that location and via that glue.
We also assume there exist (potentially infinitely many) {\em points of competition (POC)}, at which one of only two types of tiles may (and may only) be placed via two corresponding competing paths of tiles that originate from a common starting point and are disjoint until potentially meeting at the POC, e.g., the location at which either the 3 or the 4 tiles in Figure~\ref{fig:intro-example2} may attach.
Without going into all the technical details, we require that the POCs be ordered, sufficiently spaced out, and that, in every assembly sequence of the TAS, tiles attach at the POCs in the specified order.
We assume that each POC is either {\em essential} or {\em inessential}, that the number of essential ones is finite, and that each essential POC is associated with a type of tile that is the correct type of tile that should be placed at that POC. Under these assumptions, the TAS produces a unique {\em correct} terminal assembly.
Our main sequential non-determinism theorem says that the probability in the 2DPTAM that a given sequentially non-deterministic, temperature-1 TAS produces the correct (potentially infinite) terminal assembly is a finite product of probabilities, each of which corresponds to an essential POC having the correct type of tile placed at it.
Moreover, our framework allows for the computation of this probability to be carried out while effectively ignoring the tile attachments that should not affect whether the correct type of tile attaches at an essential POC. 
For example, our main theorem allows us to analyze the TAS in Figure~\ref{fig:intro-example2} as if it were the TAS in Figure~\ref{fig:intro-example1}.
We then show that the requirements of sequential non-determinism are not overly restrictive, since sequential non-determinism does not hinder sufficiently geometrically and computationally expressive self-assembly.
To that end, we constructively prove that for any real $\delta \in (0,1)$ and any $N \in \mathbb{Z}^+$, there exists a TAS that produces the correct terminal assembly with probability at least $1 - \delta$, whose domain is the  $N \times N$ square $\left\{ 0, 1, \ldots, N - 1 \right\}^2$ and whose tile set has a size that is $O\left( \log N + \log \frac{1}{\delta} \right)$.
Note that, for a fixed $\delta$, our bound is simply $O\left( \log N \right)$, which is an improvement over the previous state-of-the-art bound of $O\left( \log^2 N + \log N \log \frac{1}{\delta} \right)$ for this problem by Cook, Fu and Schweller \cite{CookFuSch11}.
In addition to the previous bound by Cook, Fu and Schweller, there are some other notable previous results related to probabilistic self-assembly in the aTAM.
Working in the general tile concentration programming aTAM, where the concentration of each tile type can be set accordingly to affect the behavior of the TAS, Kao and Schweller \cite{KaoS08} showed that for some fixed TAS and any positive integer $N$, it is possible to set the tile concentrations of the tile set of the TAS so that it produces, with high probability, a terminal assembly whose domain corresponds to an ``approximation'' of an $N \times N$ square.
Technically, they showed that there exists a TAS such that for any $N \in \mathbb{Z}^+$ and for any real $\delta, \varepsilon > 0$, the tile concentrations can be set so that the TAS produces with probability $1 - \delta$ a terminal assembly whose domain is $\left\{ 0, 1, \ldots, N' - 1 \right\}^2$ such that $(1 - \varepsilon)N \leq N' \leq (1+\varepsilon)N$.
Among several other results, Doty \cite{Dot10} proved  an ``exact'' analog of the aforementioned ``approximate'' result by demonstrating the existence of a TAS with constant size tile set such that, for any $N \in \mathbb{Z}^+$ and for any real $\delta > 0$, the concentrations of the tile types of the TAS can be set so that it produces with probability $1 - \delta$ a terminal assembly whose domain is $\left\{ 0, 1, \ldots, N - 1 \right\}^2$.
Chandran, Gopalkrishnan, and Reif \cite{ChandranGR12} investigated the size of the smallest tile set such that there is a corresponding singly-seeded, temperature-1, non-deterministic TAS that produces a terminal assembly whose domain is expected to be a {\em linear assembly}, i.e., an assembly whose domain, for some $N \in \mathbb{Z}^+$, is $\left\{ 0, 1, \ldots, N-1\right\} \times \{0\}$.
While the previously-mentioned results by Doty, and Kao and Schweller utilize arbitrarily-defined tile concentrations, in several of their results, Chandran, Gopalkrishnan and Reif worked in the PTAM (in 1D) and assumed the same uniform tile concentration programming model that we do in this paper.
Among numerous other results, they proved that for any $N \in \mathbb{Z}^+$, there exists a TAS, whose tile set has size $O\left( \log N \right)$, that produces a linear terminal assembly with expected but perhaps widely varying length $N$.
Chandran, Gopalkrishnan and Reif followed this result up by showing that there is a TAS whose tile set has size $O\left( \log^3 N \right)$ that does the same thing but can more precisely control the variance of the length of the produced terminal assembly.

\section{Preliminaries}
\label{sec:preliminaries}

In this section, we give definitions for the abstract Tile Assembly Model (aTAM) as well as probabilistic self-assembly in the aTAM, yielding a model that we call the 2DPTAM.

\subsection{Formal description of the abstract Tile Assembly Model}
\label{sec:tam-formal}

Fix an alphabet $\Sigma$.
$\Sigma^*$ is the set of finite strings over $\Sigma$. Let $\Z$, $\Z^+$, and $\N$ denote the set of integers, positive integers, and nonnegative integers, respectively. Given $V \subseteq \Z^2$, the \emph{full grid graph} of $V$ is the undirected graph $\fullgridgraph_V=(V,E)$,
and for all $\vec{x}, \vec{y}\in V$, $\left\{\vec{x},\vec{y}\right\} \in E \iff \| \vec{x} - \vec{y}\| = 1$.

A \emph{tile type} is a tuple $t \in (\Sigma^* \times \N)^{4}$, that is, a unit square with four sides listed in some standardized order, and each side having a \emph{glue} $g \in \Sigma^* \times \N$ consisting of a finite string \emph{label} and nonnegative integer \emph{strength}. We assume a finite set of tile types, usually denoted $T$, but an infinite number of copies of each tile type. Intuitively, a \emph{tile} is a copy of a tile type that is placed at a point but technically it is a pair $\left(\vec{x},t\right) \in \mathbb{Z}^2 \times T$. 

A {\em configuration} is a possibly empty arrangement of tile types on the integer lattice $\Z^2$, i.e., a partial function $\alpha:\Z^2 \dashrightarrow T$, or, equivalently, a subset of $\mathbb{Z}^2 \times T$. We say that the tile $\left(\vec{x},t\right)$ is \emph{placed} by $\alpha$ if $\alpha\left( \vec{x} \right) = t$.
Two adjacent tiles in a configuration \emph{bind} if the glues on their opposing sides are equal (in both label and strength) and have positive strength.
Each configuration $\alpha$ induces a \emph{binding graph} $\bindinggraph_\alpha$, a grid graph whose vertices are positions occupied by tile types, according to $\alpha$, with an edge between two vertices if the tile types at those vertices bind.

An \emph{assembly} is a connected, non-empty configuration, i.e., a partial function $\alpha:\Z^2 \dashrightarrow T$ such that $\fullgridgraph_{\dom \alpha}$ is connected and $\dom \alpha \neq \emptyset$. The \emph{shape} $S_\alpha \subseteq \Z^2$ of $\alpha$ is $\dom \alpha$ and its size is $| \dom{\alpha}|$.

Given $\tau\in\Z^+$, $\alpha$ is \emph{$\tau$-stable} if every cut of~$\bindinggraph_\alpha$ has weight at least $\tau$, where the weight of an edge is the strength of the glue it represents. When $\tau$ is clear from the context, we say $\alpha$ is \emph{stable}.
Given two assemblies $\alpha$ and $\beta$, we say that $\alpha$ and $\beta$ \emph{agree} if, for all $\vec{x} \in \dom{\alpha} \cap \dom{\beta}$, $\alpha\left(\vec{x}\right) = \beta\left(\vec{x}\right)$. Furthermore, we say that $\alpha$ is a \emph{subassembly} of $\beta$, and we write $\alpha \sqsubseteq \beta$, if $S_\alpha \subseteq S_\beta$ and  $\alpha$ and $\beta$ agree.

A \emph{tile assembly system} (TAS) is a triple $\mathcal{T} = (T,\sigma,\tau)$, where $T$ is a tile set, $\sigma:\Z^2 \dashrightarrow T$ is the finite, $\tau$-stable, \emph{seed assembly}, and $\tau\in\Z^+$ is the \emph{temperature}.
We say that $\mathcal{T}$ is {\em singly-seeded} if $\left| \dom{\sigma} \right| = 1$. From this point on, unless explicitly stated otherwise, we assume that every TAS is singly-seeded, with $\dom{\sigma}=\{\vec{s\,}\}$, and has temperature $1$.

Given two $\tau$-stable assemblies $\alpha,\beta$, we write $\alpha \to_1^{\mathcal{T}} \beta$ if $\alpha \sqsubseteq \beta$ and $|S_\beta \setminus S_\alpha| = 1$. In this case we say $\alpha$ \emph{$\mathcal{T}$-produces $\beta$ in one step}. If $\alpha \to_1^{\mathcal{T}} \beta$, $ S_\beta \setminus S_\alpha=\{\vec{p}\}$, and $t=\beta(\vec{p})$, we write $\beta = \alpha + (\vec{p}, t)$ to denote the \emph{tile attachment step} in which a tile of type $t$ \emph{attaches} to $\alpha$ at $\vec{p}$ to produce $\beta$. We use the notation $\beta \backslash \alpha$ to denote $\left(\vec{p}, t \right)$, i.e., the tile that attaches to $\alpha$ to produce $\beta$ in one step. If $\beta \backslash \alpha$ attaches to $\alpha$ with total strength 1 and the tile in $\alpha$ to which $\beta \backslash \alpha$ binds is located at point $\vec{p}\;^\prime$, we use $\vec{u}_{\beta \backslash \alpha}$ to denote the unit vector $\vec{p}\;^\prime - \vec{p}$. 

The \emph{$\mathcal{T}$-frontier} of $\alpha$ is the set $\partial^\mathcal{T} \alpha = \bigcup_{\alpha \to_1^\mathcal{T} \beta} S_\beta \setminus S_\alpha$, the set of empty locations at which a tile could stably attach to $\alpha$. When $\mathcal{T}$ is clear from the context, then we omit the superscript $\mathcal{T}$ on the $\mathcal{T}$-frontier of $\alpha$. The \emph{$t$-frontier} $\partial_t \alpha \subseteq \partial \alpha$ of $\alpha$ is the set $\setr{\vec{p}\in\partial \alpha}{\alpha \to_1^\mathcal{T} \beta \text{ and } \beta(\vec{p})=t}.$

Let $\mathcal{A}^T$ denote the set of all assemblies of tiles from $T$, and let $\mathcal{A}^T_{< \infty}$ denote the set of finite assemblies of tiles from $T$.
A sequence of $k\in\Z^+ \cup \{\infty\}$ assemblies $\alpha_1,\alpha_2,\ldots$ over $\mathcal{A}^T$ denoted as $\vec{\alpha} = \left( \alpha_i \mid 0 \leq i-1 < k \right)$ is a \emph{$\mathcal{T}$-assembly sequence} if, for all $1 \leq i-1 < k$, $\alpha_{i-1} \to_1^\mathcal{T} \alpha_{i}$. We use this notation for defining the range of values for $i$ because it allows us to handle both cases for $k$ (i.e., finite or infinite) simultaneously. For integers $1 \leq i \leq j \leq k$, $\vec{\alpha}[i\ldots j]$ denotes the subsequence $\left( \alpha_i, \ldots, \alpha_j \right)$. When $\vec{\alpha}$ is finite, we say that $\vec{\alpha}$ \emph{terminates} at $\vec{p} \in \mathbb{Z}^2$, where $\left\{ \vec{p} \right\} = \dom{\alpha_1}$ if $k=1$, and $\left\{ \vec{p} \right\} = \dom{\alpha_{k}} \backslash \dom{\alpha_{k-1}}$ otherwise. The {\em result} of an assembly sequence $\vec{\alpha}$, written as $\res{\vec{\alpha}}$ is the unique assembly such that $\dom{\res{\vec{\alpha}} = \bigcup_{i=1}^{k}{\dom{\alpha_i}}}$ and for all $0 \leq i -1 < k$, $\alpha_i \sqsubseteq \res{\vec{\alpha}}$. In other words, $\res{\vec{\alpha}}$ is the unique limiting assembly, which for a finite sequence, is the final assembly in the sequence. 
If $\res{\vec{\alpha}} = \alpha$ and $\vec{x} \in \dom{\alpha}$, then $\textmd{index}_{\vec{\alpha}}\left( \vec{x} \right) = \min\left\{ \; i \; \left| \; \vec{x} \in \dom{\alpha_i} \right. \right\}$.

If $\alpha = \res{\vec{\alpha}}$ and, for some $t \in T$, $\vec{x} \in \partial_t \alpha$, then the notation $\vec{\alpha} + \left( \vec{x}, t \right)$ denotes \emph{appending the tile} $\left( \vec{x}, t \right)$ \emph{to} $\vec{\alpha}$, and is the assembly sequence $\left( \alpha_1, \ldots, \alpha_k, \alpha_{k+1} \right)$, where $\alpha_{k+1} = \alpha_k + \left( \vec{x}, t \right)$.

For $k\in \mathbb{Z}^+$, $m \in \mathbb{Z}^+ \cup \left\{ \infty \right\}$, and $\mathcal{T}$-assembly sequences $\vec{\alpha} = \left( \alpha_i \mid 1 \leq i \leq  k \right)$ and $\vec{\beta} = \left( \beta_i \mid 0 \leq i-1 < m \right)$, if $\vec{\varepsilon} = \left( \varepsilon_i \mid 0 \leq i-1 < k+m \right)$ is a $\mathcal{T}$-assembly sequence, where for all integers $0 \leq i-1 < k$, $\varepsilon_i = \alpha_i$, and for all $k \leq i-1 < k+m$, $\varepsilon_i = \beta_{i-k}$, then we say that $\vec{\varepsilon}$ is an \emph{extension} of $\vec{\alpha}$ \emph{by} $\vec{\beta}$. In this case, we say that $\vec{\alpha}$ is a \emph{prefix} of $\vec{\varepsilon}$ and $\vec{\beta}$ is a \emph{suffix} of $\vec{\varepsilon}$. 
For $k\in \mathbb{Z}^+$, $m \in \mathbb{Z}^+ \cup \left\{ \infty \right\}$, and $\mathcal{T}$-assembly sequences $\vec{\alpha} = \left( \alpha_i \mid 1 \leq i \leq k \right)$ and $\vec{\beta} = \left( \beta_i \mid 0 \leq i-1 < m \right)$, we say that $\vec{\alpha}$ is \emph{embedded} in $\vec{\beta}$ if $k \leq m$ and there exists an \emph{embedding function of} $\vec{\alpha}$ \emph{in} $\vec{\beta}$ satisfying the following conditions: (1) $f: \left\{ 1,2, \ldots, k \right\} \rightarrow \left\{i \in \Z^+ \mid 0 \leq i-1 <m\right\}$, (2) if $f(1)=1$, then $\alpha_1 = \beta_1$, otherwise $\alpha_1 = \{ \beta_{f(1)} \backslash \beta_{f(1)-1} \}$, (3) for all $2 \leq i \leq k$, $\beta_{f(i)} \backslash \beta_{f(i)-1} = \alpha_i \backslash \alpha_{i-1}$, and (4) for $1 \leq i < j \leq k$, $f(i) < f(j)$. Note that the last condition implies that the embedding function is one-to-one.

We write $\alpha \to^\mathcal{T} \beta$, and we say $\alpha$ \emph{$\mathcal{T}$-produces} $\beta$ (in 0 or more steps) if there is a $\mathcal{T}$-assembly sequence $\left( \alpha_1,\alpha_2,\ldots \right)$ of length $k = |S_\beta \setminus S_\alpha| + 1$ such that
1) $\alpha = \alpha_1$,
2) $S_\beta = \bigcup_{0 \leq i -1 < k} S_{\alpha_i}$, and
3) for all $0 \leq i-1 < k$, $\alpha_{i} \sqsubseteq \beta$.

If $k$ is finite then it is routine to verify that $\beta = \alpha_k$. 
We say $\alpha$ is \emph{$\mathcal{T}$-producible} if $\sigma \to^\mathcal{T} \alpha$, and we write $\prodasm{\mathcal{T}}$ to denote the set of $\mathcal{T}$-producible assemblies. 
We say that $\vec{\alpha}$ is $\mathcal{T}$-producing if $\vec{\alpha} = (\sigma, \ldots)$ and $\res{\vec{\alpha}} \in \mathcal{A}[\mathcal{T}]$.
An assembly $\alpha$ is \emph{$\mathcal{T}$-terminal} if $\alpha$ is $\tau$-stable and $\partial^\mathcal{T} \alpha=\emptyset$.
We write $\termasm{\mathcal{T}} \subseteq \prodasm{\mathcal{T}}$ to denote the set of $\mathcal{T}$-producible, $\mathcal{T}$-terminal assemblies.

We say that a TAS $\mathcal{T}$ \emph{strictly self-assembles} a shape $X \subseteq \Z^2$ if, for all $\alpha \in \termasm{\mathcal{T}}$, $S_{\alpha} = X$; i.e., if every terminal assembly produced by $\mathcal{T}$ assigns tile types to all and only the points in the set $X$.

\subsection{The probabilistic aTAM or 2DPTAM}

We model probabilistic self-assembly in the aTAM with non-empty, rooted, directed, weighted tree structures in which all edges point away from the root and, for each node $v$, $S_v$ denotes the sum of all the numerical weights over all the outgoing edges on $v$. If $v$ is a leaf node, then $S_v = 0$. We say that a non-empty, rooted, weighted tree $\mathcal{Q}$ is a \emph{sub-probability tree} (SPT) if:
\begin{enumerate}
	\item all edges of $\mathcal{Q}$ are labeled with some \emph{probability} $p \in (0, 1]$, and
	\item for each node $v$ in $\mathcal{Q}$, $S_v \leq 1$.
\end{enumerate}

If $\mathcal{Q}$ is an SPT and $v$ is a node of  $\mathcal{Q}$, then we say $v$ is \emph{normalized} in $\mathcal{Q}$ if $S_v = 1$. If all internal nodes of an SPT $\mathcal{Q}$ are normalized, then $\mathcal{Q}$ is a discrete time Markov chain, or simply Markov chain.

For an SPT $\mathcal{Q}$, if $v$ is a node of $\mathcal{Q}$, then we define the \emph{probability} of $v$, denoted as $\textmd{Pr}_{\mathcal{Q}}[v]$, as follows:
\begin{enumerate}
	\item If $v$ is the root node of $\mathcal{Q}$, then $\textmd{Pr}_{\mathcal{Q}}[v] = 1$.
	\item Otherwise, if $u$ is the parent of $v$, and $p$ is the probability of the edge from $u$ to $v$, then $\textmd{Pr}_{\mathcal{Q}}[v] = \textmd{Pr}_{\mathcal{Q}}\left[u\right] \cdot p$.
\end{enumerate}
Let $m \in \mathbb{Z}^+ \cup \left\{ \infty \right\}$,  $\pi = \left(q_i \mid 0 \leq i-1 < m\right)$ be a sequence of nodes in $\mathcal{Q}$ where $q_1$ is the root of $\mathcal{Q}$, for all $1 \leq i-1 < m$, there is an edge from $q_{i-1}$ to $q_i$ in $\mathcal{Q}$, and $p_i$ is the probability on this edge. We say that $\pi$ is a \emph{maximal path in} $\mathcal{Q}$ if  $m = \infty$, or $m \in \mathbb{Z}^+$ and $q_m$ is a leaf node of $\mathcal{Q}$. If $m = 1$, then $\displaystyle\textmd{Pr}_{\mathcal{Q}}[\pi] = 1$. Otherwise, if $m > 1$, then we define $\displaystyle\textmd{Pr}_{\mathcal{Q}}[\pi] = \prod_{\left\{i \in \Z^+\mid 1\leq i-1<m\right\}}{p_i}$. Note that if $m = \infty$, then $\pi$ is a ray in $\mathcal{Q}$, starting from the root. 

\begin{observation}
\label{obs:leaf-node-maximal-path}
If $\mathcal{Q}$ is an SPT and every maximal path in  $\mathcal{Q}$  is finite, then there is a one-to-one correspondence between the set of leaf nodes of $\mathcal{Q}$ and the set of maximal paths in $\mathcal{Q}$, and thus $\displaystyle\sum_{u \textmd{ leaf node of } \mathcal{Q}}{\textmd{Pr}_{\mathcal{Q}}[u]} = \sum_{\pi \textmd{ maximal path in } \mathcal{Q}}{\textmd{Pr}_{\mathcal{Q}}[\pi]}$.
\end{observation}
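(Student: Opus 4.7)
The plan is to prove the two claims of the observation separately: first the bijection between leaves and maximal paths, and then the equality of the two sums, which follows immediately from the bijection plus a pointwise identity between $\textmd{Pr}_{\mathcal{Q}}[u]$ and $\textmd{Pr}_{\mathcal{Q}}[\pi]$ whenever $\pi$ is the maximal path terminating at the leaf $u$.

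For the bijection, I would argue as follows. Because $\mathcal{Q}$ is a rooted tree with all edges pointing away from the root, each node $v$ has a unique directed path $\pi_v$ from the root to $v$. Define the map $\Phi$ sending each leaf $u$ to $\pi_u$. I would show $\pi_u$ is a maximal path: it starts at the root by construction, and since $u$ is a leaf it has no outgoing edges, so $\pi_u$ cannot be extended, and $\pi_u$ is finite by the hypothesis (or simply by being a root-to-leaf path in a tree). For surjectivity, given any maximal path $\pi = (q_1, \ldots, q_m)$, the hypothesis that every maximal path is finite forces $m \in \mathbb{Z}^+$, so by the definition of a maximal path $q_m$ must be a leaf, and by uniqueness of the root-to-node path in a tree, $\pi = \pi_{q_m}$. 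Injectivity is immediate since distinct leaves yield distinct terminal nodes of their paths. Hence $\Phi$ is a bijection.

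For the pointwise identity, I would proceed by induction on the length $m$ of the maximal path $\pi = (q_1, \ldots, q_m)$ ending at leaf $u = q_m$. In the base case $m = 1$, the root itself is a leaf, so $\textmd{Pr}_{\mathcal{Q}}[u] = 1 = \textmd{Pr}_{\mathcal{Q}}[\pi]$ by the definitions given in the excerpt. For the inductive step, let $p_m$ be the probability on the edge from $q_{m-1}$ to $q_m$; by the recursive definition $\textmd{Pr}_{\mathcal{Q}}[u] = \textmd{Pr}_{\mathcal{Q}}[q_{m-1}] \cdot p_m$, and by the inductive hypothesis applied to the prefix $(q_1, \ldots, q_{m-1})$ (viewed as the root-to-node path of $q_{m-1}$), $\textmd{Pr}_{\mathcal{Q}}[q_{m-1}]$ equals the product $\prod_{i=2}^{m-1} p_i$, so $\textmd{Pr}_{\mathcal{Q}}[u] = \prod_{i=2}^{m} p_i = \textmd{Pr}_{\mathcal{Q}}[\pi]$.

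Combining the bijection with the pointwise identity, the claimed equality of sums follows by simply re-indexing the sum over leaves as a sum over maximal paths via $\Phi$. I expect no serious obstacle here: the entire observation is essentially a bookkeeping fact about rooted trees with weighted edges, and the only place requiring care is verifying that the ``every maximal path is finite'' hypothesis is what lets surjectivity of $\Phi$ go through (otherwise a maximal path could be an infinite ray and fail to correspond to any leaf).
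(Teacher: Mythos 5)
The paper states this as an observation without supplying a proof, evidently regarding it as routine bookkeeping; your verification is correct and is exactly the argument the paper is implicitly relying on. Both halves check out: the bijection between leaves and maximal paths depends precisely on the finiteness hypothesis as you note (an infinite ray is a maximal path with no terminal leaf), and the pointwise identity $\textmd{Pr}_{\mathcal{Q}}[u] = \textmd{Pr}_{\mathcal{Q}}[\pi_u]$ follows by unfolding the recursive definition of $\textmd{Pr}_{\mathcal{Q}}[v]$ along the unique root-to-$u$ path, which is what your induction does.
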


We define the \emph{probability of SPT} $\mathcal{Q}$ to be $\displaystyle\textmd{Pr}[\mathcal{Q}] = \sum_{\pi \textmd{ maximal path in } \mathcal{Q}}{\textmd{Pr}_{\mathcal{Q}}[\pi]}$.

If $\mathcal{Q}$ is a rooted tree and $u$ is a node in  $\mathcal{Q}$, then $\mathcal{Q}^u$ is the subtree of $\mathcal{Q}$ that contains $u$, along with all descendants of $u$ in $\mathcal{Q}$. Note that, if $\mathcal{Q}$ is an SPT, then $\mathcal{Q}^u$ is also an SPT. We say that any subtree $\mathcal{Q}'$ of $\mathcal{Q}$ is \emph{full relative to} $\mathcal{Q}$ if, for every internal node $v$ of $\mathcal{Q}'$, all children of $v$ in $\mathcal{Q}$ are children of $v$ in $\mathcal{Q}'$. We say $\mathcal{Q}$ is \emph{finitely branching} (or \emph{locally finite)} if, for every internal node $v$ of $\mathcal{Q}$, there exists $c \in \mathbb{Z}^+$ such that $v$ has $c$ children in $\mathcal{Q}$.

Assume $\pi = \left( x_i \mid 0 \leq i-1 < m \right)$ for $m \in \mathbb{Z}^+ \cup \left\{ \infty \right\}$ denotes a simple path whose elements may be, for example, either points in $\mathbb{Z}^2$, or nodes in an SPT. Then $\dom{\pi}$ denotes the set of elements $\left\{ x_i \mid i \geq 1 \right\}$, $\left| \pi \right| = m$ denotes the length of $\pi$, $\pi[i]$ denotes $x_i$ for $i \geq 1$, and, for integers $0 \leq i-1 \leq j-1 < m$, $\pi[i\ldots j]$ denotes the subsequence $\left( x_i, \ldots, x_j \right)$. Any subsequence  $\pi[1\ldots j]$ for integers $j \geq 1$ is called a \emph{prefix} of $\pi$, whereas, assuming $\pi$ is a finite path, any subsequence  $\pi[i\ldots m]$ for integers $1\leq i \leq m$ is called a \emph{suffix} of $\pi$.

We study probabilistic self-assembly in the aTAM using a simplified version of the tile concentration programming model, which we call the 2DPTAM, where tiles are assumed to have uniform concentration (see \cite{Dot10} for a formal development of the general model). 

We model probabilistic self-assembly of a TAS $\mathcal{T} = (T,\sigma,1)$ with the SPT $\mathcal{M}_{\mathcal{T}}$, where:
\begin{enumerate}
	\item the root node of $\mathcal{M}_{\mathcal{T}}$ is the $\mathcal{T}$-assembly sequence $(\sigma)$, 
	\item there is a one-to-one correspondence between nodes in $\mathcal{M}_{\mathcal{T}}$ and finite $\mathcal{T}$-producing assembly sequences,
	\item there is an edge in $\mathcal{M}_{\mathcal{T}}$ from $\vec{\alpha}$ to $\vec{\beta}$, assuming $\res{\vec{\alpha}} = \alpha$ and $\res{\vec{\beta}} = \beta$, if $\alpha \rightarrow^{\mathcal{T}}_1 \beta$, and
	\item the probability of the edge in $\mathcal{M}_{\mathcal{T}}$ from $\vec{\alpha}$ to $\vec{\beta}$, assuming $\res{\vec{\alpha}} = \alpha$ and $\res{\vec{\beta}} = \beta$, is $\frac{1}{M_{\alpha}}$, where $M_{\alpha} = \left| \left\{ \beta \left| \alpha \rightarrow^{\mathcal{T}}_1 \beta \right. \right\} \right| > 0$.
\end{enumerate}
Note that $\mathcal{M}_{\mathcal{T}}$ is a finitely branching SPT. To see that it is a tree, note that its nodes are $\mathcal{T}$-producing assembly sequences and tiles cannot detach in the aTAM. It is finitely branching because each one of its nodes is a finite $\mathcal{T}$-producing assembly sequence that results in a finite assembly, the frontier of which is finite. Moreover, $\mathcal{M}_{\mathcal{T}}$ is a Markov chain because, for each node $\vec{\alpha}$ with $\res{\vec{\alpha}} = \alpha$, the probability on every outgoing edge of $\vec{\alpha}$ is $\frac{1}{M_{\alpha}}$, and there are $M_{\alpha}$ such edges.

Since there is a one-to-one correspondence between nodes of $\mathcal{M}_{\mathcal{T}}$ and finite $\mathcal{T}$-producing assembly sequences, we define the probability of a $\mathcal{T}$-producible assembly in terms of the $\mathcal{T}$-producing assembly sequences from which it may result. For $\alpha \in \mathcal{A}[\mathcal{T}]$, we define the \emph{probability} of $\alpha$ as:

\centerline{$\textmd{Pr}_{\mathcal{T}}[\alpha]=\displaystyle \sum_{\substack{\vec{\alpha} \textmd{ is a } \mathcal{T}\textmd{-producing assembly sequence}\\ \res{\vec{\alpha}}=\alpha}}{\textmd{Pr}_{\mathcal{M}_{\mathcal{T}}}[\vec{\alpha}]}$.}

We say that $\mathcal{T}$ {\em strictly self-assembles a shape} $X \subseteq \mathbb{Z}^2$ with probability at least $p \in [0,1]$ if $\displaystyle \sum_{\substack{\alpha \in \mathcal{A}_{\Box}[\mathcal{T}]\\ \dom{\alpha}=X}}{\textmd{Pr}_{\mathcal{T}}[\alpha]}~\geq~p$. We say that $X$ {\em strictly self-assembles} with probability at least $p$ if there exists some TAS that strictly self-assembles it with probability at least $p$. 

\section{Defining sequential non-determinism}
\label{sec:snd}
In this section, we build up the definitions that we will use to define the notion of sequential non-determinism.
In Figure~\ref{fig:def-example-tas}, we define a sample TAS that we
will use as a running example to illustrate the key concepts that we
develop in this section.
\begin{figure}[!h]
      \begin{minipage}{3in}
        \includegraphics[width=3in]{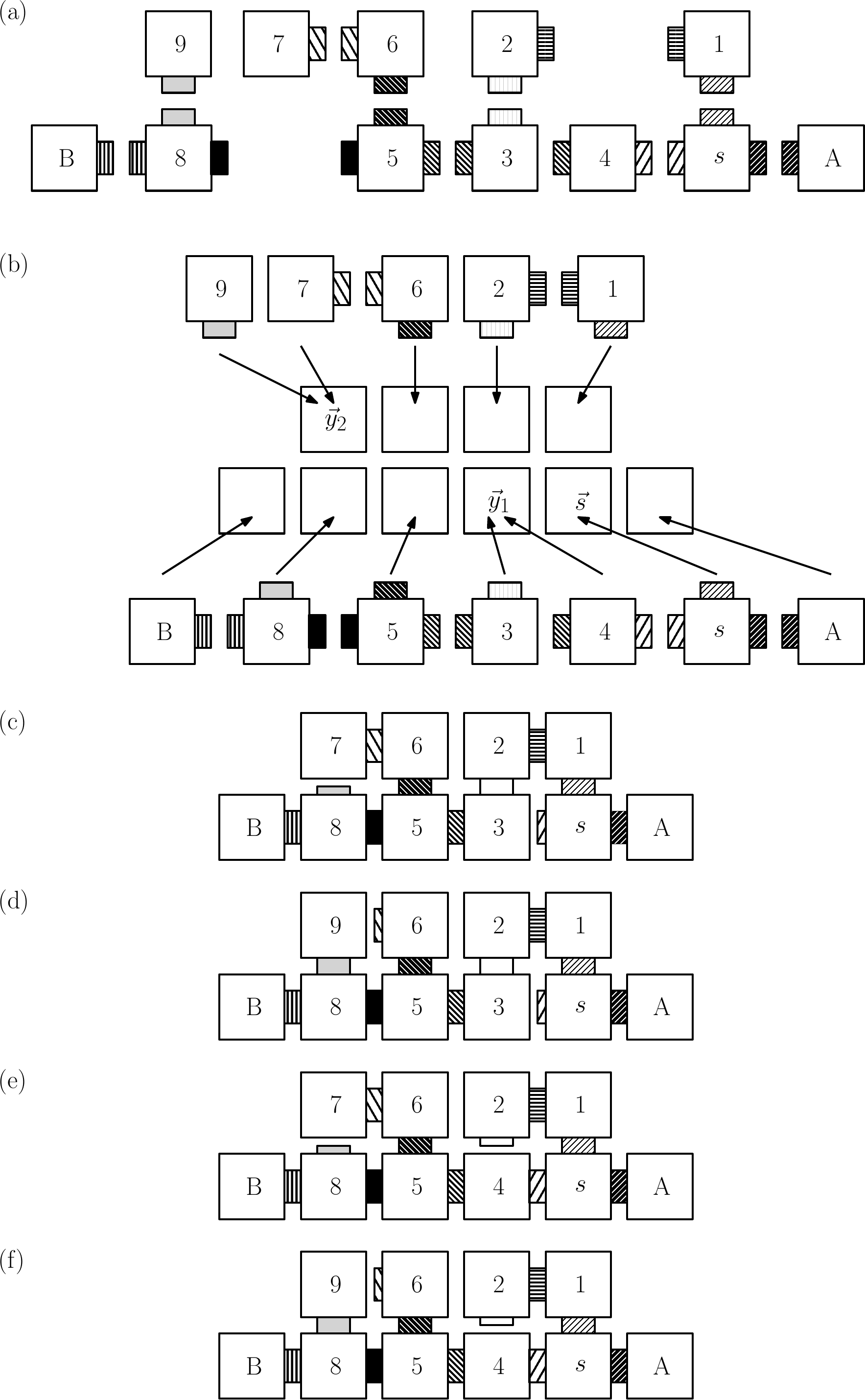}
            \end{minipage}\hfill
    \begin{minipage}{3in}
      \caption{\label{fig:def-example-tas}The TAS
        $\mathcal{T}=(T,\sigma,1)$ we are using as a running example
        throughout this and the next section, with $|\termasm{T}|$ = 4\\
        (a) Depiction of the tile set $T=\{s,1,2,3,4,5,6,$ $7,8,9,A,B\}$,
        in which $s$ is the seed tile and the other tiles are labeled with
        either an uppercase letter or a positive integer; all glue strengths
        are 1; the different glue labels are depicted with shades of gray or
        tiling patterns\\
        (b) Depiction (in the middle) of the points in $\mathbb{Z}^2$ at
        which tiles in $T$ may attach; $\vec{s}$ is the point where the
        seed tile $s$ is always placed in this example, i.e.,
        $\sigma=\{(\vec{s},s)\}$; 
        the tiles in $T$ are shown either above or below the points, with
        an arrow indicating the unique point at which the tile may
        attach; $\vec{y}_1$ and $\vec{y}_2$ are the only points where
        more than one tile may attach\\
        (c) $\alpha_{3,7}$ is the  assembly in
        $\termasm{T}$ containing the two tiles $(\vec{y}_1,3)$ and
        $(\vec{y}_2,7)$\\
        (d) $\alpha_{3,9}$ is the  assembly in
        $\termasm{T}$ containing the two tiles $(\vec{y}_1,3)$ and
        $(\vec{y}_2,9)$\\
        (e) $\alpha_{4,7}$ is the assembly in
        $\termasm{T}$ containing the two tiles $(\vec{y}_1,4)$ and
        $(\vec{y}_2,7)$\\
        (f) $\alpha_{4,9}$, is the assembly in
        $\termasm{T}$ containing the two tiles $(\vec{y}_1,4)$ and
        $(\vec{y}_2,9)$
      }
    \end{minipage}
  \end{figure}

\goingforward{assume $\mathcal{T} = (T,\sigma,1)$ is a singly-seeded TAS with $\dom{\sigma}=\{\vec{s}\,\}$.}

In this section, as well as in Section~\ref{sec:proof}, we use a box like the one above to state one or more assumptions that hold by default, from this point on and until the end of Section~\ref{sec:proof},  unless otherwise stated. This practice allows us, for example, to significantly shorten the statement of several lemmas that share an increasingly long sequence of assumptions.
\begin{definition} % def. competing path
\label{def:competing-paths}
Let $\vec{x},\vec{y} \in \mathbb{Z}^2$ with $\vec{x} \neq \vec{y}$, and $\pi$ and $\pi'$ be finite simple paths in $G^{\textmd{f}}_{\mathbb{Z}^2}$ from $\vec{x}$ to $\vec{y}$. We say that $\pi$ and $\pi'$ are \emph{competing for} $\vec{y}$, \emph{from} $\vec{x}$ \emph{in} $\mathcal{T}$, if all of the following conditions hold.
\begin{enumerate}
	\item \label{def:cp-1} $\pi \ne \pi'$
	\item \label{def:cp-2} $\dom{\pi} \cap \dom{\pi'} = \left\{ \vec{x}, \vec{y} \right\}$
	\item \label{def:cp-3} There exists a path $p \in \left\{ \pi, \pi' \right\}$ and an assembly $\alpha \in \mathcal{A}[\mathcal{T}]$ such that  $p$ is a simple path in $G^{\textmd{b}}_{\alpha}$.
 	\item \label{def:cp-4} For all $\alpha \in \mathcal{A}[\mathcal{T}]$, each path $p \in \left\{ \pi, \pi' \right\}$ and all integers $1 \leq l < \left| p \right|$, if $p'$ is any finite simple path from $\vec{s}$ to $p[l]$ in $G^{\textmd{b}}_{\alpha}$, then:
	\begin{enumerate}
		\item \label{def:cp-4a} if $p'$ goes through $p[1]$, then $p[1\ldots l]$ is a suffix of $p'$,
		\item \label{def:cp-4b} otherwise $p'$ is a prefix of every finite simple path from $\vec{s}$ to $p[1]$ in $G^{\textmd{b}}_{\alpha}$. 
	\end{enumerate}
	\item \label{def:cp-5} For all $\alpha \in \mathcal{A}[\mathcal{T}]$, if $p'$ is any simple path from $\vec{s}$ to $\vec{y}$ in $G^{\textmd{b}}_{\alpha}$, then there exists a path $p \in \left\{ \pi, \pi' \right\}$ such that $p$ is a suffix of $p'$.
    \item \label{def:cp-6} For all $\alpha \in \mathcal{A}[\mathcal{T}]$ such that $\vec{x} \in \dom{\alpha}$, the following conditions hold.
    \begin{enumerate}
    	\item \label{def:cp-6a} For each path $p \in \left\{ \pi, \pi' \right\}$, there exists a $\mathcal{T}$-assembly sequence $\vec{\beta} = \left( \beta_i \mid 1 \leq i \leq k\right)$ such that $\beta_1 = \left\{ \left(\vec{x}, \alpha\left(\vec{x}\right) \right) \right\}$, and $\dom{\beta_k} = \dom{p}$.
		\item \label{def:cp-6b} For all $\mathcal{T}$-assembly sequences $\vec{\beta} = \left( \beta_i \mid 1 \leq i \leq k\right)$, if $\beta_1 = \left\{ \left(\vec{x}, \alpha\left(\vec{x}\right) \right) \right\}$ and $p$ is a simple path from $\vec{x}$ to $\vec{y}$ in $G^{\textmd{b}}_{\beta_k}$, then $p \in \left\{ \pi, \pi' \right\}$.
	\end{enumerate}   
\end{enumerate}
We call $p$ a \emph{competing path} if there exist $\vec{x}$ and $\vec{y}$ such that $\pi$ and $\pi'$ are competing for $\vec{y}$, from $\vec{x}$ in $\mathcal{T}$ and $p \in \left \{ \pi, \pi' \right\}$. 
\end{definition}
Figure~\ref{fig:def-competing-paths} both depicts visually and explains informally the implications of this definition.
\begin{figure}[!h]
  \begin{minipage}[c]{0.65\linewidth} \caption{\label{fig:def-competing-paths} 
The two paths $\pi$ and $\pi'$ are competing for $\vec{y}$ from $\vec{x}$ in $\mathcal{T}=(T,\sigma,1)$. 
In this figure, each arrow represents a sequence of tile attachment steps that \emph{follow} a simple path in the sense that tiles attach at points along a simple path and in order. 
According to Definition~\ref{def:competing-paths}, the solid paths
must exist, the dashed path may exist, and the dotted ones cannot exist.
We interpret the meaning of the conditions in this definition as follows.\\ 
(a) Conditions~\ref{def:cp-1} and~\ref{def:cp-2} together mean that the two competing paths only share their starting and ending points and that their domains are not identical. Since these paths only intersect at their extremities, tiles that attach along one of them will never block the other path.\\
(b) Condition~\ref{def:cp-3} means that there exists a $\mathcal{T}$-producing assembly sequence that results in an assembly that places tiles on all the points in $\pi$ or $\pi'$, which also implies that there exists a $\mathcal{T}$-producing assembly sequence that results in an assembly that places tiles on all the points in any prefix of $\pi$ or $\pi'$. We do not require the existence of a $\mathcal{T}$-producing assembly sequence for both $\pi$ and $\pi'$ because it might be the case that one of these paths is blocked in every $\mathcal{T}$-producing assembly sequence. For example, $\pi$ could be blocked by tiles placed along the dashed arrow.
  } \end{minipage}\hfill
  \begin{minipage}{2in} \includegraphics[width=2in]{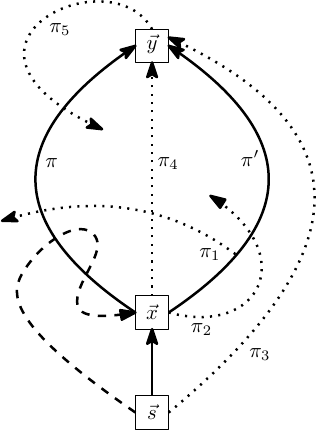} \end{minipage} \vspace*{-3mm}

  (c) Condition~\ref{def:cp-4a} means that, if the binding graph $G$
  of a $\mathcal{T}$-producible assembly contains a simple path $p$
  from $\vec{s}$ to any non-ending point on one of the competing paths
  such that $p$ goes through $\vec{x}$, then $p$ must first go through
  $\vec{x}$ and follow $\pi$ or $\pi'$. So, for example, no
  $\mathcal{T}$-producing assembly sequence may first follow the path
  along the solid arrow from $\vec{s}$ to $\vec{x}$ and then follow
  $\pi_2$. Similarly, no $\mathcal{T}$-producing assembly sequence may
  first follow the path along the solid arrow from $\vec{s}$ to
  $\vec{x}$, then follow the prefix of $\pi'$, and finally follow
  $\pi_1$. As yet another example, no $\mathcal{T}$-producing assembly
  sequence that terminates at $\vec{y}$ may be extended by an assembly
  sequence that follows $\pi_5$.

  (d) Condition~\ref{def:cp-4b} means that, if the binding graph $G$
  of a $\mathcal{T}$-producible assembly contains a simple path $p$
  from $\vec{s}$ to any non-ending point on one of the competing paths
  such that $p$ does not go through $\vec{x}$, then $p$ must be a
  prefix of all paths from $\vec{s}$ to $\vec{x}$ in $G$.  For
  example, if the path represented by the dashed arrow is contained in
  the binding graph of some assembly, then the latter may not contain
  the path represented by the solid arrow from $\vec{s}$ to $\vec{x}$.
  
(e) Condition~\ref{def:cp-5} means that all $\mathcal{T}$-assembly sequences that follow a path from $\vec{s}$ to $\vec{y}$ must go through $\vec{x}$ and then follow either $\pi$ or $\pi'$. In other words, no $\mathcal{T}$-producing assembly sequence may follow $\pi_3$.

  (f) Condition~\ref{def:cp-6a} means that it is possible for either competing path to self-assemble in isolation in a $\mathcal{T}$-assembly sequence starting from any tile that is placed at $\vec{x}$ by some $\mathcal{T}$-producible assembly.
  
  (g) Condition~\ref{def:cp-6b} means that, if there is a
  $\mathcal{T}$-assembly sequence that starts from any tile that is
  placed at $\vec{x}$ by some $\mathcal{T}$-producible assembly and
  follows a simple path to $\vec{y}$, then this simple path
  must be one of the two competing paths $\pi$ or $\pi'$. In other words, no
  $\mathcal{T}$-assembly sequence starting from such a tile may follow $\pi_4$.
\end{figure}

We now describe a TAS $\mathcal{T}$, shown in Figure~\ref{fig:def1-tas}, that we use only here to illustrate the conditions that make up
Definition~\ref{def:competing-paths}. $\mathcal{T}$ is also useful in illustrating the kind of
gadgets (i.e., subsets of tiles that perform a specific task like
writing or reading the value of a bit) that we use in our main
construction in Section~\ref{sec:square}. $\mathcal{T}$ contains a bit-writing gadget
and a bit-reading gadget. Both in our main construction and in this
example, the reading gadgets are non-deterministic because a 0-valued
bit may be erroneously read as a 1-valued bit, as can be seen in
Figure~\ref{fig:def1-tas}(e). Whereas the bit-writing gadgets are
deterministic in our main construction (i.e., a specific bit value is
encoded in the design of each gadget and written with no errors by the
gadget), we made the bit-writing gadget in this example
non-deterministic, since either one of the W$^0_a$ and W$^1_a$ tiles
may attach to the seed tile S, as shown in
Figures~\ref{fig:def1-tas}(d) and (c), respectively, which allows us
to more fully illustrate the conditions in Definition~\ref{def:competing-paths} while keeping
$\mathcal{T}$ small.

Table~\ref{tab:def1-tas} summarizes how the two pairs of paths
depicted in Figure~\ref{fig:def1-tas}(b) fare against all of the
conditions imposed by Definition~\ref{def:competing-paths}. We now explain how each entry in
the table was determined.

{\bf Conditions~\ref{def:cp-1} and~\ref{def:cp-2}.}
Figure~\ref{fig:def1-tas}(b) shows that both pairs satisfy
conditions~\ref{def:cp-1} and ~\ref{def:cp-2} since the paths in each
pair are distinct and only share their starting and ending points,
namely the points in the figure containing either a black disk or a
pair of joining arrowheads, respectively.

{\bf Condition~~\ref{def:cp-3}.}
Figure~\ref{fig:def1-tas}(c),~\ref{fig:def1-tas}(d),
and~\ref{fig:def1-tas}(e), respectively, show that $\pi$ and $\pi'_1$ both self-assemble in $\alpha_{1,1}$, that $\pi'$ and $\pi_1$ both self-assemble in $\alpha_{0,0}$, and that $\pi$ and $\pi_1$ both self-assemble in $\alpha_{0,1}$. In conclusion, both pairs of paths satisfy condition~\ref{def:cp-3} in Definition~\ref{def:competing-paths}, since at
least one path in each pair self-assembles in at least one assembly of
$\mathcal{T}$. Note that, if the bit-writing gadget in $\mathcal{T}$
were deterministic and always wrote a 1 like in
Figure~\ref{fig:def1-tas}(c), then $\pi'$ would never be followed in
full by a $\mathcal{T}$-producing assembly sequence, because any such
sequence would get blocked by the bit bump. This is the intended
behavior in our main construction (see Section~\ref{sec:square}) and is the reason
why condition~\ref{def:cp-3} in Definition~\ref{def:competing-paths} does not require that both competing
paths be followed by $\mathcal{T}$-producing assembly sequences.

{\bf Condition~\ref{def:cp-4}.} Figures~\ref{fig:def1-tas}(c)-(e)
together show that $\pi$ satisfies condition~\ref{def:cp-4a} in
Definition~\ref{def:competing-paths} in all cases since, in any
subassembly of $\alpha_{1,1}$, $\alpha_{0,0}$, or $\alpha_{0,1}$ in
which a tile is placed at $\pi[l]$ for any value $l \in \{1,2,3,4,5\}$
by following $p'$, then this path goes through $\pi[1]$ (i.e., the
point at which the R tile is placed) and $\pi[1\ldots l]$ is a suffix
of $p'$. Furthermore, Figures~\ref{fig:def1-tas}(c)-(e) together show
that $\pi'$ satisfies either condition~\ref{def:cp-4a} or
condition~\ref{def:cp-4b} in Definition~\ref{def:competing-paths} in
all cases.  For example, in any subassembly of $\alpha_{0,0}$ or
$\alpha_{0,1}$ in which a tile is placed at $\pi'[l]$ for any value $l
\in \{1,2,3\}$ by following $p'$, then this path goes through
$\pi'[1]$ (i.e., the point at which the R tile is placed) and
$\pi'[1\ldots l]$ is a suffix of $p'$, thereby satisfying
condition~\ref{def:cp-4a} in all of these cases. In addition, in any
subassembly of $\alpha_{1,1}$ in which a tile is placed at $\pi'[l]$
for any value $l \in \{2,3\}$ by following $p'$, then this path does
not go through $\pi'[1]$ but it is a prefix of the only path in
$\alpha_{1,1}$ from the seed to $\pi'[1]$, namely the path
S-W$^1_a$-W$^1_b$-W$^1_c$-W$^1_d$-W$^1_e$-R, thereby satisfying
condition~\ref{def:cp-4b} in all of these cases. In conclusion, the
pair $\{\pi,\pi'\}$ satisfies condition~\ref{def:cp-4} in
Definition~\ref{def:competing-paths}.

In contrast, whereas Figures~\ref{fig:def1-tas}(c)-(e) together show
that $\pi_1$ satisfies condition~\ref{def:cp-4} in all cases, $\pi'_1$ does not
because, for example, for $\alpha = \alpha_{0,0}$ and $l=3$,
Figure~\ref{fig:def1-tas}(d) shows that the path
$p'=$ S-W$^0_a$-W$^0_b$-W$^0_c$-R-R$^0_a$ does go through $\pi'_1[1]$
but $\pi'_1[1\ldots 3]$ is not a suffix of $p'$. Therefore, the pair
$\{\pi_1,\pi'_1\}$ does not satisfy condition~\ref{def:cp-4} in Definition~\ref{def:competing-paths}, which
means that $\pi_1$ and $\pi'_1$ are not competing paths in
$\mathcal{T}$.

{\bf Condition~\ref{def:cp-5}.} Figures~\ref{fig:def1-tas}(c)-(e) together show
that both pairs of paths in this example satisfy condition~\ref{def:cp-5} in
Definition~\ref{def:competing-paths} since, for each pair and any assembly sequence placing
tiles along a simple path $p'$ from the seed to the point where the
paths in the pair join, exactly one path in the pair is a suffix of
$p'$.

{\bf Condition~\ref{def:cp-6a}.} The pair $\{\pi,\pi'\}$ satisfies
condition~\ref{def:cp-6a} since, starting with the R tile at the
starting point of both paths, the $\mathcal{T}$-assembly sequence
placing the tiles R$^1_a$, R$^1_b$, R$^1_c$, R$^1_d$, and 1 in this
order along $\pi$ can fully assemble in the absence of any other
tiles. The same holds when placing the tiles R$^0_a$, R$^0_b$, and 0
in this order along $\pi'$.  In contrast, the pair $\{\pi_1,\pi'_1\}$
does not satisfy condition~\ref{def:cp-6a} because, once the tile
W$^1_a$ is placed at the starting point of $\pi_1$, there is no
$\mathcal{T}$-assembly sequence that will follow $\pi$ since the
W$^1_a$ tile has no glue on its east side.

{\bf Condition~\ref{def:cp-6b}.} The pair $\{\pi,\pi'\}$ satisfies condition~\ref{def:cp-6a}
because Figures~\ref{fig:def1-tas}(c)-(e) show that there is no third
path sharing the same starting and ending points in the binding graphs
of any of the three $\mathcal{T}$-terminal assemblies. The same holds for
the pair $\{\pi_1,\pi'_1\}$.

\begin{figure}[!h]
  \begin{minipage}{\linewidth}
    \centering \includegraphics[width=\linewidth]{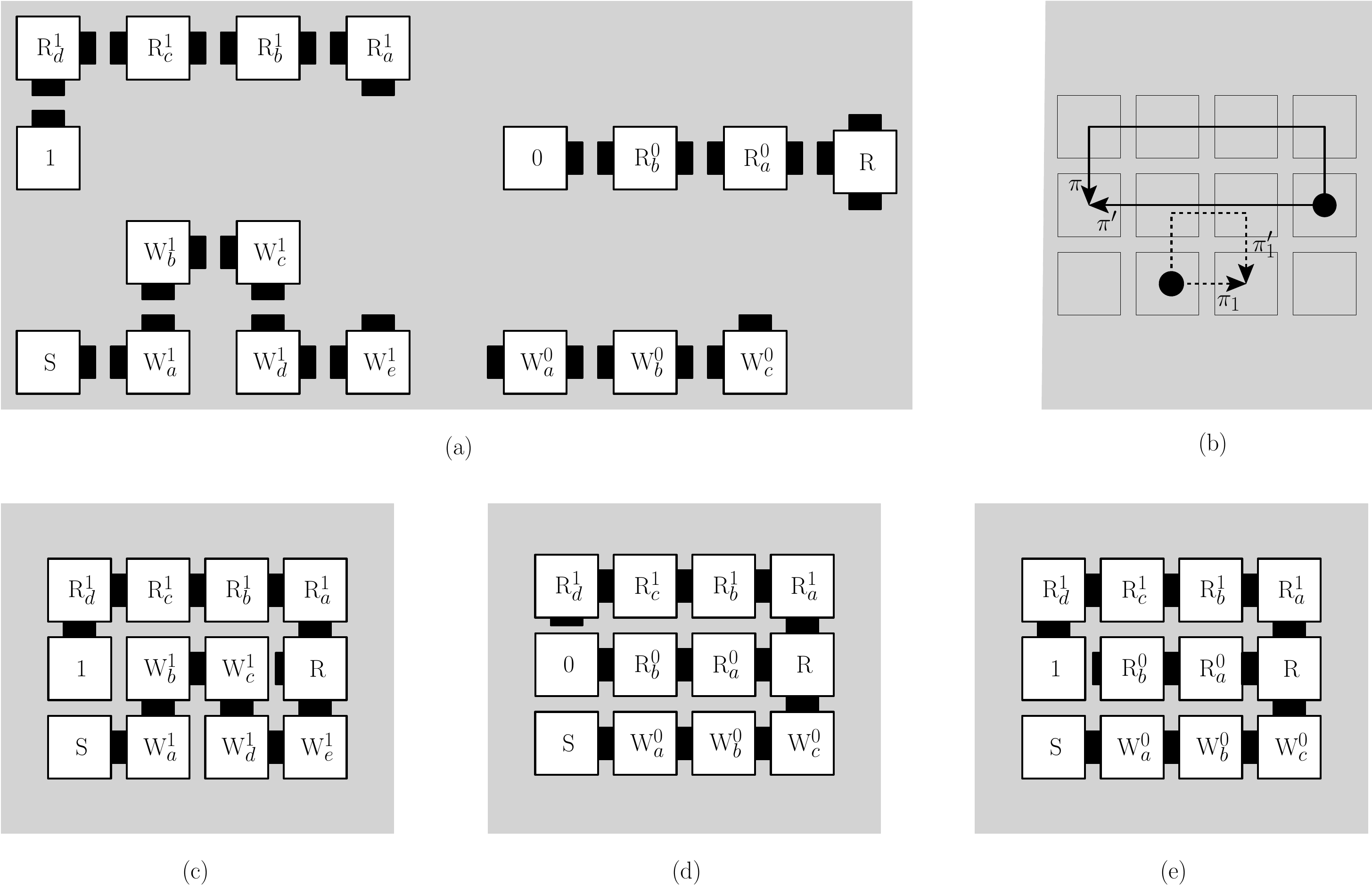}
      \end{minipage}      
      \caption{\label{fig:def1-tas}The TAS $\mathcal{T}=(T,\sigma,1)$
        with $|\termasm{T}|$ = 3 we are using to illustrate Definition
        1. \\(a) Depiction of the tile set $T$=\{S, R, 0, 1, W$^0_a$,
        W$^0_b$, W$^0_c$, W$^1_a$, W$^1_b$, W$^1_c$, W$^1_d$, W$^1_e$,
        R$^0_a$, R$^0_b$, R$^1_a$, R$^1_b$, R$^1_c$, R$^1_d$\} in
        which S is the seed tile, the subset of W$^i$ (resp., R$^i$)
        tiles represent the writing (resp., reading) of an $i$-valued
        bit, for $i \in \{0,1\}$, R is the tile that transitions from
        writing mode to reading mode in all assembly sequences that
        place it, and 0 (resp., 1) is the tile representing the fact
        that the bit value 0 (resp., 1) was read. The geometric
        representation of a bit used here, namely the ``bump'' formed
        by the tiles W$^1_b$ and W$^1_c$ to represent a 1 and the
        absence of such a bump to represent a 0, is the same as the
        one we use in our main construction in
        Section~\ref{sec:square}. \\ (b) The twelve points (i.e.,
        light gray squares) that make up the domain of each assembly
        in $\termasm{T}$, together with two pairs of candidate
        competing paths in $\mathbb{Z}^2$, namely $\pi$ and $\pi'$
        depicted as solid arrows, and $\pi_1$ and $\pi'_1$ depicted as
        dashed arrows. The starting and ending points of each path are
        represented by a black disk and an arrowhead,
        respectively. The seed assembly of $\mathcal{T}$ is $\sigma =
        \{(\vec{p}, $S$)\}$, where $\vec{p}$ is the point depicted in
        the bottom-left corner of the figure.\\ (c) Terminal assembly
        $\alpha_{1,1}$ of $\mathcal{T}$ in which a bit value of 1 is
        both written and then read back. \\ (d) Terminal assembly
        $\alpha_{0,0}$ of $\mathcal{T}$ in which a bit value of 0 is
        both written and then read back. \\ (e) Terminal assembly
        $\alpha_{0,1}$ of $\mathcal{T}$ in which a bit value of 0 is
        written and then mistakenly read back as a 1.  }
\end{figure}

% the X mark below needs the pifont package

\begin{table}[!h]
\begin{minipage}{\linewidth}
      \centering

  \begin{tabular}{|c|c|c|c|c|c|c|c|c|}\hline
    & \multicolumn{7}{|c|}{Conditions in Definition~\ref{def:competing-paths}} & \\\cline{2-9}
    Pair of paths  & \ref{def:cp-1} & \ref{def:cp-2} & \ref{def:cp-3} & \ref{def:cp-4} & \ref{def:cp-5} & \ref{def:cp-6a} & \ref{def:cp-6b} & Competing paths? \\\hline
    $\{\pi,\pi'\}$ & \checkmark & \checkmark & \checkmark & \checkmark & \checkmark & \checkmark & \checkmark & yes \\ \hline
    $\{\pi_1,\pi'_1\}$ & \checkmark & \checkmark & \checkmark & \ding{55} & \checkmark & \ding{55} & \checkmark & no \\ \hline

  \end{tabular}
\end{minipage}
\caption{\label{tab:def1-tas}Conditions of Definition~\ref{def:competing-paths} satisfied by the pairs of paths depicted in Figure~\ref{fig:def1-tas}b}
\end{table}

\begin{definition} % def. POC
\label{def:point-of-competition}
Let $\vec{y} \in \mathbb{Z}^2$. We call the point $\vec{y}$ a \emph{point of competition}, or \emph{POC} for short, \emph{in} $\mathcal{T}$, if there exist $\vec{x} \in \mathbb{Z}^2$ and simple paths $\pi$ and $\pi'$ in $G^{\textmd{f}}_{\mathbb{Z}^2}$ that are competing for $\vec{y}$, from $\vec{x}$ in $\mathcal{T}$. We say that $\pi$ and $\pi'$ are the \emph{competing paths that correspond to} $\vec{y}$, and $\vec{x}$ is the corresponding starting point. 
\end{definition}

Figure~\ref{fig:def-example-tas-POC} depicts the two POCs and the
two corresponding pairs of competing paths in our running example.
\begin{figure}[!h]
  \begin{minipage}{\linewidth}
    \centering
        \includegraphics[width=2in]{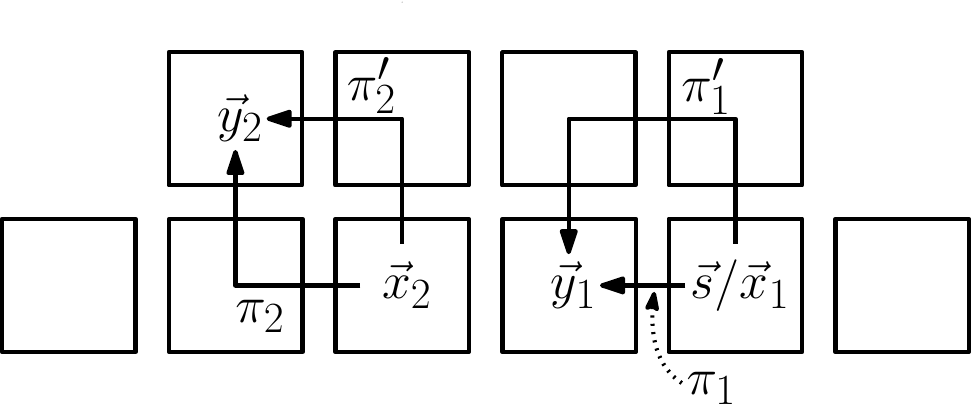}
      \end{minipage}      
      \caption{\label{fig:def-example-tas-POC}In our running example
        $\mathcal{T}$, the points $\vec{y}_1$ and $\vec{y}_2$ are two
        distinct points of competition, with corresponding starting
        points $\vec{x}_1 = \vec{s}$ and $\vec{x}_2$,
        respectively. The figure shows two pairs of competing paths,
        namely the path $\pi_1$ that is competing for $\vec{y}_1$ from
        $\vec{x}_1$ in $\mathcal{T}$, the path $\pi'_1$ that is also
        competing for $\vec{y}_1$ from $\vec{x}_1$ in $\mathcal{T}$,
        the path $\pi_2$ that is competing for $\vec{y}_2$ from
        $\vec{x}_2$ in $\mathcal{T}$, and the path $\pi'_2$ that is
        also competing for $\vec{y}_2$ from $\vec{x}_2$ in
        $\mathcal{T}$.  }
\end{figure}

\begin{lemma}  % lemma 1
\label{lem:only-two-competing-paths}
If $\vec{y} \in \mathbb{Z}^2$ is any POC in $\mathcal{T}$ with corresponding starting point $\vec{x} \in \mathbb{Z}^2$ and corresponding competing paths $\pi$ and $\pi'$, then $\pi$ and $\pi'$ are the only two paths competing for $\vec{y}$, from $\vec{x}$ in $\mathcal{T}$.
\end{lemma}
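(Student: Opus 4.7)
The plan is to argue that any path $\pi''$ that competes for $\vec{y}$ from $\vec{x}$ in $\mathcal{T}$ must coincide with either $\pi$ or $\pi'$. Because $\pi''$ is competing for $\vec{y}$ from $\vec{x}$, Definition~\ref{def:competing-paths} (invoked through Definition~\ref{def:point-of-competition}) supplies a companion path $\pi^{*} \neq \pi''$ such that the pair $\{\pi'', \pi^{*}\}$ also satisfies all seven conditions. My target is to show $\{\pi, \pi'\} \subseteq \{\pi'', \pi^{*}\}$; since each set has exactly two elements (condition~\ref{def:cp-1} applied to each pair), equality follows and in particular $\pi'' \in \{\pi, \pi'\}$.

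First I would fix an assembly $\alpha \in \prodasm{\mathcal{T}}$ with $\vec{x} \in \dom{\alpha}$. Such an $\alpha$ exists because condition~\ref{def:cp-3} applied to $\{\pi, \pi'\}$ yields some $\alpha \in \prodasm{\mathcal{T}}$ whose binding graph contains $\pi$ or $\pi'$ as a simple path, and both $\pi$ and $\pi'$ start at $\vec{x}$.

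The key step is then to bridge the two pairs using conditions~\ref{def:cp-6a} and~\ref{def:cp-6b}. Applying condition~\ref{def:cp-6a} to $\{\pi, \pi'\}$ with this $\alpha$: for each $p \in \{\pi, \pi'\}$ there is a $\mathcal{T}$-assembly sequence $\vec{\beta} = (\beta_i \mid 1 \leq i \leq k)$ starting with $\beta_1 = \{(\vec{x}, \alpha(\vec{x}))\}$ and ending with $\dom{\beta_k} = \dom{p}$. Because $\beta_k$ lives precisely on the points of the simple path $p$ and is connected via binding, $p$ is a simple path from $\vec{x}$ to $\vec{y}$ in $G^{\textmd{b}}_{\beta_k}$. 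Now applying condition~\ref{def:cp-6b} to the pair $\{\pi'', \pi^{*}\}$ with the same $\alpha$ forces every such simple path to lie in $\{\pi'', \pi^{*}\}$. Hence both $\pi$ and $\pi'$ belong to $\{\pi'', \pi^{*}\}$, which yields $\{\pi, \pi'\} \subseteq \{\pi'', \pi^{*}\}$ and completes the argument.

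The main obstacle, and the reason the lemma is nontrivial at all, is recognizing this interplay: condition~\ref{def:cp-6a} for $\{\pi, \pi'\}$ realizes each of $\pi$ and $\pi'$ as a simple path in a binding graph grown in isolation from $\vec{x}$, while the symmetric condition~\ref{def:cp-6b} applied to the hypothetical pair $\{\pi'', \pi^{*}\}$ is strong enough to immediately confine any such path to that pair. Once this observation is made, the proof is pure set-inclusion bookkeeping; no geometric or combinatorial analysis of the paths themselves is needed.
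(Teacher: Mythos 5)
Your proof is correct and hinges on exactly the same mechanism as the paper's, namely the interplay between conditions~\ref{def:cp-6a} and~\ref{def:cp-6b} of Definition~\ref{def:competing-paths}, applied in the dual direction. The paper applies condition~\ref{def:cp-6a} to the pair containing the hypothetical third path $\pi''$ (to obtain a $\mathcal{T}$-assembly sequence realizing $\pi''$ in isolation) and condition~\ref{def:cp-6b} to $\{\pi, \pi'\}$ to force the realized path into $\{\pi, \pi'\}$, whereas you apply condition~\ref{def:cp-6a} to $\{\pi, \pi'\}$ (realizing $\pi$ and $\pi'$ in isolation) and condition~\ref{def:cp-6b} to the hypothetical pair $\{\pi'', \pi^*\}$, finishing with a two-element-set cardinality argument. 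Your version is arguably slightly cleaner: it works directly with the abstract companion $\pi^*$ that the definition of ``competing path'' provides, and thereby avoids the paper's opening (and logically unneeded) assertion that $\pi''$ forms a competing pair separately with $\pi$ and with $\pi'$. The one place where both proofs elide the same detail is the passage from $\dom{\beta_k} = \dom{p}$, which is what condition~\ref{def:cp-6a} literally guarantees, to the claim that $p$ is a simple path from $\vec{x}$ to $\vec{y}$ in $G^{\textmd{b}}_{\beta_k}$; if you want to be fully explicit you should say a word about why the assembly, grown tile-by-tile from $\vec{x}$ over a simple-path domain, does in fact bind consecutive tiles along $p$ rather than taking a shortcut across a fold of the path.
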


\begin{proof}
Let $\mathcal{T} = (T, \sigma, 1)$ be a TAS and $\vec{y} \in\mathbb{Z}^2$ be any POC in $\mathcal{T}$ with corresponding starting point $\vec{x} \in \mathbb{Z}^2$ and corresponding competing paths $\pi$ and $\pi'$.  
By condition~\ref{def:cp-1} in Definition~\ref{def:competing-paths} applied to $\pi$ and $\pi'$, these two paths are distinct.
Now, for the sake of obtaining a contradiction, let $\pi''$ be a simple path in $\fullgridgraph_{\mathbb{Z}^2}$ from $\vec{x}$ to $\vec{y}$ such that $\pi''$ is a third path competing for $\vec{y}$, from $\vec{x}$ in $\mathcal{T}$. 
Thus: 
\begin{enumerate} 
	\item $\pi''$ and $\pi$ are competing paths for $\vec{y}$, from $\vec{x}$ in
$\mathcal{T}$, and 
	\item $\pi''$ and $\pi'$ are competing paths for $\vec{y}$, from $\vec{x}$ in $\mathcal{T}$.  
\end{enumerate} 
By condition~\ref{def:cp-1} of Definition~\ref{def:competing-paths}, $\pi''$ is distinct from both $\pi$ and $\pi'$.
Let $\alpha \in \mathcal{A}[\mathcal{T}]$ be such that $\vec{x} \in \dom{\alpha}$.
By condition~\ref{def:cp-3} of Definition~\ref{def:competing-paths}, such an $\alpha$ exists.
By condition~\ref{def:cp-6a} of Definition~\ref{def:competing-paths} applied to $\pi''$, there exists a $\mathcal{T}$-assembly sequence $\vec{\alpha}'' = \left(\alpha''_i \mid 1 \leq i \leq k \right)$ such that $\alpha''_1 = \left\{ \left( \vec{x}, \alpha\left( \vec{x} \right) \right) \right\}$ and $\dom{\alpha''_k} = \dom{\pi''}$.
But condition~\ref{def:cp-6b} of Definition~\ref{def:competing-paths} applied to $\pi$ and $\pi'$ says that for all $\mathcal{T}$-assembly sequences $\vec{\alpha} = \left( \alpha_i \mid 1 \leq i \leq k\right)$, if $\alpha_1 = \left\{ \left(\vec{x}, \alpha\left(\vec{x}\right) \right) \right\}$ and $p$ is a simple path from $\vec{x}$ to $\vec{y}$ in $G^{\textmd{b}}_{\alpha_k}$, then $p \in \left\{ \pi, \pi' \right\}$.
Since $\pi''$ is distinct from both $\pi$ and $\pi''$, such a $\mathcal{T}$-assembly sequence $\vec{\alpha}''$ cannot exist. 

\end{proof}

\goingforward{let $r \in \mathbb{Z}^+$ and assume $\mathcal{T}$ has a non-empty, possibly infinite set $P$ of POCs with $|P| \geq r$.}

\begin{definition} % def. winner function
\label{def:winner-function}
Let $Y = \left\{\vec{y}_1, \ldots, \vec{y}_r\right\}$ be an $r$-element subset $P$. A \emph{winner function} (for $\mathcal{T}$) is a function of the form $w: Y \rightarrow T$.
Having defined $w$, we call the set $Y$ the set of {\em essential} POCs in $\mathcal{T}$.
\end{definition}
\goingforward{assume $Y = \left\{\vec{y}_1, \ldots, \vec{y}_r\right\}$ is the $r$-element set of essential POCs in $\mathcal{T}$ and $S_Y = \left\{ \vec{x}_1, \ldots, \vec{x}_r \right\}$ is the corresponding set of starting points.}

For example, a TAS could be designed to strictly self-assemble an $N \times N$ square such that its set of essential POCs is a particular subset of $\{0,\ldots,N-1\}^2$. However, it may induce infinite, erroneous $\mathcal{T}$-assembly sequences that attach tiles at infinitely many POCs. 
The motivation behind Definition~\ref{def:winner-function} is that $\mathcal{T}$ may produce multiple (possibly infinite) terminal assemblies and we use $w$ to determine the correctness of such an assembly in the following sense.
\begin{definition} % def correct assembly
\label{def:correct-assembly}
Let $\alpha \in \mathcal{A}[\mathcal{T}]$ and $w:Y \rightarrow T$ be a winner function for $\mathcal{T}$. If:
\begin{enumerate}
\item \label{def:correct-assembly-1}for all $\vec{p} \in \dom{\alpha} \cap Y$, $\alpha\left(\vec{p}\,\right) = w\left(\vec{p}\,\right)$ and
\item  \label{def:correct-assembly-2}$\dom{\alpha} \cap (P\backslash Y)= \emptyset$,
\end{enumerate}
  then we say that $\alpha$ is a $w$-\emph{correct} $\mathcal{T}$-\emph{assembly}.
If $\vec{\alpha}$ is any $\mathcal{T}$-assembly sequence such that $\res{\vec{\alpha}} = \alpha$, then we say that $\vec{\alpha}$ is a $w$-\emph{correct} $\mathcal{T}$-\emph{assembly sequence}.
\end{definition}

Condition~\ref{def:correct-assembly-1} of this definition means that
$\alpha$ may only place correct tiles at the essential POCs included in its domain. Condition~\ref{def:correct-assembly-2} means that a tile cannot attach to an inessential POC unless it is attaching to a $\mathcal{T}$-producible assembly that is not $w$-correct.

In our running example, $w=\{(\vec{y}_1,3),(\vec{y}_2,7)\}$ and
    $w'=\{(\vec{y}_1,4),(\vec{y}_2,7))\}$ are two possible winner functions.
    Then the assembly $\alpha_{3,7} \in \mathcal{A}_{\Box}[\mathcal{T}]$, shown in
    Figure~\ref{fig:def-example-tas}c, is $w$-correct, whereas
    the assembly $\alpha_{4,7} \in \mathcal{A}_{\Box}[\mathcal{T}]$, shown in
    Figure~\ref{fig:def-example-tas}e, is $w'$-correct.

\begin{definition} % def. new-DD point
\label{def:dd-ad-pt}
Let $\vec{p} \in \Z^2$, $\alpha, \alpha',\beta,\beta' \in \mathcal{A}[\mathcal{T}]$ and $t_\alpha, t_\beta \in T$ such that $\alpha' = \alpha + \left( \vec{p}, t_\alpha \right)$ and $\beta' = \beta + \left( \vec{p}, t_\beta \right)$ are valid tile attachment steps.
We say that $\vec{p}$ is a \emph{directionally deterministic} point in $\mathcal{T}$ if either $\vec{p} = \vec{s}$, or the following two conditions hold:
\begin{enumerate}
	\item \label{def:dd-ad-pt-2} If $\vec{p} \in P$, then the fact that $t_\alpha \ne t_\beta$ implies that:
	\begin{enumerate}
		\item \label{def:dd-ad-pt-2a} $\vec{u}_{\alpha' \backslash \alpha} \ne \vec{u}_{\beta' \backslash \beta}$, or
		\item \label{def:dd-ad-pt-2b} $\alpha\left( \vec{p} + \vec{u}_{ \alpha' \backslash \alpha }\right) \ne \beta\left( \vec{p} + \vec{u}_{ \beta' \backslash \beta} \right)$.
	\end{enumerate}
	\item \label{def:dd-ad-pt-1} If $\vec{p} \in Z^2 \backslash P$, then the fact that $t_\alpha \ne t_\beta$ implies that $\alpha$ and $\beta$ do not agree.
	
\end{enumerate}
\end{definition}

Condition~\ref{def:dd-ad-pt-1} of this definition means that, when two $\mathcal{T}$-producible assemblies agree, no more than one tile may attach at any non-POC that belongs to the intersection of their frontiers.
Condition~\ref{def:dd-ad-pt-2}
says that, if any two distinct tiles ever attach to any
$\mathcal{T}$-producible assemblies at any POC $\vec{p}$, then these tiles
must attach via glues on different sides or bind to different tile
types.  Note that the definition of a directionally deterministic
point $\vec{p}$ allows for different $\mathcal{T}$-producible
assemblies to place tiles of different types at $\vec{p}$ (see
Figure~\ref{fig:intro-example1}) or to place tiles of the same type at
$\vec{p}$ via different input glues (see
Figure~\ref{fig:dd-TAS-example4}). We now discuss these two examples.

The TAS shown in Figure~\ref{fig:intro-example1} produces two terminal
assemblies, respectively placing the 3 tile at the point $\vec{p}$ in
the bottom-left corner, as depicted in
Figure~\ref{fig:intro-example1}(b), or placing the 4 tile at the same
point, as depicted in Figure~\ref{fig:intro-example1}(c). Note that,
in this example, the 3 tile binds only via its north glue, whereas the
4 tile binds only via it east glue. Therefore, the point $\vec{p}$ is
directionally deterministic in this TAS.
  
As another example, the TAS $\mathcal{T}$ depicted in
Figure~\ref{fig:dd-TAS-example4} illustrates the importance of
condition~\ref{def:dd-ad-pt-2b} in Definition~\ref{def:dd-ad-pt}. As shown in
part (b) and (c) of the figure, the point
$\vec{p}+\vec{u}_{\alpha'\backslash\alpha}$ is a POC in
$\mathcal{T}$, with the seed tile's point as its corresponding
starting point. The path $\pi$ followed by the tiles s and 3' and the
path $\pi'$ followed by the tiles s, 1, 2 and 3 are the two competing
paths for this POC. Since the distinct tiles 4 and 4' can both be
placed at $\vec{p}$, $\vec{u}_{\alpha'\backslash\alpha}
=\vec{u}_{\beta'\backslash\beta}$, and
$\alpha\left(\vec{p}+\vec{u}_{\alpha'\backslash\alpha}\right)
= \mathrm{tile\ 3} \neq \mathrm{tile\ 3'}
= \beta\left(\vec{p}+\vec{u}_{\beta'\backslash\beta}\right)$, point
$\vec{p}$ satisfies condition~\ref{def:dd-ad-pt-2b}. Suppose the set of
tiles $\{s,1,2,3,3'\}$ together form, for example, a bit reading
``gadget'' similar to those described in
Section~\ref{sec:square} (see Step~2c in the proof of Lemma~\ref{lem:high-probability-counter}). Then the tile placed at the POC via the winning path $\pi$ is able to propagate
different information from the one propagated by the tile placed at
the POC via $\pi'$ (e.g., a different value of the carry bit) through
tile 4 or 4' to another gadget on its right (not shown in the figure)
in order to achieve a bigger goal, say, incrementing the value of a counter.

\begin{figure}[!ht]
  \centering \includegraphics[width=.7\linewidth]{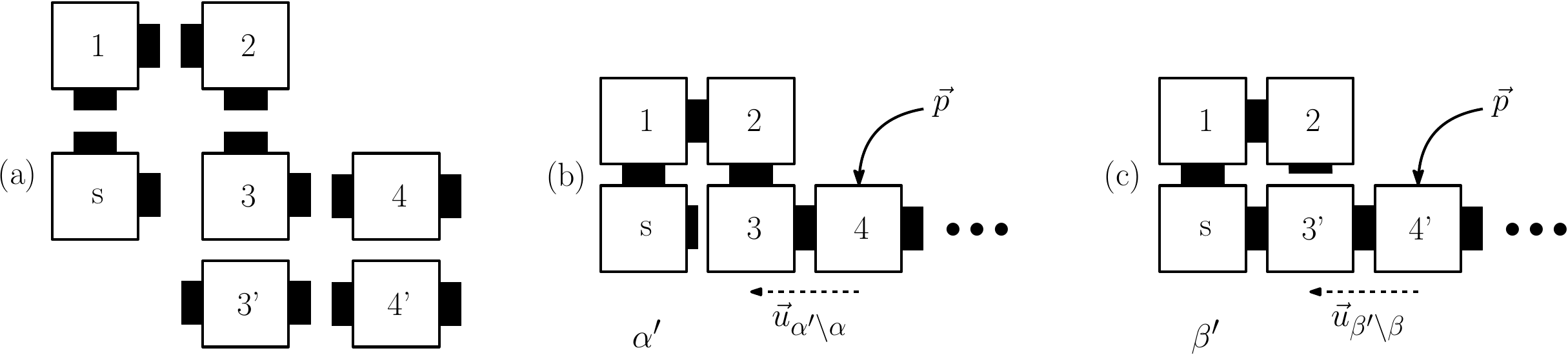} \caption{\label{fig:dd-TAS-example4}
  Example of a TAS $\mathcal{T}$ with the tile set shown in (a), in
  which the temperature is 1 and s is the seed tile. Sub-figures (b)
  and (c) show the only two terminal assemblies of
  $\mathcal{T}$. Each one of the points at which $\mathcal{T}$
  may place tiles is directionally deterministic in
  $\mathcal{T}$. 
}
\end{figure}

\begin{definition}
\label{def:dd-ad-tas}
We say that $\mathcal{T}$ is a \emph{directionally deterministic} TAS if all points in the domains of all $\mathcal{T}$-producible assemblies are directionally deterministic.
\end{definition}

The next lemma says that if $\mathcal{T}$ is directionally deterministic and $\alpha$ is a $w$-correct $\mathcal{T}$-terminal assembly, then any $w$-correct $\mathcal{T}$-producing assembly sequence must result in a subassembly of $\alpha$. 

\begin{lemma} % lemma 2
\label{lem:w-correct-subassembly-w-correct}
Let $w:Y \rightarrow T$ be a winner function for $\mathcal{T}$ and $\alpha \in \mathcal{A}_{\Box}[\mathcal{T}]$ be $w$-correct. If $\mathcal{T}$ is 
directionally deterministic 
and $\beta$ is any $w$-correct, $\mathcal{T}$-producible assembly, then $\beta \sqsubseteq \alpha$.
\end{lemma}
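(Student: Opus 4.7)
The plan is to fix any $w$-correct $\mathcal{T}$-assembly sequence $\vec{\beta}=(\beta_i\mid 0\le i-1<k)$ with $\res{\vec{\beta}}=\beta$ (this exists because $\beta$ is $\mathcal{T}$-producible, and each $\beta_i$ inherits $w$-correctness from $\beta$ since dropping tiles cannot introduce wrong tiles at essential POCs nor occupy inessential POCs), and prove by induction on $i$ that $\beta_i\sqsubseteq\alpha$. The base case $\beta_1=\sigma\sqsubseteq\alpha$ is immediate from $\alpha\in\mathcal{A}[\mathcal{T}]$. Once the induction is completed, $\beta\sqsubseteq\alpha$ follows from $\dom{\beta}=\bigcup_i\dom{\beta_i}$.

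For the inductive step, assume $\beta_{i-1}\sqsubseteq\alpha$ and let $(\vec{p},t)=\beta_i\setminus\beta_{i-1}$. The argument splits into two sub-claims. First, I will show $\vec{p}\in\dom{\alpha}$: since $(\vec{p},t)$ attaches to $\beta_{i-1}$ with strength at least $1$, its binding neighbor $\vec{q}$ lies in $\dom{\beta_{i-1}}\subseteq\dom{\alpha}$ with $\beta_{i-1}(\vec{q})=\alpha(\vec{q})$; so if $\vec{p}\notin\dom{\alpha}$ then $(\vec{p},t)$ would also attach to $\alpha$, contradicting $\partial^{\mathcal{T}}\alpha=\emptyset$. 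Second, I will show $\alpha(\vec{p})=t$ by splitting on whether $\vec{p}$ is a POC.

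The case analysis for $\alpha(\vec{p})=t$ goes as follows. If $\vec{p}\in Y$, then $w$-correctness of both $\beta_i$ (inherited from $\beta$) and $\alpha$ yields $\alpha(\vec{p})=w(\vec{p})=\beta_i(\vec{p})=t$. If $\vec{p}\in P\setminus Y$, then $\vec{p}\in\dom{\beta_i}\cap(P\setminus Y)$ contradicts $w$-correctness of $\beta_i$, so this case does not occur. If $\vec{p}\notin P$, then fix a $\mathcal{T}$-assembly sequence producing $\alpha$ and let $\alpha_{j-1}\to^{\mathcal{T}}_1\alpha_j$ be the step at which $\vec{p}$ is placed, with $t'=\alpha_j(\vec{p})=\alpha(\vec{p})$; both $\alpha_{j-1}$ and $\beta_{i-1}$ are subassemblies of $\alpha$ and therefore agree; then condition~\ref{def:dd-ad-pt-1} of Definition~\ref{def:dd-ad-pt} (applied to $\vec{p}$, which is directionally deterministic because $\mathcal{T}$ is) forces $t=t'$, since $t\neq t'$ would require the two agreeing assemblies to disagree.

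The main obstacle I expect is being careful at the non-POC sub-case: directional determinism is phrased in terms of two \emph{attachment steps} at $\vec{p}$ occurring in two producible assemblies that agree, so the key move is to identify, inside the fixed producing sequence for $\alpha$, the unique step at which $\vec{p}$ is filled and then to verify that the ``before'' assembly $\alpha_{j-1}$ agrees with the inductive hypothesis assembly $\beta_{i-1}$ (which follows purely from both being subassemblies of $\alpha$). Once that reduction is in place the rest is bookkeeping, and handling the possibility that $\alpha$ (or $\beta$) is infinite requires no change, since only finite prefixes are used in each inductive step.
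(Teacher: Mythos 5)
Your proposal is correct and follows essentially the same route as the paper's own proof: induction along a producing sequence for $\beta$, with the base case $\sigma\sqsubseteq\alpha$, the observation that $\vec{p}\in\dom{\alpha}$ because $\alpha$ is terminal and $\beta_{i-1}\sqsubseteq\alpha$, and a case split on whether $\vec{p}$ is an essential POC, with the non-POC case resolved by condition~\ref{def:dd-ad-pt-1} of Definition~\ref{def:dd-ad-pt} applied to two agreeing subassemblies of $\alpha$. The only cosmetic difference is that you handle $\vec{p}\in P\setminus Y$ as a third, impossible case (via $w$-correctness of $\beta_i$) whereas the paper folds it into the $\vec{p}\notin Y$ case by deriving $\vec{p}\notin P$ from $w$-correctness of $\alpha$; these are interchangeable.
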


In fact, $\mathcal{T}$ need only satisfy condition~\ref{def:dd-ad-pt-1} of Definition~\ref{def:dd-ad-pt} for all points in the domains of all of its producible assemblies in order for the following proof to go through.

\begin{proof}
Let $m \in \mathbb{Z}^+ \cup \{ \infty \}$.
Assume $\beta = \res{\vec{\beta}}$ with $\vec{\beta} = \left( \beta_i \mid 0 \leq i - 1 < m\right)$.
We will show that for all $k \in \mathbb{Z}^+$ such that $0 \leq k - 1 < m$, $\beta_k \sqsubseteq \alpha$.
We proceed by induction on $k$.
For $k = 1$, we have $\vec{\beta} = \left(\beta_1\right) = \left( \sigma \right)$ and thus $\beta_1 \sqsubseteq \alpha$. 
Now, assume $k$ is some integer with $1 \leq k < m$ and $\beta_k \sqsubseteq \alpha$.
We will show that $\beta_{k+1} \sqsubseteq \alpha$.
From $k < m$, it follows that $\partial^{\mathcal{T}} \beta_k \ne \emptyset$.
This means there exist $\vec{p} \in \mathbb{Z}^2$ and $t_\beta \in T$ such that $\beta_{k+1} = \beta_k + \left( \vec{p}, t_\beta \right)$.
Note that  $\vec{p} \in \dom{\alpha}$, since  $\beta_k \sqsubseteq \alpha$,  $p \in \partial^{\mathcal{T}} \beta_k$, and $\alpha$ is terminal.
Let $t_{\alpha} = \alpha(\vec{p}\,)$.
Since either $\vec{p} \in Y$ or $\vec{p} \notin Y$, we consider each case in turn.
\begin{enumerate}
\item Suppose $\vec{p} \in Y$.          
  Since, in the hypothesis of the lemma, $\beta$ is assumed to be $w$-correct, we have $t_\beta = \beta_{k+1}\left( \vec{p}\, \right) = w\left( \vec{p}\, \right)$.
  Since, in the hypothesis of the lemma, $\alpha$ is assumed to be $w$-correct, we have $t_{\alpha} = w\left( \vec{p}\, \right)$.
  Therefore, $t_\beta = \beta_{k+1}\left( \vec{p}\, \right) = w\left( \vec{p}\, \right) = t_{\alpha}$, which, combined with the inductive hypothesis, yields $\beta_{k+1} \sqsubseteq \alpha$. 
\item Suppose $\vec{p} \not \in Y$. Note that  $\beta_{k+1} \sqsubseteq \alpha$ directly follows from our inductive hypothesis (i.e., $\beta_k \sqsubseteq \alpha$), as long as $t_\alpha = t_\beta$. We now prove this fact by establishing the conditions of, and then applying, 
condition~\ref{def:dd-ad-pt-1} of Definition~\ref{def:dd-ad-pt}.

  First, since $\vec{p} \not \in Y$, $\vec{p} \in \dom{\alpha}$, and $\alpha$ is $w$-correct, condition~\ref{def:correct-assembly-2} of Definition~\ref{def:correct-assembly} implies that $\vec{p} \not \in P$.

  Second, let $\beta' = \beta_k$. Thus, $\beta' + \left(\vec{p}, t_\beta \right)$ is a valid tile attachment step. Furthermore, since $\vec{p} \in \dom{\alpha}$, there exist a $\mathcal{T}$-producible assembly $\alpha' \sqsubseteq \alpha$ such that $\alpha' + \left( \vec{p}, t_\alpha \right)$ is a valid tile attachment step.
  
  Third, since $\alpha' \sqsubseteq \alpha$ and, by our inductive hypothesis, $\beta' = \beta_k \sqsubseteq \alpha$, it follows that $\alpha'$ and $\beta'$ agree.
  
  Finally, since $\mathcal{T}$ is 
  directionally deterministic, 
  condition~\ref{def:dd-ad-pt-1} of Definition~\ref{def:dd-ad-pt}
   implies that $t_\alpha = t_\beta$, which concludes the proof of this case.
\end{enumerate}	
\end{proof}

%\begin{corollary} 
%\label{cor:unique-w-correct-assembly}
% If $\mathcal{T}$ is 
% agreeably deterministic and there exists a $w$-correct assembly $\alpha \in \mathcal{A}_{\Box}[\mathcal{T}]$, then $\alpha$ is unique.
%\end{corollary}
%

\begin{corollary} % corollary 1
\label{cor:unique-w-correct-assembly}
 If $\mathcal{T}$ is directionally deterministic
 and there exists a $w$-correct assembly $\alpha \in \mathcal{A}_{\Box}[\mathcal{T}]$, then $\alpha$ is unique.
\end{corollary}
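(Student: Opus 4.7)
The plan is to derive the corollary almost directly from Lemma~\ref{lem:w-correct-subassembly-w-correct} by a simple two-way containment argument. Suppose, for the sake of contradiction, that there exist two distinct $w$-correct terminal assemblies $\alpha, \alpha' \in \mathcal{A}_{\Box}[\mathcal{T}]$.

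First, I would apply Lemma~\ref{lem:w-correct-subassembly-w-correct} with $\alpha$ in the role of the $w$-correct terminal assembly and $\alpha'$ in the role of the $w$-correct $\mathcal{T}$-producible assembly. This requires verifying the hypotheses: $\mathcal{T}$ is directionally deterministic (given), $\alpha \in \mathcal{A}_{\Box}[\mathcal{T}]$ is $w$-correct (given), and $\alpha'$ is a $w$-correct $\mathcal{T}$-producible assembly (it is even terminal, hence producible, and $w$-correct by assumption). The lemma then yields $\alpha' \sqsubseteq \alpha$.

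Next I would apply the same lemma with the roles of $\alpha$ and $\alpha'$ reversed to obtain $\alpha \sqsubseteq \alpha'$. By the definition of subassembly, the two containments $\alpha' \sqsubseteq \alpha$ and $\alpha \sqsubseteq \alpha'$ together give $\dom{\alpha} = \dom{\alpha'}$ together with agreement on this common domain, i.e., $\alpha = \alpha'$, contradicting the assumption that they are distinct.

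There is no substantial obstacle here; the entire argument is bookkeeping around Lemma~\ref{lem:w-correct-subassembly-w-correct}. The only thing to be mindful of is that the lemma is stated with one assembly being terminal and the other only producible, so the symmetric application is legitimate precisely because every terminal assembly is also producible. Assuming that mild observation, the proof is a one-line appeal to the lemma applied twice.
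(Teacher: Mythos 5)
Your proof is correct, and it takes a genuinely different (though closely related) route from the paper's. The paper applies Lemma~\ref{lem:w-correct-subassembly-w-correct} only once: from the hypothetical second $w$-correct terminal $\beta \ne \alpha$ it gets $\beta \sqsubseteq \alpha$, and then argues that a \emph{strict} subassembly of a producible assembly must have a non-empty $\mathcal{T}$-frontier, contradicting that $\beta$ is terminal. You instead apply the lemma twice, in both directions, and then invoke the antisymmetry of $\sqsubseteq$ (mutual domain inclusion plus agreement forces equality). Both arguments are sound; yours avoids the frontier reasoning entirely and relies only on the order structure of subassemblies, which is arguably cleaner and a bit more symmetric, while the paper's version is slightly more economical in that it only needs one invocation of the lemma. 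Your observation that terminal assemblies are in particular producible, so the lemma applies symmetrically, is exactly the point that makes your double application legitimate.
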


\begin{proof}
Assume that $\alpha \in \mathcal{A}_{\Box}[\mathcal{T}]$ is $w$-correct. Therefore, $\left|\left\{ \alpha \in \mathcal{A}_{\Box}[\mathcal{T}] \mid \alpha \textmd{ is } w\textmd{-correct} \right\}\right| \geq 1$. Suppose, for the sake of obtaining a contradiction, that $\left|\left\{ \alpha \in \mathcal{A}_{\Box}[\mathcal{T}] \mid \alpha \textmd{ is } w\textmd{-correct} \right\}\right| > 1$. Then there exists a $w$-correct assembly $\beta \in \mathcal{A}_{\Box}[\mathcal{T}]$ such that $\alpha \ne \beta$.
By Lemma~\ref{lem:w-correct-subassembly-w-correct}, $\beta \sqsubseteq \alpha$. Together with $\beta \ne \alpha$, this implies  $\partial^{\mathcal{T}} \beta \ne \emptyset$, contradicting the fact that $\beta \in \mathcal{A}_{\Box}[\mathcal{T}]$.
\end{proof}
Note that if $\mathcal{T}$
does not satisfy condition~\ref{def:dd-ad-pt-1} of Definition~\ref{def:dd-ad-pt} for all points in the domains of all of its producible assemblies, then Corollary~\ref{cor:unique-w-correct-assembly} need not hold. 
An example of such a TAS is depicted in Figure~\ref{fig:dd_but_disagree}.
\begin{figure}[h!]
\captionsetup[subfigure]{labelformat=empty}
    \centering
    \begin{subfigure}[t]{\textwidth}
        \centering
        \includegraphics[width=1.45in]{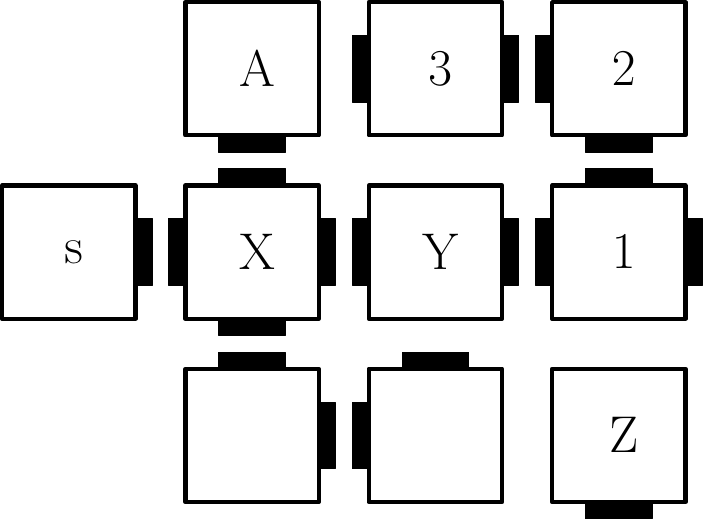}
        \caption{\label{fig:dd_but_disagree_tiles}(a) The tile set,
          with s as the seed tile. This TAS has only one POC, namely
          the point at which the Y tile is placed in the three
          sub-figures below. Tile Z is the other tile that may be
          placed at the POC. We assume $w$ is such that an assembly
          that places the Y tile is $w$-correct.}
    \end{subfigure}
    \\
    \vspace{10pt}
    \hfill
    \begin{subfigure}[t]{0.3\textwidth}
        \centering
        \includegraphics[width=1.8in]{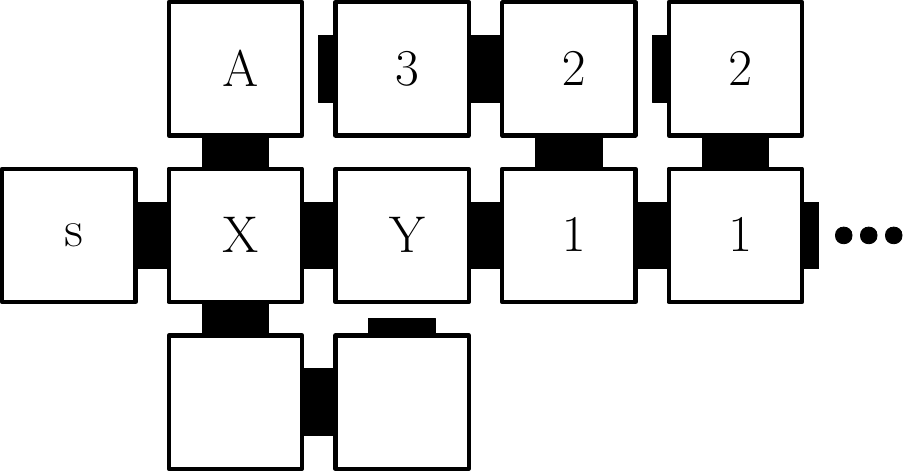}
        \caption{\label{fig:dd_but_disagree_1}(b) A $w$-correct
          assembly. Note that the point at which the A tile is
          placed is not a POC, even though multiple types of tiles can
          be placed at this location, similarly to what happens at a
          POC. }
    \end{subfigure}
    ~
    \begin{subfigure}[t]{0.3\textwidth}
        \centering
        \includegraphics[width=1.8in]{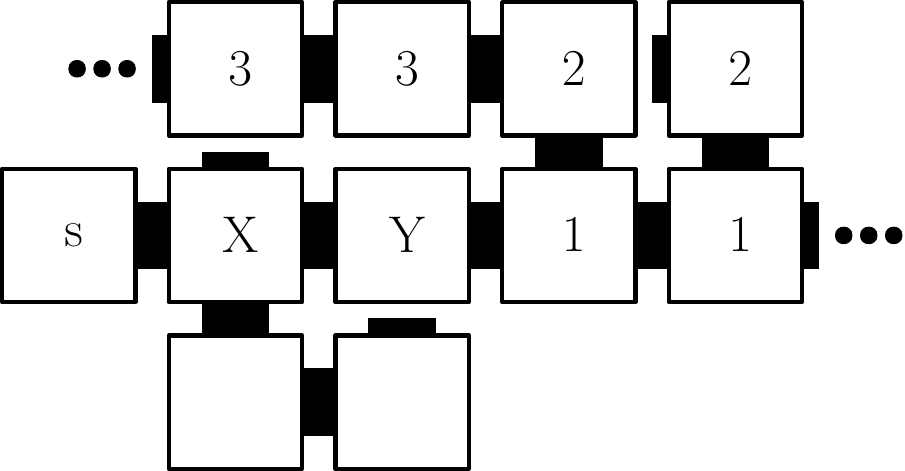}
        \caption{\label{fig:dd_but_disagree_2}(c) Another $w$-correct
          assembly in which a tile of type 3 is placed at the point
          of the A tile from
          part~(\subref{fig:dd_but_disagree_1}). Note that this
          location is not a POC because
          conditions~\ref{def:cp-4},~\ref{def:cp-5},
          and~\ref{def:cp-6b} of Definition~\ref{def:competing-paths}
          do not hold. }
    \end{subfigure}
  \hfill  
  \begin{subfigure}[t]{0.3\textwidth}
        \centering
        \includegraphics[width=1.8in]{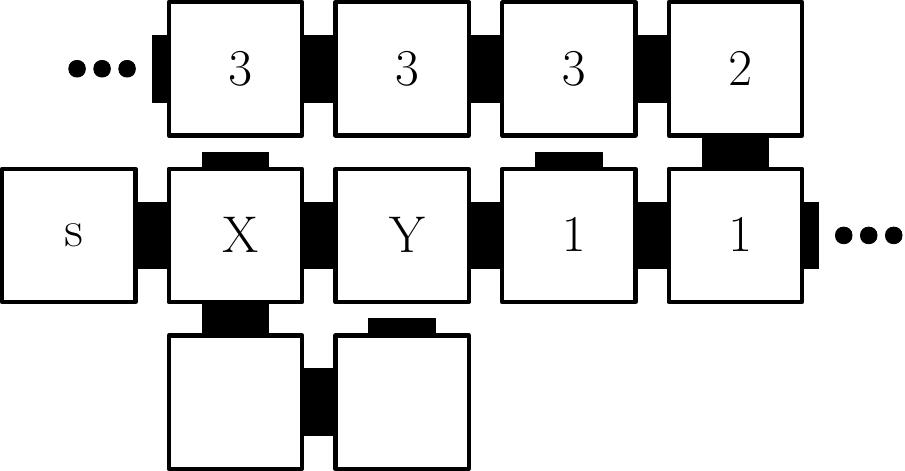}
        \caption{\label{fig:dd_but_disagree_3}(d) Another $w$-correct
          assembly in which a tile of type 3 is placed at the point
          of the A tile along a path that blocks the path from X to 3
          shown in part~(\subref{fig:dd_but_disagree_2}). }
    \end{subfigure}
    ~
      \caption{\label{fig:dd_but_disagree} 
     An example of a TAS that 
does not satisfy condition~\ref{def:dd-ad-pt-1} of Definition~\ref{def:dd-ad-pt} for all points in the domains of all of its producible assemblies,
and produces more than one
        $w$-correct terminal assembly. Its tile set is shown in
        part~(\subref{fig:dd_but_disagree_tiles});
        parts~(\subref{fig:dd_but_disagree_1})
        through~(\subref{fig:dd_but_disagree_3}) depict distinct
        $w$-correct assemblies. Although the depicted assemblies are
        not terminal, it is easy to see how each one can be extended
        to an infinite terminal assembly. In fact, this TAS induces an
        infinite number of (infinite)  $w$-correct terminal assemblies.}
        \end{figure}
\goingforward{assume $w: Y \rightarrow T$ is a winner function for $\mathcal{T}$. }

The following lemma says that any $w$-correct, finite $\mathcal{T}$-producing assembly sequence of a directionally deterministic TAS
that results in a non-terminal assembly can always be extended to a sequence whose result is the $w$-correct $\mathcal{T}$-terminal assembly $\alpha$, which, by Corollary~\ref{cor:unique-w-correct-assembly}, is unique.
\begin{lemma} % lemma 3
\label{lem:w-correct-extended-to-w-correct-terminal}
Assume $\mathcal{T}$ is directionally deterministic
and $\alpha$ is the unique $w$-correct $\mathcal{T}$-terminal assembly.
For every finite, $w$-correct, $\mathcal{T}$-producing assembly sequence $\vec{\beta}$ such that $\res{\vec{\beta}} \not \in \mathcal{A}_{\Box}\left[ \mathcal{T} \right]$, there exists an extension $\vec{\alpha}$ of $\vec{\beta}$ by some $\mathcal{T}$-assembly sequence such that $\vec{\alpha}$ results in $\alpha$.
\end{lemma}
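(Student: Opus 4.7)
By Lemma~\ref{lem:w-correct-subassembly-w-correct} applied to $\res{\vec{\beta}}$, we have $\res{\vec{\beta}} \sqsubseteq \alpha$. The plan is to explicitly exhibit an extension of $\vec{\beta}$ that exhausts all of $\dom{\alpha}$, using the connectivity of the binding graph $\bindinggraph_\alpha$ to make every single-tile attachment valid at temperature~$1$, and using an index-based enumeration of $\dom{\alpha}$ to guarantee coverage in the potentially infinite case.

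The construction proceeds stage by stage. Enumerate $\dom{\alpha}$ as $\vec{p}_1, \vec{p}_2, \ldots$ (a finite or countably infinite list, recalling $\dom{\alpha} \subseteq \mathbb{Z}^2$). Starting from $\vec{\gamma}_0 = \vec{\beta}$, I define $\vec{\gamma}_j$ as a finite extension of $\vec{\gamma}_{j-1}$ satisfying the invariants $\res{\vec{\gamma}_j} \sqsubseteq \alpha$ and $\vec{p}_j \in \dom{\res{\vec{\gamma}_j}}$, as follows. If $\vec{p}_j$ is already in $\dom{\res{\vec{\gamma}_{j-1}}}$, set $\vec{\gamma}_j = \vec{\gamma}_{j-1}$. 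Otherwise, since $\bindinggraph_\alpha$ is connected (as $\alpha$ is an assembly) and $\res{\vec{\gamma}_{j-1}}$ has nonempty domain contained in $\dom{\alpha}$, I choose a simple $\bindinggraph_\alpha$-path from $\vec{p}_j$ into $\dom{\res{\vec{\gamma}_{j-1}}}$ and truncate it at its first entry into that domain, obtaining a path $\vec{p}_j = \vec{q}_0, \vec{q}_1, \ldots, \vec{q}_m$ with $\vec{q}_m \in \dom{\res{\vec{\gamma}_{j-1}}}$ and $\vec{q}_i \notin \dom{\res{\vec{\gamma}_{j-1}}}$ for $i<m$. Then append the attachments $(\vec{q}_{m-1}, \alpha(\vec{q}_{m-1})), \ldots, (\vec{q}_0, \alpha(\vec{q}_0))$ in that order; at the $i$-th append, the newly placed tile binds at temperature~$1$ to the already-placed $\alpha(\vec{q}_{i+1})$ via the $\bindinggraph_\alpha$-edge between $\vec{q}_i$ and $\vec{q}_{i+1}$, and the invariants are preserved because every placed tile is copied directly from $\alpha$.

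Let $\vec{\alpha}$ denote the assembly sequence obtained by concatenating all these extensions. I would verify $\res{\vec{\alpha}} = \alpha$ by noting that $\res{\vec{\gamma}_j} \sqsubseteq \alpha$ for every $j$ gives $\res{\vec{\alpha}} \sqsubseteq \alpha$, while the invariant $\vec{p}_j \in \dom{\res{\vec{\gamma}_j}}$ ensures $\dom{\alpha} \subseteq \dom{\res{\vec{\alpha}}}$; the two assemblies therefore coincide. Since $\vec{\beta}$ is a prefix of $\vec{\alpha}$ by construction, $\vec{\alpha}$ is an extension of $\vec{\beta}$ by the concatenation of the per-stage appendices, as required. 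No separate $w$-correctness argument is needed for $\vec{\alpha}$, since $\res{\vec{\alpha}} = \alpha$ and $\alpha$ is $w$-correct by hypothesis.

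The main technical point to handle carefully is the infinite case. A purely greedy ``attach anything from $\alpha$ on the current frontier'' strategy is not quite enough, because it could wander forever along one branch of $\alpha$ while some $\vec{p}_j$ is never reached. Enumerating $\dom{\alpha}$ and insisting that $\vec{p}_j$ enters the domain by stage~$j$ is what forces exhaustive coverage, and the connectivity of $\bindinggraph_\alpha$ (free from $\alpha$ being an assembly) is what makes each stage realizable in finitely many temperature-$1$ attachments.
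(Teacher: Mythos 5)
Your proof is correct and takes a genuinely different route from the paper's. The paper proceeds by an explicit greedy algorithm: it fixes an enumeration $\vec{e}$ of all of $\mathbb{Z}^2$, and at each iteration attaches $\alpha(\vec{p})$ at the earliest frontier point $\vec{p}$ of the current partial result, maintaining as loop invariant that the growing sequence stays $\mathcal{T}$-producing and $w$-correct, and invoking Lemma~\ref{lem:w-correct-subassembly-w-correct} at \emph{every} iteration to keep the result a subassembly of $\alpha$. Fairness of the $\mathbb{Z}^2$-enumeration then yields $\partial^{\mathcal{T}}\varepsilon=\emptyset$ in the infinite case. You instead enumerate only $\dom\alpha$, invoke Lemma~\ref{lem:w-correct-subassembly-w-correct} exactly once (to get $\res{\vec\beta}\sqsubseteq\alpha$), and use connectivity of $\bindinggraph_\alpha$ to walk a path from each unplaced target point back into the already-built region, copying tile types from $\alpha$ as you go. This makes coverage constructive rather than fairness-based, which is arguably cleaner, and avoids having to re-establish the subassembly invariant at every step since tiles are copied from $\alpha$ by fiat. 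What the paper's approach buys is that it never needs to reason about paths in $\bindinggraph_\alpha$ at all; what yours buys is a more transparent argument that $\dom\alpha$ is eventually exhausted.

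One small imprecision worth fixing: you twice attribute the connectivity of $\bindinggraph_\alpha$ to ``$\alpha$ being an assembly,'' but the definition of assembly only guarantees that $\fullgridgraph_{\dom\alpha}$ is connected. Connectivity of the \emph{binding} graph follows because $\alpha$ is $\tau$-stable (being $\mathcal{T}$-producible) at temperature $\tau=1$, so every cut of $\bindinggraph_\alpha$ has weight at least $1$ and hence is non-empty. The conclusion you need is true, just for a slightly different reason than you state.
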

Similarly to Lemma~\ref{lem:w-correct-subassembly-w-correct}, $\mathcal{T}$ need only satisfy condition~\ref{def:dd-ad-pt-1} of Definition~\ref{def:dd-ad-pt} for all points in the domains of all of its producible assemblies in order for the following proof, which references Lemma~\ref{lem:w-correct-subassembly-w-correct} and its corresponding corollary, to go through. 
\begin{proof} 
Assume that $\vec{\beta} = \left( \beta_i \mid 1 \leq i \leq k \right)$ for some $k \in \mathbb{Z}^+$ is a $w$-correct, $\mathcal{T}$-producing assembly sequence such that $\res{\vec{\beta}} \not \in \mathcal{A}_{\Box}[\mathcal{T}]$.
Let $\vec{e}=(\vec{p}_1,\vec{p}_2,\ldots)$ be any enumeration of $\mathbb{Z}^2$.
We will use $\alpha$ to construct a $w$-correct extension of $\vec{\beta}$ via the following algorithm.
  \begin{minipage}[t]{\linewidth}
    \begin{algorithm}[H]
      \DontPrintSemicolon
  
       $\vec{\varepsilon} \leftarrow \left( \beta_1, \ldots, \beta_k \right)$
       \tcp*[f]{ initialize $\vec{\varepsilon}$ to $\vec{\beta}$ }
        
        \While{$\res{\vec{\varepsilon\,}} \not \in \mathcal{A}_{\Box}[\mathcal{T}]$}{
        	
			Let $\vec{p}$ be the first point in $\vec{e}$ such that $\vec{p} \in \partial^{\mathcal{T}} \res{\vec{\varepsilon}\,}$
			\tcp*[f]{ choose the next point and \ldots}

			$t \leftarrow \alpha\left( \vec{p}\, \right)$
      	      		\tcp*[f]{\ldots\ tile type to append to $\vec{\varepsilon}$ }
			
			$\vec{\varepsilon} \leftarrow \vec{\varepsilon} + \left( \vec{p}, t \right)$
			\tcp*[f]{ append $\left( \vec{p}, t \right)$ to $\vec{\varepsilon}$ }

        }    
\end{algorithm}
\end{minipage}
Note that:
\begin{itemize}
\item the while loop executes at least one iteration because $\res{\vec{\beta}} = \beta_k$ is not terminal and
\item $\vec{\varepsilon}$ is always an extension of $\vec{\beta}$ because it is initialized to $\vec{\beta}$ in step 1 and at least one tile attachment step is added to it in step 5.
\end{itemize}
Now, consider the following invariant. Prior to each iteration of the while loop:
\begin{enumerate}
	\item $\vec{\varepsilon}$ is a $\mathcal{T}$-producing assembly sequence and
	\item $\vec{\varepsilon}$ is $w$-correct.
\end{enumerate}
We will show that this invariant holds prior to each iteration of the while loop.
First, note that the invariant holds prior to the first iteration of the while loop because $\vec{\varepsilon}$ is initialized to $\vec{\beta}$, and $\vec{\beta}$ is assumed to be a $w$-correct, $\mathcal{T}$-producing assembly sequence.
Now, assume that the invariant holds prior to some iteration of the while loop, and let $\varepsilon = \res{\vec{\varepsilon}\,}$ just prior to that iteration.
In this case, $\varepsilon \not \in \mathcal{A}_{\Box}[\mathcal{T}]$, which means there exists $\vec{p} \in \partial^{\mathcal{T}} \varepsilon$.
In particular, let $\vec{p}$ be the first point in $\vec{e}$ such that $\vec{p} \in \partial^{\mathcal{T}} \varepsilon$.
By Lemma~\ref{lem:w-correct-subassembly-w-correct}, $\varepsilon \sqsubseteq \alpha$.
Since, after step~4 of the algorithm, $t=\alpha\left(\vec{p}\,\right)$ and the temperature of $\mathcal{T}$ is 1, $\vec{p} \in \partial^{\mathcal{T}}_t \varepsilon$.
This means $\vec{\varepsilon} + \left(\vec{p},t\right)$ is a valid tile attachment step. Thus, after step 5 of the algorithm has executed:
\begin{enumerate}
	\item $\vec{\varepsilon}$ is a $\mathcal{T}$-producing assembly sequence         and 
	\item $\vec{\varepsilon}$ is $w$-correct. % because $\alpha$ is $w$-correct.
\end{enumerate}
The second part of the invariant still holds because either $\vec{p} \notin Y$ or else $t = \alpha\left(\vec{p}\,\right) = w\left(\vec{p}\,\right)$,  because $\alpha$ is $w$-correct.
Thus, the invariant holds prior to the next iteration of the while loop. 
If $\alpha$ is finite, then the while loop eventually terminates with $\vec{\varepsilon}$ being a $w$-correct, $\mathcal{T}$-producing assembly sequence such that $\res{\vec{\varepsilon}\,} \in \mathcal{A}_{\Box}[\mathcal{T}]$.
Let $\varepsilon$ denote $\res{\vec{\varepsilon}\,}$ after the while loop terminates.
Then, by Lemma~\ref{lem:w-correct-subassembly-w-correct}, $\varepsilon \sqsubseteq \alpha$.
But since $\varepsilon \in \mathcal{A}_{\Box}[\mathcal{T}]$, and in combination with Corollary~\ref{cor:unique-w-correct-assembly}, we must have $\varepsilon = \alpha$.
Now, consider the case where $\alpha$ is infinite.
Then the while loop never terminates. We let $\vec{\varepsilon}$ denote $\left( \varepsilon_i \mid 1 \leq i \right)$ such that, for all $1 \leq i \leq k$, $\varepsilon_i = \beta_i$ and, for all $i > k$, $\varepsilon_i=\res{\vec{\varepsilon}_i}$, where $\res{\vec{\varepsilon}_i}$ is the result of the execution of step~5 during the $(i-k)^{th}$ iteration of the while loop.
The loop invariant ensures that $\vec{\varepsilon}_i$ satisfies the hypotheses of Lemma~\ref{lem:w-correct-subassembly-w-correct}, which means that, for all $i \geq 1$, $\varepsilon_i \sqsubseteq \alpha$.
Let $\varepsilon = \res{\vec{\varepsilon}\,}$.
We now show that $\partial^{\mathcal{T}}\varepsilon = \emptyset$.
Assume for the sake of obtaining a contradiction that 
$\partial^{\mathcal{T}}\varepsilon \ne \emptyset$. Let $\vec{x} \in \partial^{\mathcal{T}}\varepsilon$.
Since $\vec{e}$ is an enumeration of $\mathbb{Z}^2$, there exists $i \geq 1$ such that $\vec{x}$ is the first point in $\vec{e}$ with $\vec{x} \in \partial^{\mathcal{T}} \varepsilon_i$.
In this case, some tile must attach  at $\vec{x}$ in step 5 of the corresponding iteration of the loop, contradicting our assumption that $\partial^{\mathcal{T}}\varepsilon \ne \emptyset$.
Thus, $\partial^{\mathcal{T}}\varepsilon = \emptyset$, which means that $\varepsilon \in \mathcal{A}_{\Box}[\mathcal{T}]$.
\end{proof}
The following definition is the cornerstone of the framework we introduce in this paper. For the sake of completeness, we explicitly state all of the assumptions that were instantiated in previous ``Going forward'' boxes. 
\begin{definition} % def. snd
\label{def:seq-non-deterministic}
Assume $\mathcal{T} = (T,\sigma,1)$ is a singly-seeded TAS with $\dom{\sigma} = \left\{\vec{s} \, \right\}$, 
$r \in \mathbb{Z}^+$, $\mathcal{T}$ has a non-empty, possibly infinite set $P$ of POCs with $|P| \geq r$, 
$S_P$ denotes the set $\left\{ \vec{x} \mid \vec{x} \textmd{ is the starting point for some } \vec{y} \in P \right\}$, 
$Y = \left\{ \vec{y}_1, \ldots \vec{y}_r \right\}$ is the $r$-element set of essential POCs in $\mathcal{T}$ with
corresponding set of starting points $S_Y = \left\{\vec{x}_1, \ldots, \vec{x}_r\right\}$, and 
$w : Y \rightarrow T$ is a winner function for $\mathcal{T}$ such that $\alpha$ is the unique $w$-correct $\mathcal{T}$-terminal assembly with $Y \subseteq \dom{\alpha}$.
We say that $\mathcal{T}$ is $w$-\emph{sequentially non-deterministic} if:
\begin{enumerate}
	\item \label{def:snd-1} $\mathcal{T}$ is directionally deterministic, 
    \item \label{def:snd-3} for all $\mathcal{T}$-producing assembly sequences $\vec{\alpha}$,
     if 
     $Y \subseteq \dom{\res{\vec{\alpha}}}$,
     then for all integers $1 \leq i < r$, $\textmd{index}_{\vec{\alpha}}\left( \vec{y}_i \right) < \textmd{index}_{\vec{\alpha}}\left( \vec{y}_{i+1} \right)$, and
	\item \label{def:snd-4} $S_P \cap P = \emptyset$.
\end{enumerate}
\end{definition}

The idea is that if $\mathcal{T}$ is $w$-sequentially non-deterministic, then it produces $\alpha$ in a sequential, mostly-deterministic fashion.

\goingforward{assume $\mathcal{T}$ is $w$-sequentially non-deterministic.}
The next two lemmas establish that, in every $\mathcal{T}$-producing assembly sequence, a tile can only be placed at a POC after a tile has been placed at the corresponding starting point (Lemma~\ref{lem:x-i-y-i-ordering-r-1}) but always before a tile is placed at the starting point corresponding to the next POC (Lemma~\ref{lem:x-i-y-i-ordering}).
\begin{lemma} % lemma 4
\label{lem:x-i-y-i-ordering-r-1}
Let $i \in \mathbb{Z}^+$. For all $\mathcal{T}$-producing assembly sequences $\vec{\beta}$, if $Y \subseteq \dom{\res{\vec{\beta}}}$ and $1 \leq i \leq r$, then $\vec{x}_i \in \dom{\res{\vec{\beta}}}$ and $\textmd{index}_{\vec{\beta}}\left( \vec{x}_i \right) < \textmd{index}_{\vec{\beta}}\left( \vec{y}_i \right)$.
\end{lemma}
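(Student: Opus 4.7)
The plan is to fix a $\mathcal{T}$-producing assembly sequence $\vec{\beta} = \left(\beta_j \mid 0 \leq j-1 < m\right)$ and some $i$ with $1 \leq i \leq r$ satisfying $Y \subseteq \dom{\res{\vec{\beta}}}$, and then to locate $\vec{x}_i$ inside the prefix of $\vec{\beta}$ that ends exactly when $\vec{y}_i$ is attached, using condition~\ref{def:cp-5} of Definition~\ref{def:competing-paths}. Since $\vec{y}_i \in Y \subseteq \dom{\res{\vec{\beta}}}$, the quantity $j_i = \textmd{index}_{\vec{\beta}}\!\left( \vec{y}_i \right)$ is well defined; let $\gamma = \beta_{j_i}$, so $\gamma \in \prodasm{\mathcal{T}}$ is the $\mathcal{T}$-producible assembly obtained immediately after $\vec{y}_i$ is placed. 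Because $\vec{y}_i$ is a POC with corresponding starting point $\vec{x}_i$ and competing paths $\pi, \pi'$ from $\vec{x}_i$ to $\vec{y}_i$, Definition~\ref{def:competing-paths} guarantees in particular that $\vec{x}_i \neq \vec{y}_i$ (conditions~\ref{def:cp-1}--\ref{def:cp-2}).

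Next, I would argue that $\vec{x}_i \in \dom{\gamma}$. Since $\mathcal{T}$ has temperature $1$, every tile attachment step binds the new tile via at least one positive-strength glue to the existing assembly, so by a routine induction on $|\dom{\gamma}|$ the binding graph $G^{\textmd{b}}_{\gamma}$ is connected. As $\vec{s}, \vec{y}_i \in \dom{\gamma}$, there exists a finite simple path $p'$ from $\vec{s}$ to $\vec{y}_i$ in $G^{\textmd{b}}_{\gamma}$. Condition~\ref{def:cp-5} of Definition~\ref{def:competing-paths}, applied with $\alpha = \gamma$, says that one of $\pi, \pi'$ is a suffix of $p'$. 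Since both competing paths start at $\vec{x}_i$, we conclude $\vec{x}_i \in \dom{p'} \subseteq \dom{\gamma} \subseteq \dom{\res{\vec{\beta}}}$.

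Finally, the index inequality falls out immediately: $\vec{x}_i \in \dom{\gamma} = \dom{\beta_{j_i}}$ gives $\textmd{index}_{\vec{\beta}}(\vec{x}_i) \leq j_i = \textmd{index}_{\vec{\beta}}(\vec{y}_i)$, and strictness follows because distinct points have distinct indices in any assembly sequence (each step places exactly one tile at exactly one fresh location) together with $\vec{x}_i \neq \vec{y}_i$. The argument works uniformly for the edge case $\vec{x}_i = \vec{s}$, where $\textmd{index}_{\vec{\beta}}(\vec{x}_i) = 1$ is trivially less than $j_i \geq 2$.

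The only substantive ingredient is the reduction from ``$\vec{y}_i$ has been placed'' to ``$\vec{x}_i$ has been placed,'' and I expect no real obstacle there: condition~\ref{def:cp-5} is tailor-made for this inference, with the temperature-$1$ connectedness of $G^{\textmd{b}}_{\gamma}$ supplying the simple path $p'$ to which the condition is applied.
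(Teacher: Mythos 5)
Your proof is correct and takes essentially the same route as the paper's: obtain a simple path from $\vec{s}$ to $\vec{y}_i$ in the binding graph, invoke condition~\ref{def:cp-5} of Definition~\ref{def:competing-paths} to place $\vec{x}_i$ on that path (hence in the domain of the prefix assembly, which yields the weak index inequality), and use $\vec{x}_i \neq \vec{y}_i$ for strictness. The only cosmetic difference is the source of $\vec{x}_i \neq \vec{y}_i$ — the paper cites condition~\ref{def:snd-4} of Definition~\ref{def:seq-non-deterministic}, whereas you appeal to the standing hypothesis $\vec{x} \neq \vec{y}$ in the preamble of Definition~\ref{def:competing-paths} (note this is a hypothesis of that definition, not a consequence of conditions~\ref{def:cp-1}--\ref{def:cp-2} as you wrote) — and either is valid here.
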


\begin{proof}
  Let $\vec{\beta}$ be any $\mathcal{T}$-producing assembly sequence with $\beta = \res{\vec{\beta}}$. 
  Assume that $\left\{ \vec{y}_1, \ldots, \vec{y}_r \right\} \subseteq \dom{\beta}$. 
  Let $i$ be any integer such that $1 \leq i \leq r$.
  Since $\vec{y}_i \in \dom{\beta}$, there exists a simple path from
  $\vec{s}$ to $\vec{y}_i$ in
  $G^{\textmd{b}}_{\beta}$.  Condition~\ref{def:cp-5} in
  Definition~\ref{def:competing-paths} implies that $\vec{x}_i \in
  \dom{\beta}$ and $\textmd{index}_{\vec{\beta}}\left( \vec{x}_i
  \right) \leq \textmd{index}_{\vec{\beta}}\left( \vec{y}_i
  \right)$. Finally, condition~\ref{def:snd-4} in
  Definition~\ref{def:seq-non-deterministic} implies that
  $\textmd{index}_{\vec{\beta}}\left( \vec{x}_i \right) <
  \textmd{index}_{\vec{\beta}}\left( \vec{y}_i \right)$.
\end{proof}

\begin{lemma} % lemma 5
\label{lem:x-i-y-i-ordering}
Let $i \in \mathbb{Z}^+$ and $r > 1$. For all $\mathcal{T}$-producing assembly sequences $\vec{\beta}$, if $Y \subseteq \dom{\res{\vec{\beta}}}$ and $1 \leq i < r$, then $\vec{x}_i, \vec{x}_{i+1} \in \dom{\res{\vec{\beta}}}$ and $\textmd{index}_{\vec{\beta}}\left( \vec{x}_i \right) < \textmd{index}_{\vec{\beta}}\left( \vec{y}_i \right) < \textmd{index}_{\vec{\beta}}\left( \vec{x}_{i+1} \right)$.
\end{lemma}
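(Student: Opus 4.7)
Lemma~\ref{lem:x-i-y-i-ordering-r-1} applied to both $i$ and $i+1$ (valid because $i+1\leq r$) immediately yields $\vec{x}_i,\vec{x}_{i+1}\in\dom{\res{\vec{\beta}}}$ together with $\textmd{index}_{\vec{\beta}}(\vec{x}_j)<\textmd{index}_{\vec{\beta}}(\vec{y}_j)$ for $j\in\{i,i+1\}$. Combined with condition~\ref{def:snd-3} of Definition~\ref{def:seq-non-deterministic}, which already provides $\textmd{index}_{\vec{\beta}}(\vec{y}_i)<\textmd{index}_{\vec{\beta}}(\vec{y}_{i+1})$, the only inequality left to establish is the middle one, $\textmd{index}_{\vec{\beta}}(\vec{y}_i)<\textmd{index}_{\vec{\beta}}(\vec{x}_{i+1})$. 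The plan is to prove this by contradiction. First note that $\vec{x}_{i+1}\neq\vec{y}_i$: condition~\ref{def:snd-4} gives $S_P\cap P=\emptyset$, while $\vec{x}_{i+1}\in S_P$ and $\vec{y}_i\in Y\subseteq P$.

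Assume, for a contradiction, that $k:=\textmd{index}_{\vec{\beta}}(\vec{x}_{i+1})<\textmd{index}_{\vec{\beta}}(\vec{y}_i)$. The strategy is to splice into $\vec{\beta}$ just after step $k$ a detour along one of the competing paths that reaches $\vec{y}_{i+1}$ before $\vec{y}_i$, thereby producing a $\mathcal{T}$-producing assembly sequence $\vec{\gamma}$ with $Y\subseteq\dom{\res{\vec{\gamma}}}$ but $\textmd{index}_{\vec{\gamma}}(\vec{y}_{i+1})<\textmd{index}_{\vec{\gamma}}(\vec{y}_i)$, directly contradicting condition~\ref{def:snd-3}. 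Observe that $\vec{x}_{i+1}\in\dom{\beta_k}$ while $\vec{y}_i\notin\dom{\beta_k}$ (by the choice of $k$) and $\vec{y}_{i+1}\notin\dom{\beta_k}$ (since condition~\ref{def:snd-3} forces $\textmd{index}_{\vec{\beta}}(\vec{y}_{i+1})>\textmd{index}_{\vec{\beta}}(\vec{y}_i)>k$). Let $\pi_{i+1}$ and $\pi'_{i+1}$ be the competing paths for $\vec{y}_{i+1}$ from $\vec{x}_{i+1}$. Condition~\ref{def:cp-2} gives $\dom{\pi_{i+1}}\cap\dom{\pi'_{i+1}}=\{\vec{x}_{i+1},\vec{y}_{i+1}\}$, and since $\vec{y}_i$ is distinct from both of these points, at most one of $\pi_{i+1},\pi'_{i+1}$ can contain $\vec{y}_i$; pick $p\in\{\pi_{i+1},\pi'_{i+1}\}$ with $\vec{y}_i\notin\dom{p}$. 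Condition~\ref{def:cp-6a} applied to $\beta_k$ and $p$ then furnishes a $\mathcal{T}$-assembly sequence $\vec{\delta}=(\delta_1,\ldots,\delta_m)$ with $\delta_1=\{(\vec{x}_{i+1},\beta_k(\vec{x}_{i+1}))\}$ and $\dom{\delta_m}=\dom{p}$.

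The sequence $\vec{\gamma}$ is assembled in two phases: first, append the $m-1$ attachments $\delta_j\setminus\delta_{j-1}$ (for $j=2,\ldots,m$) to $\vec{\beta}[1\ldots k]$; second, greedily replay in order the remaining attachments of $\vec{\beta}[k{+}1\ldots]$, skipping any whose target position is already occupied. I expect the main obstacle to lie in justifying the first phase, namely certifying that none of the intermediate points $p[l]$ (with $1<l<m$) already lies in $\dom{\beta_k}$, so that each appended attachment is legal. The intended case analysis uses condition~\ref{def:cp-4}: a hypothetical simple path $p^*$ from $\vec{s}$ to $p[l]$ in $G^{\textmd{b}}_{\beta_k}$ either threads through $\vec{x}_{i+1}$, in which case condition~\ref{def:cp-4a} forces the prefix $p[1\ldots l]$ into $\dom{\beta_k}$, and, since $\vec{x}_{i+1}=p[1]$ is attached only at step $k$, tracing the binding chain $p[2]\to\cdots\to p[l]$ backward must appeal to condition~\ref{def:cp-6b} (on the ``starting-at-$\vec{x}_{i+1}$'' fragment obtained by isolating these attachments) to yield a contradiction; or it does not pass through $\vec{x}_{i+1}$, in which case condition~\ref{def:cp-4b} pins $p[l]$ onto every simple path from $\vec{s}$ to $\vec{x}_{i+1}$ in $G^{\textmd{b}}_{\beta_k}$, leading to a contradiction along the same lines. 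Once the first phase is justified, the second phase raises no new obstacle: the only positions newly occupied in $\vec{\gamma}$ beyond $\vec{\beta}[1\ldots k]$ lie in $\dom{p}$, so an argument essentially identical to the replay in the proof of Lemma~\ref{lem:w-correct-extended-to-w-correct-terminal} shows that every attachment of $\vec{\beta}$ at a point outside $\dom{p}$ remains legal and is eventually replayed, in particular placing $\vec{y}_i\notin\dom{p}$ later than any tile attached in the first phase. This yields $\textmd{index}_{\vec{\gamma}}(\vec{y}_{i+1})<\textmd{index}_{\vec{\gamma}}(\vec{y}_i)$ in a $\mathcal{T}$-producing sequence with $Y\subseteq\dom{\res{\vec{\gamma}}}$, contradicting condition~\ref{def:snd-3} and completing the proof.
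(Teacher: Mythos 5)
Your high-level strategy — argue by contradiction, splice a detour along a competing path for $\vec{y}_{i+1}$ into $\vec{\beta}$ immediately after $\vec{x}_{i+1}$ is placed, and then invoke condition~\ref{def:snd-3} — is the same as the paper's. But the way you realize the splice has a genuine gap at exactly the place you flagged as ``the main obstacle,'' and the sketch you give for resolving it does not go through.

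The problem is your selection rule for the path $p$ and the claim that its interior points avoid $\dom{\beta_k}$. You pick $p\in\{\pi_{i+1},\pi'_{i+1}\}$ purely to avoid $\vec{y}_i$. But one of the two competing paths for $\vec{y}_{i+1}$ may already be \emph{pre-blocked} by the assembly built before $\vec{x}_{i+1}$ is placed — in the paper's later terminology (Definitions~\ref{def:competition-rigged} and~\ref{def:rigged}), the competition $\mathcal{C}_{i+1}$ may be \emph{rigged} by the path from $\vec{s}$ to $\vec{x}_{i+1}$. Concretely, the path from $\vec{s}$ to $\vec{x}_{i+1}$ in $\bindinggraph_{\beta_k}$ may itself pass through interior points of one of the two competing paths, and condition~\ref{def:cp-4b} \emph{permits} this rather than forbidding it: a path to $p[l]$ that does not go through $\vec{x}_{i+1}$ is simply required to be a prefix of every path to $\vec{x}_{i+1}$, which is perfectly consistent with $p[l]\in\dom{\beta_k}$. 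So the case analysis you sketch on condition~\ref{def:cp-4} does not yield a contradiction, and indeed no argument could, because the claim ``$p[l]\notin\dom{\beta_k}$'' is simply not true when you happen to have picked the rigged/blocked path. Your appeal to condition~\ref{def:cp-6b} in the ``threads through $\vec{x}_{i+1}$'' branch is also not substantiated — condition~\ref{def:cp-6b} constrains $\mathcal{T}$-assembly sequences seeded at $\vec{x}_{i+1}$, not the binding graph of a producible assembly of $\mathcal{T}$, and you do not construct the isolated ``starting-at-$\vec{x}_{i+1}$ fragment'' you allude to.

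The paper sidesteps this entirely by doing a different case split: either the prefix $\vec{\beta}_p = \vec{\beta}[1\ldots j{-}1]$ (with $j=\textmd{index}_{\vec{\beta}}(\vec{x}_{i+1})$) blocks \emph{both} $\pi_{i+1}$ and $\pi'_{i+1}$, or it does not. If both are blocked, then by condition~\ref{def:cp-5} no extension of $\vec{\beta}_p$ can ever place a tile at $\vec{y}_{i+1}$, contradicting $Y\subseteq\dom{\res{\vec{\beta}}}$. Otherwise, it splices the path that is \emph{not} blocked (whichever one that turns out to be). This guarantees the splice is legal without having to prove a non-blockage claim that is false in general. Your proof is missing both the ``both blocked'' case and a correct selection rule for the unblocked case; fixing it essentially means adopting the paper's case split. (As a smaller point, the second-phase replay also needs more care than you give it — it is not merely an instance of the replay from Lemma~\ref{lem:w-correct-extended-to-w-correct-terminal}, because the tiles your detour places along $p$ need not coincide with the tiles $\vec{\beta}$ eventually places there, so you would still owe a justification that $Y\subseteq\dom{\res{\vec{\gamma}}}$.)
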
  

\begin{proof}

	Let $\vec{\beta}$ be any $\mathcal{T}$-producing assembly sequence with
  $\beta = \res{\vec{\beta}}$. Assume that $Y \subseteq \dom{\beta}$. Let
  $i$ be any integer such that $1 \leq i < r$. 
  By Lemma~\ref{lem:x-i-y-i-ordering-r-1}, $\vec{x}_i, \vec{x}_{i+1} \in \dom{\beta}$ and $\textmd{index}_{\vec{\beta}}\left( \vec{x}_i \right) <
  \textmd{index}_{\vec{\beta}}\left( \vec{y}_i \right)$.

  For the sake of obtaining a contradiction, assume that
  $\textmd{index}_{\vec{\beta}}\left( \vec{y}_i \right) \geq
  \textmd{index}_{\vec{\beta}}\left( \vec{x}_{i+1}
  \right)$. Condition~\ref{def:snd-4} of
  Definition~\ref{def:seq-non-deterministic} implies that
  $\vec{y}_i \neq \vec{x}_{i+1}$ and thus
  $\textmd{index}_{\vec{\beta}}\left( \vec{y}_i \right) >
  \textmd{index}_{\vec{\beta}}\left( \vec{x}_{i+1} \right)$.  
  In other words, $\vec{\beta}$ has the form $(\beta_1=\sigma,
  \ldots,\beta_{j-1}, \beta_j, \beta_{j+1},\ldots, \beta_k,\ldots)$
  where $j = \textmd{index}_{\vec{\beta}}\left( \vec{x}_{i+1}
  \right)$, $1 \leq j < |\vec{\beta}|$, $k =
  \textmd{index}_{\vec{\beta}}\left( \vec{y}_{i} \right)$, $1 < k
  \leq |\vec{\beta}|$, and $j < k$.
   Let $\vec{\beta}_p$ denote the (possibly empty) prefix $(\beta_1, \ldots,
   \beta_{j-1})$ of $\vec{\beta}$ and let $\pi$ and $\pi'$ be the two
   paths competing for $\vec{y}_{i+1}$.
   Condition~\ref{def:cp-6a} of Definition~\ref{def:competing-paths}
   implies that there exist two $\mathcal{T}$-assembly sequences
   $\vec{\gamma}$ and $\vec{\gamma}'$ that first place the tile type
   $\beta_j\left(\vec{x}_{i+1}\right)$ at $\vec{x}_{i+1}$ and then
   follow $\pi$ and $\pi'$, respectively.
   We say that {\it $\vec{\beta}_p$ blocks $\pi$ (resp., $\pi'$)} if
   $\vec{\beta}_p$ places at least one tile at a point in $\dom{\pi}$
   (resp., $\dom{\pi'}$) that makes it impossible for $\vec{\beta}_p$ to be 
   extended with $\vec{\gamma}$ (resp., $\vec{\gamma}'$) to yield a
   $\mathcal{T}$-producing sequence that places a tile at
   $\vec{y}_{i+1}$.
   We now consider two cases.

   {\bf Case 1: $\vec{\beta}_p$ blocks both $\pi$ and $\pi'$.}  In
   this case, tile attachment steps in $\vec{\beta_p}$ prevent both
   $\vec{\gamma}$ and $\vec{\gamma}'$ from extending $\vec{\beta}_p$ to
   place a tile at $\vec{y}_{i+1}$.  Since $j =
   \textmd{index}_{\vec{\beta}}\left( \vec{x}_{i+1} \right)$,
   Lemma~\ref{lem:x-i-y-i-ordering-r-1} implies that $\vec{y}_{i+1}
   \notin \dom{\beta_j}$. Thus $\vec{\beta}_p$ is a prefix of
   $\vec{\beta}$ that does not place a tile at $\vec{y}_{i+1}$. If
   there is an extension of $\vec{\beta}_p$ that results in an
   assembly whose domain contains $\vec{y}_{i+1}$ then, by
   condition~\ref{def:cp-5} in Definition~\ref{def:competing-paths},
   the binding graph of such an assembly must contain either $\pi$ or
   $\pi'$, which is ruled out by the fact that both $\pi$ and $\pi'$
   are blocked by $\vec{\beta}_p$ in this case. Therefore, no
   extension of $\vec{\beta}_p$ can result in an assembly that places
   a tile at $\vec{y}_{i+1}$. However, since $\vec{\beta}$ is by
   definition such an extension, we get a contradiction with the
   lemma's assumption that $Y \subseteq \dom{\res{\vec{\beta}}}$.

   {\bf Case 2: $\vec{\beta}_p$ does not block both $\pi$ and
     $\pi'$.}  In this case, WLOG assume that $\vec{\beta}_p$ does
   not block $\pi$, whose length is equal to $l$. Therefore, since
   $\beta_j$ and $\gamma=\res{\vec{\gamma}}$ both place the same tile
   type at $\vec{x}_{i+1}$, it is possible to embed $\vec{\gamma}$ in
   $\vec{\beta}$ by inserting it right after $\beta_j$ to yield the
   $\mathcal{T}$-producing sequence $\vec{\beta}' = (\beta_1=\sigma,
   \ldots, \beta_{j-1},\varepsilon_1,\ldots, \varepsilon_l,\varepsilon_{l+1},\ldots,
   \varepsilon_{l+k-j},\ldots )$ where:\vspace{-2mm}
  \begin{itemize}
  \item $\varepsilon_1 = \beta_j$,
  \item for all integer values of $i$ between 2 and $l$, $\varepsilon_i =
    \varepsilon_{i-1} + (\gamma_{i}\backslash\gamma_{i-1})$, and
  \item for all applicable integer values of $i$ greater than $l$, $\varepsilon_i =
    \varepsilon_{i-1} + (\beta_{i+j-l}\backslash\beta_{i+j-l-1})$.
  \end{itemize}\vspace{-2mm}
In this case, $\textmd{index}_{\vec{\beta}'}\left( \vec{y}_{i+1}
\right) =j-1+l< k-1+l = \textmd{index}_{\vec{\beta}'}\left(
\vec{y}_{i} \right)$, which contradicts condition~\ref{def:snd-3} in
Definition~\ref{def:seq-non-deterministic}.

  In conclusion, since we reached a contradiction in both cases, 
  $\textmd{index}_{\vec{\beta}}\left(
  \vec{y}_{i} \right) < \textmd{index}_{\vec{\beta}}\left(
  \vec{x}_{i+1} \right)$ must hold.

\end{proof}

Figure~\ref{fig:def-seq-nd} depicts visually and explains informally the implications of Definition~\ref{def:seq-non-deterministic} and Lemma~\ref{lem:x-i-y-i-ordering}.
\begin{figure}[!h]
      \centerline{\includegraphics[width=2.5in]{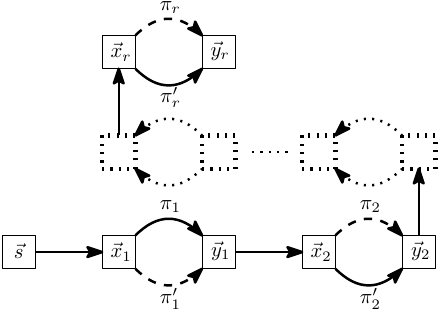}}
    \caption{\label{fig:def-seq-nd} In this example,
      $\mathcal{T}=(T,\sigma,1)$ is a $w$-sequentially non-deterministic
      TAS with $\dom \sigma = \{\vec{s}\, \}$, 
      $r > 2$, and $\alpha$ is the unique $w$-correct, $\mathcal{T}$-terminal assembly whose domain contains all of the essential POCs. Note that each arrow
      represents a sequence of tile attachment steps that follow a simple path.
      Definition~\ref{def:seq-non-deterministic} and Lemmas~\ref{lem:x-i-y-i-ordering-r-1} and~\ref{lem:x-i-y-i-ordering}
      together require that:\\      
      a) [Condition~\ref{def:snd-1}] every point at which a tile attaches in the assembly sequence is directionally deterministic, \\
      b) 
      [Assumption in Definition~\ref{def:seq-non-deterministic}] the $\vec{y}_i$ points are all the essential POCs in $\mathcal{T}$,\\
      c)
      [Condition~\ref{def:snd-3}] in all $\mathcal{T}$-assembly sequences whose results place tiles at every $\vec{y}_i$,
         the tiles always attach at the $\vec{y}_i$ points in order of 
         increasing $i$ values,\\
      d)
      [Lemmas~\ref{lem:x-i-y-i-ordering-r-1} and~\ref{lem:x-i-y-i-ordering}] in all $\mathcal{T}$-producing assembly sequences whose results place tiles at every $\vec{y}_i$,
      tiles also attach at all of the $\vec{x}_i$ points, and tile attachment
      steps always occur in the order $\vec{x}_1$, $\vec{y}_1$, $\vec{x}_2$, $\vec{y}_2$, \ldots, $\vec{x}_r$, $\vec{y}_r$, and\\
      e)
      [Condition~\ref{def:snd-4}] none of the $\vec{x}_i$ points is a point of
      competition.\\
      The solid arrows in the figure depict a $\mathcal{T}$-producible assembly that is the result of a $\mathcal{T}$-producing assembly
      sequence that attaches tiles at points along a simple path in $\mathbb{Z}^2$ and in the following,
      required order: from
      $\vec{s}$ (the seed's point) to $\vec{x}_1$, then to $\vec{y}_1$ via
      $\pi_1$, then to $\vec{x}_2$, then to $\vec{y}_2$ via $\pi'_2$, \ldots, to
      $\vec{x}_r$, and finally to $\vec{y}_r$ via $\pi'_r$.
    }
    \end{figure}

\begin{observation}
\label{obs:distinct-starting-points}
Lemma~\ref{lem:x-i-y-i-ordering} implies that the starting points corresponding to the elements of $Y$ are all distinct. In other words, if $|Y|=r$ for some $r \in \mathbb{Z}^+$, then $|S_Y|=r$.
\end{observation}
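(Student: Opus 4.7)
The plan is to apply Lemma~\ref{lem:x-i-y-i-ordering} to some concrete $\mathcal{T}$-producing assembly sequence that witnesses all essential POCs, and then chain the resulting strict inequalities to force the $\vec{x}_i$'s to have pairwise distinct indices, which in turn forces the points themselves to be pairwise distinct.

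First, I would handle the trivial case $r=1$, where $|S_Y| = 1 = r$ follows immediately. For $r > 1$, I would start by noting that the running assumption of Definition~\ref{def:seq-non-deterministic} guarantees the existence of the unique $w$-correct $\mathcal{T}$-terminal assembly $\alpha$ with $Y \subseteq \dom{\alpha}$. Since $\alpha$ is $\mathcal{T}$-producible, there exists a $\mathcal{T}$-producing assembly sequence $\vec{\alpha}$ with $\res{\vec{\alpha}} = \alpha$, and hence $Y \subseteq \dom{\res{\vec{\alpha}}}$. This is exactly the hypothesis needed to invoke Lemma~\ref{lem:x-i-y-i-ordering}.

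Next, for contradiction, I would assume $|S_Y| < r$. Since $S_Y = \{\vec{x}_1, \ldots, \vec{x}_r\}$ (as an indexed family), this forces $\vec{x}_i = \vec{x}_j$ for some indices $1 \leq i < j \leq r$, and in particular $\textmd{index}_{\vec{\alpha}}(\vec{x}_i) = \textmd{index}_{\vec{\alpha}}(\vec{x}_j)$. Applying Lemma~\ref{lem:x-i-y-i-ordering} successively for each value $i, i+1, \ldots, j-1$ (all of which lie in the range $[1, r-1]$ since $j \leq r$) yields the chain
\[
\textmd{index}_{\vec{\alpha}}(\vec{x}_i) < \textmd{index}_{\vec{\alpha}}(\vec{y}_i) < \textmd{index}_{\vec{\alpha}}(\vec{x}_{i+1}) < \cdots < \textmd{index}_{\vec{\alpha}}(\vec{y}_{j-1}) < \textmd{index}_{\vec{\alpha}}(\vec{x}_{j}),
\]
so $\textmd{index}_{\vec{\alpha}}(\vec{x}_i) < \textmd{index}_{\vec{\alpha}}(\vec{x}_j)$, contradicting the equality above. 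Hence $|S_Y| = r$.

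I do not expect any real obstacle here; the only subtlety is being careful that $\vec{\alpha}$ is allowed to be infinite (if $\alpha$ is infinite), but each individual point in $Y \cup S_Y$ is nevertheless placed at some finite index of $\vec{\alpha}$, so $\textmd{index}_{\vec{\alpha}}$ is well-defined on those points, and the strict-inequality chain from Lemma~\ref{lem:x-i-y-i-ordering} goes through verbatim.
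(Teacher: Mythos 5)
Your proposal is correct and takes essentially the same approach the paper intends: the paper states the observation without a formal proof, simply pointing to Lemma~\ref{lem:x-i-y-i-ordering} as the reason, and your argument (choosing a $\mathcal{T}$-producing assembly sequence resulting in $\alpha$, chaining the strict inequalities from Lemma~\ref{lem:x-i-y-i-ordering} to separate the indices of $\vec{x}_i$ and $\vec{x}_j$, and concluding distinctness) is the natural and correct way to flesh that one-line justification out. The handling of the $r=1$ case and the remark about well-definedness of $\textmd{index}_{\vec{\alpha}}$ on finitely-placed points are fine as well.
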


In the following, we define the TAS that corresponds to a pair of competing paths competing for a POC.

\begin{definition} % def. competition
\label{def:competition}
For all integers $1 \leq i \leq r$, let $\mathcal{C}_i = \left( T, \sigma_i, 1 \right)$ be the TAS with $\sigma_i=\left\{\left(\vec{x}_i,\alpha\left( \vec{x}_i \right)\right)\right\}$.
We say that $\mathcal{C}_i$ is \emph{competition} $i$ in $\mathcal{T}$ associated with POC $\vec{y}_i$. 
\end{definition}

Note that Definition~\ref{def:competition} defines each competition
$\mathcal{C}$ as the TAS $\mathcal{T}$ with a different seed
tile. This allows us to reason about the behavior of $\mathcal{C}$ in
the absence of any other tiles, such as, e.g., a simple path of tiles
from the seed of $\mathcal{T}$ to the starting point of $\mathcal{C}$
that may or may not block one of the competing paths of
$\mathcal{C}$. Importantly, Definition~\ref{def:competition} does not
restrict the set of $\mathcal{C}$-producible assemblies to only those
that place tiles on, and only, in the domains of the associated
competing paths. In fact, it is always possible for a
$\mathcal{C}$-producible assembly to be the result of a
$\mathcal{C}$-producing assembly sequence that self-assembles
backwards from its seed tile to the seed tile of
$\mathcal{T}$. However, in the following definition, we will characterize a
subset of ``winning'' $\mathcal{C}$-producible assemblies that place
tiles on every point in the winning path and possibly a prefix of the
losing path. We will later define
(see~Definition~\ref{def:competition-probability}) the probability of
  $\mathcal{C}$ only considering $\mathcal{C}$-producing assembly
  sequences that not only result in a ``winning''
  $\mathcal{C}$-producible assembly but are also comprised of tile
  attachment steps that faithfully follow the competing
  paths. Moreover, defining $\mathcal{C}$ to merely be $\mathcal{T}$
  with a different seed tile, instead of, for example, defining
  $\mathcal{C}$ with a new tile type not in the tile set of $\mathcal{C}$,
  will allow for much greater notational convenience when (in
  Section~\ref{sec:proof}) we  1) embed $\mathcal{C}$-producing
  assembly sequences within $\mathcal{T}$-producing assembly sequences
  and 2) identify within $\mathcal{T}$-producing assembly sequences
  corresponding $\mathcal{C}$-producing assembly sequences.

\begin{definition} % def. winning assembly
\label{def:winning-path-assembly-competition}
Let $\vec{y} \in Y$ with corresponding competing paths $\pi$ and $\pi'$, $\mathcal{C}$ be the competition in $\mathcal{T}$ associated with $\vec{y}$, and $\gamma \in \mathcal{A}[\mathcal{C}]$ such that:
\begin{enumerate}
	\item \label{def:wp-1} $\pi$ is a path in $G^{\textmd{b}}_{\gamma}$,
	\item \label{def:wp-2} $\gamma \sqsubseteq \alpha$, and

        \item \label{def:wp-3} every simple path in  $G^{\textmd{b}}_{\gamma}$ that starts at $\vec{x}$ is a prefix of either $\pi$ or $\pi'$.
\end{enumerate}
We say that $\pi$ is the \emph{winning path for} $\mathcal{C}$,  $\pi'$ is the \emph{losing path for} $\mathcal{C}$, and that $\gamma$ is a \emph{winning assembly} for $\mathcal{C}$.
Let $W_{\mathcal{C}}$ denote the set of all winning assemblies for $\mathcal{C}$. Any $\mathcal{C}$-producing assembly sequence $\vec{\gamma}$ that terminates at $\vec{y}$ and whose result is in $ W_{\mathcal{C}}$ is a \emph{winning assembly sequence} for $\mathcal{C}$.
\end{definition}

Consider again the assembly $\alpha_{4,9}$, shown in
    Figure~\ref{fig:def-example-tas}(f). Let $\alpha'_{4,9}$
    be the subassembly of $\alpha_{4,9}$ with only the $s$ and 4 tile
    types attached.  Let $\alpha''_{4,9}$ be the subassembly of $\alpha_{4,9}$
    with only the $s$, 1, and 4 tile types attached.  If
    $w'=\{(\vec{y}_1,4),(\vec{y}_2,9)\}$, then both $\alpha'_{4,9}$ and
    $\alpha''_{4,9}$ are winning assemblies for $\mathcal{C}_1$ in
    $\mathcal{T}$ relative to $w'$. In this example, $\pi_1$, shown in
    Figure~\ref{fig:def-example-tas-POC}, is the
    winning path for $\mathcal{C}_1$.

    \begin{figure}[!h]

      \centerline{\includegraphics[width=5in]{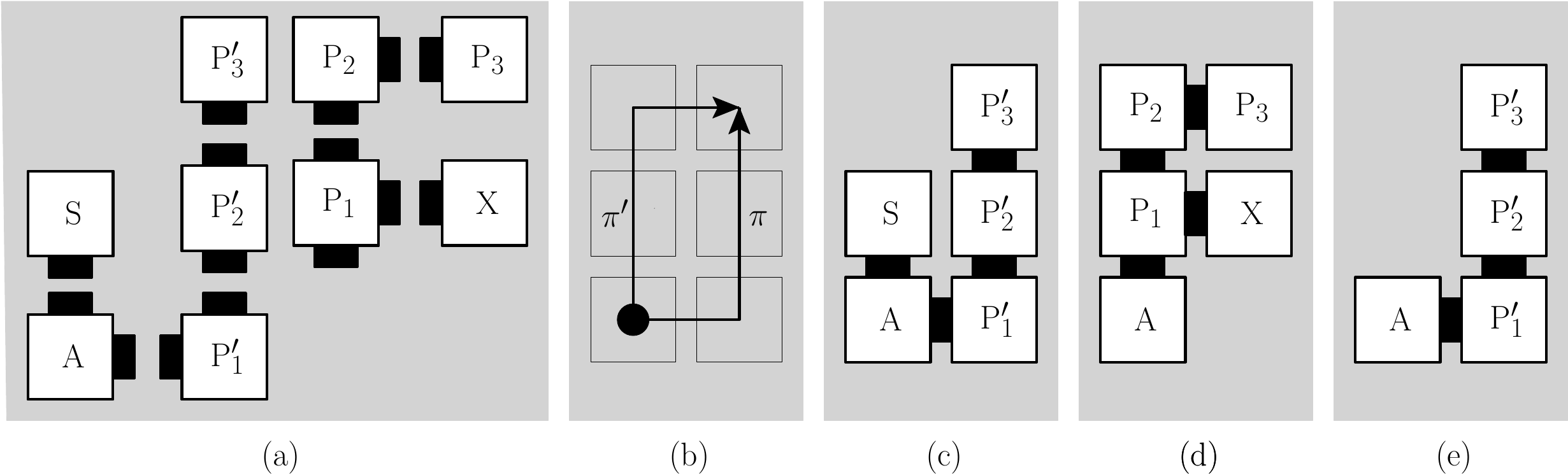}}

    \caption{\label{fig:weird-competition} A sample TAS
      $\mathcal{T}=(T,\sigma,1)$ that is sequentially
      non-deterministic but whose only competition $\mathcal{C}$ is
      such that there exists a $\mathcal{C}$-producing assembly
      sequence in which $\pi'$ blocks $\pi$ via the placement of the X
      tile. Note that such a placement is only possible in a
      $\mathcal{C}$-producing assembly sequence and not in any
      $\mathcal{T}$-producing assembly sequence. If the placement of
      the X tile were possible in some $\mathcal{T}$-producing
      assembly sequence, then condition~\ref{def:cp-4a} of Definition~\ref{def:competing-paths} would be
      violated. (a) The tile set $T$ with S
      being the seed tile (b) The two paths competing for the
      top-right corner point in $\mathcal{T}$, from the bottom-left
      corner point (c) The unique $\mathcal{T}$-terminal assembly (d)
      One $\mathcal{C}$-producible assembly that follows $\pi'$ but
      also contains the X tile that blocks $\pi$. Since $\pi'$ is
      always blocked in the $\mathcal{T}$-terminal assembly, this
      $\mathcal{C}$-producible assembly can never be a subassembly of
      any $\mathcal{T}$-producible assembly (e) One
      $\mathcal{C}$-producible assembly that follows $\pi$ and that
      is never blocked in any $\mathcal{T}$-producible assembly
      since the X tile is always absent}
\end{figure}

    \begin{figure}[!h]

      \centerline{\includegraphics[width=6in]{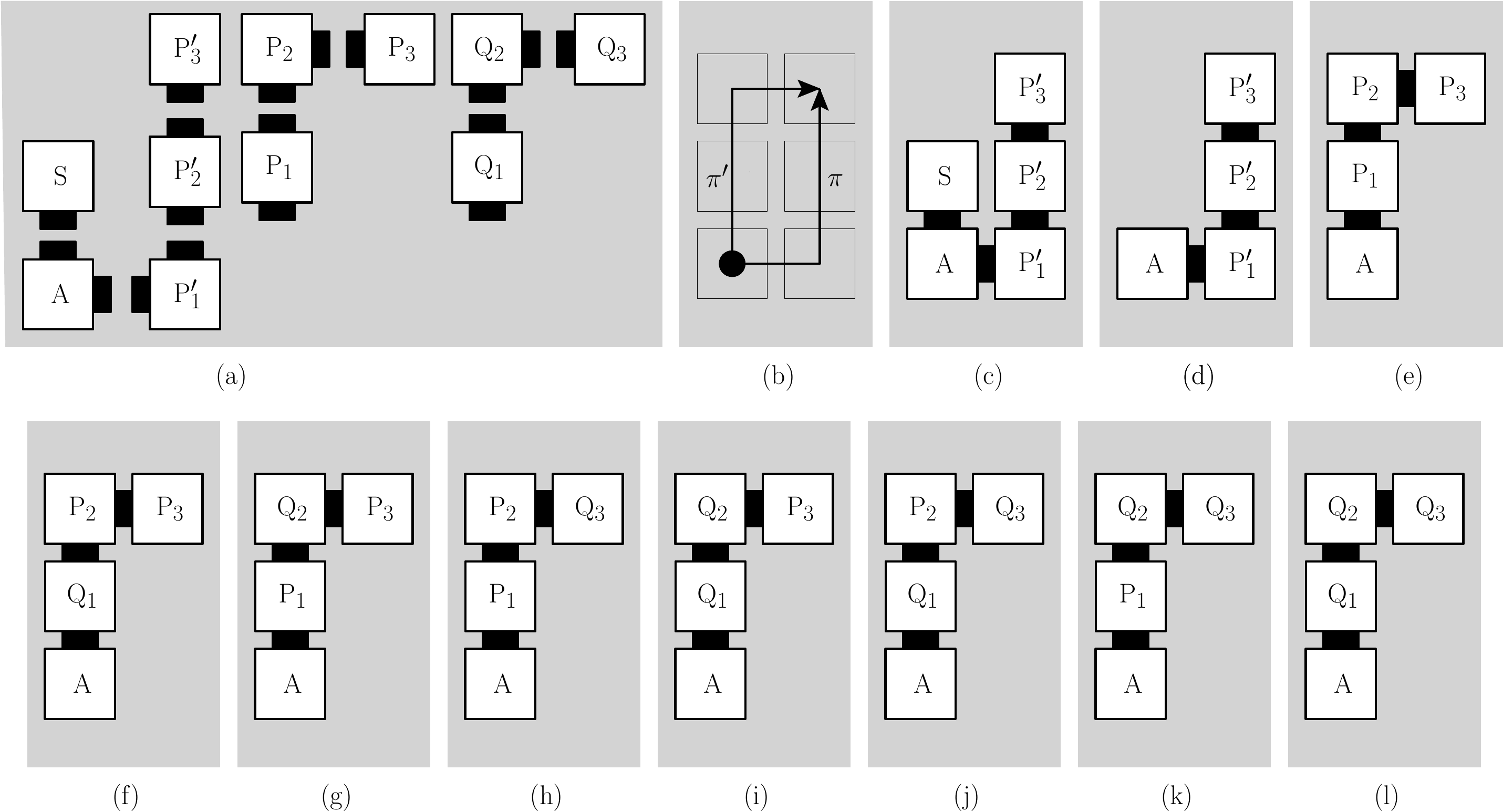}}
      
    \caption{\label{fig:weird-competition2} A sample TAS
      $\mathcal{T}=(T,\sigma,1)$ that is sequentially
      non-deterministic even though its only competition $\mathcal{C}$
      exhibits a form of non-determinism that is not allowed in our
      framework since there are exactly two distinct tile types,
      namely the P$_i$ and Q$_i$ tiles, that may attach at point
      $\pi'[i+1]$, for $i \in \{1,2,3\}$. Note that this
      non-determinism is not exhibited in $\mathcal{T}$ because $\pi'$
      is blocked in every $\mathcal{T}$-producible assembly sequence. 
      If this were not the case, then $\mathcal{T}$ would violate condition~\ref{def:snd-1} of Definition~\ref{def:seq-non-deterministic}.
      (a) The tile set $T$
      with S being the seed tile (b) The two paths competing for the
      top-right corner point in $\mathcal{T}$, from the bottom-left
      corner point (c) The unique $\mathcal{T}$-terminal assembly (d)
      The unique $\mathcal{C}$-producible assembly that follows $\pi$
      (e) through (l) The eight $\mathcal{C}$-producible assemblies
      that follow $\pi'$. Since $\pi'$ is always blocked in the
      $\mathcal{T}$-terminal assembly, none of these
      $\mathcal{C}$-producible assemblies can ever be a subassembly of
      any $\mathcal{T}$-producible assembly }
\end{figure}

Figure~\ref{fig:weird-competition} describes a TAS that is
sequentially non-deterministic even though the losing competing path
$\pi'$, associated with its unique competition blocks the winning
path $\pi$. Our framework handles such cases correctly because
the losing path is blocked in any $w$-correct terminal assembly,
thereby always preventing the X tile from attaching.
Condition~\ref{def:wp-3} in
Definition~\ref{def:winning-path-assembly-competition} ensures that
any $\mathcal{C}$-producing assembly sequence that places the X tile
does not qualify as a winning assembly sequence for $\mathcal{C}$.

Figure~\ref{fig:weird-competition2} describes a TAS that is
sequentially non-deterministic even though its only competition
exhibits, along its losing path, a form of non-determinism that
violates both the directional and agreeable determinism
conditions. Our framework handles such cases correctly because the
losing path is blocked in any $w$-correct terminal assembly, thereby
always preventing the unwanted non-determinism.
Condition~\ref{def:wp-2} in
Definition~\ref{def:winning-path-assembly-competition} ensures that
any $\mathcal{C}$-producing assembly sequence that places a tile type
at any point in $\pi'$ that is not the same tile type that $\alpha$ places at that point does not qualify as a winning assembly sequence for $\mathcal{C}$.

  \goingforward{assume $\vec{y} \in Y$ with corresponding starting point $\vec{x} \in S_Y$, $\vec{\beta}_{\vec{x}}$ is a  $\mathcal{T}$-producing sequence that terminates at $\vec{x}$ and whose result $\beta_{\vec{x}}$ is a subassembly of $\alpha$, and $\mathcal{C}$ is the competition in $\mathcal{T}$ associated with $\vec{y}$ with corresponding winning and losing paths $\piw$ and $\pil$, respectively.}

  The following definition describes the notion that a competition need not be ``fair'' relative to a starting assembly.
\begin{definition}  % def. rigged by sequence
\label{def:competition-rigged}
We say that $\mathcal{C}$ is rigged by $\vec{\beta}_{\vec{x}}$ if there exists an integer $1 < l < \left| \pil\right|$
such that:
\begin{enumerate}
	\item \label{def:rigged-1} $\pil[l] \in \dom{\beta_{\vec{x}}}$ and
	\item \label{def:rigged-2} every simple path from $\vec{s}$ to $\vec{x}$ in $G^{\textmd{b}}_{\beta_{\vec{x}}}$ goes through  $\pil[l]$.
\end{enumerate}
\end{definition}

In a competition rigged by $\vec{\beta}_{\vec{x}}$,  $\pil$ is ``blocked'' in $\vec{\beta}_{\vec{x}}$, in the sense implied by the following lemma.

\begin{lemma} % lemma 6
\label{lem:rigged}
If $\mathcal{C}$ is rigged by $\vec{\beta}_{\vec{x}}$ then, for every $\mathcal{T}$-producing sequence $\vec{\beta}$ that extends $\vec{\beta}_{\vec{x}}$, $\pil$  cannot be a simple path in $\bindinggraph_{\res{\vec{\beta}}}$.
\end{lemma}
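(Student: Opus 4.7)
The plan is to argue by contradiction. Suppose $\mathcal{C}$ is rigged by $\vec{\beta}_{\vec{x}}$ and that some $\mathcal{T}$-producing extension $\vec{\beta}$ of $\vec{\beta}_{\vec{x}}$ makes $\pil$ a simple path in $\bindinggraph_{\res{\vec{\beta}}}$. The strategy is to combine the rigging obstruction with conditions~\ref{def:cp-2}, \ref{def:cp-4b}, and~\ref{def:cp-5} of Definition~\ref{def:competing-paths} to derive a contradiction. Fix the witness index $l$ with $1 < l < |\pil|$ granted by Definition~\ref{def:competition-rigged} (this already forces $|\pil| \geq 3$), take any simple path $p'_0$ from $\vec{s}$ to $\vec{x}$ in $\bindinggraph_{\beta_{\vec{x}}}$, and let $p''$ be the prefix of $p'_0$ ending at $\pil[l]$. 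This $p''$ is well-defined because condition~\ref{def:rigged-2} forces $p'_0$ to visit $\pil[l]$, and $p''$ avoids $\vec{x} = \pil[1]$ because $p'_0$ is simple with $\vec{x}$ as its last vertex.

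Because $\beta_{\vec{x}} \sqsubseteq \res{\vec{\beta}}$, $p''$ remains a simple path from $\vec{s}$ to $\pil[l]$ in $\bindinggraph_{\res{\vec{\beta}}}$. Applying condition~\ref{def:cp-4b} with $p = \pil$ at index $l$ and $\alpha = \res{\vec{\beta}}$ yields that $p''$ is a prefix of every simple path from $\vec{s}$ to $\vec{x}$ in $\bindinggraph_{\res{\vec{\beta}}}$; in particular every such path traverses $\pil[l]$. Call this fact $(\star)$. Now concatenate $p''$ with the tail $\pil[l+1], \pil[l+2], \ldots, \pil[|\pil|] = \vec{y}$, which is a valid walk in $\bindinggraph_{\res{\vec{\beta}}}$ because $\pil$ is assumed to be simple there. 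Before extracting a simple path, I argue that $p''$ cannot visit $\vec{y}$: otherwise its prefix up to $\vec{y}$ would be a simple path from $\vec{s}$ to $\vec{y}$ in $\bindinggraph_{\res{\vec{\beta}}}$ avoiding $\vec{x}$, contradicting condition~\ref{def:cp-5} (every such path must have $\piw$ or $\pil$ as a suffix, and both contain $\vec{x}$). Consequently any simple path $q$ obtained from the walk by removing cycles still ends with the edge $\pil[|\pil|-1] \to \vec{y}$ inherited from the walk.

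Finally, apply condition~\ref{def:cp-5} to $q$: it must have either $\pil$ or $\piw$ as a suffix. In the $\pil$-suffix case the prefix of $q$ from $\vec{s}$ to $\vec{x} = \pil[1]$ is a simple $\vec{s}$-to-$\vec{x}$ path that by $(\star)$ already passes through $\pil[l]$; since $\pil[l]$ also lies strictly inside the trailing copy of $\pil$ (because $1 < l < |\pil|$), $q$ visits $\pil[l]$ twice, contradicting simplicity. In the $\piw$-suffix case, the penultimate vertex of $q$ is both $\piw[|\piw|-1] \in \dom{\piw}$ (from the suffix) and $\pil[|\pil|-1] \in \dom{\pil}$ (from the tail of the walk); but $\pil[|\pil|-1] \notin \{\vec{x}, \vec{y}\}$ since $\pil$ is simple and $|\pil| \geq 3$, so condition~\ref{def:cp-2} forbids it from lying in $\dom{\piw}$, contradicting the equality. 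Both branches fail, so $\pil$ cannot be a simple path in $\bindinggraph_{\res{\vec{\beta}}}$. The step I expect to be the main obstacle is the auxiliary claim $\vec{y} \notin \dom{p''}$; this is what guarantees that the last edge of the constructed walk survives simple-path extraction, and without it the identification of $q$'s penultimate vertex with $\pil[|\pil|-1]$ in the final case analysis would fail.
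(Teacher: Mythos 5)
Your proof is correct, and at its core it follows the same strategy as the paper's: construct from the rigging obstruction and the tail of $\pil$ a simple path from $\vec{s}$ to $\vec{y}$ in $\bindinggraph_{\res{\vec{\beta}}}$, then use condition~\ref{def:cp-5} of Definition~\ref{def:competing-paths} to obtain a contradiction. The details differ, however, and your route is longer than necessary. The paper picks $\vec{p} = \pil[i]$ where $i$ is the \emph{largest} index of a $\pil$-point on the blocking path $p_b$; this maximality makes the concatenation of the prefix of $p_b$ (ending at $\vec{p}\,$) with the suffix of $\pil$ (starting at $\vec{p}\,$) \emph{automatically simple}, so the walk-to-simple-path extraction, the auxiliary claim that the last edge survives that extraction, and the claim $\vec{y}\notin\dom{p''}$ are all avoided. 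You instead work with the specific rigging index $l$, which forces the extra bookkeeping. More significantly, the paper also observes directly that the constructed path avoids $\vec{x}$ (its two pieces live strictly before and strictly after the single occurrence of $\vec{x}$ in $p_b$ and $\pil$, respectively), which immediately contradicts condition~\ref{def:cp-5}. Your walk also avoids $\vec{x}$ --- $p''$ does, as you note, and the tail $\pil[l+1\ldots|\pil|]$ does too since $l+1\geq 3$ and $\pil$ is simple --- so the extracted $q$ avoids $\vec{x}$ and can have neither $\piw$ nor $\pil$ as a suffix. Noticing this would let you discard the entire case analysis, including the invocation of condition~\ref{def:cp-4b} (your fact~$(\star)$), which the paper never needs. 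Both branches of your case split are in fact vacuous because the hypothesis ($\vec{x}\in\dom{q}$) is never realized; the intermediate reasoning inside each branch is therefore sound only in the vacuous sense. Nothing is wrong, but the argument would be tighter --- and closer to the paper's --- if you used either the maximal-index selection or the observation that $q$ avoids $\vec{x}$.
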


\begin{proof}
  Assume, for the sake of obtaining a contradiction, that
  $\vec{\beta}_{\vec{x}}$ can be extended to a $\mathcal{T}$-producing
  sequence $\vec{\beta}$ such that $\res{\vec{\beta}}= \beta$ and
  $\pil$ is a simple path in
  $\bindinggraph_{\beta}$. Since $\mathcal{C}$ is rigged by
  $\vec{\beta}_{\vec{x}}$, there exists a simple blocking path $p_b$
  from $\vec{s}$ to $\vec{x}$ in $\bindinggraph_{\beta}$ that goes
  through at least one point in $\pil$ that is neither
  $\vec{x}$ nor $\vec{y}$. Let $i = \max\left\{ \,j \left| \;
  \pil[\,j] \in \dom{p_b} \right. \right\}$ and
  $\vec{p} = \pil[i]$. In words, $\vec{p}$ is the point
  in $p_b$ that is the closest point to $\vec{y}$ in
  $\pil$. Let the path $p_1$ be the prefix of $p_b$
  that (starts at $\vec{s}$ and) ends at $\vec{p}$. Note that $\vec{x}
  \notin \dom p_1$ and $\vec{y} \notin \dom p_1$. Now, by construction
  of $\vec{\beta}$, there also exists a simple path $p_2$ from
  $\vec{p}$ to $\vec{y}$ in $\bindinggraph_{\beta}$ that is a proper
  suffix of $\pil$. From the way $\vec{p}$ was
  selected, it follows that $\dom{p_1} \cap \dom{p_2} =
  \{\vec{p}\,\}$.  Since $p_1$ is a simple path in
  $\bindinggraph_{\beta}$ that ends at $\vec{p}$, $p_2$ is a simple
  path in $\bindinggraph_{\beta}$ that starts at $\vec{p}$, and
  $\vec{p}$ is the only point that belongs to the intersection of
  $\dom{p_1}$ and $\dom{p_2}$, if follows that the path $p$ obtained
  by concatenating $p_1$ and $p_2$ is a simple path from $\vec{s}$ to
  $\vec{y}$ in $\bindinggraph_{\beta}$. Since $p_1$ is a prefix of
  $p_b$ that ends at $\vec{p} \neq \vec{x}$, $\vec{x} \notin
  \dom{p_1}$. Since $p_2$ is a suffix of $\pil$ that
  starts at $\vec{p} \neq \vec{x}$, $\vec{x} \notin
  \dom{p_2}$. Thus, $\vec{x} \notin \dom{p}$. Therefore, neither
  $\piw$ nor $\pil$ is a suffix of $p$
  which, together with the fact that $p$ is a simple path from
  $\vec{s}$ to $\vec{y}$ in the binding graph of the
  $\mathcal{T}$-producible assembly $\beta$, contradicts
  condition~\ref{def:cp-5} in Definition~\ref{def:competing-paths}.
\end{proof}

By the definitions of $\alpha$ and $\vec{\beta}_{\vec{x}}$, the latter may not place a tile at any point on the winning path except for $\vec{x}$. So if $\mathcal{C}$ is not rigged by $\vec{\beta}_{\vec{x}}$, then the following result holds.
\begin{lemma} % lemma 7
\label{lem:not-rigged}
If $\mathcal{C}$ is not rigged by $\vec{\beta}_{\vec{x}}$, then $\dom{\beta_{\vec{x}}} \cap \left( \dom{\piw} \cup \dom{\pil} \right) = \left\{ \vec{x} \right\}$. 
\end{lemma}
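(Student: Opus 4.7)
The plan is to argue by contradiction. First observe that $\vec{x} \in \dom{\beta_{\vec{x}}}$ because $\vec{\beta}_{\vec{x}}$ terminates at $\vec{x}$, and $\vec{x} \in \dom{\piw} \cap \dom{\pil}$ by condition~\ref{def:cp-2} of Definition~\ref{def:competing-paths} (both competing paths start at $\vec{x}$); hence $\{\vec{x}\} \subseteq \dom{\beta_{\vec{x}}} \cap (\dom{\piw} \cup \dom{\pil})$. For the reverse inclusion, suppose for contradiction that some $\vec{q} \neq \vec{x}$ lies in the intersection. If $\vec{\beta}_{\vec{x}}$ consists only of the seed (so $\vec{x} = \vec{s}$ and $\dom{\beta_{\vec{x}}} = \{\vec{x}\}$) the claim is immediate, so assume otherwise and let $\beta_{\vec{x}}'$ denote the $\mathcal{T}$-producible assembly obtained from $\beta_{\vec{x}}$ by removing the tile placed at $\vec{x}$ in the final step of $\vec{\beta}_{\vec{x}}$. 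Then $\vec{x} \notin \dom{\beta_{\vec{x}}'}$, $\vec{s} \in \dom{\beta_{\vec{x}}'}$, and $\beta_{\vec{x}}'$ is connected. Since $\dom{\piw} \cap \dom{\pil} = \{\vec{x}, \vec{y}\}$, the hypothetical $\vec{q}$ is either $\vec{y}$, an intermediate point of $\pil$, or an intermediate point of $\piw$.

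For $\vec{q} = \vec{y}$: by condition~\ref{def:snd-4} of Definition~\ref{def:seq-non-deterministic}, $\vec{y} \in P$ and $\vec{x} \notin P$, so $\vec{y} \neq \vec{x}$ and therefore $\vec{y} \in \dom{\beta_{\vec{x}}'}$. The connectivity of $\beta_{\vec{x}}'$ yields a simple path from $\vec{s}$ to $\vec{y}$ in $G^{\textmd{b}}_{\beta_{\vec{x}}'}$ that avoids $\vec{x}$, contradicting condition~\ref{def:cp-5} (which forces every such path to have $\piw$ or $\pil$ as a suffix, and both of those begin at $\vec{x}$). For $\vec{q} = \pil[l]$ with $1 < l < |\pil|$: the same connectivity argument produces a simple path $p'$ from $\vec{s}$ to $\pil[l]$ in $G^{\textmd{b}}_{\beta_{\vec{x}}'} \subseteq G^{\textmd{b}}_{\beta_{\vec{x}}}$ that avoids $\vec{x}$; condition~\ref{def:cp-4b} applied with $p = \pil$ in the $\mathcal{T}$-producible assembly $\beta_{\vec{x}}$ then forces $p'$ to be a prefix of every simple path from $\vec{s}$ to $\vec{x}$ in $G^{\textmd{b}}_{\beta_{\vec{x}}}$, so every such path passes through $\pil[l]$. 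This exhibits $\mathcal{C}$ as rigged by $\vec{\beta}_{\vec{x}}$ in the sense of Definition~\ref{def:competition-rigged}, contradicting the hypothesis.

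The remaining case, $\vec{q}$ an intermediate point of $\piw$, is the one that the paragraph preceding the lemma dispatches ``by the definitions of $\alpha$ and $\vec{\beta}_{\vec{x}}$''. To make this rigorous I plan to apply condition~\ref{def:cp-4b} again, now with $p = \piw$, to conclude that every simple path from $\vec{s}$ to $\vec{x}$ in $G^{\textmd{b}}_{\beta_{\vec{x}}}$ passes through $\piw[l]$; then, choosing $l$ to be maximal in $(1, |\piw|)$ with $\piw[l] \in \dom{\beta_{\vec{x}}}$, I invoke condition~\ref{def:cp-6b} as follows. Reversing the suffix from $\piw[l]$ to $\vec{x}$ of any such simple path yields a simple path from $\vec{x}$ to $\piw[l]$ in $G^{\textmd{b}}_{\beta_{\vec{x}}}$; concatenated with $\piw[l..|\piw|]$ (whose interior points and endpoint $\vec{y}$ are excluded from $\dom{\beta_{\vec{x}}}$ by the maximality of $l$ and the preceding handling of $\vec{y}$), it produces a simple path from $\vec{x}$ to $\vec{y}$ that can be realized by a $\mathcal{T}$-assembly sequence seeded at $\{(\vec{x}, \alpha(\vec{x}))\}$ by placing the tiles of $\alpha$ along the path in order. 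Condition~\ref{def:cp-6b} then forces this path to equal $\piw$ (it cannot be $\pil$ by condition~\ref{def:cp-2}), giving $\piw[2..l] \subseteq \dom{\beta_{\vec{x}}}$ and pinning down every simple path from $\vec{s}$ to $\vec{x}$ in $G^{\textmd{b}}_{\beta_{\vec{x}}}$ to end with $\piw[l], \piw[l-1], \ldots, \piw[2], \vec{x}$. Closing out the contradiction from here --- using that $\vec{x}$ is the final tile placed in $\vec{\beta}_{\vec{x}}$, so $\piw[2] \in \dom{\beta_{\vec{x}}'}$ must have attached via a non-$\vec{x}$ neighbor while the forced structure of paths into $\vec{x}$ pins down $\vec{x}$'s own attachment neighbor to be exactly $\piw[2]$ --- is the most delicate final step and is where I anticipate the main difficulty.
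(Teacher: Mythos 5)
Your treatments of $\vec{q}=\vec{y}$ and of $\vec{q}$ an interior point of $\pil$ are both correct, and the latter is in fact a streamlined version of the paper's argument: by extracting the $\vec{x}$-avoiding path from the penultimate assembly $\beta_{\vec{x}}'$, you collapse the paper's two-subcase split (on whether $p'$ passes through $\vec{x}$) into a single application of condition~\ref{def:cp-4b} followed directly by the rigging definition.

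The gap you flag in the $\piw$ case is genuine, and the reason it cannot be closed as you sketch is that the information extracted via conditions~\ref{def:cp-4b} and~\ref{def:cp-6b} --- that every simple $\vec{s}$-to-$\vec{x}$ path in $G^{\textmd{b}}_{\beta_{\vec{x}}}$ ends with $\piw[l],\ldots,\piw[2],\vec{x}$ --- does not by itself conflict with $\vec{x}$ being the last tile placed. That structural picture is realized if $\piw[j]$ attaches to $\piw[j+1]$ for $2\leq j<l$, $\piw[l]$ attaches to something on the $\vec{s}$-side of $p'$, and $\vec{x}$ attaches last to $\piw[2]$; a constraint on the shape of binding-graph paths is not an attachment-order constraint, so the ``most delicate final step'' you describe has nothing to push against. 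The hypothesis you are under-using is $\beta_{\vec{x}}\sqsubseteq\alpha$. Your $\vec{x}$-avoiding path $p'$ from $\vec{s}$ to $\piw[l]$ is also a simple path in $G^{\textmd{b}}_{\alpha}$, so condition~\ref{def:cp-4b} applied in $\alpha$ makes $p'$ a prefix of every simple path from $\vec{s}$ to $\vec{x}$ in $G^{\textmd{b}}_{\alpha}$. But $\alpha$ is $w$-correct and $Y\subseteq\dom{\alpha}$, so (as invoked in the proof of Lemma~\ref{lem:number-of-winning-assemblies}) condition~\ref{def:cp-5} gives a simple path $q$ from $\vec{s}$ to $\vec{y}$ in $G^{\textmd{b}}_{\alpha}$ whose suffix is $\piw$. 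The prefix of $q$ ending at $\vec{x}$ must then pass through $\piw[l]$, yet $\piw[l]$ also appears in the suffix $\piw$ because $1<l<\left|\piw\right|$, so $q$ would visit $\piw[l]$ twice and fail to be simple. This short contradiction replaces the entire condition~\ref{def:cp-6b} detour, and it is precisely the content of the assertion ``by the definitions of $\alpha$ and $\vec{\beta}_{\vec{x}}$, the latter may not place a tile at any point on the winning path except for $\vec{x}$'' that the paper leaves unargued in the paragraph preceding the lemma.
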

\begin{proof}
Since $\mathcal{C}$ is not rigged by $\vec{\beta}_{\vec{x}}$, then either condition~\ref{def:rigged-1} or condition~\ref{def:rigged-2} of Definition~\ref{def:competition-rigged} does not hold.
If the former does not hold for any $l$, then the conclusion of this lemma follows immediately.
Note that condition~\ref{def:rigged-2} of Definition~\ref{def:competition-rigged} cannot hold if condition~\ref{def:rigged-1} does not. 
Therefore, assume for the sake of obtaining a contradiction that condition~\ref{def:rigged-1} of Definition~\ref{def:competition-rigged} holds but condition~\ref{def:rigged-2} does not.
%
%Let $\beta_{\vec{x}} = \res{\vec{\beta}_{\vec{x}}}$.
%
Thus there exists a simple path $p$ from $\vec{s}$ to $\vec{x}$ in $G^{\textmd{b}}_{\beta_{\vec{x}}}$ that does not go through $\pil[l]$. 
Let $p'$ be any simple path from $\vec{s}$ to $\pil[l]$ in $G^{\textmd{b}}_{\beta_{\vec{x}}}$.
Such a path exists because condition~\ref{def:rigged-1} of Definition~\ref{def:competition-rigged} holds. 
Then, either $p'$ goes through $\vec{x}$ or it does not.
\begin{enumerate}
\item If $p'$ does not go through $\vec{x}$, then by condition~\ref{def:cp-4b} of Definition~\ref{def:competing-paths},  $p'$ must be a prefix of $p$, which contradicts the facts that $\pil[l] \in \dom{p'}$ and $\pil[l] \not \in \dom{p}$.

	\item If $p'$ goes through $\vec{x}$, then by condition~\ref{def:cp-4a} of Definition~\ref{def:competing-paths}, $p[1\ldots  l]$ must be a suffix of $p'$ in $G^{\textmd{b}}_{\beta_{\vec{x}}}$, which contradicts the fact that $\vec{\beta}_{\vec{x}}$ terminates at $\vec{x}$. 
\end{enumerate}
Since a contradiction arises in both cases, it follows that condition~\ref{def:rigged-1} of Definition~\ref{def:competition-rigged} cannot hold. 
\end{proof}

\begin{definition}  % def. rigged in T
\label{def:rigged}
We say that $\mathcal{C}$ is \emph{rigged in} $\mathcal{T}$ if it is
rigged by every $\mathcal{T}$-producing sequence that terminates at
$\vec{x}$ and whose result is a subassembly of $\alpha$.
\end{definition}
%
%
%
%
%
%
%
%

%\color{red}
The next lemma counts the number of winning assemblies for a given competition that is not rigged, which we will eventually use (Lemma~\ref{lem:domain-competition}) to count the number of winning assembly sequences for the competition and derive the probability of a winning assembly sequence for the competition. 
%\color{black}

\begin{lemma} % lemma 8
  \label{lem:number-of-winning-assemblies} Let $\vec{y} \in Y$ with corresponding competing paths $\piw$ and $\pil$ and starting point $\vec{x}$, and $\mathcal{C}$ be the competition in $\mathcal{T}$ associated with $\vec{y}$. 
  If  $\mathcal{C}$ is not rigged in $\mathcal{T}$ and $\piw$ and $\pil$ have length $l \geq 2$ and $l' \geq 2$, respectively, then $\left|W_{\mathcal{C}}\right| = l' - 1$.
\end{lemma}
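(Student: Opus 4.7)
The plan is to establish a bijection between $W_{\mathcal{C}}$ and the index set $\{1, 2, \ldots, l'-1\}$ by defining, for each such $k$, a candidate winning assembly $\gamma_k$ obtained by restricting $\alpha$ to the domain $\dom{\piw} \cup \{\pil[1], \pil[2], \ldots, \pil[k]\}$. These $l'-1$ candidates are pairwise distinct because $|\dom{\gamma_k}| = l + k - 1$, which is strictly increasing in $k$ (using condition~\ref{def:cp-2} of Definition~\ref{def:competing-paths} to conclude that $\piw$ and $\pil$ share only $\vec{x}$ and $\vec{y}$, and that $\vec{y} \notin \dom{\pil[1 \ldots k]}$ for $k < l'$).

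Next, I would verify that each $\gamma_k$ satisfies the three conditions of Definition~\ref{def:winning-path-assembly-competition}. Conditions~\ref{def:wp-2} ($\gamma_k \sqsubseteq \alpha$) and~\ref{def:wp-1} ($\piw$ is a path in $G^{\textmd{b}}_{\gamma_k}$) are immediate from the construction, since $\piw$ is a path in $G^{\textmd{b}}_\alpha$ and the binding along $\piw$ is preserved under restriction. To show $\gamma_k \in \mathcal{A}[\mathcal{C}]$, I would apply condition~\ref{def:cp-6a} of Definition~\ref{def:competing-paths} twice --- first to $\piw$, then to $\pil$ --- producing two $\mathcal{C}$-assembly sequences from the seed $\{(\vec{x}, \alpha(\vec{x}))\}$, and concatenating the full first with the first $k-1$ steps of the second. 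The ``not rigged'' assumption, together with Lemma~\ref{lem:not-rigged}, ensures that an extension of the corresponding $\vec{\beta}_{\vec{x}}$ along $\pil[2 \ldots l'-1]$ remains a subassembly of $\alpha$, so the tile types placed in the $\pil$ portion of our construction agree with those of $\alpha$. The most delicate piece is condition~\ref{def:wp-3} --- that every simple path in $G^{\textmd{b}}_{\gamma_k}$ from $\vec{x}$ is a prefix of $\piw$ or $\pil$. For this, I would show that $G^{\textmd{b}}_{\gamma_k}$ contains no edges other than those consecutive along $\piw$ or along $\pil[1 \ldots k]$: any ``extra'' binding edge in $G^{\textmd{b}}_\alpha$ between non-consecutive points of the two paths (in particular the potentially problematic edge $(\pil[l'-1], \vec{y})$ when $k = l'-1$) would yield a simple path from $\vec{s}$ to a non-endpoint of a competing path that fails to contain the required path-prefix as a suffix, contradicting condition~\ref{def:cp-4a} of Definition~\ref{def:competing-paths}.

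For the converse direction, let $\gamma \in W_{\mathcal{C}}$. Condition~\ref{def:wp-1} forces $\dom{\piw} \subseteq \dom{\gamma}$, and condition~\ref{def:wp-3}, combined with the fact that $\gamma$ is a connected $\mathcal{C}$-producible assembly containing $\vec{x}$, gives $\dom{\gamma} \subseteq \dom{\piw} \cup \dom{\pil}$. A prefix argument on $\dom{\gamma} \cap \dom{\pil}$, using condition~\ref{def:wp-3} again, then shows $\dom{\gamma} \cap \dom{\pil} = \{\pil[1], \ldots, \pil[k]\}$ for some $k \in \{1, \ldots, l'-1\}$ (the case $k = l'$ collapses into $k = l'-1$ since $\pil[l'] = \vec{y}$ is already in $\dom{\piw}$, and the edge from $\pil[l'-1]$ to $\vec{y}$ is absent in $G^{\textmd{b}}_\gamma$ by the same condition~\ref{def:cp-4a} argument as above). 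Combined with $\gamma \sqsubseteq \alpha$, this forces $\gamma = \gamma_k$. The hardest step throughout is the verification of condition~\ref{def:wp-3} at $k = l'-1$: carefully ruling out the potential binding edge between $\pil[l'-1]$ and $\vec{y}$ via the structural conditions of Definition~\ref{def:competing-paths}, since otherwise the count would drop to $l'-2$.
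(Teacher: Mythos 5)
Your proposal is correct and takes essentially the same approach as the paper: restrict $\alpha$ to $\dom{\piw}$ together with a prefix of $\pil$, use the ``not rigged'' hypothesis together with Lemma~\ref{lem:not-rigged} and condition~\ref{def:cp-6a} to produce the extensions, and invoke condition~\ref{def:cp-4a} to rule out the spurious binding edge between $\pil[l'-1]$ and $\vec{y}$. The only noteworthy difference is presentational: you make the counting a genuine bijection with $\{1,\ldots,l'-1\}$ and spell out the converse direction explicitly (every $\gamma \in W_{\mathcal{C}}$ is one of the $\gamma_k$), whereas the paper leaves that converse largely implicit after establishing that the prefixes of $\pil[2\ldots l'-1]$ enumerate the winning assemblies larger than the minimal one. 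One minor imprecision worth tightening: you write $\dom{\gamma}\cap\dom{\pil}=\{\pil[1],\ldots,\pil[k]\}$, but $\vec{y}=\pil[l']$ is always in $\dom{\gamma}$ via $\dom{\piw}$, so that intersection always contains $\vec{y}$; the correct phrasing is that the set of points of $\dom{\pil}$ reachable from $\vec{x}$ along a prefix of $\pil$ in $G^{\textmd{b}}_{\gamma}$ is exactly $\{\pil[1],\ldots,\pil[k]\}$ with $k\leq l'-1$, which is what drives the count.
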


For example, Figure~\ref{fig:def-Wi} depicts the set $W_{\mathcal{C}_1}$ in our running running $\mathcal{T}$, with $w'=\{(\vec{y}_1,4),(\vec{y}_2,9)\}$.

\begin{figure}[!h]

      \centerline{\includegraphics[width=2in]{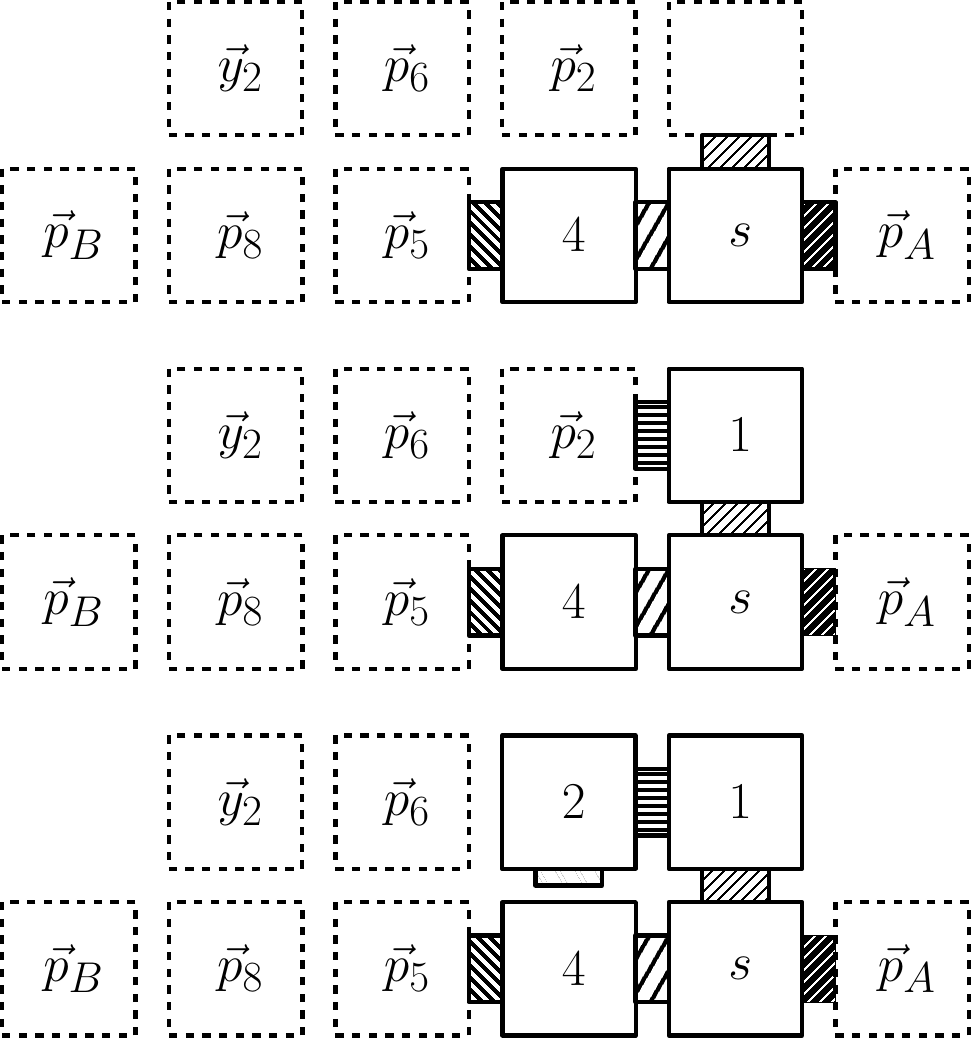}}

      \caption{\label{fig:def-Wi} The three winning assemblies that
        make up the set $W_{\mathcal{C}_1}$ in our
        running running $\mathcal{T}$, with $w'=\{(\vec{y}_1,4),(\vec{y}_2,9)\}$.
        In this example, $\pi_1$ and
        $\pi'_1$, shown in Figure~\ref{fig:def-example-tas-POC},
        have length $l=2$ and
        $l'=4$, respectively. In agreement with Lemma~\ref{lem:number-of-winning-assemblies},
        $|W_{\mathcal{C}_1}| = 3 = l'-1$.  }
    \end{figure}

\begin{proof}
  Since
$\vec{y} \in Y$ and $Y \subseteq \dom{\alpha}$, let $p'$ be any path
from $\vec{s}$ to $\vec{y}$ in $\bindinggraph_{\alpha}$. By
condition~\ref{def:cp-5} in Definition~\ref{def:competing-paths},
either $\piw$ or $\pil$ is a suffix of $p'$. Since $\alpha$ is
$w$-correct, $\piw$ must be a suffix of $p'$. Let $\gamma$ be the
subassembly of $\alpha$ whose domain is the domain of $\piw$.
By construction:\vspace*{-0mm}
\begin{enumerate}[label=\alph*)]
\item \label{item:sizeWc-1} $\gamma \sqsubseteq \alpha$,
\item \label{item:sizeWc-2} $\dom{\gamma} = \dom{\piw}$,
\item \label{item:sizeWc-3} $\piw$ is a path in  $\bindinggraph_{\gamma}$
(because it is a suffix of $p'$),
\item \label{item:sizeWc-4}  $\gamma$ is a
    stable assembly in $\mathcal{A}^T$ (because it is a subassembly of
     the stable, $\mathcal{T}$-producible assembly $\alpha$), and
\item \label{item:sizeWc-5}  $\gamma(\vec{x}) = \alpha(\vec{x})$ (because
of item~\ref{item:sizeWc-1} above).

\end{enumerate}
Therefore, $\gamma$ meets all of the requirements of
  Definition~\ref{def:winning-path-assembly-competition}, as we now
  show.

\begin{itemize}
  \item $\gamma \in \mathcal{A}[\mathcal{C}]$ because of
      items~\ref{item:sizeWc-4} and~\ref{item:sizeWc-5} above,
      combined with the fact that $\alpha\left(\vec{x}\right)$ is, by
      Definition~\ref{def:competition}, the tile type in $T$ placed by
      $\alpha$ at the seed location of $\mathcal{C}$.
      
  \item Condition~\ref{def:wp-1} in
      Definition~\ref{def:winning-path-assembly-competition} is the
      same as item~\ref{item:sizeWc-3} above.

\item Condition~\ref{def:wp-2} in
      Definition~\ref{def:winning-path-assembly-competition} is the
      same as item~\ref{item:sizeWc-1} above.

\item Condition~\ref{def:wp-3} in
      Definition~\ref{def:winning-path-assembly-competition} holds
      because, by item~\ref{item:sizeWc-2} above, $\dom{\gamma}
      = \dom{\piw} \subseteq \dom{\piw} \cup \dom{\pil}$.
\end{itemize}         

Not only is $\gamma$ in $W_{\mathcal{C}}$, but the reasoning above
further implies that $\gamma$ is the unique, smallest assembly that
satisfies the definition of a winning assembly for $\mathcal{C}$
because, if any tile were removed from $\gamma$, then its domain would
become a proper subset of $\dom{\piw}$, which would contradict
condition~\ref{def:wp-1} in
Definition~\ref{def:winning-path-assembly-competition}.
However, assuming $l' > 2$, $\gamma$ is not the only assembly that satisfies the definition of a winning assembly for $\mathcal{C}$, which we now show.
Since $\mathcal{C}$ is not rigged in $\mathcal{T}$, there exists a $w$-correct $\mathcal{T}$-producing assembly sequence $\vec{\beta}_{\vec{x}}=(\beta_1, \beta_2, ..., \beta_m = \beta_{\vec{x}})$ of length $m \in \mathbb{Z}^+$, terminating at $\vec{x}$, and with result $\beta_{\vec{x}}$ such that $\mathcal{C}$ is not rigged in $\vec{\beta}_{\vec{x}}$.
Lemma~\ref{lem:not-rigged} applied to $\mathcal{C}$ and $\vec{\beta}_{\vec{x}}$ implies that $\dom{\beta_{\vec{x}}}~\cap~\dom{\pil} = \left\{ \vec{x} \right\}$, which means that $\vec{\beta}_{\vec{x}}$ can be extended to follow $\pil\left[ 1 \ldots l' - 1 \right]$.
To that end, note that condition~\ref{def:cp-6a} of Definition~\ref{def:competing-paths} implies the existence of a $\mathcal{T}$-assembly sequence $\vec{\beta} = \left( \beta_i \mid 1 \leq i < l' \right)$ with result $\beta$ such that 1) 
$\beta_1 = \left\{ \vec{x}, \alpha\left(\vec{x}\right) \right\}$, 2) for all $1 < i < l'$, $\beta_i = \beta_{i-1} + \left( \pil\left[ i \right], t_i \right)$ with $t_i \in T$, and 3) $\dom{\beta} = \dom{\pil\left[ 1\ldots l' -1 \right]}$.
Importantly, $\vec{\beta}$ attaches tiles along $\pil$ up to, but not including, $\vec{y}$ in the following sense: for $1 < i < l'$, if $\beta_i = \beta_{i-1} + \left( \pil[i], t_i \right)$, then $\vec{u}_{\beta_i \backslash \beta_{i-1}} = \pil[i-1] - \pil[i]$.
Thus $\vec{\beta}$ can be used to construct a corresponding extension $\vec{\beta}'_{\vec{x}} = (\beta'_1=\beta_1, \beta'_2=\beta_2, ..., \beta'_m=\beta_m = \beta_{\vec{x}}, \beta'_{m+1}, ..., \beta'_{m+l'-2})$  of $\vec{\beta}_{\vec{x}}$ such that the result of $\vec{\beta}'_{\vec{x}}$ has a domain equal to $\dom{\beta_{\vec{x}}}~\cup~\dom{\pil[2\ldots l' -1]}$.
Since $\mathcal{T}$ is directionally deterministic, condition~\ref{def:dd-ad-pt-1} of Definition~\ref{def:dd-ad-pt} implies that, for all $m+1 \leq i \leq m+l'-2$, there exists a unique $t_i \in T$ such that $\beta'_i = \beta'_{i-1} + \left( \pil\left[ i \right], t_i \right) = \beta'_{i-1} + \left( \pil\left[ i \right], \alpha\left( \pil\left[ i \right]\right) \right)$.
And this property of $\vec{\beta}'$ implies the following property of $\vec{\beta}$: for all $1 < i < l'$, there exists a unique $t_i \in T$ such that $\beta_i = \beta_{i-1} + \left( \pil\left[ i \right], t_i \right) = \beta_{i-1} + \left( \pil\left[ i \right], \alpha\left( \pil\left[ i \right] \right) \right)$.
In other words, since $\mathcal{T}$ is directionally deterministic and $\mathcal{C}$ is not rigged by $\vec{\beta}_{\vec{x}}$, $\vec{\beta}$ is the unique assembly sequence whose existence is implied by condition~\ref{def:cp-6a} of Definition~\ref{def:competing-paths}.
Thus $\vec{\beta}$ can be used to extend $\gamma$ by attaching tiles at points along any prefix of $\pil\left[ 2\ldots l'-1\right]$ such that we choose the type of tile to attach subject to the following constraint: for $2 \leq i < l'$,  if the tile attaches at $\pil[i]$, then it must attach to the tile placed at $\pil[i-1]$.
In such an extension of $\gamma$, each tile added is unique because each tile that attaches in $\vec{\beta}$ is unique. 
Moreover, such an extended assembly would still satisfy all of the conditions in Definition~\ref{def:winning-path-assembly-competition}:
\begin{enumerate}
	\item As noted above, condition~\ref{def:wp-1} is satisfied by $\gamma$.
	\item Condition~\ref{def:wp-2} is satisfied because $\gamma$ was extended along a prefix of $\pil\left[2\ldots l'-1\right]$ using $\vec{\beta}$, which places the same tile types that $\alpha$ places along the points in $\dom{\pil\left[2\ldots l'-1\right]}$, i.e., for all $2 \leq j \leq l'-1$, if $\gamma$ is extended along the prefix $\pil[2\ldots j]$, then for all $2 \leq i \leq j$, $\beta\left( \pil[i] \right) = \alpha\left( \pil[i] \right)$.
	\item Condition~\ref{def:wp-3} is satisfied because:
	\begin{enumerate}
		\item $\gamma$ was extended using $\vec{\beta}$, which follows $\pil$ up to but not including $\vec{y}$, i.e., for $1 < i < l'$, if $\beta_i = \beta_{i-1} + \left( \pil[i], t_i \right)$, then $\vec{u}_{\beta_i \backslash \beta_{i-1}} = \pil[i-1] - \pil[i]$, and
		\item assuming $\gamma'$ is the result of extending $\gamma$ by $\vec{\beta}$, then $\gamma'\left(\pil[l' - 1]\right)$ does not bind to $\gamma'\left(\vec{y} \,\right)$ because such a situation can be ruled out as follows. 
		Suppose for the sake of obtaining a contradiction that there is an edge between the nodes $\vec{y}$ and $\pil[l' - 1]$ in $G^{\textmd{b}}_{\gamma'}$.
		Since $\alpha\left( \vec{y} \,\right) = \gamma'\left( \vec{y} \, \right)$ and $\alpha\left( \pil[l' - 1]\right) = \gamma'\left( \pil[l' - 1] \right)$, it follows that there is an edge between the nodes $\vec{y}$ and $\pil[l' - 1]$ in $G^{\textmd{b}}_{\alpha}$.
		Then, there exists a path $p''$ in $G^{\textmd{b}}_{\alpha}$ from $\vec{s}$ to $\pil[l' - 1]$ such that $s = (\vec{x} = \piw[1], \ldots, \vec{y} = \piw[l], \pil[l' - 1])$ is a suffix of $p''$.
		Note that $p''$ is in fact a simple path because 1) $\piw$ is a simple path, 2) $l' \geq 3$ implies that $\pil[l' -1] \not \in \left\{\vec{x},\vec{y}\right\}$, and 3) conditions~\ref{def:cp-1} and~\ref{def:cp-2} of Definition~\ref{def:competing-paths} together imply that $\pil[l'-1] \not \in \dom{\piw}$.
		Since $p''$ is a simple path in $G^{\textmd{b}}_{\alpha}$ from $\vec{s}$ to $\pil[l' -1]$ that goes through $\vec{x}$, it follows by condition~\ref{def:cp-4a} of Definition~\ref{def:competing-paths} that $\pil[1 \ldots l'-1]$ is a suffix of $p''$. 
		Since $l' \geq 3$, we have $\vec{y} \in \dom{\pil[1 \ldots l'-1]}$.
		But this is impossible because $\pil$ is a simple path from $\vec{x}$ to $\vec{y}$. 
		Thus, $\gamma'\left(\pil[l' - 1]\right)$ does not bind to $\gamma'\left(\vec{y} \,\right)$.
	\end{enumerate}
\end{enumerate}
Since the number of distinct path prefixes of $\pil[2\ldots l'-1]$ is equal to $l'- 2$, this is also the number of distinct winning assemblies for $\mathcal{C}$ that are larger than $\gamma$. 
In conclusion, $\left|W_{\mathcal{C}}\right| = (l'-2)+1 = l'-1$. Note that the winning assembly is unique when the length of (the losing path) $\pil$ is equal to 2.
\end{proof}

We now define the probability of a competition as the sum of the probabilities of all of its winning assembly sequences.
The probability of a winning assembly sequence of the competition is defined relative to a modified version of its corresponding SPT where certain nodes are removed to simplify the computation of the probability of the winning assembly sequence.
\begin{definition}  % def. competition probability
\label{def:competition-probability}
Let $\vec{y} \in Y$ with corresponding starting point $\vec{x}$ and competing paths $\piw$ and $\pil$. Let $\mathcal{C}$ be the competition in $\mathcal{T}$ associated with $\vec{y}$ and $\mathcal{M}'_{\mathcal{C}}$ be the SPT obtained from $\mathcal{M}_{\mathcal{C}}$ by removing all nodes $\vec{\beta}$ of $\mathcal{M}_{\mathcal{C}}$ such that the binding graph of $\res{\vec{\beta}}$ contains a simple path that starts at $\vec{x}$ and is not a prefix of either $\piw$ or $\pil$. 
If $\vec{\gamma}$ is any winning assembly sequence for  $\mathcal{C}$, we define the \emph{probability of} $\vec{\gamma}$ as  $\textmd{Pr}_{\mathcal{C}}\left[ \vec{\gamma} \right] = \textmd{Pr}_{\mathcal{M}'_{\mathcal{C}}}[ \vec{\gamma} ]$, and the \emph{probability of competition} $\mathcal{C}$ as 
$$\displaystyle
\textmd{Pr}\left[ \mathcal{C} \right]=\sum_{\vec{\gamma} \textmd{ winning assembly sequence for } \mathcal{C}}
{\textmd{Pr}_{\mathcal{C}}[ \vec{\gamma} ]}.
$$
\end{definition}

Intuitively, $\textmd{Pr}[\mathcal{C}]$ is defined so that it can be computed easily, i.e., without having to account for potentially numerous points at which tiles could attach outside of the competing paths but in doing so should not affect $\textmd{Pr}\left[\mathcal{C}\right]$. Furthermore, $\mathcal{C}$-producible assembly sequences that place a tile in $\dom{\piw} \cup \dom{\pil}$ without following a prefix of $\piw$ or $\pil$, like the X tile in Figure~\ref{fig:weird-competition}, are excluded from the summation in the probability computation since these sequences are not winning assembly sequences for $\mathcal{C}$.
The next lemma first counts the number of winning assembly sequences for a competition that is not rigged that all result in a certain winning assembly and then derives the probability of each such winning assembly sequence. 

\begin{lemma} % lemma 9
  \label{lem:domain-competition} Let $\vec{y} \in Y$ with
  corresponding competing paths $\piw$ and $\pil$ of length $l \geq 2$ and $l' \geq 2$, respectively. Let $\mathcal{C}$ be the competition associated with $\vec{y}$ in $\mathcal{T}$. 
For any integer $0 \leq i \leq l' - 2$, let $\pi_i$ denote the (possibly empty) simple path $\pil\left[2\ldots(i+1)\right]$ and $\gamma$ denote the winning assembly in $W_{\mathcal{C}}$ whose domain equals $\dom{\piw} \cup \dom{\pi_i}$.   If  $\mathcal{C}$ is not rigged in $\mathcal{T}$, then:
  \begin{enumerate}
  	\item \label{lem:dc-1} The number of winning assembly sequences for $\mathcal{C}$ with result $\gamma$ is ${l+i-2 \choose l-2}$.
  	\item \label{lem:dc-2} If $\vec{\gamma}$ is a winning assembly sequence for $\mathcal{C}$, with result $\gamma$, then $\textmd{Pr}_{\mathcal{C}}\left[\vec{\gamma}\right] = \frac{1}{2^{\left|\vec{\gamma}\right|-1}} = \frac{1}{2^{l+i-1}}$.
  \end{enumerate}
\end{lemma}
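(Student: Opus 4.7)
The plan is to prove the two parts in order, using a combinatorial analysis of winning assembly sequences for part~\ref{lem:dc-1} and a per-step branching analysis for part~\ref{lem:dc-2}. For part~\ref{lem:dc-1}, I would first argue that in every winning assembly sequence $\vec{\gamma}$ for $\mathcal{C}$ with result $\gamma$, the tiles along $\piw$ are placed in the strict order $\piw[2], \piw[3], \ldots, \piw[l]$ and the tiles along $\pi_i$ in the strict order $\pil[2], \pil[3], \ldots, \pil[i+1]$. These orderings follow from the simple-path structure of each competing path combined with directional determinism at each intermediate non-POC point: the only tile of $T$ that can attach at $\piw[k]$ (respectively $\pil[k]$) while the growing assembly remains a subassembly of $\gamma$ is the one that binds to its immediate path-predecessor, and from the seed $\vec{x}$ the forward direction is forced. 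Because $\vec{\gamma}$ terminates at $\vec{y} = \piw[l]$, that step must come last. The remaining $l+i-2$ steps then correspond to interleaving the two fixed ordered subsequences $(\piw[2], \ldots, \piw[l-1])$ of length $l-2$ and $(\pil[2], \ldots, \pil[i+1])$ of length $i$, and each such interleaving corresponds bijectively to a valid winning assembly sequence, giving $\binom{l+i-2}{l-2}$ in total.

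For part~\ref{lem:dc-2}, the domain of $\gamma$ is $\dom{\piw} \cup \dom{\pi_i}$, a disjoint union of size $l+i$ by condition~\ref{def:cp-2} of Definition~\ref{def:competing-paths} together with the fact that $\pi_i = \pil[2 \ldots i+1]$ excludes both $\vec{x}$ and $\vec{y}$ (since $i+1 \leq l'-1$). Hence $|\vec{\gamma}| = l+i$ and $\vec{\gamma}$ has $l+i-1$ edges in $\mathcal{M}'_{\mathcal{C}}$. The main technical step is to show that at every intermediate assembly $\gamma_j$ along $\vec{\gamma}$, the branching factor $M_{\gamma_j}$ in $\mathcal{M}_{\mathcal{C}}$ equals exactly $2$: the two available tile-attachment extensions correspond to placing the unique tile type at $\piw[j_1+1]$ (next on the winning path) and the unique tile type at $\pil[j_2+1]$ (next on the losing path). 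Each of these two frontier positions admits a unique attaching tile type by condition~\ref{def:cp-6a} combined with directional determinism at non-POC points; when $j_2+1 = l'$ this position is $\vec{y}$ itself, where, since $\vec{y}$ is a POC and $\mathcal{T}$ is directionally deterministic, the losing-path tile differs from the winning one (via a different input direction) and still constitutes the second option. Off-path tile attachments would create, in the resulting assembly, a simple binding-graph path emanating from $\vec{x}$ that is not a prefix of $\piw$ or $\pil$, and can be excluded using the competing-paths conditions~\ref{def:cp-4a}, \ref{def:cp-4b}, \ref{def:cp-5}, and \ref{def:cp-6b} of Definition~\ref{def:competing-paths}, combined with the non-rigged hypothesis via Lemma~\ref{lem:not-rigged}. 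With $M_{\gamma_j} = 2$ at each of the $l+i-1$ steps, every edge of $\vec{\gamma}$ in $\mathcal{M}_{\mathcal{C}}$ carries probability $1/2$, and this is inherited by $\mathcal{M}'_{\mathcal{C}}$, yielding $\textmd{Pr}_{\mathcal{C}}[\vec{\gamma}] = 1/2^{|\vec{\gamma}|-1} = 1/2^{l+i-1}$.

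The main obstacle I anticipate is the rigorous verification that $M_{\gamma_j} = 2$ at every step. This requires simultaneously ruling out spurious off-path attachments, confirming the existence and uniqueness of the two path-extending options (using condition~\ref{def:cp-6a} plus directional determinism), and handling the endpoint $\vec{y}$ with care, where two tile types can be placed via the two competing input directions. The argument pulls together directional determinism with several of the competing-paths conditions in Definition~\ref{def:competing-paths}, plus the non-rigged hypothesis through Lemma~\ref{lem:not-rigged}. Once the branching count is in hand, the counting from part~\ref{lem:dc-1} converts directly into the probability formula of part~\ref{lem:dc-2}.
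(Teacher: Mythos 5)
Your part~\ref{lem:dc-1} argument follows the paper's interleaving count and reaches the right answer, though the forced ordering along each path is really a consequence of condition~\ref{def:wp-3} of Definition~\ref{def:winning-path-assembly-competition} (which makes $G^{\textmd{b}}_{\gamma}$ exactly the two-legged tree formed by $\piw$ and $\pi_i$, forcing outward growth from $\vec{x}$ along each leg) rather than of directional determinism, which restricts \emph{which} tile type attaches at a fixed step, not the order of steps.

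Part~\ref{lem:dc-2} contains a genuine gap: you claim that the branching factor $M_{\gamma_j}$ of the \emph{unrestricted} Markov chain $\mathcal{M}_{\mathcal{C}}$ is exactly $2$ and that off-path attachments can be proved impossible via the competing-paths conditions together with Lemma~\ref{lem:not-rigged}. This is false. The competition $\mathcal{C}=(T,\sigma_{\mathcal{C}},1)$ uses the full tile set $T$, and its seed $\sigma_{\mathcal{C}}=\{(\vec{x},\alpha(\vec{x}))\}$ still exposes the glue through which $\alpha(\vec{x})$ bound to its predecessor on the path from $\vec{s}$, so $\mathcal{C}$-producible assemblies can grow backward (and generally off-path); the paper says exactly this in the discussion after Definition~\ref{def:competition}, and Figures~\ref{fig:weird-competition} and~\ref{fig:weird-competition2} exhibit such off-path growth. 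Consequently $M_{\gamma_j}$ can exceed $2$ in $\mathcal{M}_{\mathcal{C}}$, the edges of $\mathcal{M}_{\mathcal{C}}$ do not all carry weight $\tfrac{1}{2}$, and literally ``inheriting'' those labels into $\mathcal{M}'_{\mathcal{C}}$ would give a product strictly smaller than $2^{-(l+i-1)}$. The conditions you cite constrain binding-graph paths to $\vec{y}$ (and to non-ending points of the competing paths) inside $\mathcal{T}$-producible assemblies; they do not eliminate off-path frontier positions of the freestanding $\mathcal{C}$. What the argument must actually establish is that \emph{within} $\mathcal{M}'_{\mathcal{C}}$ (where the off-path children are removed by the definition in Definition~\ref{def:competition-probability}, and the per-step weight is accordingly $\tfrac{1}{2}$, as the paper uses in the rest of Section~4), each node on $\vec{\gamma}$ has exactly two children: existence and uniqueness of the tile extending the winning prefix and of the tile extending the losing prefix, both obtained from condition~\ref{def:cp-6a}, directional determinism, and Lemma~\ref{lem:not-rigged}. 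The task is to count those two \emph{restricted} extensions, not to prove that no other $\mathcal{M}_{\mathcal{C}}$-children exist.
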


\begin{proof}
Let $\vec{x}$ be the starting point corresponding to $\vec{y}$.
Let $\vec{\gamma}$ be any winning assembly sequence for $\mathcal{C}$ with
     $\res{\vec{\gamma}} = \gamma$.  Since, by conditions~\ref{def:cp-1} and~\ref{def:cp-2} in
     Definition~\ref{def:competing-paths}, $\piw$ and $\pil$ are two
     distinct simple paths such that $ \dom \piw\ \cap\ \dom \pil
     = \{\vec{x},\vec{y}\}$, $\dom \gamma
     = \{\vec{x}=\piw[1]=\pil[1], \piw[2], \ldots, \vec{y}=\piw[l]=\pil[l'], \pil[2], \ldots, \pil[i+1]\}$.
     Thus $|\vec{\gamma}|=\left|\dom \gamma \right|=l+i$.
     Since 1) each assembly step in $\vec{\gamma}$ places a tile at a
     point that extends a prefix of either one of the simple paths
     $\piw$ and $\pi_i$ such that every simple path in
     $G^{\textmd{b}}_{\gamma}$ that starts at $\vec{x}$ is a prefix of
     either $\piw$ or $\pil$ and 2) $\vec{\gamma}$ terminates at
     $\vec{y}$,
     $\vec{\gamma} =
     (\gamma_1,\gamma_2, \ldots, \gamma)$ must satisfy the
     following conditions:
\begin{enumerate}[label=(\alph*)]
   \item $\gamma_1=\left\{(\vec{x},
   \alpha\left(\vec{x}\right))\right\}$,
    \item $\gamma = \gamma_{l+i-1} +  \left( \vec{y}, \alpha\left(\vec{y}\right) \right)$, and
   \item for every $2 \leq k < l+i$, $\gamma_k = \gamma_{k-1}
      + (\vec{p} , t)$ for some $\vec{p} \in \dom \gamma$ and $t\in T$, and either:

      Case 1: there exists $2 \leq u < l$ such that
      $\vec{p}=\piw[u]$ and, for all $2 \leq v < u$,
      $\textmd{index}_{\vec{\gamma}}\left( \piw[v] \right) <
      \textmd{index}_{\vec{\gamma}}\left( \piw[u] \right)$,

      or
      
      Case 2: there exists $2 \leq u < l'$ such that $\vec{p}=\pil[u]$
      and, for all $2 \leq v < u$,
      $\textmd{index}_{\vec{\gamma}}\left( \pil[v] \right)~<~\textmd{index}_{\vec{\gamma}}\left( \pil[u] \right)$.

    \end{enumerate} 
Or, informally, tiles along each one of the paths in $\{\piw,\pi_i\}$
must attach to $\gamma$ in a fixed order.  Therefore, since exactly
one tile type may attach at each point in $\dom{\piw}\cup\dom{\pil}$
(namely the tile type that $\alpha$ places there), the set of all
winning assembly sequences $\vec{\gamma}$ for $\mathcal{C}$ with
$\res{\vec{\gamma}} = \gamma$ corresponds to the set of all possible
ways of interleaving the two ordered sets of points that define $\piw$
and $\pi_i$. Since picking the indices of the tile attachment steps in
which the tiles assemble along $\piw$ in $\vec{\gamma}$ fully
determines the indices of the tile attachment steps in which the tiles
assemble along $\pi_i$ in $\vec{\gamma}$, we will count the number of
possible ways for tiles to assemble along $\piw$ in $\vec{\gamma}$.
Note that $|\piw|=l$, $|\vec{\gamma}|=l+i$, and $\vec{\gamma}$
terminates at $\vec{y}$.
      Since $\gamma(\piw[0])$ must be placed in $\gamma_1$ and all
       of the other tiles to be placed by $\vec{\gamma}$ (in both
       $\piw$ and $\pi_i$) must assemble before $\gamma(\piw[l])$
       attaches, there remain exactly $l-2$ other tiles to be
       placed along $\piw$ by $\vec{\gamma}$. Since the relative
       ordering of these $l-2$ tile attachment steps is fixed and
       these steps must take place between steps $2$ and $l+i-1$
       (inclusive) of $\vec{\gamma}$, the number of assembly sequences
       that place the tile $w(\vec{y})$ at $\piw[l]$ in $\gamma$ is
       equal to ${l+i-2 \choose l-2}$.
	Recall that $\textmd{Pr}_{\mathcal{C}}[ \vec{\gamma} ]$, where $\vec{\gamma}$ is a winning assembly sequence for $\mathcal{C}$ and $\res{\vec{\gamma}}=\gamma$, is the probability of the Markov
chain $\mathcal{M}'_{\mathcal{C}}$ being in state $\vec{\gamma}$.
This means $\textmd{Pr}_{\mathcal{C}}[ \vec{\gamma} ]$ is the probability of a tile being placed at $\piw[l]$ in $\gamma$, subject to the restriction that
every tile that attaches extends a prefix of either $\piw$ or $\pil$ in the sense of condition~\ref{def:wp-3} of Definition~\ref{def:winning-path-assembly-competition}, i.e., for all $p \in \{ \piw, \pil \}$ and all $1 < k < |\gamma|$, if $j$ is such that $1 < j \leq |p|$ and $\gamma_k = \gamma_{k-1} + (p[j], \alpha(p[j]))$, then $\vec{u}_{\gamma_k \backslash \gamma_{k-1}} = p[j-1] - p[j]$.
Thus $\textmd{Pr}_{\mathcal{C}}[ \vec{\gamma} ]$ is computed as follows.
    Due to the requirement of each tile placement step in
    $\vec{\gamma}$ having to extend one of the competing paths, for
    all $2 \leq j \leq l+i$, one of two tile types may attach at step
    $j$ of $\vec{\gamma}$, namely the unique tile type that could
    attach to the first empty location along $\piw$ in $\gamma_{j-1}$
    and the unique tile type that could attach to the first empty
    location along $\pil$ in $\gamma_{j-1}$. Each one of these two
    tile types exists and is unique, following the fact that
    $\mathcal{C}$ is not rigged in $\mathcal{T}$ and the same reasoning
    used in the second half of the proof of
    Lemma~\ref{lem:domain-competition}. This means that, in our model's notation, $M_{\gamma_{j-1}} = 2$. Thus,
      the probability that each one of these two tiles will attach
      during this step is equal to $\frac{1}{M_{\gamma_{j-1}}}
      = \frac{1}{2}$. Since only two tiles are available for
      attachment at each step and the selection of one of these two
      tiles to be attached at this step is independent of all previous
      tile attachment steps, the probability that the winning tile attaches at
      the point $\piw[l]$ at step $l+i$ of $\vec{\gamma}$ is equal to
      $\displaystyle\prod_{k=2}^{l+i}\frac{1}{2}
      = \left(\frac{1}{2}\right)^{l+i-1}$. The product starts at index
      $k=2$ because only one tile type may attach during $\gamma_1$,
      namely $\alpha\left(\vec{x}\right)$.
\end{proof}

\begin{corollary} % corollary 2
  \label{cor:competition-probability-sum}
  Under the same assumtions as Lemma~\ref{lem:domain-competition}:
  \[
  \textmd{Pr}\left[ \mathcal{C} \right] =
  \sum_{i=0}^{l'-2}{l+i-2 \choose l-2}\cdot \left(\frac{1}{2}\right)^{l+i-1}
  \]  
\end{corollary}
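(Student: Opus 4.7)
The plan is to simply unfold Definition~\ref{def:competition-probability} and partition the sum over winning assembly sequences according to their results, then apply Lemma~\ref{lem:domain-competition} termwise. Concretely, I would first invoke Lemma~\ref{lem:number-of-winning-assemblies} to conclude that $|W_{\mathcal{C}}| = l' - 1$, and observe that the lemma's proof (together with condition~\ref{def:wp-2} of Definition~\ref{def:winning-path-assembly-competition} and the assumption that $\mathcal{T}$ is directionally deterministic) shows that the $l'-1$ distinct winning assemblies for $\mathcal{C}$ are precisely the assemblies $\gamma_0, \gamma_1, \ldots, \gamma_{l'-2}$, where $\gamma_i$ has domain $\dom{\piw} \cup \dom{\pi_i}$ and places tile types according to $\alpha$.

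Next I would split the defining sum by the result of each winning assembly sequence. Since the set of winning assembly sequences for $\mathcal{C}$ is the disjoint union, over $i \in \{0, 1, \ldots, l'-2\}$, of those sequences whose result is $\gamma_i$, we get
\[
\textmd{Pr}\left[ \mathcal{C} \right] = \sum_{i=0}^{l'-2} \; \sum_{\substack{\vec{\gamma} \text{ winning for } \mathcal{C} \\ \res{\vec{\gamma}} = \gamma_i}} \textmd{Pr}_{\mathcal{C}}[\vec{\gamma}].
\]
By part~\ref{lem:dc-1} of Lemma~\ref{lem:domain-competition}, the inner sum has exactly $\binom{l+i-2}{l-2}$ terms, and by part~\ref{lem:dc-2} each such term equals $(1/2)^{l+i-1}$. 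Substituting these two facts into the inner sum yields the claimed identity.

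I do not anticipate any real obstacle: the substantive work has already been done in Lemmas~\ref{lem:number-of-winning-assemblies} and~\ref{lem:domain-competition}, so the corollary is essentially a bookkeeping step. The only point worth being careful about is verifying that the partition by result is indeed a disjoint partition of the full indexing set of Definition~\ref{def:competition-probability} — i.e., that every winning assembly sequence results in some $\gamma_i$ with $0 \leq i \leq l'-2$, and conversely that every $\gamma_i$ is actually realized as $\res{\vec{\gamma}}$ for at least one winning assembly sequence. The first direction is immediate from Definition~\ref{def:winning-path-assembly-competition}; the second follows from the construction used in the proof of Lemma~\ref{lem:number-of-winning-assemblies}, which exhibits an explicit winning assembly sequence for each $\gamma_i$ (using the fact that $\mathcal{C}$ is not rigged in $\mathcal{T}$).
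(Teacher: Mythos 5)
Your proposal is correct and takes essentially the same route as the paper: both unfold Definition~\ref{def:competition-probability}, invoke Lemma~\ref{lem:number-of-winning-assemblies} to enumerate the $l'-1$ winning assemblies, partition the sum over winning assembly sequences by their result, and then apply the two parts of Lemma~\ref{lem:domain-competition} termwise. The paper's proof is slightly terser about verifying the partition is exhaustive, but the content is the same.
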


\begin{proof}
According to Definition~\ref{def:competition-probability},
    $\displaystyle
    \textmd{Pr}\left[ \mathcal{C} \right]=\sum_{\vec{\gamma} \textmd{ winning assembly sequence for } \mathcal{C}} {\textmd{Pr}_{\mathcal{C}}[ \vec{\gamma} ]}.
    $
   According to
    Lemma~\ref{lem:number-of-winning-assemblies}, $W_{\mathcal{C}}$
    contains $l'-1$ winning assemblies for $\mathcal{C}$, the domain
    for each one of them containing $\dom{\piw}$ together with a fixed
    number of points in any (possibly empty) prefix of $\pil[2\ldots
    l'-1]$.
  Therefore, 
$$\textmd{Pr}\left[ \mathcal{C} \right]
= \sum_{i = 0}^{l'-2}{\left(\sum_{\substack{\vec{\gamma} \textmd{ is a winning assembly sequence for } \mathcal{C}\\ |\vec{\gamma}|=l+i \textmd{\ and\ } \res{\vec{\gamma}}=\gamma}}{\textmd{Pr}_{\mathcal{C}}[ \vec{\gamma} ]}\right)}.
$$
By Lemma~\ref{lem:domain-competition}:
\begin{enumerate}
	\item there are ${l+i-2 \choose l-2}$ winning assembly sequences for $\mathcal{C}$, with result $\gamma$, and,
	\item for each such assembly sequence $\vec{\gamma}$, we have $\textmd{Pr}_{\mathcal{C}}\left[\vec{\gamma}\right] = \frac{1}{2^{l+i-1}}$. 
\end{enumerate}
It follows that
$$\textmd{Pr}\left[ \mathcal{C} \right]
= \sum_{i = 0}^{l'-2}{\left(\sum_{\substack{\vec{\gamma} \textmd{ is a winning assembly sequence for } \mathcal{C}\\  |\vec{\gamma}|=l+i \textmd{\ and\ } \res{\vec{\gamma}}=\gamma}}{\textmd{Pr}_{\mathcal{C}}[ \vec{\gamma} ]}\right)} = \sum_{i=0}^{l'-2}{{l+i-2 \choose l-2} \cdot \left(\frac{1}{2}\right)^{l+i-1}}.
$$
\end{proof}

The following is the first main result of this paper, which basically says that the probability that the $w$-sequentially non-deterministic TAS $\mathcal{T}$ strictly self-assembles the shape of the unique $\alpha$ is at least the product of the probabilities of all of its competitions, which themselves can be easily computed using our framework.

\begin{theorem}
\label{thm:local-non-determinism-theorem}
If $\mathcal{C}_1, \ldots, \mathcal{C}_r$ are the competitions in $\mathcal{T}$, then $\mathcal{T}$ strictly self-assembles $\dom{\alpha}$ with probability at least $\displaystyle\prod_{i=1}^{r}{\textmd{Pr}\left[ \mathcal{C}_i \right]}$.
\end{theorem}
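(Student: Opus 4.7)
My plan is to lower-bound the probability that $\mathcal{T}$ produces the unique $w$-correct terminal assembly $\alpha$ (whose existence and uniqueness are given by Corollary~\ref{cor:unique-w-correct-assembly} and Lemma~\ref{lem:w-correct-extended-to-w-correct-terminal}). Since $\alpha$ is one of the terminal assemblies whose domain equals $\dom{\alpha}$, any such bound immediately lower-bounds the probability with which $\mathcal{T}$ strictly self-assembles $\dom{\alpha}$. The approach is an induction on $r$, where the inductive step conditions on the first attachment at $\vec{y}_1$ and shows that, given any $w$-correct prefix reaching $\vec{x}_1$, the conditional probability of the correct tile $\alpha(\vec{y}_1)$ being placed at $\vec{y}_1$ is at least $\Pr[\mathcal{C}_1]$. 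Post-conditioning on this event, the remaining essential POCs $\vec{y}_2,\ldots,\vec{y}_r$ together with their competing paths form a reduced sequentially non-deterministic scenario to which the inductive hypothesis applies, yielding the factor $\prod_{i=2}^{r}\Pr[\mathcal{C}_i]$. The ordering required for this decomposition is precisely what Lemmas~\ref{lem:x-i-y-i-ordering-r-1} and~\ref{lem:x-i-y-i-ordering} guarantee: every $\mathcal{T}$-producing assembly sequence that hits all essential POCs does so in the order $\vec{x}_1,\vec{y}_1,\vec{x}_2,\vec{y}_2,\ldots,\vec{x}_r,\vec{y}_r$, with arbitrary non-POC attachments possibly interleaved.

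The crux is proving the conditional probability of the correct tile at $\vec{y}_1$ equals $\Pr[\mathcal{C}_1]$. I would argue as follows. Conditions~\ref{def:cp-4} and~\ref{def:cp-5} of Definition~\ref{def:competing-paths} guarantee that in every binding graph of a $\mathcal{T}$-producible assembly, every simple path from $\vec{s}$ to $\vec{y}_1$ passes through $\vec{x}_1$ and is an extension of either $\piw^{(1)}$ or $\pil^{(1)}$; condition~\ref{def:cp-6} further restricts what can happen starting from $\vec{x}_1$. Combined with directional determinism (Definition~\ref{def:dd-ad-pt}) and the fact that $\mathcal{C}_1$ is not rigged in $\mathcal{T}$ (because $\vec{\beta}_{\vec{x}_1}$ arises as the prefix of a $w$-correct sequence whose result is a subassembly of $\alpha$), at every state of $\mathcal{M}_{\mathcal{T}}$ reached prior to an attachment at $\vec{y}_1$ the frontier contains exactly one tile extending $\piw^{(1)}$ and exactly one tile extending $\pil^{(1)}$, alongside arbitrarily many unrelated frontier tiles. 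Uniform sampling in $\mathcal{M}_{\mathcal{T}}$ then yields that, conditional on the next step being a ``competition step'' (extending $\piw^{(1)}$ or $\pil^{(1)}$), the two options are equiprobable. Marginalizing over all non-competition steps leaves the induced chain on competition steps exactly equal in distribution to $\mathcal{M}'_{\mathcal{C}_1}$ as defined in Definition~\ref{def:competition-probability}, so summing over all ways the winning path can finish while a prefix of the losing path has attached recovers Corollary~\ref{cor:competition-probability-sum}.

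The main obstacle will be formalizing this projection/marginalization argument, specifically that conditioning on the identity of ``competition steps'' preserves the uniform one-half split regardless of how many non-competition tiles are concurrently on the frontier. I would handle this by defining, for each $w$-correct finite prefix of $\vec{\alpha}$, a corresponding finite prefix of a winning assembly sequence for $\mathcal{C}_1$ (obtained by deleting non-competition steps and steps associated with later competitions) and then applying a standard grouping argument on the Markov chain: within any block of consecutive steps that ends in a competition step, all earlier branches with equal competition-step outcomes contribute the same total weight, so the ratio between ``winning'' and ``losing'' branches matches the $1{:}1$ ratio in $\mathcal{M}'_{\mathcal{C}_1}$. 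A secondary obstacle is the possibility that $\alpha$ is infinite; since $r$ is finite and each winning assembly sequence for $\mathcal{C}_i$ is finite (paths $\piw^{(i)}, \pil^{(i)}$ are finite by Definition~\ref{def:competing-paths}), this is handled by noting that the event ``the correct tile is placed at each $\vec{y}_i$'' is measurable in $\mathcal{M}_{\mathcal{T}}$ as a countable intersection of events that each depend on a finite prefix of the assembly sequence, so the product of probabilities passes to the limit without further technicality.
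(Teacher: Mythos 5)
Your plan mirrors the paper's: reduce to bounding the probability of the unique $w$-correct terminal assembly $\alpha$ (the paper's Eq.~(\ref{eqn:main-inequality})), then factor through the levels defined by the ordered starting points $\vec{x}_i$ and POCs $\vec{y}_i$ (the paper's Lemmas~\ref{lem:probability-x-1}, \ref{lem:probability-y-r}, \ref{lem:probability-y-i-equals-probability-x-i-plus-1}, \ref{lem:probability-y-i-equals-x-i}), and show that each level conditionally contributes at least $\textmd{Pr}[\mathcal{C}_i]$. That much is the same structure, just phrased in terms of conditional probabilities rather than SPT restrictions. There are, however, two genuine gaps in the sketch.

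First, your claim that $\mathcal{C}_1$ is not rigged ``because $\vec{\beta}_{\vec{x}_1}$ arises as the prefix of a $w$-correct sequence whose result is a subassembly of $\alpha$'' is simply false. Being a $w$-correct prefix does not preclude the losing path being blocked in that prefix; in fact, in the paper's square construction this is the norm, not the exception — every time a correctly-written 1-bit is read, the losing path of the read gadget is blocked by the bit bump, so the competition \emph{is} rigged. Your subsequent assertion that the frontier always contains ``exactly one tile extending $\pil^{(1)}$'' therefore fails in the rigged case. The paper handles this via Lemma~\ref{lem:p-x-y-subtree-normalized}, showing the conditional probability is exactly $1$ when rigged — which only helps the inequality — and Lemma~\ref{lem:proability-lower-bound-probability-gamma} folds the two cases together. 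Your proof would need the analogous case split; as written, it would break.

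Second, the marginalization step — that projecting onto competition steps yields a chain with the law of $\mathcal{M}'_{\mathcal{C}_1}$ — is the entire technical heart of the theorem, and the ``standard grouping argument'' you gesture at does not obviously go through. The difficulty: after a losing competition step vs. a winning one, the resulting assemblies differ, so the set and number of non-competition frontier options available downstream need not match, which means you cannot freely re-pair ``branches with equal competition-step outcomes.'' The paper handles this by partitioning the leaves of $\mathcal{P}^{\vec{\beta}_{\vec{x}}} \upharpoonright \vec{y}$ by the (unique, longest) embedded winning assembly sequence $\vec{\gamma}$ (Lemma~\ref{lem:partition}), then proving in Lemma~\ref{lem:number-of-competing-and-non-competing-children} that within each $\mathcal{Q}_{\vec{\gamma}}$, \emph{every} non-competing child of every internal node in $\mathcal{M}_{\mathcal{T}}$ is retained — i.e., non-competition steps never block a later step of $\vec{\gamma}$. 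That lemma's proof (the four-step argument constructing $\vec{\varepsilon}^{\;\prime\prime}$ and checking no attachment is blocked) is several pages and is not something that falls out of a generic Markov-chain grouping; it leans on the geometric conditions in Definition~\ref{def:competing-paths} and on Lemmas~\ref{lem:longest-simple-finite-path-beta-m} and~\ref{lem:competing-tile-4}. Only after that structural lemma does the ``$1{:}1$ ratio plus normalization of non-competition nodes'' argument (formalized via the \texttt{split} construction in Definition~\ref{def:split-nodes} and Lemma~\ref{lem:split-probability}) become an actual calculation.
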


To prove Theorem~\ref{thm:local-non-determinism-theorem}, we must show that
$$\sum_{\substack{\alpha' \in \mathcal{A}_{\Box}[\mathcal{T}], \\ \dom{\alpha'}= \dom{\alpha}}}{\left(\sum_{\substack{\vec{\alpha} \textmd{ is a } \mathcal{T}\textmd{-producing assembly sequence,}\\ \res{\vec{\alpha}}=\alpha'}}{\textmd{Pr}_{\mathcal{M}_{\mathcal{T}}}[\vec{\alpha}]}\right)} \geq \prod_{i=1}^{r}{\textmd{Pr}\left[ \mathcal{C}_i \right]}
.$$

We will actually prove the following, stronger result, which implies
the inequality above:

\begin{equation}
  \label{eqn:main-inequality}
\sum_{\substack{\vec{\alpha} \textmd{ is a } \mathcal{T}\textmd{-producing assembly sequence,}\\ \res{\vec{\alpha}}=\alpha}}{\textmd{Pr}_{\mathcal{M}_{\mathcal{T}}}[\vec{\alpha}]} \geq \prod_{i=1}^{r}{\textmd{Pr}\left[ \mathcal{C}_i \right]}.
\end{equation}

Note that this inequality only considers $\mathcal{T}$-producing
assembly sequences that result in $\alpha$ and ignores all of the other ones
that result in an assembly whose domain happens to be identical to
$\dom{\alpha}$, as the next example illustrates. 
Before proving Theorem~\ref{thm:local-non-determinism-theorem}, we give an example in which we apply it to the example TAS $\mathcal{T}$ depicted in Figure~\ref{fig:def-example-tas-2}, which is a modified version of our running example TAS $\mathcal{T}$ from Figure~\ref{fig:def-example-tas} that produces multiple terminal assemblies having different respective shapes.
\begin{figure}[!h]
      \begin{minipage}{3in}
        \includegraphics[width=3in]{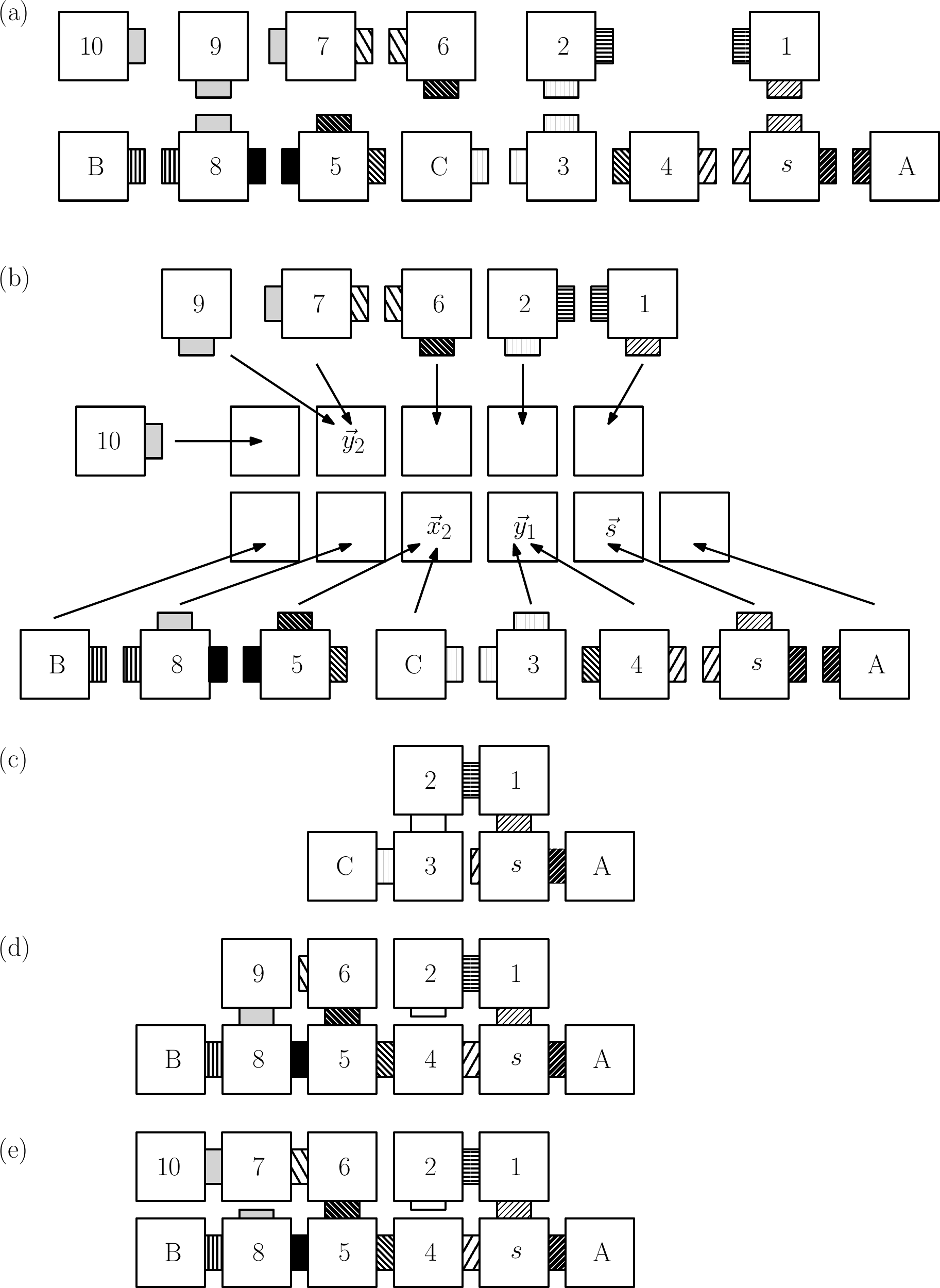}
            \end{minipage}\hfill
    \begin{minipage}{3in}
      \caption{\label{fig:def-example-tas-2}The TAS
        $\mathcal{T}=(T,\sigma,1)$ depicted here is a modified version of our  running example from Figure~\ref{fig:def-example-tas} with $|\termasm{T}|$ = 3\\
        (a) Depiction of the tile set $T=\{s,1,2,3,4,5,6,$ $7,8,9,10,A,B,C\}$,
        in which $s$ is the seed tile and the other tiles are labeled with
        either an uppercase letter or a positive integer; all glue strengths
        are 1; the different glue labels are depicted with shades of gray or
        tiling patterns\\
        (b) Depiction (in the middle) of the points in $\mathbb{Z}^2$ at
        which tiles in $T$ may attach; $\vec{s}$ is the point where the
        seed tile $s$ is always placed in this example, i.e.,
        $\sigma=\{(\vec{s},s)\}$; 
        the tiles in $T'$ are shown either above or below the points, with
        an arrow indicating the unique point at which the tile may
        attach; $\vec{y}_1$, $\vec{y}_2$ and the point to the west of $\vec{x}_2$ are the only points where
        more than one type of tile may attach\\
        (c) $\alpha_{3}$ is the  assembly in
        $\termasm{T}$ containing the tile $(\vec{y}_1,3)$\\
        (d) $\alpha_{4,9}$ is the  assembly in
        $\termasm{T}$ containing the two tiles $(\vec{y}_1,4)$ and
        $(\vec{y}_2,9)$\\
        (e) $\alpha_{4,7}$ is the assembly in
        $\termasm{T}$ containing the two tiles $(\vec{y}_1,4)$ and
        $(\vec{y}_2,7)$      }
    \end{minipage}
  \end{figure}
We first define the set $P$ of POCs of $\mathcal{T}$, and the set $Y \subseteq P$ of essential POCs of $\mathcal{T}$.
With these sets, we can then define a corresponding winner function $w: Y \rightarrow T$. 
Let $P$ be the POC set for $\mathcal{T}$ with essential POC set $Y = \left\{ \vec{y}_1, \vec{y}_2 \right\}$, where $P = Y$, and $\vec{y}_1$ and $\vec{y}_2$ are indicated in Figure~\ref{fig:def-example-tas-2}(b). 
We order $Y$ according to the sequence in which tiles are placed at the POCs, which is a fixed order by the way we created $\mathcal{T}$.
We can then define the winner function $w: Y \rightarrow T$ as $w = \left\{ \left(\vec{y}_1, 4\right), \left(\vec{y}_2, 7\right)  \right\}$.
In this case, the assembly $\alpha_{4,7}$ shown in Figure~\ref{fig:def-example-tas-2}(e) is $w$-correct.
Our goal is to use Theorem~\ref{thm:local-non-determinism-theorem} to derive a lower bound on the probability that $\mathcal{T}$ strictly self-assembles any terminal assembly with the shape of $\alpha_{4,7}$.
We now show that $\mathcal{T}$ is $w$-sequentially non-deterministic such that $\alpha_{4,7}$ is the unique $w$-correct assembly.
Note that $\mathcal{T}$ has $r = 2$ competitions, $\mathcal{C}_1$ and $\mathcal{C}_2$, with respective pairs of competing paths $\pi_1$ and $\pi'_1$ and $\pi_2$ and $\pi'_2$, where $\pi_1$ and $\pi_2$ are the winning paths, and $\pi'_1$ and $\pi'_2$ are the losing paths.
The definitions of $\pi_1$, $\pi'_1$, $\pi_2$ and $\pi'_2$ can be inferred from Figures~\ref{fig:def-example-tas-2}(c)-(e), and it is routine to verify that both pairs of competing paths satisfy the conditions of Definition~\ref{def:competing-paths}. 
Moreover, neither competition for $\mathcal{T}$ is rigged in $\mathcal{T}$ and $\alpha_{4,7}$ is the unique $\mathcal{T}$-terminal assembly with shape $\dom{\alpha_{4,7}}$.
This means Theorem~\ref{thm:local-non-determinism-theorem} gives the exact probability that $\mathcal{T}$ strictly self-assembles the shape $\dom{\alpha_{4,7}}$.
Note that deleting the 10 tile type from $T$ and applying Theorem~\ref{thm:local-non-determinism-theorem} would still give the same probability, but it would no longer be exact because $\mathcal{T}$ (without the 10 tile type) would produce distinct terminal assemblies that both happen to have the shape $\dom{\alpha_{4,7}}$.
Let $D \subset \mathbb{Z}^2$ denote the finite set of points at which a tile is placed by at least one assembly in $\mathcal{A}_{\Box}[\mathcal{T}]$.
We now show that $\mathcal{T}$ satisfies all of the conditions of Definition~\ref{def:seq-non-deterministic} as follows:
\begin{enumerate}
	\item
	Condition~\ref{def:snd-1} of Definition~\ref{def:seq-non-deterministic}: Here, we show that $\mathcal{T}$ is directionally deterministic, i.e., satisfies Definition~\ref{def:dd-ad-pt}:
	Let $\vec{p} \in D$ be arbitrary.
	Assume $\vec{p}$ is any non-POC point in $D$, $\alpha$ and $\beta$ are any $\mathcal{T}$-producible assemblies and $t_\alpha$ and $t_\beta$ are any tile types such that $\alpha + \left( \vec{p}, t_\alpha \right)$ and $\beta + \left(\vec{p}, t_\beta \right)$ are valid tile attachment steps.
	We assume that $\alpha$ and $\beta$ agree and establish, using a proof by cases, that $t_\alpha = t_\beta$.
	\begin{itemize}
		\item $\vec{p} = \vec{x}_2$:
		In this case, two types of tiles may attach at $\vec{x}_2$.
		However, since $\alpha$ and $\beta$ agree, either $\alpha\left(\vec{y}_1\right) = 3 = \beta\left(\vec{y}_1\right)$ or $\alpha\left(\vec{y}_1\right) = 4 = \beta\left(\vec{y}_1\right)$.
		If $\alpha\left(\vec{y}_1\right) = 3 = \beta\left(\vec{y}_1\right)$, then we have $\alpha\left(\vec{p}\right) = \textmd{C} = \beta\left(\vec{p}\right)$.
		Otherwise, we have $\alpha\left(\vec{p}\right) = 5 = \beta\left(\vec{p}\right)$.
		In both cases, we have $t_\alpha = t_\beta$. 
		\item $\vec{p} \ne \vec{x}_2$:  
		Recall that $\vec{p}$ was originally defined to be any non-POC point in $D$.
		In this case, we also have $\vec{p} \ne \vec{x}_2$.
		Thus, at most one tile type may be placed at $\vec{p}$ and it follows that $t_\alpha = t_\beta$.
	\end{itemize}
	Thus, the contrapositive of condition~\ref{def:dd-ad-pt-1} of Definition~\ref{def:dd-ad-pt} is satisfied.
	Now, suppose $\vec{p}$ is any POC point in $D$.
	Since $\vec{y}_1$ and $\vec{y}_2$ are the only POCs, it follows that $\vec{p}$ is either one of these two points.
		We will now show in turn that condition~\ref{def:dd-ad-pt-2} of Definition~\ref{def:dd-ad-pt} holds for the points $\vec{y}_1$ and $\vec{y}_2$:
		\begin{itemize}
		\item $\vec{y}_1$: Exactly two types of tiles may attach at $\vec{y}_1$, namely the 3 and 4 tile types such that the former attaches from the north whereas the latter attaches from the east.
		Thus, condition~\ref{def:dd-ad-pt-2a} of Definition~\ref{def:dd-ad-pt}
		holds for $\vec{y}_1$.
		\item $\vec{y}_2$: Exactly two types of tiles may attach at $\vec{y}_2$, namely the 7 and 9 tile types such that the former attaches from the east whereas the latter attaches from the south.
	Thus, condition~\ref{def:dd-ad-pt-2a} of Definition~\ref{def:dd-ad-pt}
holds for $\vec{y}_2$.
\end{itemize}

	Thus, condition~\ref{def:dd-ad-pt-2} of Definition~\ref{def:dd-ad-pt} is satisfied and it follows that $\mathcal{T}$ is directionally deterministic.

	\item 
	Condition~\ref{def:snd-3} of Definition~\ref{def:seq-non-deterministic}: 

	Note that, if $\vec{\alpha}$ is a $\mathcal{T}$-producing assembly sequence such that $Y$ is a subset of the domain of its resulting assembly, then the 4 tile must attach in some step prior to the step in which the 7 tile attaches, i.e., $\textmd{index}_{\vec{\alpha}}\left(\vec{y}_1\right) < \textmd{index}_{\vec{\alpha}}\left( \vec{y}_2 \right)$.
	\item Condition~\ref{def:snd-4} of Definition~\ref{def:seq-non-deterministic}: Finally, note that $\mathcal{T}$ has two competitions, namely $\mathcal{C}_1$ and $\mathcal{C}_2$ associated with $\vec{y}_1$ and $\vec{y}_2$, respectively with corresponding starting points $\vec{x}_1 = \vec{s}$ and $\vec{x}_2$.
	This means $S_P \cap P = \emptyset$. 
\end{enumerate}
Thus, $\mathcal{T}$ is $w$-sequentially non-deterministic.
We now compute $\textmd{Pr}\left[ \mathcal{C}_1 \right]$ and $\textmd{Pr}\left[ \mathcal{C}_2 \right]$.
First, applying Corollary~\ref{cor:competition-probability-sum} to $\mathcal{C}_1$ with $l = 2$ and $l' = 4$ gives: $\textmd{Pr}\left[ \mathcal{C}_1 \right] =  \left( \frac{1}{2} \right)^{1} +  \left( \frac{1}{2} \right)^{2} + \left( \frac{1}{2} \right)^{3} = \frac{7}{8}$.
Then, applying Corollary~\ref{cor:competition-probability-sum} to $\mathcal{C}_2$ with $l = 3$ and $l' = 3$ gives: $\textmd{Pr}\left[ \mathcal{C}_2 \right] =  { 1 \choose 1 } \cdot \left( \frac{1}{2} \right)^{2} +  { 2 \choose 1 } \cdot \left( \frac{1}{2} \right)^{3} = \frac{1}{2}$.
Finally, Theorem~\ref{thm:local-non-determinism-theorem} says that $\mathcal{T}$ strictly self-assembles the shape $\dom{\alpha_{4,7}}$ with probability at least $\textmd{Pr}\left[ \mathcal{C}_1 \right] \cdot \textmd{Pr}\left[ \mathcal{C}_2 \right] = \frac{7}{8}\cdot \frac{1}{2} = \frac{7}{16} = .4375$.
In the next section, we develop the machinery that we
will use to prove (\ref{eqn:main-inequality}).

\section{Proof of Theorem 1}
\label{sec:proof}

In this section, we prove (\ref{eqn:main-inequality}) by interpreting the summation on its left-hand side as a corresponding SPT, to which we apply a series of transformations that never decrease its probability below the product on its right-hand side.
We first define the main SPT structure that we will use to analyze the correctness of $\mathcal{T}$.
\begin{definition} % def. Pw
\label{def:w-pruned}
We denote as $\mathcal{P}_w$ the SPT that contains all (and only) maximal paths in $\mathcal{M}_{\mathcal{T}}$ that correspond to $\mathcal{T}$-producing assembly sequences resulting in $\alpha$.
\end{definition}
Note that, by 
the assumptions of Definition~\ref{def:seq-non-deterministic}, $\mathcal{P}_w$ is not empty. Intuitively, Definition~\ref{def:w-pruned} defines the SPT $\mathcal{P}_{w}$ to be the result of ``pruning'' out all the ``bad'' maximal paths of $\mathcal{M}_{\mathcal{T}}$. %
One could define $\mathcal{P}_w$ as the result of removing from $\mathcal{M}_{\mathcal{T}}$ all of the subtrees rooted at nodes that terminate at some POC $\vec{y} \in Y$ and whose resulting assembly places at $\vec{y}$ a tile whose type is different from  $w\left( \vec{y} \right)$.
Such a definition would not eliminate from $\mathcal{P}$ infinite, ``unfair'' $\mathcal{T}$-assembly sequences in which no tile attachment step ever attaches a tile to any POC.
However, by Observation 2.1 of \cite{Dot10}, such assembly sequences do not contribute to $\textmd{Pr}[\mathcal{P}_w]$.

\goingforward{we use $\mathcal{P}$ to denote $\mathcal{P}_w$, where $w$ is a winner function such that $\alpha$ is the unique $w$-correct assembly satisfying $\alpha  \in \mathcal{A}_{\Box}[\mathcal{T}]$ and $Y \subseteq \dom{\alpha}$.}

Figure~\ref{fig:def-example-tas-Mt} depicts the SPT $\mathcal{M}_\mathcal{T}$ that corresponds to our running example $\mathcal{T}$, whereas Figure~\ref{fig:def-example-tas-P_t_w} depicts $\mathcal{P}_{w}$ where $\mathcal{T}$ is our running example TAS and $w'~=~\{(\vec{y}_1,3),(\vec{y}_2,9)\}$.

\begin{figure}[!h]
      \centerline{\includegraphics[width=\linewidth]{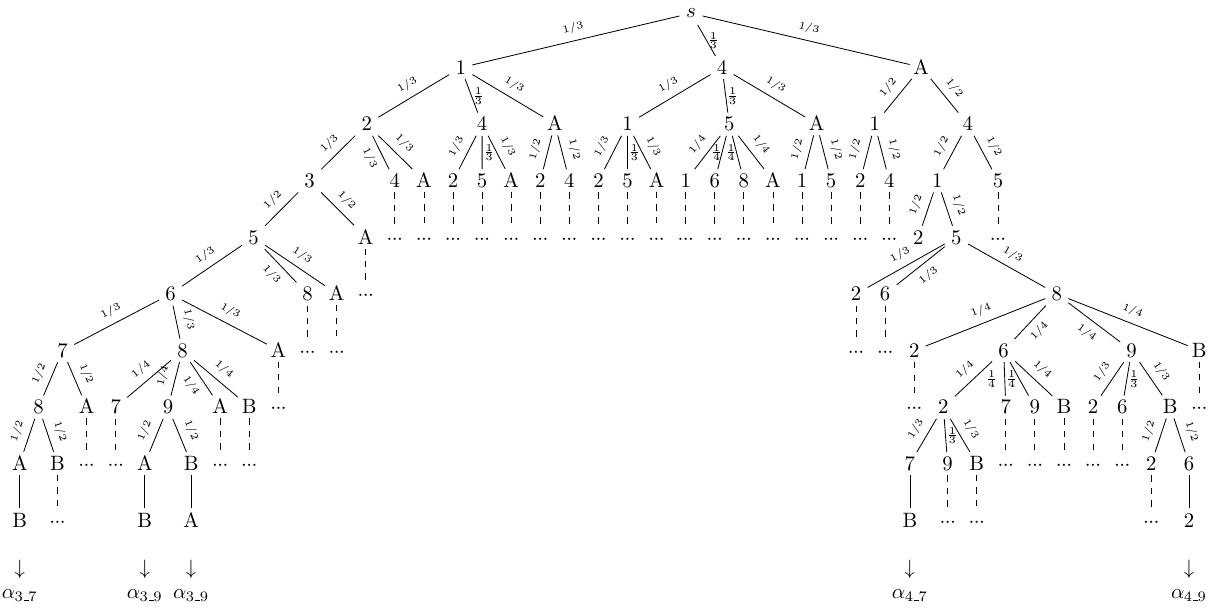}}
      \caption{\label{fig:def-example-tas-Mt} The (partial) Markov
        chain $\mathcal{M}_{\mathcal{T}}$ for our running example TAS
        $\mathcal{T}$. Note that the full tree would have all of its
        leaf nodes at the same depth as the five included leaf nodes and
        would not fit in one page.  In this and the following tree
        figures, each node is labeled not with the assembly sequence
        it corresponds to, but rather with the tile type that attaches
        in the last tile attachment step of the sequence. The sequence
        itself can be recovered by following (and reversing) the path
        from the node back up to the root of the tree. For
        illustrative purposes, the assembly resulting from each leaf
        node is shown under each one of the five included leaf nodes.
        These terminal assemblies are the ones depicted in
        Figure~\ref{fig:def-example-tas}c through
        Figure~\ref{fig:def-example-tas}f. Note that two different
        sequences (i.e., leaf nodes) are shown for $\alpha_{3,9}$.}
    \end{figure}

The following observation follows directly from Definition~\ref{def:w-pruned}.
\begin{observation} % Observation 3
\label{obs:alpha-w-correct-iff-node-in-p-corr-alpha}
Every node $\vec{\beta}$ of $\mathcal{P}$ is a finite, $w$-correct $\mathcal{T}$-producing assembly sequence such that $\res{\vec{\beta}}~\sqsubseteq~\alpha$.
\end{observation}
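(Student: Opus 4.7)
The plan is to unpack Definition~\ref{def:w-pruned} carefully, since the observation is really a bookkeeping claim about the structure of $\mathcal{P}$. First I would note that, because nodes of $\mathcal{M}_{\mathcal{T}}$ are by construction in bijection with finite $\mathcal{T}$-producing assembly sequences, every node $\vec{\beta}$ of $\mathcal{P} \subseteq \mathcal{M}_{\mathcal{T}}$ is automatically a finite $\mathcal{T}$-producing assembly sequence. So the only work is in establishing that $\res{\vec{\beta}} \sqsubseteq \alpha$ and that $\vec{\beta}$ is $w$-correct.

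Next I would show that $\vec{\beta}$ lies on some maximal path $\pi$ of $\mathcal{P}$. Since $\mathcal{P}$ is built by selecting a collection of maximal paths in $\mathcal{M}_{\mathcal{T}}$ and $\vec{\beta}$ is a node of $\mathcal{P}$, there must be at least one such $\pi$ passing through it. By Definition~\ref{def:w-pruned}, $\pi$ corresponds to a $\mathcal{T}$-producing assembly sequence $\vec{\gamma}$ with $\res{\vec{\gamma}} = \alpha$. Because $\vec{\beta}$ appears as a node along $\pi$, the sequence $\vec{\beta}$ is a prefix of $\vec{\gamma}$, and so $\res{\vec{\beta}} \sqsubseteq \res{\vec{\gamma}} = \alpha$.

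Finally, I would verify $w$-correctness of $\vec{\beta}$ by checking the two conditions of Definition~\ref{def:correct-assembly} against $\res{\vec{\beta}} \sqsubseteq \alpha$, using that $\alpha$ is assumed to be $w$-correct. For condition~\ref{def:correct-assembly-1}, any $\vec{p} \in \dom{\res{\vec{\beta}}} \cap Y$ also lies in $\dom{\alpha} \cap Y$, and the agreement of $\res{\vec{\beta}}$ with $\alpha$ gives $\res{\vec{\beta}}(\vec{p}) = \alpha(\vec{p}) = w(\vec{p})$. For condition~\ref{def:correct-assembly-2}, if some $\vec{p} \in \dom{\res{\vec{\beta}}} \cap (P\setminus Y)$ existed, then agreement would force $\vec{p} \in \dom{\alpha} \cap (P\setminus Y)$, contradicting $w$-correctness of $\alpha$. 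The main (and only mild) obstacle is being precise about the subset-of-maximal-paths description of $\mathcal{P}$ so that ``$\vec{\beta}$ is a prefix of some path in $\mathcal{P}$ resulting in $\alpha$'' is justified rather than taken for granted; once that is fixed, the rest is a routine application of $\sqsubseteq$ and Definition~\ref{def:correct-assembly}.
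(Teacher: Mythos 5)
Your proposal is correct and takes essentially the same route as the paper, which simply remarks that the observation ``follows directly from Definition~\ref{def:w-pruned}'' without spelling out the details; your argument is exactly the natural unpacking of that definition (every node lies on a maximal path whose corresponding sequence results in $\alpha$, hence its result is a subassembly of $\alpha$, and $w$-correctness is inherited from $\alpha$ being $w$-correct).
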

\begin{figure}[!h]
      \centerline{\includegraphics[width=\linewidth]{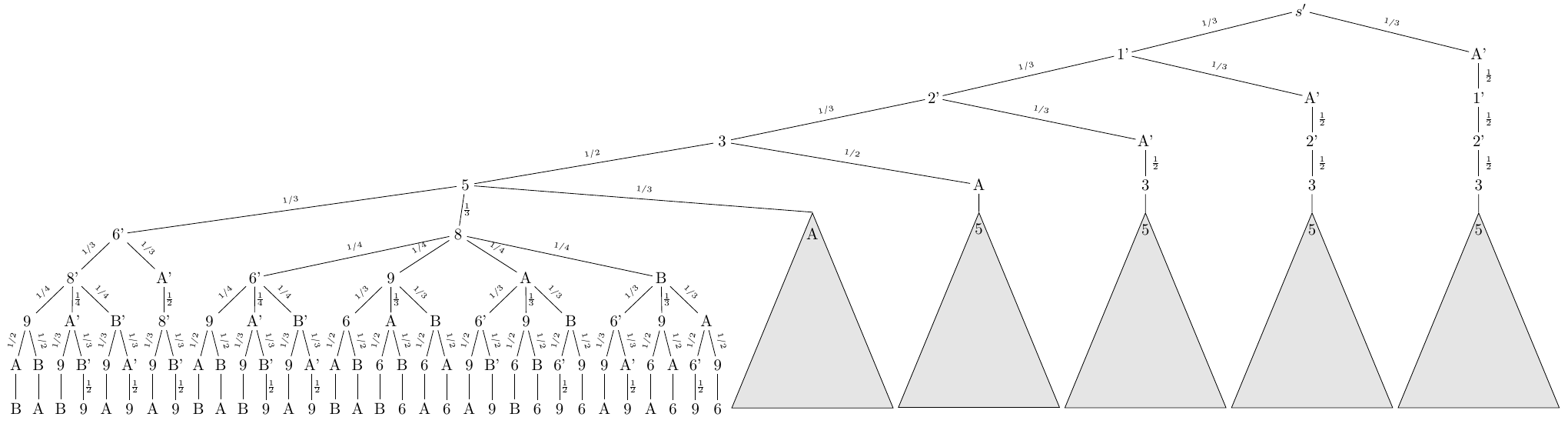}}

      \begin{minipage}{0.79\linewidth}
        \caption{\label{fig:def-example-tas-P_t_w} $\mathcal{P}_{w'}$
          where $\mathcal{T}$ is our running example TAS and
          $w'~=~\{(\vec{y}_1,3),(\vec{y}_2,9)\}$. Therefore, every leaf node is a $\mathcal{T}$-producing assembly sequence whose
          result is $\alpha_{3,9}$, shown in Figure~\ref{fig:def-example-tas}d.
          In order to reduce the width of this large tree, every occurrence of
          the repeated subtree shown on the right is depicted as a shaded
          triangle in the tree above. Note that every node whose label ends
          with prime had a child node pruned because it corresponds to placing
          a tile on a POC that is not the tile allowed by $w'$.
          }
      \end{minipage}
      \begin{minipage}{0.2\linewidth}
        \centering
        \centerline{\includegraphics[width=1in]{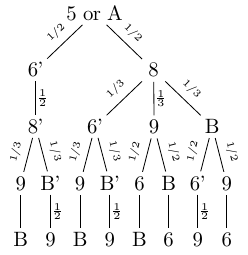}}
      \end{minipage}
    \end{figure}

The next two definitions give us a way of identifying ``levels'' of nodes within rooted trees  such as $\mathcal{P}$. 

\begin{definition}  % def. bottleneck
\label{def:bottleneck-set-of-nodes}
Let $\mathcal{Q}$ be a rooted tree and $B$ be a set containing nodes of Q (among possibly other elements). We say that $B$ is a \emph{bottleneck of} $\mathcal{Q}$ if, for every maximal path $\pi$ in $\mathcal{Q}$, there exists a unique $b \in B$ such that $b \in \dom{\pi}$.
\end{definition}

Note that the parenthesized expression in this definition applies to scenarios where $B$ contains extra nodes that do not belong to $\mathcal{Q}$. For example, in Lemma~\ref{lem:p-restricted-u} below, $B$ may contain nodes that belong to a larger tree that contains $\mathcal{Q}$ as a subtree.

\begin{definition} % def. restriction to bottleneck
\label{def:spt-tas-restricted-set-of-nodes}
If $\mathcal{Q}$ is a tree rooted at node $u$ and $B$ is a bottleneck of $\mathcal{Q}$, we define $\mathcal{Q} \upharpoonright B$ to be the restriction of $\mathcal{Q}$ to $B$, that is, the unique subtree of  $\mathcal{Q}$ containing all of the paths in $\mathcal{Q}$ from $u$ to any node in $B$.
\end{definition}

The next lemma is a straightforward characterization of the probability of an SPT in terms of a bottleneck set of its nodes.

\begin{lemma} % lemma 10
\label{lem:alternative-characterization-of-pr}
If $\mathcal{Q}$ is an SPT and $B$ is a bottleneck of $\mathcal{Q}$, then $\textmd{Pr}[\mathcal{Q}] = \displaystyle\sum_{v \in B}{\left( \textmd{Pr}_{\mathcal{Q}}\left[v\right]\cdot \textmd{Pr}\left[ \mathcal{Q}^{v} \right] \right)}$.
\end{lemma}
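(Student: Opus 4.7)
The plan is to partition the set of maximal paths in $\mathcal{Q}$ according to the unique bottleneck node each one passes through, and then factor the probability of each path as (root-to-bottleneck prefix) times (bottleneck-onward suffix).

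First, I would fix a maximal path $\pi$ in $\mathcal{Q}$. By Definition~\ref{def:bottleneck-set-of-nodes}, there is a unique $v_\pi \in B$ with $v_\pi \in \dom{\pi}$. Since $\mathcal{Q}$ is a tree and $\pi$ starts at the root, the portion of $\pi$ up to and including $v_\pi$ is the unique simple path $\rho_v$ from the root to $v_\pi$ in $\mathcal{Q}$, and the portion of $\pi$ from $v_\pi$ onward is a maximal path $\pi_v$ in the subtree $\mathcal{Q}^{v_\pi}$ rooted at $v_\pi$. Conversely, given any $v \in B$ and any maximal path $\pi'$ in $\mathcal{Q}^v$, concatenating $\rho_v$ with $\pi'$ (identifying their shared endpoint $v$) yields a maximal path in $\mathcal{Q}$ that passes through $v$; by the uniqueness clause of the bottleneck condition, this path passes through no other element of $B$. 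Hence for each $v \in B$, the map $\pi' \mapsto \rho_v \cdot \pi'$ is a bijection between maximal paths in $\mathcal{Q}^v$ and maximal paths in $\mathcal{Q}$ that contain $v$.

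Next I would factor the probability. From the definition of $\textmd{Pr}_{\mathcal{Q}}[\pi]$ as the product of the edge-labels along $\pi$ (with the convention that an empty product equals $1$ when $|\pi| = 1$), and the definition of $\textmd{Pr}_{\mathcal{Q}}[v_\pi]$ as the product of edge-labels along $\rho_v$, the probability of any maximal path through $v$ splits as
$$\textmd{Pr}_{\mathcal{Q}}[\pi] \;=\; \textmd{Pr}_{\mathcal{Q}}[v_\pi] \cdot \textmd{Pr}_{\mathcal{Q}^{v_\pi}}[\pi_{v_\pi}],$$
since the edges in $\pi$ are precisely those in $\rho_v$ together with those in $\pi_v$, and every edge of $\mathcal{Q}^{v}$ inherits its label from $\mathcal{Q}$. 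The corner cases (either $v$ is the root, so $\rho_v$ has length one and $\textmd{Pr}_{\mathcal{Q}}[v] = 1$; or $v$ is a leaf, so $\pi_v$ has length one and $\textmd{Pr}_{\mathcal{Q}^{v}}[\pi_v] = 1$) are both covered by the $m=1$ clause of the definition of the probability of a path.

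Finally, summing over all maximal paths and using the partition from Step 1,
$$\textmd{Pr}[\mathcal{Q}] \;=\; \sum_{\pi \textmd{ max. path in }\mathcal{Q}} \textmd{Pr}_{\mathcal{Q}}[\pi] \;=\; \sum_{v \in B}\ \sum_{\substack{\pi \textmd{ max. path in }\mathcal{Q}\\ v \in \dom{\pi}}} \textmd{Pr}_{\mathcal{Q}}[\pi],$$
and then applying the factorization and the bijection of Step 1,
$$\textmd{Pr}[\mathcal{Q}] \;=\; \sum_{v \in B} \textmd{Pr}_{\mathcal{Q}}[v]\ \sum_{\pi' \textmd{ max. path in }\mathcal{Q}^{v}} \textmd{Pr}_{\mathcal{Q}^{v}}[\pi'] \;=\; \sum_{v \in B} \textmd{Pr}_{\mathcal{Q}}[v]\cdot \textmd{Pr}[\mathcal{Q}^{v}],$$
which is the claim. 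The only subtlety, and the one place that warrants explicit care in the write-up, is checking that the bijection and the factorization both behave correctly at the two degenerate positions for $v$ (root and leaf) and for maximal paths that are infinite rays; both are handled by the base case $m=1$ in the definition of $\textmd{Pr}_{\mathcal{Q}}[\pi]$ and by the fact that an infinite maximal path through $v$ still yields an infinite maximal path in $\mathcal{Q}^{v}$.
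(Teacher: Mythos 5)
Your proof is correct and follows essentially the same approach as the paper: partition the maximal paths of $\mathcal{Q}$ by the unique bottleneck node each passes through, factor each path's probability into a root-to-$v$ prefix times a $\mathcal{Q}^v$-suffix, and sum. The paper establishes the same bijection and factorization (stated slightly more tersely, as a correspondence between pairs $(v, \pi')$ and maximal paths of $\mathcal{Q}$, formalized in its Equation~(\ref{eqn:prob-max-path})), so your write-up is a faithful, if more explicit, rendition of the same argument.
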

\begin{proof}
Assume $u$ is the root of $\mathcal{Q}$.
Let $v$ be a node of $\mathcal{Q}$ and $\pi'$ be a maximal path in $\mathcal{Q}^v$.
Since every maximal path in $\mathcal{Q}^v$ starts at $v$ and there exists a unique simple path from $u$ to $v$ in $\mathcal{Q}$, it follows that there exists a unique maximal path $\pi = \left( u, \ldots, v, \ldots \right)$ in $\mathcal{Q}$ such that $\textmd{Pr}_{\mathcal{Q}}\left[ v \right]\cdot \textmd{Pr}_{\mathcal{Q}^v}\left[ \pi' \right] = \textmd{Pr}_{\mathcal{Q}}\left[ \pi \right]$.
Let $\pi$ be a maximal path in $\mathcal{Q}$.
Since $B$ is a bottleneck of $\mathcal{Q}$, there exists a unique node $v \in B$ such that $v \in \dom{\pi}$.
This means we can write $\pi = \left( u, \ldots, v, \ldots \right)$ such that:
\begin{enumerate}
	\item $\left(u, \ldots, v\right)$ is the unique path in $\mathcal{Q}$ from the root to $v$, and 
	\item $\pi' = \left( v, \ldots \right)$ is the unique maximal path in $\mathcal{Q}^v$ such that $\textmd{Pr}_{\mathcal{Q}}\left[ v \right]\cdot \textmd{Pr}_{\mathcal{Q}^v}\left[ \pi' \right] = \textmd{Pr}_{\mathcal{Q}}\left[ \pi \right]$.
\end{enumerate}
The previous two paragraphs imply the following equality:
\begin{equation}
\label{eqn:prob-max-path}
\displaystyle\sum_{v \in B}{\left( \sum_{\pi' \textmd{ maximal path in } \mathcal{Q}^{v}}{\left( \textmd{Pr}_{\mathcal{Q}}\left[v\right] \cdot \textmd{Pr}_{\mathcal{Q}^{v}}\left[ \pi'\right] \right)} \right)} = \displaystyle\sum_{\pi \textmd{ maximal path in } \mathcal{Q}}{\textmd{Pr}_{\mathcal{Q}}[\pi]}
\end{equation}
Then, we have:
\[
\begin{array}{llll}
\displaystyle\sum_{v \in B}{\left( \textmd{Pr}_{\mathcal{Q}}\left[v\right]\cdot \textmd{Pr}\left[ \mathcal{Q}^{v} \right] \right)} & = & \displaystyle\sum_{v \in B}{\left( \textmd{Pr}_{\mathcal{Q}}\left[v\right] \cdot \left( \sum_{\pi' \textmd{ maximal path in } \mathcal{Q}^{v}}{\textmd{Pr}_{\mathcal{Q}^{v}}\left[\pi'\right]} \right) \right)} & \textmd{ Definition of } \textmd{Pr}\left[\mathcal{Q}^{v}\right] \\

	& = & \displaystyle\sum_{v \in B}{\left( \sum_{\pi' \textmd{ maximal path in } \mathcal{Q}^{v}}{\left( \textmd{Pr}_{\mathcal{Q}}\left[v\right] \cdot \textmd{Pr}_{\mathcal{Q}^{v}}\left[ \pi'\right] \right)} \right)} & \\
	
	& = & \displaystyle\sum_{\pi \textmd{ maximal path in } \mathcal{Q}}{\textmd{Pr}_{\mathcal{Q}}[\pi]} & \textmd{ Equation~(\ref{eqn:prob-max-path}) } \\
	
	& = & \textmd{Pr}[\mathcal{Q}] & \textmd{ Definition of } \textmd{Pr}[\mathcal{Q}].
\end{array}
\]
\end{proof}

The following corollary is analogous to Lemma 1 of \cite{ChandranGR12} but is stated and proved in the framework and notation of this paper.  

\begin{corollary} % corollary 3
\label{cor:normalized-probability-1}
If $\mathcal{Q}$ is an SPT and all of its internal nodes are normalized, then $\textmd{Pr}[\mathcal{Q}] = 1$.
\end{corollary}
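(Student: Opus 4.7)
My plan is to combine Lemma~\ref{lem:alternative-characterization-of-pr} with a layered sequence of bottlenecks together with a straightforward induction on depth.

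First, for each integer $n \geq 0$, I would define $B_n$ to be the set consisting of every node of $\mathcal{Q}$ at depth exactly $n$ together with every leaf of $\mathcal{Q}$ at depth strictly less than $n$. I would then verify that $B_n$ is a bottleneck of $\mathcal{Q}$ in the sense of Definition~\ref{def:bottleneck-set-of-nodes}: a finite maximal path terminates at a unique leaf $u$ and meets $B_n$ either at $u$ itself (when $u$ is at depth $<n$) or at its unique depth-$n$ ancestor (when $u$ is at depth $\geq n$), while every infinite maximal path meets $B_n$ at its unique depth-$n$ vertex.

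Next, I would prove by induction on $n$ that $\sum_{v \in B_n} \textmd{Pr}_{\mathcal{Q}}[v] = 1$. The base case holds because $B_0$ is the singleton $\{r\}$ consisting of the root $r$ and $\textmd{Pr}_{\mathcal{Q}}[r] = 1$ by definition. For the inductive step, $B_{n+1}$ is obtained from $B_n$ by retaining every leaf already in $B_n$ (including leaves that lie at depth exactly $n$) and replacing each internal node $u$ at depth $n$ by the set of its children. The hypothesis that $u$ is normalized yields $\sum_{v \text{ child of } u} \textmd{Pr}_{\mathcal{Q}}[v] = \textmd{Pr}_{\mathcal{Q}}[u] \cdot S_u = \textmd{Pr}_{\mathcal{Q}}[u]$, so summing over all such $u$ shows that the total probability mass of $B_n$ is preserved when passing to $B_{n+1}$.

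Applying Lemma~\ref{lem:alternative-characterization-of-pr} to $B_n$ then gives $\textmd{Pr}[\mathcal{Q}] = \sum_{v \in B_n} \textmd{Pr}_{\mathcal{Q}}[v] \cdot \textmd{Pr}[\mathcal{Q}^v]$. For every leaf $v \in B_n$, the only maximal path in $\mathcal{Q}^v$ is $(v)$, so $\textmd{Pr}[\mathcal{Q}^v] = 1$; for every internal node $v \in B_n$, the same layered argument applied to $\mathcal{Q}^v$ yields $\textmd{Pr}[\mathcal{Q}^v] \leq 1$. Combined with the invariant, this already gives the upper bound $\textmd{Pr}[\mathcal{Q}] \leq 1$. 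The main obstacle is the matching lower bound when $\mathcal{Q}$ has unbounded depth: the residual $\sum_{v \in B_n \text{ internal}} \textmd{Pr}_{\mathcal{Q}}[v] \cdot \textmd{Pr}[\mathcal{Q}^v]$ accounts for precisely the mass of maximal paths extending past depth $n$, and I would close the argument by identifying, through a monotone-convergence-style limit as $n \to \infty$, this vanishing residual with the aggregate probability of the infinite rays of $\mathcal{Q}$. Combined with the preserved identity $\sum_{v \in B_n}\textmd{Pr}_{\mathcal{Q}}[v] = 1$, this establishes $\textmd{Pr}[\mathcal{Q}] = 1$.
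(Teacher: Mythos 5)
Your proof takes a genuinely different route from the paper's. The paper uses a short structural induction: the base case is the single-node SPT, and the inductive step takes $B$ to be the set of children of the root, applies Lemma~\ref{lem:alternative-characterization-of-pr} with this bottleneck, invokes the inductive hypothesis $\textmd{Pr}[\mathcal{Q}^b]=1$ for each $b\in B$, and finishes using the normalization of the root to get $\textmd{Pr}[\mathcal{Q}]=\sum_{b\in B}\textmd{Pr}_{\mathcal{Q}}[b]=1$. Your depth-layered decomposition, the verification that each $B_n$ is a bottleneck, and the invariant $\sum_{v\in B_n}\textmd{Pr}_{\mathcal{Q}}[v]=1$ are all correct and cleanly give the upper bound $\textmd{Pr}[\mathcal{Q}]\leq 1$; you also correctly anticipate that the difficulty is the matching lower bound when $\mathcal{Q}$ has infinite branches.

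But the closing step has a genuine gap: identifying the limiting internal residual with the aggregate probability of the infinite rays is not justified, and in fact it is false. Take $\mathcal{Q}$ to be the complete infinite binary tree with every edge labeled $\frac{1}{2}$. Every internal node is normalized and $\sum_{v\in B_n}\textmd{Pr}_{\mathcal{Q}}[v]=1$ holds at every depth $n$, yet every maximal path $\pi$ is an infinite ray with $\textmd{Pr}_{\mathcal{Q}}[\pi]=\prod_{i\geq 1}\frac{1}{2}=0$, so $\textmd{Pr}[\mathcal{Q}]=0$. Here the residual never decays toward $0$, it does not coincide with the (zero) total probability mass of the rays, and the corollary itself fails. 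No monotone-convergence device will push your proof through for arbitrary SPTs, because the statement requires an additional finiteness hypothesis: that every maximal path of $\mathcal{Q}$ is finite (equivalently, that the subtree recursion is well-founded). The paper's structural induction tacitly assumes exactly this; if you pursue your layered route you should state that hypothesis explicitly and drop the attempt to account for infinite rays.
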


\begin{proof}
We prove this lemma by structural induction on SPTs. For the basis
step, if $\mathcal{Q}$ is made up of a single node, then the
corollary holds vacuously. For the inductive step, define $B$ to be the
(non-empty) set of all the child nodes in $\mathcal{Q}$ of the root of
$\mathcal{Q}$. Trivially, $B$ is a bottleneck of $\mathcal{Q}$ and
$\mathcal{Q}^b$ is an SPT for each node $b \in B$.  Now assume
(inductive hypothesis) that, for each $b \in B$, if all of the
internal nodes of $\mathcal{Q}^b$ are normalized, then
$\textmd{Pr}[\mathcal{Q}^b] = 1$. We now prove that the corollary
holds for $\mathcal{Q}$. Assume that its antecedent holds for $\mathcal{Q}$, i.e., all
of the internal nodes of $\mathcal{Q}$ are normalized. Therefore:
\begin{enumerate}
\item[a)] the root of $\mathcal{Q}$, as one of its internal nodes, is normalized
and
\item[b)] for each $b \in B$, all of the internal nodes of $\mathcal{Q}^b$ (if any) are normalized.
\end{enumerate}

Item  b) above, combined  with our inductive hypothesis, yields: 
\begin{equation}
\label{eqn:prQb=1}
\textmd{for each } b \in B \textmd{, Pr}[\mathcal{Q}^b] = 1
\end{equation}

We now prove that the conclusion of the corollary holds for  $\mathcal{Q}$:

\[
\begin{array}{llll}
\textmd{Pr}[\mathcal{Q}] & = & \displaystyle\sum_{b \in B}{\left( \textmd{Pr}_{\mathcal{Q}}\left[b\right]\cdot \textmd{Pr}\left[ \mathcal{Q}^{b} \right] \right)} & \textmd{ Lemma~\ref{lem:alternative-characterization-of-pr}} \\

		 & = & \displaystyle\sum_{b \in B}{\left( \textmd{Pr}_{\mathcal{Q}}\left[b\right]\cdot 1 \right)} & \textmd{ Equation~\ref{eqn:prQb=1} }\\

		& = & 1 & \textmd{ Item a) above combined with the definition of } B\\

\end{array}
\]
\end{proof}
    
The next lemma says that a finite, $w$-correct $\mathcal{T}$-producing assembly sequence cannot be extended by attaching a tile at an inessential POC.
\begin{lemma} % lemma 11
\label{lem:w-correct-no-inessential-poc}
If 
%$\mathcal{T}$ is agreeably deterministic,
%
$\vec{y} \in P \backslash Y$, and
    $\vec{\beta} = \left( \beta_i \mid 1 \leq i \leq k \right)$ is a finite, $w$-correct $\mathcal{T}$-producing assembly sequence with $\beta = \res{\vec{\beta}} \not \in \mathcal{A}_{\Box}[\mathcal{T}]$, 
    then $\vec{y} \not \in \partial^{\mathcal{T}}\beta$.
\end{lemma}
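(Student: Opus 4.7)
The plan is a short proof by contradiction that pulls together Lemma~\ref{lem:w-correct-subassembly-w-correct} (every $w$-correct $\mathcal{T}$-producible assembly is a subassembly of the unique $w$-correct terminal assembly $\alpha$) with condition~\ref{def:correct-assembly-2} of Definition~\ref{def:correct-assembly} (a $w$-correct assembly cannot contain any inessential POC in its domain).

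First I would assume for the sake of contradiction that $\vec{y} \in \partial^{\mathcal{T}}\beta$. Since the temperature of $\mathcal{T}$ is $1$, there must exist a tile type $t \in T$ and an adjacent point $\vec{y}\,' \in \dom{\beta}$ such that the tile $\beta(\vec{y}\,')$ at $\vec{y}\,'$ and a copy of $t$ placed at $\vec{y}$ bind via matching glues of positive strength. Next, I would invoke Lemma~\ref{lem:w-correct-subassembly-w-correct}, whose hypotheses are satisfied because $\mathcal{T}$ is directionally deterministic by condition~\ref{def:snd-1} of Definition~\ref{def:seq-non-deterministic}, $\alpha$ is a $w$-correct terminal assembly, and $\beta$ is a $w$-correct $\mathcal{T}$-producible assembly; this gives $\beta \sqsubseteq \alpha$, so $\vec{y}\,' \in \dom{\alpha}$ and $\alpha(\vec{y}\,') = \beta(\vec{y}\,')$.

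Now I would show $\vec{y} \notin \dom{\alpha}$: since $\alpha$ is $w$-correct and $\vec{y} \in P \setminus Y$, condition~\ref{def:correct-assembly-2} of Definition~\ref{def:correct-assembly} yields $\dom{\alpha} \cap (P \setminus Y) = \emptyset$, whence $\vec{y} \notin \dom{\alpha}$. Combining the previous two paragraphs, the same tile $t$ that witnessed $\vec{y} \in \partial^{\mathcal{T}}\beta$ also witnesses $\vec{y} \in \partial^{\mathcal{T}}\alpha$, because $\vec{y}$ is empty in $\alpha$ but its neighbor $\vec{y}\,'$ carries the identical tile $\alpha(\vec{y}\,') = \beta(\vec{y}\,')$ with the same binding glue. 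This contradicts the fact that $\alpha \in \mathcal{A}_\Box[\mathcal{T}]$ is terminal, i.e., $\partial^{\mathcal{T}}\alpha = \emptyset$, completing the argument.

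There is no real obstacle: all the heavy lifting has already been done in Lemma~\ref{lem:w-correct-subassembly-w-correct} and Corollary~\ref{cor:unique-w-correct-assembly}. The only subtlety is making sure the matching glue argument correctly transfers from $\beta$ to $\alpha$, which is immediate since $\beta \sqsubseteq \alpha$ implies the neighbor of $\vec{y}$ carries the exact same tile type in both assemblies, so any tile type attachable at $\vec{y}$ in $\beta$ is attachable at $\vec{y}$ in $\alpha$ as well.
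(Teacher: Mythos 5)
Your proof is correct, and it takes a cleaner route than the paper's. Both proofs begin the same way: invoke Lemma~\ref{lem:w-correct-subassembly-w-correct} to obtain $\beta \sqsubseteq \alpha$ for the unique $w$-correct terminal assembly $\alpha$. From there the two arguments diverge. The paper uses the fact that $\alpha$ is terminal together with $\beta \sqsubseteq \alpha$ to conclude $\vec{y} \in \dom{\alpha}$ (this step implicitly uses the same glue-transfer observation you make explicit), then invokes Lemma~\ref{lem:w-correct-extended-to-w-correct-terminal} to build an extension of $\vec{\beta}$ that results in $\alpha$, and locates a $w$-incorrect intermediate assembly in that extension to get the contradiction. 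You instead apply condition~\ref{def:correct-assembly-2} of Definition~\ref{def:correct-assembly} to $\alpha$ directly to conclude $\vec{y} \notin \dom{\alpha}$, and then use the glue-transfer argument (plus $\beta \sqsubseteq \alpha$) to show that $\vec{y}$ would nevertheless lie in $\partial^{\mathcal{T}}\alpha$, contradicting terminality. The net effect is that your proof dispenses entirely with Lemma~\ref{lem:w-correct-extended-to-w-correct-terminal}, which shortens the dependency chain; indeed, the paper's own extension step is arguably superfluous, since once it has $\vec{y} \in \dom{\alpha}$ the contradiction with $w$-correctness of $\alpha$ itself (via condition~\ref{def:correct-assembly-2}) is immediate. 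The only thing your version requires a reader to check carefully is that the attachability witness for $\beta$ transfers to $\alpha$ (which it does, since in the temperature-$1$ aTAM the presence of additional neighbors in $\alpha$ cannot destroy a stable attachment and you have already ruled out $\vec{y} \in \dom{\alpha}$), and you address this explicitly.
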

%
%\color{red}
Lemma~\ref{lem:w-correct-no-inessential-poc} and Observation~\ref{obs:alpha-w-correct-iff-node-in-p-corr-alpha} together imply that no node of $\mathcal{P}$ can be extended by attaching a tile at an inessential POC.
%\color{black}
%
\begin{proof}
Assume, for the sake of obtaining a contradiction, that $\vec{y} \in \partial^{\mathcal{T}} \beta$.
Assume there exists a $w$-correct $\mathcal{T}$-terminal assembly $\alpha$ which, by Corollary~\ref{cor:unique-w-correct-assembly}, is unique.
Note that, by Lemma~\ref{lem:w-correct-subassembly-w-correct}, we have $\beta \sqsubseteq \alpha$.
We also know that $\alpha \in \mathcal{A}_{\Box}[\mathcal{T}]$ and $\vec{y} \in \partial^{\mathcal{T}} \beta$.
Thus, $\vec{y} \in \dom{\alpha}$.
Since $\vec{\beta}$ is a finite, $w$-correct, $\mathcal{T}$-producing assembly sequence and $\beta_k \not \in \mathcal{A}_{\Box}[\mathcal{T}]$, by Lemma~\ref{lem:w-correct-extended-to-w-correct-terminal}, there exist $m \in \Z^+\cup\{\infty\}$ and an extension $\vec{\beta}' = \left( \beta'_i \mid 0 \leq i-1 < m\right)$ of $\vec{\beta}$ that results in $\alpha$.
Since $\vec{y} \in \dom{\alpha}$, there exists an integer $i>1$ such that $\beta'_i = \beta'_{i-1} + \left( \vec{y}, \alpha\left(\vec{y}\, \right) \right)$. 
Therefore, $\vec{y} \in \dom{\beta'_i}$. Since $\vec{y} \in P \backslash Y$, condition~\ref{def:correct-assembly-2} of Definition~\ref{def:correct-assembly} implies that $\beta'_i$ is not $w$-correct.
But this contradicts the fact that $\beta'_i$ must be $w$-correct, since  $\vec{\beta}'$ is an extension of the $w$-correct $\vec{\beta}$ that results in the $w$-correct $\alpha$. This contradiction means that  $\vec{y} \not \in \partial^{\mathcal{T}} \beta$. 
\end{proof}
%
%NOTE: a finite, $w$-correct $\mathcal{T}$-producing assembly sequence may result in an assembly that places tiles at points belonging to paths competing for an inessential POC, but not at the inessential POC. 
%
% 

The next lemma says that every $w$-correct node in $\mathcal{M}_{\mathcal{T}}$ that is a child of some node in $\mathcal{P}$ is a node in $\mathcal{P}$.

\begin{lemma} % lemma 12
\label{lem:extend-to-alpha}
Let $\vec{\beta}$ be a finite, $w$-correct, $\mathcal{T}$-producing assembly sequence such that $\beta = \res{\vec{\beta}}$ and $\beta \not \in \mathcal{A}_{\Box}[\mathcal{T}]$. If $\vec{\beta}'$ is a finite, $w$-correct, $\mathcal{T}$-producing assembly sequence, and is a child of $\vec{\beta}$ in $\mathcal{M}_{\mathcal{T}}$, then $\vec{\beta}'$ is a child of $\vec{\beta}$ in $\mathcal{P}$. 
\end{lemma}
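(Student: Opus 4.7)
The plan is to show that both $\vec{\beta}$ and $\vec{\beta}'$ are nodes of $\mathcal{P}$ and that the edge from $\vec{\beta}$ to $\vec{\beta}'$ survives the pruning that defines $\mathcal{P}$ out of $\mathcal{M}_{\mathcal{T}}$. By Definition~\ref{def:w-pruned}, a node $\vec{\gamma}$ of $\mathcal{M}_{\mathcal{T}}$ is a node of $\mathcal{P}$ exactly when it lies on some maximal path of $\mathcal{M}_{\mathcal{T}}$ whose corresponding $\mathcal{T}$-producing assembly sequence results in $\alpha$. Equivalently, $\vec{\gamma}$ is a node of $\mathcal{P}$ iff $\vec{\gamma}$ can be extended to an assembly sequence resulting in $\alpha$ (or already does so).

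First, I would establish that $\vec{\beta}$ itself is a node of $\mathcal{P}$. Since $\vec{\beta}$ is finite, $w$-correct, and $\res{\vec{\beta}} \not\in \mathcal{A}_{\Box}[\mathcal{T}]$, Lemma~\ref{lem:w-correct-extended-to-w-correct-terminal} yields an extension of $\vec{\beta}$ by some $\mathcal{T}$-assembly sequence whose result is $\alpha$. This extension corresponds to a maximal path of $\mathcal{M}_{\mathcal{T}}$ through $\vec{\beta}$ resulting in $\alpha$, so $\vec{\beta}$ is a node of $\mathcal{P}$.

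Second, I would verify that $\vec{\beta}'$ is also a node of $\mathcal{P}$ by a case split on whether $\res{\vec{\beta}'} \in \mathcal{A}_{\Box}[\mathcal{T}]$. In the terminal case, since $\vec{\beta}'$ is $w$-correct, $\res{\vec{\beta}'}$ is a $w$-correct $\mathcal{T}$-terminal assembly, and Corollary~\ref{cor:unique-w-correct-assembly} forces $\res{\vec{\beta}'} = \alpha$; thus $\vec{\beta}'$ is itself a leaf corresponding to a maximal path of $\mathcal{M}_{\mathcal{T}}$ resulting in $\alpha$. In the non-terminal case, Lemma~\ref{lem:w-correct-extended-to-w-correct-terminal} applied to $\vec{\beta}'$ extends it to a sequence resulting in $\alpha$, placing $\vec{\beta}'$ on such a maximal path. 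Either way, $\vec{\beta}'$ is a node of $\mathcal{P}$.

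Finally, the edge from $\vec{\beta}$ to $\vec{\beta}'$ in $\mathcal{M}_{\mathcal{T}}$ lies on the unique path from the root of $\mathcal{M}_{\mathcal{T}}$ to $\vec{\beta}'$, which, in each of the two cases above, is a prefix of a maximal path in $\mathcal{M}_{\mathcal{T}}$ resulting in $\alpha$. Since $\mathcal{P}$ consists exactly of the union of such maximal paths, this edge is retained in $\mathcal{P}$, so $\vec{\beta}'$ is a child of $\vec{\beta}$ in $\mathcal{P}$. I do not expect any real obstacle here; the argument is essentially bookkeeping that combines uniqueness of the $w$-correct terminal assembly with the extensibility guaranteed by Lemma~\ref{lem:w-correct-extended-to-w-correct-terminal}, confirming that the pruning defining $\mathcal{P}$ respects single-step $w$-correct transitions.
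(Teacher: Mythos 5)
Your proposal is correct and follows essentially the same approach as the paper's own proof: a case split on whether $\res{\vec{\beta}'}$ is $\mathcal{T}$-terminal, invoking Corollary~\ref{cor:unique-w-correct-assembly} in the terminal case and Lemma~\ref{lem:w-correct-extended-to-w-correct-terminal} in the non-terminal case. You add a small preliminary step explicitly establishing that $\vec{\beta}$ is itself a node of $\mathcal{P}$, which the paper leaves implicit; this is a harmless (and arguably welcome) bit of extra rigor rather than a divergent argument.
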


To show that a node of (a subtree that is full relative to) $\mathcal{P}$ is normalized, it suffices to show that the set of its child nodes in $\mathcal{M}_{\mathcal{T}}$ is the set of its child nodes in (the subtree of) $\mathcal{P}$. 
We will use Lemma~\ref{lem:extend-to-alpha} as an intermediate step in showing that such a node is normalized. 

\begin{proof}
Let $\beta^{\prime } = \res{\vec{\beta}^{ \prime}}$.
We consider two cases:
\begin{enumerate}
	\item Assume $\beta^{\prime } \in \mathcal{A}_{\Box}[\mathcal{T}]$.
	By Observation~\ref{obs:alpha-w-correct-iff-node-in-p-corr-alpha}, every leaf node of $\mathcal{P}$ corresponds to a $w$-correct, $\mathcal{T}$-producing assembly sequence that, by Corollary~\ref{cor:unique-w-correct-assembly}, results in $\alpha$.
	This means $\vec{\beta}^{\prime }$ is a leaf node in $\mathcal{P}$ and therefore a child of $\vec{\beta}$ in $\mathcal{P}$.
	\item Assume $\beta^{\prime } \not \in \mathcal{A}_{\Box}[\mathcal{T}]$. Then, by Lemma~\ref{lem:w-correct-extended-to-w-correct-terminal}, there is an extension of $\vec{\beta}^{\prime }$ by some $\mathcal{T}$-assembly sequence that results in $\alpha$.
	Such an extension testifies to $\vec{\beta}^{\prime }$ being a child of $\vec{\beta}$ in $\mathcal{P}$ because it means $\vec{\beta}^{\prime }$ is on a maximal path of $\mathcal{P}$ that corresponds to a $w$-correct, $\mathcal{T}$-producing assembly sequence that results in $\alpha$. 
\end{enumerate}
In each case, $\vec{\beta}^{\prime }$ is a child of $\vec{\beta}$ in $\mathcal{P}$.
\end{proof}

The following lemma gives sufficient conditions for when the probability of a certain kind of subtree of $\mathcal{P}$ is equal to 1.

\begin{lemma} % lemma 13
\label{lem:deterministic-spt-probability-1-generalized}
Let $\mathcal{Q}$ be a subtree of $\mathcal{P}$ such that $\mathcal{Q}$ is full relative to $\mathcal{P}$. If no nodes in $\mathcal{Q}$, with the possible exception of its root, terminate at an essential POC in $\mathcal{T}$, then every internal node of $\mathcal{Q}$ is normalized.
\end{lemma}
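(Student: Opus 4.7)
The plan is to show that for an arbitrary internal node $\vec{\beta}$ of $\mathcal{Q}$, every one of its children in $\mathcal{M}_{\mathcal{T}}$ is also a child in $\mathcal{Q}$. Since the edge probability on each of the $M_\beta$ outgoing edges of $\vec{\beta}$ in $\mathcal{M}_{\mathcal{T}}$ equals $\frac{1}{M_\beta}$ and is inherited unchanged by $\mathcal{Q}$, this would immediately give $S_{\vec{\beta}} = 1$, i.e., $\vec{\beta}$ is normalized. Because $\vec{\beta}$ is internal in $\mathcal{Q}\subseteq\mathcal{P}$, Observation~\ref{obs:alpha-w-correct-iff-node-in-p-corr-alpha} gives that $\vec{\beta}$ is $w$-correct and $\beta=\res{\vec{\beta}}\sqsubseteq\alpha$, and the existence of children forces $\beta\notin\mathcal{A}_{\Box}[\mathcal{T}]$.

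Now fix any child $\vec{\beta}\,' = \vec{\beta} + (\vec{p},t)$ of $\vec{\beta}$ in $\mathcal{M}_{\mathcal{T}}$, with $\vec{p} \in \partial^{\mathcal{T}}\beta$. I would argue by trichotomy on $\vec{p}$ with respect to $P$ and $Y$. First, Lemma~\ref{lem:w-correct-no-inessential-poc} applied to the $w$-correct $\vec{\beta}$ rules out $\vec{p} \in P \setminus Y$. Second, I would rule out $\vec{p} \in Y$ by using the hypothesis in a self-referential way: consider the sequence $\vec{\beta}^{\,*} = \vec{\beta} + (\vec{p}, w(\vec{p}))$, which is $w$-correct (the only place to check is the new essential POC, which receives its winner tile) and satisfies $\res{\vec{\beta}^{\,*}} \sqsubseteq \alpha$; by Lemma~\ref{lem:extend-to-alpha}, $\vec{\beta}^{\,*}$ is a child of $\vec{\beta}$ in $\mathcal{P}$, and by fullness of $\mathcal{Q}$ relative to $\mathcal{P}$, $\vec{\beta}^{\,*}$ must also be in $\mathcal{Q}$. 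But $\vec{\beta}^{\,*}$ is a non-root node of $\mathcal{Q}$ that terminates at the essential POC $\vec{p}$, contradicting the lemma's hypothesis.

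The only remaining case is $\vec{p} \notin P$. Here I would invoke the directional determinism of $\mathcal{T}$ (condition~\ref{def:dd-ad-pt-1} of Definition~\ref{def:dd-ad-pt}): because $\beta \sqsubseteq \alpha$, the assemblies $\beta$ and any subassembly of $\alpha$ reaching $\vec{p}$ in its frontier agree, so the tile that attaches at $\vec{p}$ is forced to be $\alpha(\vec{p})$. Hence $t = \alpha(\vec{p})$ and $\beta' = \beta + (\vec{p}, \alpha(\vec{p})) \sqsubseteq \alpha$. Since $\vec{p} \notin Y$ and $\vec{p} \notin P \setminus Y$, the two conditions of Definition~\ref{def:correct-assembly} for $\vec{\beta}\,'$ reduce to those for $\vec{\beta}$, so $\vec{\beta}\,'$ is $w$-correct. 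Lemma~\ref{lem:extend-to-alpha} then places $\vec{\beta}\,'$ as a child of $\vec{\beta}$ in $\mathcal{P}$, and fullness puts it in $\mathcal{Q}$.

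The main obstacle I expect is the POC case $\vec{p} \in Y$: the delicate point is that one must simultaneously use the fullness of $\mathcal{Q}$ in $\mathcal{P}$ (to ensure the $w$-correct child is dragged into $\mathcal{Q}$) and the hypothesis of the lemma (to derive a contradiction from that child's presence). The other routine task is verifying, in the $\vec{p}\notin P$ case, that directional determinism really forces uniqueness of the attaching tile; this follows straightforwardly from $\beta$ and the relevant subassembly of $\alpha$ agreeing, but the citation to condition~\ref{def:dd-ad-pt-1} of Definition~\ref{def:dd-ad-pt} should be made explicit.
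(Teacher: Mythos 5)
Your proof is correct and follows essentially the same approach as the paper: show each child of an internal node in $\mathcal{M}_{\mathcal{T}}$ must also be a child in $\mathcal{Q}$, ruling out $P\setminus Y$ via Lemma~\ref{lem:w-correct-no-inessential-poc}, ruling out $Y$ by constructing a $w$-correct child that would violate the hypothesis, and handling $\vec{p}\notin P$ via $w$-correctness preservation and Lemma~\ref{lem:extend-to-alpha}. The only difference is that your invocation of directional determinism in the $\vec{p}\notin P$ case to pin down $t=\alpha(\vec{p})$ is not actually needed: since $\vec{p}\notin P$, the child $\vec{\beta}'$ is $w$-correct regardless of $t$, and Lemma~\ref{lem:extend-to-alpha} (which internally relies on Lemma~\ref{lem:w-correct-subassembly-w-correct} and hence on directional determinism) already delivers membership in $\mathcal{P}$.
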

\begin{proofsketch}
In the non-vacuous case, we let $\vec{\beta}$ be an arbitrary internal
node of $\mathcal{Q}$ and proceed as follows:
\begin{enumerate}
\item Prove that no child of  $\vec{\beta}$ in  $\mathcal{M}_{\mathcal{T}}$ may terminate at a POC (essential or otherwise) in $\mathcal{T}$.
\item Prove that any child of  $\vec{\beta}$ in  $\mathcal{M}_{\mathcal{T}}$ must be a child of  $\vec{\beta}$ in  $\mathcal{Q}$.
\item Prove that any internal node of $\mathcal{Q}$ is normalized.
\end{enumerate}\vspace*{-5mm}
\end{proofsketch}

\begin{proof}
If $\mathcal{Q}$ is made up of a single node, then the lemma holds vacuously.
So, going forward, assume $\mathcal{Q}$ contains more than one node
and that every child node in $\mathcal{Q}$ terminates at a point that is not in $Y$.
For some $k \in \mathbb{Z}^+$, let $\vec{\beta} = \left( \beta_i \mid 1 \leq i \leq k \right)$ be an arbitrary internal node of $\mathcal{Q}$. 
By Observation~\ref{obs:alpha-w-correct-iff-node-in-p-corr-alpha},  $\vec{\beta}$ is a finite, $w$-correct, $\mathcal{T}$-producing assembly sequence.
Since $\vec{\beta}$ is an internal node of $\mathcal{Q}$, and $Q$ is a subtree of $\mathcal{P}$, $\vec{\beta}$ has at least one child in $\mathcal{M}_{\mathcal{T}}$. 
Let $\vec{\beta}^{\prime}$ be an arbitrary child of $\vec{\beta}$ in $\mathcal{M}_{\mathcal{T}}$.
{\em Step 1.}
Suppose, for the sake of obtaining a contradiction, that $\vec{\beta}^{\prime}$ terminates at some $\vec{y} \in P$. 
Since, by Lemma~\ref{lem:w-correct-no-inessential-poc}, no child of $\vec{\beta}$ in $\mathcal{M}_{\mathcal{T}}$, including $\vec{\beta}'$, may terminate at any point in $P\backslash Y$, $\vec{y} \in Y$.
Let $t = w\left(\vec{y}\right) \in T$.
Then $\vec{y} \in \partial^{\mathcal{T}}_t \beta_k$ because $\vec{\beta}^{\prime}$ terminates at $\vec{y}$ and $\vec{\beta}$ is a $w$-correct assembly sequence that results in a $\mathcal{T}$-assembly that is not terminal.
Thus $\vec{\beta}^{\prime } = \left( \beta_1, \ldots, \beta_k, \beta_k + \left(\vec{y},t\right) \right)$ is finite, $w$-correct, and $\mathcal{T}$-producing, because $\vec{\beta}$ satisfies all of these conditions. 
By Lemma~\ref{lem:extend-to-alpha}, $\vec{\beta}^{\prime}$ is a child of $\vec{\beta}$ in $\mathcal{P}$.
Moreover, by definition, $\vec{\beta}^{\prime}$ is a node that terminates at an essential POC in $\mathcal{T}$.
Therefore, $\vec{\beta}^{\prime}$ is a child of $\vec{\beta}$ in $\mathcal{Q}$, since  $\mathcal{Q}$ is assumed to be full relative to  $\mathcal{P}$.
However, such a $\vec{\beta}^{\prime }$ being a child of $\vec{\beta}$ in $\mathcal{Q}$ contradicts the hypothesis of this lemma.
Therefore, no child of $\vec{\beta}$ in $\mathcal{M}_{\mathcal{T}}$ may terminate at $\vec{y} \in P$. 
{\em Step 2.}
Assume $\vec{\beta}^{\prime} = \left( \beta_1, \ldots, \beta_k, \beta_{k+1} \right)$ and let $\vec{p}\in\Z^2$ be the single element in $\dom{\beta_{k+1}} \backslash \dom{\beta_k}$.
Then $\vec{p}\not \in P$, because we ruled out in Step 1 the possibility of any child of $\vec{\beta}$ terminating at a POC.
This means $\vec{\beta}^{\prime}$ is finite, $w$-correct, and $\mathcal{T}$-producing, because $\vec{\beta}$ satisfies all of these conditions.
By Lemma~\ref{lem:extend-to-alpha}, it follows that $\vec{\beta}^{\prime}$ is a child of $\vec{\beta}$ in $\mathcal{Q}$, since  $\mathcal{Q}$ is assumed to be full relative to  $\mathcal{P}$.
{\em Step 3.} Since we assume that $\mathcal{Q}$ is full relative to $\mathcal{P}$, if $\vec{\beta}^{\prime}$ is a child of $\vec{\beta}$ in $\mathcal{M}_{\mathcal{T}}$ (and thus a child of $\vec{\beta}$ in $\mathcal{P}$ as well, since $\vec{\beta}^{\prime}$ is $w$-correct), then $\vec{\beta}^{\prime}$ is also a child of $\vec{\beta}$ in $\mathcal{Q}$.
Since all children of  $\vec{\beta}$ in $\mathcal{M}_{\mathcal{T}}$ are also children of $\vec{\beta}$ in $\mathcal{Q}$, the assignment of probabilities in $\mathcal{M}_{\mathcal{T}}$ implies that every internal node of $\mathcal{Q}$ is normalized. 
\end{proof}

\begin{corollary} % corollary 4
\label{cor:deterministic-spt-probability-1-generalized}
$\textmd{Pr}\left[ \mathcal{Q} \right] = 1$
\end{corollary}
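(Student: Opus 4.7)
The plan is essentially a one-line reduction, since the preceding Lemma~\ref{lem:deterministic-spt-probability-1-generalized} already did the substantive work. I would argue that $\mathcal{Q}$ is itself an SPT: each edge of $\mathcal{Q}$ is inherited from $\mathcal{M}_{\mathcal{T}}$, hence carries a probability in $(0,1]$, and for each node $v$ of $\mathcal{Q}$ the sum $S_v$ in $\mathcal{Q}$ cannot exceed the corresponding sum in $\mathcal{M}_{\mathcal{T}}$, which is $1$. So $\mathcal{Q}$ satisfies both conditions of the definition of an SPT.

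Next I would invoke Lemma~\ref{lem:deterministic-spt-probability-1-generalized} directly: the hypotheses of the corollary (that $\mathcal{Q}$ is a subtree of $\mathcal{P}$, full relative to $\mathcal{P}$, and that no node except possibly its root terminates at an essential POC) are exactly the hypotheses of that lemma, so every internal node of $\mathcal{Q}$ is normalized. Having established that $\mathcal{Q}$ is an SPT all of whose internal nodes are normalized, I would then apply Corollary~\ref{cor:normalized-probability-1} to conclude $\textmd{Pr}[\mathcal{Q}] = 1$.

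The main (and essentially only) obstacle is the bookkeeping check that $\mathcal{Q}$ qualifies as an SPT in the sense of the paper's definition; once that is verified, the rest is a direct chaining of the previous lemma and corollary. I do not anticipate any deeper difficulty, since the non-trivial combinatorial content about which tile attachments from a node of $\mathcal{P}$ remain in $\mathcal{Q}$ was already absorbed into the proof of Lemma~\ref{lem:deterministic-spt-probability-1-generalized}.
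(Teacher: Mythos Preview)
Your proposal is correct and follows essentially the same approach as the paper. The paper's proof is the one-liner ``Corollary~\ref{cor:normalized-probability-1} implies that $\textmd{Pr}[\mathcal{Q}] = 1$,'' relying on the immediately preceding Lemma~\ref{lem:deterministic-spt-probability-1-generalized} for the normalization of internal nodes; your version adds the explicit (but straightforward) verification that $\mathcal{Q}$ is an SPT, which the paper leaves implicit.
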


\begin{proof}
Corollary~\ref{cor:normalized-probability-1} implies that  $\textmd{Pr}[\mathcal{Q}] = 1$.
\end{proof}

If $\mathcal{Q}$ is finitely branching (which $\mathcal{M}_{\mathcal{T}}$ and any of its subtrees are), then any restriction of $\mathcal{Q}$ to one of its  bottlenecks is a finite tree.

\begin{lemma} % lemma 14
\label{lem:restricted-tree-finite}
If $\mathcal{Q}$ is a rooted, finitely branching tree, and $B$ is a bottleneck of $\mathcal{Q}$, then $\mathcal{Q} \upharpoonright B$ is finite and its set of leaf nodes is $B$.
\end{lemma}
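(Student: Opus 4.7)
The plan is to prove the two claims separately, in the order stated, and then deduce finiteness from König's lemma. Throughout, let $u$ denote the root of $\mathcal{Q}$, and recall that $\mathcal{Q} \upharpoonright B$ consists of every node lying on some simple path in $\mathcal{Q}$ from $u$ to a node in $B$; in particular $\mathcal{Q} \upharpoonright B$ is itself finitely branching, since each node's children in $\mathcal{Q} \upharpoonright B$ form a subset of its children in $\mathcal{Q}$.

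First I would show that every $b \in B$ is a leaf of $\mathcal{Q} \upharpoonright B$. Suppose for contradiction that $b \in B$ had a child $c$ in $\mathcal{Q} \upharpoonright B$. Then $c$ would lie on a path from $u$ to some $b' \in B$, so $b'$ would be a descendant of $c$ and hence of $b$. The unique simple path from $u$ to $b'$ in $\mathcal{Q}$ would then pass through $b$, giving a simple path containing both $b$ and $b'$. Extending this path to a maximal path in $\mathcal{Q}$ yields a maximal path containing two distinct elements of $B$ (or, if $b=b'$, one element encountered twice, which is impossible in a tree), contradicting the uniqueness clause in the definition of a bottleneck. Hence every $b \in B$ is a leaf of $\mathcal{Q} \upharpoonright B$.

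Next I would show the converse: every leaf of $\mathcal{Q} \upharpoonright B$ lies in $B$. If $v$ is a leaf of $\mathcal{Q} \upharpoonright B$, then by the definition of $\mathcal{Q} \upharpoonright B$ there exists some $b \in B$ such that $v$ lies on the simple path from $u$ to $b$. If $v \neq b$, then $v$ would have a child on this path that is also in $\mathcal{Q} \upharpoonright B$, contradicting the assumption that $v$ is a leaf. Hence $v = b \in B$, and combined with the previous paragraph the leaf set of $\mathcal{Q} \upharpoonright B$ equals $B$ exactly.

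Finally I would apply König's lemma to deduce finiteness. Since $\mathcal{Q} \upharpoonright B$ is rooted and finitely branching, if it were infinite it would contain an infinite simple path $\pi$ starting at $u$. This $\pi$ is also an infinite simple path in $\mathcal{Q}$ and is thus a maximal path in $\mathcal{Q}$ (it has no terminal node). By the bottleneck property applied in $\mathcal{Q}$, $\pi$ passes through a unique node $b \in B$. But by the first part of the argument $b$ is a leaf in $\mathcal{Q} \upharpoonright B$, so $\pi$ cannot continue past $b$ inside $\mathcal{Q} \upharpoonright B$, contradicting the fact that $\pi$ is infinite in $\mathcal{Q} \upharpoonright B$. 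The main (minor) obstacle is justifying the invocation of König's lemma cleanly, in particular verifying that $\mathcal{Q} \upharpoonright B$ inherits finite branching from $\mathcal{Q}$ and that the infinite path produced is genuinely maximal in $\mathcal{Q}$ so that the bottleneck property applies; once those are in place the argument closes immediately.
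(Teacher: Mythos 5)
Your proof is correct and follows essentially the same approach as the paper's: you characterize the leaf set of $\mathcal{Q} \upharpoonright B$ as exactly $B$ directly from the definitions, then use finite branching plus König's Lemma to rule out an infinite tree, with the contradiction coming from an infinite path having to pass through (and then past) a leaf node in $B$. Your version is somewhat more explicit in spelling out the two inclusions for the leaf-set claim and in justifying why the infinite path produced by König's Lemma must terminate at an element of $B$, which the paper treats more tersely; but it is the same argument.
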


\begin{proof}
Since $B$ is a bottleneck of $\mathcal{Q}$, by Definition~\ref{def:bottleneck-set-of-nodes}, every maximal path in $\mathcal{Q}$ contains a unique element of $B$.
By Definition~\ref{def:spt-tas-restricted-set-of-nodes}, $\mathcal{Q} \upharpoonright B$ is the unique subtree of $\mathcal{Q}$ comprising the set of all paths that start at the root of  $\mathcal{Q}$ and end at some element of $B$.
Thus, $\mathcal{Q} \upharpoonright B$ is well-defined and its set of leaf nodes is $B$.
Assume, for the sake of obtaining a contradiction that, $\mathcal{Q} \upharpoonright B$ is infinite. 
First, note that $\mathcal{Q}$ is finitely branching. Thus $\mathcal{Q} \upharpoonright B$ cannot contain an internal node with infinitely many children.
Then, K\H{o}nig's Lemma says that $\mathcal{Q} \upharpoonright B$ contains an infinite simple path $\pi$ starting at its root, which means that $\pi$ is a maximal path in $\mathcal{Q} \upharpoonright B$.
However, $B$ is a bottleneck of $\mathcal{Q}$. Thus $B \cap \pi \ne \emptyset$ and $\pi$ cannot be an infinite path in $\mathcal{Q} \upharpoonright B$.
\end{proof}

The next definition gives us a convenient notation for restricting $\mathcal{P}$ based on the points at which $\mathcal{T}$-assembly sequences terminate.

\begin{definition} % def. restriction to point
\label{def:spt-tas-restricted-general}
Let $\vec{p} \in \mathbb{Z}^2$, $\mathcal{Q}$ be a subtree of $\mathcal{M}_{\mathcal{T}}$ such that every maximal path of $\mathcal{Q}$ contains a node that terminates at $\vec{p}$, and $B_{\vec{p}}$ be the bottleneck of $\mathcal{Q}$ whose elements terminate at $\vec{p}$. Then, we define $\mathcal{Q} \upharpoonright \vec{p} = \mathcal{Q}~\upharpoonright~B_{\vec{p}}$.
\end{definition}

Figure~\ref{fig:def-example-tas-P_t_w-restricted}
    depicts $\mathcal{P}_{w} \upharpoonright \vec{y}_2$
    where $\mathcal{T}$ is our
    running example TAS and $w~=~\{(\vec{y}_1,3),(\vec{y}_2,9)\}$.
    Note that $\mathcal{P}_{w} \upharpoonright \vec{x}_1
    = \mathcal{P}_{w} \upharpoonright \vec{s}$, which is the SPT
    containing only the root node of $\mathcal{P}_{w}$. 
    
    \begin{figure}[!h]
      \centerline{\includegraphics[width=\linewidth]{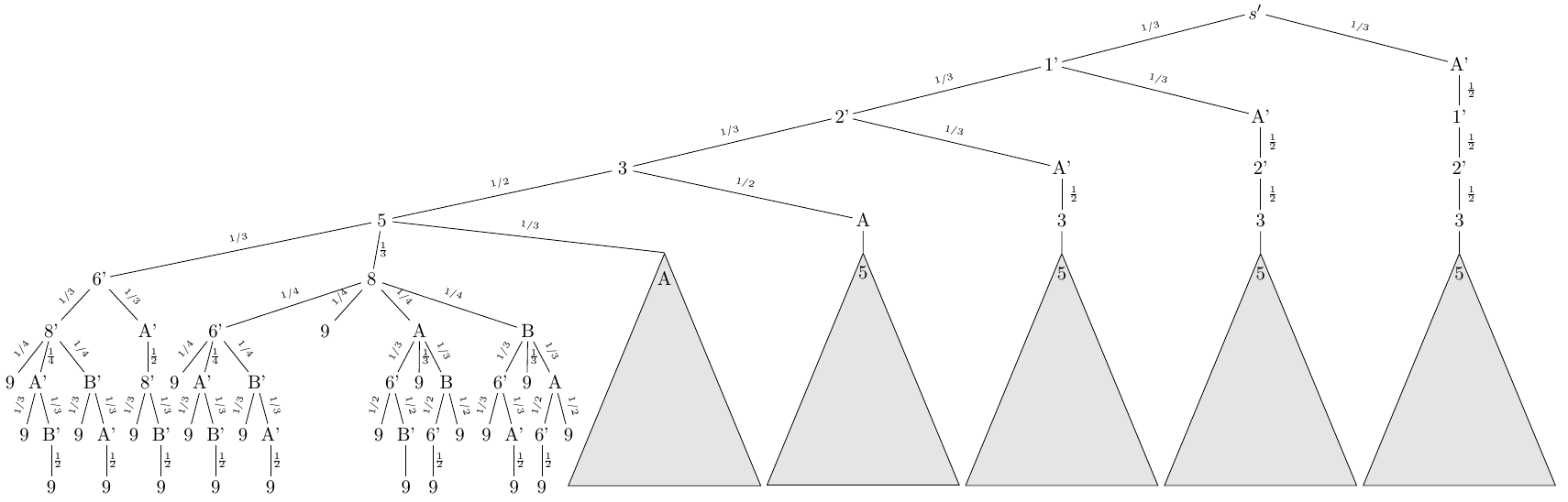}}

      \begin{minipage}{0.79\linewidth}
        \caption{\label{fig:def-example-tas-P_t_w-restricted}
          $\mathcal{P}_{w} \upharpoonright \vec{y}_2$,
          where $\mathcal{T}$ is our running example TAS and
          $w~=~\{(\vec{y}_1,3),(\vec{y}_2,9)\}$, and $\vec{y}_2$ is
          the second point of competition in $\mathcal{T}$ at which
          tile type 9 must attach according to $w$. Therefore, this
          tree is a restriction of the tree in
          Figure~\ref{fig:def-example-tas-P_t_w} in which all of the
          subtrees under the nodes labeled with a 9 have been pruned.
          Again, every occurrence of the repeated subtree shown on
          the right is depicted as a shaded triangle in the tree
          above.          
        }
      \end{minipage}
      \begin{minipage}{0.2\linewidth}
        \centering
        \centerline{\includegraphics[width=1in]{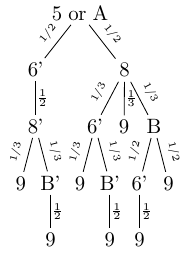}}
      \end{minipage}
        \end{figure}

    Figure~\ref{fig:p-levels} depicts $\mathcal{P}$ in terms of ``levels'' defined by essential POCs and corresponding starting points. 
    The top level of  $\mathcal{P}$ is $\mathcal{P} \upharpoonright \vec{x}_1$. 
    The following lemma allows us to ``collapse'' this top level when computing the probability of $\mathcal{P}$. 

\begin{lemma} % lemma 15
\label{lem:probability-x-1}
$\mathcal{P} \upharpoonright \vec{x}_1$ is well-defined and $\textmd{Pr}\left[ \mathcal{P} \upharpoonright \vec{x}_1 \right] = 1$.
\end{lemma}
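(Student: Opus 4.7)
The plan is to first establish that $\mathcal{P} \upharpoonright \vec{x}_1$ is well-defined by verifying the hypothesis of Definition~\ref{def:spt-tas-restricted-general}, and then derive the probability claim as an application of Corollary~\ref{cor:deterministic-spt-probability-1-generalized}. For well-definedness, every maximal path of $\mathcal{P}$ corresponds, by Definition~\ref{def:w-pruned}, to a $\mathcal{T}$-producing assembly sequence whose result is $\alpha$. Since $\vec{y}_1 \in Y \subseteq \dom{\alpha}$, Lemma~\ref{lem:x-i-y-i-ordering-r-1} applied to that sequence guarantees that $\vec{x}_1 \in \dom{\alpha}$ and that some finite prefix of it terminates at $\vec{x}_1$. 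Because tiles do not detach in the aTAM, this prefix is unique along the path, so the set $B_{\vec{x}_1}$ of all nodes of $\mathcal{P}$ terminating at $\vec{x}_1$ is a genuine bottleneck of $\mathcal{P}$ in the sense of Definition~\ref{def:bottleneck-set-of-nodes}.

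Next, I would verify that $\mathcal{P} \upharpoonright \vec{x}_1$ satisfies the hypotheses of Lemma~\ref{lem:deterministic-spt-probability-1-generalized}. Being a subtree of $\mathcal{P}$ is immediate from Definition~\ref{def:spt-tas-restricted-set-of-nodes}. For fullness relative to $\mathcal{P}$, if $\vec{\beta}$ is an internal node of $\mathcal{P} \upharpoonright \vec{x}_1$ and $\vec{\beta}'$ is any child of $\vec{\beta}$ in $\mathcal{P}$, then $\vec{\beta}'$ either terminates at $\vec{x}_1$ (and so belongs to $B_{\vec{x}_1}$ as a leaf of $\mathcal{P} \upharpoonright \vec{x}_1$) or it lies on some path from the root of $\mathcal{P}$ to an element of $B_{\vec{x}_1}$, since any extension of $\vec{\beta}'$ to a maximal $\mathcal{P}$-path results in $\alpha$ and therefore must place $\vec{x}_1$ at some subsequent step. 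In either case $\vec{\beta}'$ sits in $\mathcal{P} \upharpoonright \vec{x}_1$.

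The main (though modest) technical step is showing that no non-root node of $\mathcal{P} \upharpoonright \vec{x}_1$ terminates at an essential POC. Each leaf of $\mathcal{P} \upharpoonright \vec{x}_1$ terminates at $\vec{x}_1 \in S_Y \subseteq S_P$, and condition~\ref{def:snd-4} of Definition~\ref{def:seq-non-deterministic} gives $\vec{x}_1 \notin P \supseteq Y$. For an internal, non-root node $\vec{\beta}$, every leaf descendant lies in $B_{\vec{x}_1}$ and is reached only when $\vec{x}_1$ is first placed, so $\vec{x}_1 \notin \dom{\res{\vec{\beta}}}$. Extending $\vec{\beta}$ to any maximal $\mathcal{P}$-path yields a sequence with result $\alpha$, to which I apply Lemma~\ref{lem:x-i-y-i-ordering-r-1} (and Lemma~\ref{lem:x-i-y-i-ordering} when $r > 1$) to obtain the chain $\textmd{index}(\vec{x}_1) < \textmd{index}(\vec{y}_1) < \textmd{index}(\vec{x}_2) < \cdots < \textmd{index}(\vec{y}_r)$. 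Since $\vec{\beta}$ has not yet placed $\vec{x}_1$, it cannot have placed any $\vec{y}_i$, and in particular cannot terminate at any essential POC. Corollary~\ref{cor:deterministic-spt-probability-1-generalized} then yields $\textmd{Pr}\left[ \mathcal{P} \upharpoonright \vec{x}_1 \right] = 1$, completing the plan.
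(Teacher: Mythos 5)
Your proof is correct and follows essentially the same approach as the paper: establish well-definedness via Lemma~\ref{lem:x-i-y-i-ordering-r-1}, then use the ordering Lemmas~\ref{lem:x-i-y-i-ordering-r-1} and~\ref{lem:x-i-y-i-ordering} to show that no node of $\mathcal{P} \upharpoonright \vec{x}_1$ terminates at an essential POC, and finally invoke Corollary~\ref{cor:deterministic-spt-probability-1-generalized}. The only minor difference is that you verify more hypotheses explicitly (the bottleneck property, fullness relative to $\mathcal{P}$, and the condition-\ref{def:snd-4} argument that $\vec{x}_1 \notin Y$), whereas the paper takes several of these as immediate; that extra care is harmless and arguably a virtue.
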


The basic proof idea here is to show that no node in $\mathcal{P} \upharpoonright \vec{x}_1$ terminates at an essential POC in $\mathcal{T}$ and then apply Corollary~\ref{cor:deterministic-spt-probability-1-generalized}.

\begin{figure}[!h]
	\centerline{\includegraphics[width=\linewidth]{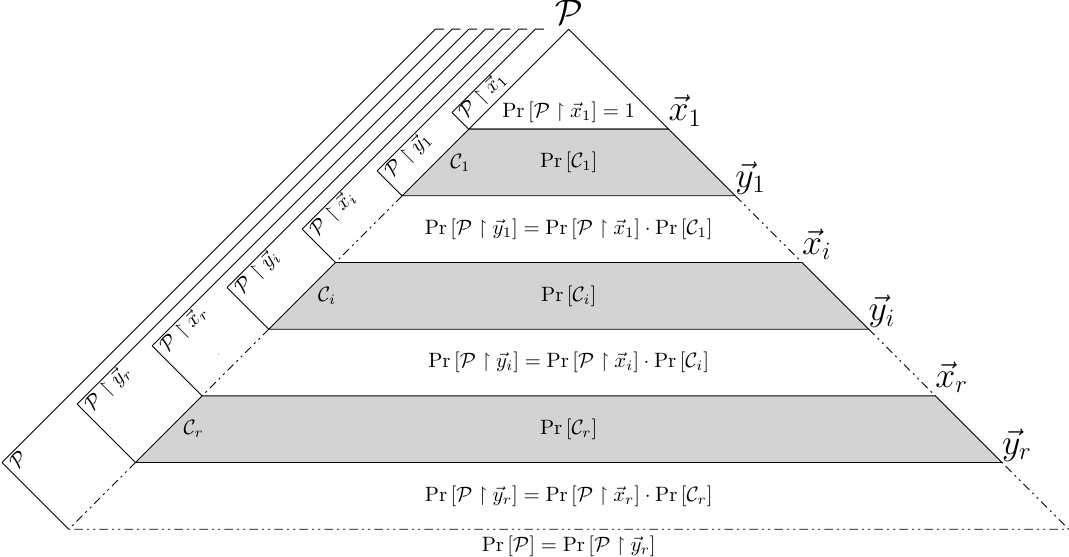}}
	\caption{\label{fig:p-levels} Depiction of $\mathcal{P}$ in terms of ``levels'' defined by essential POCs and corresponding starting points. The competitions are shown in grey. Intuitively, the white portions of $\mathcal{P}$ correspond to tile attachment steps that, as we will prove formally in Theorem~\ref{thm:local-non-determinism-theorem}, do not affect $\textmd{Pr}[\mathcal{P}]$, and therefore can be ``collapsed,'' leaving just $\textmd{Pr}\left[ \mathcal{P} \right] = \textmd{Pr}\left[ \mathcal{C}_1 \right] \cdot\ \cdots\ \cdot \textmd{Pr}\left[ \mathcal{C}_r \right]$.}
\end{figure}

\begin{proof}
It suffices to show that no node in $\mathcal{P} \upharpoonright \vec{x}_1$ terminates at an essential POC in $\mathcal{T}$.
Let $\pi$ be an arbitrary maximal path in $\mathcal{P}$.
By Definition~\ref{def:w-pruned}, $\pi$ corresponds to a $w$-correct, $\mathcal{T}$-producing assembly sequence $\vec{\alpha}$ that results in $\alpha$. 
Recall that $\alpha$ is the unique, $w$-correct, terminal assembly of $\mathcal{T}$ such that $Y \subseteq \dom{\alpha}$. Therefore, by Lemma~\ref{lem:x-i-y-i-ordering-r-1}, $\left\{ \vec{x}_1, \ldots, \vec{x}_r \right\} \subseteq \dom{\alpha}$. Thus there exists $\vec{\beta}_{\vec{x}_1} \in \dom{\pi}$ that terminates at $\vec{x}_1$, which means $\mathcal{P} \upharpoonright \vec{x}_1$ is well-defined. 

Now, if $r=1$, by Lemma~\ref{lem:x-i-y-i-ordering-r-1},  $\textmd{index}_{\vec{\alpha}}\left( \vec{x}_1 \right) < \textmd{index}_{\vec{\alpha}}\left( \vec{y}_1 \right)$. If $r>1$, repeated applications of Lemma~\ref{lem:x-i-y-i-ordering} yield
$\textmd{index}_{\vec{\alpha}}\left( \vec{x}_1 \right) < \textmd{index}_{\vec{\alpha}}\left( \vec{y}_1 \right) < \textmd{index}_{\vec{\alpha}}\left( \vec{x}_2 \right) < \ldots < \textmd{index}_{\vec{\alpha}}\left( \vec{x}_{r-1} \right) < \textmd{index}_{\vec{\alpha}}\left( \vec{y}_{r-1} \right) < \textmd{index}_{\vec{\alpha}}\left( \vec{x}_r \right)$. Then one more application of  Lemma~\ref{lem:x-i-y-i-ordering-r-1} yields $\textmd{index}_{\vec{\alpha}}\left( \vec{x}_r \right) < \textmd{index}_{\vec{\alpha}}\left( \vec{y}_r \right)$. Consequently, for all values of $r\in\Z^+$ and all integers $1 \leq i \leq r$, $\textmd{index}_{\vec{\alpha}}\left( \vec{x}_1 \right) < \textmd{index}_{\vec{\alpha}}\left( \vec{y}_i \right)$. Therefore, if $\vec{\beta} \in \dom{\pi}$ and $\vec{\beta}$ terminates at an essential POC in $\mathcal{T}$, then $\vec{\beta}$ is a descendant of $\vec{\beta}_{\vec{x}_1}$  in $\mathcal{P}$.
In other words, no ancestor of $\vec{\beta}_{\vec{x}_1}$ in $\mathcal{P}$ terminates at an essential POC in $\mathcal{T}$.
Therefore, no node in $\mathcal{P} \upharpoonright \vec{x}_1$ terminates at an essential POC in $\mathcal{T}$.
Since $\mathcal{P} \upharpoonright \vec{x}_1$ is full relative to $\mathcal{P}$, Corollary~\ref{cor:deterministic-spt-probability-1-generalized} implies that $\textmd{Pr}\left[\mathcal{P} \upharpoonright \vec{x}_1 \right] = 1$.
\end{proof}

The following lemma says that any subtree of $\mathcal{P}$ whose root terminates at the last POC can be ``collapsed'' when computing the probability of $\mathcal{P}$.

\begin{lemma} % lemma 16
\label{lem:terminate-at-y-r-pr-1}
If $\vec{\beta}_{\vec{y}_r}$ is a node of $\mathcal{P}$ that terminates at $\vec{y}_r$, then $\textmd{Pr}\left[ \mathcal{P}^{\vec{\beta}_{\vec{y}_r}} \right] = 1$.
\end{lemma}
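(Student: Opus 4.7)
The plan is to apply Corollary~\ref{cor:deterministic-spt-probability-1-generalized} to the subtree $\mathcal{P}^{\vec{\beta}_{\vec{y}_r}}$. To do so, I must verify its two hypotheses: (i) $\mathcal{P}^{\vec{\beta}_{\vec{y}_r}}$ is full relative to $\mathcal{P}$, and (ii) no node of $\mathcal{P}^{\vec{\beta}_{\vec{y}_r}}$ other than its root terminates at an essential POC. Condition (i) is immediate from the definition of $\mathcal{Q}^u$ as the subtree containing $u$ and all of its descendants in $\mathcal{Q}$: every child in $\mathcal{P}$ of an internal node of $\mathcal{P}^{\vec{\beta}_{\vec{y}_r}}$ is, by construction, a child in $\mathcal{P}^{\vec{\beta}_{\vec{y}_r}}$.

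The content of the proof is condition (ii). First, I would show that the result of $\vec{\beta}_{\vec{y}_r}$ already contains tiles at every essential POC. Since $\vec{\beta}_{\vec{y}_r}$ is a node of $\mathcal{P}$, Observation~\ref{obs:alpha-w-correct-iff-node-in-p-corr-alpha} tells us it is $w$-correct and $\mathcal{T}$-producing, and by Lemma~\ref{lem:w-correct-extended-to-w-correct-terminal} it can be extended by some $\mathcal{T}$-assembly sequence $\vec{\alpha}$ whose result is $\alpha$, the unique $w$-correct $\mathcal{T}$-terminal assembly with $Y \subseteq \dom{\alpha}$. Applying Lemma~\ref{lem:x-i-y-i-ordering-r-1} (and Lemma~\ref{lem:x-i-y-i-ordering}, when $r>1$) to $\vec{\alpha}$ yields $\textmd{index}_{\vec{\alpha}}(\vec{y}_1) < \textmd{index}_{\vec{\alpha}}(\vec{y}_2) < \cdots < \textmd{index}_{\vec{\alpha}}(\vec{y}_r)$. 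Because $\vec{\beta}_{\vec{y}_r}$ is the prefix of $\vec{\alpha}$ that terminates at $\vec{y}_r$, the index of $\vec{y}_r$ in $\vec{\alpha}$ equals $|\vec{\beta}_{\vec{y}_r}|$, so for every $1 \le i \le r$ we have $\vec{y}_i \in \dom{\res{\vec{\beta}_{\vec{y}_r}}}$, i.e., $Y \subseteq \dom{\res{\vec{\beta}_{\vec{y}_r}}}$.

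Now suppose, for contradiction, that some node $\vec{\beta}$ of $\mathcal{P}^{\vec{\beta}_{\vec{y}_r}}$ distinct from the root terminates at some essential POC $\vec{y}_i \in Y$. Then $\vec{\beta}$ properly extends $\vec{\beta}_{\vec{y}_r}$, so $\res{\vec{\beta}_{\vec{y}_r}} \sqsubseteq \res{\vec{\beta}}$, and the last tile attachment step of $\vec{\beta}$ places a tile at $\vec{y}_i$ onto a proper subassembly of $\res{\vec{\beta}}$ that already extends $\res{\vec{\beta}_{\vec{y}_r}}$. But $\vec{y}_i \in \dom{\res{\vec{\beta}_{\vec{y}_r}}}$, so $\vec{y}_i$ is already occupied at that step; this contradicts the definition of a tile attachment step, which requires attaching at an empty location in the frontier. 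Hence no non-root node of $\mathcal{P}^{\vec{\beta}_{\vec{y}_r}}$ terminates at an essential POC.

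Having established both hypotheses, Corollary~\ref{cor:deterministic-spt-probability-1-generalized} applied to $\mathcal{Q} = \mathcal{P}^{\vec{\beta}_{\vec{y}_r}}$ immediately gives $\textmd{Pr}[\mathcal{P}^{\vec{\beta}_{\vec{y}_r}}] = 1$. The only real subtlety is carefully deducing that $Y \subseteq \dom{\res{\vec{\beta}_{\vec{y}_r}}}$ via the ordering lemmas; once that is in place, the rest is bookkeeping that follows directly from the fact that tiles never detach in the aTAM.
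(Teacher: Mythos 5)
Your proposal is correct and takes essentially the same route as the paper. Both proofs verify that $\mathcal{P}^{\vec{\beta}_{\vec{y}_r}}$ is full relative to $\mathcal{P}$ and that no non-root node terminates at an essential POC, then invoke Corollary~\ref{cor:deterministic-spt-probability-1-generalized}. The only cosmetic difference is in the packaging of the second point: the paper considers an arbitrary maximal path through $\vec{\beta}_{\vec{y}_r}$ and applies condition~\ref{def:snd-3} of Definition~\ref{def:seq-non-deterministic} directly to conclude $\vec{y}_r$ has the largest index among essential POCs, whereas you first distill out the claim $Y \subseteq \dom{\res{\vec{\beta}_{\vec{y}_r}}}$ from the ordering lemmas and then close with the elementary observation that a tile cannot attach at an already-occupied point. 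These are two presentations of the same fact, and your final ``no re-placement'' step is arguably the crisper way to land it. One small nicety you glossed over: Lemma~\ref{lem:w-correct-extended-to-w-correct-terminal} only applies when $\res{\vec{\beta}_{\vec{y}_r}}$ is not terminal; if $\vec{\beta}_{\vec{y}_r}$ happens to be a leaf of $\mathcal{P}$ its result is already $\alpha$ and the conclusion is immediate, so it's worth a one-line case split there.
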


\begin{proof}

It suffices to show that no node in  $\mathcal{P}^{\vec{\beta}_{\vec{y}_r}}$ (except for its root) terminates at an essential POC in $\mathcal{T}$.
Let $\pi$ be any path obtained by concatenating the unique path
in $\mathcal{P}$ from its root to  $\vec{\beta}_{\vec{y}_r}$ with an arbitrary
maximal path in $\mathcal{P}^{\vec{\beta}_{\vec{y}_r}}$.
By Definition~\ref{def:w-pruned}, $\pi$ corresponds to a $w$-correct, $\mathcal{T}$-producing assembly sequence  $\vec{\alpha}$ that results in $\alpha$, the unique $w$-correct terminal assembly of $\mathcal{T}$ such that $Y \subseteq \dom{\alpha}$.
By condition~\ref{def:snd-3} of Definition~\ref{def:seq-non-deterministic}, for all integers $1 \leq i < r$, $\textmd{index}_{\vec{\alpha}}\left( \vec{y}_r \right) > \textmd{index}_{\vec{\alpha}}\left(\vec{y}_i \right)$.
This implies that, if $\vec{\beta} \in \dom{\pi}$ and $\vec{\beta}$ terminates at an essential POC in $\mathcal{T}$, then either $\vec{\beta} = \vec{\beta}_{\vec{y}_r}$ or $\vec{\beta}$ is an ancestor  of $\vec{\beta}_{\vec{y}_r}$ in $\mathcal{P}$. 
Thus, no descendant of $\vec{\beta}_{\vec{y}_r}$ in $\mathcal{P}$ terminates at an essential POC in $\mathcal{T}$.
In other words, no node in $\mathcal{P}^{\vec{\beta}_{\vec{y}_r}}$ (except for its root) terminates at an essential POC in $\mathcal{T}$.
Since $\mathcal{P}^{\vec{\beta}_{\vec{y}_r}}$ is full relative to $\mathcal{P}$, Corollary~\ref{cor:deterministic-spt-probability-1-generalized} implies that $\textmd{Pr}\left[ \mathcal{P}^{\vec{\beta}_{\vec{y}_r}} \right] = 1$.
\end{proof}

\begin{observation}
\label{obs:prob-subtree-same-root}
If $\mathcal{Q}$ and $\mathcal{Q}'$ are SPTs with the same root, $\mathcal{Q}'$ is a subtree of $\mathcal{Q}$, and $v$ is a node of $\mathcal{Q}'$, then $\textmd{Pr}_{\mathcal{Q}'}[v] = \textmd{Pr}_{\mathcal{Q}}[v]$. 
\end{observation}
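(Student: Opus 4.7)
The plan is to prove this observation by a straightforward induction on the depth of $v$ in $\mathcal{Q}'$ (equivalently, its depth in $\mathcal{Q}$), using the recursive definition of $\textmd{Pr}_{\mathcal{Q}}[\cdot]$ and $\textmd{Pr}_{\mathcal{Q}'}[\cdot]$ given earlier in the preliminaries. The essential point is that, since $\mathcal{Q}'$ is a subtree of $\mathcal{Q}$ with the same root, the unique simple path from the root to $v$ in $\mathcal{Q}'$ coincides, node-for-node and edge-for-edge, with the unique simple path from the root to $v$ in $\mathcal{Q}$, and the probability labels on those shared edges are inherited from $\mathcal{Q}$.

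First, I would handle the base case: if $v$ is the root, then by item~1 of the definition of probability of a node in an SPT, $\textmd{Pr}_{\mathcal{Q}'}[v] = 1 = \textmd{Pr}_{\mathcal{Q}}[v]$, since $v$ is also the root of $\mathcal{Q}$ by assumption. For the inductive step, I assume $v$ is not the root and let $u$ be the parent of $v$ in $\mathcal{Q}'$. Because $\mathcal{Q}'$ is a subtree of $\mathcal{Q}$ sharing the root, $u$ is also a node of $\mathcal{Q}'$ (and thus of $\mathcal{Q}$), and the edge from $u$ to $v$ in $\mathcal{Q}'$ is the same edge as in $\mathcal{Q}$, carrying the same probability $p$. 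By the inductive hypothesis applied to $u$ (which has strictly smaller depth), $\textmd{Pr}_{\mathcal{Q}'}[u] = \textmd{Pr}_{\mathcal{Q}}[u]$, and then item~2 of the definition gives
\[
\textmd{Pr}_{\mathcal{Q}'}[v] \;=\; \textmd{Pr}_{\mathcal{Q}'}[u] \cdot p \;=\; \textmd{Pr}_{\mathcal{Q}}[u] \cdot p \;=\; \textmd{Pr}_{\mathcal{Q}}[v].
\]

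There is no real obstacle here; the only point that requires minimal care is checking that the notion of ``subtree with the same root'' indeed forces every ancestor of a node $v \in \mathcal{Q}'$ (together with the edges joining consecutive ancestors) to lie in $\mathcal{Q}'$ as well, so that the path from the root to $v$ is common to both trees. That is an immediate consequence of $\mathcal{Q}'$ being a connected subgraph of the tree $\mathcal{Q}$ that contains the root of $\mathcal{Q}$, which completes the proof.
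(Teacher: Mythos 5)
Your proof is correct and rests on exactly the same observation the paper uses, namely that the unique root-to-$v$ path and its edge labels are identical in $\mathcal{Q}'$ and $\mathcal{Q}$; you simply formalize this with an explicit induction on depth, whereas the paper states the path identity as a one-sentence remark.
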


Observation~\ref{obs:prob-subtree-same-root} follows from the fact that the unique simple path in  $\mathcal{Q}'$ from its root to $v$ is identical to the corresponding path in $\mathcal{Q}$.

The next lemma ``collapses'' the portion of $\mathcal{P}$ below the last POC. 

\begin{lemma} % lemma 17
\label{lem:probability-y-r}
$\mathcal{P} \upharpoonright \vec{y}_r$ is well-defined and $\textmd{Pr}\left[ \mathcal{P} \right] = \textmd{Pr}\left[ \mathcal{P} \upharpoonright \vec{y}_r \right]$.
\end{lemma}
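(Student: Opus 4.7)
The plan is to use the bottleneck characterization from Lemma~\ref{lem:alternative-characterization-of-pr} together with the ``collapsing'' result of Lemma~\ref{lem:terminate-at-y-r-pr-1}. This proof should be short because most of the heavy lifting has already been done.

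First, I would establish that $\mathcal{P} \upharpoonright \vec{y}_r$ is well-defined, i.e., every maximal path $\pi$ in $\mathcal{P}$ contains a (unique) node that terminates at $\vec{y}_r$. By Definition~\ref{def:w-pruned}, $\pi$ corresponds to a $w$-correct, $\mathcal{T}$-producing assembly sequence $\vec{\alpha}$ with $\res{\vec{\alpha}} = \alpha$, and since $\vec{y}_r \in Y \subseteq \dom{\alpha}$, some node of $\pi$ must terminate at $\vec{y}_r$; uniqueness follows because $\vec{\alpha}$ attaches at most one tile per step. Let $B$ denote the set of all nodes of $\mathcal{P}$ that terminate at $\vec{y}_r$. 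Then $B$ is a bottleneck of $\mathcal{P}$ in the sense of Definition~\ref{def:bottleneck-set-of-nodes}, and by Definition~\ref{def:spt-tas-restricted-general}, $\mathcal{P} \upharpoonright \vec{y}_r = \mathcal{P} \upharpoonright B$.

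Next, I would apply Lemma~\ref{lem:alternative-characterization-of-pr} to obtain
\[
\textmd{Pr}[\mathcal{P}] \;=\; \sum_{v \in B} \textmd{Pr}_{\mathcal{P}}[v] \cdot \textmd{Pr}[\mathcal{P}^v].
\]
For every $v \in B$, Lemma~\ref{lem:terminate-at-y-r-pr-1} gives $\textmd{Pr}[\mathcal{P}^v] = 1$, so
\[
\textmd{Pr}[\mathcal{P}] \;=\; \sum_{v \in B} \textmd{Pr}_{\mathcal{P}}[v].
\]

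Finally, I would identify this sum with $\textmd{Pr}[\mathcal{P} \upharpoonright \vec{y}_r]$. Since $\mathcal{M}_{\mathcal{T}}$ is finitely branching, so is $\mathcal{P}$, and by Lemma~\ref{lem:restricted-tree-finite}, $\mathcal{P} \upharpoonright \vec{y}_r$ is a finite tree whose set of leaf nodes is exactly $B$. Therefore every maximal path in $\mathcal{P} \upharpoonright \vec{y}_r$ is finite and ends at some $v \in B$, so by Observation~\ref{obs:leaf-node-maximal-path} and the definition of the probability of an SPT,
\[
\textmd{Pr}[\mathcal{P} \upharpoonright \vec{y}_r] \;=\; \sum_{v \in B} \textmd{Pr}_{\mathcal{P} \upharpoonright \vec{y}_r}[v].
\]
Because $\mathcal{P} \upharpoonright \vec{y}_r$ and $\mathcal{P}$ share the same root and the unique root-to-$v$ path is the same in both trees, Observation~\ref{obs:prob-subtree-same-root} gives $\textmd{Pr}_{\mathcal{P} \upharpoonright \vec{y}_r}[v] = \textmd{Pr}_{\mathcal{P}}[v]$ for every $v \in B$. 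Combining this with the previous display yields $\textmd{Pr}[\mathcal{P}] = \textmd{Pr}[\mathcal{P} \upharpoonright \vec{y}_r]$.

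The proof has no serious obstacle: the only step that requires a small amount of care is verifying that $B$ really is a bottleneck (using the uniqueness of the terminating-at-$\vec{y}_r$ node on each maximal path) and then keeping straight which probabilities are computed in $\mathcal{P}$ versus in $\mathcal{P} \upharpoonright \vec{y}_r$, which is handled cleanly by Observation~\ref{obs:prob-subtree-same-root}.
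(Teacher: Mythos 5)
Your proposal is correct and follows essentially the same route as the paper's proof: establish well-definedness via the unique $w$-correct terminal assembly, take $B$ to be the set of nodes terminating at $\vec{y}_r$ as a bottleneck, apply Lemma~\ref{lem:alternative-characterization-of-pr} and Lemma~\ref{lem:terminate-at-y-r-pr-1} to collapse the subtrees below $B$, and then identify the remaining sum with $\textmd{Pr}\left[\mathcal{P} \upharpoonright \vec{y}_r\right]$ via Lemma~\ref{lem:restricted-tree-finite}, Observation~\ref{obs:prob-subtree-same-root}, and Observation~\ref{obs:leaf-node-maximal-path}.
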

\begin{proof}
Let $\pi$ be an arbitrary maximal path in $\mathcal{P}$.
By Definition~\ref{def:w-pruned}, $\pi$ corresponds to a $w$-correct, $\mathcal{T}$-producing assembly sequence $\vec{\alpha}$ that results in $\alpha$, the unique $w$-correct terminal assembly of $\mathcal{T}$ such that $Y \subseteq \dom{\alpha}$.
This means there is a node in $\dom{\pi}$ that terminates at $\vec{y}_r$.
Thus, $\mathcal{P} \upharpoonright \vec{y}_r$ is a well-defined subtree of $\mathcal{P}$.
If we define $B$ as the set of nodes of $\mathcal{P}$ that terminate at $\vec{y}_r$, then $B$ is a bottleneck of $\mathcal{P}$.
Moreover, Lemma~\ref{lem:restricted-tree-finite} implies that $\mathcal{P} \upharpoonright \vec{y}_r$ is finite and the set of its leaf nodes is $B$.
Then, we have:
\[
\begin{array}{llll}
\textmd{Pr}[\mathcal{P}] & = & \displaystyle \sum_{\vec{\beta}_{\vec{y}_r} \in B}{\left( \textmd{Pr}_{\mathcal{P}}\left[\vec{\beta}_{\vec{y}_r}\right] \cdot \textmd{Pr}\left[ \mathcal{P}^{\vec{\beta}_{\vec{y}_r}} \right] \right)} & \textmd{ Lemma~\ref{lem:alternative-characterization-of-pr} with } \mathcal{Q} = \mathcal{P} \\

	& = & \displaystyle \sum_{\vec{\beta}_{\vec{y}_r} \in B}{\left( \textmd{Pr}_{\mathcal{P}}\left[\vec{\beta}_{\vec{y}_r}\right] \cdot 1 \right)} & \textmd{ Lemma~\ref{lem:terminate-at-y-r-pr-1} } \\
	
	& = & \displaystyle \sum_{\vec{\beta}_{\vec{y}_r} \textmd{ leaf node of } \mathcal{P} \upharpoonright \vec{y}_r }{ \textmd{Pr}_{\mathcal{P}}\left[\vec{\beta}_{\vec{y}_r}\right] } & \textmd{ Definition of $B$ } \\
	
	& = & \displaystyle \sum_{\vec{\beta}_{\vec{y}_r} \textmd{ leaf node of } \mathcal{P} \upharpoonright \vec{y}_r }{ \textmd{Pr}_{\mathcal{P} \upharpoonright \vec{y}_r }\left[\vec{\beta}_{\vec{y}_r}\right] } & \textmd{ Observation~\ref{obs:prob-subtree-same-root} with } \mathcal{Q} = \mathcal{P} \textmd{ and } \mathcal{Q}' = \mathcal{P} \upharpoonright \vec{y}_r \\
	
	& = & \displaystyle \sum_{\pi \textmd{ maximal path in } \mathcal{P} \upharpoonright \vec{y}_r }{ \textmd{Pr}_{\mathcal{P} \upharpoonright \vec{y}_r }\left[\pi\right] } & \textmd{ Observation~\ref{obs:leaf-node-maximal-path} with } Q = \mathcal{P} \upharpoonright \vec{y}_r \\
	
	& = & \textmd{Pr}\left[ \mathcal{P} \upharpoonright \vec{y}_r \right] & \textmd{ Definition of } \textmd{Pr}\left[ \mathcal{P} \upharpoonright \vec{y}_r \right].
\end{array}
\]
\end{proof}

The following lemma says that any subtree of $\mathcal{P}$ whose root terminates at a POC and whose leaf nodes terminate at the starting point of the next POC can be ``collapsed'' when computing $\mathcal{P}$. 

\begin{lemma} % lemma 18
\label{lem:terminate-at-y-i-x-i-plus-1-pr-1}
If $r > 1$, $i \in \mathbb{Z}^+ \cap [1, r-1]$, and $\vec{\beta}_{\vec{y}_i}$ is a node of $\mathcal{P}$ that terminates at $\vec{y}_i$, then $\mathcal{P}^{\vec{\beta}_{\vec{y}_i}} \upharpoonright \vec{x}_{i+1}$ is well-defined and $\textmd{Pr}\left[ \mathcal{P}^{\vec{\beta}_{\vec{y}_i}} \upharpoonright \vec{x}_{i+1} \right] = 1$.
\end{lemma}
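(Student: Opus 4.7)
The plan is to follow the same two-step template used in Lemma~\ref{lem:probability-x-1} and Lemma~\ref{lem:terminate-at-y-r-pr-1}: first argue that $\mathcal{P}^{\vec{\beta}_{\vec{y}_i}} \upharpoonright \vec{x}_{i+1}$ is well-defined, and then show that no node in it (except possibly its root $\vec{\beta}_{\vec{y}_i}$) terminates at an essential POC, which will let us invoke Corollary~\ref{cor:deterministic-spt-probability-1-generalized} to conclude that its probability equals $1$.

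For well-definedness, I would take an arbitrary maximal path $\pi$ in $\mathcal{P}^{\vec{\beta}_{\vec{y}_i}}$ and extend the prefix from the root of $\mathcal{P}$ down to $\vec{\beta}_{\vec{y}_i}$ with $\pi$. By Definition~\ref{def:w-pruned}, this yields a $w$-correct $\mathcal{T}$-producing assembly sequence $\vec{\alpha}$ with result $\alpha$, and since $Y \subseteq \dom{\alpha}$ and $i < r$, Lemma~\ref{lem:x-i-y-i-ordering} applies to give $\textmd{index}_{\vec{\alpha}}(\vec{y}_i) < \textmd{index}_{\vec{\alpha}}(\vec{x}_{i+1})$. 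Consequently, $\vec{x}_{i+1}$ is placed in some tile attachment step strictly after $\vec{\beta}_{\vec{y}_i}$, so $\pi$ contains a (unique) node that terminates at $\vec{x}_{i+1}$. Hence $\mathcal{P}^{\vec{\beta}_{\vec{y}_i}} \upharpoonright \vec{x}_{i+1}$ is well-defined.

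For the essential-POC claim, suppose toward contradiction that some non-root node $\vec{\beta}$ in $\mathcal{P}^{\vec{\beta}_{\vec{y}_i}} \upharpoonright \vec{x}_{i+1}$ terminates at an essential POC $\vec{y}_j \in Y$. Then $\vec{\beta}$ is a strict descendant of $\vec{\beta}_{\vec{y}_i}$ and a (weak) ancestor of some node that terminates at $\vec{x}_{i+1}$ on the corresponding maximal path of $\mathcal{P}$, so in the completed assembly sequence $\vec{\alpha}$ we would have $\textmd{index}_{\vec{\alpha}}(\vec{y}_i) < \textmd{index}_{\vec{\alpha}}(\vec{y}_j) \leq \textmd{index}_{\vec{\alpha}}(\vec{x}_{i+1})$. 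But repeated application of Lemma~\ref{lem:x-i-y-i-ordering} and Lemma~\ref{lem:x-i-y-i-ordering-r-1} to $\vec{\alpha}$ forces the essential POC indices to obey $\textmd{index}_{\vec{\alpha}}(\vec{y}_i) < \textmd{index}_{\vec{\alpha}}(\vec{x}_{i+1}) < \textmd{index}_{\vec{\alpha}}(\vec{y}_{i+1}) < \cdots$, so no $\vec{y}_j$ can satisfy $\textmd{index}_{\vec{\alpha}}(\vec{y}_i) < \textmd{index}_{\vec{\alpha}}(\vec{y}_j) \leq \textmd{index}_{\vec{\alpha}}(\vec{x}_{i+1})$. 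This is the contradiction.

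Finally, since $\mathcal{P}^{\vec{\beta}_{\vec{y}_i}} \upharpoonright \vec{x}_{i+1}$ is a subtree of $\mathcal{P}$ that is full relative to $\mathcal{P}$ (it preserves all children of every internal node, because the restriction only trims subtrees below nodes that terminate at $\vec{x}_{i+1}$, and $\mathcal{P}^{\vec{\beta}_{\vec{y}_i}}$ itself is full relative to $\mathcal{P}$), Corollary~\ref{cor:deterministic-spt-probability-1-generalized} yields $\textmd{Pr}[\mathcal{P}^{\vec{\beta}_{\vec{y}_i}} \upharpoonright \vec{x}_{i+1}] = 1$. The main obstacle I anticipate is being careful with the fullness bookkeeping, in particular noting that the root of this restricted subtree is $\vec{\beta}_{\vec{y}_i}$ (which does terminate at an essential POC, namely $\vec{y}_i$), since Lemma~\ref{lem:deterministic-spt-probability-1-generalized} explicitly permits the root to terminate at an essential POC.
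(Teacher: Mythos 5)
Your proof is correct and follows essentially the same approach as the paper: establish well-definedness via Lemma~\ref{lem:x-i-y-i-ordering}, then show that no non-root node of $\mathcal{P}^{\vec{\beta}_{\vec{y}_i}} \upharpoonright \vec{x}_{i+1}$ terminates at an essential POC, and apply Corollary~\ref{cor:deterministic-spt-probability-1-generalized}. The paper phrases the no-essential-POC step directly (for any $\vec{y}\in Y\setminus\{\vec{y}_i\}$, the corresponding node is either an ancestor of $\vec{\beta}_{\vec{y}_i}$ or a descendant of $\vec{\beta}_{\vec{x}_{i+1}}$), while you phrase it as a contradiction, but the underlying ordering argument is identical; your added care about where to take the maximal path and about the root being allowed to terminate at $\vec{y}_i$ is sound and matches the paper's intent.
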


\begin{proof}
It suffices to show that no node in $\mathcal{P}^{\vec{\beta}_{\vec{y}_i}} \upharpoonright \vec{x}_{i+1}$ (except for its root) terminates at an essential POC in $\mathcal{T}$.
Let $\pi$ be any path obtained by concatenating the unique path in $\mathcal{P}$ from its root to  $\vec{\beta}_{\vec{y}_i}$ with an arbitrary maximal path in $\mathcal{P}^{\vec{\beta}_{\vec{y}_i}} \upharpoonright \vec{x}_{i+1}$.
By Definition~\ref{def:w-pruned}, $\pi$ corresponds to a $w$-correct, $\mathcal{T}$-producing assembly sequence  $\vec{\alpha}$ that results in $\alpha$, the unique $w$-correct terminal assembly of $\mathcal{T}$ such that $Y \subseteq \dom{\alpha}$. This means $\vec{\beta}_{\vec{y}_i} \in \dom{\pi}$ and $\mathcal{P}^{\vec{\beta}_{\vec{y}_i}}$ is not empty.
Since $Y \subseteq \dom{\alpha}$, $r > 1$ and $i$ is an integer such that $1 \leq i < r$, Lemmas~\ref{lem:x-i-y-i-ordering-r-1}~and~\ref{lem:x-i-y-i-ordering} together imply that $\vec{x}_{i+1} \in \dom{\alpha}$ and
\begin{equation}
\label{eqn:y-i-x-i-plus-1-y-i-plus-1}
\textmd{index}_{\vec{\alpha}}\left(\vec{y}_i\right) < \textmd{index}_{\vec{\alpha}}\left(\vec{x}_{i+1}\right) <
\textmd{index}_{\vec{\alpha}}\left(\vec{y}_{i+1}\right).
\end{equation}
This in turn implies that there exists $\vec{\beta}_{\vec{x}_{i+1}} \in \dom{\pi}$ that terminates at $\vec{x}_{i+1}$ and is a descendant of $\vec{\beta}_{\vec{y}_i}$ in $\mathcal{P}$.
Thus, $\mathcal{P}^{\vec{\beta}_{\vec{y}_i}} \upharpoonright \vec{x}_{i+1}$ is well-defined.
For any $\vec{y} \in Y \backslash \left\{ \vec{y}_i \right\}$,  let $\vec{\beta}_{\vec{y}} \in \dom{\pi}$ be such that $\vec{\beta}_{\vec{y}}$ terminates at $\vec{y}$.
Then, by condition~\ref{def:snd-3} of Definition~\ref{def:seq-non-deterministic} and (\ref{eqn:y-i-x-i-plus-1-y-i-plus-1}), we have either 
$$
\textmd{index}_{\vec{\alpha}}\left(\vec{y}\,\right) < \textmd{index}_{\vec{\alpha}}\left(\vec{y}_i\right)
$$ 
or 
$$
\textmd{index}_{\vec{\alpha}}\left(\vec{y}\,\right) > \textmd{index}_{\vec{\alpha}}\left(\vec{x}_{i+1}\right).
$$
This means that $\vec{\beta}_{\vec{y}}$ is either an ancestor of $\vec{\beta}_{\vec{y}_i}$ in $\mathcal{P}$ or a descendant of $\vec{\beta}_{\vec{x}_{i+1}}$ in $\mathcal{P}$.
But every descendant $\vec{\beta}$ (terminating at some $\vec{p} \in \Z^2$) of $\vec{\beta}_{\vec{y}_i}$ in $\mathcal{P}^{\vec{\beta}_{\vec{y}_i}} \upharpoonright \vec{x}_{i+1}$ satisfies
$$
\textmd{index}_{\vec{\alpha}}\left( \vec{y}_i \right) < 
\textmd{index}_{\vec{\alpha}}\left( \vec{p} \right) \leq
\textmd{index}_{\vec{\alpha}}\left( \vec{x}_{i+1} \right). 
$$
Thus, $\vec{\beta}$ cannot terminate at an essential POC in $\mathcal{T}$.
In other words, no node of $\mathcal{P}^{\vec{\beta}_{\vec{y}_i}} \upharpoonright \vec{x}_{i+1}$ (except for its root) terminates at an essential POC in $\mathcal{T}$.
Since $\mathcal{P}^{\vec{\beta}_{\vec{y}_i}} \upharpoonright \vec{x}_{i+1}$ is full relative to $\mathcal{P}$, Corollary~\ref{cor:deterministic-spt-probability-1-generalized} implies that $\textmd{Pr}\left[ \mathcal{P}^{\vec{\beta}_{\vec{y}_i}} \upharpoonright \vec{x}_{i+1} \right] =~1$.
\end{proof}

%\color{red}
The following lemma says that the operation of defining a subtree of a given tree rooted at one of its nodes distributes over the operation of restricting the tree to a bottleneck set of its nodes.
%\color{black}

\begin{lemma} % lemma 19
\label{lem:p-restricted-u}
If $\mathcal{Q}$ is a rooted tree, $B$ is a bottleneck of $\mathcal{Q}$, and $v$ is an ancestor in $\mathcal{Q}$ of at least one node in $B$, then $\left( \mathcal{Q} \upharpoonright B \right)^{v} = \mathcal{Q}^{v} \upharpoonright B$.
\end{lemma}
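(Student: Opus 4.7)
The plan is to prove the equality $\left( \mathcal{Q} \upharpoonright B \right)^{v} = \mathcal{Q}^{v} \upharpoonright B$ by showing that both sides, viewed as subtrees of $\mathcal{Q}$, have exactly the same set of nodes. Since both are subtrees of $\mathcal{Q}$ whose edge relations are inherited from $\mathcal{Q}$, matching node sets implies matching trees.

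Before showing the two inclusions, I would justify that $\mathcal{Q}^{v} \upharpoonright B$ is well-defined, which by Definition~\ref{def:spt-tas-restricted-set-of-nodes} requires $B$ to be a bottleneck of $\mathcal{Q}^{v}$ in the sense of Definition~\ref{def:bottleneck-set-of-nodes}. The argument is that any maximal path $\pi$ in $\mathcal{Q}^{v}$ extends to a maximal path $\pi'$ in $\mathcal{Q}$ by prepending the unique path in $\mathcal{Q}$ from its root to $v$. The unique $b \in B$ on $\pi'$ cannot lie strictly above $v$, because then the hypothesis that $v$ has at least one descendant in $B$ would produce a separate maximal path in $\mathcal{Q}^{v}$ through such a descendant, whose extension in $\mathcal{Q}$ would contain two elements of $B$, contradicting the uniqueness clause of the bottleneck property. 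Hence $b$ lies on $\pi$, witnessing the bottleneck property for $\mathcal{Q}^{v}$.

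Next I would unpack the membership conditions. A node $w$ belongs to $\left( \mathcal{Q} \upharpoonright B \right)^{v}$ iff $w \in \mathcal{Q}^{v}$ and $w$ lies on some simple path in $\mathcal{Q}$ from the root of $\mathcal{Q}$ to some $b \in B$. A node $w$ belongs to $\mathcal{Q}^{v} \upharpoonright B$ iff $w \in \mathcal{Q}^{v}$ and $w$ lies on some simple path in $\mathcal{Q}^{v}$ from $v$ to some $b' \in B$. For the forward inclusion, given $w$ with a witness path from the root of $\mathcal{Q}$ to $b$ passing through $w$, this path must also pass through $v$ (since $v$ is an ancestor of $w$ in $\mathcal{Q}$, and there is a unique simple path between any two nodes in a tree), so its suffix from $v$ onward lies in $\mathcal{Q}^{v}$, still contains $w$, and ends at $b$; in particular, $b$ itself lies in $\mathcal{Q}^{v}$. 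For the reverse inclusion, prepending the unique root-to-$v$ path in $\mathcal{Q}$ to the witness path from $v$ to $b'$ yields a simple path in $\mathcal{Q}$ from the root to $b' \in B$ that still contains $w$.

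I do not expect a real obstacle here; the core argument is straightforward bookkeeping on simple paths in trees. The one subtlety worth stating carefully is the well-definedness check above, since it is the only place where the hypothesis that $v$ has at least one $B$-descendant in $\mathcal{Q}$ is essential. Once that is established, the two node-set inclusions drop out immediately from the fact that simple paths in trees factor uniquely through any common ancestor.
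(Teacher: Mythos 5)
Your proof is correct and takes essentially the same approach as the paper: both characterize the two sides as the subtree of $\mathcal{Q}$ comprising exactly the nodes lying on some $v$-to-$B$ path, with the same inherited edge relation, and conclude equality. Your treatment of well-definedness of $\mathcal{Q}^v \upharpoonright B$ is somewhat more explicit than the paper's (the paper condenses it into the observation that $B$-elements are pairwise incomparable, so every maximal downward path from $v$ meets a unique $B$-element), and your contradiction argument could be shortened by noting directly that a $b \in B$ strictly above $v$ together with a $B$-descendant $b''$ of $v$ would give two comparable elements of $B$, but either route is fine.
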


\begin{proof}
  We will characterize each side of the equality in turn, starting with
  its left-hand side.
  
  Since $B$ is a bottleneck of $\mathcal{Q}$, no ancestor of a node in $B$ can be a descendant of any node in $B$. This means that every maximal downward path from $v$ in $\mathcal{Q}$ contains a unique element of $B$.
Thus, $\left( \mathcal{Q} \upharpoonright B \right)^{v}$ is a subtree of $\mathcal{Q} \upharpoonright B$ in which every path starts at node $v$.
Now, by Definition~\ref{def:spt-tas-restricted-set-of-nodes}, $\mathcal{Q} \upharpoonright B$ is the unique subtree of $\mathcal{Q}$ comprising all (and only the) paths of nodes in $\mathcal{Q}$ from the root of  $\mathcal{Q}$ to some $b$ in $B$.
Therefore, $\left( \mathcal{Q} \upharpoonright B \right)^v$ is the unique subtree of $\mathcal{Q}$ comprising all (and only the) paths of nodes in $\mathcal{Q}$ from $v$ to some $b$ in $B$.

Next, we consider the right-hand side of the equality. By Definition~\ref{def:spt-tas-restricted-set-of-nodes}, $\mathcal{Q}^v \upharpoonright B$ is the unique subtree of $\mathcal{Q}^v$ comprising all (and only the) paths of nodes in $\mathcal{Q}^v$ from $v$ (i.e., the root of $\mathcal{Q}^v$) to some node $b$ in $B$. Since $Q^v$ is a subtree of $\mathcal{Q}$, $\mathcal{Q}^v \upharpoonright B$ is the unique subtree of $\mathcal{Q}$ comprising all (and only the) paths of nodes in $\mathcal{Q}$ from $v$ to some $b$ in $B$

Finally, since the characterizations of the two sides of the equality are identical, $\left(\mathcal{Q} \upharpoonright B\right)^v = \mathcal{Q}^v \upharpoonright~B$. 
\end{proof}

%\color{red}
The following lemma ``collapses'' the level of $\mathcal{P}$ between a POC and the starting point of the next POC.
%\color{black}

\begin{lemma} % lemma 20
\label{lem:probability-y-i-equals-probability-x-i-plus-1}
If $r > 1$ and $i \in \mathbb{Z}^+ \cap [1, r-1]$, then $\mathcal{P} \upharpoonright \vec{y}_{i}$ and $\mathcal{P} \upharpoonright \vec{x}_{i+1}$ are well-defined and $\textmd{Pr}\left[ \mathcal{P} \upharpoonright \vec{y}_{i}  \right] = \textmd{Pr}\left[ \mathcal{P} \upharpoonright \vec{x}_{i+1} \right]$.
\end{lemma}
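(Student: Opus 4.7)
My plan is to mirror the structure of Lemma~\ref{lem:probability-y-r}, but using the set $B_i$ of nodes of $\mathcal{P}$ that terminate at $\vec{y}_i$ as a bottleneck of $\mathcal{P} \upharpoonright \vec{x}_{i+1}$ (rather than as a bottleneck of $\mathcal{P}$ itself). First I would verify that both $\mathcal{P} \upharpoonright \vec{y}_i$ and $\mathcal{P} \upharpoonright \vec{x}_{i+1}$ are well-defined. This follows from the fact that every maximal path of $\mathcal{P}$ corresponds to a $w$-correct $\mathcal{T}$-producing assembly sequence resulting in $\alpha$ (Definition~\ref{def:w-pruned}), combined with repeated applications of Lemmas~\ref{lem:x-i-y-i-ordering-r-1} and~\ref{lem:x-i-y-i-ordering} to place both $\vec{y}_i$ and $\vec{x}_{i+1}$ in the domain of $\alpha$ and thus guarantee corresponding nodes on every such path.

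Next I would argue that $B_i$ is a bottleneck of $\mathcal{P} \upharpoonright \vec{x}_{i+1}$. Every maximal path in $\mathcal{P} \upharpoonright \vec{x}_{i+1}$ ends at some node $\vec{\beta}_{\vec{x}_{i+1}}$ that terminates at $\vec{x}_{i+1}$, and the ordering given by Lemma~\ref{lem:x-i-y-i-ordering} forces the unique node terminating at $\vec{y}_i$ along that path to be a proper ancestor of $\vec{\beta}_{\vec{x}_{i+1}}$. Uniqueness of such an ancestor (no assembly sequence terminates at $\vec{y}_i$ twice) completes the bottleneck check.

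I would then chain the two key tools: apply Lemma~\ref{lem:alternative-characterization-of-pr} to $\mathcal{P} \upharpoonright \vec{x}_{i+1}$ with bottleneck $B_i$, and apply Lemma~\ref{lem:p-restricted-u} (with $\mathcal{Q} = \mathcal{P}$, $B = $ the set of nodes of $\mathcal{P}$ terminating at $\vec{x}_{i+1}$, and $v = \vec{\beta}_{\vec{y}_i}$) to rewrite $(\mathcal{P} \upharpoonright \vec{x}_{i+1})^{\vec{\beta}_{\vec{y}_i}}$ as $\mathcal{P}^{\vec{\beta}_{\vec{y}_i}} \upharpoonright \vec{x}_{i+1}$. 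Lemma~\ref{lem:terminate-at-y-i-x-i-plus-1-pr-1} then collapses each such subtree to probability $1$, leaving
\[
\textmd{Pr}\left[ \mathcal{P} \upharpoonright \vec{x}_{i+1} \right] = \sum_{\vec{\beta}_{\vec{y}_i} \in B_i} \textmd{Pr}_{\mathcal{P} \upharpoonright \vec{x}_{i+1}}\!\left[\vec{\beta}_{\vec{y}_i}\right].
\]
Two applications of Observation~\ref{obs:prob-subtree-same-root} (first to identify $\textmd{Pr}_{\mathcal{P} \upharpoonright \vec{x}_{i+1}}[\vec{\beta}_{\vec{y}_i}]$ with $\textmd{Pr}_{\mathcal{P}}[\vec{\beta}_{\vec{y}_i}]$, and then with $\textmd{Pr}_{\mathcal{P} \upharpoonright \vec{y}_i}[\vec{\beta}_{\vec{y}_i}]$), followed by Lemma~\ref{lem:restricted-tree-finite} (which guarantees $B_i$ is precisely the leaf set of $\mathcal{P} \upharpoonright \vec{y}_i$) and Observation~\ref{obs:leaf-node-maximal-path}, recast the sum as $\textmd{Pr}[\mathcal{P} \upharpoonright \vec{y}_i]$.

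The only subtle step is justifying that $B_i$ really is a bottleneck of $\mathcal{P} \upharpoonright \vec{x}_{i+1}$ and, just as importantly, that the ancestor hypothesis required by Lemma~\ref{lem:p-restricted-u} is met, i.e., every $\vec{\beta}_{\vec{y}_i} \in B_i$ is an ancestor in $\mathcal{P}$ of at least one node terminating at $\vec{x}_{i+1}$; both reduce to the strict index ordering $\textmd{index}_{\vec{\alpha}}(\vec{y}_i) < \textmd{index}_{\vec{\alpha}}(\vec{x}_{i+1})$ from Lemma~\ref{lem:x-i-y-i-ordering}, applied to any $w$-correct assembly sequence extending $\vec{\beta}_{\vec{y}_i}$ to $\alpha$ (which exists by Lemma~\ref{lem:w-correct-extended-to-w-correct-terminal}). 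Everything else is a bookkeeping computation analogous to the display equation in the proof of Lemma~\ref{lem:probability-y-r}.
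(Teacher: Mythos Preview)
Your proposal is correct and follows essentially the same route as the paper's proof: both use the set of nodes terminating at $\vec{y}_i$ as a bottleneck of $\mathcal{P}\upharpoonright\vec{x}_{i+1}$, then chain Lemma~\ref{lem:alternative-characterization-of-pr}, Lemma~\ref{lem:p-restricted-u}, and Lemma~\ref{lem:terminate-at-y-i-x-i-plus-1-pr-1} before collapsing the sum via Observation~\ref{obs:prob-subtree-same-root}, Lemma~\ref{lem:restricted-tree-finite}, and Observation~\ref{obs:leaf-node-maximal-path}. Your version is slightly more explicit about verifying the ancestor hypothesis of Lemma~\ref{lem:p-restricted-u} (via Lemma~\ref{lem:w-correct-extended-to-w-correct-terminal}), which the paper leaves implicit, but otherwise the arguments coincide.
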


\begin{proof}
Let $\pi$ be an arbitrary maximal path in $\mathcal{P}$. 
By Definition~\ref{def:w-pruned}, $\pi$ corresponds to a $w$-correct, $\mathcal{T}$-producing assembly sequence $\vec{\alpha}$ that results in $\alpha$, the unique $w$-correct, terminal assembly of $\mathcal{T}$ such that $Y \subseteq \dom{\alpha}$.
This means there is a node in $\dom{\pi}$ that terminates at $\vec{y}_i$.
Thus, $\mathcal{P} \upharpoonright \vec{y}_{i}$ is well-defined.
Since $Y \subseteq \dom{\alpha}$, $r > 1$, and $i$ is an integer such that $1 \leq i < r$, Lemma~\ref{lem:x-i-y-i-ordering} implies that $\vec{x}_{i+1} \in \dom{\alpha}$ and $\textmd{index}_{\vec{\alpha}}\left(\vec{y}_i\right) < \textmd{index}_{\vec{\alpha}}\left(\vec{x}_{i+1}\right)$.
Thus, $\mathcal{P} \upharpoonright \vec{x}_{i+1}$ is well-defined. Furthermore, there is a node in $\dom{\pi}$ that terminates at $\vec{x}_{i+1}$ and  is a descendant in $\mathcal{P}$ of the node in $\dom{\pi}$ that terminates at $\vec{y}_i$.
This means $\mathcal{P} \upharpoonright \vec{y}_{i}$ is a subtree of $\mathcal{P} \upharpoonright \vec{x}_{i+1}$.
If we define $B$ to be the set of nodes of $\mathcal{P}$ that terminate at $\vec{y}_i$, then $B$ is a bottleneck of $\mathcal{P}  \upharpoonright \vec{x}_{i+1}$.
Moreover, Lemma~\ref{lem:restricted-tree-finite} implies that $\mathcal{P} \upharpoonright \vec{y}_i$ is finite and $B$ is the set of its leaf nodes.
Then, we have:
\[
\begin{array}{llll}
\textmd{Pr}\left[\mathcal{P} \upharpoonright \vec{x}_{i+1}\right] & = & \displaystyle\sum_{\vec{\beta}_{\vec{y}_i} \in B}{\left( \textmd{Pr}_{\mathcal{P} \upharpoonright \vec{x}_{i+1}}\left[\vec{\beta}_{\vec{y}_i}\right] \cdot \textmd{Pr}\left[ \left( \mathcal{P} \upharpoonright \vec{x}_{i+1} \right)^{\vec{\beta}_{\vec{y}_i} \;} \right]\right)} & \textmd{ Lemma~\ref{lem:alternative-characterization-of-pr} with } \mathcal{Q} = \mathcal{P} \upharpoonright \vec{x}_{i+1} \\
	 & = & \displaystyle\sum_{\vec{\beta}_{\vec{y}_i} \in B}{\left( \textmd{Pr}_{\mathcal{P} \upharpoonright \vec{x}_{i+1}}\left[\vec{\beta}_{\vec{y}_i}\right] \cdot \textmd{Pr}\left[ \mathcal{P}^{\vec{\beta}_{\vec{y}_i}} \upharpoonright \vec{x}_{i+1} \right]\right)} & \textmd{ Lemma~\ref{lem:p-restricted-u} with } \mathcal{Q} = \mathcal{P} \upharpoonright \vec{x}_{i+1} \textmd{ and } v = \vec{\beta}_{\vec{y}_i}  \\
	 & = & \displaystyle\sum_{\vec{\beta}_{\vec{y}_i} \in B}{\left( \textmd{Pr}_{\mathcal{P} \upharpoonright \vec{x}_{i+1}}\left[\vec{\beta}_{\vec{y}_i}\right] \cdot 1 \right)} & \textmd{ Lemma~\ref{lem:terminate-at-y-i-x-i-plus-1-pr-1} }\\
	 & = & \displaystyle\sum_{\vec{\beta}_{\vec{y}_i} \textmd{ leaf node of } \mathcal{P}\upharpoonright \vec{y}_i}{\textmd{Pr}_{\mathcal{P} \upharpoonright \vec{x}_{i+1}}\left[\vec{\beta}_{\vec{y}_i}\right] } & \textmd{ Definition of } B \\
	 & = & \displaystyle\sum_{\vec{\beta}_{\vec{y}_i} \textmd{ leaf node of } \mathcal{P}\upharpoonright \vec{y}_i}{ \textmd{Pr}_{\mathcal{P}\upharpoonright \vec{y}_i }\left[\vec{\beta}_{\vec{y}_i}\right]  } & \textmd{ Obs.~\ref{obs:prob-subtree-same-root} with } \mathcal{Q} = \mathcal{P} \upharpoonright \vec{x}_{i+1} \textmd{ and } \mathcal{Q}' = \mathcal{P} \upharpoonright \vec{y}_i \\
	 & = & \displaystyle\sum_{\pi \textmd{ maximal path in } \mathcal{P} \upharpoonright \vec{y}_i}{ \textmd{Pr}_{\mathcal{P}\upharpoonright \vec{y}_i}\left[\pi\right]  } & \textmd{ Observation~\ref{obs:leaf-node-maximal-path} with } \mathcal{Q} = \mathcal{P} \upharpoonright \vec{y}_i \\
	& = & \textmd{Pr}\left[ \mathcal{P} \upharpoonright \vec{y}_i \right] & \textmd{ Definition of } \textmd{Pr}\left[ \mathcal{P} \upharpoonright \vec{y}_i \right].
\end{array}
\]
\end{proof}

We now turn our attention to analyzing the quantity $\textmd{Pr}\left[ \mathcal{P}^{\vec{\beta}_{\vec{x}}} \upharpoonright \vec{y} \right]$, which is the probability of a subtree of $\mathcal{P}$ with root $\vec{\beta}_{\vec{x}}$ that corresponds to $\mathcal{C}$.
Our goal is to lower bound it by $\textmd{Pr}\left[ \mathcal{C}\right]$. 
\begin{lemma} % lemma 21
  \label{lem:b-x-y-well-defined-leaf-nodes}
$\mathcal{P}^{\vec{\beta}_{\vec{x}}} \upharpoonright \vec{y}$ has the following properties:
\begin{enumerate}
	\item \label{lem:b-x-y-well-defined-leaf-nodes-1} $\mathcal{P}^{\vec{\beta}_{\vec{x}}} \upharpoonright \vec{y}$ is well-defined.
	\item \label{lem:b-x-y-well-defined-leaf-nodes-2} All of the leaf nodes of $\mathcal{P}^{\vec{\beta}_{\vec{x}}} \upharpoonright \vec{y}$ are nodes of $\mathcal{P}$ that terminate at $\vec{y}$.
\end{enumerate} 
\end{lemma}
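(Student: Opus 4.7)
The plan is to reduce both parts of the lemma to machinery that is already available, namely Definition~\ref{def:spt-tas-restricted-general}, Lemma~\ref{lem:restricted-tree-finite}, the ordering lemma (Lemma~\ref{lem:x-i-y-i-ordering-r-1}), and the extension lemma (Lemma~\ref{lem:w-correct-extended-to-w-correct-terminal}). First, I would establish the minor fact that $\vec{\beta}_{\vec{x}}$ is in fact a node of $\mathcal{P}$. Since $\beta_{\vec{x}} \sqsubseteq \alpha$ and $\alpha$ is the unique $w$-correct terminal assembly, every tile in $\beta_{\vec{x}}$ at any essential POC agrees with $w$, and condition~\ref{def:correct-assembly-2} of Definition~\ref{def:correct-assembly} applied to $\alpha$ rules out any tile of $\beta_{\vec{x}}$ being placed at an inessential POC. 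Hence $\vec{\beta}_{\vec{x}}$ is a finite, $w$-correct $\mathcal{T}$-producing assembly sequence. Because $\vec{\beta}_{\vec{x}}$ terminates at $\vec{x}$ and $\vec{y} \neq \vec{x}$ with $\vec{y} \in Y \subseteq \dom{\alpha}$, we have $\res{\vec{\beta}_{\vec{x}}} \not\in \mathcal{A}_{\Box}[\mathcal{T}]$; Lemma~\ref{lem:w-correct-extended-to-w-correct-terminal} then supplies an extension of $\vec{\beta}_{\vec{x}}$ that results in $\alpha$, so $\vec{\beta}_{\vec{x}}$ lies on some maximal path of $\mathcal{P}$ and is therefore a node of $\mathcal{P}$.

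For property~\ref{lem:b-x-y-well-defined-leaf-nodes-1}, Definition~\ref{def:spt-tas-restricted-general} requires that every maximal path in $\mathcal{P}^{\vec{\beta}_{\vec{x}}}$ contain a node that terminates at $\vec{y}$. Let $\pi$ be such a maximal path. Concatenating the (unique) path from the root of $\mathcal{P}$ to $\vec{\beta}_{\vec{x}}$ with $\pi$ yields a maximal path $\pi^\star$ of $\mathcal{P}$, which by Definition~\ref{def:w-pruned} corresponds to a $w$-correct, $\mathcal{T}$-producing assembly sequence $\vec{\alpha}$ resulting in $\alpha$. Since $\vec{y} \in Y \subseteq \dom{\alpha}$, there is a node $\vec{\beta}_{\vec{y}} \in \dom{\pi^\star}$ that terminates at $\vec{y}$. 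Since $\vec{x}$ is the starting point corresponding to $\vec{y}$, say $\vec{x} = \vec{x}_i$ and $\vec{y} = \vec{y}_i$ for some $1 \le i \le r$, Lemma~\ref{lem:x-i-y-i-ordering-r-1} gives $\textmd{index}_{\vec{\alpha}}(\vec{x}_i) < \textmd{index}_{\vec{\alpha}}(\vec{y}_i)$, so $\vec{\beta}_{\vec{y}}$ is a strict descendant of $\vec{\beta}_{\vec{x}}$ in $\mathcal{P}$, i.e., $\vec{\beta}_{\vec{y}} \in \dom{\pi}$. This establishes well-definedness.

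For property~\ref{lem:b-x-y-well-defined-leaf-nodes-2}, I would invoke Lemma~\ref{lem:restricted-tree-finite} with $\mathcal{Q} = \mathcal{P}^{\vec{\beta}_{\vec{x}}}$ (which is finitely branching as a subtree of $\mathcal{M}_{\mathcal{T}}$) and with $B$ the set of nodes of $\mathcal{P}^{\vec{\beta}_{\vec{x}}}$ terminating at $\vec{y}$; the work in the previous paragraph shows that $B$ is a bottleneck of $\mathcal{P}^{\vec{\beta}_{\vec{x}}}$. Lemma~\ref{lem:restricted-tree-finite} then gives that $\mathcal{P}^{\vec{\beta}_{\vec{x}}} \upharpoonright \vec{y}$ is finite with leaf set exactly $B$. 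Since $\mathcal{P}^{\vec{\beta}_{\vec{x}}}$ is a subtree of $\mathcal{P}$, every element of $B$ is a node of $\mathcal{P}$ that terminates at $\vec{y}$, which is what property~\ref{lem:b-x-y-well-defined-leaf-nodes-2} asserts.

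I expect no serious obstacle here: the entire argument is a bookkeeping exercise on top of the ordering and extension lemmas. The most delicate point is simply recognizing that $\vec{\beta}_{\vec{x}}$ actually belongs to $\mathcal{P}$ (so that $\mathcal{P}^{\vec{\beta}_{\vec{x}}}$ makes sense) and that the ordering of indices $\vec{x}_i < \vec{y}_i$ forces the bottleneck node for $\vec{y}$ to live strictly below $\vec{\beta}_{\vec{x}}$ rather than above it; both facts are immediate from the already-proved lemmas.
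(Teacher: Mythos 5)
Your proof is correct and takes essentially the same approach as the paper: both reduce well-definedness to the index ordering supplied by Lemma~\ref{lem:x-i-y-i-ordering-r-1} and then invoke Lemma~\ref{lem:restricted-tree-finite} to characterize the leaf nodes. You are somewhat more careful than the paper's own proof in explicitly verifying (via Lemma~\ref{lem:w-correct-extended-to-w-correct-terminal}) that $\vec{\beta}_{\vec{x}}$ is actually a node of $\mathcal{P}$, and in working with maximal paths of $\mathcal{P}^{\vec{\beta}_{\vec{x}}}$ and lifting them to maximal paths of $\mathcal{P}$ through $\vec{\beta}_{\vec{x}}$ (rather than starting from an arbitrary maximal path of $\mathcal{P}$, which might not pass through $\vec{\beta}_{\vec{x}}$ at all); this extra bookkeeping is a genuine improvement in precision, though the underlying ideas are identical.
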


\begin{proof}
Let $\pi$ be an arbitrary maximal path in $\mathcal{P}$.
By Definition~\ref{def:w-pruned}, $\pi$ corresponds to a $w$-correct $\mathcal{T}$-producing assembly sequence $\vec{\alpha}$ that results in $\alpha$, the unique $w$-correct terminal assembly of $\mathcal{T}$ such that $Y \subseteq \dom{\alpha}$.
Therefore, there is a node in $\dom{\pi}$ that terminates at $\vec{y}$.
Since $Y \subseteq \dom{\alpha}$, Lemma~\ref{lem:x-i-y-i-ordering-r-1} says that $\vec{x} \in \dom{\alpha}$ and $\textmd{index}_{\vec{\alpha}}\left(\vec{x}\right) < \textmd{index}_{\vec{\alpha}}\left(\vec{y}\right)$.
Hence, there is a node in $\dom{\pi}$ that terminates at $\vec{x}$, which implies that $\mathcal{P}^{\vec{\beta}_{\vec{x}}}$ is not empty.
Moreover, the previous inequality implies that the node in $\dom{\pi}$ that terminates at $\vec{x}$ is an ancestor of the node in $\dom{\pi}$ that terminates at $\vec{y}$. 
This means $\mathcal{P}^{\vec{\beta}_{\vec{x}}} \upharpoonright \vec{y}$ is well-defined.
Then, Lemma~\ref{lem:restricted-tree-finite} implies that $\mathcal{P}^{\vec{\beta}_{\vec{x}}} \upharpoonright \vec{y}$ is finite and that each one of its leaf nodes is a node in $\mathcal{P}$ that terminates at $\vec{y}$.
\end{proof}

\begin{definition}  % def. competing tile
\label{def:competing-tile}
Let $\vec{p} \in \mathbb{Z}^2$ and $\beta \in \mathcal{A}[\mathcal{T}]$. 
We say that the tile $\left( \vec{p}, t \right) \in \mathbb{Z}^2 \times T$ is a \emph{competing tile in} $\beta$ at $\vec{p}$ if either
$\beta = \sigma = \left\{ \left(\vec{p}=\vec{s}=\vec{x}_1,t\right) \right\}$
or
there exists $\vec{y} \in Y$ with corresponding $\vec{x}$, $\piw$, and $\pil$ such that the following four conditions hold:
\begin{enumerate}
	\item \label{def:ct-1} There exists a competing path $\pi \in \left\{ \piw, \pil \right\}$ and an integer $1 \leq l \leq \left| \pi \right|$ such that $\vec{p} = \pi[l]$.
	\item \label{def:ct-2} $ \vec{y}  \not \in \dom{\beta}$
	\item \label{def:ct-3} $\vec{p} \in \partial_t{\beta}$
        \item \label{def:ct-4} Every simple path in $G^{\textmd{b}}_{\beta + \left(\vec{p},t\right)}$ from $\vec{s}$ to $\vec{p}$ goes through $\pi[1] = \vec{x}$.  
\end{enumerate}
\end{definition}

Lemma~\ref{lem:w-correct-no-inessential-poc} implies that $\alpha$ does not place tiles at the inessential POCs, which means that tiles placed on any competing path associated with an inessential POC are not actually competing.
This motivates our decision for Definition~\ref{def:competing-tile} to explicitly require that a competing tile be one that is placed at a point along a competing path associated with an essential POC.
Figure~\ref{fig:def-example-tas-competing-tile} depicts some examples and counter-examples of competing tiles in our running example.
As an additional example, consider the TAS resulting from removing the W$^0_a$, W$^0_b$, and W$^0_c$ tiles from the TAS depicted in Figure~\ref{fig:def1-tas} on page~\pageref{fig:def1-tas}. In the resulting TAS, the only terminal assembly is $\alpha_{1,1}$ shown in Figure~\ref{fig:def1-tas}(c), since the bit value 1 is always written and the losing path $\pi'$ is always blocked. Now, consider the subassembly of $\alpha_{1,1}$ in which only the S and  W$^1_a$ tiles are placed. The tile shown as  W$^1_b$ in Figure~\ref{fig:def1-tas}(c) is not a competing tile in this subassembly, even though it satisfies the first three conditions in Definition~\ref{def:competing-tile}. This tile breaks condition~\ref{def:ct-4} because the assembly sequence containing the  S, W$^1_a$, and  W$^1_b$ tiles follows a path that ends at $\pi'[3]$ without first going through $\pi[1]=\pi'[1]$, namely the point where the R tile will later be placed.

 \begin{figure}[!h]
  \begin{minipage}{\linewidth}
    \centering
        \includegraphics[width=3in]{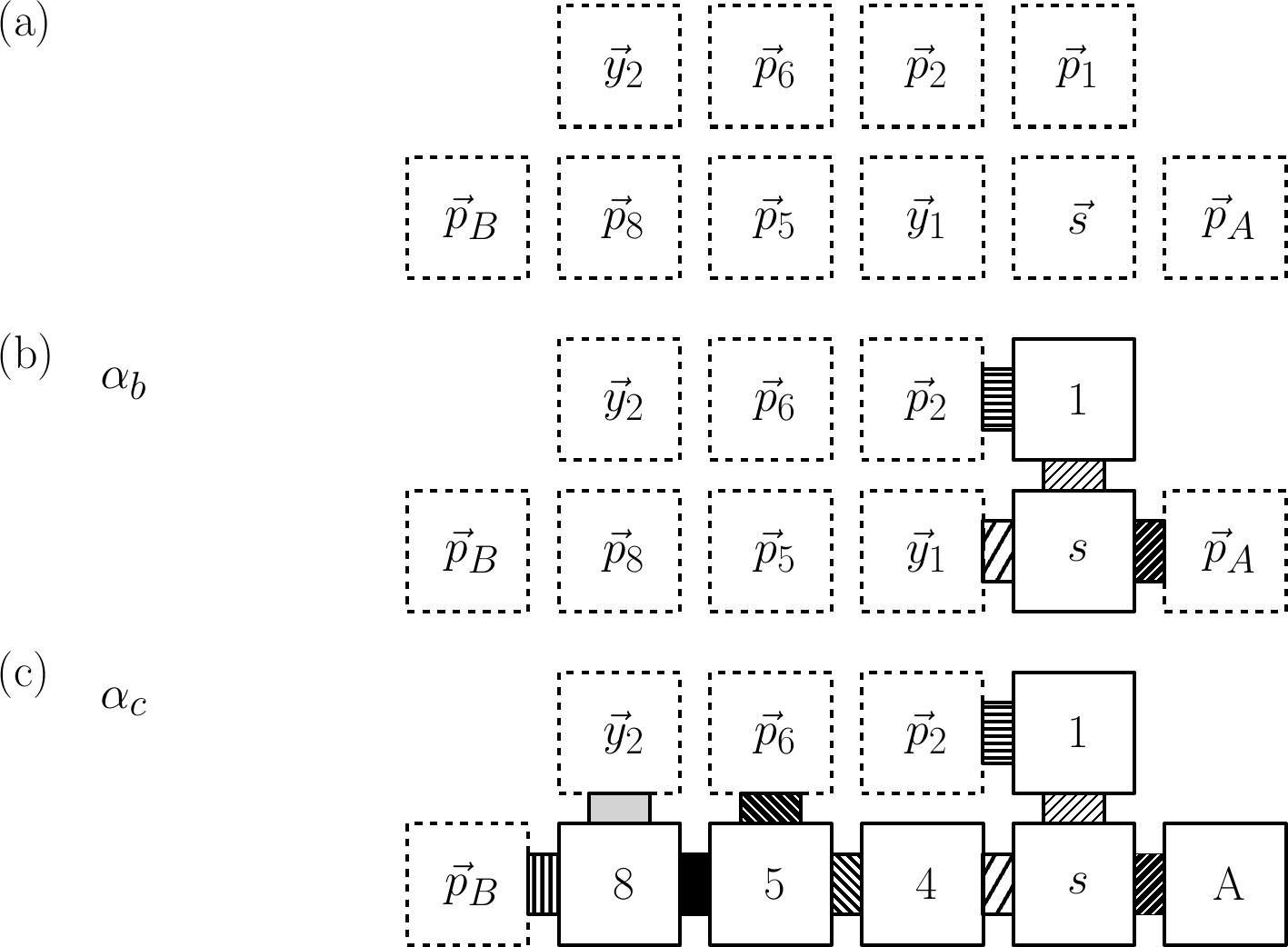}
      \end{minipage}      
      \caption{\label{fig:def-example-tas-competing-tile}Competing
        tiles in our running example.\\
        (a) Full labeling of the points
        that are in the domain of every $\mathcal{T}$-terminal
        assembly.\\
        (b) In this assembly $\alpha_b \not \in
        \mathcal{A}_{\Box}[\mathcal{T}]$, the two tiles
        $(\vec{p}_2,2)$ and $(\vec{y}_1,4)$ are competing tiles
        because each one of them satisfies all four conditions in
        Definition~\ref{def:competing-tile}; note that $(\vec{p}_5,5)$
        is not a competing tile because it does not satisfy
        condition~\ref{def:ct-3} and $(\vec{p}_A,A)$ is not a
        competing tile because it does not satisfy
        condition~\ref{def:ct-1}.\\
        (c) In this assembly $\alpha_c \not
        \in \mathcal{A}_{\Box}[\mathcal{T}]$, the two tiles
        $(\vec{p}_6,6)$ and $(\vec{y}_2,9)$ are competing tiles
        because each one of them satisfies all four conditions in
        Definition~\ref{def:competing-tile}; note that $(\vec{p}_B,B)$
        is not a competing tile because it does not satisfy
        condition~\ref{def:ct-1} and $(\vec{p}_2,2)$ is not a
        competing tile because it does not satisfy
        condition~\ref{def:ct-2}.\\ }
\end{figure}
\begin{definition}  % def. competing node
\label{def:competitive-node}
Let $k \in \mathbb{Z}^+$, $\vec{p} \in \mathbb{Z}^2$, $t \in T$, and $\vec{\beta} = \left( \beta_i \mid 1 \leq i \leq k \right)$ be a node of $\mathcal{M}_{\mathcal{T}}$. We say that $\vec{\beta}$ is \emph{competing in} $\mathcal{M}_{\mathcal{T}}$ if either:
\begin{enumerate}
\item $k = 1$ and the only element of $\beta_1 = \sigma$, namely $\left(\vec{s},s\right)$, is a tile that is competing in $\sigma$, or
\item $\beta_k = \beta_{k-1} + \left( \vec{p}, t \right)$ and $\left( \vec{p}, t \right)$ is a tile competing in $\beta_{k-1}$.
\end{enumerate}
\end{definition} 
In Figure~\ref{fig:def-example-tas-P_t_w'}, the competing nodes are circled.
\begin{figure}[!h]
          \centerline{\includegraphics[width=2.5in]{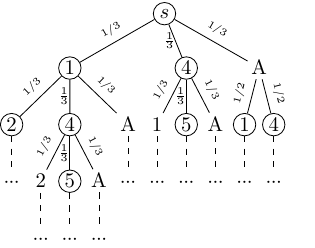}}
          
          \begin{minipage}{\linewidth}
            \caption{\label{fig:def-example-tas-P_t_w'} Partial
              depiction of $\mathcal{P}_{w'}$, where $\mathcal{T}$ is
              our running example TAS and
              $w'~=~\{(\vec{y}_1,4),(\vec{y}_2,9)\}$. The circled
              nodes are the only competing nodes in this portion of
              the tree. The nodes labeled with an A are not competing
              nodes because the A tile type does not attach at any
              point along a competing path. The two non-circled nodes
              labeled 1 and 2, respectively, are not competing nodes
              because, in the assembly that results from the sequences
              to which their respective parent nodes correspond, the
              tile type 4 has already attached at the point of
              competition $\vec{y}_1$, which contradicts the second
              condition of Definition~\ref{def:competing-tile}.  }
          \end{minipage}
        \end{figure}

The next lemma says that if a competing path is going to be blocked, then it must be blocked prior to placing a tile at the starting point for the corresponding POC.

\begin{lemma} % lemma 22
\label{lem:longest-simple-finite-path-beta-m}
Let:\vspace*{-2mm}
\begin{itemize}
\item $\vec{y} \in Y$ with associated starting point $\vec{x}$ and competing path $\pi \in \{\piw, \pil\}$,
\item $\vec{\beta}_{\vec{x}}$ be a $\mathcal{T}$-producible assembly sequence that terminates at $\vec{x}$ and whose result  $\beta_{\vec{x}}$ is a subassembly of $\alpha$,
\item $\vec{\beta}$ be equal to, or any extension of,  $\vec{\beta}_{\vec{x}}$ with result $\beta$ such that $\vec{y} \notin \dom{\beta}$, 
\item $l$ be the unique integer in $[1..|\pi|-1]$ such that $\pi[1\ldots l]$ is a simple path in $G^{\textmd{b}}_{\beta}$ but $\pi[1\ldots l+1]$ is not, and
\item $\vec{p} \in  \mathbb{Z}^2$.
\end{itemize}
If $\vec{p} \in \dom{\pi\left[l+1\ldots \left|\pi\right|\right]}$ and $\vec{p} \not \in \dom{\beta}_{\vec{x}}$, then $\vec{p} \not \in \dom{\beta}$.
\end{lemma}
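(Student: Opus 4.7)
My plan is to argue by contradiction, assuming $\vec{p} \in \dom{\beta}$ together with all the stated hypotheses. Since $\vec{y} \notin \dom{\beta}$, the point $\vec{p}$ must equal $\pi[l']$ for some index $l'$ with $l+1 \leq l' \leq |\pi|-1$. Because $\vec{\beta}_{\vec{x}}$ is a prefix of $\vec{\beta}$ and $\vec{p} \notin \dom{\beta_{\vec{x}}}$, the tile at $\vec{p}$ is placed by the final step of some prefix $\vec{\beta}''$ of $\vec{\beta}$ that properly extends $\vec{\beta}_{\vec{x}}$. Let $\beta'' = \res{\vec{\beta}''}$; it is a $\mathcal{T}$-producible assembly with $\vec{p} \in \dom{\beta''}$, and, by connectedness, $\bindinggraph_{\beta''}$ contains a finite simple path $p'$ from $\vec{s}$ to $\vec{p} = \pi[l']$. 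This $p'$ is the object I will feed into condition~\ref{def:cp-4} of Definition~\ref{def:competing-paths}, whose hypotheses are met since $1 \leq l' < |\pi|$.

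I will then split on whether $p'$ traverses $\vec{x} = \pi[1]$. In the first case, condition~\ref{def:cp-4a} forces $\pi[1 \ldots l']$ to be a suffix of $p'$, so in particular both the edge from $\pi[l]$ to $\pi[l+1]$ and the vertex $\pi[l+1]$ appear in $\bindinggraph_{\beta''}$; since $\beta'' \sqsubseteq \beta$, the same holds in $\bindinggraph_{\beta}$, making $\pi[1 \ldots l+1]$ a simple path there and contradicting the defining property of $l$. In the second case, condition~\ref{def:cp-4b} says that $p'$ is a prefix of every simple path from $\vec{s}$ to $\vec{x}$ in $\bindinggraph_{\beta''}$; taking the simple path from $\vec{s}$ to $\vec{x}$ that exists in $\bindinggraph_{\beta_{\vec{x}}}$ (because $\vec{\beta}_{\vec{x}}$ terminates at $\vec{x}$) and importing it into $\bindinggraph_{\beta''}$ via $\beta_{\vec{x}} \sqsubseteq \beta''$, I conclude $\dom{p'} \subseteq \dom{\beta_{\vec{x}}}$, so $\vec{p} \in \dom{\beta_{\vec{x}}}$, which contradicts the hypothesis that $\vec{p} \notin \dom{\beta_{\vec{x}}}$.

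The trickiest piece I foresee is purely bookkeeping: I must make sure that $l'$ lies strictly below $|\pi|$ so that condition~\ref{def:cp-4} is applicable, which is exactly where the hypothesis $\vec{y} \notin \dom{\beta}$ gets used, and that $\vec{\beta}_{\vec{x}}$ really sits as a prefix of $\vec{\beta}''$, so that the simple path to $\vec{x}$ built inside $\vec{\beta}_{\vec{x}}$ survives into $\bindinggraph_{\beta''}$ and can be used to invoke condition~\ref{def:cp-4b} in the second case. Once these points are nailed down, the two-case split above should close the argument cleanly.
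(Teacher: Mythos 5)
Your proof is correct and follows essentially the same strategy as the paper's: assume $\vec{p} \in \dom{\beta}$ for contradiction, take a simple path $p'$ from $\vec{s}$ to $\vec{p}$, split on whether it passes through $\vec{x}$, and invoke conditions~\ref{def:cp-4a} and~\ref{def:cp-4b} of Definition~\ref{def:competing-paths} to derive a contradiction with either the defining property of $l$ or the hypothesis $\vec{p}\notin\dom{\beta_{\vec{x}}}$. The only stylistic difference is that you route the argument through an intermediate assembly $\beta''$ (the result of the prefix of $\vec{\beta}$ ending at the step that places the tile at $\vec{p}$) rather than working with $\beta$ directly, and you fix $p'$ up front and split on its behavior, whereas the paper splits on the existence of a path not through $\vec{x}$; both are valid and interchangeable here, since conditions~\ref{def:cp-4a} and~\ref{def:cp-4b} quantify over an arbitrary choice of $p'$.
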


\begin{proof}
Suppose, for the sake of obtaining a contradiction, that $\vec{p} \in \dom{\beta}$.
Since $\vec{y} \not \in \dom{\beta}$, it follows that $\vec{p} \ne \vec{y}$.
Now, we consider whether or not there exists a simple path $p'$ from $\vec{s}$ to $\vec{p}$ in $G^{\textmd{b}}_{\beta}$ that does not go through $\vec{x}$. We handle both cases in turn.
\begin{enumerate}
\item Assume that there exists a simple path $p'$ from $\vec{s}$ to $\vec{p}$ in $G^{\textmd{b}}_{\beta}$ that does not go through $\vec{x}$.
Since, by assumption, $\vec{p} \not \in \dom{\beta_{\vec{x}}}$, there is a simple path $p$ from $\vec{s}$ to $\vec{x}$ in $G^{\textmd{b}}_{\beta_{\vec{x}}}$ that does not go through $\vec{p}$. 
Since $\vec{\beta}$ is equal to, or an extension of, $\vec{\beta}_{\vec{x}}$, $p$ is also a simple path from $\vec{s}$ to $\vec{x}$ in $G^{\textmd{b}}_{\beta}$ that does not go through $\vec{p}$.
Condition~\ref{def:cp-4b} of Definition~\ref{def:competing-paths} says that $p'$ is a prefix of every simple path from $\vec{s}$ to $\vec{x}$ in $G^{\textmd{b}}_{\beta}$.
However, $p'$ cannot be a prefix of $p$ because $\vec{p} \not \in \dom{p}$ but $\vec{p} \in \dom{p'}$. 
	\item Assume that every simple path $p'$ from $\vec{s}$ to $\vec{p}$ in $G^{\textmd{b}}_{\beta}$ goes through $\vec{x}$.
Since $\vec{p} \ne \vec{y}$, let $n$ be an integer such that $1 \leq n < \left|\pi\right|$ and $\vec{p} = \pi[n]$. 
Then, condition~\ref{def:cp-4a} of Definition~\ref{def:competing-paths} implies that $\pi\left[1\ldots n\right]$ is a suffix of $p'$, which means $\pi\left[1\ldots n\right]$ is a simple path in $G^{\textmd{b}}_{\beta}$.
Since $\vec{p} \in \dom{\pi\left[l+1\ldots \left| \pi \right|\right]}$, we have $n > l$.
This contradicts the fact that $\pi[1\ldots l+1]$ is not a simple path in $G^{\textmd{b}}_{\beta}$.
\end{enumerate}
Since we reached a contradiction in both cases, it follows that $\vec{p} \not \in \dom{\beta}$. 
\end{proof}

The following lemma says that placing a tile at a point along a competing path cannot create an alternate path to this point that does not go through the starting point of the competing paths. 

\begin{lemma} % lemma 23
\label{lem:competing-tile-4}
Let:\vspace*{-2mm}
\begin{itemize}
\item $\vec{y} \in Y$ with associated starting point $\vec{x}$ and competing path $\pi \in \{\piw, \pil\}$,
\item $\vec{\beta}_{\vec{x}}$ be a $\mathcal{T}$-producible assembly sequence that terminates at $\vec{x}$ and whose result  $\beta_{\vec{x}}$ is a subassembly of $\alpha$,
\item $\vec{\beta}$ be equal to, or any extension of, $\vec{\beta}_{\vec{x}}$ with result $\beta$ such that $\vec{y} \not \in \dom{\beta}$,
\item $\vec{p} \in \dom{\pi}$, and
\item
  $\beta + \left( \vec{p}, t \right)$ is $\mathcal{T}$-producible.
\end{itemize}
Then every simple path in $G^{\textmd{b}}_{\beta + \left( \vec{p}, t\right)}$ from $\vec{s}$ to $\vec{p}$ goes through $\vec{x}$.
\end{lemma}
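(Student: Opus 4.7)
The approach is case analysis on the location of $\vec{p}$ along $\pi$, reducing the nontrivial cases to direct applications of conditions~\ref{def:cp-4} and~\ref{def:cp-5} of Definition~\ref{def:competing-paths}. Write $\beta' = \beta + (\vec{p}, t)$, which is $\mathcal{T}$-producible by hypothesis, and let $p'$ be an arbitrary finite simple path from $\vec{s}$ to $\vec{p}$ in $G^{\textmd{b}}_{\beta'}$; the goal is to show $\vec{x} \in \dom{p'}$. Since $\vec{p} \in \dom{\pi}$, write $\vec{p} = \pi[l]$ for some $1 \leq l \leq |\pi|$.

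The case $l = 1$, i.e., $\vec{p} = \vec{x}$, is immediate because $p'$ ends at $\vec{x}$. For $l = |\pi|$, i.e., $\vec{p} = \vec{y}$, apply condition~\ref{def:cp-5} of Definition~\ref{def:competing-paths} with the $\mathcal{T}$-producible assembly $\beta'$: some path in $\{\piw, \pil\}$ must be a suffix of $p'$, and since both $\piw$ and $\pil$ start at $\vec{x}$, we get $\vec{x} \in \dom{p'}$.

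The main case is $1 < l < |\pi|$. Here I proceed by contradiction: suppose $\vec{x} \notin \dom{p'}$. Condition~\ref{def:cp-4b} of Definition~\ref{def:competing-paths}, applied to $\pi$, $l$, $p'$, and the $\mathcal{T}$-producible assembly $\beta'$, forces $p'$ to be a prefix of every finite simple path from $\vec{s}$ to $\vec{x}$ in $G^{\textmd{b}}_{\beta'}$. Since $\vec{\beta}_{\vec{x}}$ terminates at $\vec{x}$, the result $\beta_{\vec{x}}$ is an assembly containing both $\vec{s}$ and $\vec{x}$, so there is at least one simple path $p^\ast$ from $\vec{s}$ to $\vec{x}$ in $G^{\textmd{b}}_{\beta_{\vec{x}}}$. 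Because $\beta_{\vec{x}} \sqsubseteq \beta'$ (as $\vec{\beta}$ equals or extends $\vec{\beta}_{\vec{x}}$, and $\beta'$ extends $\beta$), the path $p^\ast$ is also a simple path in $G^{\textmd{b}}_{\beta'}$. Thus $p'$ is a prefix of $p^\ast$, whence every point of $p'$ lies in $\dom{p^\ast} \subseteq \dom{\beta_{\vec{x}}}$, and in particular $\vec{p} \in \dom{\beta_{\vec{x}}}$. However, $\beta + (\vec{p}, t)$ is a valid tile attachment step, so $\vec{p} \notin \dom{\beta}$, and since $\dom{\beta_{\vec{x}}} \subseteq \dom{\beta}$, this contradicts $\vec{p} \in \dom{\beta_{\vec{x}}}$.

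The one delicate step is the contradiction in the main case: one must be careful to apply condition~\ref{def:cp-4b} with the right producible assembly ($\beta'$, not $\beta_{\vec{x}}$) while still producing the contradicting simple path to $\vec{x}$ from $\beta_{\vec{x}}$ (which does not contain $\vec{p}$). The containment $\beta_{\vec{x}} \sqsubseteq \beta'$ is what bridges these two binding graphs, and the fact that $\vec{\beta}_{\vec{x}}$ terminates at $\vec{x}$ guarantees the required simple path $p^\ast$ inside $G^{\textmd{b}}_{\beta_{\vec{x}}}$.
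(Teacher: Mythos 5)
Your proof is correct and takes essentially the same approach as the paper's: you split on whether $\vec{p} = \vec{y}$, handle that case by condition~\ref{def:cp-5}, and handle the remaining nontrivial case by invoking condition~\ref{def:cp-4b} on the producible assembly $\beta'$ and then contradicting the prefix requirement using a path inside $G^{\textmd{b}}_{\beta_{\vec{x}}}$, from which $\vec{p}$ is absent because $\beta + (\vec{p}, t)$ is a valid attachment. The only cosmetic difference is that you add the (vacuous, since $\vec{p} \notin \dom{\beta} \supseteq \dom{\beta_{\vec{x}}} \ni \vec{x}$) case $\vec{p} = \vec{x}$ explicitly, and you phrase the final contradiction as "$\vec{p} \in \dom{\beta_{\vec{x}}}$ but $\vec{p} \notin \dom{\beta}$" rather than "$p'$ cannot be a prefix of a path avoiding $\vec{p}$"; both are equivalent.
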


\begin{proof}
Assume, for the sake of obtaining a contradiction, that there exists a simple path $p'$ in $G^{\textmd{b}}_{\beta + \left( \vec{p}, t \right)}$ from $\vec{s}$ to $\vec{p}$ that does not go through $\vec{x}$.	
Then, either $\vec{p} = \vec{y}$ or $\vec{p} \ne \vec{y}$. We handle these two cases in turn.
\begin{enumerate}
	\item Case $\vec{p} = \vec{y}$: Condition~\ref{def:cp-5} of Definition~\ref{def:competing-paths} says that every simple path from $\vec{s}$ to $\vec{y}$ in $G^{\textmd{b}}_{\beta + \left( \vec{y}, t \right)}$ has either $\piw$ or $\pil$ as a suffix. 
But $p'$ is a simple path from $\vec{s}$ to $\vec{y}$ in $G^{\textmd{b}}_{\beta + \left( \vec{y}, t \right)}$ that does not go through $\vec{x}$.
Therefore $p'$ cannot have either $\piw$ or $\pil$ as a suffix, since each one of these two competing paths starts from $\vec{x}$ and thus contains that point in its domain.
	\item Case $\vec{p} \ne \vec{y}$: Since $p'$ is a simple path from $\vec{s}$ to $\vec{p}$ in $G^{\textmd{b}}_{\beta + \left( \vec{p}, t\right)}$ that does not go through $\vec{x}$, Condition~\ref{def:cp-4b} of Definition~\ref{def:competing-paths} says $p'$ is a prefix of every simple path from $\vec{s}$ to $\vec{x}$ in $G^{\textmd{b}}_{\beta + \left( \vec{p}, t \right)}$.
By assumption, $\beta + \left( \vec{p}, t \right)$ is $\mathcal{T}$-producible, which implies $\vec{p} \not \in \dom{\beta}$.
Since $\vec{\beta}$ is equal to, or an extension of, $\vec{\beta}_{\vec{x}}$, it follows that there exists a simple path $p$ from $\vec{s}$ to $\vec{x}$ in $G^{\textmd{b}}_{\beta}$ that does not go through $\vec{p}$.
Therefore, $p'$ cannot be a prefix of $p$, since $\vec{p} \in \dom{p'}$ but $\vec{p} \not \in \dom{p}$. 
\end{enumerate}
Since we reach a contradiction in every case, it follows that every simple path in $G^{\textmd{b}}_{\beta + \left( \vec{p}, t \right)}$ from $\vec{s}$ to $\vec{p}$ goes through $\vec{x}$.
\end{proof}

\begin{lemma} % lemma 24
\label{lem:cannot-terminate-at-SY}
No internal node of $\mathcal{P}^{\vec{\beta}_{\vec{x}}}
\upharpoonright \vec{y}$ has a child node in $\mathcal{M}_{\mathcal{T}}$
that terminates at a point in $S_Y$.
\end{lemma}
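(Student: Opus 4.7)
The plan is to argue by contradiction. Suppose some internal node $\vec{\gamma}$ of $\mathcal{P}^{\vec{\beta}_{\vec{x}}} \upharpoonright \vec{y}$ has a child $\vec{\gamma}' = \vec{\gamma} + (\vec{x}_j, t)$ in $\mathcal{M}_{\mathcal{T}}$ for some $\vec{x}_j \in S_Y = \{\vec{x}_1,\ldots,\vec{x}_r\}$ and some $t \in T$. Write $\vec{x} = \vec{x}_i$ and $\vec{y} = \vec{y}_i$, and note that $\vec{x}_j \neq \vec{y}_i$ because $S_P \cap P = \emptyset$ (condition~\ref{def:snd-4}) and $S_Y \subseteq S_P$, $Y \subseteq P$. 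I would first record a preliminary fact: since $\vec{\gamma}$ is internal in the restriction to the bottleneck of nodes terminating at $\vec{y}_i$, no prefix of $\vec{\gamma}$ terminates at $\vec{y}_i$, so $\vec{y}_i \notin \dom{\res{\vec{\gamma}}}$, and therefore $\vec{y}_i \notin \dom{\res{\vec{\gamma}'}}$ either. I would then split into two cases according to whether $j \le i$ or $j > i$.

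For the case $j \le i$, the key is to show that $\vec{x}_j$ is already present in $\dom{\beta_{\vec{x}}}$, which immediately contradicts $\vec{\gamma}'$ placing a fresh tile at $\vec{x}_j$ (since $\vec{\gamma}$ extends $\vec{\beta}_{\vec{x}}$). Observation~\ref{obs:alpha-w-correct-iff-node-in-p-corr-alpha} gives that $\vec{\beta}_{\vec{x}}$ is a finite $w$-correct $\mathcal{T}$-producing assembly sequence whose result is a subassembly of $\alpha$. When $j < i$, I would invoke Lemma~\ref{lem:w-correct-extended-to-w-correct-terminal} to extend $\vec{\beta}_{\vec{x}}$ to a $w$-correct $\mathcal{T}$-producing assembly sequence $\vec{\delta}_0$ resulting in $\alpha$; since $Y \subseteq \dom{\alpha}$, repeated application of Lemma~\ref{lem:x-i-y-i-ordering} yields $\textmd{index}_{\vec{\delta}_0}(\vec{x}_j) < \textmd{index}_{\vec{\delta}_0}(\vec{x}_i)$, so $\vec{x}_j \in \dom{\beta_{\vec{x}}}$. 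The subcase $j = i$ is immediate since $\vec{\beta}_{\vec{x}}$ terminates at $\vec{x}_i$.

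For the case $j > i$, I would first show $\vec{\gamma}'$ is itself $w$-correct: $\vec{\gamma}$ is $w$-correct by Observation~\ref{obs:alpha-w-correct-iff-node-in-p-corr-alpha}, and attaching a tile at $\vec{x}_j \in S_Y \subseteq S_P$ preserves $w$-correctness because $S_P \cap P = \emptyset$, so $\vec{x}_j$ is neither an essential nor an inessential POC. Applying Lemma~\ref{lem:w-correct-extended-to-w-correct-terminal} again, extend $\vec{\gamma}'$ to a $w$-correct $\mathcal{T}$-producing assembly sequence $\vec{\delta}$ resulting in $\alpha$, so $Y \subseteq \dom{\res{\vec{\delta}}}$. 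On the one hand, $\vec{x}_j \in \dom{\res{\vec{\gamma}'}}$ and $\vec{y}_i \notin \dom{\res{\vec{\gamma}'}}$ force $\textmd{index}_{\vec{\delta}}(\vec{x}_j) < \textmd{index}_{\vec{\delta}}(\vec{y}_i)$. On the other hand, repeated application of Lemma~\ref{lem:x-i-y-i-ordering} to $\vec{\delta}$ yields $\textmd{index}_{\vec{\delta}}(\vec{y}_i) < \textmd{index}_{\vec{\delta}}(\vec{x}_{i+1}) \le \textmd{index}_{\vec{\delta}}(\vec{x}_j)$, contradicting the previous inequality.

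Both cases yield contradictions, which establishes the lemma. The main obstacle I anticipate is the bookkeeping for the preliminary fact that $\vec{y}_i \notin \dom{\res{\vec{\gamma}}}$: I have to argue from the definition of $\mathcal{P}^{\vec{\beta}_{\vec{x}}} \upharpoonright \vec{y}$ (via Definition~\ref{def:spt-tas-restricted-general} and Definition~\ref{def:bottleneck-set-of-nodes}) that any ancestor of $\vec{\gamma}$ whose result contained $\vec{y}_i$ would have had to terminate at $\vec{y}_i$ and thereby become a leaf of the restriction, contradicting $\vec{\gamma}$ being strictly below it in the subtree. The rest of the argument is a direct orchestration of Lemmas~\ref{lem:x-i-y-i-ordering-r-1}, \ref{lem:x-i-y-i-ordering}, and \ref{lem:w-correct-extended-to-w-correct-terminal}.
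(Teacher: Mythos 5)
Your proof is correct and takes essentially the same route as the paper's: both arguments hinge on extending the relevant $w$-correct sequences to ones resulting in $\alpha$ via Lemma~\ref{lem:w-correct-extended-to-w-correct-terminal} and then contradicting the forced linear ordering of starting points and essential POCs from Lemmas~\ref{lem:x-i-y-i-ordering-r-1} and~\ref{lem:x-i-y-i-ordering}. The only cosmetic difference is that the paper derives the single sandwich $\textmd{index}(\vec{x}) < \textmd{index}(\vec{x}_j) < \textmd{index}(\vec{y})$ uniformly for every $\vec{x}_j \neq \vec{x}$, whereas you dispatch the subcase $j \le i$ by noting that $\vec{x}_j$ is already occupied in $\beta_{\vec{x}}$; both work, and your remaining unchecked step (that $\res{\vec{\gamma}'}$ is non-terminal, needed before invoking the extension lemma) follows immediately from your preliminary fact that $\vec{y}_i \notin \dom{\res{\vec{\gamma}'}}$.
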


\begin{proof}
Let $\vec{\beta}$ be any internal node of
$\mathcal{P}^{\vec{\beta}_{\vec{x}}} \upharpoonright \vec{y}$. Thus
any child node of $\vec{\beta}$ in $\mathcal{M}_{\mathcal{T}}$ must be
different from $\vec{\beta}_{\vec{x}}$ (the root node of
$\mathcal{P}^{\vec{\beta}_{\vec{x}}} \upharpoonright \vec{y}$) and
thus cannot terminate at $\vec{x}$. Let $\vec{x}\,'$ be any point in
$S_Y\backslash\{\vec{x}\}$ with corresponding essential POC
$\vec{y}\,'$. We now show that no child node of $\vec{\beta}$ may
terminate at $\vec{x}\,'$. Assume, for the sake of obtaining a
contradiction, that $\vec{\beta}'$ is a child node of $\vec{\beta}$ in
$\mathcal{M}_{\mathcal{T}}$ that terminates at $\vec{x}\,'$. By
Observation~\ref{obs:alpha-w-correct-iff-node-in-p-corr-alpha},
$\vec{\beta}$ is a finite, $w$-correct, $\mathcal{T}$-producing
assembly sequence. Since $\vec{\beta}'$ is a child node of
$\vec{\beta}$ in $\mathcal{M}_{\mathcal{T}}$ that terminates at
$\vec{x}\,'\in S_Y \subseteq S_P$ and $\vec{x}\,'\notin Y \subseteq P$
(by condition~\ref{def:snd-4} of
Definition~\ref{def:seq-non-deterministic}), it follows that
$\vec{\beta}'$ is also $w$-correct and thus a node in
$\mathcal{P}^{\vec{\beta}_{\vec{x}}} \upharpoonright \vec{y}$. Note
that 1) $\mathcal{P}^{\vec{\beta}_{\vec{x}}} \upharpoonright \vec{y}$ is
full relative to $\mathcal{P}$ and 2) the only way for $\vec{\beta}'$ to
be $w$-incorrect would be for its last tile attachment step to place a
$w$-incorrect tile at some essential POC. Since $\vec{\beta}'$ is a
node in $\mathcal{P}^{\vec{\beta}_{\vec{x}}} \upharpoonright \vec{y}$
that terminates at $\vec{x}\,'$ and, by
Lemma~\ref{lem:b-x-y-well-defined-leaf-nodes}, all of the leaf nodes
of $\mathcal{P}^{\vec{\beta}_{\vec{x}}} \upharpoonright \vec{y}$ are
nodes of $\mathcal{P}$ that terminate at $\vec{y}$, it follows that
$\vec{y} \notin \dom{\res{\vec{\beta}'}}$.

We now show that $\res{\vec{\beta}'} \not \in
\mathcal{A}_{\Box}\left[\mathcal{T}\right]$. Suppose, for the sake of
obtaining a contradiction, that $\res{\vec{\beta}'} \in
\mathcal{A}_{\Box}\left[\mathcal{T}\right]$. Then, by
Corollary~\ref{cor:unique-w-correct-assembly},
$\res{\vec{\beta}'}=\alpha$. Since $Y \subseteq \dom{\alpha}$, we must
have $\vec{y}\,' \in \dom \res{\vec{\beta}'}$. Therefore, by
Lemma~\ref{lem:x-i-y-i-ordering-r-1}, $\textmd{index}_{\vec{\beta}'
}\left( \vec{x}\,' \right) < \textmd{index}_{\vec{\beta}'}\left(
\vec{y}\,' \right)$, which contradicts the fact that $\vec{\beta}'$
terminates at $\vec{x}\,'$.

Since $\res{\vec{\beta}'} \not \in
\mathcal{A}_{\Box}\left[\mathcal{T}\right]$,
Lemma~\ref{lem:w-correct-extended-to-w-correct-terminal} implies that
there exists an extension $\vec{\varepsilon}^{\; \prime}$ of
$\vec{\beta}'$ resulting in $\alpha$, with $Y \subseteq \dom{\alpha}$.
Since $\vec{\beta}_{\vec{x}}$ terminates at $\vec{x}$, $\vec{\beta}'$
terminates at $\vec{x}\,'$ and is a descendant of
$\vec{\beta}_{\vec{x}}$ (via its parent node $\vec{\beta}$), it
follows that $\textmd{index}_{\vec{\varepsilon}^{\, \prime}}\left(
\vec{x}\, \right) < \textmd{index}_{\vec{\varepsilon}^{\,
    \prime}}\left( \vec{x}\,' \right)$. Furthermore, $\vec{y} \notin \dom{\res{\vec{\beta}'}}$ implies that
$\textmd{index}_{\vec{\varepsilon}^{\, \prime}}\left( \vec{x}\,'
\right) < \textmd{index}_{\vec{\varepsilon}^{\, \prime}}\left(
\vec{y}\, \right)$. Chaining the last two inequalities yields
$\textmd{index}_{\vec{\varepsilon}^{\, \prime}}\left( \vec{x}\,
\right) < \textmd{index}_{\vec{\varepsilon}^{\, \prime}}\left(
\vec{x}\,' \right) < \textmd{index}_{\vec{\varepsilon}^{\,
    \prime}}\left( \vec{y}\, \right)$, which contradicts the result of
applying Lemma~\ref{lem:x-i-y-i-ordering} to $\vec{\varepsilon}^{\;
  \prime}$.
\end{proof}

If the children of an internal node of $\mathcal{P}^{\vec{\beta}_{\vec{x}}} \upharpoonright \vec{y}$ are partitioned into competing and non-competing nodes, then the following lemma says that the former set contains at least one node and at most two nodes, and if the corresponding competition is not rigged, then it contains exactly two nodes. 
\begin{lemma} % lemma 25
\label{lem:number-of-competing-nodes}
Any internal node of $\mathcal{P}^{\vec{\beta}_{\vec{x}}} \upharpoonright \vec{y}$ has exactly one or two children  in $\mathcal{M}_{\mathcal{T}}$ that are both $w$-correct and competing in $\mathcal{M}_{\mathcal{T}}$. Furthermore, if $\mathcal{C}$ is not rigged by $\vec{\beta}_{\vec{x}}$, then any internal node of $\mathcal{P}^{\vec{\beta}_{\vec{x}}} \upharpoonright \vec{y}$ has exactly two children in $\mathcal{M}_{\mathcal{T}}$ that are competing in $\mathcal{M}_{\mathcal{T}}$.
\end{lemma}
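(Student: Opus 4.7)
The plan is to fix an arbitrary internal node $\vec{\beta}$ of $\mathcal{P}^{\vec{\beta}_{\vec{x}}} \upharpoonright \vec{y}$ and let $\beta = \res{\vec{\beta}}$. By Observation~\ref{obs:alpha-w-correct-iff-node-in-p-corr-alpha} and Lemma~\ref{lem:w-correct-subassembly-w-correct}, $\beta \sqsubseteq \alpha$, and $\vec{x} \in \dom{\beta}$ since $\vec{\beta}$ is a descendant of $\vec{\beta}_{\vec{x}}$; moreover, because $\vec{\beta}$ is internal rather than a leaf, Lemma~\ref{lem:b-x-y-well-defined-leaf-nodes} implies $\vec{y} \not\in \dom{\beta}$. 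For each competing path $\pi \in \{\piw, \pil\}$, let $l_\pi$ be the largest integer with $\pi[1\ldots l_\pi]$ a simple path in $G^{\textmd{b}}_\beta$; since $\vec{y} \not\in \dom{\beta}$, we have $l_\pi < |\pi|$.

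First I will establish that $\vec{\beta}$ has at most two competing children in $\mathcal{M}_{\mathcal{T}}$. A competing child attaches some tile $(\vec{p}, t)$ with $\vec{p} = \pi[n]$ for $\pi \in \{\piw, \pil\}$ and $n \geq 2$ (since $\vec{p} \not\in \dom{\beta}$ rules out $n = 1$). Combining conditions~\ref{def:cp-4a} and~\ref{def:cp-4b} of Definition~\ref{def:competing-paths} with Lemma~\ref{lem:longest-simple-finite-path-beta-m}, I will argue that any simple path reaching $\vec{p}$ in the resulting binding graph is forced to proceed along $\pi$ from $\vec{x}$, which is only possible when $n = l_\pi + 1$ and the tile attaches to $\pi[l_\pi]$. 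The tile type is then forced by directional determinism (condition~\ref{def:dd-ad-pt-1} of Definition~\ref{def:dd-ad-pt}) combined with $\beta \sqsubseteq \alpha$, yielding at most one competing child per path, and thus at most two overall.

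Second, I will produce the $\piw$-extending competing child and show it is $w$-correct. Since $\alpha$ contains the edge between $\piw[l_w]$ and $\piw[l_w + 1]$ in its binding graph, the attachment step $\beta + (\piw[l_w + 1], \alpha(\piw[l_w + 1]))$ is valid; the result is a subassembly of $\alpha$, hence $w$-correct, and the tile is competing by Lemma~\ref{lem:competing-tile-4} (applied with $\pi = \piw$). This shows that $\vec{\beta}$ has at least one $w$-correct competing child, which together with the upper bound from the first step establishes the first half of the lemma. For the non-rigged case, I will produce a second competing child by extending $\pil$: Lemma~\ref{lem:not-rigged} gives $\dom{\beta_{\vec{x}}} \cap (\dom{\piw} \cup \dom{\pil}) = \{\vec{x}\}$, so $\pil[l_l + 1] \not\in \dom{\beta}$ via Lemma~\ref{lem:longest-simple-finite-path-beta-m}, and condition~\ref{def:cp-6a} of Definition~\ref{def:competing-paths} supplies a $\mathcal{T}$-assembly sequence that places tiles all along $\pil$ starting from $\{(\vec{x}, \alpha(\vec{x}))\}$, which in turn forces (via directional determinism) the existence of a unique tile type that can attach at $\pil[l_l + 1]$ to the tile at $\pil[l_l]$. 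Combined with Lemma~\ref{lem:competing-tile-4}, this is a valid competing child distinct from the first by conditions~\ref{def:cp-1} and~\ref{def:cp-2} of Definition~\ref{def:competing-paths}.

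The main obstacle I anticipate is the uniqueness argument in the first step: rigorously ruling out a tile attaching at some $\pi[n]$ with $n > l_\pi + 1$ via a non-path neighbor, since condition~\ref{def:ct-4} of Definition~\ref{def:competing-tile} requires every simple path from $\vec{s}$ to $\vec{p}$ in the resulting assembly to go through $\vec{x}$. The careful interplay between conditions~\ref{def:cp-4a},~\ref{def:cp-4b},~\ref{def:cp-5} of Definition~\ref{def:competing-paths} and Lemma~\ref{lem:longest-simple-finite-path-beta-m} is what should make this case analysis close cleanly.
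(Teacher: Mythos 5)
Your overall plan mirrors the paper's decomposition (existence of the two extending children followed by an upper-bound argument, just in the opposite order), and the steps you sketch for producing the $\piw$- and $\pil$-extending children via Lemma~\ref{lem:competing-tile-4}, Lemma~\ref{lem:not-rigged}, Lemma~\ref{lem:longest-simple-finite-path-beta-m}, and condition~\ref{def:cp-6a} are essentially the same as Step~1 of the paper's proof. However, there is a genuine gap in your upper-bound argument.

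You open it with ``a competing child attaches some tile $(\vec{p},t)$ with $\vec{p}=\pi[n]$ for $\pi\in\{\piw,\pil\}$.'' This is exactly what must be \emph{proved}, not assumed. Definition~\ref{def:competing-tile} only requires that $\vec{p}$ lie on a competing path for \emph{some} essential POC $\vec{y}\,'\in Y$ with $\vec{y}\,'\notin\dom{\beta}$; nothing in the definition forces $\vec{y}\,'=\vec{y}$. Since from the internal-node assumption you only know $\vec{y}\notin\dom{\beta}$, a priori $\vec{\beta}$ could have a third competing child that places a tile on a competing path belonging to a later POC $\vec{y}\,'''\neq\vec{y}$. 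Your case analysis (which only considers points on $\piw$ and $\pil$) never rules this out, and the ``main obstacle'' you anticipate --- ruling out $n>l_\pi+1$ on those same two paths --- is a different, smaller issue. The paper devotes Step~2c of its proof specifically to this case: it uses Lemma~\ref{lem:cannot-terminate-at-SY} to show the putative tile cannot sit at the starting point $\vec{x}\,'''$, then uses Lemma~\ref{lem:w-correct-extended-to-w-correct-terminal} to extend $\vec{\beta}'$ to a sequence resulting in $\alpha$, and finally derives a contradiction with the required ordering $\textmd{index}(\vec{x}_i)<\textmd{index}(\vec{y}_i)<\textmd{index}(\vec{x}_j)<\textmd{index}(\vec{y}_j)$ supplied by Lemma~\ref{lem:x-i-y-i-ordering} and Observation~\ref{obs:distinct-starting-points}. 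Without some version of that argument, your proof that there are at most two competing children does not go through.

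Two smaller remarks. First, your definition of $l_\pi$ as the longest prefix of $\pi$ forming a path in $G^{\mathrm{b}}_\beta$ does not by itself give $\pi[l_\pi+1]\notin\dom{\beta}$; you need the extra observation (as in the paper's Step~1, using that $\vec{\beta}$ extends to a leaf $\vec{\beta}_{\vec{y}}$ with $\piw$ a suffix of every simple path to $\vec{y}$ in $\beta_{\vec{y}}$) to conclude the frontier membership. Second, when both extending children place a tile at $\vec{y}$ itself (the case $l_w+1=|\piw|$ and $l_l+1=|\pil|$), condition~\ref{def:dd-ad-pt-1} of Definition~\ref{def:dd-ad-pt} does not apply since $\vec{y}\in P$; you would need condition~\ref{def:dd-ad-pt-2} there (as the paper does in Step~2bii) to conclude the two children are well-defined and distinct.
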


Lemma~\ref{lem:number-of-competing-nodes} will be useful in two ways going forward.
We will first use it to show that for an internal node $\vec{\beta}$ of $\mathcal{P}^{\vec{\beta}_{\vec{x}}} \upharpoonright \vec{y}$, every non-competing child of $\vec{\beta}$ in $\mathcal{M}_{\mathcal{T}}$ is a child of $\vec{\beta}$ in a certain kind of subtree $\mathcal{Q}$ of $\mathcal{P}^{\vec{\beta}_{\vec{x}}} \upharpoonright \vec{y}$. 
We will then use it to show that the probability of the edge into a node in $\mathcal{Q}$ that is competing in $\mathcal{M}_{\mathcal{T}}$ is equal to $\frac{1}{2}$.

\begin{proofsketch}
We proceed as follows:
\begin{enumerate}
	\item Prove that $\vec{\beta}$ always has at least one child in $\mathcal{M}_{\mathcal{T}}$ that is competing in $\mathcal{M}_{\mathcal{T}}$ and $w$-correct, and then use the fact that $\mathcal{C}$ is not rigged by $\vec{\beta}_{\vec{x}}$ to prove that $\vec{\beta}$ has at least one other child in $\mathcal{M}_{\mathcal{T}}$ that is competing in $\mathcal{M}_{\mathcal{T}}$, but need not be $w$-correct. 
	\item Prove (by contradiction) that $\vec{\beta}$ has no more than two competing children as follows:
	\begin{enumerate}[label=\theenumi\alph*.,itemsep=0mm]
  		\item Assume that there exists a third competing child of $\vec{\beta}$ whose resulting assembly places a tile at, say, $\vec{p}^{\; \prime\prime\prime}$.
		\item Prove that $\vec{p}^{\; \prime\prime\prime}$ is different from any point on either one of the competing paths associated with $\vec{y}$. 
		We prove this by considering the following two cases:
		\begin{enumerate}[itemsep=0mm,labelsep=2mm]   
    		\item[2bi.] Case  $\vec{p}^{\; \prime\prime\prime} \neq \vec{y}$
    		\item[2bii.] Case  $\vec{p}^{\; \prime\prime\prime} = \vec{y}$
    	\end{enumerate}
  		\item Prove that $\vec{p}^{\; \prime\prime\prime}$ is different from any point on either one of the competing paths associated with any other essential POC, i.e., an essential POC different from $\vec{y}$.
	\end{enumerate}
\end{enumerate}\vspace*{-7mm}
\end{proofsketch}
\begin{proof}
Let $m\in\Z^+$ and $\vec{\beta} = \left( \beta_i \mid 1 \leq i \leq m \right)$ be an internal node of $\mathcal{P}^{\vec{\beta}_{\vec{x}}} \upharpoonright \vec{y}$.
By Observation~\ref{obs:alpha-w-correct-iff-node-in-p-corr-alpha}, $\vec{\beta}$ is a finite, $w$-correct, $\mathcal{T}$-producing assembly sequence.
{\em Step 1.} 
Let $\vec{\beta}_{\vec{y}}$  be any leaf node of $\mathcal{P}^{\vec{\beta}_{\vec{x}}} \upharpoonright \vec{y}$ with  $\beta_{\vec{y}} = \res{\vec{\beta}_{\vec{y}}}$. Since the latter is a subtree of $\mathcal{P}$, $\vec{\beta}_{\vec{y}}$  is $w$-correct. Furthermore, by part~\ref{lem:b-x-y-well-defined-leaf-nodes-2} of Lemma~\ref{lem:b-x-y-well-defined-leaf-nodes}, $\vec{\beta}_{\vec{y}}$  terminates at $\vec{y}$ which, together with the fact that $\vec{\beta}$ is an internal node of $\mathcal{P}^{\vec{\beta}_{\vec{x}}} \upharpoonright \vec{y}$, implies that $\vec{y} \notin \dom{\beta_m}$.
By condition~\ref{def:cp-5} of Definition~\ref{def:competing-paths}, $\piw$ is a suffix of every simple path from $\vec{s}$ to $\vec{y}$ in $G^{\textmd{b}}_{\beta_{\vec{y}}}$. 
Note that at least one leaf node of $\mathcal{P}^{\vec{\beta}_{\vec{x}}} \upharpoonright \vec{y}$ is an extension of $\vec{\beta}$ that terminates at $\vec{y}$.
Let $\vec{\beta}_{\vec{y}}$ be such a leaf node.
Then, we have $\beta_m \sqsubseteq \beta_{\vec{y}}$.
Since $\vec{y} \not \in \dom{\beta_m}$, $\beta_m \sqsubseteq \beta_{\vec{y}}$, and $\piw$ is a suffix of every simple path from $\vec{s}$ to $\vec{y}$ in $G^{\textmd{b}}_{\beta_{\vec{y}}}$, there exists $1 \leq l' < \left| \piw \right|$ such that  $\piw\left[1\ldots l' \right]$ is a simple path in $G^{\textmd{b}}_{\beta_m}$ but $\piw\left[l\ldots l'+1\right]$ is not.
Note that $l' < \left| \piw \right|$ because $\vec{y} \not \in \dom{\beta_m}$. 
Let $\vec{p}^{\; \prime} = \piw\left[l'+1\right]$.
We now show that for some $t' \in T$, $\left( \vec{p}^{\; \prime}, t' \right)$ is a competing tile in $\beta_m$.
To that end, first note that $\vec{p}^{\; \prime} \not \in \dom{\beta_m}$ because $\vec{\beta}_{\vec{y}}$ is an extension of $\vec{\beta}$, and $\piw$ is a suffix of every simple path from $\vec{s}$ to $\vec{y}$ in $G^{\textmd{b}}_{\beta_{\vec{y}}}$.
Let $t' = \beta_{\vec{y}}\left( \vec{p}^{\; \prime} \right) = \alpha\left( \vec{p}^{\; \prime} \right)$, i.e., the type of tile that the unique $w$-correct $\alpha$ places at $\vec{p}^{\; \prime}$. 
Then, we have $\vec{p}^{\;\prime} \in \partial_{t'}^{\mathcal{T}} \beta_m$, because $\vec{p}^{\; \prime} \not \in \dom{\beta_m}$ and $\piw\left[1\ldots l'+1\right]$ is (a prefix of) a simple path in $G^{\textmd{b}}_{\beta_{\vec{y}}}$.
Moreover, Lemma~\ref{lem:competing-tile-4} says that condition~\ref{def:ct-4} of Definition~\ref{def:competing-tile} is satisfied.
Thus, all of the conditions of Definition~\ref{def:competing-tile} are satisfied and  $\left( \vec{p}^{\; \prime},  t'=\alpha\left( \vec{p}^{\; \prime} \right)  \right)$ is a competing tile in $\beta_m$.
Let $\vec{\beta}' = \vec{\beta} + \left( \vec{p}^{\; \prime}, t' \right)$
and $\beta' = \res{\vec{\beta}'}$.
Then, $\vec{\beta}'$ is a child node of $\vec{\beta}$ in $\mathcal{M}_{\mathcal{T}}$ that is competing in $\mathcal{M}_{\mathcal{T}}$.
Moreover, by our choice of $t' = \beta_{\vec{y}}\left( \vec{p}^{\; \prime} \right) = \alpha\left(\vec{p}^{\; \prime} \right)$, it follows that $\beta' \sqsubseteq \beta_{\vec{y}}$, which implies that $\beta'$ and $\vec{\beta}'$ are $w$-correct. 
We now show that, if $\mathcal{C}$ is not rigged by $\vec{\beta}_{\vec{x}}$,  $\vec{\beta}$ has another child node in $\mathcal{M}_{\mathcal{T}}$ that is competing in $\mathcal{M}_{\mathcal{T}}$. 
Let $l''$ be an integer such that $1 \leq l'' < \left| \pil \right|$ and $\pil\left[1\ldots l''\right]$ is a simple path in $G^{\textmd{b}}_{\beta_m}$, but $\pil\left[1\ldots l''+1\right]$ is not.
In other words, $l''$ is the length of the longest prefix of $\pil$ that occurs as a simple path in the binding graph of $\beta_m$.
Let $\vec{p}^{\; \prime\prime} = \pil\left[l''+1\right]$.
We now show that for some $t'' \in T$, $\left( \vec{p}^{\; \prime\prime}, t'' \right)$ is a competing tile in $\beta_m$, i.e., the four parts of Definition~\ref{def:competing-tile} are satisfied:
\begin{enumerate}
	\item Condition~\ref{def:ct-1} is satisfied because $\vec{p}^{\; \prime\prime} = \pil\left[ l'' + 1 \right]$ where $1 \leq l'' < \left| \pil \right|$.
	\item Condition~\ref{def:ct-2} is satisfied because $\vec{y} \not \in \dom{\beta_m}$.
	\item 
By the definition of $l''$, $\vec{p}^{\; \prime\prime} \ne \vec{x}$.
Since $\mathcal{C}$ is not rigged by $\vec{\beta}_{\vec{x}}$, Lemma~\ref{lem:not-rigged} implies  that $\dom{\pil}~\cap~\dom{\beta_{\vec{x}}} = \left\{ \vec{x} \right\}$. 
This means $\vec{p}^{\; \prime\prime} \not \in \dom{\beta}_{\vec{x}}$.
Then, by Lemma~\ref{lem:longest-simple-finite-path-beta-m}, we have $\vec{p}^{\; \prime\prime} \not \in \dom{\beta_m}$.
In other words, $\beta_m$ has yet to place a tile at $\vec{p}^{\; \prime\prime}$. So it suffices to show that for some $t'' \in T$, $t''$ can attach to $\beta_m$ at $\vec{p}^{\; \prime\prime}$. 
	Since either $\vec{p}^{\; \prime\prime} \ne \vec{y}$ or $\vec{p}^{\; \prime\prime} = \vec{y}$, we handle these two cases in turn.
	\begin{enumerate}
		\item Case $\vec{p}^{\; \prime\prime} \ne \vec{y}$:
	By Observation~\ref{obs:alpha-w-correct-iff-node-in-p-corr-alpha}, $\beta_m \sqsubseteq \alpha$, and thus $\beta_m\left( \pil\left[l''\right] \right) = \alpha\left( \pil\left[ l'' \right] \right)$.
	Then, if $t'' = \alpha\left( \vec{p}^{\; \prime\prime} \right)$, we have $\vec{p}^{\; \prime\prime} \in \partial^{\mathcal{T}}_{t''} \beta_m$.
	In this case, $\vec{\beta}$ has two competing children in $\mathcal{M}_{\mathcal{T}}$ that are both $w$-correct. 
      \item Case $\vec{p}^{\; \prime\prime} = \vec{y}$:
	By condition~\ref{def:cp-6a} of Definition~\ref{def:competing-paths}, there exists a $\mathcal{T}$-assembly sequence $\vec{\gamma} = \left( \gamma_i \mid 1 \leq i \leq \left| \pil \right| \right)$ such that $\gamma_1 = \left\{ \vec{x}, \alpha\left(\vec{x}\right)\right \}$ and, for $\gamma = \res{\vec{\gamma}}$, $\dom{\gamma} = \dom{\pil}$. 
	By the definition of $\vec{\gamma}$, $\gamma\left(\pil[1]\right) = \alpha\left(\pil[1]\right)$. 
	Then, assuming for all $1 \leq i < \left| \pil \right|-1$, $\gamma\left(\pil[i]\right) = \alpha\left(\pil[i]\right)$, the 
	facts
	that $\mathcal{T}$ is directionally deterministic 
	and $\pil[i]$ is not a POC imply (via the contrapositive of condition~\ref{def:dd-ad-pt-1} of Definition~\ref{def:dd-ad-pt})
	that $\gamma\left(\pil[i+1]\right) = \alpha\left(\pil[i+1]\right)$.
	Thus, for all $\vec{p} \in \dom{\gamma} \backslash \left\{ \vec{y} \right\}$, $\gamma\left( \vec{p\,} \right) = \alpha\left(\vec{p\,}\right)$.
	In particular, $\gamma\left( \pil\left[\left| \pil \right|-1\right] \right) = \alpha\left(\pil\left[\left| \pil \right|-1\right]\right)$.
	By Observation~\ref{obs:alpha-w-correct-iff-node-in-p-corr-alpha}, $\beta_m \sqsubseteq \alpha$, and it follows that $\gamma\left( \pil\left[\left| \pil \right| - 1\right] \right) = \beta_m\left(\pil\left[\left| \pil \right| - 1\right]\right)$.
	Let $t'' = \gamma\left( \vec{y\,} \right)$.
	Then, since $\vec{y} \in \partial^{\mathcal{T}}_{t''} \gamma_{\left| \pil \right|-1}$ and $\beta_m\left(\pil\left[\left| \pil \right| - 1\right]\right) = \gamma\left( \pil\left[\left| \pil \right| - 1\right] \right)$, we have $\vec{y} = \vec{p}^{\; \prime\prime} \in \partial^{\mathcal{T}}_{t''} \beta_m$.
	In this case, $\vec{\beta}$ has two competing children in $\mathcal{M}_{\mathcal{T}}$, but one is not $w$-correct and therefore is not a child of $\vec{\beta}$ in $\mathcal{P}$. 
	\end{enumerate}
	In both cases, we showed that for some $t'' \in T$, $\vec{p}^{\; \prime\prime} \in \partial^{\mathcal{T}}_{t''} \beta_m$, which establishes condition~\ref{def:ct-3}.
	\item Lemma~\ref{lem:competing-tile-4} says that every simple path in $G^{\textmd{b}}_{\beta_m + \left( \vec{p}^{\; \prime\prime}, t''\right)}$ from $\vec{s}$ to $\vec{p}^{\; \prime\prime}$ goes through $\vec{x}$, establishing condition~\ref{def:ct-4}.	
\end{enumerate}
Since Definition~\ref{def:competing-tile} holds for $\left( \vec{p}^{\; \prime\prime}, t'' \right)$ and $\beta_m$, it follows that $\vec{\beta}$ has at least two child nodes in $\mathcal{M}_{\mathcal{T}}$ that are competing in $\mathcal{M}_{\mathcal{T}}$.
{\em Step 2.} We now show that $\vec{\beta}$ has no more than two competing children. 
{\em Step 2a.} Assume, for the sake of obtaining a contradiction, that $\vec{\beta}$ has a third competing child, which we will denote as $\vec{\beta}'''$, that satisfies $\vec{\beta}''' = \vec{\beta} + \left(\vec{p}^{\; \prime\prime\prime}, t'''\right)$ for some $\vec{p}^{\; \prime\prime\prime} \in \Z^2$ and $t''' \in T$,  where $\left( \vec{p}^{\; \prime\prime\prime}, t'''\right)$ is a competing tile in $\beta_m$.
{\em Step 2b.}
We will first rule out the possibility that $\vec{p}^{\; \prime\prime\prime} \in \dom{\piw} \cup \dom{\pil}$.
Assume, for the sake of obtaining a contradiction, that $\vec{p}^{\; \prime\prime\prime} \in \dom{\piw} \cup \dom{\pil}$.
Then either $\vec{p}^{\; \prime\prime\prime} \neq \vec{y}$ or  $\vec{p}^{\; \prime\prime\prime} = \vec{y}$.
We handle both cases in turn.
\begin{enumerate}
\item[{\em 2bi.}] Assume $\vec{p}^{\; \prime\prime\prime} \ne \vec{y}$. 
	WLOG further assume that $\vec{p}^{\; \prime\prime\prime} \in \dom{\piw}$.
	Then, there is an integer $1 \leq l''' < \left| \piw \right|-1$ such that $\piw\left[ l'''+1 \right] = \vec{p}^{\; \prime\prime\prime}$.
	Note that $l''' \geq l'$ because $\dom{\piw\left[ 1\ldots l'\right]} \subseteq \dom{\beta_m}$.
	By Definition~\ref{def:competing-tile}, every simple path in $G^{\textmd{b}}_{\vec{\beta}'}$ from $\vec{s}$ to $\vec{p}^{\; \prime}$ goes through $\vec{x}$.
	Similarly, every simple path in $G^{\textmd{b}}_{\vec{\beta}'''}$ from $\vec{s}$ to $\vec{p}^{\; \prime\prime\prime}$ goes through $\vec{x}$.
	Since $\vec{p}^{\; \prime\prime\prime} \ne \vec{y}$,  condition~\ref{def:cp-4a} of Definition~\ref{def:competing-paths} implies that:
	\begin{enumerate}
	\item every simple path from $\vec{s}$ to $\vec{p}^{\; \prime\prime\prime}$ in $G^{\textmd{b}}_{\res{\vec{\beta}'''}}$
          has  $\piw[1\ldots l''']$ as a suffix.
        \end{enumerate}
        	Since $\vec{p}^{\; \prime}$ may or may not be equal to
                $\vec{y}$, either condition~\ref{def:cp-5} or
                condition~\ref{def:cp-4a} of
                Definition~\ref{def:competing-paths} implies that:
        \begin{enumerate}        \setcounter{enumii}{1}
	\item every simple path from $\vec{s}$ to $\vec{p}^{\; \prime}$ in $G^{\textmd{b}}_{\res{\vec{\beta}'}}$ has  $\piw[1\ldots l']$ as a suffix.
	\end{enumerate}
	Since $l''' \geq l'$, if $p'''$ is a path satisfying (a) and $p'$ is a path satisfying (b), then $\vec{p}^{\; \prime} = \vec{p}^{\; \prime\prime\prime}$ and $l' = l'''$.
        Such paths $p'''$ and $p'$ exist because $\vec{p}^{\; \prime\prime\prime} \in \dom{\res{\vec{\beta}'''}}$ and $\vec{p}^{\; \prime} \in \dom{\res{\vec{\beta}'}}$.
        Since  $\vec{\beta}'$ and $\vec{\beta}'''$ are distinct nodes in $\mathcal{M}_{\mathcal{T}}$, it must be the case that $t' \ne t'''$.
        By the definition of this case, we have $\vec{p}^{\; \prime\prime\prime} \ne \vec{y}$.
        Then, the contrapositive of condition~\ref{def:dd-ad-pt-1} of Definition~\ref{def:dd-ad-pt}
        applied to $\vec{p}^{\; \prime\prime\prime}$ says that $t' = t'''$, which is a contradiction to the fact that $t' \ne t'''$.
	The case where $\vec{p}^{\; \prime\prime\prime} \in \dom{\pil}$ can be handled in a similar manner. 
	In that case, $\vec{\beta}$ would necessarily have two competing children, one that terminates at a point in $\dom{\piw}$ and the other that terminates at another point in $\dom{\pil}$. 			

	\item[{\em 2bii.}] Assume $\vec{p}^{\; \prime\prime\prime} = \vec{y}$. Let $p$ be any simple path from $\vec{s}$ to $\vec{y}$ in $G^{\textmd{b}}_{\res{\vec{\beta}'''}}$.
	By condition~\ref{def:cp-5} of Definition~\ref{def:competing-paths}, there exists $\pi \in \left\{ \piw, \pil \right\}$ such that $\pi$ is a suffix of $p$.
	Assume WLOG that $\pi = \piw$. 
	Then, we have $\piw\left[ \left| \piw \right| \right] = \vec{p}^{\; \prime\prime\prime} = \vec{p}^{\; \prime}$ because $\dom{\piw\left[ 1\ldots l'\right]} \subseteq \dom{\beta_m}$.
	Since  $\vec{\beta}'$ and $\vec{\beta}'''$ are distinct nodes in $\mathcal{M}_{\mathcal{T}}$, it must be the case that $t' \ne t'''$.
	Furthermore, if we let $\vec{p} = \vec{p}^{\; \prime} = \vec{p}^{\; \prime\prime\prime}= \vec{y}$, the following two statements must both be true.

	\begin{enumerate}        \setcounter{enumii}{4}
	\item There exists a unique unit vector $\vec{u} \in \left\{ (0,1), (1,0), (0,-1), (-1,0) \right\}$ such that every simple path from $\vec{s}$ to $\vec{p}$ in $G^{\textmd{b}}_{\res{\vec{\beta}'''}}$ has $\left(\vec{p}+\vec{u}, \vec{p}\,\right)$ as a suffix.
		\item $\beta_m + \left( \vec{p}, t' \right)$ and $\beta_m + \left( \vec{p}, t''' \right)$ are both 
		%$\mathcal{T}$-producible assemblies.
		valid tile attachment steps.
	\end{enumerate}
	But (e) and (f) together imply that $\vec{p}$ cannot be a directionally deterministic point.
	In other words, 
	condition~\ref{def:dd-ad-pt-2} of Definition~\ref{def:dd-ad-pt}
	is violated because two distinct types of tiles are attaching at $\vec{p}$ in the same direction and to the same type of tile.

	Thus, $\vec{p}$ cannot be a directionally deterministic point in $\mathcal{T}$, which contradicts condition~\ref{def:snd-1} of Definition~\ref{def:seq-non-deterministic}.
	The case where $\vec{p}^{\; \prime\prime\prime} \in \dom{\pil}$ can be handled in a similar manner.
	In that case, $\vec{\beta}$ would necessarily have two competing children that both terminate at $\vec{y}$. 
\end{enumerate}
At this point, we have shown that $\vec{p}^{\; \prime\prime\prime} \not \in  \dom{\piw} \cup \dom{\pil}$.

{\em Step 2c.} We now show that $\vec{p}^{\; \prime\prime\prime}$ cannot be on any competing path other than $\piw$ and  $\pil$.
Since $\left(\vec{p}^{\; \prime\prime\prime}, t'''\right)$ is a competing tile in $\beta_m$ and $\vec{p}^{\; \prime\prime\prime} \not \in \dom{\piw} \cup \dom{\pil}$, assume, for the sake of obtaining a contradiction, that there exist a competing path $\pi''' \not \in \left\{ \piw, \pil \right\}$, competing for an essential POC $\vec{y}^{\; \prime\prime\prime}$ from starting point $\vec{x}^{\; \prime\prime\prime}$ in $\mathcal{T}$ and an integer $1 <  l''' \leq \left| \pi''' \right|$ such that $\vec{p}^{\; \prime\prime\prime} = \pi'''\left[ l''' \right]$, $\vec{y}^{\; \prime\prime\prime} \not \in \dom{\beta_m}$, $\vec{p}^{\; \prime\prime\prime} \in \partial^{\mathcal{T}}_{t'''}\beta_m$, and every simple path in $G^{\textmd{b}}_{\beta_m + \left( \vec{p}^{\; \prime\prime\prime}, t''' \right)}$ goes through $\vec{x}^{\; \prime\prime\prime}$.
Since Lemma~\ref{lem:only-two-competing-paths} states that $\piw$ and $\pil$ are the only two paths competing for $\vec{y}$, from $\vec{x}$ in $\mathcal{T}$, $\vec{y}^{\; \prime\prime\prime} \ne \vec{y}$. As a result, $r > 1$.
Since $\vec{p}^{\; \prime\prime\prime} \in \partial^{\mathcal{T}}_{t'''} \beta_m$, $\vec{p}^{\; \prime\prime\prime} \not \in \dom{\piw} \cup \dom{\pil}$, and $\vec{\beta}' = \vec{\beta} + \left(\vec{p}^{\; \prime}, t' \right)$, it follows that $\res{\vec{\beta}'} \not \in \mathcal{A}_{\Box}\left[\mathcal{T}\right]$.
Moreover, we are assuming that $\vec{\beta}'$ is a finite, $w$-correct $\mathcal{T}$-producing assembly sequence.
Therefore, by Lemma~\ref{lem:w-correct-extended-to-w-correct-terminal}, there exists an extension $\vec{\varepsilon}^{\; \prime}$ of $\vec{\beta}'$ resulting in $\alpha$, with $Y \subseteq \dom{\alpha}$.

Now the fact that $\vec{x}^{\; \prime\prime\prime} \in \dom{\beta_m}$ follows from the conjunction of the following facts:
\begin{itemize}
\item $\vec{\beta}'''$  is a child of $\vec{\beta}$ whose result places $t'''$ at $\vec{p}^{\; \prime\prime\prime}$,
\item  $\vec{p}^{\; \prime\prime\prime} \neq \vec{x}^{\; \prime\prime\prime}$ (by Lemma~\ref{lem:cannot-terminate-at-SY}),
\item $\vec{p}^{\; \prime\prime\prime}$ is on a competing path from  $\vec{x}^{\; \prime\prime\prime}$ to $\vec{y}^{\; \prime\prime\prime}$, and
\item the path from $\vec{s}$ to $\vec{p}^{\; \prime\prime\prime}$ in $\bindinggraph_{\beta_m + \left( \vec{p}^{\; \prime\prime\prime}, t''' \right) }$ goes through  $\vec{x}^{\; \prime\prime\prime}$ (by condition~\ref{def:ct-4} of Definition~\ref{def:competing-tile}).
\end{itemize}

Since we also have $\vec{x}\in \dom{\beta_m}$ (by definition of
$\vec{\beta}$), it follows that

\begin{equation}
  \label{eqn:m-as-upperbound}
\max\left\{\textmd{index}_{\vec{\varepsilon}^{\, \prime}}\left( \vec{x}\, \right), \textmd{index}_{\vec{\varepsilon}^{\, \prime}}\left(\vec{x}^{\; \prime\prime\prime} \right) \right\} \leq m.
\end{equation}

Similarly, since $\vec{y} \notin \dom{\beta_m}$ (as established in
Step 1 above) and $\vec{y}^{\; \prime\prime\prime} \notin
\dom{\beta_m}$ (as we assume in this step), it follows that

\begin{equation}
  \label{eqn:m-as-lowerbound}
m <
\min\left\{\textmd{index}_{\vec{\varepsilon}^{\, \prime}}\left(
\vec{y}\, \right), \textmd{index}_{\vec{\varepsilon}^{\,
    \prime}}\left(\vec{y}^{\; \prime\prime\prime} \right) \right\}.
\end{equation}

Since, by Observation~\ref{obs:distinct-starting-points}, $\vec{x}
\neq \vec{x}^{\; \prime\prime\prime}$, we only have the following two
cases to consider based on the possible orderings of the index values in
$\vec{\varepsilon}^{\, \prime}$ of the two starting points.

 \begin{itemize}
 \item Case $\textmd{index}_{\vec{\varepsilon}^{\, \prime}}\left(
   \vec{x}\, \right) < \textmd{index}_{\vec{\varepsilon}^{\,
       \prime}}\left( \vec{x}^{\; \prime\prime\prime} \right)$: In
   this case, Equations~\ref{eqn:m-as-upperbound}
   and~\ref{eqn:m-as-lowerbound} together imply that
   $\textmd{index}_{\vec{\varepsilon}^{\, \prime}}\left( \vec{x}\,
   \right) < \textmd{index}_{\vec{\varepsilon}^{\, \prime}}\left(
   \vec{x}^{\; \prime\prime\prime} \right) <
   \textmd{index}_{\vec{\varepsilon}^{\, \prime}}\left( \vec{y}\,
   \right)$, which contradicts Lemma~\ref{lem:x-i-y-i-ordering}.

 \item Case $\textmd{index}_{\vec{\varepsilon}^{\, \prime}}\left(
   \vec{x}^{\; \prime\prime\prime} \right) <
   \textmd{index}_{\vec{\varepsilon}^{\, \prime}}\left( \vec{x}\,
   \right)$: 
 In
   this case, Equations~\ref{eqn:m-as-upperbound}
   and~\ref{eqn:m-as-lowerbound} together imply that
   $\textmd{index}_{\vec{\varepsilon}^{\, \prime}}\left( \vec{x}^{\; \prime\prime\prime}
   \right) < \textmd{index}_{\vec{\varepsilon}^{\, \prime}}\left(
   \vec{x}\, \right) <
   \textmd{index}_{\vec{\varepsilon}^{\, \prime}}\left( \vec{y}^{\; \prime\prime\prime}
   \right)$, which contradicts Lemma~\ref{lem:x-i-y-i-ordering}.

 \end{itemize}
Since we obtained a contradiction in both cases, we have shown that  $\vec{p}^{\; \prime\prime\prime}$ cannot be on any competing path other than $\piw$ and  $\pil$.
In conclusion, $\left( \vec{p}^{\; \prime\prime\prime}, t''' \right)$ cannot satisfy Definition~\ref{def:competing-tile}, which contradicts our initial assumption that $\vec{\beta}$ has a third competing child.
\end{proof}

The next lemma has two conclusions regarding the extension of a node $\vec{\beta}_{\vec{x}}$ of $\mathcal{P}$ that terminates at the starting point $\vec{x}$.
The first part assumes the competition $\mathcal{C}$ associated with $\vec{x}$ is not rigged and concludes that it is always possible to extend $\vec{\beta}_{\vec{x}}$ using an arbitrarily-chosen winning assembly sequence for $\mathcal{C}$ such that the extension is $w$-correct and terminates at the POC $\vec{y}$ corresponding to $\vec{x}$.
In other words, the first part is saying that it is always possible to extend $\vec{\beta}_{\vec{x}}$ in a winning manner. 
The second part concludes, regardless of whether $\mathcal{C}$ is assumed to be rigged, that any $w$-correct extension of $\vec{\beta}_{\vec{x}}$ that terminates at $\vec{y}$ must have embedded within it some winning assembly sequence for $\mathcal{C}$.
In other words, the second part is saying that the only way to extend $\vec{\beta}_{\vec{x}}$ to $\vec{y}$ is in a winning manner. 

\begin{lemma} % lemma 26
\label{lem:extend-by-winning-and-only-winning}
Let $k \in \mathbb{Z}^+$ such that $\vec{\beta}_{\vec{x}} = \left( \beta_i \mid 1 \leq i \leq k\right)$ is a $w$-correct, $\mathcal{T}$-producing assembly sequence that terminates at $\vec{x}$. 
Then:
\begin{enumerate}
		\item \label{lem:extend-by-winning-and-only-winning-1} If $\mathcal{C}$ is not rigged by $\vec{\beta}_{\vec{x}}$, then for every winning assembly sequence for competition $\mathcal{C}$, say $\vec{\gamma} = \left( \gamma_i \mid 1 \leq i \leq m\right)$ with $m \in \mathbb{Z}^+\backslash\{1\}$, there exists a $\mathcal{T}$-producing assembly sequence 
		$$
		\vec{\varepsilon} = \left( \underbrace{\varepsilon_1, \varepsilon_2, \ldots, \varepsilon_k}_{\vec{\beta}_{\vec{x}}}, \underbrace{\varepsilon_{k+1}, \ldots, \varepsilon_{k+m-1}}_{\substack{\textmd{same tile attachment}\\ \textmd{steps as in\ }\vec{\gamma}[2\ldots m]}} \right),
		$$
	which satisfies the following conditions:
	\begin{enumerate}
	\item $\vec{\varepsilon}\,[1\ldots k] = \vec{\beta}_{\vec{x}}$,
	\item
          for all integers $k+1 \leq i \leq k+m-1$, $\varepsilon_i = \varepsilon_{i-1} +\left( \gamma_{i-k+1} \backslash \gamma_{i-k} \right)$,
          
		\item $\vec{\varepsilon}$ is a $w$-correct extension of $\vec{\beta}_{\vec{x}}$ that terminates at $\vec{y}$, and
		\item $\vec{\gamma}$ is the unique longest winning assembly sequence for $\mathcal{C}$ embedded in $\vec{\varepsilon}$.
	\end{enumerate}
	\item \label{lem:extend-by-winning-and-only-winning-2} For an arbitrary $\mathcal{T}$-producing assembly sequence $\vec{\varepsilon}$, if $\vec{\varepsilon}$ is a $w$-correct extension of $\vec{\beta}_{\vec{x}}$ that terminates at $\vec{y}$, then there exists a unique longest winning assembly sequence for $\mathcal{C}$ that is embedded in $\vec{\varepsilon}$.
	\end{enumerate}
\end{lemma}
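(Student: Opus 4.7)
The plan is to handle the two parts essentially in opposite directions: in Part~\ref{lem:extend-by-winning-and-only-winning-1} we \emph{construct} $\vec{\varepsilon}$ from $\vec{\gamma}$ by appending, and in Part~\ref{lem:extend-by-winning-and-only-winning-2} we \emph{extract} a winning assembly sequence from a given $\vec{\varepsilon}$ by projecting onto the competing paths.

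For Part~\ref{lem:extend-by-winning-and-only-winning-1}, I would first use $\mathcal{C}$ being not rigged by $\vec{\beta}_{\vec{x}}$ together with Lemma~\ref{lem:not-rigged} to conclude that $\dom{\beta_{\vec{x}}} \cap (\dom{\piw} \cup \dom{\pil}) = \{\vec{x}\}$. Since $\vec{\gamma}$ is a winning assembly sequence for $\mathcal{C}$, by Definition~\ref{def:winning-path-assembly-competition} its result is a subassembly of $\alpha$, $\gamma_1 = \{(\vec{x},\alpha(\vec{x}))\}$, and the tiles it attaches at steps $2,\ldots,m$ all land on points in $\dom{\piw}\cup\dom{\pil}\setminus\{\vec{x}\}$. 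By Lemma~\ref{lem:w-correct-subassembly-w-correct}, $\beta_{\vec{x}}\sqsubseteq\alpha$, so $\beta_{\vec{x}}(\vec{x})=\alpha(\vec{x})$, matching the seed of $\mathcal{C}$. I would then define $\vec{\varepsilon}$ exactly as stated, verifying by induction on $i$ from $k+1$ to $k+m-1$ that each step $\gamma_{i-k+1}\setminus\gamma_{i-k}$ is a valid attachment onto $\varepsilon_{i-1}$: the attaching location is empty in $\varepsilon_{i-1}$ (by the disjointness above plus the induction hypothesis) and its neighbor that supplied the binding glue in $\vec{\gamma}$ has already been placed in $\varepsilon_{i-1}$ because we are appending steps in $\vec{\gamma}$'s own order starting from $\gamma_1$. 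Correctness ($w$-correctness and termination at $\vec{y}$) then follows because every added tile equals the tile placed by $\alpha$ at that location, and $\vec{\gamma}$ terminates at $\vec{y}$ by definition. Uniqueness of the longest embedded winning assembly sequence will follow from the fact that the only tile attachment steps of $\vec{\varepsilon}$ that place tiles on $\dom{\piw}\cup\dom{\pil}$ occur in positions $k+1,\ldots,k+m-1$ and, by the disjointness above, are exactly the steps of $\vec{\gamma}[2\ldots m]$.

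For Part~\ref{lem:extend-by-winning-and-only-winning-2}, let $\varepsilon = \res{\vec{\varepsilon}\,}$. By Lemma~\ref{lem:w-correct-subassembly-w-correct} we have $\varepsilon\sqsubseteq\alpha$. Since $\vec{\varepsilon}$ terminates at $\vec{y}$ and $\vec{y}\in\dom{\varepsilon}$, condition~\ref{def:cp-5} of Definition~\ref{def:competing-paths} forces every simple path from $\vec{s}$ to $\vec{y}$ in $\bindinggraph_{\varepsilon}$ to have $\piw$ or $\pil$ as a suffix. Because the last step of $\vec{\varepsilon}$ attaches the tile $\alpha(\vec{y})=w(\vec{y})$ to some neighbor in $\varepsilon$, and because the directional determinism of $\mathcal{T}$ together with $\gamma\sqsubseteq\alpha$ (for any winning $\gamma$) pins down from which side the winning tile must attach, I would argue that $\piw$ in particular is a path in $\bindinggraph_{\varepsilon}$ so $\dom{\piw}\subseteq\dom{\varepsilon}$. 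Now define $\vec{\gamma}$ by extracting from $\vec{\varepsilon}$, in order, the tile attachment step at $\vec{x}$ (from $\vec{\beta}_{\vec{x}}$) together with all subsequent steps whose point lies in $\dom{\piw}\cup\dom{\pil}\setminus\{\vec{x}\}$. I would then verify that $\vec{\gamma}$ is a winning assembly sequence for $\mathcal{C}$ by checking conditions~\ref{def:wp-1}--\ref{def:wp-3} of Definition~\ref{def:winning-path-assembly-competition}: condition~\ref{def:wp-2} is inherited from $\varepsilon\sqsubseteq\alpha$; condition~\ref{def:wp-1} holds because $\piw$ is a path in $\bindinggraph_{\varepsilon}$ and hence in $\bindinggraph_{\res{\vec{\gamma}\,}}$; and condition~\ref{def:wp-3} holds because any simple path from $\vec{x}$ in $\bindinggraph_{\res{\vec{\gamma}\,}}$ must be a prefix of either $\piw$ or $\pil$ by virtue of how the extraction was performed and Definition~\ref{def:competing-paths}. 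Uniqueness of the longest such embedding comes from the fact that the extraction is canonical: any embedded winning assembly sequence for $\mathcal{C}$ corresponds to a subsequence of the tile attachment steps of $\vec{\varepsilon}$ whose points lie in $\dom{\piw}\cup\dom{\pil}$ (possibly omitting some trailing $\pil$ tiles), and the longest such subsequence is obtained by taking all of them.

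The main obstacle I foresee is in Part~\ref{lem:extend-by-winning-and-only-winning-2}: rigorously arguing that the sequence obtained by projecting $\vec{\varepsilon}$ onto $\dom{\piw}\cup\dom{\pil}$ really is a valid $\mathcal{C}$-assembly sequence. The subtlety is that between two consecutive tile attachment steps on the competing paths, $\vec{\varepsilon}$ may place many ``irrelevant'' tiles outside those paths, and one must ensure that removing them still leaves each step of the extracted sequence valid, i.e., that the binding neighbor used at each step was itself placed on a competing path earlier in $\vec{\varepsilon}$. This will ultimately rely on condition~\ref{def:cp-4} of Definition~\ref{def:competing-paths}, which forces all predecessors of any point on a competing path (reached without going around $\vec{x}$) to lie along that very competing path.
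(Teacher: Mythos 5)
Your proposal takes essentially the same approach as the paper on both parts: Part~\ref{lem:extend-by-winning-and-only-winning-1} is proved by appending the steps of $\vec{\gamma}$ after $\vec{\beta}_{\vec{x}}$ and using Lemma~\ref{lem:not-rigged} to guarantee no blocking, and Part~\ref{lem:extend-by-winning-and-only-winning-2} is proved by projecting $\vec{\varepsilon}$ onto $\dom{\piw}\cup\dom{\pil}$ after invoking condition~\ref{def:cp-5} of Definition~\ref{def:competing-paths} to establish that $\piw$ appears in $\bindinggraph_{\varepsilon}$.

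One item you flag as a foreseeable ``main obstacle'' is precisely the nontrivial step the paper does handle carefully, and I want to sharpen what it actually is. In verifying condition~\ref{def:wp-3} of Definition~\ref{def:winning-path-assembly-competition} for the extracted sequence, the danger is not that you fail to record tiles off the competing paths (those are discarded by construction), but that tiles you \emph{do} keep, sitting at points of $\piw$ and points of $\pil$, might happen to be adjacent in $\mathbb{Z}^2$ and bind to each other, producing a simple path in $\bindinggraph_{\res{\vec{\gamma}\,}}$ from $\vec{x}$ that ``crosses over'' from one competing path to the other and is therefore not a prefix of either. Your instinct to fall back on condition~\ref{def:cp-4} is correct for the case where such a crossover path ends at a non-$\vec{y}$ point (condition~\ref{def:cp-4a} forces $\pil[1\ldots i]$ to be a suffix, contradicting that the path's second vertex lies on $\piw$), but you also need condition~\ref{def:cp-5} for the case where the crossover path ends at $\vec{y}$ itself, since there the suffix constraint comes from the competing-paths-as-suffix condition rather than from~\ref{def:cp-4}. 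With that split into cases made explicit, your argument matches the paper's.

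Two minor observations. First, in Part~\ref{lem:extend-by-winning-and-only-winning-2}, your appeal to directional determinism to pin down that $\piw$ (and not $\pil$) is the relevant suffix is more roundabout than necessary: since $\vec{\varepsilon}$ is $w$-correct and places $w(\vec{y}\,)$ at $\vec{y}$, and by Lemma~\ref{lem:number-of-winning-assemblies} the winning path is the one whose extension from $\vec{x}$ ends with that tile, condition~\ref{def:cp-5} plus $w$-correctness alone selects $\piw$. Second, in Part~\ref{lem:extend-by-winning-and-only-winning-1}, make sure the ``unique longest'' claim is argued from both directions: any embedded winning assembly sequence must start with the tile placed at $\vec{x}$, and since $\vec{\beta}_{\vec{x}}$ terminates at $\vec{x}$, every later step of any embedded winning sequence must use one of the positions $k+1,\ldots,k+m-1$, so no strictly longer embedding is possible.
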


We will use Lemma~\ref{lem:extend-by-winning-and-only-winning} to construct a partition of the leaf nodes of $\mathcal{P}^{\vec{\beta}_{\vec{x}}} \upharpoonright \vec{y}$ in terms of winning assembly sequences for a non-rigged competition.

\begin{proof}
Let $\piw$ and $\pil$ be the winning and losing paths, respectively, that are competing for $\vec{y}$.
By Lemma~\ref{lem:only-two-competing-paths}, $\piw$ and $\pil$ are the only two such paths.
Let $\beta_{\vec{x}} = \res{\vec{\beta}_{\vec{x}}}= \beta_k$.
We will now prove the two conclusions of the lemma.
\begin{enumerate}
	\item Let $\gamma = \res{\vec{\gamma}}$. 
Since $\vec{\gamma}$ is a winning assembly sequence for $\mathcal{C}$, Definition~\ref{def:winning-path-assembly-competition} implies that $\piw$ is a path in $G^{\textmd{b}}_{\gamma}$, $\gamma\left(\vec{y}\,\right) = w\left(\vec{y}\,\right)$, and $\dom{\gamma} \subseteq \dom{\piw} \cup \dom{\pil}$.
Let $\vec{p} \in \left\{\piw[2], \pil[2]\right\}$
and $t \in T$ such that $\gamma_2 = \gamma_1 + \left( \vec{p}, t \right)$.
Since $\mathcal{C}$ is not rigged by $\vec{\beta}_{\vec{x}}$, then by Lemma~\ref{lem:not-rigged}, we have $\dom{\gamma} \cap \dom{\beta_{\vec{x}}} = \left\{ \vec{x} \right\}$.
This implies that $\vec{p} \not \in \dom{\beta_{\vec{x}}}$.
Since $\vec{\beta}_{\vec{x}}$ is $w$-correct, i.e., $\beta_{\vec{x}}\left( \vec{x}\, \right) = \alpha\left( \vec{x}\, \right) = \gamma_1\left(\vec{x}\,\right) = \varepsilon_k\left(\vec{x}\,\right)$, it follows that $\vec{p} \in \partial^{\mathcal{T}}_t \beta_{\vec{x}}$, and we have $\beta_{\vec{x}} \rightarrow_1^{\mathcal{T}} \varepsilon_{k+1}$.
Then, $\vec{\varepsilon}$ is a $\mathcal{T}$-producing assembly sequence because 1) $\vec{\beta}_{\vec{x}}$ is a $\mathcal{T}$-producing assembly sequence, 2) $\vec{\gamma}$ is a $\mathcal{C}$-producing assembly sequence, which means $\vec{\gamma}$ is a $\mathcal{T}$-assembly sequence (although not necessarily a $\mathcal{T}$-producing one), and 3) Lemma~\ref{lem:not-rigged} implies that none of the tile attachment steps in $\vec{\gamma}$ is blocked in  $\vec{\varepsilon}$.
Thus, $\vec{\varepsilon}$ is an extension of $\vec{\beta}_{\vec{x}}$.
To see that $\vec{\gamma}$ is embedded in $\vec{\varepsilon}$, we consider the cases $k = 1$ and $k > 1$ in turn.
If $k = 1$, then $\dom{\beta_1} = \left\{ \vec{x} \right\}$, $\vec{\varepsilon} = \vec{\gamma}$, and the latter is trivially embedded in the former.
If $k > 1$, then the function $f:\left\{ 1, \ldots, m\right\} \rightarrow \left\{ 1, \ldots, k+m-1 \right\}$, defined as $f(1) = k$, and for $i > 1$, $f(i) = k+i-1$, testifies to $\vec{\gamma}$ being embedded in $\vec{\varepsilon}$.
Note that the first tile that attaches in any winning assembly sequence for $\mathcal{C}$ must attach at $\vec{x}$.
Since $\vec{\beta}_{\vec{x}}$ terminates at $\vec{x}$, it follows that $\vec{\gamma}$ is the longest winning assembly sequence of $\mathcal{C}$ embedded in $\vec{\varepsilon}$. 
Furthermore, $\vec{\varepsilon}$ terminates at $\vec{y}$ because $\vec{\gamma}$ is a winning assembly sequence for $\mathcal{C}$.
Finally,  $\vec{\varepsilon}$ is $w$-correct because both $\vec{\beta}_{\vec{x}}$ and $\vec{\gamma}$ are $w$-correct.
\item Let $\vec{\varepsilon}$ be a $w$-correct extension of $\vec{\beta}_{\vec{x}}$ that terminates at $\vec{y}$ such that $\varepsilon = \res{\vec{\varepsilon}}$. 
Since $\vec{\varepsilon}$ terminates at $\vec{y}$, there exists a simple path $\pi'$ from $\vec{s}$ to $\vec{y}$ in $G^{\textmd{b}}_{\varepsilon}$.
By condition~\ref{def:cp-5} of Definition~\ref{def:competing-paths}, there exists $\pi \in \left\{ \piw, \pil \right\}$ such that
$\pi$ is a suffix of $\pi'$.
Since $\vec{\varepsilon}$ is $w$-correct, $\pi = \piw$. 
Therefore $\piw$ is a simple path in $G^{\textmd{b}}_{\varepsilon}$ and thus $\dom{\piw} \subseteq \dom{\varepsilon}$.
Since $\vec{x}$ is the starting point for both $\piw$ and $\pil$, and $\vec{\beta}_{\vec{x}}$ terminates at $\vec{x}$, let $l$ be such that $1 \leq l < \left| \pil \right|$, and $\pil[1\ldots l]$ is a simple path in $ G^{\textmd{b}}_{\varepsilon}$ but $\pil[1\ldots l+1]$ is not.
Note that $l$ could equal $1$ if $\vec{x}$ is the only point of $\pil$ contained in $\dom{\varepsilon}$, but the largest value for $l$ is $\left| \pil \right| - 1$ because $\pi \ne \pil$.
In other words, $l$ is the length of the longest prefix of $\pil$ in $G^{\textmd{b}}_{\varepsilon}$.
By the definition of $l$, we have $\dom{\piw} \cup \dom{\pil[1\ldots l]} \subseteq \dom{\varepsilon}$.
Let $m=\left|\dom{\piw} \cup \dom{\pil[1\ldots l]}\right|$.
By condition~\ref{def:snd-4} of Definition~\ref{def:seq-non-deterministic}, which implies that $\vec{x} \ne \vec{y}$, we have $m > 1$. 
Let $\vec{\varepsilon} = \left( \varepsilon_i \mid 1 \leq i \leq n \right)$ for some $n \in \mathbb{Z}^+$ such that $n > \left|\vec{\beta}_{\vec{x}}\right|$.
Then, there exist $m$ indices 
$$
1 \leq i_1 < i_2 < \cdots < i_m \leq n
$$ 
such that, if $\dom{\varepsilon_{1}} = \left\{ \vec{x} \right\}$, then $i_1 = 1$
and for all $2 \leq j \leq m$, $i_j$ is such that $\dom{\varepsilon_{i_j}} \backslash \dom{\varepsilon_{i_j-1}} \subseteq \dom{\piw} \cup \dom{\pil[1\ldots l]}$; otherwise for all $1 \leq j \leq m$, $i_j$ is such that $\dom{\varepsilon_{i_j}} \backslash \dom{\varepsilon_{i_j-1}} \subseteq \dom{\piw} \cup \dom{\pil[1\ldots l]}$.
These indices must exist because $\dom{\piw} \cup \dom{\pil[1\ldots l]} \subseteq \dom{\varepsilon}$.
Note that, since $\vec{\varepsilon}$ terminates at $\vec{y}$, we have $i_m = n$ and $\dom{\varepsilon_{i_m}} \backslash \dom{\varepsilon_{i_m-1}} = \left\{ \vec{y} \right\}$. 
Let $f~:~\left\{1,\ldots, m\right\} \rightarrow \left\{ 1, \ldots, n\right\}$ be such that for all $1 \leq j \leq m$, $f(j) = i_j$, and for all $1 \leq j < k \leq m$, $f\left(j\right)<f\left( k\right)$.
We will now define the winning assembly sequence $\vec{\gamma}$ for $\mathcal{C}$ such that it is the unique longest winning assembly sequence for $\mathcal{C}$ embedded in $\vec{\varepsilon}$. 
To that end, define the assemblies $\gamma_1, \ldots, \gamma_m$ as follows:
\begin{enumerate}
	\item if $i_1 = 1$, then let $\gamma_1 = \varepsilon_1$, else let $\gamma_1 = \{ \varepsilon_{i_1} \backslash \varepsilon_{i_1-1}\}$, and
	\item for all $2 \leq j \leq m$, let $\gamma_j = \gamma_{j-1} + \left( \varepsilon_{f(j)} \backslash \varepsilon_{f(j)-1} \right)$.
\end{enumerate}
Let $\vec{\gamma} = \left( \gamma_1, \gamma_2, \ldots, \gamma_m \right)$.
We now show that $\vec{\gamma}$ is a $\mathcal{T}$-assembly sequence.
First, we have $\dom{\gamma_1} = \left\{ \vec{x} \right\}$. Furthermore, since $\vec{\beta}_{\vec{x}}$ is $w$-correct, we have: if  $i_1 = 1$, then  $\gamma_1\left(\vec{x}\,\right) = \varepsilon_1\left(\vec{x}\,\right) = \alpha\left(\vec{x}\,\right)$, else $\gamma_1\left(\vec{x}\,\right) = \varepsilon_{i_1}\left(\vec{x}\,\right) = \alpha\left(\vec{x}\,\right)$.
Then, for all $2 \leq j \leq m$, $\gamma_j = \gamma_{j-1} + \left( \varepsilon_{f(j)} \backslash \varepsilon_{f(j)-1} \right)$ is a valid tile attachment.
This is because:
\begin{enumerate}
	\item $\piw$ and $\pil[1\ldots l]$ are simple paths in $G^{\textmd{b}}_{\varepsilon}$, which means that
	if $\varepsilon_{f(j)} \backslash \varepsilon_{f(j)-1} = \left( \vec{p}, t \right) \in \mathbb{Z}^2 \times T$, then $\vec{p} \in \dom{\piw} \cup \dom{\pil[1\ldots l]}$
	and
	\item $1 \leq i_1 < i_2 < \cdots < i_m \leq n$ means that tiles attach in $\vec{\gamma}$ to locations in $\dom{\piw} \cup \dom{\pil}$ in the correct relative orders, i.e.,
	$$
	\textmd{index}_{\vec{\gamma}}\left( \piw[1] \right) < \cdots < \textmd{index}_{\vec{\gamma}}\left( \piw\left[ \left| \piw \right| \right] \right)
	$$
	and
	$$
	\textmd{index}_{\vec{\gamma}}\left( \pil[1] \right) < \cdots < \textmd{index}_{\vec{\gamma}}\left( \pil\left[ l \right] \right).
	$$
\end{enumerate}
Thus, $f$ testifies to $\vec{\gamma}$ being a $\mathcal{T}$-assembly sequence embedded in $\vec{\varepsilon}$.
Moreover, if we let $\gamma = \res{\vec{\gamma}} = \gamma_m$, then
$$
\dom{\gamma_m} = \bigcup_{i=1}^m{\dom{\gamma_i}} = \dom{\piw} \cup \dom{\pil[1\ldots l]}
$$
and, for all $\vec{p} \in \dom{\piw} \cup \dom{\pil[1\ldots l]}$, 
$$
\gamma\left(\vec{p}\,\right) = \varepsilon\left(\vec{p}\,\right) = \alpha\left(\vec{p}\,\right).
$$ 
Up to this point, we have established that $\gamma$ satisfies both
condition~\ref{def:wp-1} (i.e., $\piw$ is a path in
$G^{\textmd{b}}_{\gamma}$) and condition~\ref{def:wp-2} (i.e., $\gamma
\sqsubseteq \alpha$) of
Definition~\ref{def:winning-path-assembly-competition}.
We now prove that  $\gamma$ also satisfies condition~\ref{def:wp-3} of Definition~\ref{def:winning-path-assembly-competition}.

Assume, for the sake of obtaining a contradiction, that $p$ is a
simple path in $G^{\textmd{b}}_{\gamma}$ that starts at $\vec{x}$ and
is not a prefix of either $\piw$ or
$\pil[1\ldots l]$.  Since $\dom{\gamma} =
\dom{\piw} \cup \dom{\pil[1\ldots l]}$,
$p$ must be a simple path containing a prefix of the form
$(p_1=\vec{x}, \ldots, p_k,p_{k+1})$ where $1<k
<|\piw|$, $p[1\ldots k]$ is a prefix of
$\piw$, and $p_{k+1} = \pil[i]$ for some
integer $i$ such that $1<i\leq l$. This scenario is possible when
$p_k$ and $p_{k+1}$ belong to different competing paths but happen to
be adjacent points in $\mathbb{Z}^2$. Note that a similar reasoning
could be carried out if instead $p$ shared a prefix with
$\pil$ and then ``crossed over'' to some point in
$\dom{\piw}$. Now, by the way $\vec{\gamma}$ was built
from assemblies in $\vec{\varepsilon}$, $p$ is a simple path from
$\vec{x}$ to $p[|p|]$ in the binding graph of the
$\mathcal{T}$-producible assembly sequence $\vec{\varepsilon}$. Thus
$p$ is also a suffix of a simple path $p'$ in
$G^{\textmd{b}}_{\varepsilon}$ from $\vec{s}$ to $p[|p|]$ that goes
through $\vec{x}$. We now consider two cases. If $p[|p|] \neq
\vec{y}$, then condition~\ref{def:cp-4a} of
Definition~\ref{def:competing-paths} says $\pil[1\ldots
  i]$ is a suffix of $p$, which contradicts the fact that $p_2 \in
\dom{\piw}$. If $p[|p|] = \vec{y}$, then
condition~\ref{def:cp-5} of Definition~\ref{def:competing-paths} says
either $\piw$ or $\pil$ is a suffix of
$p$, which contradicts the constraint placed on $p$ when it was
initially selected.  Since we obtained a contradiction in all possible
cases, we have established that $\gamma$ satisfies
condition~\ref{def:wp-3} of
Definition~\ref{def:winning-path-assembly-competition} and is thus a
winning assembly for $\mathcal{C}$.

Since $\pil\left[1\ldots l \right]$ and $\piw$ are simple paths in $G^{\textmd{b}}_{\varepsilon}$ such that $\dom{\piw} \cap \dom{\pil\left[1 \ldots l \right]} = \left\{ \vec{x} \right\}$, it follows that $\vec{\gamma}$ is a $\mathcal{C}$-producing assembly sequence such that $\gamma_m$ is $\mathcal{C}$-producible from $\gamma_1$. 
Then, we have:
$$
\gamma_m = \gamma_{m-1} + \left( \varepsilon_{f(m)} \backslash \varepsilon_{f(m)-1}\right) = \gamma_{m-1} + \left( \varepsilon_{i_m} \backslash \varepsilon_{i_m-1}\right) = \gamma_{m-1} + \left( \vec{y}, \alpha\left( \vec{y}\, \right) \right),
$$
which means that $\vec{\gamma}$ terminates at $\vec{y}$. 
Thus, $\vec{\gamma}$ is a winning assembly sequence for $\mathcal{C}$. 
Finally, $\vec{\gamma}$ is the unique longest winning assembly sequence for $\mathcal{C}$ that is embedded in $\vec{\varepsilon}$, because all the points in $\dom{\piw}$ are in $\dom{\varepsilon}$, $l$ was 
chosen to pick out the longest prefix of $\pil$ that belongs to  $G^{\textmd{b}}_{\varepsilon}$, and the order of the assemblies in $\vec{\gamma}$ is uniquely determined by the fixed order of the assemblies in $\vec{\varepsilon}$. 
\end{enumerate}
\end{proof}

\begin{definition}  % def. B sub gamma
\label{def:longest-winning-assembly-sequences}
Let $\vec{\gamma}$ be a winning assembly sequence for $\mathcal{C}$.
Then, we define $B_{\vec{\gamma}}$ to be the set of all leaf nodes $\vec{\beta}_{\vec{y}}$ of $\mathcal{P}^{\vec{\beta}_{\vec{x}}} \upharpoonright \vec{y}$ such that $\vec{\beta}_{\vec{y}}$ terminates at $\vec{y}$ and $\vec{\gamma}$ is the unique longest winning assembly sequence for $\mathcal{C}$ that is embedded in $\vec{\beta}_{\vec{y}}$.
\end{definition}

Note that if $\mathcal{C}$ is rigged by $\vec{\beta}_{\vec{x}}$, then it could be the case that $B_{\vec{\gamma}} = \emptyset$, depending on $\vec{\gamma}$. 

However, if $\mathcal{C}$ is not rigged by $\vec{\beta}_{\vec{x}}$, then, by conclusion~\ref{lem:extend-by-winning-and-only-winning-1} of Lemma~\ref{lem:extend-by-winning-and-only-winning}, for every winning assembly sequence $\vec{\gamma}$ for $\mathcal{C}$, $B_{\vec{\gamma}} \ne \emptyset$. 
We now prove, using Lemma~\ref{lem:extend-by-winning-and-only-winning}, that the set of all $B_{\vec{\gamma}}$ is a partition of the set of leaf nodes of $\mathcal{P}^{\vec{\beta}_{\vec{x}}} \upharpoonright \vec{y}$ in the case $\mathcal{C}$ is not rigged by $\vec{\beta}_{\vec{x}}$.

\begin{lemma} % lemma 27
\label{lem:partition} If $\mathcal{C}$ is not rigged by $\vec{\beta}_{\vec{x}}$, then $\displaystyle\bigcup_{\substack{\vec{\gamma} \textmd{ winning assembly}\\\textmd{ sequence for }\mathcal{C}}}{ \{B_{\vec{\gamma}}\} }$ is a partition of the set of leaf nodes of $\mathcal{P}^{\vec{\beta}_{\vec{x}}} \upharpoonright \vec{y}$.
\end{lemma}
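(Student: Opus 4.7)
The plan is to verify the three defining conditions of a partition—non-emptiness of every block, exhaustive coverage, and pairwise disjointness—by appealing to the two halves of Lemma~\ref{lem:extend-by-winning-and-only-winning} together with Lemma~\ref{lem:b-x-y-well-defined-leaf-nodes} which characterizes the leaf nodes of $\mathcal{P}^{\vec{\beta}_{\vec{x}}} \upharpoonright \vec{y}$. Since $\mathcal{C}$ is assumed not to be rigged by $\vec{\beta}_{\vec{x}}$, the first conclusion of Lemma~\ref{lem:extend-by-winning-and-only-winning} will do most of the work on the existence side and the second conclusion will do most of the work on the uniqueness side.

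First, I would show that no $B_{\vec{\gamma}}$ is empty. Fix an arbitrary winning assembly sequence $\vec{\gamma}$ for $\mathcal{C}$. By conclusion~\ref{lem:extend-by-winning-and-only-winning-1} of Lemma~\ref{lem:extend-by-winning-and-only-winning}, we obtain a $w$-correct $\mathcal{T}$-producing extension $\vec{\varepsilon}$ of $\vec{\beta}_{\vec{x}}$ that terminates at $\vec{y}$ and in which $\vec{\gamma}$ is the unique longest winning assembly sequence for $\mathcal{C}$ embedded in $\vec{\varepsilon}$. Being $w$-correct, $\mathcal{T}$-producing, and an extension of $\vec{\beta}_{\vec{x}}$, and terminating at $\vec{y}$, the sequence $\vec{\varepsilon}$ sits in $\mathcal{P}^{\vec{\beta}_{\vec{x}}}$ (by Observation~\ref{obs:alpha-w-correct-iff-node-in-p-corr-alpha} combined with Lemma~\ref{lem:w-correct-extended-to-w-correct-terminal}) and is a leaf node of $\mathcal{P}^{\vec{\beta}_{\vec{x}}} \upharpoonright \vec{y}$ by Lemma~\ref{lem:b-x-y-well-defined-leaf-nodes}. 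Hence $\vec{\varepsilon} \in B_{\vec{\gamma}}$, so $B_{\vec{\gamma}} \neq \emptyset$.

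Next, I would verify coverage: every leaf node of $\mathcal{P}^{\vec{\beta}_{\vec{x}}} \upharpoonright \vec{y}$ belongs to some $B_{\vec{\gamma}}$. Let $\vec{\beta}_{\vec{y}}$ be any such leaf node. By Lemma~\ref{lem:b-x-y-well-defined-leaf-nodes}(\ref{lem:b-x-y-well-defined-leaf-nodes-2}), $\vec{\beta}_{\vec{y}}$ terminates at $\vec{y}$, and by Observation~\ref{obs:alpha-w-correct-iff-node-in-p-corr-alpha} it is a $w$-correct $\mathcal{T}$-producing assembly sequence extending $\vec{\beta}_{\vec{x}}$. Conclusion~\ref{lem:extend-by-winning-and-only-winning-2} of Lemma~\ref{lem:extend-by-winning-and-only-winning} then yields a unique longest winning assembly sequence $\vec{\gamma}$ for $\mathcal{C}$ embedded in $\vec{\beta}_{\vec{y}}$, which by Definition~\ref{def:longest-winning-assembly-sequences} places $\vec{\beta}_{\vec{y}}$ in $B_{\vec{\gamma}}$.

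Finally, pairwise disjointness falls out of the uniqueness clause: if $\vec{\beta}_{\vec{y}} \in B_{\vec{\gamma}_1} \cap B_{\vec{\gamma}_2}$, then both $\vec{\gamma}_1$ and $\vec{\gamma}_2$ qualify as the unique longest winning assembly sequence for $\mathcal{C}$ embedded in $\vec{\beta}_{\vec{y}}$, so $\vec{\gamma}_1 = \vec{\gamma}_2$ and thus $B_{\vec{\gamma}_1} = B_{\vec{\gamma}_2}$. Combining the three facts completes the proof. I do not anticipate a serious obstacle here since Lemma~\ref{lem:extend-by-winning-and-only-winning} was explicitly crafted for this purpose; the only care needed is to confirm that the extensions produced by Lemma~\ref{lem:extend-by-winning-and-only-winning}(\ref{lem:extend-by-winning-and-only-winning-1}) actually land inside $\mathcal{P}^{\vec{\beta}_{\vec{x}}} \upharpoonright \vec{y}$ (rather than merely inside $\mathcal{M}_{\mathcal{T}}$), which is what the two preceding observations/lemmas cited above guarantee.
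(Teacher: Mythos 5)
Your proposal follows essentially the same route as the paper's proof: non-emptiness of each $B_{\vec{\gamma}}$ from conclusion~\ref{lem:extend-by-winning-and-only-winning-1} of Lemma~\ref{lem:extend-by-winning-and-only-winning}, exhaustive coverage from conclusion~\ref{lem:extend-by-winning-and-only-winning-2}, and pairwise disjointness from the uniqueness of the longest embedded winning assembly sequence. The one place you add detail that the paper leaves implicit is the justification that the extension $\vec{\varepsilon}$ produced by conclusion~\ref{lem:extend-by-winning-and-only-winning-1} actually is a leaf node of $\mathcal{P}^{\vec{\beta}_{\vec{x}}} \upharpoonright \vec{y}$ rather than merely a node of $\mathcal{M}_{\mathcal{T}}$; that is a worthwhile sanity check, and citing Lemma~\ref{lem:w-correct-extended-to-w-correct-terminal} together with Observation~\ref{obs:alpha-w-correct-iff-node-in-p-corr-alpha} and Lemma~\ref{lem:b-x-y-well-defined-leaf-nodes} is the right way to close it, so the proposal is correct.
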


\begin{proof}

Since $\mathcal{C}$ is not rigged by $\vec{\beta}_{\vec{x}}$, then by conclusion~\ref{lem:extend-by-winning-and-only-winning-1} of Lemma~\ref{lem:extend-by-winning-and-only-winning}, $B_{\vec{\gamma}} \ne \emptyset$ for every winning assembly sequence $\vec{\gamma}$ for $\mathcal{C}$.

Let $U$ denote the union of sets defined in the lemma. We first show that the set of leaf nodes of $\mathcal{P}^{\vec{\beta}_{\vec{x}}} \upharpoonright \vec{y}$ is equal to $U$.
Let $\vec{\varepsilon}$ be a leaf node of $\mathcal{P}^{\vec{\beta}_{\vec{x}}} \upharpoonright \vec{y}$.
Then, $\vec{\varepsilon}$ is an extension of $\vec{\beta}_{\vec{x}}$ that terminates at $\vec{y}$.
By conclusion~\ref{lem:extend-by-winning-and-only-winning-2} of Lemma~\ref{lem:extend-by-winning-and-only-winning}, there is a  winning assembly sequence $\vec{\gamma}$ for $\mathcal{C}$ such that $\vec{\gamma}$ is the unique longest winning assembly sequence for $\mathcal{C}$ embedded in $\vec{\varepsilon}$. 
This means $\vec{\varepsilon} \in B_{\vec{\gamma}}$ and thus  $\vec{\varepsilon} \in U$.
By Definition~\ref{def:longest-winning-assembly-sequences}, if $\vec{\gamma}$ is a winning assembly sequence for $\mathcal{C}$, then $B_{\vec{\gamma}}$ is a subset of the set of leaf nodes of $\mathcal{P}^{\vec{\beta}_{\vec{x}}} \upharpoonright \vec{y}$.
Thus, the set of leaf nodes of $\mathcal{P}^{\vec{\beta}_{\vec{x}}} \upharpoonright \vec{y}$ is equal to $U$.
We now show that, for distinct winning assembly sequences $\vec{\gamma}_1, \vec{\gamma}_2$ for $\mathcal{C}$, $B_{\vec{\gamma}_1} \cap B_{\vec{\gamma}_2} = \emptyset$.
Let $\vec{\gamma}_1$ be a winning assembly sequence for $\mathcal{C}$ and $\vec{\beta}_1 \in B_{\vec{\gamma}_1}$.
By Definition~\ref{def:longest-winning-assembly-sequences}, $\vec{\beta}_1$ is a leaf node of $\mathcal{P}^{\vec{\beta}_{\vec{x}}} \upharpoonright \vec{y}$ and $\vec{\gamma}_1$ is the unique longest assembly sequence embedded in $\vec{\beta}_1$.
Let $\vec{\gamma}_2$ be a winning assembly sequence for $\mathcal{C}$ such that $\vec{\gamma}_1 \ne \vec{\gamma}_2$.
Since $\vec{\gamma}_1$ is the unique longest assembly sequence embedded in $\vec{\beta}_1$ and $\vec{\gamma}_1 \ne \vec{\gamma}_2$, it follows that $\vec{\beta}_1 \not \in B_{\vec{\gamma}_2}$.
Thus, $B_{\vec{\gamma}_1} \cap B_{\vec{\gamma}_2} = \emptyset$. 
\end{proof}

The following lemma says that 1) a non-competing descendant of $\vec{\beta}_{\vec{x}}$ cannot place a tile at a POC or along either competing path associated with $\vec{x}$, but 2) a competing descendant must place a tile along one of the competing paths associated with $\vec{x}$.

\begin{lemma} % lemma 28
  \label{lem:location-of-p'}
  Let:\vspace*{-2mm}
\begin{itemize}
\item $\vec{y} \in Y$ with associated starting point $\vec{x}$ and competing path $\pi \in \{\piw, \pil\}$,
\item $\vec{\beta}_{\vec{x}}$ be a $\mathcal{T}$-producing assembly sequence that terminates at $\vec{x}$ and whose result  $\beta_{\vec{x}}$ is a subassembly of $\alpha$,
  \item $\vec{\beta}$ be an internal node of $\mathcal{P}^{\vec{\beta}_{\vec{x}}} \upharpoonright \vec{y}$, and
  \item  $\vec{\beta}'$ be a child node of $\vec{\beta}$ in $\mathcal{M}_{\mathcal{T}}$ such that $\vec{\beta}' = \vec{\beta} + \left( \vec{p}^{\; \prime}, t' \right)$ for some  $\vec{p}^{\; \prime} \in  \mathbb{Z}^2$ and  $t' \in T$.

\end{itemize}
Then we have:
\begin{enumerate}
\item \label{lem:loc-p'-1}  If $\vec{\beta}'$ is not competing in $\mathcal{M}_{\mathcal{T}}$, then $\vec{p}^{\; \prime} \notin (P\cup\dom{\piw}\cup\dom{\pil})$.
\item \label{lem:loc-p'-2} If $\vec{\beta}'$ is competing in $\mathcal{M}_{\mathcal{T}}$, then  $\vec{p}^{\; \prime} \in (\dom{\piw}\cup\dom{\pil})$.
\end{enumerate}
\end{lemma}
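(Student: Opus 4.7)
The plan is to prove the two conclusions separately. Both come down to checking or refuting the conditions of Definition~\ref{def:competing-tile} (as applied through Definition~\ref{def:competitive-node}) for the tile $(\vec{p}^{\,\prime}, t')$. Throughout, I will use that $\vec{\beta}$ is a finite, $w$-correct, non-terminal $\mathcal{T}$-producing assembly sequence with $\res{\vec{\beta}}~\sqsubseteq~\alpha$ (by Observation~\ref{obs:alpha-w-correct-iff-node-in-p-corr-alpha} and Lemma~\ref{lem:w-correct-subassembly-w-correct}), and that $\vec{y} \notin \dom{\res{\vec{\beta}}}$ because, otherwise, $\vec{\beta}$ would coincide with or lie below some leaf of $\mathcal{P}^{\vec{\beta}_{\vec{x}}} \upharpoonright \vec{y}$, contradicting internality.

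For part~\ref{lem:loc-p'-1}, I will prove the contrapositive: if $\vec{p}^{\,\prime} \in P \cup \dom{\piw} \cup \dom{\pil}$, then $\vec{\beta}^{\,\prime}$ is competing. First consider $\vec{p}^{\,\prime} \in \dom{\piw} \cup \dom{\pil}$. I verify the four conditions of Definition~\ref{def:competing-tile} directly: condition~\ref{def:ct-1} is immediate; condition~\ref{def:ct-2} holds since $\vec{y} \notin \dom{\res{\vec{\beta}}}$ as noted above; condition~\ref{def:ct-3} is given by the validity of the attachment step; and condition~\ref{def:ct-4} follows from Lemma~\ref{lem:competing-tile-4} applied with $\vec{\beta}$ playing the role of the extension of $\vec{\beta}_{\vec{x}}$. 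Next consider $\vec{p}^{\,\prime} \in P$ but $\vec{p}^{\,\prime} \notin \dom{\piw} \cup \dom{\pil}$. If $\vec{p}^{\,\prime} \in P \setminus Y$, Lemma~\ref{lem:w-correct-no-inessential-poc} (applicable since $\vec{\beta}$ is $w$-correct and non-terminal) shows $\vec{p}^{\,\prime} \notin \partial^{\mathcal{T}}\res{\vec{\beta}}$, contradicting the valid attachment. If $\vec{p}^{\,\prime} = \vec{y}_j \in Y \setminus \{\vec{y}\}$, write $\vec{y} = \vec{y}_i$; when $j < i$ the ordering in Lemma~\ref{lem:x-i-y-i-ordering-r-1} applied to any $w$-correct extension of $\vec{\beta}_{\vec{x}}$ to $\alpha$ forces $\vec{y}_j \in \dom{\beta_{\vec{x}}} \subseteq \dom{\res{\vec{\beta}}}$, again contradicting the attachment being valid; when $j > i$, condition~\ref{def:cp-5} of Definition~\ref{def:competing-paths} forces the starting point $\vec{x}_j$ into $\dom{\res{\vec{\beta}}}$, yet extending the $w$-correct $\vec{\beta}$ to $\alpha$ via Lemma~\ref{lem:w-correct-extended-to-w-correct-terminal} and applying Lemma~\ref{lem:x-i-y-i-ordering} would require $\vec{y}_i$ to be placed before $\vec{x}_j$, a contradiction.

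For part~\ref{lem:loc-p'-2}, assume $\vec{\beta}^{\,\prime}$ is competing. Then Definition~\ref{def:competing-tile} supplies some $\vec{y}^{\,*} \in Y$ with starting point $\vec{x}^{\,*}$ and competing paths $\pi^{*}_w, \pi^{*}_l$ so that $\vec{p}^{\,\prime}$ lies on one of those two paths and every simple path in $G^{\mathrm{b}}_{\res{\vec{\beta}^{\,\prime}}}$ from $\vec{s}$ to $\vec{p}^{\,\prime}$ goes through $\vec{x}^{\,*}$. I claim $\vec{y}^{\,*} = \vec{y}$, which yields $\vec{p}^{\,\prime} \in \dom{\piw} \cup \dom{\pil}$. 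By Lemma~\ref{lem:cannot-terminate-at-SY}, $\vec{p}^{\,\prime} \neq \vec{x}^{\,*}$, so $\vec{x}^{\,*} \in \dom{\res{\vec{\beta}}}$. Writing $\vec{x}^{\,*} = \vec{x}_k$ and $\vec{y} = \vec{y}_i$, I mimic Step~2c of the proof of Lemma~\ref{lem:number-of-competing-nodes}: the cases $k < i$ and $k > i$ are each ruled out by chaining the index inequalities of Lemmas~\ref{lem:x-i-y-i-ordering-r-1} and~\ref{lem:x-i-y-i-ordering} applied to any $w$-correct extension of $\vec{\beta}$ to $\alpha$ (such an extension exists by Lemma~\ref{lem:w-correct-extended-to-w-correct-terminal}). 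Thus $k = i$, i.e., $\vec{y}^{\,*} = \vec{y}$, as required.

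The main obstacle will be the case analysis for spurious POCs in part~\ref{lem:loc-p'-1} and the cross-competition ruling out in part~\ref{lem:loc-p'-2}: in both, the delicate point is that $\vec{\beta}^{\,\prime}$ itself need not be $w$-correct, so one cannot directly invoke extension results on $\vec{\beta}^{\,\prime}$. The trick is to apply those extension and ordering results instead to $\vec{\beta}$ (which is $w$-correct), using only the geometric consequences, namely that certain essential POCs or starting points would have to already lie in $\dom{\res{\vec{\beta}}}$, producing a contradiction with either the ordering dictated by condition~\ref{def:snd-3} of Definition~\ref{def:seq-non-deterministic} or the validity of the attachment step producing $\vec{\beta}^{\,\prime}$.
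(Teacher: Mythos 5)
Your proposal is correct and follows essentially the same structure as the paper's proof: the same partition of $P\cup\dom{\piw}\cup\dom{\pil}$ into cases for part~\ref{lem:loc-p'-1}, and the same ``locate the POC of the competing tile via index inequalities'' strategy for part~\ref{lem:loc-p'-2}. You do improve on the paper in one place: in part~\ref{lem:loc-p'-2} the paper splits the case $\vec{p}\in Y_{>i}$ into two sub-cases depending on whether $\vec{p}^{\,\prime}=\vec{x}_j$ or $\vec{p}^{\,\prime}\neq\vec{x}_j$, the former requiring a somewhat lengthy argument showing $\vec{\beta}'$ is $w$-correct and thus an internal node of $\mathcal{P}^{\vec{\beta}_{\vec{x}}}\upharpoonright\vec{y}$. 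By invoking Lemma~\ref{lem:cannot-terminate-at-SY} up front to conclude $\vec{p}^{\,\prime}\neq\vec{x}^{*}$, you eliminate this sub-case entirely, which is a genuine (if modest) streamlining. One small citation slip: in part~\ref{lem:loc-p'-1}, case $j<i$, forcing $\vec{y}_j\in\dom{\beta_{\vec{x}}}$ needs the chained ordering $\textmd{index}(\vec{y}_j)<\textmd{index}(\vec{x}_{j+1})<\cdots<\textmd{index}(\vec{x}_i)$, which comes from Lemma~\ref{lem:x-i-y-i-ordering}, not from Lemma~\ref{lem:x-i-y-i-ordering-r-1} alone (the latter only gives $\textmd{index}(\vec{x}_j)<\textmd{index}(\vec{y}_j)$). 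Since you invoke both lemmas in part~\ref{lem:loc-p'-2}, this appears to be a typo rather than a misunderstanding.
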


\begin{proof}
  Let $\beta = \res{\vec{\beta}}$ and $\beta' = \res{\vec{\beta}'}$.
  Let $i$ be the unique integer such that $1\leq i\leq r$ and $\vec{y}
  = \vec{y}_i$. Let $Y_{<i} =\{\vec{y}_j\in Y~|~j<i\}$ and $Y_{>i}
  =\{\vec{y}_j\in Y~|~j>i\}$.  Then $P =
  Y_{<i}~\cup~\{\vec{y}\}~\cup~Y_{>i}~\cup~(P\backslash Y)$.
  We prove the two parts of the lemma in turn.
  \begin{enumerate}
\item If $\vec{\beta}'$ is not competing in
  $\mathcal{M}_{\mathcal{T}}$, then $\left( \vec{p}^{\; \prime}, t'
  \right)$ is not a competing tile in $\beta$, which in turns implies that
  \begin{equation}
    \label{eqn:not-a-competing-tile}
    \forall \vec{p} \in Y\textmd{, at least one of the four conditions of
    Definition~\ref{def:competing-tile} is not satisfied by } \beta, \left( \vec{p}^{\; \prime}, t' \right)\textmd{, and } \vec{p}.
  \end{equation}
  
   \noindent Assume, for the sake of obtaining a contradiction, that
   $\vec{p}^{\; \prime} \in (P\cup\dom{\piw}\cup\dom{\pil})$. We now
   partition this set into five subsets and prove that a contradiction
   arises if $\vec{p}^{\;\prime}$ belongs to any one of these subsets.
 Since $\vec{p}^{\;
     \prime} \in (P\cup\dom{\piw}\cup\dom{\pil})$ and $\vec{y} \in (P\cap(
   \dom{\piw}\cup\dom{\pil}))$, we consider each one of the five
   possible cases in turn.
   \begin{itemize}
     
  \item Case $\vec{p}^{\;\prime} \in Y_{<i}$: Let
    $\vec{p}^{\;\prime}=\vec{y}_j$ for some integer $1 \leq j < i \leq
    r$. By Lemma~\ref{lem:w-correct-extended-to-w-correct-terminal},
    there exists an extension $\vec{\alpha}$ of $\vec{\beta}$ by some
    $\mathcal{T}$-assembly sequence such that $\vec{\alpha}$ results
    in $\alpha$. Since $r>1$, $\vec{x}$ is the starting point
    corresponding to $\vec{y}=\vec{y}_i$, and $j<i$, by
    Lemma~\ref{lem:x-i-y-i-ordering}, for all $\vec{p} \in Y_{<i}$,
    $\textmd{index}_{\vec{\alpha}}\left( \vec{p}\, \right) <
    \textmd{index}_{\vec{\alpha}}\left( \vec{x}\, \right)$. Therefore,
    for all $\vec{p} \in Y_{<i}$, $\vec{p}\in\dom{\beta_{\vec{x}}}
    \subseteq \dom{\beta}$ and thus
    $\vec{p}^{\;\prime}\in\dom{\beta}$, which implies that
    $\vec{p}^{\;\prime} \not \in \partial^{\mathcal{T}}\beta$, which
    in turn contradicts the fact that $\vec{\beta}' = \vec{\beta} +
    \left( \vec{p}^{\; \prime}, t' \right)$ is a
    $\mathcal{T}$-producing assembly sequence.
    
  \item Case $\vec{p}^{\;\prime} = \vec{y}$: In this case,
    $\vec{p}^{\;\prime}$ directly satisfies the first three conditions
    of Definition~\ref{def:competing-tile} applied to $\beta$, $\left(
    \vec{p}^{\; \prime}, t' \right)$, and $\vec{y}$, since
    $\vec{p}^{\;\prime} = \vec{y}$ implies that $\vec{p}^{\;\prime} =
    \piw[|\piw|]$, $\vec{y} \notin \dom{\beta}$, and
    $\vec{p}^{\;\prime} \in \partial^{\mathcal{T}}\beta$. Furthermore,
    condition~\ref{def:cp-5} of Definition~\ref{def:competing-paths}
    implies that $\vec{p}^{\;\prime}$ also satisfies condition~\ref{def:ct-4} (i.e., the fourth and
    last condition) of Definition~\ref{def:competing-tile} applied to
    $\beta$, $\left( \vec{p}^{\; \prime}, t' \right)$, and
    $\vec{y}$. However, the fact that $\vec{p}^{\;\prime}$ satisfies
    the four conditions of Definition~\ref{def:competing-tile} applied
    to $\beta$, $\left( \vec{p}^{\; \prime}, t' \right)$, and
    $\vec{y}$ contradicts the fact
    labeled~\ref{eqn:not-a-competing-tile} for $\vec{p} = \vec{y}$.
  \item Case $\vec{p}^{\;\prime} \in Y_{>i}$: Let
    $\vec{p}^{\;\prime}=\vec{y}_j$ for some integer $1\leq i< j \leq
    r$ and $\vec{x}_j$ be the corresponding starting point.
    Lemma~\ref{lem:competing-tile-4} implies $\vec{x}_j \in \dom{\beta'}$. Furthermore, by
    condition~\ref{def:snd-4} of
    Definition~\ref{def:seq-non-deterministic}, $\vec{x}_j
    \neq\vec{y}_j$. Thus $\vec{x}_j \in \dom{\beta}$.  Now, since
    $\vec{\beta}$ is $w$-correct and $\beta \notin
    \termasm{\mathcal{T}}$,
    Lemma~\ref{lem:w-correct-extended-to-w-correct-terminal} implies
    that there exists an extension $\vec{\alpha}$ of $\vec{\beta}$
    that results in $\alpha$.  If $\vec{p}_{\vec{\beta}}$ is the point
    at which $\vec{\beta}$ terminates and we let
    $k=\textmd{index}_{\vec{\alpha}}\left( \vec{p}_{\vec{\beta}}
    \right)$, then $\textmd{index}_{\vec{\alpha}}\left( \vec{x}_j
    \right)\leq k$. On the other hand, since $\vec{\beta}$ is an
    internal node of $\mathcal{P}^{\vec{\beta}_{\vec{x}}}
    \upharpoonright \vec{y}$, $\vec{y} \notin \dom{\beta}$ and thus
    $\textmd{index}_{\vec{\alpha}}\left( \vec{y}\, \right) > k$.  In
    conclusion, $\textmd{index}_{\vec{\alpha}}\left( \vec{x}_j\,
    \right) \leq k < \textmd{index}_{\vec{\alpha}}\left( \vec{y}\,
    \right) = \textmd{index}_{\vec{\alpha}}\left( \vec{y}_i\,
    \right)$. Now the facts that $r>1$, $i<j$, and
    $\textmd{index}_{\vec{\alpha}}\left( \vec{x}_j\, \right) <
    \textmd{index}_{\vec{\alpha}}\left( \vec{y}_i\, \right)$ together
    contradict the required linear ordering of tile placements at
    essential POCs and associated starting points that is implied by
    the combination of condition~\ref{def:snd-4} of
    Definition~\ref{def:seq-non-deterministic} and
    Lemma~\ref{lem:x-i-y-i-ordering}, namely the requirement that
    $\textmd{index}_{\vec{\alpha}}\left( \vec{x}_i\, \right) <
    \textmd{index}_{\vec{\alpha}}\left( \vec{y}_i\, \right) <
    \textmd{index}_{\vec{\alpha}}\left( \vec{x}_j\, \right) <
    \textmd{index}_{\vec{\alpha}}\left( \vec{y}_j\, \right)$.
    
  \item Case $\vec{p}^{\;\prime} \in (P\backslash Y)$: In this case,
    Lemma~\ref{lem:w-correct-no-inessential-poc} implies that
    $\vec{p}^{\;\prime} \notin \partial^{\mathcal{T}} \beta$,
    which contradicts the fact that $\vec{\beta}' = \vec{\beta} +
    \left( \vec{p}^{\; \prime}, t' \right)$ is a
    $\mathcal{T}$-producing assembly sequence.
    
  \item Case $\vec{p}^{\;\prime} \in
    \left((\dom{\piw}\cup\dom{\pil})\backslash\{\vec{y}\}\right)$: In
    this case, $\vec{p}^{\;\prime}$ directly satisfies the first and
    third conditions of Definition~\ref{def:competing-tile} applied to
    $\beta$, $\left( \vec{p}^{\; \prime}, t' \right)$, and
    $\vec{y}$. Furthermore, $\vec{p}^{\;\prime}$ satisfies
    condition~\ref{def:ct-2} of Definition~\ref{def:competing-tile}
    because $\vec{\beta}$ is an internal node of
    $\mathcal{P}^{\vec{\beta}_{\vec{x}}} \upharpoonright
    \vec{y}$. Finally, the facts that $\vec{\beta}^{\;\prime}$ is a child
    node of $\vec{\beta}$ in $\mathcal{M}_{\mathcal{T}}$ and
    $\vec{\beta}$ is equal to, or a descendant of,
    $\vec{\beta}_{\vec{x}}$ together imply that $\vec{p}^{\;\prime}$
    also satisfies condition~\ref{def:ct-4} (i.e., the fourth and last condition) of
    Definition~\ref{def:competing-tile} applied to $\beta$, $\left(
    \vec{p}^{\; \prime}, t' \right)$, and $\vec{y}$.  However, the
    fact that $\vec{p}^{\;\prime}$ satisfies the four conditions of
    Definition~\ref{def:competing-tile} applied to $\beta$, $\left(
    \vec{p}^{\; \prime}, t' \right)$, and $\vec{y}$ contradicts the
    fact labeled~\ref{eqn:not-a-competing-tile} for $\vec{p} =
    \vec{y}$.
  \end{itemize}
Since we obtain a contradiction in all cases, $\vec{p}^{\; \prime} \notin
(P\cup\dom{\piw}\cup\dom{\pil})$.

\item If $\vec{\beta}'$ is competing in $\mathcal{M}_{\mathcal{T}}$,
  then $\left( \vec{p}^{\; \prime}, t' \right)$ is a competing tile in
  $\beta$, which in turns implies that there exists $\vec{p} \in Y$
  such that all four conditions of Definition~\ref{def:competing-tile}
  are satisfied by $\beta$, $\left( \vec{p}^{\; \prime},t'\right)$,
  and $\vec{p}$. We now prove that $\vec{p}=\vec{y}$. Assume, for the
  sake of obtaining a contradiction, that $\vec{p}\neq\vec{y}$, i.e.,
  $\vec{p} \in Y\backslash\{\vec{y}\}$, which implies $\vec{p} \in
  \left(Y_{<i} \cup Y_{>i}\right)$. We consider each case in turn.

  \begin{itemize}
  \item Case $\vec{p} \in Y_{<i}$: Let $\vec{p}=\vec{y}_j$ for some
    integer $1 \leq j < i \leq r$. By
    Lemma~\ref{lem:w-correct-extended-to-w-correct-terminal}, there
    exists an extension $\vec{\alpha}$ of $\vec{\beta}$ by some
    $\mathcal{T}$-assembly sequence such that $\vec{\alpha}$ results
    in $\alpha$. Since $j<i$, $1<r$, and $\vec{x}$ is the starting
    point corresponding to $\vec{y} = \vec{y}_i$,
    Lemma~\ref{lem:x-i-y-i-ordering} and condition~\ref{def:snd-3} of
    Definition~\ref{def:seq-non-deterministic} together imply that
    $\textmd{index}_{\vec{\alpha}}\left( \vec{y}_j \right) =
    \textmd{index}_{\vec{\alpha}}\left( \vec{p}\, \right) <
    \textmd{index}_{\vec{\alpha}}\left( \vec{x}\, \right)$ and thus
    $\vec{y}_j\in\dom{\beta_{\vec{x}}}$. Since $\dom{\beta_{\vec{x}}} \subseteq
    \dom{\beta}$, we have $\vec{y}_j\in\dom{\beta}$, which contradicts
    condition~\ref{def:ct-2} of Definition~\ref{def:competing-tile}
    applied to $\beta$, $\left( \vec{p}^{\; \prime}, t' \right)$, and
    $\vec{p}=\vec{y}_j$, and thus the fact that $\left( \vec{p}^{\;
      \prime}, t' \right)$ is a competing tile in $\beta$.

  \item Case $\vec{p} \in Y_{>i}$: Let $\vec{p}=\vec{y}_j$ for some
    integer $1 \leq i < j \leq r$, $\vec{x}_j$ be the corresponding
    starting point, and $\pi_j$ and $\pi'_j$ be the corresponding winning
    and losing paths, respectively. By condition~\ref{def:ct-1} of
    Definition~\ref{def:competing-tile}, $\vec{p}^{\;\prime} \in
    (\dom{\pi_j}\cup\dom{\pi'_j})$. Since either $\vec{p}^{\;\prime}=\vec{x}_j$ or  $\vec{p}^{\;\prime}\neq\vec{x}_j$, we consider both sub-cases in turn.
    \begin{itemize}
    \item Sub-case $\vec{p}^{\;\prime}=\vec{x}_j$:
      Since $S_Y
      \subseteq S_P$ and condition~\ref{def:snd-4} of
      Definition~\ref{def:seq-non-deterministic} together imply $S_Y
      \cap P = \emptyset$, we have $\vec{p}^{\;\prime} \notin P$.
      Since $\alpha$ and $\beta$ agree and $\vec{p}^{\;\prime} \notin
      P$, 
      the contrapositive of condition~\ref{def:dd-ad-pt-1} of Definition~\ref{def:dd-ad-pt}
      implies that
      $t'=\beta'(\vec{p}^{\;\prime}) =
      \alpha(\vec{p}^{\;\prime})$. 
      Therefore, $\beta'$ attaches, at
      each point in its domain, the same tile as $\alpha$
      does. Furthermore, since $\vec{p}^{\;\prime} \notin P$,
      $\dom{\beta'}\cap (P\backslash Y)=\emptyset$, $\beta'$ must be
      $w$-correct, by Definition~\ref{def:correct-assembly}. Then,
      Lemma~\ref{lem:extend-to-alpha} implies that $\vec{\beta}'$ is a
      child node of $\vec{\beta}$ in $\mathcal{P}$. Since
      $\vec{\beta}'$ is a node in $\mathcal{P}$ that extends
      $\vec{\beta}_{\vec{x}}$ (whose domain does not contain $\vec{y}\,$),
      and $\vec{p}^{\;\prime} \neq \vec{y}$, $\vec{\beta}'$ must be an
      internal node of $\mathcal{P}^{\vec{\beta}_{\vec{x}}}
      \upharpoonright \vec{y}$. This further implies that
      $\beta' \notin \termasm{\mathcal{T}}$.  Then, by
      Lemma~\ref{lem:w-correct-extended-to-w-correct-terminal}, there
      exists an extension $\vec{\alpha}$ of $\vec{\beta}'$ by some
      $\mathcal{T}$-assembly sequence such that $\vec{\alpha}$ results
      in $\alpha$. Finally, the facts that $i<j$, $1<r$,
      $\vec{p}^{\;\prime}=\vec{x}_j$ is the starting point
      corresponding to $\vec{y}_j$, and $\vec{y}_i=\vec{y}\notin \dom{\beta'}$,
      together imply that $\textmd{index}_{\vec{\alpha}}\left(
      \vec{x}_j \right) < \textmd{index}_{\vec{\alpha}}\left(
      \vec{y}_i \right)$. This inequality contradicts the required
      linear ordering of tile placements at essential POCs and
      associated starting points that is implied by the combination of
      condition~\ref{def:snd-3} of
      Definition~\ref{def:seq-non-deterministic} and
      Lemma~\ref{lem:x-i-y-i-ordering}.  
      
    \item Sub-case $\vec{p}^{\;\prime} \in
      \left((\dom{\pi_j}\cup\dom{\pi'_j})\backslash\{\vec{x}_j\}\right)$:
      Lemma~\ref{lem:competing-tile-4} 
      implies $\vec{x}_j \in \dom{\beta'}$.
      In fact, since $\vec{p}^{\;\prime}\neq
      \vec{x}_j$, we must have $\vec{x}_j \in \dom{\beta}$. Now, since
      $\vec{\beta}$ is $w$-correct and $\beta \notin
      \termasm{\mathcal{T}}$,
      Lemma~\ref{lem:w-correct-extended-to-w-correct-terminal} implies
      that there exists an extension $\vec{\alpha}$ of $\vec{\beta}$
      that results in $\alpha$.  If $\vec{p}_{\vec{\beta}}$ is the
      point at which $\vec{\beta}$ terminates and we let
      $k=\textmd{index}_{\vec{\alpha}}\left( \vec{p}_{\vec{\beta}}
      \right)$, then $\textmd{index}_{\vec{\alpha}}\left( \vec{x}_j
      \right)\leq k$. On the other hand, since $\vec{\beta}$ is an
      internal node of $\mathcal{P}^{\vec{\beta}_{\vec{x}}}
      \upharpoonright \vec{y}$, $\vec{y} \notin \dom{\beta}$ and thus
      $\textmd{index}_{\vec{\alpha}}\left( \vec{y}\, \right) > k$.  In
      conclusion, $\textmd{index}_{\vec{\alpha}}\left( \vec{x}_j\,
      \right) \leq k < \textmd{index}_{\vec{\alpha}}\left( \vec{y}\,
      \right) = \textmd{index}_{\vec{\alpha}}\left( \vec{y}_i\,
      \right)$. Now the facts that $r>1$, $i<j$, and
      $\textmd{index}_{\vec{\alpha}}\left( \vec{x}_j\, \right) <
      \textmd{index}_{\vec{\alpha}}\left( \vec{y}_i\, \right)$
      together contradict the required linear ordering of tile
      placements at essential POCs and associated starting points that
      is implied by the combination of condition~\ref{def:snd-4} of
      Definition~\ref{def:seq-non-deterministic} and
      Lemma~\ref{lem:x-i-y-i-ordering}.
    \end{itemize}
  \end{itemize}
  Since we obtain a contradiction in both cases, we have $\vec{p} = \vec{y}$.
  Finally, by condition~\ref{def:ct-1} of
  Definition~\ref{def:competing-tile} applied to $\beta$, $\left(
  \vec{p}^{\; \prime}, t' \right)$, and $\vec{y}$, $\vec{p}^{\;
    \prime} \in (\dom{\piw}\cup\dom{\pil})$.
  \end{enumerate}
\end{proof}

In the following lemma, we define $\mathcal{Q}$ to be the unique subtree of $\mathcal{P}^{\vec{\beta}_{\vec{x}}} \upharpoonright \vec{y}$ such that embedded in all of its leaf nodes is the same winning assembly sequence $\mathcal{C}$, which is assumed to not be rigged.
Then, the lemma concludes that every internal node of $\mathcal{Q}$ has at most one child in $\mathcal{Q}$ that is competing in $\mathcal{M}_{\mathcal{T}}$ and the rest of its children is the set of its children in $\mathcal{M}_{\mathcal{T}}$ that are non-competing. 
\begin{lemma} % lemma 29
\label{lem:number-of-competing-and-non-competing-children}
Let $\vec{\gamma}$ be a winning assembly sequence for $\mathcal{C}$.
If $\mathcal{C}$ is not rigged by $\vec{\beta}_{\vec{x}}$ and $\mathcal{Q}$ is the unique subtree of $\mathcal{P}^{\vec{\beta}_{\vec{x}}} \upharpoonright \vec{y}$ whose set of leaf nodes is equal to $B_{\vec{\gamma}}$, then, for every internal node $\vec{\beta}$ of $\mathcal{Q}$:
\begin{enumerate}
\item \label{lem:number-of-competing-and-non-competing-children-1} $\vec{\beta}$ has at most one child in $\mathcal{Q}$ that is competing in $\mathcal{M}_{\mathcal{T}}$, and
	\item \label{lem:number-of-competing-and-non-competing-children-2} every non-competing child of $\vec{\beta}$ in $\mathcal{M}_{\mathcal{T}}$ is a child of $\vec{\beta}$ in $\mathcal{Q}$.
\end{enumerate}
\end{lemma}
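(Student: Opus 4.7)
The plan is to exploit the characterization given by Lemma~\ref{lem:extend-by-winning-and-only-winning}(\ref{lem:extend-by-winning-and-only-winning-2}) that each leaf of $B_{\vec{\gamma}}$ embeds $\vec{\gamma}$ as its unique longest winning assembly sequence for $\mathcal{C}$, together with the clean dichotomy supplied by Lemma~\ref{lem:location-of-p'}: competing children always place tiles in $\dom{\piw}\cup\dom{\pil}$, whereas non-competing children always place tiles outside $P\cup\dom{\piw}\cup\dom{\pil}$. These two ingredients imply that competing and non-competing children interact with the embedded $\vec{\gamma}$ in essentially disjoint ways, which drives both parts of the conclusion.

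For Part~\ref{lem:number-of-competing-and-non-competing-children-1}, I fix an internal node $\vec{\beta}$ of $\mathcal{Q}$, pick any descendant $\vec{\beta}_{\vec{y}}\in B_{\vec{\gamma}}$, and observe via Lemma~\ref{lem:location-of-p'}(\ref{lem:loc-p'-2}) that every competing tile placed along the path from $\vec{\beta}_{\vec{x}}$ to $\vec{\beta}_{\vec{y}}$ lies in $\dom{\piw}\cup\dom{\pil}$. Read in order, these competing tiles form a winning assembly sequence for $\mathcal{C}$ embedded in $\vec{\beta}_{\vec{y}}$, which Lemma~\ref{lem:extend-by-winning-and-only-winning}(\ref{lem:extend-by-winning-and-only-winning-2}) identifies as $\vec{\gamma}$ itself. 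Hence the competing tiles placed between $\vec{\beta}_{\vec{x}}$ and $\vec{\beta}$ form a prefix $\vec{\gamma}[1\ldots k]$ of $\vec{\gamma}$, so any competing child of $\vec{\beta}$ lying in $\mathcal{Q}$ must place the specific tile $\gamma_{k+1}\backslash\gamma_k$. Since Lemma~\ref{lem:number-of-competing-nodes} (and its proof) gives exactly two competing children of $\vec{\beta}$ in $\mathcal{M}_{\mathcal{T}}$, placing distinct tiles at the next available positions along $\piw$ and $\pil$, respectively, at most one of them can match $\gamma_{k+1}\backslash\gamma_k$.

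For Part~\ref{lem:number-of-competing-and-non-competing-children-2}, let $\vec{\beta}'=\vec{\beta}+(\vec{p}^{\,\prime},t')$ be a non-competing child of $\vec{\beta}$ in $\mathcal{M}_{\mathcal{T}}$. Lemma~\ref{lem:location-of-p'}(\ref{lem:loc-p'-1}) gives $\vec{p}^{\,\prime}\notin P\cup\dom{\piw}\cup\dom{\pil}$, and in particular $\vec{p}^{\,\prime}\notin P$ yields that $\vec{\beta}'$ is $w$-correct and therefore a node of $\mathcal{P}$. To show $\vec{\beta}'\in\mathcal{Q}$, I pick a descendant $\vec{\beta}_{\vec{y}}\in B_{\vec{\gamma}}$ of $\vec{\beta}$ and build an extension of $\vec{\beta}'$ by reusing the sequence of tile attachment steps from $\vec{\beta}$ to $\vec{\beta}_{\vec{y}}$: if that sequence already contains a step placing a tile at $\vec{p}^{\,\prime}$, delete it (it has effectively been pulled forward to form $\vec{\beta}'$); otherwise, leave the sequence unchanged. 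Because the aTAM frontier grows monotonically and $\vec{p}^{\,\prime}$ sits outside $\dom{\piw}\cup\dom{\pil}$ and outside every essential POC, each remaining step still attaches with the same binding and without collision, so the result is a $\mathcal{T}$-producing assembly sequence terminating at $\vec{y}$. Its set of competing tiles coincides with that of $\vec{\beta}_{\vec{y}}$, so Lemma~\ref{lem:extend-by-winning-and-only-winning}(\ref{lem:extend-by-winning-and-only-winning-2}) again identifies the unique longest embedded winning assembly sequence as $\vec{\gamma}$, placing the new leaf in $B_{\vec{\gamma}}$.

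The hardest part will be the bookkeeping in Part~\ref{lem:number-of-competing-and-non-competing-children-2} to verify both (a) that the reordered or unchanged extension is a valid $\mathcal{T}$-producing assembly sequence with the same competing skeleton, and (b) that the resulting leaf lives in $\mathcal{P}^{\vec{\beta}_{\vec{x}}}\upharpoonright\vec{y}$ rather than merely in $\mathcal{M}_{\mathcal{T}}$; both reduce to the disjointness of $\vec{p}^{\,\prime}$ from the sensitive parts of the assembly and the monotonicity of the frontier, but the index-shifting needs to be spelled out carefully. Part~\ref{lem:number-of-competing-and-non-competing-children-1} is comparatively direct once the prefix structure of $\vec{\gamma}$ along root-to-leaf paths of $\mathcal{Q}$ is pinned down.
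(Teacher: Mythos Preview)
Your proposal is correct. For Part~\ref{lem:number-of-competing-and-non-competing-children-1}, your argument and the paper's are the same idea viewed from two ends: you observe directly that the competing tiles along any root-to-leaf path of $\mathcal{Q}$ spell out $\vec{\gamma}$ (this is exactly the content of Lemma~\ref{lem:extend-by-winning-and-only-winning}(\ref{lem:extend-by-winning-and-only-winning-2})), so the competing step immediately after $\vec{\beta}$ is forced to be $\gamma_{k+1}\backslash\gamma_k$; the paper instead assumes two distinct competing children in $\mathcal{Q}$ and derives a contradiction from the monotonicity constraint on the embedding functions of $\vec{\gamma}$ into two descendant leaves.

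For Part~\ref{lem:number-of-competing-and-non-competing-children-2} you take a genuinely different route. The paper invokes conclusion~\ref{lem:extend-by-winning-and-only-winning-1} of Lemma~\ref{lem:extend-by-winning-and-only-winning} to build the canonical extension $\vec{\varepsilon}$ of $\vec{\beta}_{\vec{x}}$ by $\vec{\gamma}$, then appends to the non-competing child $\vec{\beta}''$ only the remaining $\vec{\gamma}$-steps of $\vec{\varepsilon}$; most of its effort goes into proving that none of those steps is blocked by tiles already in $\res{\vec{\beta}''}$, which is where Lemmas~\ref{lem:not-rigged} and~\ref{lem:longest-simple-finite-path-beta-m} and an auxiliary competing child $\vec{\beta}'$ enter. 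Your idea---pick an existing leaf $\vec{\beta}_{\vec{y}}\in B_{\vec{\gamma}}$ below $\vec{\beta}$ and splice the single non-competing step $(\vec{p}^{\,\prime},t')$ into the path from $\vec{\beta}$ to $\vec{\beta}_{\vec{y}}$, deleting it if it already occurs---bypasses that analysis entirely: temperature-$1$ frontier monotonicity plus $\vec{p}^{\,\prime}\notin\dom{\piw}\cup\dom{\pil}$ immediately gives validity of the reordered sequence, and since no tile in $\dom{\piw}\cup\dom{\pil}$ is touched, the competing skeleton (and hence, via Lemma~\ref{lem:extend-by-winning-and-only-winning}(\ref{lem:extend-by-winning-and-only-winning-2}), the embedded $\vec{\gamma}$) is preserved verbatim. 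The paper's construction produces a shorter witness leaf, but for the statement at hand your argument is both correct and cleaner.
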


Since every node of $\mathcal{Q}$ is either competing or not, Lemma~\ref{lem:number-of-competing-and-non-competing-children} is saying that every internal node of $\mathcal{Q}$ is nearly normalized.
We will later construct a modified version $\mathcal{Q}'$ of $\mathcal{Q}$ where each internal node $\vec{\beta}$ of $\mathcal{Q}$ has its children partitioned into competing and non-competing children.
To that end, we make all competing children of $\vec{\beta}$ children of a new node $\Diamond$, and all of its non-competing nodes children of another new node $\bigcirc$.
Then, we make $\Diamond$ and $\bigcirc$ the only children of $\vec{\beta}$, each with probability ``$\frac{1}{2}$'', which means $\vec{\beta}$ is normalized in $\mathcal{Q}'$.
We use conclusion~\ref{lem:number-of-competing-and-non-competing-children-2} of Lemma~\ref{lem:number-of-competing-and-non-competing-children} to conclude that if $\bigcirc$ has any children, then it is normalized.
Moreover, if $\Diamond$ has any children, then Lemma~\ref{lem:number-of-competing-nodes} says that the probability on the edge from $\vec{\beta}$ to $\Diamond$ will be ``$\frac{1}{2}$''.
Note that the ``1'' in the numerator comes from conclusion~\ref{lem:number-of-competing-and-non-competing-children-1} of Lemma~\ref{lem:number-of-competing-and-non-competing-children}, and the ``2'' is justified by combining the fact that $\mathcal{C}$ is not rigged with an application of Lemma~\ref{lem:number-of-competing-nodes}.
This conversion process is defined formally in Definition~\ref{def:split-nodes}.

\begin{proofsketch}
  \begin{enumerate}
  \item To prove this part of the lemma, we start by assuming
    $\vec{\beta}$ has at least one child $\vec{\beta}'$ in
    $\mathcal{Q}$ that is competing in $\mathcal{M}_{\mathcal{T}}$ and
    attaches a tile at, say, $\vec{p}^{\;\prime}$.  Then, we prove by
    contradiction that $\vec{\beta}$ cannot have another child in
    $\mathcal{Q}$, say $\vec{\beta}''$, that is also competing in
    $\mathcal{M}_{\mathcal{T}}$. Assuming $\vec{\beta}''$ does exist
    and attaches a tile at, say, $\vec{p}^{\;\prime\prime}$, we first
    prove that $\vec{p}^{\;\prime} \neq \vec{p}^{\;\prime\prime}$ and
    then that $\vec{\beta}''$ could not be (the prefix of) a leaf node
    of $\mathcal{Q}$ such that $\vec{\gamma}$ is embedded in it.

  \item To prove this part of the lemma, we use
    conclusion~\ref{lem:extend-by-winning-and-only-winning-1} of
    Lemma~\ref{lem:extend-by-winning-and-only-winning} to conclude
    that $\vec{\beta}_{\vec{x}}$ can be extended with all (but the
    first) attachment steps in $\vec{\gamma}$ to yield
    $\vec{\varepsilon}$. Then we prove that:
    \begin{itemize}
    \item no tile attachment step in (a suffix of) $\vec{\varepsilon}$
      can be blocked by any tile attachment step in $\vec{\beta}'$,
    \item any child node $\vec{\beta}''$ of
      $\vec{\beta}$ that is not competing in $\mathcal{M}_{\mathcal{T}}$ can be extended
      with a suffix of $\vec{\varepsilon}$ to yield  $\vec{\varepsilon}^{\;\prime\prime}$,
    \item $\vec{\varepsilon}^{\;\prime\prime}$ is a
      $\mathcal{T}$-producing assembly sequence (see parts 2a.\ through
      2e.\ of the proof), and
    \item $\vec{\varepsilon}^{\;\prime\prime}$ is a leaf node of
      $\mathcal{Q}$, thereby establishing that $\vec{\beta}''$ is also
      in $\mathcal{Q}$.
    \end{itemize}
  \end{enumerate}\vspace*{-7mm}
\end{proofsketch}

\begin{proof}
For some $m \in \mathbb{Z}^+$, let $\vec{\beta} = \left( \beta_i \mid 1 \leq i \leq m \right)$ such that $\vec{y} \not \in \dom{\beta_m}$.
Recall that $\piw$ and $\pil$ are the two competing paths associated with competition $\mathcal{C}$. Let $\gamma = \res{\vec{\gamma}}$.
We prove the two conclusions of the lemma as follows.
\begin{enumerate}
	\item Assume $\vec{\beta}$ has at least one child in $\mathcal{Q}$ that is competing in $\mathcal{M}_{\mathcal{T}}$.
	Let $\vec{\beta}'$ be this child node of $\vec{\beta}$ such that, for some $\vec{p}^{\; \prime} \in \Z^2$ and $t' \in T$, $\vec{\beta}' = \vec{\beta} + \left(\vec{p}^{\; \prime}, t' \right)$ is a competing tile in $\beta_m$.
Moreover, since $\left( \vec{p}^{\; \prime}, t' \right)$ is a competing tile in $\beta_m$, $\vec{p}^{\; \prime} \in \partial^{\mathcal{T}}_{t'} \beta_m$, which means $\vec{p}^{\; \prime} \not \in \dom{\beta_m}$.
We now show that $\vec{\beta}$ does not have another child node in $\mathcal{Q}$ that is competing in $\mathcal{M}_{\mathcal{T}}$.
Assume, for the sake of obtaining a contradiction, that $\vec{\beta}$ has a second child node $\vec{\beta}''$ in $\mathcal{Q}$ that is competing in $\mathcal{M}_{\mathcal{T}}$.
Let $\vec{\beta}'' = \vec{\beta} +  \left( \vec{p}^{\; \prime\prime}, t'' \right)$,
where $\left( \vec{p}^{\; \prime\prime}, t'' \right)$ is a competing tile in $\beta_m$ that is distinct from $\left( \vec{p}^{\; \prime}, t' \right)$.
Let  $\beta' = \res{\vec{\beta}'}$ and $\beta'' = \res{\vec{\beta}''}$.
Note that, if  $\vec{p}^{\; \prime} = \vec{p}^{\; \prime\prime}=\vec{p}$, then  $\beta'$ and  $\beta''$ would place distinct tile types at $\vec{p}$, which would contradict the fact that  $\vec{\beta}'$ and $\vec{\beta}''$ are both nodes in $\mathcal{Q}$ that must place the same tile type at any  point in $\dom{\gamma}$, to which $\vec{p}$ belongs.
Going forward, we assume $\vec{p}^{\; \prime} \neq \vec{p}^{\; \prime\prime}$.
  We now prove that  $\vec{\beta}''$ cannot exist. Since $\vec{\beta}'$ is a node of $\mathcal{Q}$, it is (or is a prefix of) some leaf node $\vec{\varepsilon}^{\; \prime}$ of $\mathcal{Q}$ such that $\vec{\gamma}$ is embedded in $\vec{\varepsilon}^{\; \prime}$.
Similarly $\vec{\beta}''$ is (or is a prefix of) some leaf node $\vec{\varepsilon}^{\; \prime\prime}$ of $\mathcal{Q}$ such that $\vec{\gamma}$ is embedded in $\vec{\varepsilon}^{\; \prime\prime}$.
Note that if $\vec{\beta}'$ is a leaf node, then $\vec{\beta}' = \vec{\varepsilon}^{\; \prime}$. 
Similarly, if $\vec{\beta}''$ is a leaf node, then $\vec{\beta}'' = \vec{\varepsilon}^{\; \prime\prime}$.
However, it is impossible for both $\vec{\beta}'$ and $\vec{\beta}''$ to be leaf nodes, because we previously ruled out the possibility of $\vec{p}^{\; \prime} = \vec{p}^{\; \prime\prime} = \vec{y}$. 
Nevertheless, there exist embedding functions $f'$ of $\vec{\gamma}$ in $\vec{\varepsilon}^{\; \prime}$ and $f''$ of $\vec{\gamma}$ in $\vec{\varepsilon}^{\; \prime\prime}$.
Let $j \in \left\{ 1,\ldots, \left| \vec{\gamma} \right| \right\}$ be such that $f'(j) = m+1$.
In other words, $j$ is the index in  $\vec{\gamma}$ of the assembly that results from attaching $\left(\vec{p}^{\; \prime},t'\right)$, since $m+1$ is the index in $\vec{\varepsilon}^{\; \prime}$ of the assembly $\beta'$ that results from attaching $\left(\vec{p}^{\; \prime},t'\right)$ to $\beta_m$.
Since $\vec{\beta}$ is a prefix of both $\vec{\varepsilon}^{\; \prime}$ and $\vec{\varepsilon}^{\; \prime\prime}$, we have $f'(i) = f''(i)$ for all $1 \leq i < j$.
However, $\vec{p}^{\; \prime} \ne \vec{p}^{\; \prime\prime}$, which we already established, implies that $f''(j) > m+1$ and for some $k > j$, $f''\left(k\right) = m+1$.
This means $\vec{\gamma}$ cannot be embedded in $\vec{\varepsilon}^{\; \prime\prime}$ because, even though  $\vec{\gamma}$ is the unique longest winning assembly sequence for $\mathcal{C}$ embedded in every leaf node of $\mathcal{Q}$, we have both $j < k$ and  $f''(j) > m +1 = f''(k)$. In other words, the relative order of the tile attachments of $\vec{\gamma}$ is not preserved in $\vec{\varepsilon}\;''$, which contradicts condition~(4) in the definition of $f''$ as an embedding function.
Therefore, $\vec{\beta}''$ cannot exist and $\vec{\beta}$ can only have one child in $\mathcal{Q}$ that is competing in $\mathcal{M}_{\mathcal{T}}$.
%

%%%%%%%%%%%%%%%%%%%%% part 2 of the proof of Lemma 29 ************************
	
	\item Assume  $\vec{\beta}_{\vec{x}} = \left( \beta_i \mid 1 \leq i \leq k \right)$ for some $k \in \mathbb{Z}^+$.
By conclusion~\ref{lem:extend-by-winning-and-only-winning-1} of Lemma~\ref{lem:extend-by-winning-and-only-winning}, there exists a $\mathcal{T}$-producing assembly sequence $\vec{\varepsilon}$ such that $\varepsilon = \res{\vec{\varepsilon}\,}$, $\vec{\varepsilon}$ is a $w$-correct extension of $\vec{\beta}_{\vec{x}}$ that terminates at $\vec{y}$, and $\vec{\gamma}$ is the unique longest winning assembly sequence for $\mathcal{C}$ embedded in $\vec{\varepsilon}$.
In particular, if $\vec{\gamma} = \left( \gamma_i \mid 1 \leq i \leq g \right)$ for some $g \in \Z^+\backslash\{1\}$, then 
$$
\vec{\varepsilon} = \left( \underbrace{\varepsilon_1, \varepsilon_2, \ldots, \varepsilon_k}_{\vec{\beta}_{\vec{x}}}, \underbrace{\varepsilon_{k+1}, \ldots \varepsilon_{k+g-1}}_{\substack{\textmd{same tile attachment}\\ \textmd{steps as in\ }\vec{\gamma}[2\ldots g]}} \right),
$$
where, for all integers $1 \leq i \leq k$, $\varepsilon_i = \beta_i$, and for all integers $k+1 \leq i \leq k+g-1$, $\varepsilon_i = \varepsilon_{i-1} +\left( \gamma_{i-k+1} \backslash \gamma_{i-k} \right)$.
By definition of $\mathcal{Q}$, $\vec{\varepsilon}$ is one of its leaf nodes.
Recall that $\vec{\beta}$ is an internal node of $\mathcal{Q}$, and thus can be written, for some integer $l \geq 0$, as
$$
\vec{\beta} = \left( \underbrace{\beta_1, \beta_2, \ldots, \beta_k}_{\vec{\beta}_{\vec{x}}}, \ldots, \beta_{k + l} \right).
$$
Note that $\vec{\beta}_{\vec{x}}$ could be equal to $\vec{\beta}$, in which case $l=0$.

We now show, in several steps outlined in the proof sketch above, that
every non-competing child of $\vec{\beta}$ in
$\mathcal{M}_{\mathcal{T}}$ is a child of $\vec{\beta}$ in
$\mathcal{Q}$.

\hspace*{5mm}$\bullet$\hspace*{1mm} By
Lemma~\ref{lem:number-of-competing-nodes}, $\vec{\beta}$ has a child
$\vec{\beta}'$ in $\mathcal{M}_{\mathcal{T}}$ that is $w$-correct and
competing in $\mathcal{M}_{\mathcal{T}}$.  The first step is to prove
that no tile attachment step in (a suffix of) $\vec{\varepsilon}$ can be blocked by
any tile attachment step in $\vec{\beta}'$.
Since  $\vec{\beta}'$ is a child of  $\vec{\beta}$ that is competing in  $\mathcal{M}_{\mathcal{T}}$, $\vec{\beta}'$ terminates at some
$\vec{p}^{\; \prime} \in \dom{\gamma}$. Therefore, there exists some integer $j$ such that $k+1 \leq j \leq k+g-1$ and $\left(\vec{p}^{\; \prime}, t'\right) = \varepsilon_{j} \backslash \varepsilon_{j-1}$, and  we have:
$$
\vec{\beta}' = \left( \underbrace{\beta_1, \beta_2, \ldots, \beta_{k+l}}_{\vec{\beta}},  \underbrace{\beta_{k+l}+\left(\vec{p}^{\; \prime}, t'\right)}_{\beta'_{k+l+1}}\right).
$$
Let $\beta'=\res{\vec{\beta'}} =\beta'_{k+l+1}$.
Note that the last tile attachment step in $\vec{\beta}'$ places a tile at the same point $\vec{p}^{\; \prime}$ along either $\piw$ or $\pil$ where $\varepsilon_j$ also places a tile.
Since $\dom{\varepsilon} \backslash\dom{\varepsilon_j}$ is the set of points at which the suffix $(\varepsilon_{j+1}, \ldots, \varepsilon_{k+g-1})$  of $\vec{\varepsilon}$ places tiles, establishing that $\dom{\beta'} \cap \left(\dom{\varepsilon} \backslash\dom{\varepsilon_j}\right)$ is empty is sufficient to complete this first step.

If $\vec{p}^{\; \prime} = \vec{y}$, then $j = k+g-1$ and it follows that $\dom{\beta'} \cap \left(\dom{\varepsilon} \backslash\dom{\varepsilon_j}\right) = \dom{\beta'} \cap \emptyset = \emptyset$.
Going forward, assume $\vec{p}^{\; \prime} \ne \vec{y}$, making $\vec{\beta}'$ a node of $\mathcal{M}_{\mathcal{T}}$ that extends $\vec{\beta}_{\vec{x}}$ but does not terminate at $\vec{y}$. 
Let $m_w,m_l$ be integers such that:
\begin{enumerate}
	\item $1 \leq m_w < \left| \piw \right|$ such that $\piw[1\ldots m_w]$ is a simple path in $G^{\textmd{b}}_{\beta'}$ but $\piw[1\ldots m_w+1]$ is not, and 
	\item $1 \leq m_l < \left| \pil \right|$ such that $\pil[1\ldots m_l]$ is a simple path in $G^{\textmd{b}}_{\beta'}$ but $\pil[1\ldots m_l+1]$ is not.
\end{enumerate}
Note that either $m_w > 1$ or $m_l > 1$, and that $\vec{p}^{\; \prime} \in \left\{ \piw[m_w], \pil[m_l] \right\}$.
Suppose, for the sake of obtaining a contradiction, that $\dom{\beta'} \cap \left(\dom{\varepsilon} \backslash\dom{\varepsilon_j}\right) \ne \emptyset$.
Let $\vec{p}^{\; \prime\prime} \in \left(\dom{\varepsilon} \backslash\dom{\varepsilon_j}\right)$.
Such an element exists because $\vec{p}^{\; \prime} \ne \vec{y}$.
Then $\vec{p}^{\; \prime\prime} \in \left( \dom{\piw\left[m_w+1\ldots \left| \piw\right|\right]} \cup \dom{\pil\left[ m_l+1\ldots \left| \pil \right| \right]} \right)$, which implies $\vec{p}^{\; \prime\prime} \ne \vec{x}$.
Since we assume that $\mathcal{C}$ is not rigged by $\vec{\beta}_{\vec{x}}$, Lemma~\ref{lem:not-rigged} says that $\dom{\beta_k} \cap \left( \dom{\piw} \cup \dom{\pil} \right) = \left\{ \vec{x} \right\}$, and thus $\vec{p}^{\; \prime\prime} \notin \dom{\beta_k}$.
Finally, by Lemma~\ref{lem:longest-simple-finite-path-beta-m}, it follows that $\vec{p}^{\; \prime\prime} \not \in \dom{\beta}'$.
Therefore, $\dom{\beta'} \cap \left(\dom{\varepsilon} \backslash\dom{\varepsilon_j}\right) = \emptyset$.
In other words, recalling that $\beta' = \beta'_{k+l+1}$:
\begin{equation*}
  \label{eqn:number-of-competing-and-non-competing-children}
    \textmd{for all integers\ }  0 < i \leq k + g - j - 1,\ (\dom{\varepsilon_{i+j}}\backslash \dom{\varepsilon_j})  \cap \dom{\beta'_{k+l+1}} = \emptyset. 
\end{equation*}
From this equation, together with the fact that
$(\dom{\varepsilon_{i+j}}\backslash \dom{\varepsilon_{i+j-1}}) \subseteq (\dom{\varepsilon_{i+j}}\backslash \dom{\varepsilon_j})$, it immediately follows that:
\begin{equation}
  \label{eqn:number-of-competing-and-non-competing-children'}
    \textmd{for all integers\ }  0 < i \leq k + g - j - 1,\ (\dom{\varepsilon_{i+j}}\backslash \dom{\varepsilon_{i+j-1}})  \cap \dom{\beta'_{k+l+1}} = \emptyset. 
\end{equation}

\hspace*{5mm}$\bullet$\hspace*{1mm}
In this second step, we prove that any child node
$\vec{\beta}''$ of $\vec{\beta}$ that is not competing in
$\mathcal{M}_{\mathcal{T}}$ can be extended with a suffix of
$\vec{\varepsilon}$ to yield $\vec{\varepsilon}^{\;\prime\prime}$.
Assume $\vec{\beta}$ has a non-competing child node $\vec{\beta}''$ in $\mathcal{M}_{\mathcal{T}}$, say
$$
\vec{\beta}'' = \left( \underbrace{\beta_1, \beta_2, \ldots, \beta_{k+l}}_{\vec{\beta}},  \underbrace{\beta_{k+l} + \left( \vec{p}^{\; \prime\prime}, t'' \right)}_{\beta''_{k+l+1}} \right),
$$
where $\left( \vec{p}^{\; \prime\prime}, t'' \right)$ is a non-competing tile in $\beta_{k+l}=\res{\vec{\beta}}=\beta$.
Let $\beta'' = \res{\vec{\beta}''} =  \beta''_{k+l+1} = \beta_{k+l} + \left(\vec{p}^{\; \prime\prime}, t''\right)$.
We first show that $\vec{\beta}''$ is $w$-correct.
First, since $\res{\vec{\beta}}$ is $w$-correct, for all $\vec{p} \in \dom{\beta_{k+l}} \cap Y$, $\beta_{k+l}\left(\vec{p}\,\right) = w\left(\vec{p}\,\right)$.
Second, applying part~\ref{lem:loc-p'-1} of Lemma~\ref{lem:location-of-p'} with $\vec{\beta}'= \vec{\beta}''$ yields $\vec{p}^{\; \prime\prime} \notin P$ and thus $\vec{p}^{\; \prime\prime} \notin Y$. Therefore,  $\dom{\beta_{k+l}} \cap Y = \dom{\left(\beta_{k+l} + \left(\vec{p}^{\; \prime\prime}, t''\right)\right)}  \cap Y=\dom{\beta''_{k+l+1}} \cap Y$. As a result, for all $\vec{p} \in \dom{\beta''_{k+l+1}} \cap Y$, $\beta''_{k+l+1}\left(\vec{p}\,\right) = w\left(\vec{p}\,\right)$. In other words, $\beta''_{k+l+1}$ satisfies condition~\ref{def:correct-assembly-1} of  Definition~\ref{def:correct-assembly}. Since $\vec{p}^{\; \prime\prime} \notin P$ implies $\vec{p}^{\; \prime\prime} \notin (P\backslash Y)$ and thus that $\beta''_{k+l+1}=\left(\beta_{k+l} + \left(\vec{p}^{\; \prime\prime}, t''\right)\right)$ satisfies the second condition of  Definition~\ref{def:correct-assembly}, it follows that $\vec{\beta}''$ is $w$-correct.

To complete this step, it suffices to show that there exists an extension of $\vec{\beta}''$ by a suffix of $\vec{\varepsilon}$.
By part~\ref{lem:loc-p'-1} of Lemma~\ref{lem:location-of-p'},  $\vec{p}^{\; \prime\prime} \not \in \left(\dom{\piw}\cup\dom{\pil}\right)$, and thus $\vec{p}^{\; \prime\prime} \not \in \dom{\gamma}$.
Let 
$$
\vec{\varepsilon}^{\; \prime\prime} = \left( \underbrace{\varepsilon''_1 = \beta_1, \ldots, \varepsilon''_{k+l}= \beta_{k+l}, \varepsilon''_{k+l+1}=\beta''_{k+l+1}}_{\vec{\beta}''}, \underbrace{\varepsilon''_{k+l+2}, \ldots, \varepsilon''_{k+l+2+(k+g-j-1)}}_{\substack{\textmd{same tile attachment steps}\\ \textmd{as in\ }\vec{\gamma}[j-k+1\ldots g]}}\right),
$$
where, for all integers $0 \leq i \leq k+g-j-1$, 
\begin{eqnarray}
  \label{eqn:number-of-competing-and-non-competing-children2}
\varepsilon''_{k+l+2+i} & = & \varepsilon''_{k+l+2+i-1} + \left( \varepsilon_{i+j} \backslash \varepsilon_{i+j-1} \right) \\
	& = & \varepsilon''_{k+l+2+i-1} + \left( \gamma_{j-k+i+1} \backslash \gamma_{j-k+i} \right).\nonumber
\end{eqnarray}
Note that $\vec{\varepsilon}^{\; \prime\prime}$ is the extension of $\vec{\beta}''$ where all (and only) the remaining competing tiles of $\vec{\varepsilon}$ that have not yet attached to  $\vec{\beta}_{\vec{x}}$, namely $\varepsilon_j\backslash \varepsilon_{j-1},\ \varepsilon_{j+1}\backslash\varepsilon_{j},\ \ldots,\ \varepsilon_{k+g-1}\backslash\varepsilon_{k+g-2}$,
are attached to form the suffix of $\vec{\varepsilon}^{\; \prime\prime}$.

\hspace*{5mm}$\bullet$\hspace*{1mm}
In this third step, we use the following facts to prove that $\vec{\varepsilon}^{\; \prime\prime}$ is a $\mathcal{T}$-producing assembly sequence:
\begin{enumerate}[label=\theenumi\alph*.,itemsep=0mm]
	\item $\vec{\beta}$ is a $\mathcal{T}$-producing assembly sequence.
	\item $\varepsilon''_{k+l+1} = \beta_{k+l} + \left( \vec{p}^{\; \prime\prime}, t''\right)$ is a valid tile attachment step because $\vec{\beta}''$ is a child of $\vec{\beta}$ in $\mathcal{M}_{\mathcal{T}}$.
	\item $\varepsilon''_{k+l+2} = \beta''_{k+l+1} + \left( \varepsilon_{j} \backslash \varepsilon_{j-1} \right)$ is a valid tile attachment step because $\vec{p}^{\; \prime} \in \partial^{\mathcal{T}}_{t'} \beta_{k+l}$ and $\vec{p}^{\; \prime} \ne \vec{p}^{\; \prime\prime}$, which means $\vec{p}^{\; \prime} \in \partial^{\mathcal{T}}_{t'} \beta''_{k+l+1}$. 
        \item $\vec{\varepsilon}^{\;*} = \left( \varepsilon''_{k+l+2}, \ldots, \varepsilon''_{k+l+2+(k+g-j-1)} \right)$ is a $\mathcal{T}$-assembly sequence because $\vec{\varepsilon}$ is a $\mathcal{T}$-producing assembly sequence.
	\item Each point where a tile attaches in $\vec{\varepsilon}^{\;*}$ is empty just
          before the corresponding tile attachment step, as we now
          establish.
          First, since $\vec{p}^{\;  \prime\prime} \not \in \dom{\gamma}$, placing tile $t''$ at $\vec{p}^{\; \prime\prime}$
          could not have blocked any tile attachment step in
          $\vec{\varepsilon}^{\;*}$. Second, we already established above that the last tile attachment
          step resulting in $\varepsilon''_{k+l+2}$ could not have been blocked. Third,
          for all integers $0 < i \leq k+g-j-1$,
          equation (\ref{eqn:number-of-competing-and-non-competing-children2}) yields:
          \begin{equation}
            \label{eqn:number-of-competing-and-non-competing-children3}
          \dom{\varepsilon''_{k+l+2+i}} \backslash  \dom{\varepsilon''_{k+l+2+i-1}} = \dom{\varepsilon_{i+j}} \backslash \dom{\varepsilon_{i+j-1}}
          \end{equation}
          which, combined with (\ref{eqn:number-of-competing-and-non-competing-children'}), yields:
          $$
          (\dom{\varepsilon''_{k+l+2+i}}\backslash \dom{\varepsilon''_{k+l+2+i-1}}) \cap \dom{\beta'_{k+l+1}} =
          \emptyset.
          $$
          This equation, together with the facts that $\dom{\beta''_{k+l+1}} = \left(\dom{\beta'_{k+l+1}} \backslash \left\{\vec{p}^{\; \prime}\right\}\right) \cup  \left\{\vec{p}^{\; \prime\prime}\right\}$ and $\vec{p}^{\; \prime\prime} \in  \dom{\varepsilon''_{k+l+2+i-1}}$, imply that
          $$
          (\dom{\varepsilon''_{k+l+2+i}}\backslash \dom{\varepsilon''_{k+l+2+i-1}}) \cap \dom{\beta''_{k+l+1}} =\emptyset.
          $$
          In conclusion, no tile attachment step in
          $\vec{\varepsilon}^{\;*}$ may be blocked by a prior tile
          attachment step.

\end{enumerate}

\hspace*{5mm}$\bullet$\hspace*{1mm}
In this fourth and last step, we prove that  $\vec{\varepsilon}^{\;\prime\prime}$ is a leaf node of
      $\mathcal{Q}$, thereby establishing that $\vec{\beta}''$ is also
in $\mathcal{Q}$. Note that $\vec{\varepsilon}$ terminates at $\vec{y}$ because $\vec{\gamma}$ is a winning assembly sequence for $\mathcal{C}$. Therefore, using (\ref{eqn:number-of-competing-and-non-competing-children'}), we have:
\begin{eqnarray*}
  \dom{\varepsilon''_{k+l+2+(k+g-j-1)}} \backslash \dom{\varepsilon''_{k+l+2+(k+g-j-2)}}
  & = & \dom{\varepsilon_{k+g-1}} \backslash \dom{\varepsilon_{k+g-2}}\\
  & = &\dom{\gamma_g} \backslash \dom{\gamma_{g-1}}\\
  & = & \left\{ \vec{y} \right\}.
\end{eqnarray*}
It is also true that $\vec{\varepsilon}^{\; \prime\prime}$ is
$w$-correct because $\vec{\beta}''$ is $w$-correct and every
subsequent assembly in $\vec{\varepsilon}^{\; \prime\prime}$ is the
result of an assembly sequence that is a node of $\mathcal{Q}$, since
each one of them results from a tile attachment step that is part of
$\vec{\gamma}$. Therefore, every subsequent assembly in
$\vec{\varepsilon}^{\; \prime\prime}$ is $w$-correct.
Moreover, by the definition of $\vec{\varepsilon}^{\;\prime\prime}$ and the fact that $\vec{\gamma}$ is the unique longest winning assembly sequence embedded in $\vec{\varepsilon}$, it follows that $\vec{\gamma}$ is the longest winning assembly sequence embedded in $\vec{\varepsilon}^{\;\prime\prime}$.
As a result, $\vec{\varepsilon}^{\; \prime\prime}$ must be  a leaf node of $\mathcal{Q}$, making $\vec{\beta}''$ a child of $\vec{\beta}$ in $\mathcal{Q}$.
Since we originally chose $\vec{\beta}''$ to be an arbitrary non-competing child node of $\vec{\beta}$ in $\mathcal{M}_{\mathcal{T}}$, it follows that every non-competing child node of $\vec{\beta}$ in $\mathcal{M}_{\mathcal{T}}$  is a child of $\vec{\beta}$ in $\mathcal{Q}$.
\end{enumerate}

\end{proof}

\begin{definition}
\label{def:split-nodes}
Let $\mathcal{Q}$ be a finite subtree of $\mathcal{M}_{\mathcal{T}}$. Then $\texttt{split}(\mathcal{Q})$ is the \emph{split of} $\mathcal{Q}$, that is, the tree returned by the {\tt split} algorithm shown in Figure~\ref{fig:split-algorithm}.
\end{definition}

Intuitively, the {\tt split} algorithm recursively transforms a given finite subtree $\mathcal{Q}$ of $\mathcal{M}_{\mathcal{T}}$ into a corresponding SPT $\texttt{split}( \mathcal{Q} )$, which has the same probability as the input SPT but with extra ``diamond'' and ``circular'' nodes inserted under each internal node of $\mathcal{Q}$ and whose purpose is to separate competitive nodes from their non-competitive siblings, respectively. The idea is that these extra nodes simulate an assembly sequence that first chooses whether a competitive or non-competitive tile will attach in the next tile attachment step and then chooses which particular tile attachment step of that type (competitive or non-competitive) to perform. The {\tt split} algorithm ensures that the probability of traversing from parent to child via an extra node in $\texttt{split}(\mathcal{Q})$ is equal to the probability of traversing from parent to child directly in $\mathcal{Q}$. Introducing this kind of artificial structure will allow us to more easily  manipulate the quantity $\textmd{Pr}[\mathcal{Q}]$ algebraically. See Figure~\ref{fig:split-high-level} for a sample run of {\tt split} being applied to the node $\vec{\beta}$ of $\mathcal{Q}$.

\begin{figure}[!h]
	\centerline{\includegraphics[width=\linewidth]{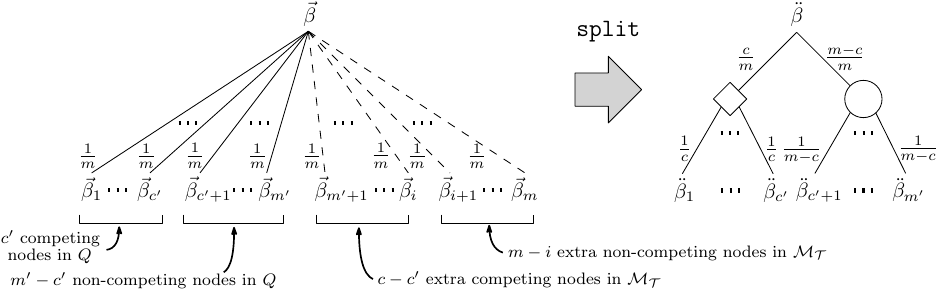}}
	\caption{\label{fig:split-high-level} A sample run of {\tt split} applied to the root node $\vec{\beta}$ of $\mathcal{Q}$. While the input to {\tt split} is $\mathcal{Q}$, the additional children of $\vec{\beta}$ in $\mathcal{M}_{\mathcal{T}}$ that are not in  $\mathcal{Q}$ (shown under dashed lines in the figure) are also accessible to the algorithm. As shown in the sub-figure on the left, node $\vec{\beta}$ has $m$ child nodes in $\mathcal{M}_{\mathcal{T}}$, $m'$ of which also belong to $\mathcal{Q}$. Among the $c$ child nodes of $\vec{\beta}$ that are competing in  $\mathcal{M}_{\mathcal{T}}$, $c'$ of them also belong to $\mathcal{Q}$. The number of children of $\vec{\beta}$ in each one of the four categories (i.e., competing or not, in $\mathcal{Q}$ or not) are indicated in the figure, with $i=m'+(c-c')$. In the output of {\tt split}, we decorate a non-diamond, non-circular node  with an umlaut. For instance, $\ddot{\beta}$ in {\tt split($Q$)} corresponds to $\vec{\beta}$ in $\mathcal{Q}$. Note that, for each integer $j$ in $[1,c']\cup[c'+1,m']$,  the probability of traversing from $\ddot{\beta}$ to $\ddot{\beta}_j$ in {\tt split($Q$)} is equal to the probability of traversing from $\vec{\beta}$ to $\vec{\beta}_j$ in $\mathcal{Q}$, namely $\frac{1}{m}$. }
\end{figure}

Note that the {\tt split} algorithm shown in Figure~\ref{fig:split-algorithm} is defined (implicitly) relative to $\mathcal{T}$, which we have been assuming to be $w$-sequentially non-deterministic. 
However, the algorithm works relative to an arbitrary TAS, sequentially non-deterministic or otherwise. 

\begin{figure}[!h]
  \centering
  \begin{minipage}[t]{\linewidth}
    \begin{algorithm}[H]
      \DontPrintSemicolon
      \SetKwFunction{Fmain}{split}
      \SetKwFunction{FgetChildren}{getChildrenIn$\mathcal{Q}$}
      \SetKwFunction{FnumChildMT}{getNumberOfChildrenIn$\mathcal{M}_{\mathcal{T}}$}
      \SetKwFunction{FnumChildQ}{getNumberOfChildrenIn$\mathcal{Q}$}
      \SetKwFunction{FnumCompChildQ}{getNumberOfCompetingChildrenIn$\mathcal{Q}$}         
      \SetKwFunction{FnumCompChildMT}{getNumberOfCompetingChildrenIn$\mathcal{M}_{\mathcal{T}}$}      
      \SetKwFunction{Fclone}{clone}     
      \SetKwFunction{FnewDiamond}{makeNewDiamondNode}
      \SetKwFunction{FnewCircle}{makeNewCircularNode}      
      \SetKwFunction{FisCompetitive}{isNodeCompeting}
      \SetKwFunction{FaddChild}{addChild} 
      \SetKwFunction{FgetRoot}{getRoot}                      
      \SetKwProg{Fn}{function}{:}{end}
      
      \Fn(\tcp*[f]{$\mathcal{Q}$ is any finite subtree of $\mathcal{M}_{\mathcal{T}}$}){\Fmain{$\mathcal{Q}$}}{
        
        $\vec{\beta} \leftarrow \FgetRoot{$\mathcal{Q}$}$ 
        \tcp*[f]{$\vec{\beta}$ is the root of $\mathcal{Q}$}
        
        $m' \leftarrow \FnumChildQ{$\vec{\beta}$}$ 
        
        \If(\tcp*[f]{determine whether $\vec{\beta}$ is a leaf node in $\mathcal{Q}$}){$m' = 0$}{
          \KwRet\ $\Fclone{$\vec{\beta}$}$
          \tcp*[f]{return a copy of this leaf node}
        }
        \Else(\tcp*[f]{$\vec{\beta}$ is an internal node}){
          $c \leftarrow \FnumCompChildMT{$\vec{\beta}$}$
          
          $m \leftarrow \FnumChildMT{$\vec{\beta}$}$

          $\Diamond \leftarrow \FnewDiamond{}$
          \tcp*[f]{create a new diamond child node}
          
          $\bigcirc \leftarrow \FnewCircle{}$
          \tcp*[f]{create a new circular child node}          
          
          \ForEach(\tcp*[f]{loop over the child nodes of $\vec{\beta}$} in $\mathcal{Q}$)
                  {$\vec{\beta}^{\prime} \in \FgetChildren{$\vec{\beta}$}$}{

            $\ddot{\beta'} \leftarrow \Fmain{$Q^{\vec{\beta}^{\prime}}$}$
            \tcp*[f]{recursively process the subtree rooted at $\vec{\beta}^{\prime}$}
            
            \If(\tcp*[f]{determine whether $\vec{\beta}^{\prime}$ is competing})
               {$\FisCompetitive{$\vec{\beta}^{\prime}$}$}{
              $\FaddChild{ $\Diamond$, $\ddot{\beta'}$,  $1/c$ }$
              \tcp*[f]{make $\ddot{\beta'}$ a new child of the diamond node}          
            }
            \Else{
              $\FaddChild{ $\bigcirc$, $\ddot{\beta'}$, $1/(m-c)$ }$
              \tcp*[f]{make $\ddot{\beta'}$ a new child of the circular node}
            }
          }
          $\ddot{\beta} \leftarrow \Fclone{$\vec{\beta}$}$
          \tcp*[f]{create a new root node identical to $\vec{\beta}$ but with no children}
          
          $c' \leftarrow \FnumCompChildQ{$\vec{\beta}$}$
          
          \If(\tcp*[f]{if needed, make diamond node a new child of
            new root node}){$c'>0$}{
            $\FaddChild{ $\ddot{\beta}$, $\Diamond$, $c/m$ }$
          }
          \If(\tcp*[f]{if needed, make circular node a new child of 
            new root node}){$m'-c'>0$}{
            $\FaddChild{ $\ddot{\beta}$, $\bigcirc$, $(m-c)/m$ }$
          }          
          \KwRet\ $\ddot{\beta}$  \hfill\tcp{return the new root node
            (and its fully-processed subtree)}
        }
      }
\end{algorithm}
\end{minipage}
\caption{\label{fig:split-algorithm} Algorithm to split a given finite subtree of $\mathcal{M}_{\mathcal{T}}$. A diamond node only has  competing children, whereas a circular node only has children that are not competing. Note that the input tree $\mathcal{Q}$ need not be full relative to $\mathcal{M}_{\mathcal{T}}$ (which is an additional, implicit argument of this algorithm).}
\end{figure}

\begin{observation}
\label{obs:prime-nodes-split-bijection}
There is a one-to-one correspondence between the set of all non-circular and non-diamond nodes of $\texttt{split}\left(\mathcal{Q}\right)$ and the set of all nodes of $\mathcal{Q}$.
\end{observation}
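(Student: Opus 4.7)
The plan is to prove the observation by structural induction on the input tree $\mathcal{Q}$, exploiting the recursive nature of the \texttt{split} algorithm. The key insight is that each invocation of \texttt{split} creates exactly one non-diamond, non-circular node (either the clone returned on line 5 for a leaf, or the clone $\ddot{\beta}$ created on line 17 for an internal node), and this clone corresponds to the root of the (sub)tree passed in as input. The diamond and circular nodes are the only other nodes introduced at each recursive level, and by definition they are not counted in the bijection.

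First, I would define the correspondence explicitly: for each node $\vec{\beta}$ of $\mathcal{Q}$, associate it with the unique non-diamond, non-circular node produced by the invocation $\texttt{split}(\mathcal{Q}^{\vec{\beta}})$ that serves as the root of the returned subtree. The base case handles $\mathcal{Q}$ being a single (leaf) node: the algorithm falls into the $m'=0$ branch and returns a single cloned node, which is trivially paired with the root of $\mathcal{Q}$. In the inductive step, assume $\mathcal{Q}$ has root $\vec{\beta}$ with children $\vec{\beta}_1, \ldots, \vec{\beta}_{m'}$ in $\mathcal{Q}$, and that the correspondence holds for each subtree $\mathcal{Q}^{\vec{\beta}_j}$. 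The algorithm recursively computes $\ddot{\beta}_j = \texttt{split}(\mathcal{Q}^{\vec{\beta}_j})$ for each $j$, creates at most one $\Diamond$ and at most one $\bigcirc$ node, and finally creates a single cloned node $\ddot{\beta}$ to return.

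Next I would verify both directions of the bijection at this inductive level. For injectivity and surjectivity, observe that the non-diamond, non-circular nodes of $\texttt{split}(\mathcal{Q})$ are precisely $\ddot{\beta}$ together with the disjoint union of the non-diamond, non-circular nodes of the subtrees $\texttt{split}(\mathcal{Q}^{\vec{\beta}_j})$; by the inductive hypothesis, the latter are in bijection with the nodes of $\mathcal{Q}^{\vec{\beta}_j}$, and the pairing of $\ddot{\beta}$ with $\vec{\beta}$ completes the map. Since the subtrees $\mathcal{Q}^{\vec{\beta}_j}$ partition $\mathcal{Q} \setminus \{\vec{\beta}\}$ and the recursively returned subtrees produce disjoint sets of nodes (fresh clones are created at every recursive call), the global map is a well-defined bijection.

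The argument has no real obstacle: the only thing to watch is that the conditional branches on lines 19 and 22 that attach $\Diamond$ and $\bigcirc$ to $\ddot{\beta}$ are based on whether $c' > 0$ and $m' - c' > 0$, respectively, and that these conditions affect only whether the auxiliary $\Diamond$/$\bigcirc$ nodes become reachable from $\ddot{\beta}$, never whether the recursively returned non-diamond/non-circular nodes exist. Thus the count and identity of non-diamond, non-circular nodes returned is independent of these branches, and the induction goes through cleanly. A single short paragraph per case should suffice for the final write-up.
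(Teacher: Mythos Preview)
Your structural-induction argument is correct and is the natural way to justify this statement formally. The paper, however, states Observation~\ref{obs:prime-nodes-split-bijection} without proof, treating it as immediate from the structure of the \texttt{split} algorithm: each call to \texttt{split} produces exactly one cloned (non-diamond, non-circular) node via \texttt{clone}, and every node of $\mathcal{Q}$ is the root of exactly one such call. Your write-up makes this explicit, including the check that the conditionals on lines~19--22 never discard any recursively produced subtree (since every child $\vec{\beta}'$ is either competing or not, and the corresponding $\Diamond$ or $\bigcirc$ is always attached when it has at least one child).
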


\begin{definition}
\label{def:prime-node-split}
If $\mathcal{Q}$ is a finite subtree of $\mathcal{P}$ and $\vec{\beta}$ is a node of $\mathcal{Q}$, then we denote as $\ddot{\beta}$ the corresponding (non-circular and non-diamond) \emph{split} node of $\vec{\beta}$ in $\texttt{split}\left(\mathcal{Q}\right)$.
\end{definition}

Observation~\ref{obs:prime-nodes-split-bijection} ensures that $\ddot{\beta}$ is well-defined in Definition~\ref{def:prime-node-split}. The following lemma proves some properties of  the output of the {\tt split} algorithm.

\begin{lemma} % lemma 30
\label{lem:split-probability}
If $\mathcal{Q}$ is a finite subtree of $\mathcal{M}_{\mathcal{T}}$, then the following \emph{lemma conclusions} are true:
\begin{enumerate}
	\item \label{split-conclusion-1} $\texttt{split}\left(\mathcal{Q}\right)$ is an SPT.
	\item \label{split-conclusion-2} For every diamond node $\Diamond$ of $\texttt{split}\left(\mathcal{Q}\right)$ with parent $\ddot{\rho}$, if $\vec{\rho}$ has $c'$ children in $\mathcal{Q}$ that are competing in $\mathcal{M}_{\mathcal{T}}$ and $c$  competing children in $\mathcal{M}_{\mathcal{T}}$, then $S_{\Diamond} = \frac{c'}{c}$.
	\item \label{split-conclusion-3} For every circular node $\bigcirc$ of $\texttt{split}\left(\mathcal{Q}\right)$ with parent $\ddot{\rho}$, if $\vec{\rho}$ has $n'$ children in $\mathcal{Q}$ that are not competing in $\mathcal{M}_{\mathcal{T}}$ and $n$ children in $\mathcal{M}_{\mathcal{T}}$ that are not  competing, then $S_{\bigcirc} = \frac{n'}{n}$.
	\item \label{split-conclusion-4} Every internal node of $\texttt{split}\left(\mathcal{Q}\right)$ that is neither a diamond nor circular is normalized.
	\item \label{split-conclusion-5} Every leaf node of $\texttt{split}\left(\mathcal{Q}\right)$ is neither a circular nor a diamond node.
	\item \label{split-conclusion-6} $\textmd{Pr}\left[ \mathcal{Q} \right] = \textmd{Pr}\left[ \texttt{split}(\mathcal{Q})\right]$.
\end{enumerate}
\end{lemma}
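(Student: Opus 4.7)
The plan is to prove Conclusions~\ref{split-conclusion-1}--\ref{split-conclusion-6} simultaneously by structural induction on $\mathcal{Q}$. Since the {\tt split} algorithm recurses on strictly smaller subtrees and terminates on every finite tree, the induction is well-founded. In the base case $\mathcal{Q}$ consists of a single node, so $m'=0$ on line~3 and the algorithm returns a clone of that node. Then Conclusions~\ref{split-conclusion-2}, \ref{split-conclusion-3}, and~\ref{split-conclusion-4} hold vacuously (no diamond, circular, or internal nodes exist), Conclusion~\ref{split-conclusion-5} holds because the lone node is a clone, Conclusion~\ref{split-conclusion-1} is immediate, and Conclusion~\ref{split-conclusion-6} reduces to $\textmd{Pr}[\mathcal{Q}]=1=\textmd{Pr}[\texttt{split}(\mathcal{Q})]$.

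For the inductive step, I would let $\vec{\beta}$ be the root of $\mathcal{Q}$ with $m'>0$ and invoke the inductive hypothesis on each $\mathcal{Q}^{\vec{\beta}'}$ for $\vec{\beta}'$ a child of $\vec{\beta}$ in $\mathcal{Q}$; each recursive call then returns an SPT $\ddot{\beta'}$ satisfying all six conclusions. Conclusions~\ref{split-conclusion-2} and~\ref{split-conclusion-3} would follow by direct inspection of the loop on lines~11--16: $\Diamond$ acquires exactly $c'$ outgoing edges each of weight $\frac{1}{c}$, and $\bigcirc$ acquires exactly $m'-c'$ outgoing edges each of weight $\frac{1}{m-c}$, yielding $S_{\Diamond}=\frac{c'}{c}$ and $S_{\bigcirc}=\frac{m'-c'}{m-c}$. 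Conclusion~\ref{split-conclusion-5} would follow by observing that only cloned (hence non-diamond, non-circular) nodes appear as leaves at the base case of the recursion, together with the inductive hypothesis applied to each $\ddot{\beta'}$. Conclusion~\ref{split-conclusion-1} would follow by combining the inductive hypothesis on each $\ddot{\beta'}$ with a direct check that all the new edge weights lie in $(0,1]$ and that $S_v \leq 1$ at each newly created node (using Conclusions~\ref{split-conclusion-2} and~\ref{split-conclusion-3} at $\Diamond$ and $\bigcirc$, and direct summation at $\ddot{\beta}$).

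For Conclusion~\ref{split-conclusion-4}, the only non-diamond, non-circular internal node introduced beyond those embedded in the recursively processed subtrees is $\ddot{\beta}$ itself, and I would compute $S_{\ddot{\beta}}$ directly from lines~19--20: when both $\Diamond$ and $\bigcirc$ receive children we get $S_{\ddot{\beta}}=\frac{c}{m}+\frac{m-c}{m}=1$, and the degenerate sub-cases would be handled by the analogous calculation under the structural assumptions on $\mathcal{Q}$ that guarantee $\ddot{\beta}$ is internal precisely when the corresponding count is the full $m$ or $m-c$. For Conclusion~\ref{split-conclusion-6}, the key observation is that every edge of weight $\frac{1}{m}$ from $\vec{\beta}$ to a competing child in $\mathcal{Q}$ is replaced in $\texttt{split}(\mathcal{Q})$ by a pair of consecutive edges whose weights multiply to $\frac{c}{m}\cdot\frac{1}{c}=\frac{1}{m}$, and similarly $\frac{m-c}{m}\cdot\frac{1}{m-c}=\frac{1}{m}$ for non-competing children. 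Telescoping along any maximal path preserves its probability, and by Conclusion~\ref{split-conclusion-5} there is a bijection between leaves of $\mathcal{Q}$ and leaves of $\texttt{split}(\mathcal{Q})$, so summing via Observation~\ref{obs:leaf-node-maximal-path} yields the desired equality.

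The main obstacle will be the bookkeeping for Conclusion~\ref{split-conclusion-6}: the recursive call returns a subtree $\ddot{\beta'}$ that is \emph{already split}, so the inductive hypothesis must be combined with a two-level decomposition of $\textmd{Pr}[\texttt{split}(\mathcal{Q})]$. Concretely, I would expand
\[
\textmd{Pr}[\texttt{split}(\mathcal{Q})]=\sum_{\vec{\beta'} \textmd{ competing in } \mathcal{Q}}\frac{c}{m}\cdot\frac{1}{c}\cdot\textmd{Pr}[\texttt{split}(\mathcal{Q}^{\vec{\beta'}})]+\sum_{\vec{\beta'} \textmd{ non-competing in } \mathcal{Q}}\frac{m-c}{m}\cdot\frac{1}{m-c}\cdot\textmd{Pr}[\texttt{split}(\mathcal{Q}^{\vec{\beta'}})],
\]
simplify both fractions to $\frac{1}{m}$, substitute $\textmd{Pr}[\texttt{split}(\mathcal{Q}^{\vec{\beta'}})]=\textmd{Pr}[\mathcal{Q}^{\vec{\beta'}}]$ from the inductive hypothesis, and match the result against the analogous decomposition of $\textmd{Pr}[\mathcal{Q}]$ supplied by Lemma~\ref{lem:alternative-characterization-of-pr}, using the set of children of $\vec{\beta}$ in $\mathcal{Q}$ as the bottleneck.
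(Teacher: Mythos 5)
Your proof follows essentially the same route as the paper's: structural induction with the same six-part inductive hypothesis, direct inspection of the {\tt split} loop for conclusions~\ref{split-conclusion-2} and~\ref{split-conclusion-3}, explicit checks of edge labels and branch sums for conclusion~\ref{split-conclusion-1}, and for conclusion~\ref{split-conclusion-6} the same decomposition of the probability over the bottleneck set of children (via Lemma~\ref{lem:alternative-characterization-of-pr}) combined with the cancellation $\frac{c}{m}\cdot\frac{1}{c} = \frac{m-c}{m}\cdot\frac{1}{m-c} = \frac{1}{m}$.

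One caveat: your treatment of the degenerate sub-case of conclusion~\ref{split-conclusion-4} (when only one of $\Diamond$ or $\bigcirc$ is attached to $\ddot{\beta}$) appeals to ``structural assumptions on $\mathcal{Q}$ that guarantee $\ddot{\beta}$ is internal precisely when the corresponding count is the full $m$ or $m-c$,'' but the lemma places no such hypothesis on $\mathcal{Q}$. If, say, $c'>0$ but $m'-c'=0$ while $m-c>0$, then $\ddot{\beta}$ receives only the $\Diamond$ child and $S_{\ddot{\beta}}=c/m<1$, so conclusion~\ref{split-conclusion-4} as stated would fail. The paper's own proof has the same gap --- it asserts $S_{\ddot{\beta}}=\frac{c}{m}+\frac{m-c}{m}=1$ even while acknowledging that $\ddot{\beta}$ may have only one child --- and the gap is harmless in the sole application (Lemma~\ref{lem:subtree-at-x-i-to-y-i-probability}) because Lemma~\ref{lem:number-of-competing-and-non-competing-children} guarantees $n'=n$ there, which forces $S_{\ddot{\beta}}=1$ even in the degenerate case. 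Your instinct that this case needs a separate argument is correct, but the resolution you gesture at is not one the lemma statement supports, and neither your sketch nor the paper fills it in for arbitrary finite subtrees.
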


\begin{proofsketch}
  In the non-vacuous, we proceed by structural induction.  We let
  $\vec{\beta}$ be an arbitrary internal node of $\mathcal{Q}$ and
  assume that the lemma's conclusions hold for $\texttt{split}\left(
  \mathcal{Q}^{\vec{\beta}'} \right)$, where $\vec{\beta}'$ is an
  arbitrary child node of $\vec{\beta}$ in $\mathcal{Q}$. Then we
  complete the following steps:
\begin{enumerate}[label={\bf Step \theenumi.},itemindent=6mm]
\item We show that $\texttt{split}\left( \mathcal{Q}^{\vec{\beta}}
  \right)$ is a rooted, weighted tree such that every one of its leaf
  nodes is neither a circular nor a diamond node, thereby partially
  establishing the lemma's conclusion~\ref{split-conclusion-1} and
  fully establishing the lemma's conclusion~\ref{split-conclusion-5}
  for $\texttt{split}\left( \mathcal{Q}^{\vec{\beta}} \right)$.

  \item We prove that the lemma's conclusion~\ref{split-conclusion-2}
    holds for $\texttt{split}\left( \mathcal{Q}^{\vec{\beta}}
    \right)$.
  \item We prove that the lemma's conclusion~\ref{split-conclusion-3}
    holds for $\texttt{split}\left( \mathcal{Q}^{\vec{\beta}}
    \right)$.
  \item We complete the proof of the lemma's conclusion~\ref{split-conclusion-1}
    for $\texttt{split}\left( \mathcal{Q}^{\vec{\beta}}
    \right)$.
  \item We prove that the lemma's conclusion~\ref{split-conclusion-4}
    holds for $\texttt{split}\left( \mathcal{Q}^{\vec{\beta}}
    \right)$.
  \item We prove that the lemma's conclusion~\ref{split-conclusion-6}
    holds for $\texttt{split}\left( \mathcal{Q}^{\vec{\beta}}
    \right)$.
\end{enumerate}\vspace*{-5mm}
\end{proofsketch}

\begin{proof}
  If $\mathcal{Q}$ is a single node, then line 5 of $\texttt{split}$ returns a copy of $\mathcal{Q}$, and the lemma trivially holds.
Going forward, assume that $\mathcal{Q}$ comprises more than one node.
We proceed by structural induction.
Let $\vec{\beta}$ be  an arbitrary internal node of $\mathcal{Q}$  and $B_{\vec{\beta}}$ be the non-empty set of children of $\vec{\beta}$ in $\mathcal{Q}$.
For all $\vec{\beta}' \in B_{\vec{\beta}}$, we assume:
\begin{enumerate}
	\item \label{split-ih-1} $\texttt{split}\left(\mathcal{Q}^{\vec{\beta}'}\right)$ is an SPT.
		
	\item \label{split-ih-2} For every diamond node $\Diamond$ of $\texttt{split}\left(\mathcal{Q}^{\vec{\beta}'}\right)$ with parent $\ddot{\rho}$, if $\vec{\rho}$ has $c'$ children in $\mathcal{Q}$ that are competing in $\mathcal{M}_{\mathcal{T}}$ and $c$  competing children in $\mathcal{M}_{\mathcal{T}}$, then $S_{\Diamond} = \frac{c'}{c}$.

	\item \label{split-ih-3} For every circular node $\bigcirc$ of $\texttt{split}\left(\mathcal{Q}^{\vec{\beta}'}\right)$ with parent $\ddot{\rho}$, if $\vec{\rho}$ has $n'$ children in $\mathcal{Q}$ that are not competing in $\mathcal{M}_{\mathcal{T}}$ and $n$ children in $\mathcal{M}_{\mathcal{T}}$ that are not competing, then $S_{\bigcirc} = \frac{n'}{n}$.
	\item \label{split-ih-4} Every internal node of $\texttt{split}\left(\mathcal{Q}^{\vec{\beta}'}\right)$ that is neither a diamond nor circular is normalized.
	\item \label{split-ih-5} Every leaf node of $\texttt{split}\left(\mathcal{Q}^{\vec{\beta}'}\right)$ is neither a circular nor a diamond node.
	\item \label{split-ih-6} $\textmd{Pr}\left[ \mathcal{Q}^{\vec{\beta}'} \right] = \textmd{Pr}\left[ \texttt{split}\left(\mathcal{Q}^{\vec{\beta}'}\right) \right]$.
\end{enumerate}
The six previous items make up the inductive hypothesis.
{\bf Step 1.} We show that $\texttt{split}\left( \mathcal{Q}^{\vec{\beta}} \right)$ is a rooted, weighted tree such that every one of its leaf nodes is neither a circular nor a diamond node.
By inductive hypothesis items~\ref{split-ih-1} and~\ref{split-ih-5}, for all $\vec{\beta}' \in B_{\vec{\beta}}$, $\texttt{split}\left( \mathcal{Q}^{\vec{\beta}'} \right)$ is an SPT such that every one of its leaf nodes is neither a circular nor a diamond node.
Since $\vec{\beta}'$ is either a competing node or not, lines 13 through 16 of \texttt{split} ensure that $\ddot{\beta'}$ is either a child of $\Diamond$ or $\bigcirc$, but not both.
Then, lines 19 through 22 of $\texttt{split}$ make either $\Diamond$ or $\bigcirc$ (or both) children of $\ddot{\beta}$ in  $\texttt{split}\left( \mathcal{Q}^{\vec{\beta}} \right)$, which is the root of  $\texttt{split}\left( \mathcal{Q}^{\vec{\beta}} \right)$.
Moreover, if no child is added to $\Diamond$ (resp., $\bigcirc$) on line 14 (resp., 16), then $\Diamond$ (resp., $\bigcirc$) is not added as a child of $\ddot{\beta}$ on line 20 (resp., 22). 
Thus,  $\texttt{split}\left( \mathcal{Q}^{\vec{\beta}} \right)$ is a rooted, weighted tree such that every one of its leaf nodes is neither a circular nor a diamond node.
{\bf Step 2.} Let $m'_{\texttt{split}}$, $c_{\texttt{split}}$, $m_{\texttt{split}}$, $\Diamond_{\texttt{split}}$, $\bigcirc_{\texttt{split}}$, and $c'_{\texttt{split}}$ denote the variables $m'$, $c$, $m$, $\Diamond$, $\bigcirc$, and $c'$ that are declared on lines 3, 7, 8, 9, 10, and 18 of {\tt split}, respectively.
Since $\mathcal{Q}$ comprises more than one node, and every one of them is either competing or not, either $c'_{\texttt{split}} > 0$ or $m'_{\texttt{split}} - c'_{\texttt{split}} > 0$.
Additionally, $c_{\texttt{split}} \leq m_{\texttt{split}}$ and $c'_{\texttt{split}} \leq m'_{\texttt{split}}$.
Since $\mathcal{Q}$ is a subtree of $\mathcal{M}_{\mathcal{T}}$, we also have $m'_{\texttt{split}} \leq m_{\texttt{split}}$,  $c'_{\texttt{split}} \leq c_{\texttt{split}}$, and $m'_{\texttt{split}} -  c'_{\texttt{split}} \leq m_{\texttt{split}} -  c_{\texttt{split}}$. 

We now prove the lemma's conclusion~\ref{split-conclusion-2} for  $\texttt{split}\left(\mathcal{Q}^{\vec{\beta}}\right)$.
Assume $c'_{\texttt{split}} > 0$, which means that the number of times line 14 of {\tt split} executes is $c'_{\texttt{split}}$.
By inductive hypothesis item~\ref{split-ih-2}, for every diamond node $\Diamond$ of $\texttt{split}\left( \mathcal{Q}^{\vec{\beta}'} \right)$, with parent $\ddot{\rho}$, if $\vec{\rho}$ has $c'$ children in $\mathcal{Q}$ that are competing in $\mathcal{M}_{\mathcal{T}}$ and $c$ nodes in $\mathcal{M}_{\mathcal{T}}$ that are  competing, then $S_\Diamond = \frac{c'}{c}$.
Then, after the foreach loop on line 11 of {\tt split} executes, we have
$$
S_{\Diamond_{\texttt{split}}} = c'_{\texttt{split}} \cdot \frac{1}{c_{\texttt{split}}} = \frac{c'_{\texttt{split}}}{c_{\texttt{split}}}.
$$ 
Since $c'_{\texttt{split}} > 0$, line 20 of {\tt split} adds $\Diamond_{\texttt{split}}$ to $\texttt{split}\left(\mathcal{Q}^{\vec{\beta}}\right)$ as a child of $\ddot{\beta}$. 
Thus, when line 23 of {\tt split} executes, the lemma's conclusion~\ref{split-conclusion-2} is established for $\texttt{split}\left( \mathcal{Q}^{\vec{\beta}} \right)$.
{\bf Step 3.} We now prove the lemma's conclusion~\ref{split-conclusion-3}  for $\texttt{split}\left( \mathcal{Q}^{\vec{\beta}} \right)$.
Assume $m'_{\texttt{split}} - c'_{\texttt{split}} > 0$, which means that the number of times line 16 of {\tt split} executes is $m'_{\texttt{split}} - c'_{\texttt{split}}$.
Let $n'_{\texttt{split}}$ denote the number of child nodes of $\vec{\beta}$ in $\mathcal{Q}$ that are not competing  and $n_{\texttt{split}}$ denote the number of  child nodes of $\vec{\beta}$ in $\mathcal{M}_{\mathcal{T}}$ that are not competing.
Since a node in $\mathcal{M}_{\mathcal{T}}$ is either competing or not, it follows that
$n'_{\texttt{split}} = m'_{\texttt{split}} - c'_{\texttt{split}}$ and $n_{\texttt{split}} = m_{\texttt{split}} - c_{\texttt{split}}$. 
By inductive hypothesis item~\ref{split-ih-3}, for every circular node $\bigcirc$ of $\texttt{split}\left(\mathcal{Q}^{\vec{\beta}'}\right)$, with parent $\ddot{\rho}$, if $\vec{\rho}$ has $n'$ children in $\mathcal{Q}$ that are not  competing in $\mathcal{M}_{\mathcal{T}}$, and $n$ children in $\mathcal{M}_{\mathcal{T}}$ that are not competing, then $S_\bigcirc = \frac{n'}{n}$.
Then, after the foreach loop on line 11 of {\tt split} executes, we have 
$$
S_{\bigcirc} = \left(m'_{\texttt{split}} - c'_{\texttt{split}}\right) \cdot \frac{1}{(m_{\texttt{split}} - c_{\texttt{split}})} = \frac{n'_{\texttt{split}}}{n_{\texttt{split}}}.
$$
Since $m'_{\texttt{split}} - c'_{\texttt{split}} > 0$, line 22 of {\tt split} adds $\bigcirc$ to $\texttt{split}\left(\mathcal{Q}^{\vec{\beta}}\right)$ as a child of $\ddot{\beta}$.
Thus, when line 23 of {\tt split} executes, lemma conclusion~\ref{split-conclusion-3} is established  for $\texttt{split}\left( \mathcal{Q}^{\vec{\beta}} \right)$.
{\bf Step 4.} We now prove the lemma's conclusion~\ref{split-conclusion-1}  for $\texttt{split}\left( \mathcal{Q}^{\vec{\beta}} \right)$.
We first establish that the labels of all edges in
$\texttt{split}\left(\mathcal{Q}^{\vec{\beta}}\right)$ are in $(0,1]$.

If the diamond node $\Diamond_{\texttt{split}}$ created on line 9 is added as a child of $\ddot{\beta}$ on line 20 (which implies $c_{\texttt{split}}\geq c'_{\texttt{split}}\geq 1$), then the label of the corresponding edge is $\frac{c_{\texttt{split}}}{m_{\texttt{split}}}\leq 1$. 
Furthermore, the label of the edge from $\Diamond_{\texttt{split}}$ to any one of its children is equal to $\frac{1}{c_{\texttt{split}}}$, which is also less than or equal to 1, since $c_{\texttt{split}}\geq 1$.
Finally, each child of  $\Diamond_{\texttt{split}}$ added on line 14 is,  by inductive hypothesis item~\ref{split-ih-1}, an SPT and thus satisfies the required condition on all of its edge labels.

Similarly, if the circular node $\bigcirc_{\texttt{split}}$ created on line 10 is added as a child of $\ddot{\beta}$ on line 22 (which implies $m_{\texttt{split}}-c_{\texttt{split}} \geq m'_{\texttt{split}}-c'_{\texttt{split}}\geq 1$), then the label of the corresponding edge is $\frac{m_{\texttt{split}}-c_{\texttt{split}}}{m_{\texttt{split}}}\leq 1$. 
Furthermore, the label of the edge from $\bigcirc_{\texttt{split}}$ to any one of its children is equal to $\frac{1}{m_{\texttt{split}}-c_{\texttt{split}}}$, which is also less than or equal to 1, since $m_{\texttt{split}}-c_{\texttt{split}}\geq 1$.
Finally, each child of  $\bigcirc_{\texttt{split}}$ added on line 16 is,  by inductive hypothesis item~\ref{split-ih-1}, an SPT and thus satisfies the required condition on all of its edge labels.

Second, we prove that, for each node $v$  in $\texttt{split}\left(\mathcal{Q}^{\vec{\beta}}\right)$, $S_v\leq 1$.

If the diamond node $\Diamond_{\texttt{split}}$ created on line 9 is added as a child of $\ddot{\beta}$ on line 20, then we already established (in the proof of the lemma's conclusion~\ref{split-conclusion-2} above) that $S_{\Diamond_{\texttt{split}}} = \frac{c'_{\texttt{split}}}{c_{\texttt{split}}}$, which is less than or equal to 1.
Furthermore, by inductive hypothesis item~\ref{split-ih-1}, every child node of  $\Diamond_{\texttt{split}}$ is an SPT, which means that, for each one of its nodes $v$, $S_v \leq 1$.
Similarly, if the circular node $\bigcirc_{\texttt{split}}$ created on line 10 is added as a child of $\ddot{\beta}$ on line 22, then we already established (in the proof of the lemma's conclusion~\ref{split-conclusion-3} above) that $S_{\bigcirc_{\texttt{split}}} = \frac{m'_{\texttt{split}}-c'_{\texttt{split}}}{m_{\texttt{split}}-c_{\texttt{split}}}$, which is less than or equal to 1. 
Furthermore, by inductive hypothesis item~\ref{split-ih-1}, every child node of  $\bigcirc_{\texttt{split}}$ is an SPT, which means that, for each one of its nodes $v$, $S_v \leq 1$.
We now only have to consider $\ddot{\beta}$, the root node of
$\texttt{split}\left(\mathcal{Q}^{\vec{\beta}}\right)$. Since
$\ddot{\beta}$ has either one or two children added on lines~20
and/or~22 of the algorithm, when line~23 executes, we have:

$$
S_{\ddot{\beta}} = \frac{c_{\texttt{split}}}{m_{\texttt{split}}} + \frac{m_{\texttt{split}} - c_{\texttt{split}}}{m_{\texttt{split}}} = 1,
$$ 
which means that $\ddot{\beta}$ is normalized, and thus we trivially have $S_{\ddot{\beta}} \leq 1$.
In conclusion, the tree  $\texttt{split}\left(\mathcal{Q}^{\vec{\beta}}\right)$ returned on line 23 of the algorithm is an SPT,  which establishes the lemma's conclusion~\ref{split-conclusion-1} for that tree.
{\bf Step 5.} The lemma's conclusion~\ref{split-conclusion-4} holds for $\texttt{split}\left( \mathcal{Q}^{\vec{\beta}} \right)$ because, as we just established, the root node of $\texttt{split}\left( \mathcal{Q}^{\vec{\beta}} \right)$ is normalized and, by inductive hypothesis item~\ref{split-ih-4}, every other internal node of $\texttt{split}\left( \mathcal{Q}^{\vec{\beta}} \right)$ that is neither diamond nor circular is normalized.
{\bf Step 6.} We now prove that $\textmd{Pr}\left[ \mathcal{Q}^{\vec{\beta}} \right] = \textmd{Pr}\left[ \texttt{split}\left(\mathcal{Q}^{\vec{\beta}}\right)\right]$, thereby establishing lemma conclusion~\ref{split-conclusion-6}.
Note that $B_{\vec{\beta}}$ is trivially a bottleneck of $\mathcal{Q}^{\vec{\beta}}$ because it is the set of children of ${\vec{\beta}}$.
Thus, we have:
\[
\begin{array}{llll}
\textmd{Pr}\left[\mathcal{Q}^{\vec{\beta}}\right] & = & \displaystyle\sum_{\vec{\beta}' \in B_{\vec{\beta}}}{\textmd{Pr}_{\mathcal{Q}^{\vec{\beta}}}\left[ \vec{\beta}' \right] \cdot \textmd{Pr}\left[ \mathcal{Q}^{\vec{\beta}'} \right] } & \textmd{ Lemma~\ref{lem:alternative-characterization-of-pr} with } B = B_{\vec{\beta}}    \\
				& = & \displaystyle\sum_{\vec{\beta}' \in B_{\vec{\beta}}}{\textmd{Pr}_{\mathcal{Q}^{\vec{\beta}}}\left[ \vec{\beta}' \right] \cdot \textmd{Pr}\left[ \texttt{split}\left( \mathcal{Q}^{\vec{\beta}'} \right) \right] } & \textmd{ inductive hypothesis item \ref{split-ih-6}}   \\
				& = & \displaystyle\sum_{\vec{\beta}' \in B_{\vec{\beta}}}{\textmd{Pr}_{\mathcal{Q}^{\vec{\beta}}}\left[ \vec{\beta} \right] \cdot \frac{1}{m_{\texttt{split}}} \cdot \textmd{Pr}\left[ \texttt{split}\left( \mathcal{Q}^{\vec{\beta}'} \right) \right] } & \textmd{ Definition of } \mathcal{M}_{\mathcal{T}} \textmd{ and } \textmd{Pr}_{\mathcal{Q}^{\vec{\beta}}}\left[ \vec{\beta}' \right] \\
				& = & \displaystyle\sum_{\vec{\beta}' \in B_{\vec{\beta}}}{1\cdot \frac{1}{m_{\texttt{split}}} \cdot \textmd{Pr}\left[ \texttt{split}\left( \mathcal{Q}^{\vec{\beta}'} \right) \right] } & \textmd{ } \vec{\beta} \textmd{ is the root of } \mathcal{Q}^{\vec{\beta}}.
\end{array}
\]
On the one hand, if $\vec{\beta}'$ is a competing node in $\mathcal{M}_{\mathcal{T}}$, then $c_{\texttt{split}} > 0$, and by lines 14 and 20 of \texttt{split}, we have 
\begin{equation}
\label{eqn:pr-split-3}
\frac{1}{m_{\texttt{split}}} = \frac{c_{\texttt{split}}}{m_{\texttt{split}}} \cdot \frac{1}{c_{\texttt{split}}} = 
\textmd{Pr}_{\texttt{split}\left(\mathcal{Q}^{\vec{\beta}}\right)}\left[ \ddot{\beta}' \right]. 
\end{equation}
On the other hand, if $\vec{\beta}'$ is not a  competing node in $\mathcal{M}_{\mathcal{T}}$, then $m_{\texttt{split}} - c_{\texttt{split}} > 0$, and by lines 16 and 22 of \texttt{split}, we have
\begin{equation}
\label{eqn:pr-split-4}
\frac{1}{m_{\texttt{split}}} = \frac{1}{(m_{\texttt{split}}-c_{\texttt{split}})} \cdot \frac{(m_{\texttt{split}}-c_{\texttt{split}})}{m_{\texttt{split}}} = 
\textmd{Pr}_{\texttt{split}\left(\mathcal{Q}^{\vec{\beta}}\right)}\left[ \ddot{\beta}' \right].
\end{equation}
Moreover, since $\ddot{\beta}'$ is the root of $\texttt{split}\left(\mathcal{Q}^{\vec{\beta}'}\right)$ in $\texttt{split}\left(\mathcal{Q}^{\vec{\beta}}\right)$, it follows that:
\begin{equation}
\label{eqn:pr-split-5}
\textmd{Pr}\left[ \texttt{split}\left( \mathcal{Q}^{\vec{\beta}'} \right) \right]= \textmd{Pr}\left[ \texttt{split}\left( \mathcal{Q}^{\vec{\beta}} \right)^{\ddot{\beta}'} \right].
\end{equation}
Then, since every node in $\mathcal{M}_{\mathcal{T}}$ is either  competing or not, we have:
\[
\begin{array}{llll}
\textmd{Pr}\left[\mathcal{Q}^{\vec{\beta}}\right] & = & \displaystyle\sum_{\vec{\beta}' \in B_{\vec{\beta}}}{\frac{1}{m_{\texttt{split}}} \cdot \textmd{Pr}\left[ \texttt{split}\left( \mathcal{Q}^{\vec{\beta}'} \right) \right]} &    \\
					& = & \displaystyle \sum_{\vec{\beta}' \in B_{\vec{\beta}}}{\textmd{Pr}_{\texttt{split}\left(\mathcal{Q}^{\vec{\beta}}\right)}\left[ \ddot{\beta}' \right] \cdot \textmd{Pr}\left[ \texttt{split}\left( \mathcal{Q}^{\vec{\beta}'} \right) \right]} & \textmd{ Equations (\ref{eqn:pr-split-3}) and (\ref{eqn:pr-split-4}) } \\
					
					& = & \displaystyle \sum_{\vec{\beta}' \in B_{\vec{\beta}}}{\textmd{Pr}_{\texttt{split}\left(\mathcal{Q}^{\vec{\beta}}\right)}\left[ \ddot{\beta}' \right] \cdot \textmd{Pr}\left[ \texttt{split}\left(\mathcal{Q}^{\vec{\beta}}\right)^{\ddot{\beta}'} \right]} & \textmd{ Equation }(\ref{eqn:pr-split-5}) \\
\end{array}
\]

Let $\ddot{B}_{\vec{\beta}} = \left\{ \ddot{\beta}' \mid \vec{\beta}' \in B_{\vec{\beta}} \right\}$ be the set of nodes of ${\tt split}\left( \mathcal{Q}^{\vec{\beta}} \right)$ corresponding to the children of $\vec{\beta}$ in $\mathcal{Q}^{\vec{\beta}}$. 
By Observation~\ref{obs:prime-nodes-split-bijection}, there is a one-to-one correspondence between the elements of $B_{\vec{\beta}}$ and those of $\ddot{B}_{\vec{\beta}}$.
Thus:
\begin{equation}
    \label{eqn:split-probability-lemma-next-to-last}
\displaystyle\textmd{Pr}\left[\mathcal{Q}^{\vec{\beta}}\right] = \sum_{\vec{\beta}' \in B_{\vec{\beta}}}{\textmd{Pr}_{\texttt{split}\left(\mathcal{Q}^{\vec{\beta}}\right)}\left[ \ddot{\beta}' \right] \cdot \textmd{Pr}\left[ \texttt{split}\left( \mathcal{Q}^{\vec{\beta}} \right)^{\ddot{\beta}'} \right]} = \sum_{\ddot{\beta}' \in \ddot{B}_{\vec{\beta}}}{\textmd{Pr}_{\texttt{split}\left(\mathcal{Q}^{\vec{\beta}}\right)}\left[ \ddot{\beta}' \right] \cdot \textmd{Pr}\left[ \texttt{split}\left( \mathcal{Q}^{\vec{\beta}} \right)^{\ddot{\beta}'} \right]}
\end{equation}
We now prove that $\ddot{B}_{\vec{\beta}}$ is a bottleneck of ${\tt split}\left(\mathcal{Q}^{\vec{\beta}}\right)$.
Assume, for the sake of obtaining a contradiction, that $\ddot{B}_{\vec{\beta}}$ is not a bottleneck of ${\tt split}\left(\mathcal{Q}^{\vec{\beta}}\right)$.
Then, there exists a  maximal path $\pi$ in ${\tt split}\left(\mathcal{Q}^{\vec{\beta}}\right)$ such that no $\ddot{\beta}' \in \ddot{B}_{\vec{\beta}}$ belongs to $\dom{\pi}$. 
But, by lines 19 through 22 of {\tt split}, $\Diamond$ and $\bigcirc$ are the only children of $\ddot{\beta}$ in ${\tt split}\left(\mathcal{Q}^{\vec{\beta}}\right)$.
This means that either $\Diamond \in \dom{\pi}$ or $\bigcirc \in \dom{\pi}$.  
Furthermore, by definition, each node $\ddot{\beta}'$ of  $\ddot{B}_{\vec{\beta}}$ corresponds to a child  $\vec{\beta}'$ of $\vec{\beta}$ in $\mathcal{Q}^{\vec{\beta}}$.
Since each $\vec{\beta}'$ is either competing or not in $\mathcal{M}_{\mathcal{T}}$, by lines 13 through 16 of {\tt split}, each $\ddot{\beta}'$ in  $\ddot{B}_{\vec{\beta}}$ is either a child of $\Diamond$ or $\bigcirc$.
Therefore, the domain of $\pi$ must contain one element of $\ddot{B}_{\vec{\beta}}$.
Since we reached a contradiction, it follows that $\ddot{B}_{\vec{\beta}}$ is a bottleneck of ${\tt split}\left(\mathcal{Q}^{\vec{\beta}}\right)$. 
Thus, Lemma~\ref{lem:alternative-characterization-of-pr} applied to ${\tt split}\left(\mathcal{Q^{\vec{\beta}}}\right)$ and $\ddot{B}_{\vec{\beta}}$ gives:
\begin{equation}
  \label{eqn:split-probability-lemma-last}
\displaystyle\sum_{\ddot{\beta}' \in \ddot{B}_{\vec{\beta}}}{\textmd{Pr}_{\texttt{split}\left(\mathcal{Q}^{\vec{\beta}}\right)}\left[ \ddot{\beta}' \right] \cdot \textmd{Pr}\left[ \texttt{split}\left( \mathcal{Q}^{\vec{\beta}} \right)^{\ddot{\beta}'} \right]} = \textmd{Pr}\left[ {\tt split}\left(\mathcal{Q}^{\vec{\beta}}\right) \right]
\end{equation}
Finally, the equality resulting from chaining \ref{eqn:split-probability-lemma-next-to-last} and \ref{eqn:split-probability-lemma-last} together proves the lemma's conclusion~\ref{split-conclusion-6} for $\texttt{split}\left( \mathcal{Q}^{\vec{\beta}} \right)$.
\end{proof}

The next lemma says that the tile attachments that should not affect the probability of a competition in fact do not. 
\begin{lemma} % lemma 31
\label{lem:subtree-at-x-i-to-y-i-probability}
If $\mathcal{C}$ is not rigged by $\vec{\beta}_{\vec{x}}$, then $\textmd{Pr}\left[ \mathcal{P}^{\vec{\beta}_{\vec{x}}} \upharpoonright \vec{y} \right] = \textmd{Pr}\left[ \mathcal{C} \right]$.
\end{lemma}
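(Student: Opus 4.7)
My plan is to decompose $\textmd{Pr}\left[ \mathcal{P}^{\vec{\beta}_{\vec{x}}} \upharpoonright \vec{y} \right]$ via the partition of its leaves established in Lemma~\ref{lem:partition}, and then to show that each piece contributes exactly the probability of the corresponding winning assembly sequence for $\mathcal{C}$. Let $\mathcal{Q} = \mathcal{P}^{\vec{\beta}_{\vec{x}}} \upharpoonright \vec{y}$; by Lemma~\ref{lem:b-x-y-well-defined-leaf-nodes} and Lemma~\ref{lem:restricted-tree-finite}, $\mathcal{Q}$ is a well-defined, finite SPT whose leaves all terminate at $\vec{y}$. Because $\mathcal{C}$ is not rigged by $\vec{\beta}_{\vec{x}}$, Lemma~\ref{lem:partition} gives that the set of leaves of $\mathcal{Q}$ is partitioned by the sets $B_{\vec{\gamma}}$ as $\vec{\gamma}$ ranges over winning assembly sequences for $\mathcal{C}$. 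For each such $\vec{\gamma}$, I will let $\mathcal{Q}_{\vec{\gamma}}$ denote the unique subtree of $\mathcal{Q}$ whose leaf set equals $B_{\vec{\gamma}}$. Observations~\ref{obs:leaf-node-maximal-path} and~\ref{obs:prob-subtree-same-root}, together with the partition, will then yield
\[
\textmd{Pr}\left[ \mathcal{Q} \right] = \sum_{\vec{\gamma}}{\textmd{Pr}\left[ \mathcal{Q}_{\vec{\gamma}} \right]},
\]
reducing the goal to proving $\textmd{Pr}\left[\mathcal{Q}_{\vec{\gamma}}\right] = \textmd{Pr}_{\mathcal{C}}\left[\vec{\gamma}\right]$ for each $\vec{\gamma}$.

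Next, I would apply \texttt{split} to $\mathcal{Q}_{\vec{\gamma}}$ and invoke conclusion~\ref{split-conclusion-6} of Lemma~\ref{lem:split-probability} to replace $\textmd{Pr}\left[\mathcal{Q}_{\vec{\gamma}}\right]$ by $\textmd{Pr}\left[ \texttt{split}(\mathcal{Q}_{\vec{\gamma}}) \right]$. In $\texttt{split}(\mathcal{Q}_{\vec{\gamma}})$ I will establish three structural properties. First, every circular node is normalized: conclusion~\ref{lem:number-of-competing-and-non-competing-children-2} of Lemma~\ref{lem:number-of-competing-and-non-competing-children} gives $n'=n$, so conclusion~\ref{split-conclusion-3} of Lemma~\ref{lem:split-probability} yields $S_{\bigcirc}=1$. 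Second, every diamond node has $S_{\Diamond}=1/2$: Lemma~\ref{lem:number-of-competing-nodes} (together with the hypothesis that $\mathcal{C}$ is not rigged) gives $c=2$, while every internal node of $\mathcal{Q}_{\vec{\gamma}}$ has $c'=1$---the upper bound $c'\leq 1$ is conclusion~\ref{lem:number-of-competing-and-non-competing-children-1} of Lemma~\ref{lem:number-of-competing-and-non-competing-children}, and the lower bound $c'\geq 1$ follows because the competing child of an internal node that advances $\piw$ is always in $\mathcal{Q}_{\vec{\gamma}}$: its tile attachment step belongs to $\vec{\gamma}$ (since $\dom{\piw}$ is contained in the domain of every winning assembly for $\mathcal{C}$), and the resulting assembly can always be extended into some element of $B_{\vec{\gamma}}$ via conclusion~\ref{lem:extend-by-winning-and-only-winning-1} of Lemma~\ref{lem:extend-by-winning-and-only-winning}. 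Third, every non-diamond, non-circular internal node is normalized, which follows from the first two properties by direct inspection of the edge weights added on lines~20 and~22 of $\texttt{split}$.

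With these properties in hand, I would construct a renormalized tree $\mathcal{R}$ from $\texttt{split}(\mathcal{Q}_{\vec{\gamma}})$ by replacing the probability $1/2$ on the edge from each diamond $\Diamond$ to its unique child by $1$. Every internal node of $\mathcal{R}$ is then normalized, so Corollary~\ref{cor:normalized-probability-1} gives $\textmd{Pr}\left[\mathcal{R}\right]=1$. Since the tree structures of $\mathcal{R}$ and $\texttt{split}(\mathcal{Q}_{\vec{\gamma}})$ coincide, the renormalization multiplies the probability of each maximal path $\pi$ by a factor $2^{d_\pi}$, where $d_\pi$ is the number of diamond nodes on $\pi$. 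The key combinatorial fact is that $d_\pi = \left| \vec{\gamma} \right| - 1$ for every maximal path $\pi$: the diamonds on $\pi$ are in bijective correspondence with the competing tile attachment steps of the underlying root-to-leaf path in $\mathcal{Q}_{\vec{\gamma}}$, and, by the definition of $B_{\vec{\gamma}}$, those are exactly the $\left| \vec{\gamma} \right| - 1$ non-initial tile attachment steps of $\vec{\gamma}$. Therefore $1 = \textmd{Pr}\left[\mathcal{R}\right] = 2^{\left| \vec{\gamma} \right| - 1}\cdot \textmd{Pr}\left[\texttt{split}(\mathcal{Q}_{\vec{\gamma}})\right]$, which yields $\textmd{Pr}\left[\mathcal{Q}_{\vec{\gamma}}\right] = \left( 1/2 \right)^{\left| \vec{\gamma} \right| - 1} = \textmd{Pr}_{\mathcal{C}}\left[\vec{\gamma}\right]$ by Lemma~\ref{lem:domain-competition}. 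Summing over $\vec{\gamma}$ gives $\textmd{Pr}\left[\mathcal{Q}\right] = \textmd{Pr}\left[\mathcal{C}\right]$, as required.

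I expect the main obstacle to be the constancy of $d_\pi$ across all maximal paths. On the one hand, no ``extra'' competing tile (e.g., one placed on $\pil$ beyond $\vec{\gamma}$'s losing prefix) may appear on any root-to-leaf path in $\mathcal{Q}_{\vec{\gamma}}$, since such a tile would produce a strictly longer winning assembly sequence embedded in the resulting leaf and contradict the definition of $B_{\vec{\gamma}}$. On the other hand, no step of $\vec{\gamma}$ may be skipped, since every tile placed by $\vec{\gamma}$ must appear in any leaf of $B_{\vec{\gamma}}$. A related subtlety is verifying $c'\geq 1$ at every internal node; I plan to discharge this by combining the fact that $\piw$ is fully contained in every winning assembly sequence for $\mathcal{C}$ with conclusion~\ref{lem:extend-by-winning-and-only-winning-1} of Lemma~\ref{lem:extend-by-winning-and-only-winning} applied starting from the assembly reached at that internal node.
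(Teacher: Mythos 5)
Your proposal is correct and follows essentially the same approach as the paper's proof: partition the leaves of $\mathcal{P}^{\vec{\beta}_{\vec{x}}} \upharpoonright \vec{y}$ into the sets $B_{\vec{\gamma}}$ via Lemma~\ref{lem:partition}, apply $\texttt{split}$ to each $\mathcal{Q}_{\vec{\gamma}}$ and invoke conclusion~\ref{split-conclusion-6} of Lemma~\ref{lem:split-probability}, use Lemmas~\ref{lem:number-of-competing-nodes} and~\ref{lem:number-of-competing-and-non-competing-children} to show that circular and non-special internal nodes are normalized while diamond nodes have a single child with $S_\Diamond = \frac{1}{2}$, then renormalize the diamond edges (your $\mathcal{R}$ is the paper's $\mathcal{Q}'_{\vec{\gamma}}$) so Corollary~\ref{cor:normalized-probability-1} and the count of $|\vec{\gamma}|-1$ diamonds along each maximal path yield $\textmd{Pr}[\texttt{split}(\mathcal{Q}_{\vec{\gamma}})] = 2^{-(|\vec{\gamma}|-1)} = \textmd{Pr}_{\mathcal{C}}[\vec{\gamma}]$. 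One small imprecision: your argument that $c'\geq 1$ holds at \emph{every} internal node of $\mathcal{Q}_{\vec{\gamma}}$ cites conclusion~\ref{lem:extend-by-winning-and-only-winning-1} of Lemma~\ref{lem:extend-by-winning-and-only-winning}, but that conclusion constructs an extension of $\vec{\beta}_{\vec{x}}$ rather than of an arbitrary deeper node $\vec{\beta}'$; this stronger claim is also more than you need, since the $\texttt{split}$ algorithm only creates a diamond as a child of $\ddot{\beta}$ when $\vec{\beta}$ has at least one competing child in $\mathcal{Q}_{\vec{\gamma}}$, so $c'\geq 1$ is automatic for exactly the nodes under which a diamond appears, which is how the paper handles it.
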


Note that the left-hand side of the equality in Lemma~\ref{lem:subtree-at-x-i-to-y-i-probability} is the probability of $\mathcal{C}$ (i.e., the sum of probabilities of all the winning $\mathcal{T}$-assembly sequences that start at $\vec{x}$ and terminate at $\vec{y}$) computed by taking into account all possible tile attachments, including those at points along conmpeting paths corresponding to an ``earlier'' POC (i.e., a POC at which a tile has already been placed) and those points that do not  belong to any competing path.

\begin{proofsketch} To prove this lemma, we
  complete the following steps:
\begin{enumerate}[label={\bf Step \theenumi.},itemindent=6mm]
  \item We build, from $\mathcal{P}^{\vec{\beta}_{\vec{x}}} \upharpoonright \vec{y}$, a forest containing one tree $\mathcal{Q}_{\vec{\gamma}}$ for each winning assembly sequence $\vec{\gamma}$ for $\mathcal{C}$ in such a way that $\textmd{Pr}\left[ \mathcal{P}^{\vec{\beta}_{\vec{x}}} \upharpoonright \vec{y}\right]$ is equal to the sum over all such $\vec{\gamma}$s of $\textmd{Pr}\left[\mathcal{Q}_{\vec{\gamma}}\right]$.
  \item We use conclusion~\ref{split-conclusion-6} of Lemma~\ref{lem:split-probability} to infer that $\textmd{Pr}\left[ \mathcal{P}^{\vec{\beta}_{\vec{x}}} \upharpoonright \vec{y} \right]$ is equal to the sum over all $\vec{\gamma}$s of $\textmd{Pr}\left[ \texttt{split}\left( \mathcal{Q}_{\vec{\gamma}} \right) \right]$.
  \item We show that every internal non-diamond node of \texttt{split}$\left( \mathcal{Q}_{\vec{\gamma}} \right)$ is normalized.
  \item We build the SPT $\mathcal{Q}'_{\vec{\gamma}}$ that results from setting the probability of the outgoing edge on every diamond node in \texttt{split}$\left( \mathcal{Q}_{\vec{\gamma}} \right)$ to 1 and establish that $\textmd{Pr}\left[ \texttt{split}\left( \mathcal{Q}_{\vec{\gamma}} \right) \right] = \displaystyle \frac{1}{2^{\left| \vec{\gamma} \right| - 1}} \cdot \textmd{Pr}\left[ \mathcal{Q}'_{\vec{\gamma}} \right]$.
    \item We prove that every internal node of $\mathcal{Q}'_{\vec{\gamma}}$ is normalized and use Corollary~\ref{cor:normalized-probability-1} to infer that $\textmd{Pr}[ \texttt{split}\left( \mathcal{Q}_{\vec{\gamma}} \right) ] = \frac{1}{2^{\left| \vec{\gamma} \right| - 1}}=\textmd{Pr}_{\mathcal{C}}\left[ \vec{\gamma} \right]$, from which the lemma follows.
  \end{enumerate}
\end{proofsketch}
\begin{proof}
We use the example TAS $\mathcal{T}_{24}$ shown in Figure~\ref{fig:intro-example2-inside-proof} to illustrate the key steps in the proof of this lemma.
\begin{figure}[!h]
  \centering
  \includegraphics[width=.65\linewidth]{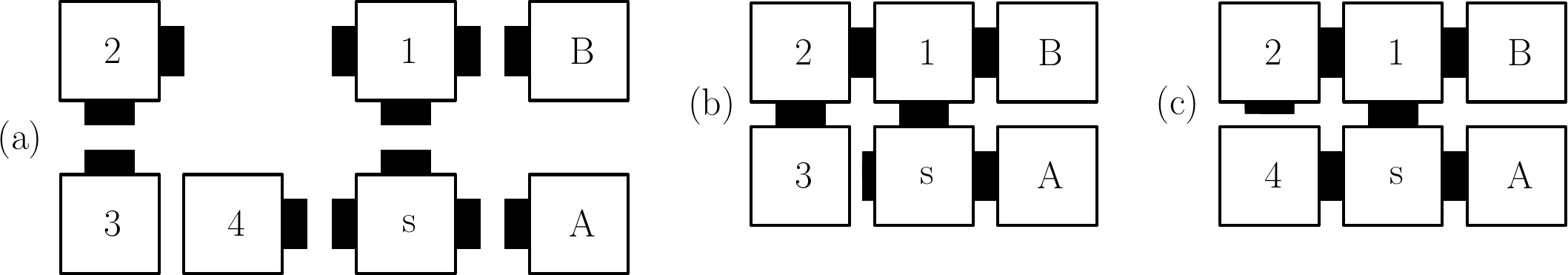}
  \caption{\label{fig:intro-example2-inside-proof} A TAS $\mathcal{T}_{24}$ with the tile set shown in (a), in which the temperature is 1 and s is the seed tile. (b) and (c) show the respective results of two different $\mathcal{T}_{24}$-producing assembly sequences. Note that this is a duplication of Figure~\ref{fig:intro-example2}. }
\end{figure}
Assume that $\alpha_4$ is the assembly depicted in Figure~\ref{fig:intro-example2-inside-proof}c, and $w$ is such that $\alpha_4$ is the unique $w$-correct $\mathcal{T}_{24}$-terminal assembly.
The SPT $\mathcal{P}_{w}$ corresponding to $\mathcal{T}_{24}$ is shown in Figure~\ref{fig:P_4}.
\begin{figure}[h!]
	\centering
	\includegraphics[width=\linewidth]{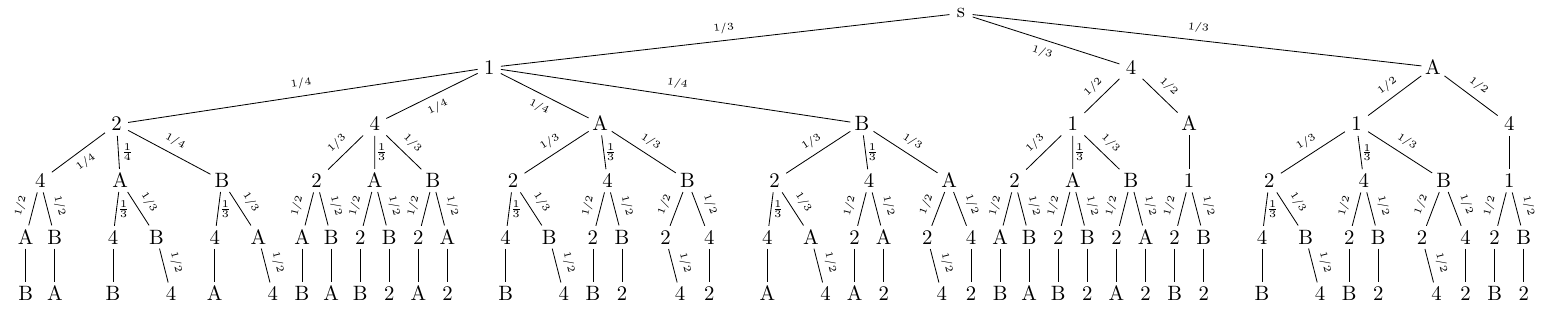}
	\caption{\label{fig:P_4} The SPT $\mathcal{P}_{w}$ that contains all (and only) maximal paths in $\mathcal{M}_{\mathcal{T}_{24}}$ that correspond to $\mathcal{T}_{24}$-producing assembly sequences resulting in the unique $w$-correct $\mathcal{T}_{24}$-terminal assembly $\alpha_4$. }
\end{figure}
By Lemma~\ref{lem:b-x-y-well-defined-leaf-nodes}, $\mathcal{P}^{\vec{\beta}_{\vec{x}}} \upharpoonright \vec{y}$ is well-defined and all of its leaf nodes are  nodes of $\mathcal{P}$ that terminate at $\vec{y}$.
If $\vec{x}$ is the unique point in the domain of the seed assembly of $\mathcal{T}_{24}$ (i.e., the point at which the s tile is placed) and $\vec{y}$ is the point at which the 4 tile is placed by $\alpha_4$, then Figure~\ref{fig:P_4_restricted_to_y} shows $\mathcal{P}^{\vec{\beta}_{\vec{x}}} \upharpoonright \vec{y}$. 
\begin{figure}[h!]
	\centering
	\includegraphics[width=\linewidth]{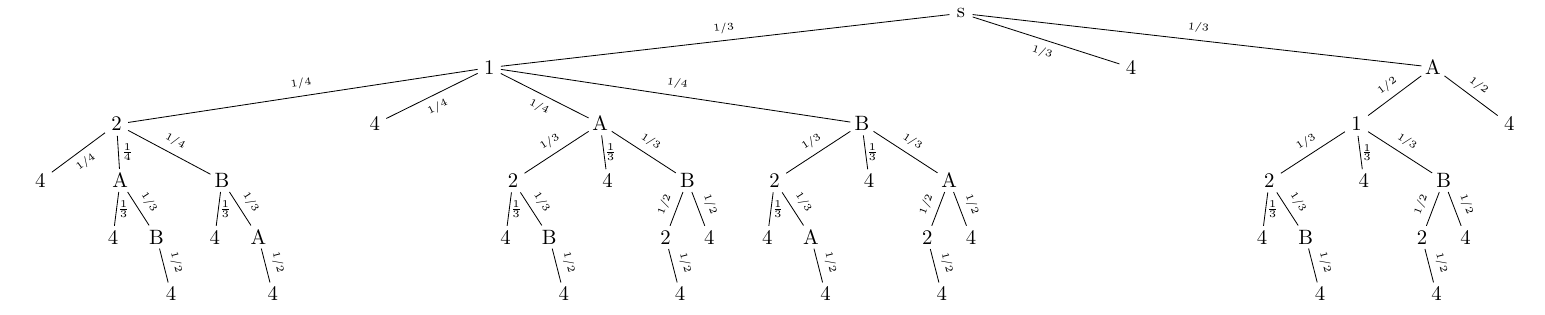}
	\caption{\label{fig:P_4_restricted_to_y} A depiction of $\mathcal{P}^{\vec{\beta}_{\vec{x}}} \upharpoonright \vec{y}$, where $\vec{x}$ is the unique point in the domain of the seed assembly of $\mathcal{T}_{24}$ (at which the s tile is placed) and $\vec{y}$ is the point at which the 4 tile is placed by $\alpha_4$.  }
\end{figure}

{\bf Step 1.} For each winning assembly sequence $\vec{\gamma}$ for $\mathcal{C}$, define $\mathcal{Q}_{\vec{\gamma}}$ to be the unique subtree of $\mathcal{P}^{\vec{\beta}_{\vec{x}}} \upharpoonright \vec{y}$ whose set of leaf nodes is $B_{\vec{\gamma}}$ (see  Definition~\ref{def:longest-winning-assembly-sequences}).
For example, let $\vec{\gamma}_{124}$ be the $\mathcal{T}_{24}$-producing assembly sequence that attaches tiles 1, 2, and 4 in that order.
Figure~\ref{fig:Q_gamma124} depicts $Q_{\vec{\gamma}_{124}}$ in our example. 
\begin{figure}[h!]
	\centering
	\includegraphics[width=\linewidth]{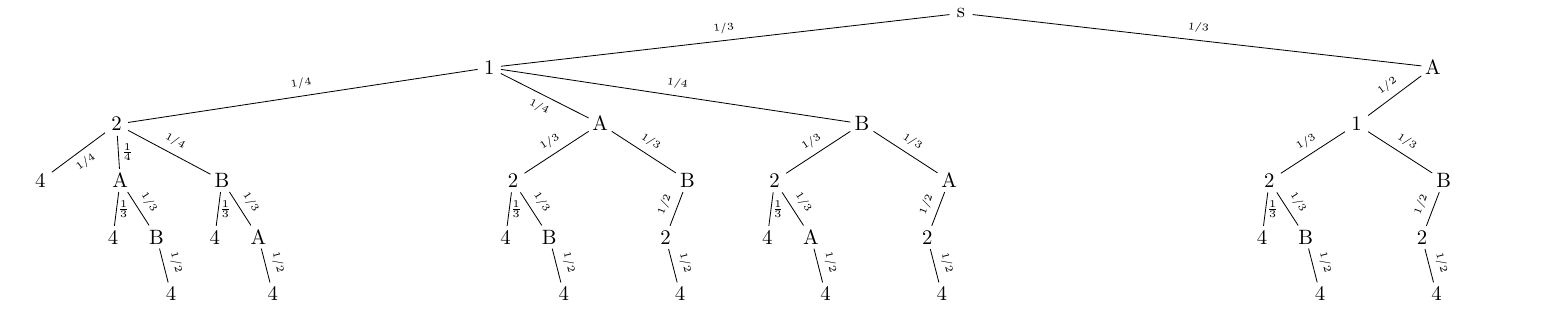}
	\caption{\label{fig:Q_gamma124} A depiction of $Q_{\vec{\gamma}_{124}}$, where $\vec{\gamma}_{124}$ is the $\mathcal{T}_{24}$-producing assembly sequence that attaches tiles 1, 2, and 4 in that order.  }
\end{figure}
Then, we have:
\[
\begin{array}{llll}
\textmd{Pr}\left[ \mathcal{P}^{\vec{\beta}_{\vec{x}}} \upharpoonright \vec{y} \right] & = & \displaystyle\sum_{\vec{\beta} \textmd{ leaf node of } \mathcal{P}^{\vec{\beta}_{\vec{x}}} \upharpoonright \vec{y}}{\textmd{Pr}_{ \mathcal{P}^{\vec{\beta}_{\vec{x}}} \upharpoonright \vec{y} }\left[ \vec{\beta} \right]} & \textmd{ Definition of } \textmd{Pr}\left[ \mathcal{P}^{\vec{\beta}_{\vec{x}}} \upharpoonright \vec{y} \right]  \\

 & = & \displaystyle\sum_{\substack{\vec{\gamma} \textmd{ winning assembly} \\\textmd{sequence for } \mathcal{C}}}{\left(\sum_{\vec{\beta} \in B_{\vec{\gamma}}}{\textmd{Pr}_{ \mathcal{P}^{\vec{\beta}_{\vec{x}}} \upharpoonright \vec{y} }\left[ \vec{\beta} \right]}\right)} & \textmd{ Lemma~\ref{lem:partition}} \\

	& = & \displaystyle\sum_{\substack{\vec{\gamma} \textmd{ winning assembly} \\\textmd{sequence for } \mathcal{C}}}{\textmd{Pr}\left[\mathcal{Q}_{\vec{\gamma}}\right]} & \textmd{ Definition of } \textmd{Pr}\left[\mathcal{Q}_{\vec{\gamma}}\right]. \\

\end{array}
\]
Moreover, Lemma~\ref{lem:restricted-tree-finite} says $\mathcal{P}^{\vec{\beta}_{\vec{x}}} \upharpoonright \vec{y}$ is finite, which means that for each winning assembly sequence $\vec{\gamma}$ for $\mathcal{C}$, $\mathcal{Q}_{\vec{\gamma}}$ is finite.

{\bf Step 2.} By conclusion~\ref{split-conclusion-6} of Lemma~\ref{lem:split-probability}, we have, for each winning assembly sequence $\vec{\gamma}$ for $\mathcal{C}$, $\textmd{Pr}\left[\mathcal{Q}_{\vec{\gamma}}\right] = \textmd{Pr}\left[ \texttt{split}\left( \mathcal{Q}_{\vec{\gamma}} \right) \right]$, and it follows that 
\begin{equation}
\label{eqn:main-eqn-to-prove}
\textmd{Pr}\left[ \mathcal{P}^{\vec{\beta}_{\vec{x}}} \upharpoonright \vec{y} \right] = \sum_{\substack{\vec{\gamma} \textmd{ winning assembly} \\\textmd{sequence for } \mathcal{C}}}{\textmd{Pr}\left[\texttt{split}\left( \mathcal{Q}_{\vec{\gamma}} \right) \right]}.
\end{equation}
It suffices to show that, for each winning assembly sequence $\vec{\gamma}$ for $\mathcal{C}$, $\textmd{Pr}\left[\texttt{split}\left( \mathcal{Q}_{\vec{\gamma}} \right)\right] = \textmd{Pr}_{\mathcal{C}}\left[ \vec{\gamma} \right]$.
Figure~\ref{fig:split_Q_gamma124} shows $\texttt{split}\left(\mathcal{Q}_{\vec{\gamma}_{124}}\right)$.
\begin{figure}[h!]
	\centering
	\includegraphics[width=\linewidth]{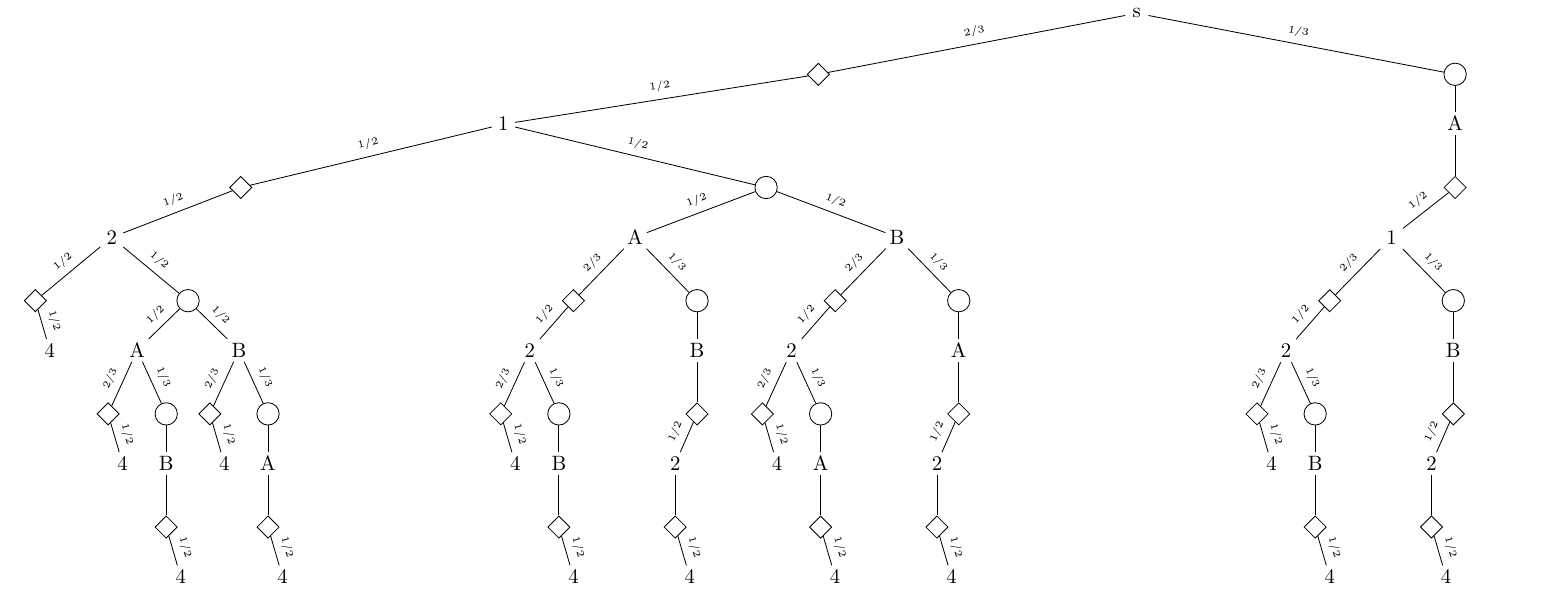}
	\caption{\label{fig:split_Q_gamma124} A depiction of $\texttt{split}\left(\mathcal{Q}_{\vec{\gamma}_{124}}\right)$. It is easy to verify that $\textmd{Pr}\left[ \mathcal{Q}_{\vec{\gamma}_{124}} \right] = \textmd{Pr}\left[\texttt{split}\left(\mathcal{Q}_{\vec{\gamma}_{124}}\right)\right]$ and that every internal non-diamond node is normalized.  }
\end{figure}
{\bf Step 3.}
Let $\vec{\gamma}$ be a winning assembly sequence for $\mathcal{C}$.
Lemma~\ref{lem:number-of-competing-and-non-competing-children} applied to $\mathcal{Q}_{\vec{\gamma}}$ says that for every internal node $\vec{\beta}$ of $\mathcal{Q}_{\vec{\gamma}}$:
\begin{enumerate}
	\item $\vec{\beta}$ has at most $c' = 1$ child in $\mathcal{Q}_{\vec{\gamma}}$ that is competing in $\mathcal{M}_{\mathcal{T}}$ corresponding to a tile attachment step in $\vec{\gamma}$ in which a competing tile attaches, and
	\item the number of children $n'$ of $\vec{\beta}$ in $\mathcal{Q}_{\vec{\gamma}}$ that are non-competing in $\mathcal{M}_{\mathcal{T}}$ is equal to the number of non-competing children $n$ of $\vec{\beta}$ in $\mathcal{M}_{\mathcal{T}}$, i.e., $n' = n$.
\end{enumerate}
Note that no internal node of $\mathcal{Q}_{\vec{\gamma}}$ has a child that terminates at $\vec{x}$ because $\mathcal{Q}_{\vec{\gamma}}$ is a subtree of $\mathcal{P}^{\vec{\beta}_{\vec{x}}} \upharpoonright \vec{y}$.
By Lemma~\ref{lem:number-of-competing-nodes}, for every internal node $\vec{\beta}$ of $\mathcal{Q}_{\vec{\gamma}}$, $\vec{\beta}$ has $c = 2$ children in $\mathcal{M}_{\mathcal{T}}$ that are competing in $\mathcal{M}_{\mathcal{T}}$.
Then: 
\begin{enumerate}
	\item conclusion~\ref{split-conclusion-2} of Lemma~\ref{lem:split-probability} says that for every diamond node $\Diamond$ of $\texttt{split}\left( \mathcal{Q}_{\vec{\gamma}} \right)$, $\Diamond$ has one outgoing edge, which means $S_{\Diamond} = \frac{c'}{c} = \frac{1}{2}$,
	\item conclusion~\ref{split-conclusion-3} of Lemma~\ref{lem:split-probability} says that for every circular node $\bigcirc$ of $\texttt{split}\left( \mathcal{Q}_{\vec{\gamma}} \right)$, $S_{\bigcirc} = \frac{n'}{n} = 1$, meaning every such node is normalized, and
	\item conclusion~\ref{split-conclusion-4} of Lemma~\ref{lem:split-probability} says that for every internal node $\ddot{\beta}$ of $\texttt{split}\left( \mathcal{Q}_{\vec{\gamma}} \right)$ that is neither diamond nor circular, $S_{\ddot{\beta}} = 1$, meaning every such node is normalized.
\end{enumerate}
This means every internal non-diamond node of $\texttt{split}\left( \mathcal{Q}_{\vec{\gamma}} \right)$ is normalized. 

{\bf Step 4.} We now compute $\textmd{Pr}\left[ \texttt{split}\left( \mathcal{Q}_{\vec{\gamma}} \right) \right]$, the definition of which is:
$$
\textmd{Pr}\left[ \texttt{split}\left( \mathcal{Q}_{\vec{\gamma}} \right) \right] = \sum_{\ddot{\beta} \textmd{ leaf node of } \texttt{split}\left( \mathcal{Q}_{\vec{\gamma}} \right)}{\textmd{Pr}_{\texttt{split}\left( \mathcal{Q}_{\vec{\gamma}} \right)}\left[ \ddot{\beta} \right]}. 
$$
Let $\ddot{\beta}$ be a leaf node of $\texttt{split}\left( \mathcal{Q}_{\vec{\gamma}} \right)$.
Then, by conclusion~\ref{split-conclusion-5} of Lemma~\ref{lem:split-probability}, $\ddot{\beta}$ is neither a circular nor a diamond node in $\texttt{split}\left( \mathcal{Q}_{\vec{\gamma}} \right)$.
By Observation~\ref{obs:prime-nodes-split-bijection}, $\ddot{\beta}$ corresponds to some $\vec{\beta} \in B_{\vec{\gamma}}$.
Thus, $\vec{\beta}$ terminates at $\vec{y}$ and $\vec{\gamma}$ is the unique longest winning assembly sequence for $\mathcal{C}$ embedded in $\vec{\beta}$.
As a result, $\vec{\beta}$ comprises exactly $\left| \vec{\gamma} \right| - 1$ tile attachment steps in which a competing tile attaches after a tile has attached at $\vec{x}$. 
Since $\vec{\gamma}$ is the unique longest winning assembly sequence for $\mathcal{C}$ embedded in $\vec{\beta}$ and line 14 of {\tt split} ensures that children of $\Diamond$ are competing in $\mathcal{M}_{\mathcal{T}}$, it follows that the path in $\texttt{split}\left( \mathcal{Q}_{\vec{\gamma}} \right)$ from the root to $\ddot{\beta}$ will contain exactly as many diamond nodes as there are assemblies in $\vec{\gamma}$ that result from the attachment of a competing tile.
Thus, the path in $\texttt{split}\left( \mathcal{Q}_{\vec{\gamma}} \right)$ from the root to $\ddot{\beta}$ will contain exactly $\left| \vec{\gamma} \right| - 1$ diamond nodes and each such node has one outgoing edge with probability $\frac{1}{2}$.
Let $\mathcal{Q}'_{\vec{\gamma}}$ be the SPT that results from setting the probability of the outgoing edge on every diamond node in $\texttt{split}\left( \mathcal{Q}_{\vec{\gamma}} \right)$ to 1.
Figure~\ref{fig:split_Q_gamma124_prime} shows $\mathcal{Q}'_{\vec{\gamma}_{124}}$.
\begin{figure}[h!]
	\centering
	\includegraphics[width=\linewidth]{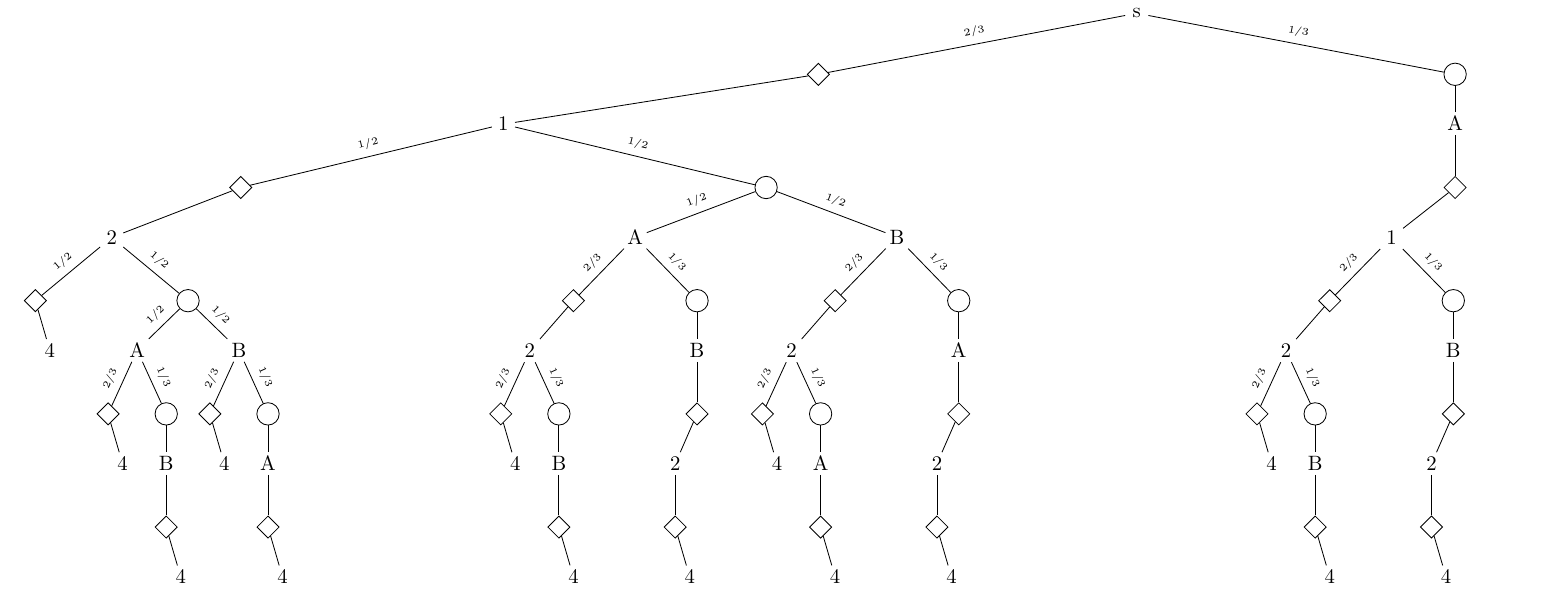}
	\caption{\label{fig:split_Q_gamma124_prime} A depiction of $\mathcal{Q}'_{\vec{\gamma}_{124}}$. Note that the probability on the outgoing edge on every diamond node is set to 1, making $\mathcal{Q}'_{\vec{\gamma}_{124}}$ normalized. It is easy to verify that $\textmd{Pr}\left[ \texttt{split}\left( \mathcal{Q}_{\vec{\gamma}_{124}} \right) \right] = \frac{1}{2^3} \cdot \textmd{Pr}\left[ \mathcal{Q}'_{\vec{\gamma}_{124}} \right]$. }
\end{figure}
Since $\mathcal{Q}'_{\vec{\gamma}}$ and $\texttt{split}\left( \mathcal{Q}_{\vec{\gamma}} \right)$ are isomorphic, it follows that
\begin{equation}
\label{eqn:split-q-prime}
\textmd{Pr}_{\texttt{split}\left( \mathcal{Q}_{\vec{\gamma}} \right)}\left[ \ddot{\beta} \right] = \frac{1}{2^{\left| \vec{\gamma}\right| - 1}} \cdot \textmd{Pr}_{\mathcal{Q}'_{\vec{\gamma}}}\left[ \ddot{\beta} \right].
\end{equation}
Then, we have:
\[
\begin{array}{llll}
\textmd{Pr}\left[ \texttt{split}\left( \mathcal{Q}_{\vec{\gamma}} \right) \right] & = & \displaystyle\sum_{\ddot{\beta} \textmd{ leaf node of } \texttt{split}\left( \mathcal{Q}_{\vec{\gamma}} \right)}{\textmd{Pr}_{\texttt{split}\left( \mathcal{Q}_{\vec{\gamma}} \right)}\left[ \ddot{\beta} \right]} & \textmd{ Definition of } \textmd{Pr}\left[ \texttt{split}\left( \mathcal{Q}_{\vec{\gamma}} \right)\right]\\

	& = & \displaystyle\sum_{\ddot{\beta} \textmd{ leaf node of } \mathcal{Q}'_{\vec{\gamma}}}{\frac{1}{2^{\left| \vec{\gamma}\right|-1}} \cdot \textmd{Pr}_{\mathcal{Q}'_{\vec{\gamma}}}\left[ \ddot{\beta} \right]} & \textmd{ Equation~(\ref{eqn:split-q-prime})} \\
	
	& = & \displaystyle \frac{1}{2^{\left| \vec{\gamma}\right|-1}} \sum_{\ddot{\beta} \textmd{ leaf node of } \mathcal{Q}'_{\vec{\gamma}}}{\textmd{Pr}_{\mathcal{Q}'_{\vec{\gamma}}}\left[ \ddot{\beta} \right]} & \\
	
	& = &  \displaystyle \frac{1}{2^{\left| \vec{\gamma}\right|-1}} \cdot \textmd{Pr}\left[ \mathcal{Q}'_{\vec{\gamma}} \right]. & \textmd{ Definition of } \textmd{Pr}\left[ \mathcal{Q}'_{\vec{\gamma}} \right]

\end{array}
\]

{\bf Step 5.}
Every internal node of $\mathcal{Q}'_{\vec{\gamma}}$ is normalized because:\vspace*{-2mm}
\begin{enumerate}[itemsep=0mm]
	\item every internal non-diamond node in $\texttt{split}\left( \mathcal{Q}_{\vec{\gamma}} \right)$ is normalized, and the  probabilities on the corresponding outgoing edges are unchanged in $\mathcal{Q}'_{\vec{\gamma}}$, and
		\item every diamond node in $\texttt{split}\left( \mathcal{Q}_{\vec{\gamma}} \right)$ has exactly one child and the probability of the corresponding edge is $\frac{1}{2}$, which gets set to $1$ in $\mathcal{Q}'_{\vec{\gamma}}$. 
\end{enumerate}

Since every internal node of $\mathcal{Q}'_{\vec{\gamma}}$ is normalized, we have:
      \[\begin{array}{llll}
        \textmd{Pr}[ \texttt{split}\left( \mathcal{Q}_{\vec{\gamma}} \right) ]
        & = & \displaystyle\frac{1}{2^{\left| \vec{\gamma}\right| - 1}} \cdot \textmd{Pr}[ \mathcal{Q}'_{\vec{\gamma}} ] & \textmd{Previous equation}\\
        & = & \displaystyle\frac{1}{2^{\left| \vec{\gamma}\right| - 1}} & \textmd{Corollary~\ref{cor:normalized-probability-1} applied to\ }  \mathcal{Q}'_{\vec{\gamma}},\\
      \end{array}
      \]
which, by conclusion~\ref{lem:dc-2} of Lemma~\ref{lem:domain-competition}, implies:

      \begin{equation}
        \label{eqn:PrSplitgammaEqualsPrCgamma}
        \textmd{Pr}[ \texttt{split}\left( \mathcal{Q}_{\vec{\gamma}} \right) ] =
       \textmd{Pr}_{\mathcal{C}}\left[\vec{\gamma}\right].
      \end{equation}

Finally, we have:
      \[
      \begin{array}{llll}
        \textmd{Pr}\left[ \mathcal{P}^{\vec{\beta}_{\vec{x}}} \upharpoonright \vec{y} \right]
        & = & \displaystyle\sum_{\substack{\vec{\gamma} \textmd{ winning assembly} \\\textmd{sequence for } \mathcal{C}}} 
         {\textmd{Pr}\left[\texttt{split}\left( \mathcal{Q}_{\vec{\gamma}} \right) \right]}
               & \textmd{Equation (\ref{eqn:main-eqn-to-prove})} \\
        & = & \displaystyle\sum_{\substack{\vec{\gamma} \textmd{ winning assembly} \\\textmd{sequence for } \mathcal{C}}}
         {\textmd{Pr}_{\mathcal{C}}\left[ \vec{\gamma} \right]} & \textmd{Equation (\ref{eqn:PrSplitgammaEqualsPrCgamma})}\\
         & = & \textmd{Pr}\left[ \mathcal{C} \right] &
         \textmd{Definition~\ref{def:competition-probability}},         \\
      \end{array}
      \]
      which establishes the lemma.

\end{proof}

The following lemma says that if $\mathcal{C}$ is rigged, then it is impossible to place the incorrect type of tile at the corresponding POC.
\begin{lemma} % lemma 32
\label{lem:p-x-y-subtree-normalized}
If $\mathcal{C}$ is rigged by $\vec{\beta}_{\vec{x}}$, then $\textmd{Pr}\left[ \mathcal{P}^{\vec{\beta}_{\vec{x}}} \upharpoonright \vec{y}\, \right] = 1$.
\end{lemma}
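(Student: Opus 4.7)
The plan is to prove the lemma by showing that every internal node of $\mathcal{P}^{\vec{\beta}_{\vec{x}}} \upharpoonright \vec{y}$ is normalized and then invoking Corollary~\ref{cor:normalized-probability-1}. By the construction of $\mathcal{P}$ and $\mathcal{M}_{\mathcal{T}}$, this amounts to demonstrating that, for every internal node $\vec{\beta}$, every child $\vec{\beta}' = \vec{\beta} + (\vec{p}^{\,\prime}, t')$ of $\vec{\beta}$ in $\mathcal{M}_{\mathcal{T}}$ is also a child of $\vec{\beta}$ in $\mathcal{P}^{\vec{\beta}_{\vec{x}}} \upharpoonright \vec{y}$. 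I would split into cases following Lemma~\ref{lem:location-of-p'}: non-competing children, competing children with $\vec{p}^{\,\prime} \neq \vec{y}$, and competing children with $\vec{p}^{\,\prime} = \vec{y}$.

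For a non-competing child, Lemma~\ref{lem:location-of-p'}(1) gives $\vec{p}^{\,\prime} \notin P \cup \dom{\piw} \cup \dom{\pil}$, so $\vec{p}^{\,\prime}$ is a non-POC at which $\alpha$ and $\res{\vec{\beta}}$ agree (by Observation~\ref{obs:alpha-w-correct-iff-node-in-p-corr-alpha} and Lemma~\ref{lem:w-correct-subassembly-w-correct}), and condition~\ref{def:dd-ad-pt-1} of Definition~\ref{def:dd-ad-pt} forces $t' = \alpha(\vec{p}^{\,\prime})$. Thus $\vec{\beta}'$ is $w$-correct, so by Lemma~\ref{lem:extend-to-alpha} it is a child of $\vec{\beta}$ in $\mathcal{P}$, and Lemma~\ref{lem:w-correct-extended-to-w-correct-terminal} extends it to $\alpha$, whose domain contains $\vec{y}$, placing it in $\mathcal{P}^{\vec{\beta}_{\vec{x}}} \upharpoonright \vec{y}$. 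For a competing child with $\vec{p}^{\,\prime} \neq \vec{y}$, Lemma~\ref{lem:location-of-p'}(2) gives $\vec{p}^{\,\prime} \in \dom{\piw} \cup \dom{\pil}$; condition~\ref{def:snd-4} of Definition~\ref{def:seq-non-deterministic} together with Lemmas~\ref{lem:x-i-y-i-ordering-r-1} and~\ref{lem:x-i-y-i-ordering} rules out $\vec{p}^{\,\prime}$ being an essential POC, and Lemma~\ref{lem:w-correct-no-inessential-poc} rules out an inessential POC, so again $\vec{p}^{\,\prime} \notin P$ and the same directional-determinism argument concludes $t' = \alpha(\vec{p}^{\,\prime})$ and $\vec{\beta}' \in \mathcal{P}^{\vec{\beta}_{\vec{x}}} \upharpoonright \vec{y}$.

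The heart of the argument, and the main obstacle, is the case $\vec{p}^{\,\prime} = \vec{y}$: I must rule out the possibility that $t'$ is the losing tile at $\vec{y}$. I would suppose toward contradiction that it is. Because $\vec{y}$ is a POC and conditions~\ref{def:cp-1} and~\ref{def:cp-2} of Definition~\ref{def:competing-paths} imply that $\piw[|\piw|-1]$ and $\pil[|\pil|-1]$ are two distinct neighbors of $\vec{y}$, condition~\ref{def:dd-ad-pt-2} of Definition~\ref{def:dd-ad-pt} forces the losing tile to bind specifically to the tile at $\pil[|\pil|-1]$. In particular, $\pil[|\pil|-1] \in \dom{\res{\vec{\beta}}}$ and the edge $\{\pil[|\pil|-1], \vec{y}\}$ appears in $\bindinggraph_{\res{\vec{\beta}'}}$, so there is a simple path from $\vec{s}$ to $\vec{y}$ in that binding graph whose last edge is in the losing-path direction. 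By condition~\ref{def:cp-5} of Definition~\ref{def:competing-paths}, this path must have $\piw$ or $\pil$ as a suffix, and the direction of the terminal edge rules out $\piw$; hence $\pil$ is a simple path in $\bindinggraph_{\res{\vec{\beta}'}}$. But $\vec{\beta}'$ extends $\vec{\beta}_{\vec{x}}$, and since $\mathcal{C}$ is rigged by $\vec{\beta}_{\vec{x}}$, Lemma~\ref{lem:rigged} says $\pil$ cannot appear as a simple path in $\bindinggraph_{\res{\vec{\beta}'}}$, a contradiction.

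Therefore $t'$ must be the winning tile $w(\vec{y}) = \alpha(\vec{y})$, so $\vec{\beta}'$ is a $w$-correct leaf of $\mathcal{P}^{\vec{\beta}_{\vec{x}}} \upharpoonright \vec{y}$. Combining the three cases, every child of every internal node in $\mathcal{M}_{\mathcal{T}}$ is a child in $\mathcal{P}^{\vec{\beta}_{\vec{x}}} \upharpoonright \vec{y}$, so each internal node is normalized, and Corollary~\ref{cor:normalized-probability-1} yields $\textmd{Pr}\left[\mathcal{P}^{\vec{\beta}_{\vec{x}}} \upharpoonright \vec{y}\right] = 1$. The most delicate point is the chain of implications linking the direction of the losing tile's binding to the appearance of $\pil$ as a whole simple path in the binding graph, which is precisely what invokes Lemma~\ref{lem:rigged} to derive the contradiction.
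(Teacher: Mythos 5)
Your proposal has the same skeleton as the paper's proof: partition the children of an arbitrary internal node using Lemma~\ref{lem:location-of-p'}, show every child in $\mathcal{M}_{\mathcal{T}}$ is $w$-correct, then use Lemma~\ref{lem:extend-to-alpha} and fullness of $\mathcal{P}^{\vec{\beta}_{\vec{x}}} \upharpoonright \vec{y}$ relative to $\mathcal{P}$ to conclude normalization and invoke Corollary~\ref{cor:normalized-probability-1}. The first two cases are handled the same way as in the paper.

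The one place your argument genuinely differs is the $\vec{p}^{\,\prime} = \vec{y}$ case, which you handle by contradiction where the paper argues directly, and this is where you have a gap. You claim that condition~\ref{def:dd-ad-pt-2} of Definition~\ref{def:dd-ad-pt} ``forces the losing tile to bind specifically to the tile at $\pil[|\pil|-1]$.'' But condition~\ref{def:dd-ad-pt-2} is only a disjunction: two distinct tiles attaching at a POC must either use different sides or bind to different tile types. By itself it rules out attaching from the $\piw[|\piw|-1]$ side (since $\res{\vec{\beta}} \sqsubseteq \alpha$ fixes the tile type there), but it says nothing about which of the three remaining sides the losing tile binds through. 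Pinning the binding side down to $\pil[|\pil|-1]$ requires condition~\ref{def:cp-5} of Definition~\ref{def:competing-paths}: any simple path from $\vec{s}$ to $\vec{y}$ in the binding graph must have $\piw$ or $\pil$ as a suffix, so the tile at $\vec{y}$ can only bind from $\piw[|\piw|-1]$ or $\pil[|\pil|-1]$. You do invoke condition~\ref{def:cp-5} in the very next sentence, but to conclude that $\pil$ is a simple path---which already presupposes the binding direction you have not yet justified. The paper avoids this circularity by applying condition~\ref{def:cp-5} and Lemma~\ref{lem:rigged} first (to conclude that $\piw$ is the suffix of \emph{every} path to $\vec{y}$, so the tile at $\vec{y}$ binds from the winning direction), and only then applying the contrapositive of condition~\ref{def:dd-ad-pt-2} to deduce $t' = \alpha(\vec{y})$ with no case analysis on what the ``losing tile'' would do. If you reorder your argument to establish the binding direction from condition~\ref{def:cp-5} before invoking Lemma~\ref{lem:rigged}, it becomes sound and essentially collapses into the paper's direct argument.
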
 

\begin{proofsketch}
  We proceed as follows:
  \begin{itemize}
	\item Prove that, for an arbitrary internal node $\vec{\beta}$ of $\mathcal{P}^{\vec{\beta}_{\vec{x}}} \upharpoonright \vec{y}$, if $\vec{\beta}'$ is an arbitrary child of $\vec{\beta}$ in $\mathcal{M}_{\mathcal{T}}$, then $\vec{\beta}'$ is $w$-correct.
	\item Invoke Lemma~\ref{lem:extend-to-alpha} applied to $\vec{\beta}$ and $\vec{\beta}'$ to conclude that $\vec{\beta}'$ is a child of $\vec{\beta}$ in $\mathcal{P}$.
	\item Observe that since every child of $\vec{\beta}$ in $\mathcal{M}_{\mathcal{T}}$ is a child of $\vec{\beta}$ in $\mathcal{P}$ and $\mathcal{P}^{\vec{\beta}_{\vec{x}}} \upharpoonright \vec{y}$ is full relative to $\mathcal{P}$, it follows that $\mathcal{P}^{\vec{\beta}_{\vec{x}}} \upharpoonright \vec{y}$ is normalized.
	\item Conclude via Corollary~\ref{cor:normalized-probability-1} that $\textmd{Pr}\left[ \mathcal{P}^{\vec{\beta}_{\vec{x}}} \upharpoonright \vec{y} \right] = 1$.
  \end{itemize}
\end{proofsketch}

\begin{proof}
Let $\vec{\beta}$ be an arbitrary internal node of $\mathcal{P}^{\vec{\beta}_{\vec{x}}} \upharpoonright \vec{y}$ and $\beta = \res{\vec{\beta}}$.
Then, $\beta \sqsubseteq \alpha$ and $\beta \not \in \mathcal{A}_{\Box}[\mathcal{T}]$ because $\beta$ is an internal node of $\mathcal{P}^{\vec{\beta}_{\vec{x}}} \upharpoonright \vec{y}$.
Let $\vec{\beta}'$ be an arbitrary child of $\vec{\beta}$ in $\mathcal{M}_{\mathcal{T}}$ such that, for some tile $\left( \vec{p}^{\; \prime}, t' \right)$, we have $\vec{\beta}' = \vec{\beta} + \left( \vec{p}^{\; \prime}, t' \right)$.
Either $\vec{\beta}'$ is competing in $\mathcal{M}_{\mathcal{T}}$ or it is not. We handle these two cases in turn.
\begin{enumerate}
\item Assume $\vec{\beta}'$ is not competing in $\mathcal{M}_{\mathcal{T}}$:
By part~\ref{lem:loc-p'-1} of Lemma~\ref{lem:location-of-p'}, we have $\vec{p}^{\; \prime} \not \in  \left( P  \cup \dom{\piw} \cup \dom{\pil} \right)$.
Since $\vec{p}^{\; \prime} \not \in P$ and $\alpha$ and $\beta$ agree, it follows by  
the contrapositive of condition~\ref{def:dd-ad-pt-1} of Definition~\ref{def:dd-ad-pt}
that $t' = \alpha\left(\vec{p}^{\; \prime}\right)$, which means $\vec{\beta}'$ is $w$-correct.
\item Assume $\vec{\beta}'$ is competing in $\mathcal{M}_{\mathcal{T}}$:
  By  part~\ref{lem:loc-p'-2} of Lemma~\ref{lem:location-of-p'}, we have $\vec{p}^{\; \prime} \in (\dom{\piw} \cup \dom{\pil})$.
Either $\vec{p}^{\; \prime} \ne \vec{y}$ or $\vec{p}^{\; \prime} = \vec{y}$. We handle these two sub-cases as follows.
\begin{enumerate}
	\item 
	Case $\vec{p}^{\; \prime} \ne \vec{y}$: Since $\vec{p}^{\; \prime} \not \in P$ and $\alpha$ and $\beta$ agree, it follows by the contrapositive of
	condition~\ref{def:dd-ad-pt-1} of Definition~\ref{def:dd-ad-pt} that $t' = \alpha\left(\vec{p}^{\; \prime}\right)$, which means $\vec{\beta}'$ is $w$-correct.

	\item Case $\vec{p}^{\; \prime} = \vec{y}$: 
	Let $p'$ be an arbitrary simple path from $\vec{s}$ to $\vec{y}$ in $G^{\textmd{b}}_{\res{\vec{\beta}'}}$.
By condition~\ref{def:cp-5} of Definition~\ref{def:competing-paths}, there exists $\pi \in \left\{ \piw, \pil \right\}$ such that $\pi$ is a suffix of $p'$.
Since $\mathcal{C}$ is rigged by $\vec{\beta}_{\vec{x}}$ and $\vec{\beta}'$ is a $\mathcal{T}$-producing extension of $\vec{\beta}_{\vec{x}}$, Lemma~\ref{lem:rigged} says that $\pil$ is not a simple path in $G^{\textmd{b}}_{\res{\vec{\beta}'}}$.
This means $\pi = \piw$.
In other words, $\piw$ is a suffix of every simple path from $\vec{s}$ to $\vec{y}$ in $G^{\textmd{b}}_{\res{\vec{\beta}'}}$.
This implies the existence of a unique unit vector $\vec{u} \in \left\{ (0,1), (1,0), (0,-1), (-1,0) \right\}$ such that every simple path from $\vec{s}$ to $\vec{y}$ in $G^{\textmd{b}}_{\res{\vec{\beta}'}}$ has $\left( \vec{y} + \vec{u}, \vec{y}\, \right)$ as a suffix.
In other words, condition~\ref{def:dd-ad-pt-2a} of Definition~\ref{def:dd-ad-pt} is ruled out.
Moreover, if $t'' \in T$ such that $\beta + \left( \vec{y}, t'' \right)$ is $\mathcal{T}$-producible, then condition~\ref{def:dd-ad-pt-2b} of Definition~\ref{def:dd-ad-pt} is ruled out.
Thus, by the contrapositive of condition~\ref{def:dd-ad-pt-2} of Definition~\ref{def:dd-ad-pt}, we have $t'' = t'$.
Note that $\vec{y} \in \partial^{\mathcal{T}}_{\alpha\left(\vec{y}\,\right)} \beta$ because $\piw\left[ \left| \piw \right| - 1 \right] \in \dom{\beta}$ and $\beta \sqsubseteq \alpha$. 
Therefore, we have $t' = \alpha\left(\vec{y}\,\right)$ and it follows that $\vec{\beta}'$ is $w$-correct.
\end{enumerate}
\end{enumerate}

At this point, we have shown that if $\vec{\beta}'$ is a child of $\vec{\beta}$ in $\mathcal{M}_{\mathcal{T}}$, then $\vec{\beta}'$ is $w$-correct.
Then, Lemma~\ref{lem:extend-to-alpha} applied to $\vec{\beta}$ and $\vec{\beta}'$ says that $\vec{\beta}'$ is a child of $\vec{\beta}$ in $\mathcal{P}$.
Note that $\mathcal{P}^{\vec{\beta}_{\vec{x}}} \upharpoonright \vec{y}$ is full relative to $\mathcal{P}$, which means $\vec{\beta}'$ is a child of $\vec{\beta}$ in $\mathcal{P}^{\vec{\beta}_{\vec{x}}} \upharpoonright \vec{y}$.
Since every child of $\vec{\beta}$ in $\mathcal{M}_{\mathcal{T}}$ is a child of $\vec{\beta}$ in $\mathcal{P}^{\vec{\beta}_{\vec{x}}} \upharpoonright \vec{y}$, it follows that every internal node of $\mathcal{P}^{\vec{\beta}_{\vec{x}}} \upharpoonright \vec{y}$ is normalized.
Thus, by Corollary~\ref{cor:normalized-probability-1}, we have $\textmd{Pr}\left[ \mathcal{P}^{\vec{\beta}_{\vec{x}}} \upharpoonright \vec{y}\, \right] = 1$. 
\end{proof}

\begin{lemma} % lemma 33
\label{lem:proability-lower-bound-probability-gamma}
$\textmd{Pr}\left[ \mathcal{P}^{\vec{\beta}_{\vec{x}}} \upharpoonright \vec{y}\, \right] \geq \textmd{Pr}\left[ \mathcal{C} \right]$
\end{lemma}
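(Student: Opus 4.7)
The plan is to dispatch this lemma by case analysis on whether the competition $\mathcal{C}$ is rigged by $\vec{\beta}_{\vec{x}}$, since the two preceding lemmas already handle exactly these two cases. There is no substantive new argument required here; the lemma is essentially a unification step that will make it convenient to invoke a single inequality (rather than an equality under a side condition) when we assemble the proof of Theorem~\ref{thm:local-non-determinism-theorem}.

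First I would split on the dichotomy: either $\mathcal{C}$ is rigged by $\vec{\beta}_{\vec{x}}$ or it is not. In the non-rigged case, Lemma~\ref{lem:subtree-at-x-i-to-y-i-probability} gives the stronger conclusion $\textmd{Pr}\bigl[ \mathcal{P}^{\vec{\beta}_{\vec{x}}} \upharpoonright \vec{y}\, \bigr] = \textmd{Pr}\left[ \mathcal{C} \right]$, from which the desired inequality is immediate.

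In the rigged case, Lemma~\ref{lem:p-x-y-subtree-normalized} yields $\textmd{Pr}\bigl[ \mathcal{P}^{\vec{\beta}_{\vec{x}}} \upharpoonright \vec{y}\, \bigr] = 1$. It therefore suffices to observe that $\textmd{Pr}\left[ \mathcal{C} \right] \leq 1$. This bound can be verified by noting that $\textmd{Pr}\left[ \mathcal{C} \right]$ is defined (Definition~\ref{def:competition-probability}) as a sum over winning assembly sequences for $\mathcal{C}$ of their probabilities in the SPT $\mathcal{M}'_{\mathcal{C}}$, which is itself a sub-SPT of the Markov chain $\mathcal{M}_{\mathcal{C}}$. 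Since the total probability mass of any collection of maximal paths in an SPT obtained by pruning nodes of a Markov chain cannot exceed 1 (each internal node of the original Markov chain is normalized, and removing subtrees only decreases the sum), we get $\textmd{Pr}\left[ \mathcal{C} \right] \leq 1$. Combining this with $\textmd{Pr}\bigl[ \mathcal{P}^{\vec{\beta}_{\vec{x}}} \upharpoonright \vec{y}\, \bigr] = 1$ finishes the rigged case.

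The only conceivable obstacle is justifying $\textmd{Pr}\left[ \mathcal{C} \right] \leq 1$ cleanly; should a direct appeal to ``it is a sum of pruned-path probabilities in a Markov chain'' feel too terse, one can alternatively invoke Corollary~\ref{cor:competition-probability-sum} to express $\textmd{Pr}\left[ \mathcal{C} \right]$ as a finite partial sum of the binomial-style identity $\sum_{i \geq 0}\binom{l+i-2}{l-2}\bigl(\tfrac{1}{2}\bigr)^{l+i-1} = 1$, which makes the bound by 1 an elementary algebraic fact. Either way the argument is a two-line case split, and no new machinery beyond the two immediately preceding lemmas is needed.
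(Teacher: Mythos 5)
Your proof is correct and takes essentially the same approach as the paper: a case split on whether $\mathcal{C}$ is rigged by $\vec{\beta}_{\vec{x}}$, invoking Lemma~\ref{lem:subtree-at-x-i-to-y-i-probability} in the non-rigged case and Lemma~\ref{lem:p-x-y-subtree-normalized} in the rigged case; the paper simply asserts $1 \geq \textmd{Pr}[\mathcal{C}]$ without the extra justification you supply. One small caution on your fallback justification: Corollary~\ref{cor:competition-probability-sum} is stated under the hypothesis that $\mathcal{C}$ is not rigged in $\mathcal{T}$, so it is not unconditionally available in the rigged case; your primary argument (the probabilities being a sum over an antichain of nodes in the pruned Markov chain $\mathcal{M}'_{\mathcal{C}}$) is the one that works in general.
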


\begin{proof}

If $\mathcal{C}$ is not rigged by $\vec{\beta}_{\vec{x}}$, then by Lemma~\ref{lem:subtree-at-x-i-to-y-i-probability}, we have $\textmd{Pr}\left[ \mathcal{P}^{\vec{\beta}_{\vec{x}}} \upharpoonright \vec{y}\, \right] = \textmd{Pr}\left[ \mathcal{C} \right]$.
Otherwise, by Lemma~\ref{lem:p-x-y-subtree-normalized}, we have $\textmd{Pr}\left[ \mathcal{P}^{\vec{\beta}_{\vec{x}}} \upharpoonright \vec{y}\, \right] = 1 \geq \textmd{Pr}\left[ \mathcal{C} \right]$.
\end{proof}

If  $\vec{x}$ is the starting point corresponding to some POC $\vec{y}$ and
we interpret $\frac{\textmd{Pr}\left[ \mathcal{P} \upharpoonright \vec{y}\, \right]}{\textmd{Pr}\left[ \mathcal{P} \upharpoonright \vec{x}\, \right]}$ as the
probability of the ``level of $\mathcal{P}$ between $\vec{x}$ and $\vec{y}\;$'', then the following lemma combines the previous machinery to prove that the probability of this level of $\mathcal{P}$ is bounded from below by the probability of the competition associated with $\vec{y}$.

\begin{lemma} % lemma 34
\label{lem:probability-y-i-equals-x-i}
$\mathcal{P} \upharpoonright \vec{y}$ and $\mathcal{P} \upharpoonright \vec{x}$ are well-defined and
$\textmd{Pr}\left[ \mathcal{P} \upharpoonright \vec{y}\, \right] \geq \textmd{Pr}\left[ \mathcal{P} \upharpoonright \vec{x}\, \right] \cdot \textmd{Pr}\left[ \mathcal{C} \right]$.
\end{lemma}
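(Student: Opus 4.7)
The plan is to follow the same template used in the proof of Lemma~\ref{lem:probability-y-i-equals-probability-x-i-plus-1}, but replace the step that collapses an intermediate level with probability $1$ (via Lemma~\ref{lem:terminate-at-y-i-x-i-plus-1-pr-1}) by the lower-bound step from Lemma~\ref{lem:proability-lower-bound-probability-gamma}. This turns the corresponding equality into the desired inequality, and everything else is just bookkeeping over a bottleneck of nodes.

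First, I would establish that $\mathcal{P} \upharpoonright \vec{x}$ and $\mathcal{P} \upharpoonright \vec{y}$ are well-defined. Let $\pi$ be any maximal path in $\mathcal{P}$. By Definition~\ref{def:w-pruned}, $\pi$ corresponds to a $w$-correct $\mathcal{T}$-producing assembly sequence $\vec{\alpha}$ resulting in $\alpha$ with $Y \subseteq \dom{\alpha}$. Lemma~\ref{lem:x-i-y-i-ordering-r-1} then gives $\vec{x} \in \dom{\alpha}$ with $\textmd{index}_{\vec{\alpha}}(\vec{x}) < \textmd{index}_{\vec{\alpha}}(\vec{y})$, so $\dom{\pi}$ contains both a node terminating at $\vec{x}$ and (strictly below it) a node terminating at $\vec{y}$. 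Hence $\mathcal{P} \upharpoonright \vec{x}$ and $\mathcal{P} \upharpoonright \vec{y}$ are well-defined, $\mathcal{P} \upharpoonright \vec{x}$ is a subtree of $\mathcal{P} \upharpoonright \vec{y}$, and if $B_{\vec{x}}$ denotes the set of nodes of $\mathcal{P}$ that terminate at $\vec{x}$, then $B_{\vec{x}}$ is a bottleneck of both $\mathcal{P}$ and $\mathcal{P} \upharpoonright \vec{y}$. By Lemma~\ref{lem:restricted-tree-finite}, $\mathcal{P} \upharpoonright \vec{x}$ is finite and $B_{\vec{x}}$ is exactly its set of leaf nodes.

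Next, I would apply Lemma~\ref{lem:alternative-characterization-of-pr} with $\mathcal{Q} = \mathcal{P} \upharpoonright \vec{y}$ and $B = B_{\vec{x}}$ to obtain
\[
\textmd{Pr}\left[\mathcal{P} \upharpoonright \vec{y}\right] \;=\; \sum_{\vec{\beta}_{\vec{x}} \in B_{\vec{x}}} \textmd{Pr}_{\mathcal{P} \upharpoonright \vec{y}}\left[\vec{\beta}_{\vec{x}}\right] \cdot \textmd{Pr}\left[(\mathcal{P} \upharpoonright \vec{y})^{\vec{\beta}_{\vec{x}}}\right].
\]
Lemma~\ref{lem:p-restricted-u} applied with $\mathcal{Q} = \mathcal{P} \upharpoonright \vec{y}$, $B = B_{\vec{x}}$, and $v = \vec{\beta}_{\vec{x}}$ rewrites each inner subtree as $\mathcal{P}^{\vec{\beta}_{\vec{x}}} \upharpoonright \vec{y}$, and then Lemma~\ref{lem:proability-lower-bound-probability-gamma} lower-bounds each such subtree probability by $\textmd{Pr}[\mathcal{C}]$. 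Factoring $\textmd{Pr}[\mathcal{C}]$ out of the sum yields
\[
\textmd{Pr}\left[\mathcal{P} \upharpoonright \vec{y}\right] \;\geq\; \textmd{Pr}[\mathcal{C}] \cdot \sum_{\vec{\beta}_{\vec{x}} \in B_{\vec{x}}} \textmd{Pr}_{\mathcal{P} \upharpoonright \vec{y}}\left[\vec{\beta}_{\vec{x}}\right].
\]

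Finally, I would use Observation~\ref{obs:prob-subtree-same-root} (with $\mathcal{Q} = \mathcal{P} \upharpoonright \vec{y}$ and $\mathcal{Q}' = \mathcal{P} \upharpoonright \vec{x}$, which share a root and satisfy $\mathcal{Q}' \subseteq \mathcal{Q}$) to replace $\textmd{Pr}_{\mathcal{P} \upharpoonright \vec{y}}\left[\vec{\beta}_{\vec{x}}\right]$ with $\textmd{Pr}_{\mathcal{P} \upharpoonright \vec{x}}\left[\vec{\beta}_{\vec{x}}\right]$, and then recognize the resulting sum over the leaf nodes of $\mathcal{P} \upharpoonright \vec{x}$ as $\textmd{Pr}\left[\mathcal{P} \upharpoonright \vec{x}\right]$ via Observation~\ref{obs:leaf-node-maximal-path}. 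I don't expect any genuine obstacle here, since the real work is hidden inside Lemma~\ref{lem:proability-lower-bound-probability-gamma}; the only step that requires a moment of thought is verifying that $B_{\vec{x}}$ is a bottleneck of $\mathcal{P} \upharpoonright \vec{y}$ (not just of $\mathcal{P}$), which follows because $\mathcal{P} \upharpoonright \vec{y}$ is a subtree of $\mathcal{P}$ whose every root-to-leaf path is a prefix of a maximal path in $\mathcal{P}$ and therefore passes through exactly one node in $B_{\vec{x}}$ on its way down to (or past) a node terminating at $\vec{y}$.
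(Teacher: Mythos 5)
Your proposal is correct and follows essentially the same route as the paper's own proof: well-definedness via Definition~\ref{def:w-pruned} and Lemma~\ref{lem:x-i-y-i-ordering-r-1}, then the bottleneck set of nodes terminating at $\vec{x}$, followed by the same chain through Lemma~\ref{lem:alternative-characterization-of-pr}, Lemma~\ref{lem:p-restricted-u}, Lemma~\ref{lem:proability-lower-bound-probability-gamma}, Observation~\ref{obs:prob-subtree-same-root}, and Observation~\ref{obs:leaf-node-maximal-path}. Your framing of the lemma as the template of Lemma~\ref{lem:probability-y-i-equals-probability-x-i-plus-1} with the probability-$1$ collapse replaced by the $\geq \textmd{Pr}[\mathcal{C}]$ bound is exactly what the paper does.
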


\begin{proof}
Let $\pi$ be an arbitrary maximal path in $\mathcal{P}$. 
By Definition~\ref{def:w-pruned}, $\pi$ corresponds to a $w$-correct $\mathcal{T}$-producing assembly sequence $\vec{\alpha}$ that results in $\alpha$, the unique $w$-correct terminal assembly of $\mathcal{T}$ such that $Y \subseteq \dom{\alpha}$.
This means there is a node in $\dom{\pi}$ that terminates at $\vec{y}$.
Thus, $\mathcal{P} \upharpoonright \vec{y}$ is well-defined.
Since $Y \subseteq \dom{\alpha}$, Lemma~\ref{lem:x-i-y-i-ordering-r-1} implies that $\vec{x} \in \dom{\alpha}$ and $\textmd{index}_{\vec{\alpha}}\left( \vec{x}\, \right) < \textmd{index}_{\vec{\alpha}}\left(\vec{y}\,\right)$.
Thus, $\mathcal{P} \upharpoonright \vec{x}$ is well-defined and there is a node in $\dom{\pi}$ that terminates at $\vec{x}$ and is an ancestor in $\mathcal{P}$ of the node in $\dom{\pi}$ that terminates at $\vec{y}$.
This means $\mathcal{P} \upharpoonright \vec{x}$ is a subtree of $\mathcal{P} \upharpoonright \vec{y}$.
If we define $B$ to be the set of nodes of $\mathcal{P}$ that terminate at $\vec{x}$, then $B$ is a bottleneck of $\mathcal{P}  \upharpoonright \vec{y}$.
Moreover, Lemma~\ref{lem:restricted-tree-finite} implies that $\mathcal{P} \upharpoonright \vec{x}$ is finite and the set of its leaf nodes  is $B$.
Then, we have:
\[
\begin{array}{llll}
\textmd{Pr}\left[\mathcal{P} \upharpoonright \vec{y}\,\right] & = & \displaystyle\sum_{\vec{\beta}_{\vec{x}} \in B}{\left( \textmd{Pr}_{\mathcal{P} \upharpoonright \vec{y}}\left[\vec{\beta}_{\vec{x}}\right] \cdot \textmd{Pr}\left[ \left( \mathcal{P} \upharpoonright \vec{y}\, \right)^{\vec{\beta}_{\vec{x}}} \right]\right)} & \textmd{ Lemma~\ref{lem:alternative-characterization-of-pr}, with } \mathcal{Q} = \mathcal{P} \upharpoonright \vec{y} \\

	 & = & \displaystyle\sum_{\vec{\beta}_{\vec{x}} \in B}{\left( \textmd{Pr}_{\mathcal{P} \upharpoonright \vec{y}}\left[\vec{\beta}_{\vec{x}}\right] \cdot \textmd{Pr}\left[ \mathcal{P}^{\vec{\beta}_{\vec{x}}} \upharpoonright \vec{y} \right]\right)} & \textmd{ Lemma~\ref{lem:p-restricted-u} with } \mathcal{Q} = \mathcal{P} \upharpoonright \vec{y} \textmd{ and } v = \vec{\beta}_{\vec{x}}  \\
	 
	 & \geq & \displaystyle\sum_{\vec{\beta}_{\vec{x}} \in B}{\left( \textmd{Pr}_{\mathcal{P} \upharpoonright \vec{y}}\left[\vec{\beta}_{\vec{x}}\right] \cdot \textmd{Pr}\left[ \mathcal{C} \right] \right)} & \textmd{ Lemma~\ref{lem:proability-lower-bound-probability-gamma} } \\
	 
	 & = & \displaystyle\left( \sum_{\vec{\beta}_{\vec{x}} \in B}{ \textmd{Pr}_{\mathcal{P} \upharpoonright \vec{y}}\left[\vec{\beta}_{\vec{x}}\right]  }  \right) \cdot \textmd{Pr}\left[ \mathcal{C} \right]&  \\
	 
	 & = & \displaystyle\left( \sum_{\vec{\beta}_{\vec{x}} \textmd{ leaf node of } \mathcal{P}\upharpoonright \vec{x}}{ \textmd{Pr}_{\mathcal{P} \upharpoonright \vec{y}}\left[\vec{\beta}_{\vec{x}}\right]  }  \right) \cdot \textmd{Pr}\left[ \mathcal{C} \right] & \textmd{ Definition of } B \\

	  & = & \displaystyle\left( \sum_{\vec{\beta}_{\vec{x}} \textmd{ leaf node of } \mathcal{P}\upharpoonright \vec{x}}{ \textmd{Pr}_{\mathcal{P} \upharpoonright \vec{x}}\left[\vec{\beta}_{\vec{x}}\right]  }  \right) \cdot \textmd{Pr}\left[ \mathcal{C} \right] & \textmd{ Obs.~\ref{obs:prob-subtree-same-root} with } \mathcal{Q} = \mathcal{P} \upharpoonright \vec{y} \textmd{, } \mathcal{Q}' = \mathcal{P} \upharpoonright \vec{x} \textmd{, and } v = \vec{\beta}_{\vec{x}} \\
	 
	& = & \displaystyle\left( \sum_{\pi \textmd{ maximal path in } \mathcal{P}\upharpoonright \vec{x}}{ \textmd{Pr}_{\mathcal{P} \upharpoonright \vec{x}}[\pi]  }  \right) \cdot \textmd{Pr}\left[ \mathcal{C} \right] & \textmd{ Observation~\ref{obs:leaf-node-maximal-path} with } \mathcal{Q} = \mathcal{P} \upharpoonright \vec{x} \\

	& = & \textmd{Pr}\left[ \mathcal{P} \upharpoonright \vec{x}\, \right] \cdot \textmd{Pr}\left[ \mathcal{C} \right]  & \textmd{ Definition of } \textmd{Pr}\left[ \mathcal{P} \upharpoonright \vec{x}\, \right].
\end{array}
\]
\end{proof}

We now have sufficient machinery to prove our main result in this section, namely:
\setcounter{theorem}{0}
\begin{theorem}
If $\mathcal{C}_1, \ldots, \mathcal{C}_r$ are the competitions in $\mathcal{T}$, then $\mathcal{T}$ strictly self-assembles $\dom{\alpha}$ with probability at least $\displaystyle\prod_{i=1}^{r}{\textmd{Pr}\left[ \mathcal{C}_i \right]}$.
\end{theorem}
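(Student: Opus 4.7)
The plan is to interpret the left-hand side of inequality~(\ref{eqn:main-inequality}) as $\textmd{Pr}[\mathcal{P}]$ (via Definition~\ref{def:w-pruned}) and then peel off the ``levels'' of $\mathcal{P}$ one at a time, from the bottom up, using the machinery already developed. Each peel either preserves the probability exactly or contributes exactly one factor $\textmd{Pr}[\mathcal{C}_i]$ as a lower bound, giving the desired product after $r$ peels.

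First I would observe that since $\mathcal{P}$ contains all and only the maximal paths of $\mathcal{M}_{\mathcal{T}}$ resulting in $\alpha$, and since $\dom{\alpha'}=\dom{\alpha}$ for every $w$-correct terminal assembly $\alpha'$ (in fact, by Corollary~\ref{cor:unique-w-correct-assembly}, $\alpha'=\alpha$), it suffices to show $\textmd{Pr}[\mathcal{P}] \geq \prod_{i=1}^{r}{\textmd{Pr}[\mathcal{C}_i]}$. Then I would apply Lemma~\ref{lem:probability-y-r} to collapse the bottom portion of $\mathcal{P}$ below the last POC, yielding $\textmd{Pr}[\mathcal{P}] = \textmd{Pr}[\mathcal{P} \upharpoonright \vec{y}_r]$.

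Next, I would set up an induction (or equivalently a finite descending chain) on $i$ from $r$ down to $1$, using the identity
\[
\textmd{Pr}[\mathcal{P} \upharpoonright \vec{y}_i] \geq \textmd{Pr}[\mathcal{P} \upharpoonright \vec{x}_i]\cdot\textmd{Pr}[\mathcal{C}_i]
\]
given by Lemma~\ref{lem:probability-y-i-equals-x-i}, followed (for $i>1$) by the equality $\textmd{Pr}[\mathcal{P} \upharpoonright \vec{x}_i] = \textmd{Pr}[\mathcal{P} \upharpoonright \vec{y}_{i-1}]$ given by Lemma~\ref{lem:probability-y-i-equals-probability-x-i-plus-1} (applied with index $i-1$). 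Each such pair of steps strips one competition factor off and reduces the problem from the level ending at $\vec{y}_i$ to the level ending at $\vec{y}_{i-1}$. After $r$ iterations the chain terminates at $\textmd{Pr}[\mathcal{P} \upharpoonright \vec{x}_1]$, which equals $1$ by Lemma~\ref{lem:probability-x-1}. Chaining all the inequalities and equalities gives
\[
\textmd{Pr}[\mathcal{P}] \;=\; \textmd{Pr}[\mathcal{P} \upharpoonright \vec{y}_r] \;\geq\; \textmd{Pr}[\mathcal{P} \upharpoonright \vec{y}_{r-1}]\cdot\textmd{Pr}[\mathcal{C}_r] \;\geq\; \cdots \;\geq\; \textmd{Pr}[\mathcal{P} \upharpoonright \vec{x}_1]\cdot\prod_{i=1}^{r}{\textmd{Pr}[\mathcal{C}_i]} \;=\; \prod_{i=1}^{r}{\textmd{Pr}[\mathcal{C}_i]}.
\]

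There is no genuine obstacle left at this stage; essentially all of the difficulty has been discharged into the preceding lemmas. The subtlety I would be most careful to flag in the writeup is why the well-definedness hypotheses of the successive restriction operations are met at each level (each one follows from the fact that any maximal path of $\mathcal{P}$ corresponds to a $w$-correct $\mathcal{T}$-producing assembly sequence resulting in $\alpha$, together with the ordering of indices guaranteed by Lemmas~\ref{lem:x-i-y-i-ordering-r-1} and~\ref{lem:x-i-y-i-ordering}), and why the $r=1$ edge case is handled correctly (in that case only Lemmas~\ref{lem:probability-y-r},~\ref{lem:probability-y-i-equals-x-i} applied at $i=1$, and~\ref{lem:probability-x-1} are needed, and Lemma~\ref{lem:probability-y-i-equals-probability-x-i-plus-1} is never invoked, matching its hypothesis $r>1$).
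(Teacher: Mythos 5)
Your proof is correct and takes essentially the same route as the paper: both reduce to showing $\textmd{Pr}[\mathcal{P}] \geq \prod_{i=1}^{r}\textmd{Pr}[\mathcal{C}_i]$ and then collapse the levels of $\mathcal{P}$ using exactly Lemmas~\ref{lem:probability-y-r}, \ref{lem:probability-y-i-equals-x-i}, \ref{lem:probability-y-i-equals-probability-x-i-plus-1}, and~\ref{lem:probability-x-1}; your ``descending chain'' from $\vec{y}_r$ to $\vec{x}_1$ is just the paper's ascending induction on $j$ unwound in the opposite direction. The only small quibble is the justification of the initial reduction: what licenses $\sum_{\dom{\alpha'}=\dom{\alpha}}\textmd{Pr}_{\mathcal{T}}[\alpha'] \geq \textmd{Pr}[\mathcal{P}]$ is simply that $\alpha$ is one of the terminal assemblies in that sum and $\textmd{Pr}[\mathcal{P}]=\textmd{Pr}_{\mathcal{T}}[\alpha]$, not the uniqueness of the $w$-correct terminal assembly (the potential slack comes from \emph{non}-$w$-correct terminal assemblies that happen to have the same domain).
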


\begin{proof}
Let $A = \dom{\alpha}$.
To show that $\mathcal{T}$ strictly self-assembles $A$ with probability at least $\displaystyle\prod_{i=1}^{r}{\textmd{Pr}\left[ \mathcal{C}_i \right]}$, we must show that
$\displaystyle\sum_{\substack{\alpha' \in \mathcal{A}_{\Box}[\mathcal{T}]  \\ \dom{\alpha'}=A}}{\textmd{Pr}_{\mathcal{T}}[\alpha']} \geq \prod_{i=1}^{r}{\textmd{Pr}\left[ \mathcal{C}_i \right]}$.
First, note that:
\[
\begin{array}{llll}
\displaystyle\sum_{\substack{\alpha' \in \mathcal{A}_{\Box}[\mathcal{T}] \\ \dom{\alpha'}=A}}{\textmd{Pr}_{\mathcal{T}}[\alpha']} & = & \displaystyle\sum_{\substack{\alpha' \in \mathcal{A}_{\Box}[\mathcal{T}] \\ \dom{\alpha'}=A}}{\left(\sum_{\substack{\vec{\alpha} \textmd{ is a } \mathcal{T}\textmd{-producing assembly}\\ \textmd{sequence and } \res{\vec{\alpha}}=\alpha'}}{\textmd{Pr}_{\mathcal{M}_{\mathcal{T}}}[\vec{\alpha}]}\right)} & \textmd{ Definition of } \textmd{Pr}_{\mathcal{T}}\left[ \alpha' \right] \\

	& \geq & \displaystyle\sum_{\substack{\vec{\alpha} \textmd{ is a } w\textmd{-correct } \mathcal{T}\textmd{-producing}\\ \textmd{assembly sequence and } \res{\vec{\alpha}}=\alpha}}{\textmd{Pr}_{\mathcal{M}_{\mathcal{T}}}[\vec{\alpha}]} & \\
	
	& = & \textmd{Pr}[\mathcal{P}] & \textmd{ Definition~\ref{def:w-pruned}}, \\
\end{array}
\]
where the inequality follows from the fact that a $\mathcal{T}$-terminal assembly having domain $A$ need not be the unique $w$-correct $\mathcal{T}$-terminal assembly.
Therefore, it suffices to show that
$\displaystyle\textmd{Pr}[ \mathcal{P} ] \geq \prod_{i=1}^{r}{\textmd{Pr}\left[ \mathcal{C}_i\right]}$.
To that end, we now show that, for all $1 \leq j \leq r$,
\begin{equation}
\label{eqn:ih}
\textmd{Pr}\left[ \mathcal{P} \upharpoonright \vec{y}_j \right] \geq \prod_{i=1}^{j}{\textmd{Pr}\left[ \mathcal{C}_i \right]}.
\end{equation}
For any $r \in \mathbb{Z}^+$, we will prove (\ref{eqn:ih}) by induction on $j$.
First, note that, for any $r \in \mathbb{Z}^+$, and $j = 1$, we have
\[
\begin{array}{llll}
\textmd{Pr}\left[ \mathcal{P} \upharpoonright \vec{y}_1 \right] & \geq & \textmd{Pr}\left[ \mathcal{P} \upharpoonright \vec{x}_1 \right] \cdot \textmd{Pr}\left[ \mathcal{C}_1 \right] & \textmd{ by Lemma~\ref{lem:probability-y-i-equals-x-i}} \\
	& = & \textmd{Pr}\left[ \mathcal{C}_1 \right] & \textmd{ by Lemma~\ref{lem:probability-x-1}}.
\end{array}
\]
Assume that $r > 1$ and (\ref{eqn:ih}) holds for $j = k$ such that $1 < k < r$.
Then, we have
\[
\begin{array}{llll}

\textmd{Pr}\left[ \mathcal{P} \upharpoonright \vec{y}_{k+1} \right] & \geq & \textmd{Pr}\left[ \mathcal{P} \upharpoonright \vec{x}_{k+1} \right] \cdot \textmd{Pr}\left[ \mathcal{C}_{k+1} \right] & \textmd{ Lemma~\ref{lem:probability-y-i-equals-x-i}} \\
	& = & \textmd{Pr}\left[ \mathcal{P} \upharpoonright \vec{y}_k \right] \cdot \textmd{Pr}\left[ \mathcal{C}_{k+1} \right] & \textmd{ Lemma~\ref{lem:probability-y-i-equals-probability-x-i-plus-1}} \\
	& \geq & \displaystyle\left( \prod_{i=1}^{k}{\textmd{Pr}\left[ \mathcal{C}_i \right]} \right) \cdot \textmd{Pr}\left[ \mathcal{C}_{k+1} \right] & \textmd{ Inequality~(\ref{eqn:ih}), for $j=k$} \\
	& = & \displaystyle\prod_{i=1}^{k+1}{\textmd{Pr}\left[ \mathcal{C}_i \right]}. &
\end{array}
\]
Thus,
\[
\begin{array}{llll}

 \textmd{Pr}\left[ \mathcal{P} \right] 	& = & \textmd{Pr}\left[ \mathcal{P} \upharpoonright \vec{y}_r \right]  & \textmd{ Lemma~\ref{lem:probability-y-r}}  \\
	& \geq & \displaystyle\prod_{i=1}^{r}{\textmd{Pr}\left[ \mathcal{C}_i \right]} & \textmd{ Inequality~(\ref{eqn:ih}), for } j = r.
\end{array}
\]
\end{proof}

\section{Efficient, sequentially non-deterministic self-assembly of an $N \times N$ square with high probability}
\label{sec:square}
In this section, we prove our ``main theorem'' constructively namely
that, for every $N \in \mathbb{Z}^+$ and real $\delta > 0$, there
exists a TAS that strictly self-assembles an $N \times N$ square with
probability at least $1-\delta$ and uses only $O\left(\log N + \log
\frac{1}{\delta}\right)$ tile types.

This section is broken down into six subsections. The first one
provides an overview of our square construction, describes the counter
template that is the main building block of this construction, and
defines the parameters of this counter construction. The second
subsection both states and provides a proof sketch for our ``main lemma''
pertaining to the  counter construction. The third subsection contains the
first two steps in the proof of the main lemma. The fourth subsection
describes the types of incorrect assemblies that may result from our
counter construction. The fifth subsection contains the last four
steps in the proof of the main lemma. Finally, the sixth subsection uses
the main lemma to prove our main theorem.

\subsection{Overview of the construction}

Our square construction uses three modified copies of a single
template for a zigzag binary counter. This counter template is $K$
tiles wide and $n$ tiles high, with $n \leq N$. It is built out of
``gadgets'', which are groups of tiles that self-assemble within a
fixed region of space to accomplish a specific task, e.g.,
non-deterministically read (i.e., guess) a single unknown,
geometrically-represented binary value. Since our counter assembles in
a zigzag pattern, each one of its values is first incremented, going
from right to left, and then copied from left to right. Since one row
of tiles is used to hardcode the starting value of the counter and all
of the copy and increment gadgets are $h$ tiles high, with $h = \left
\lceil 2 \log N + \log \frac{1}{\delta} \right\rceil + 5$, the number
of values taken on by the counter is equal to $m = \left \lfloor
\frac{n-1}{2h} \right \rfloor$. Then $e = (n-1) \mod 2h$ is the number
of extra tiles (i.e., rows) needed for the height of our counter
template to be exactly equal to $n$. The counter template is P-shaped.
The stem of the P is an upside-down L whose domain we denote by $L_e =
\{ 0 \} \times \{ 0, \ldots, e-1\} \cup \{(1,e-1),(2,e-1)\}$ and
depict in dark gray in Figure~\ref{fig:template-overview}. The
``loop'' of the P is a $K \times (n-e)$ rectangular counter whose
domain we denote by $R_{K,n,e} = \{ 0,\ldots, K-1\} \times \{ e,
\ldots, n-1 \}$ and depict in light gray in
Figure~\ref{fig:template-overview}.

The last parameters we need pertain to the binary representation of
the counter's values. We use $k$ to denote the number of bits in the
binary expansion of $m$. Each bit value will be geometrically encoded
in a 6-tile-wide gadget. Since two tiles are added on each side of the
counter's value and each bit is actually encoded by two gadgets,
namely the bit's value itself and an additional indicator bit used to
distinguish the leftmost and rightmost value bits from the other ones,
$K$ is equal to $12k+4$. Finally, the starting value of our counter is
set to $s = 2^{\left\lfloor \log m \right\rfloor + 1} - m$.
Table~\ref{tab:variables} lists the definitions for our variables.

\begin{figure}[h!]
  \centering
  \includegraphics[width=05in]{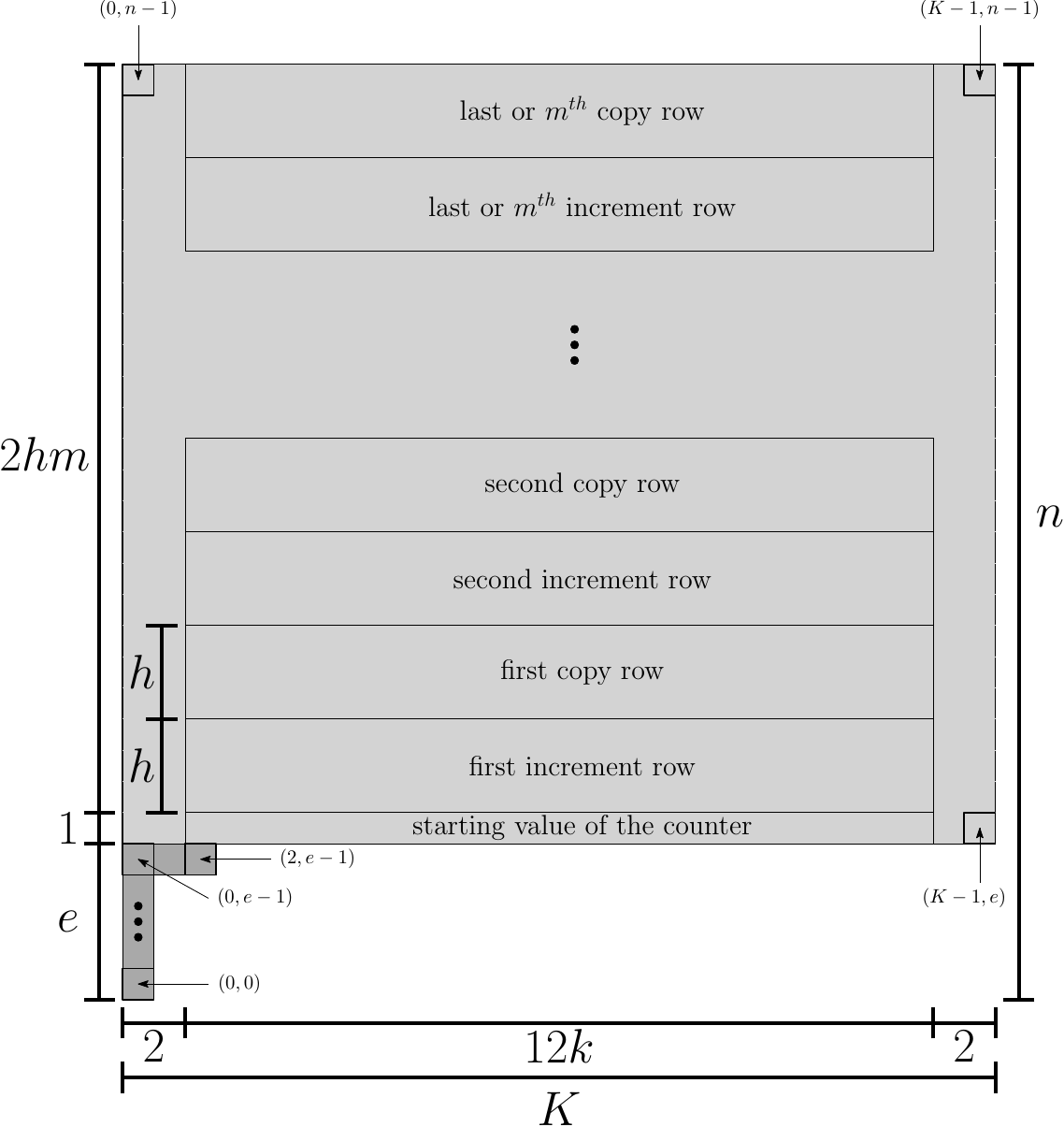}
  \caption{\label{fig:template-overview} Shape of, and parameters for,
    our counter template}
\end{figure}

\begin{table}[h!]
  \centering
  \begin{tabular}{|c|l|}\hline
    $\delta$ & error probability\\\hline
    $N$ & number of tiles on each side of the final square\\ \hline
    $n$ & number of tiles defining the height of the counter template\\ \hline
    $K$ & number of tiles defining the width of the counter template\\ \hline
    $h$ & number of tiles defining the height of an increment or copy
    gadget\\\hline
    $m$ & number of values taken on by the counter\\\hline
    $e$ & extra tiles needed so that the height of the template $e+1+2hm$ is
    equal to $n$ \\\hline
    $s$ & the starting value of the counter\\\hline
    $k$ & the number of bits in the binary expansion of $m$\\\hline
  \end{tabular}
\caption{\label{tab:variables}Definition of the variables used to
  describe our construction}
\end{table}

\subsection{Statement and proof sketch of the main lemma}

To prove this section's theorem, we will use the following lemma:
\begin{lemma} % lemma 35
\label{lem:high-probability-counter}
Let $N \in \mathbb{Z}^+$, 
$\delta \in \mathbb{R}^+$ such that $0<\delta<1$, 
$h = \left \lceil 2 \log N + \log \frac{1}{\delta} \right\rceil + 5$, 
$n \in \mathbb{Z}^+$ such that $n \leq N$, and
$m = \left \lfloor \frac{n-1}{2h} \right \rfloor$.
Then there exists a TAS $\mathcal{T} = \left(T,\sigma,1\right)$ satisfying the following conditions:
\begin{enumerate}
	\item If $m > 0$, $s = 2^{\left\lfloor \log m \right\rfloor + 1} - m$, $k$ is the number of bits in the binary expansion of $m$, $K = 12k+4$, and $e = (n-1) \mod 2h$, then there exists a finite set $P \subseteq \{ 0,\ldots K-1\} \times \{ 0, \ldots, n-1 \}$, a finite set $Y \subseteq P$ and $w:Y\rightarrow T$ such that:
	\begin{enumerate}
		\item $P$ is the set of POCs in $\mathcal{T}$, $Y$ is the set of essential POCs in $\mathcal{T}$ and $Y = P$,
		\item $\mathcal{T}$ is $w$-sequentially non-deterministic, 
		\item $\alpha$ is the unique $w$-correct $\mathcal{T}$-terminal assembly and $\dom{\alpha} = L_e \cup R_{K,n,e}$,

                \item in every $w$-correct $\mathcal{T}$-producing assembly sequence resulting in $\alpha$, no tile placed at any point on the perimeter of $\alpha$ has a positive-strength glue on its outward-facing side, the last one among $\alpha$'s  perimeter points at which a tile is placed is  $(K-1, n-1)$, and this point is the only one in $\dom{\alpha}$ where any tile of this type is placed.
		\item for every competition $\mathcal{C}$ of $\mathcal{T}$, if $\pi$ and $\pi'$ are the corresponding winning and losing paths, respectively, then $|\pi| = 4$ and $\left| \pi' \right| = 4h - 10$ (or vice versa),	
	\end{enumerate}
      \item $\mathcal{T}$ strictly self-assembles       $\dom{\alpha} = L_e \cup R_{K,n,e}$ with probability at least $1 - \delta$, and
	\item $|T| = O\left( \log N + \log \frac{1}{\delta} \right)$.
	
\end{enumerate}
\end{lemma}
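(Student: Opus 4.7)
The plan is to construct $\mathcal{T}$ as a zigzag binary counter whose terminal assembly has domain $L_e \cup R_{K,n,e}$. The stem $L_e$ encodes the starting value $s = 2^{\lfloor \log m\rfloor + 1} - m$ along a short, fully deterministic path that seeds the bottom row $y = e$ of the rectangle with the binary representation of $s$. The rectangle is partitioned into $m$ horizontal strips of height $2h$; within each strip, an increment subgadget self-assembles right-to-left to compute the next counter value, and a copy subgadget self-assembles left-to-right to propagate that value upward. A terminator gadget places the final tile at $(K-1, n-1)$ precisely when the counter would overflow from $2^k - 1$ to $2^k$, ensuring that assembly halts exactly when $\dom{\alpha} = L_e \cup R_{K,n,e}$ and that no perimeter tile exposes an outward-facing positive-strength glue.

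Each bit value is encoded geometrically using the ``bump'' convention illustrated in Figure~\ref{fig:def1-tas}: a $1$-bit is a two-tile bump on the upper edge of a $6$-tile-wide bit region, and a $0$-bit is the absence of that bump, with an additional $6$-tile indicator region adjacent to each bit to distinguish the leftmost and rightmost bits, giving total width $K = 12k+4$. The writing subgadgets are deterministic, whereas the reading subgadgets mirror Figure~\ref{fig:def1-tas}(e): at each read location, two paths from a designated entry tile $\vec{x}$ to a common POC $\vec{y}$ compete---a short path $\pi$ of length $4$ that reads the bit near the bump position and a long path $\pi'$ of length $4h-10$ that traverses the full height of the gadget and reports the opposite value. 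When the underlying bit is a $1$ the bump blocks the short path, making the long path the winning path; when the bit is a $0$ the short path is unobstructed and wins, while the long path erroneously reads $1$. A winner function $w$ is defined to assign to each essential POC the tile dictated by the correct reading. The starting point $\vec{x}$ of each competition is the unique degree-two cell placed one step before the branching region, which gives condition~\ref{def:snd-4} and, together with unique glue labeling of the entry row, the strong routing conditions~\ref{def:cp-4} and~\ref{def:cp-6} of Definition~\ref{def:competing-paths}. Directional determinism is immediate from the observation that each non-POC admits at most one tile type while each POC admits exactly two tile types that attach via distinct sides, and the ordering condition~\ref{def:snd-3} follows from the zigzag geometry, which forces POCs to be visited in left-to-right and bottom-to-top order. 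Corollary~\ref{cor:unique-w-correct-assembly} then yields uniqueness of $\alpha$, completing item~1 of the lemma.

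For the probability bound, I would apply Corollary~\ref{cor:competition-probability-sum} with $l = 4$ and $l' = 4h-10$ (or vice versa). Using the identity $\sum_{i=0}^{\infty}\binom{i+2}{2}(1/2)^{i+3} = 1$ and bounding its tail shows that $1 - \Pr[\mathcal{C}_j] = O(h^2 \cdot 2^{-4h})$ for each competition $\mathcal{C}_j$. The total number of competitions $r$ is at most the number of bit reads across the whole counter, namely $O(mk) = O(N \log N / h)$. Combining $\prod_{j=1}^{r}(1-\varepsilon_j) \geq 1 - \sum_{j=1}^{r}\varepsilon_j$ with Theorem~\ref{thm:local-non-determinism-theorem}, the overall success probability is at least $1 - O(mk \cdot h^2 \cdot 2^{-4h})$, which is at least $1 - \delta$ for $h = \lceil 2\log N + \log(1/\delta)\rceil + 5$. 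The tile-set cardinality is $O(h)$, for uniquely tracing the long deceiving path in each gadget template, plus $O(\log N)$ to distinguish bit positions and encode $s$, giving $|T| = O(\log N + \log(1/\delta))$ as required.

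The main obstacle will be designing the reading subgadget so that, in every $\mathcal{T}$-producible assembly sequence, its two competing paths have the exact lengths advertised, the long path is blocked precisely when the underlying bit is a $1$, and conditions~\ref{def:cp-4} and~\ref{def:cp-6} of Definition~\ref{def:competing-paths} hold uniformly across all producible assemblies. This requires a careful choice of glue labels so that the only binding-graph paths from $\vec{s}$ into the reading region pass through the designated $\vec{x}$, and so that no assembly sequence can ``skip ahead'' to a later POC before the current competition resolves. Once this geometric design is in place, the rest of the proof is a routine combination of the preceding lemmas and the counter's structural invariants.
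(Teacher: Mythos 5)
Your construction matches the paper's approach: a zigzag binary counter counting $m$ values from $s = 2^{\lfloor\log m\rfloor+1}-m$, built from seed, initialization, increment, and copy gadgets whose bit-reading subgadgets have competing paths of lengths $4$ and $4h-10$, with the probability bound obtained via Corollary~\ref{cor:competition-probability-sum}, Theorem~\ref{thm:local-non-determinism-theorem}, and Bernoulli's inequality. Two small corrections: it is the \emph{short} path of length $4$ (not the long one) that is blocked by the bit bump when the underlying bit is a $1$ — your final paragraph contradicts your earlier, correct description — and the paper also dispatches the corner cases $n < 13$ and $m = 0$ via small deterministic comb-like constructions, which your proposal should address since they are not excluded by the hypotheses of the lemma.
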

The reason we set $h$ so high is to allow the P-shaped assembly produced by the TAS in Lemma~\ref{lem:high-probability-counter} to be used in a larger construction, e.g., one that strictly self-assembles an $N \times N$ square, which will require a higher value for $h$ in order to guarantee a sufficient probability of correctness. 
We chose the P-shape of $\dom{\alpha}$ to be able to invoke Lemma~\ref{lem:high-probability-counter} multiple times when assembling an $N \times N$ square. 
\begin{proofsketch}
To prove Lemma~\ref{lem:high-probability-counter}, we proceed as follows:
\begin{enumerate}[itemsep=0mm]
\item Prove that the lemma holds in the corner cases when $n$ is too small, i.e., $n < 13$, or when $m = 0$, i.e., when $h$ is too large for a given value of even a sufficiently large $n$.
Assume going forward that $n \geq 13$ and $m > 0$.
\item Define $T$ in terms of ``gadgets'' of tiles and compute an asymptotic upper bound on its cardinality.
The tile types for $T$ will simulate a zigzag binary counter that is initialized with the starting value $s$, and then alternates between increment steps in which the counter value is incremented by 1 going from right to left, followed by copy steps in which the counter value is copied going from left to right.
The counter stops after a copy step in which the copied value is all 0s. 
\begin{enumerate}[label=\theenumi\alph*.,itemsep=0mm]
	\item Create one seed gadget of size $O(e)$, which contains the unique seed tile of $\mathcal{T}$.
	\item Create $O(k)$ gadgets, each of size $O(1)$, to initialize the counter.
	\item Create $O(1)$ gadgets, each of size $O(h)$, to increment the counter.
	\item Create $O(1)$ gadgets, each of size $O(h)$, to copy the value of the counter.
        \item Consolidate the outomes of the four previous gadget creation steps to infer that $\left| T \right| = O\left( \log N + \log \frac{1}{\delta} \right)$.
\end{enumerate}
\item Define $P$, $Y$ and $w$.
\item Prove that $\mathcal{T}$ is $w$-sequentially non-deterministic such that $\alpha$ is the unique $w$-correct $\mathcal{T}$-terminal assembly and 
  $\dom{\alpha} = L_e \cup R_{K,n,e}$.
\item Prove that every $w$-correct $\mathcal{T}$-producing assembly sequence resulting in $\alpha$ attaches the last tile at the upper-right corner of $\alpha$ and that no tile on the perimeter of $\alpha$ has an outward-facing positive-strength glue.
\item Prove that $\mathcal{T}$ strictly self-assembles $\dom{\alpha}$ with probability at least $1 - \delta$.
	\begin{enumerate}[label=\theenumi\alph*.,itemsep=0mm]
		\item Use Corollary~\ref{cor:competition-probability-sum} to prove that, for every competition $\mathcal{C}$ of $\mathcal{T}$, $\textmd{Pr}[\mathcal{C}] \geq 1 - \frac{\delta}{N^2}$.
		\item Apply Theorem~\ref{thm:local-non-determinism-theorem} and use Bernoulli's inequality to bound the product of the probabilities of the $2km \leq N^2$ essential competitions of $\mathcal{T}$. 
	\end{enumerate}
\end{enumerate}\vspace*{-7mm}

\end{proofsketch}

\subsection{First two steps in the proof of the main lemma}
\label{sec:first-two-steps}

This subsection contains steps 1 and 2 in our proof of
Lemma~\ref{lem:high-probability-counter}.
\emph{Step 1.}
In this step, we handle two corner cases.
First, since 1) the starting value of the counter takes up one row of
tiles, 2) each copy and increment gadget, to be described below, has a
minimum height of six tiles, and 3) the counter must contain at least
one increment row and one copy row, the smallest possible value for
$n$ is 13. If $n < 13$ and thus too small, then $e = n$ and
$R_{K,n,e}$ is empty.
In this case, we construct $\mathcal{T}$ to be a TAS with $|T| = O(e) = O(1)$ that uniquely self-assembles $L_e$.
Note that such a TAS strictly self-assembles $L_e$ with probability 1. 
Going forward, assume $n \geq 13$.
Second, even with a large enough $n$, it is possible for $h$ to be set too high, resulting in $m$ being equal to 0. This happens when $n-1 < 2h$. 
In this case, we have 
\begin{eqnarray*}
            n+1 & = & (n-1)+2	\\
  	        & < 	& 2h +2\\
		& =		& 2\left\lceil 2 \log N + \log \frac{1}{\delta} \right\rceil + 12 \\
		& \leq 	& 2\left( 2 \log N + \log \frac{1}{\delta} + 1\right) + 12 \\
		& =		& 4\log N + 2\log \frac{1}{\delta} + 14.
\end{eqnarray*}
Thus, in the case when $m = 0$, we construct $\mathcal{T}$ to be a TAS that uniquely self-assembles $R_{K,n,e} \cup L_e$ and whose tile set contains $e+2$ distinct tile types to cover $L_e$ and $(n-e)+ (K-1)$ distinct tile types to cover $R_{K,n,e}$ using a comb-like construction (see Figure 2b in \cite{RotWin00}). Therefore, $\left| T \right| = (e+2)+(n-e)+(K-1) = (n+1)+K$. Since we just proved that $n+1 = O\left( \log N + \log \frac{1}{\delta}\right)$ and  $K=\Theta(k) = \Theta(\log m) = O(\log N)$, it follows that $|T| = O\left( \log N + \log \frac{1}{\delta}\right)$.
Going forward, assume $m \geq 1$. 

\emph{Step 2.}
We will define the tile set $T$ in terms of gadgets, i.e., groups of tile types that are disjoint from one another, and self-assemble within a fixed region of space into one of possibly several assemblies to accomplish a specific task, e.g., non-deterministically guess a single unknown, geometrically-represented binary value.
A {\em generic gadget} is an assembly in which every positive strength glue of every placed tile has a strength of 1 and some label. 
We depict each generic gadget in its own figure with an arrow pointing toward the {\em input glue} of its {\em input tile} and another arrow pointing away from the {\em output glue} of its {\em output tile}.
For a generic gadget {\tt Gadget}, we use the notation ${\tt Gadget}\left( {\tt a}, {\tt b} \right)$ to represent the {\em creation} of the {\em actual gadget} (or simply {\tt gadget} {\tt Gadget}) with input glue {\tt a}, output glue {\tt b}, and such that the resulting assembly is producible in a corresponding singly-seeded, temperature-1, directionally deterministic TAS.
If a gadget has two possible assemblies, because it makes a non-deterministic choice, then each assembly is depicted in a corresponding sub-figure. In this case,  the corresponding TAS must be directionally deterministic, the gadget's two assemblies must both have the same input glue, but they may have different output glues.
We will use the notation ${\tt Gadget}\left( {\tt a}, {\tt b}, {\tt c} \right)$ to denote the actual version of {\tt Gadget}, where {\tt a} is the input glue, and {\tt b} and {\tt c} are the two possible output glues.
When the input or output glue of a gadget encodes multiple pieces of information or symbols, we use the notation $\left\langle s_1, \ldots, s_n \right\rangle$ to denote some standard encoding of the concatenation of the list of symbols.
In our construction, all input and output glues are positioned on either the west side or the east side of a tile. 
We say that a gadget self-assembles {\em to the right} if its input glue is on the west side of its input tile. 
If such a gadget has an input glue, it must be on the west side of its input tile. 
In the special case where the gadget does not have an input glue, of which there will be only one in our construction, then we say that it self-assembles {\em to the right} if its output glue is on the east side of its output tile.
In all other cases, we say that the gadget self-assembles {\em to the left} if its input glue is on the east side of its input tile.
We define  $T$ in terms of four types of gadgets, each one corresponding to a type of logical step executed by the counter, namely {\tt Seed}, {\tt Init}, {\tt Inc}, or {\tt Copy}.
The {\tt Seed} gadget places a tile at every point in $L_e$, the {\tt Init} gadgets initialize the counter, the {\tt Inc} gadgets increment the value of the counter, and the {\tt Copy} gadgets copy the value of the counter.

\emph{Step 2a.}
%
% SEED
%
We define three generic {\tt Seed} gadgets shown in Figure~\ref{fig:seed-gadgets}.
\begin{figure}[h!]
    \centering
    \begin{subfigure}[t]{0.3\textwidth}
        \centering
        \includegraphics[width=0.33in]{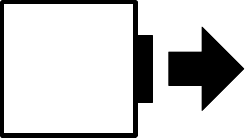}
        \caption{\label{fig:seed-e-zero} {\tt Seed\_e\_equal\_to\_0}}
    \end{subfigure}
    ~
        \begin{subfigure}[t]{0.3\textwidth}
        \centering
        \includegraphics[width=0.8in]{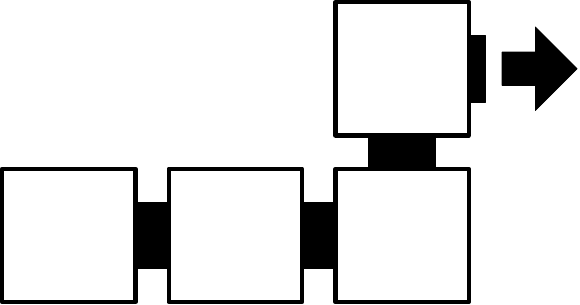}
        \caption{\label{fig:seed-e-one} {\tt Seed\_e\_equal\_to\_1}}
    \end{subfigure}
    ~
    \begin{subfigure}[t]{0.3\textwidth}
        \centering
        \includegraphics[width=0.8in]{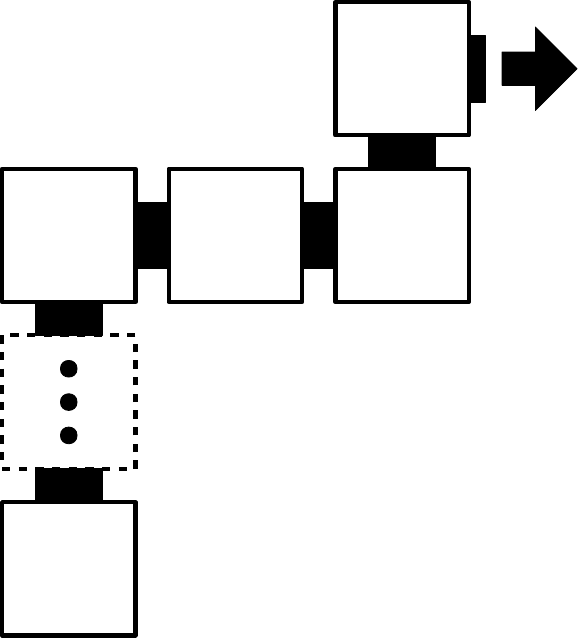}
        \caption{\label{fig:seed-e-gt-1} {\tt Seed\_e\_greater\_than\_1}}
    \end{subfigure}
    \caption{\label{fig:seed-gadgets} The assemblies for the generic {\tt Seed} gadgets. (a) The gadget used when  $e=0$. (b) The gadget used when  $e=1$. (c) The gadget used when $e > 1$, in which case the vertical dots represent $e-2$ tiles. In all cases, the height of the gadget is $e+1$.} 
\end{figure}
Which assembly is created in our construction depends on the value of  $e$.
This gadget is the only one in our construction that does not have an input glue, and in all cases, we denote the absence of an input glue with $\emptyset$ as the first argument in the gadget creation statement.
Its output glue initiates the initialization of the counter. 
If $e = 0$, then create, from the generic gadget in Figure~\ref{fig:seed-e-zero}, the gadget: 
\begin{align*}
{\tt Seed\_e\_equal\_to\_0}( & \emptyset,  {\tt init} ).
\end{align*}
If $e = 1$, then create, from the generic gadget in Figure~\ref{fig:seed-e-one}, the gadget: 
\begin{align*}
{\tt Seed\_e\_equal\_to\_1}( & \emptyset,  {\tt init} ).
\end{align*}
Otherwise, create, from the generic gadget in Figure~\ref{fig:seed-e-gt-1}, the gadget: 
\begin{align*}
{\tt Seed\_e\_greater\_than\_1}( & \emptyset, {\tt init}).
\end{align*}
We then define the seed assembly $\sigma$ of $\mathcal{T}$ as the placement, at the origin, of the bottom-left tile in the created {\tt Seed} gadget.
Note that we created a unique {\tt Seed} gadget of size $O(e) = O(h)$.
Thus, the total number of tile types contributed to $T$ by the created {\tt Seed} gadget is $O(h) = O\left( \log N + \log \frac{1}{\delta} \right)$.

\emph{Step 2b.}
%
% INITIALIZATION
%
The {\tt Seed} gadget, via its unique output glue, initiates the initialization of the counter. 
The {\tt Init} gadgets assemble the row of tiles that represent the starting value of the counter.
The five different generic {\tt Init} gadgets are shown in Figure~\ref{fig:init-gadgets}.
\begin{figure}[h!]
    \centering
    \begin{subfigure}[t]{0.3\textwidth}
        \centering
        \includegraphics[width=1.9in]{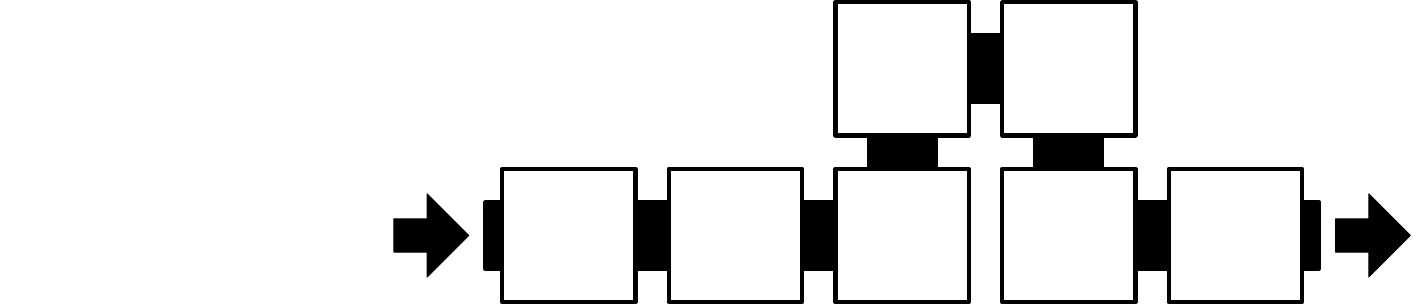}
        \caption{\label{fig:seed} {\tt Init\_left}}
    \end{subfigure}
    ~
    \hfill
    \begin{subfigure}[t]{0.3\textwidth}
        \centering
        \includegraphics[width=1.9in]{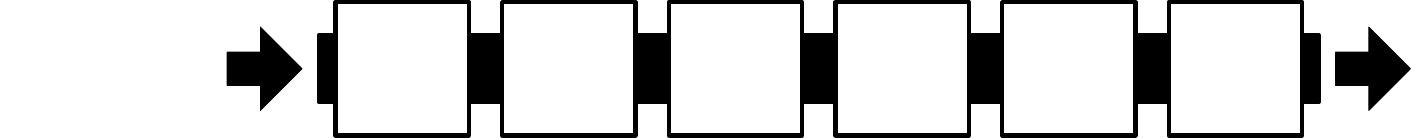}
        \caption{\label{fig:seed-middle-0} {\tt Init\_middle\_0}}
    \end{subfigure}
    ~
    \begin{subfigure}[t]{0.3\textwidth}
        \centering
        \includegraphics[width=1.9in]{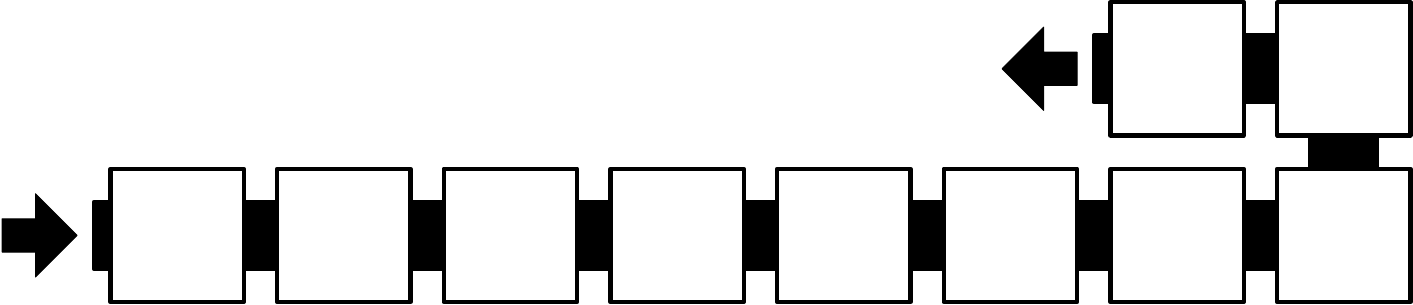}
        \caption{\label{fig:seed-right-0} {\tt Init\_right\_0}}
    \end{subfigure} \\
    \vspace{10pt}
  \hfill  
  \begin{subfigure}[t]{0.3\textwidth}
        \centering
        \includegraphics[width=1.9in]{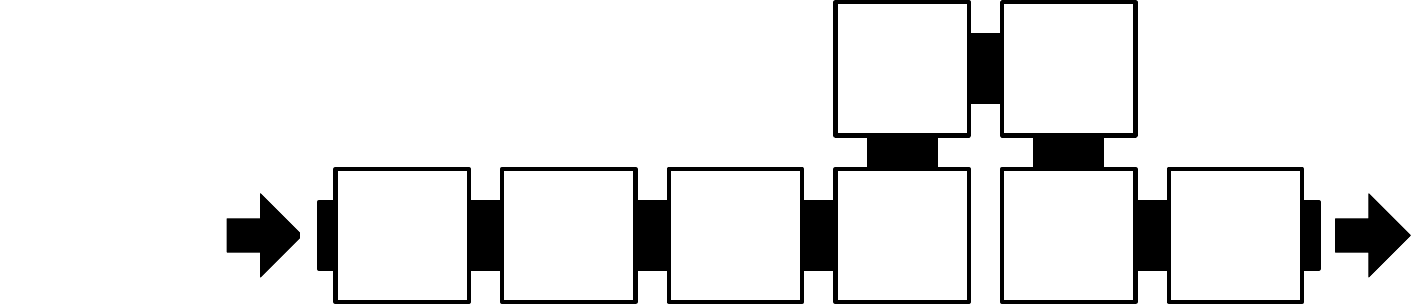}
        \caption{\label{fig:seed-middle-1} {\tt Init\_middle\_1}}
    \end{subfigure}
    ~
  \begin{subfigure}[t]{0.3\textwidth}
        \centering
        \includegraphics[width=1.9in]{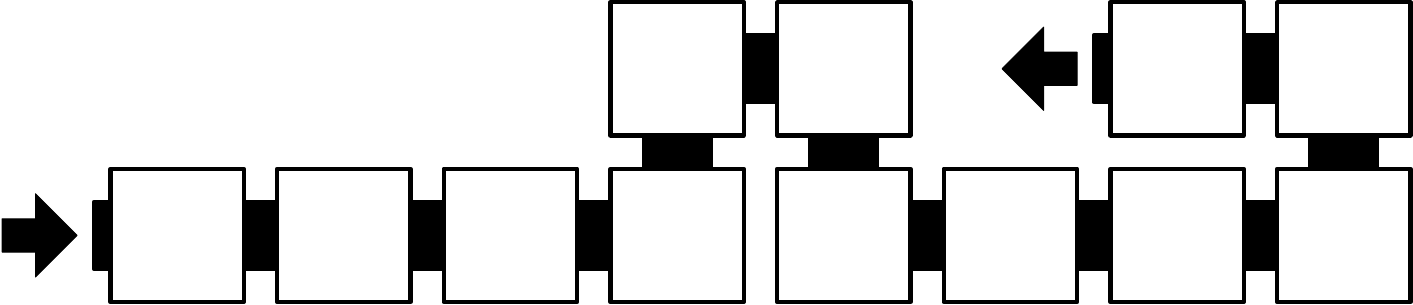}
        \caption{\label{fig:seed-right-1} {\tt Init\_right\_1}}
    \end{subfigure}
    \caption{\label{fig:init-gadgets} The generic {\tt Init} gadgets. Each {\tt Init} gadget geometrically represents one bit value. The bit value 1 is represented by the path of tiles protruding up one point and then back down, whereas the bit value 0 is represented by the absence of such a protuberance.}
\end{figure}
The {\tt Init} gadgets all self-assemble to the right, resulting in the geometric representation of the $k$-bit binary expansion of the starting value $s$ of the counter, but modified to include, for each integer $1 \leq i \leq k$, an {\em indicator} bit $b'_i$ inserted to the left of the {\em value} bit $b_i$ such that the only indicator bit having the value 1 is $b'_k$. More generally, the indicator bits that assemble in any row of the counter will be used to identify the last value bit to be read when the row above it assembles. Since the row above the starting value is an increment row that assembles to the left, the leftmost indicator bit is set to 1. In contrast, when the row above the current one is a copy row, which assembles to the right, the only indicator bit set to 1 in the current row will be the rightmost one. 
Figure~\ref{fig:example-seed-n-5} is a high-level depiction of the result of the self-assembly of a series of {\tt Init} gadgets for $s=5$.
In this and subsequent high-level example figures:
\begin{itemize}
	\item thin black lines outline each gadget, 
	\item the grey and white lines within the outlined gadgets indicate locations at which tiles are placed and a possible order in which the tiles attach, 
	\item the value bits $b_1$, $b_2$ and $b_3$ are highlighted in black, where the binary value 1 is geometrically represented with an upward protruding bump and the binary value 0 is geometrically represented with the absence of such a bump,
	\item the indicator bits $b'_1$, $b'_2$ and $b'_3$ are highlighted in grey and geometrically represented in the same fashion as the value bits, and 
	\item the circled numbers indicate the relative order in which the gadgets self-assemble and always appear at the point of the first tile of the gadget to attach.
\end{itemize}
\begin{figure}[h!]
    \centering
        \centering
        \includegraphics[width=\linewidth]{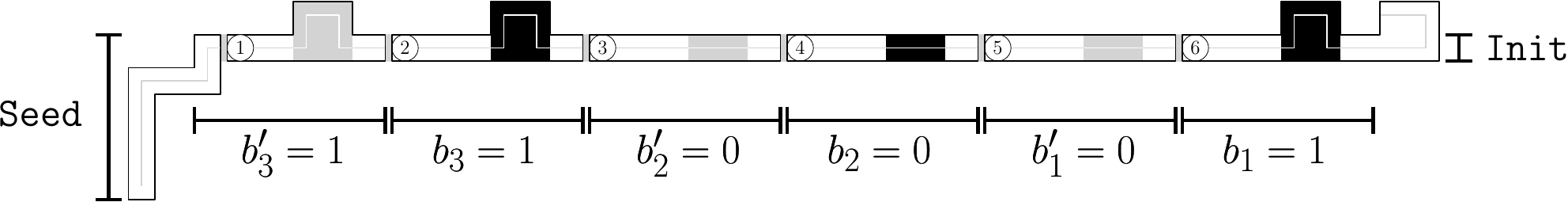}

        \caption{\label{fig:example-seed-n-5} A high-level example showing the result of the self-assembly of a series of {\tt Init} gadgets for the (unrealistic) case when $s=5$, $k=3$ and $e=5$. Note that the black regions correspond, from left to right, to the value bits $b_3 = 1$, $b_2 = 0$ and $b_1 = 1$, which is the binary representation of 5. Moreover, the grey regions correspond, from left to right, to the indicator bits $b'_3 = 1$, $b'_2 = 0$, and $b'_1 = 0$. }
\end{figure}
We now create the gadgets for the initialization step.
The first gadget to assemble in the counter initialization step is the unique {\tt Init\_left} gadget that geometrically represents the value for $b'_k = 1$.
Its input glue matches the output glue of the unique {\tt Seed} gadget, whereas its output glue ensures that the next gadget to assemble is an {\tt Init\_middle\_}$b_i$ gadget uniquely representing $b_k$. 
Therefore, create from the generic gadget in Figure~\ref{fig:seed} the gadget: 
\begin{align*}
{\tt Init\_left}( & {\tt init}, \\
			 	  & {\tt value\_bit\_}k ).
\end{align*}
For example, gadget 1 in Figure~\ref{fig:example-seed-n-5} is the unique {\tt Init\_left} gadget representing $b'_k$ in the $s = 5$ example. 

The {\tt Init\_left} gadget representing $b'_k$ is the first gadget representing an indicator bit associated with a value bit that is not the least significant bit of $s$. After each one of the indicator-bit gadgets $b'_i$ assembles, with $k \geq i \geq 2$, a unique {\tt Init\_middle\_}$b_i$ gadget corresponding to $b_i$ assembles. Its input glue matches the output glue of the unique {\tt Init} gadget representing $b'_i$, whereas its output glue ensures that the next gadget to assemble is an {\tt Init\_middle\_}$b'_{i-1}$ gadget representing $b'_{i-1}$.

Therefore, for each integer $k \geq i \geq 2$, if $b_i = 0$ (resp., $b_i = 1$), then create from the generic gadget in Figure~\ref{fig:seed-middle-0} (resp., Figure~\ref{fig:seed-middle-1}) the gadget: 
\begin{align*}
{\tt Init\_middle\_}b_i( & {\tt value\_bit\_}i, \\
			 	         & {\tt indicator\_bit\_}(i - 1).
\end{align*}
For example, gadgets 2 and 4 in Figure~\ref{fig:example-seed-n-5} are the unique {\tt Init\_middle} gadgets representing $b_3$ and $b_2$, respectively, in the $s = 5$ example.

Any one of the {\tt Init\_middle\_}$b_i$ gadgets representing a value bit $b_i$ that were just created is an example of an {\tt Init\_middle} gadget representing a value bit that is not the least significant bit of $s$. After each one of these value-bit gadgets $b_i$ assembles, with $k \geq i \geq 2$, a unique {\tt Init\_middle\_}$0$ gadget corresponding to the indicator-bit $b'_{i-1}=0$ assembles. Its input glue matches the output glue of the unique {\tt Init\_middle\_}$b_i$ gadget representing $b_i$, whereas its output glue ensures that the next gadget to assemble is the unique {\tt Init} gadget representing $b_{i-1}$. Note that all these indicator-bit gadgets must geometrically represent the value 0 to conform to the modified binary expansion of $s$ in which the only indicator bit that is set to 1 is the one associated with the most significant bit of $s$, because the row above the current row will assemble to the left.

Therefore, for each $k > i \geq 1$, create from the generic gadget in Figure~\ref{fig:seed-middle-0} the gadget:
\begin{align*}
{\tt Init\_middle\_}0( & {\tt indicator\_bit\_}i, \\
			 	         & {\tt value\_bit\_}i ).
\end{align*}
For example, gadgets 3 and 5 in Figure~\ref{fig:example-seed-n-5} are the unique {\tt Init\_middle\_}$0$ gadgets representing $b'_2$ and $b'_1$, respectively, in the $s = 5$ example.

The last gadget to assemble in the initialization step is the unique {\tt Init\_right\_}$b_1$ gadget representing $b_1$.
Its input glue matches the output glue of the unique {\tt Init\_middle\_}$0$ gadget representing $b'_1$. 

Therefore, if $b_1 = 0$ (resp., $b_1 = 1$), create from the generic gadget in Figure~\ref{fig:seed-right-0} (resp., Figure~\ref{fig:seed-right-1}) the gadget:
\begin{align*}
{\tt Init\_right\_}b_1( & {\tt value\_bit\_}1, \\
			 	         & \langle {\tt least\_significant\_value\_bit}, {\tt carry} = 1 \rangle ).
\end{align*}
Its output glue will initiate the first increment step.
For example, gadget 6 in Figure~\ref{fig:example-seed-n-5} is the unique {\tt Init\_right\_}$b_1$ gadget representing $b_1=1$ in the $s = 5$ example.
Note that, for each $1 \leq i \leq k$, one unique gadget represents each one of the $b_i$ and $b'_i$ bits. 
Thus, the number of {\tt Init} gadgets created in this step is $\Theta\left(k\right)$.
Moreover, since each such gadget has a fixed size, the total number of tile types contributed to $T$ by the creation of all the {\tt Init} gadgets is $\Theta\left(k\right) = O(h) = O\left( \log N + \log \frac{1}{\delta} \right)$.
Following the initialization step, the counter alternates between increment and copy steps.
Next, we create the gadgets for increment steps. 

\emph{Step 2c.}
%
% INCREMENT
%
After the initialization of the counter, its value is incremented in the subsequent increment step.
An increment step always follows the initialization step and, in general, will follow a copy step in which the value of the counter is not 0.
Assume that, prior to each increment step, $b'_k = 1$ and all other indicator bits are 0.  
An increment step proceeds via the self-assembly of a series of {\tt Inc} gadgets that all self-assemble to the left while carrying out the standard binary addition algorithm. 
All the {\tt Inc} gadgets have a fixed width and a height proportional to $h$, which, as we will later show, is sufficient to ensure $\mathcal{T}$ strictly self-assembles the shape of $\alpha$ with probability at least $1-\delta$.
We use four types of generic {\tt Inc} gadgets in our construction, namely {\tt Inc\_read}, {\tt Inc\_write\_}$0$, {\tt Inc\_write\_}$1$, and {\tt Inc\_to\_copy} gadgets.
An {\tt Inc\_read} gadget non-deterministically guesses the unknown binary value of the geometric representation of a bit, and therefore has two gadget assemblies, which are depicted in Figures~\ref{fig:Inc-read-0-good} and~\ref{fig:Inc-read-0-bad} and always correspond to the binary values 0 and 1, respectively.
\begin{figure}[h!]
    \centering
    \begin{subfigure}[t]{0.3\textwidth}
        \centering
        \includegraphics[width=1.2in]{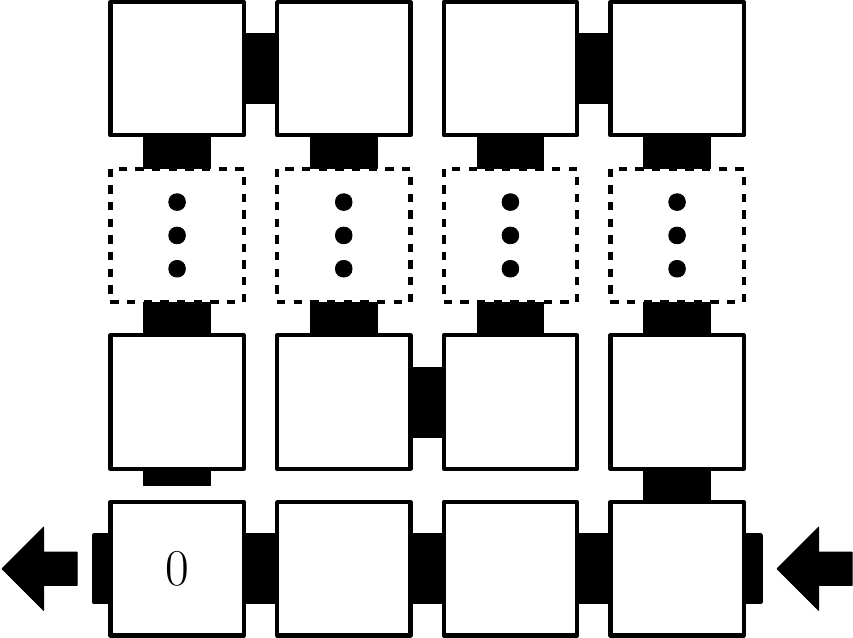}
        \caption{\label{fig:Inc-read-0-good} Guess a 0 bit correctly.}
    \end{subfigure}
    ~
    \begin{subfigure}[t]{0.3\textwidth}
        \centering
        \includegraphics[width=1.2in]{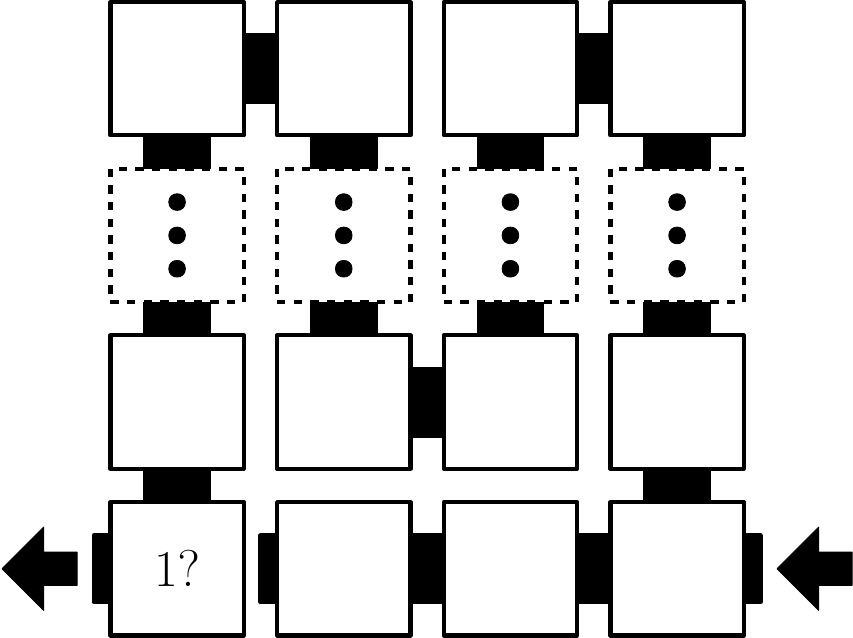}
        \caption{\label{fig:Inc-read-0-bad} Incorrectly guess a 0 bit as a 1. The idea is that $h$ is chosen such that this scenario is unlikely to occur. }
    \end{subfigure}
    ~
    \begin{subfigure}[t]{0.3\textwidth}
        \centering
        \includegraphics[width=1.2in]{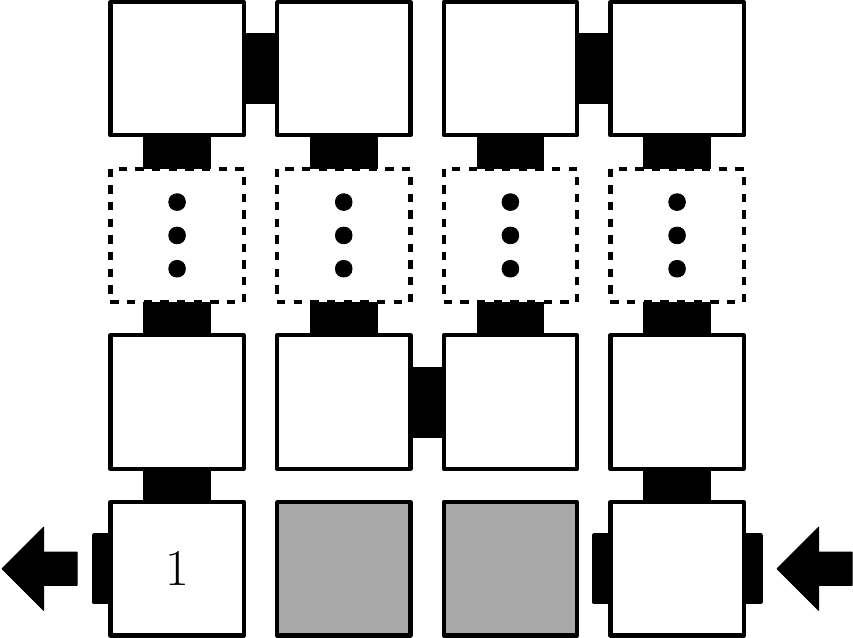}
        \caption{\label{fig:Inc-read-1-good} Always guess a 1 bit correctly as a 1. The grey tiles represent tiles that are guaranteed to be placed before the input tile of the {\tt Inc\_read} gadget is placed. }
    \end{subfigure}
    \caption{\label{fig:Inc-read} The assemblies for the generic {\tt Inc\_read} gadget are shown in parts~(\subref{fig:Inc-read-0-good}) and~(\subref{fig:Inc-read-0-bad}). The vertical dotted tiles in this and subsequent gadget figures represent columns of $h-5$ tiles such that $h-5\geq 1$. Part (\subref{fig:Inc-read-1-good}) is not a gadget assembly but rather the unique subassembly of the gadget assembly in part (\subref{fig:Inc-read-0-bad}) that corresponds to correctly guessing a 1 bit. Note that the point at which the lower-left tile is placed is a POC in $\mathcal{T}$ and the grey tiles in part~(\subref{fig:Inc-read-1-good}) represent a portion of some assembly that rigs the corresponding competition.
    }
\end{figure}
We use two generic {\tt Inc\_write} gadgets in our construction, namely {\tt Inc\_write\_}$0$ and {\tt Inc\_write\_}$1$, shown in Figures~\ref{fig:Inc-write-0} and~\ref{fig:Inc-write-1}, respectively.
\begin{figure}[h!]
    \centering
    \begin{subfigure}[t]{0.4\textwidth}
        \centering
        \includegraphics[width=1.5in]{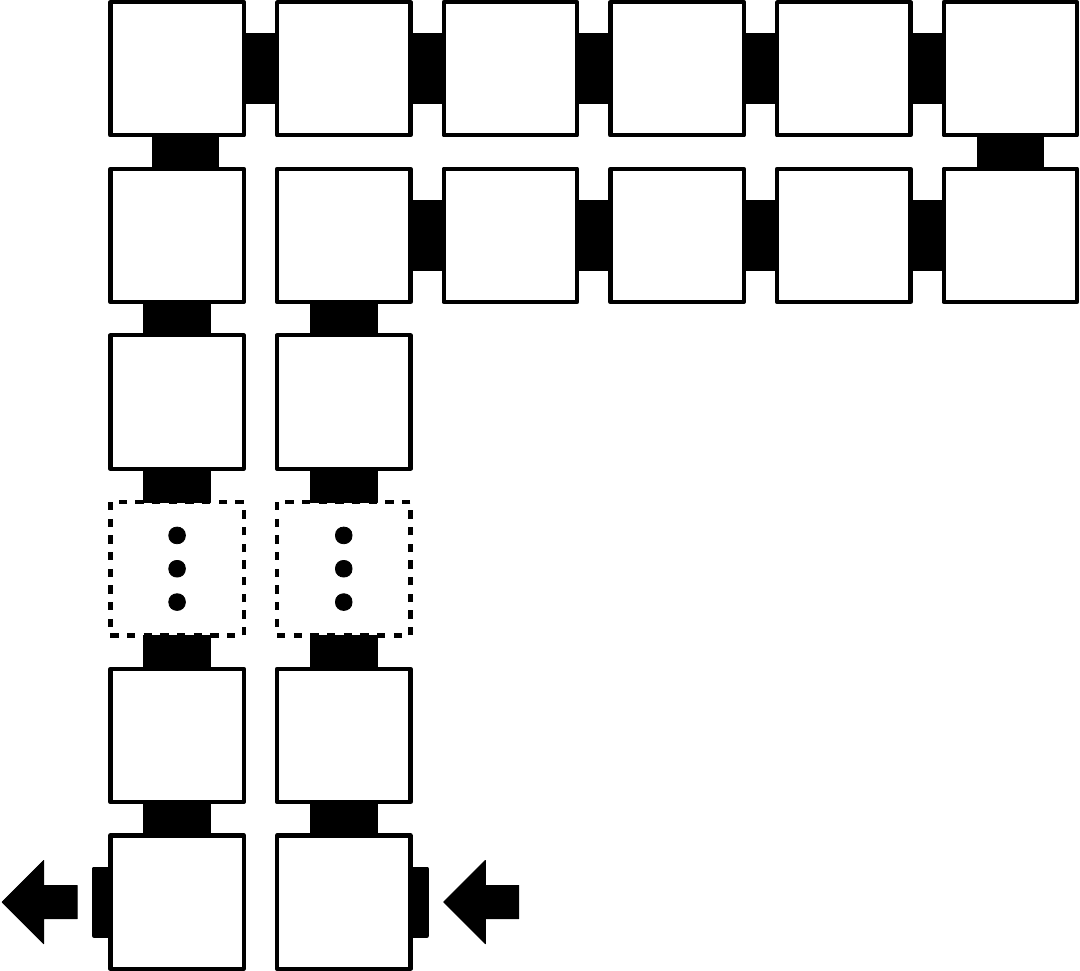}
        \caption{\label{fig:Inc-write-0} {\tt Inc\_write\_0} }
    \end{subfigure}
    ~
    \begin{subfigure}[t]{0.4\textwidth}
        \centering
        \includegraphics[width=1.5in]{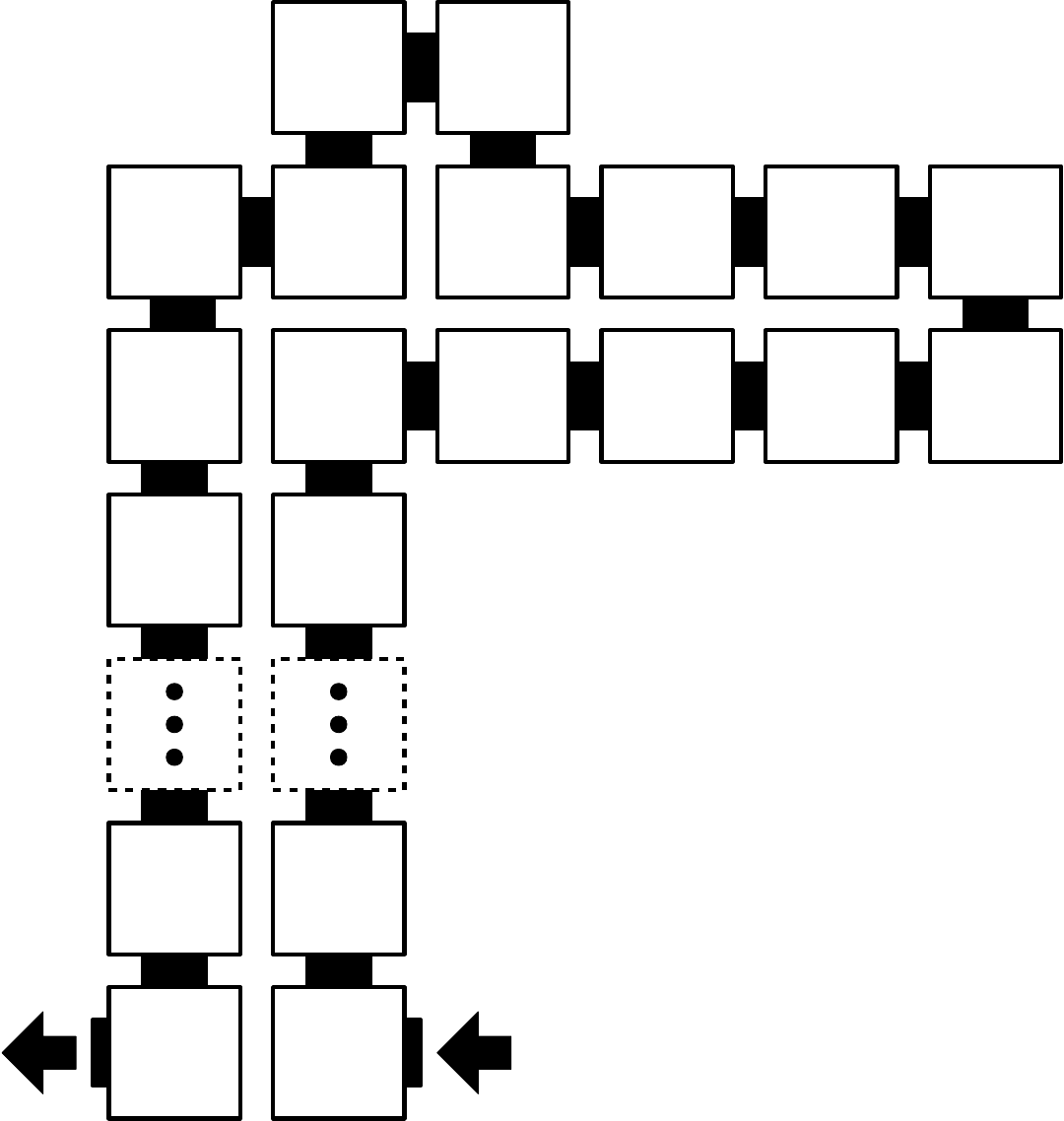}
        \caption{\label{fig:Inc-write-1} {\tt Inc\_write\_1}}
    \end{subfigure}
    \caption{\label{fig:Inc-write} The assemblies for the generic {\tt Inc\_write} gadgets. }
\end{figure}
An {\tt Inc\_write} gadget has a unique assembly that corresponds to the geometric representation of a binary value in an increment step. 

The transition from a completed increment step to the subsequent copy step is facilitated by an {\tt Inc\_to\_copy} gadget, whose generic version is shown in Figure~\ref{fig:Inc-to-copy-gen}.

\begin{figure}[h!]
  \centering
  \begin{subfigure}[t]{0.3\textwidth}
    \centering
    \includegraphics[width=0.6in]{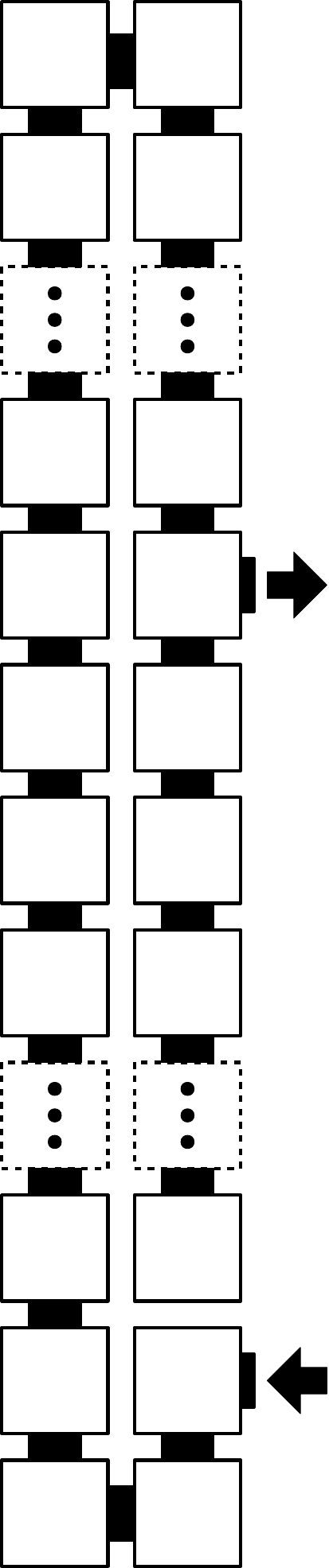}
    \caption{\label{fig:Inc-to-copy-gen} {\tt Inc\_to\_copy}}
    \end{subfigure}
    ~
  \begin{subfigure}[t]{0.3\textwidth}
    \centering
    \includegraphics[width=0.6in]{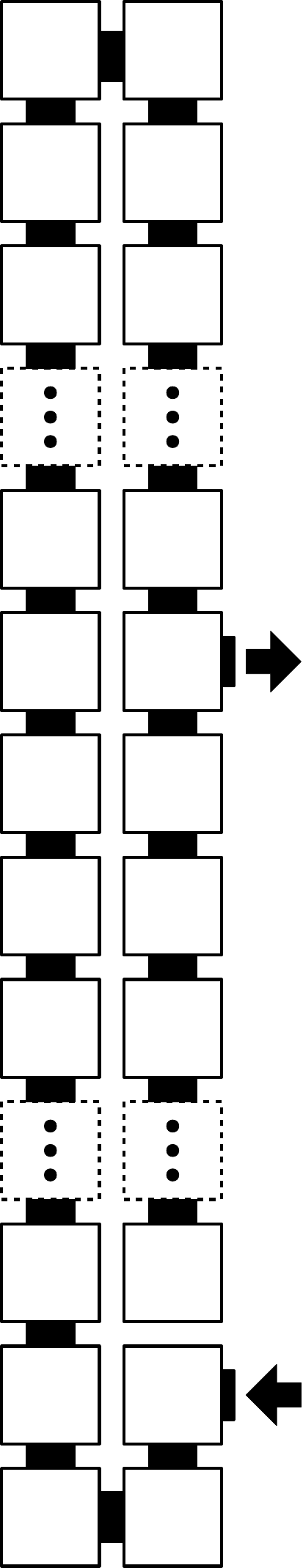}
    \caption{\label{fig:Inc-to-copy-last} {\tt Inc\_to\_copy\_last}}
  \end{subfigure}
      \caption{\label{fig:Inc-to-copy} The assemblies for the generic {\tt Inc\_to\_copy} gadget and the {\tt Inc\_to\_copy\_last} gadget.}
\end{figure}

Figure~\ref{fig:example-inc-n-5} is a continuation of Figure~\ref{fig:example-seed-n-5} and contains a high-level depiction of the self-assembly of the gadgets used in the first increment step of the counter in the $s=5$ example. In that figure, the value of the counter is incremented from 5 to 6.
For the sake of clarity, in Figure~\ref{fig:example-inc-n-5} and subsequent high-level figures, we assume $h = 6$ and use circled checkmarks ($\surd$) to indicate the locations of essential POCs in $\mathcal{T}$. 
\begin{figure}[h!]
    \centering
        \centering
        \includegraphics[width=\linewidth]{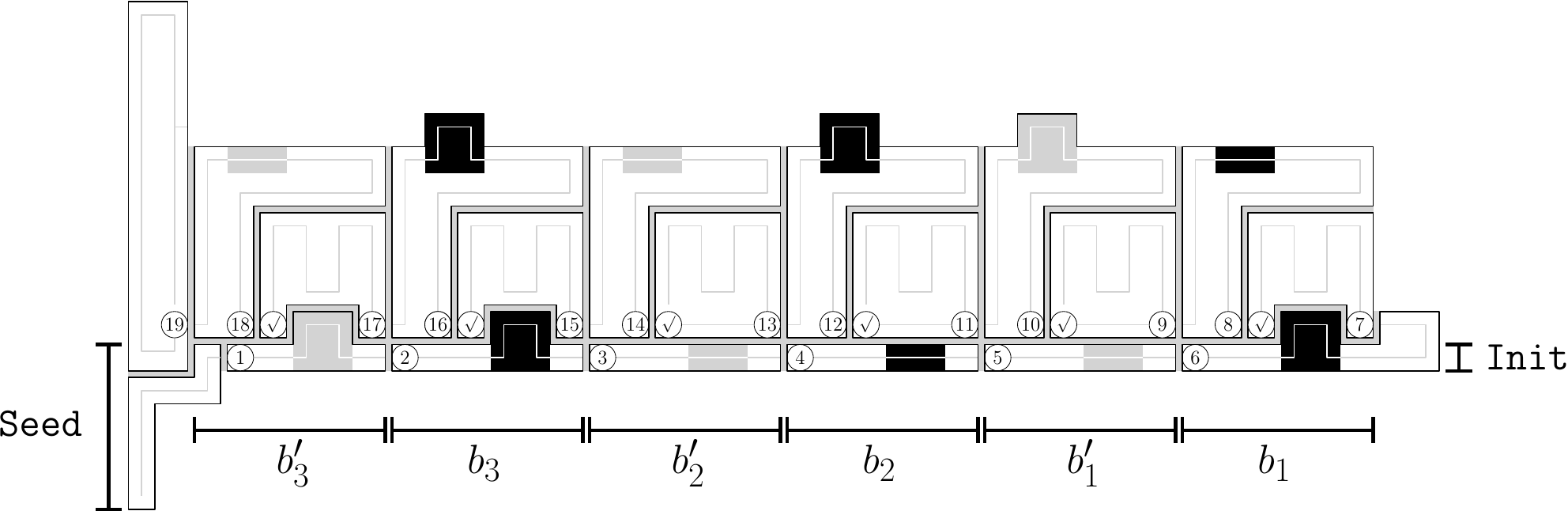}
        \caption{\label{fig:example-inc-n-5} A high-level example showing the first increment step of the counter whose value is incremented from 5 to 6. Note that the top row of black regions correspond, from left to right, to the value bits $b_3 = 1$, $b_2 = 1$, and $b_1 = 0$, which is the binary representation of 6. The top row of grey regions correspond, from left to right, to the indicator bits $b'_3 = 0$, $b'_2 = 0$, and $b'_1 = 1$. }
\end{figure}
Each {\tt Inc} gadget in an increment step is associated either with a single, specific bit or with a group of bits in the value of the counter.
For example, a gadget associated with a specific bit, e.g., an {\tt Inc\_read} gadget associated with $b_1$, ``knows'' the position of the bit with which it is associated in the sense that the index of the bit is explicitly encoded in the glues of its tile types.
Note that, for every bit (indicator or value), there is a unique {\tt Init} gadget associated with it. 
However, a gadget associated with a group of bits, e.g., an {\tt Inc\_write} gadget associated with any indicator bit other than $b'_k$, is ``oblivious'' to the position of the bit with which it is associated in the sense that the glues of its tile types are not a function of the index of the bit.
The idea is that a gadget associated with a group of bits may assemble multiple times within an increment step. 
To summarize, we create:
\begin{itemize}
	\item one {\tt Inc\_read} gadget associated with, and only with, $b_1$,
	\item one {\tt Inc\_write\_}$0$ gadget and one {\tt Inc\_write\_}$1$ gadget, both  associated with, and only with, $b_1$,
	\item two {\tt Inc\_read} gadgets associated with, and only with, $b'_1$,
	\item four {\tt Inc\_write} gadgets associated with any non-most significant indicator bit $b'_i$, for all integers $1~\leq~i~<~k$, 
	\item two {\tt Inc\_read} gadgets associated with any non-least significant value bit,
	\item three {\tt Inc\_write} gadgets associated with any non-least significant value bit,
	\item two {\tt Inc\_read} gadgets associated with any non-least significant indicator bit, $b'_i$,
	\item two {\tt Inc\_write\_}$0$ gadgets associated with, and only with, $b'_k$,
	\item one {\tt Inc\_to\_copy} gadget associated with, and only with, $b'_k$, and
        \item a unique {\tt Inc\_to\_copy\_last} gadget associated with, and only with, $b'_k$.
\end{itemize}
Moreover, all of these gadgets are created so that:
\begin{enumerate}
	\item they all self-assemble to the left, 	
		\item at the conclusion of the increment step, it is the case that $b'_1 = 1$ and all other indicator bits are set to $0$,
	\item for each output glue of every {\tt Inc} gadget except {\tt Inc\_to\_copy} and  {\tt Inc\_to\_copy\_last}, there exists a unique {\tt Inc} gadget with a matching input glue, and
        \item for each output glue of the {\tt Inc\_to\_copy} and  {\tt Inc\_to\_copy\_last} gadgets, there exists a unique {\tt Copy} gadget with a matching input glue.
\end{enumerate} 
Prior to every increment step, e.g., after the initialization step, it must be the case that $b'_k = 1$ and all other indicator bits are set to 0. 
An {\tt Inc\_read} gadget associated with any non-least significant indicator bit, e.g., $b'_k$, can determine if the index of the indicator bit with which it is associated is $k$ by correctly guessing a 1. This {\tt Inc\_read} gadget then exposes its output glue, which only the input glue of the {\tt Inc\_write\_}$0$ gadget associated with $b'_k$ matches.

Therefore, unlike in the initialization step, we need not associate with each bit (indicator or value) a corresponding {\tt Inc} gadget. It follows that $O(1)$ {\tt Inc} gadgets suffice for our construction.
Since the size of each {\tt Inc} gadget is $\Theta(h)$ and the set of tile types for a gadget must be disjoint from the set of tile types for every other gadget, the number of tile types contributed to $T$ by all the {\tt Inc} gadgets is $\Theta(h)$.
We now create the {\tt Inc} gadgets.
%

%
% Inc_read for b_1.
%
The first gadget in an increment step is a unique {\tt Inc\_read} gadget that is  associated with, and only with, $b_1$ and non-deterministically guesses its unknown binary value.
The input glue of this gadget indicates that it is associated with $b_1$ and that the value of the incoming carry bit is 1.
Moreover, the input glue of this gadget is the unique glue over all the gadgets in our construction that matches the output glue  of the unique {\tt Init\_right} gadget, thus ensuring that this gadget is the first {\tt Inc} gadget to assemble in the first increment step. 
The output glues of this gadget indicate that the next gadget must be an {\tt Inc\_write} gadget associated with $b_1$. They also propagate the incremented value of $b_1$, along with the outgoing carry bit.
Note that the latter two values are computed, from the incoming carry bit and the guessed value of $b_1$, in the spirit of a half adder. 
Therefore, create from the generic gadget in Figure~\ref{fig:Inc-read} the gadget:
\begin{align*}
{\tt Inc\_read}( & \langle {\tt least\_significant\_value\_bit}, {\tt carry} = 1 \rangle, \\
			 	& \langle {\tt least\_significant\_value\_bit} = 1, {\tt carry} = 0 \rangle, \\
				& \langle {\tt least\_significant\_value\_bit} = 0,{\tt carry} = 1 \rangle  ).
\end{align*}
For example, gadget 7 in Figure~\ref{fig:example-inc-n-5} is the unique {\tt Inc\_read} gadget associated with $b_1$ in the $s=5$ example. The corresponding gadget assembly is shown in Figure~\ref{fig:Inc-read-1-good}.
Note that, if $b_1 = 1$, then its output glue is $\langle {\tt least\_significant\_value\_bit} = 0, {\tt carry} = 1 \rangle$, which is always correct.
However, if $b_1 = 0$, then its output glue is not guaranteed to be correct, because the assembly in either Figure~\ref{fig:Inc-read-0-good} or Figure~\ref{fig:Inc-read-0-bad} may assemble.
%

%
% Inc_write for b_1.
%
The second gadget in an increment step is an {\tt Inc\_write} gadget associated with, and only with, $b_1$.
We create two such gadgets, one for each binary value $b$ that was propagated via the output glue of the unique {\tt Inc\_read} gadget associated with $b_1$.
For each gadget, its input glue matches one of the output glues of the {\tt Inc\_read} gadget associated with $b_1$, whereas its output glue indicates that the next gadget must be an {\tt Inc\_read} gadget associated with $b'_1$. The output glue also propagates the value of the incoming carry bit.
Therefore, for each $b \in \{0,1\}$, if $b = 0$ (resp., $b = 1$), create from the generic gadget in Figure~\ref{fig:Inc-write-0} (resp., Figure~\ref{fig:Inc-write-1}) the gadget:
	\begin{align*}
{\tt Inc\_write\_}b( & \langle {\tt least\_significant\_value\_bit} = b, {\tt carry} = 1 - b \rangle, \\
			 	& \langle {\tt least\_significant\_indicator\_bit}, {\tt carry}=1-b\rangle  ).
\end{align*}
For example, gadget 8 in Figure~\ref{fig:example-inc-n-5} is the {\tt Inc\_write\_}$0$ gadget for $b_1$ in the $s=5$ example.
%

%
% Inc_read for b'_1.
%
The third gadget in an increment step is an {\tt Inc\_read} gadget that is associated with, and only with, $b'_1$ and non-deterministically guesses its unknown binary value.
We create two such gadgets, one for each binary value of the carry bit propagated via the output glue of each {\tt Inc\_write} gadget associated with $b_1$.
For each gadget, its input glue matches the output glue of an {\tt Inc\_write} gadget associated with $b_1$, whereas its output glues indicate that the next gadget must be an {\tt Inc\_write\_}$1$ gadget associated with any non-most significant indicator bit. The output glues also propagate the incoming carry bit.
Therefore, for each $c \in \{0,1\}$, create from the generic gadget in Figure~\ref{fig:Inc-read} the gadget:
\begin{align*}
{\tt Inc\_read}( & \langle {\tt least\_significant\_indicator\_bit}, {\tt carry}=c \rangle, \\
			 	& \langle {\tt non\_most\_significant\_indicator\_bit} = 1, {\tt carry}=c \rangle, \\
				& \langle {\tt non\_most\_significant\_indicator\_bit} = 1, {\tt carry}=c \rangle  ).
\end{align*}
For example, gadget 9 in Figure~\ref{fig:example-inc-n-5} is the {\tt Inc\_read} gadget for $b'_1$ in the $s=5$ example.
Note that the output glues of this gadget ensure that the next gadget is an {\tt Inc\_write\_}$1$ gadget. Therefore it will be the case that $b'_1 = 1$ at the conclusion of the increment step.
%

%
% Inc_write for non-most significant indicator bits.
%
The fourth gadget in an increment step is an {\tt Inc\_write} gadget that is associated with any non-most significant indicator bit. 
We create four such gadgets, one for each pair of binary values $b$ and $c$ such that:
\begin{enumerate}
	\item $b$ is the value of the indicator bit that was propagated via the output glue of the previous {\tt Inc\_read} gadget associated with the same non-most significant indicator bit, e.g., the {\tt Inc\_read} gadget associated with $b'_1$, and 	
	\item $c$ is the carry bit propagated via the output glue of the {\tt Inc\_read} gadget associated with the same non-most significant indicator bit.
\end{enumerate}
For each gadget, its input glue matches a corresponding output glue of the {\tt Inc\_read} gadget associated with the same indicator bit, whereas its output glue indicates that the next gadget must be an {\tt Inc\_read} gadget associated with any non-least significant value bit. The output glue also propagates the incoming carry bit.
Therefore, for each $b,c \in \{0,1\}$, if $b = 0$ (resp., $b = 1$), then create from the generic gadget in Figure~\ref{fig:Inc-write-0} (resp., Figure~\ref{fig:Inc-write-1}) the gadget:
\begin{align*}
{\tt Inc\_write\_}b( & \langle {\tt non\_most\_significant\_indicator\_bit} = b, {\tt carry}=c\rangle, \\
			 	&  \langle {\tt non\_least\_significant\_value\_bit}, {\tt carry}=c \rangle  ).
\end{align*}
For example, gadgets 10 and 14 in Figure~\ref{fig:example-inc-n-5} are the {\tt Inc\_write} gadgets associated with non-most significant indicator bits in the $s=5$ example.
%

%
% Inc_read for non-least significant value bits.
%
After an {\tt Inc\_write} gadget associated with any non-most significant indicator bit, e.g., an {\tt Inc\_write} gadget associated with $b'_1$, the next gadget in an increment step is an {\tt Inc\_read} gadget that is associated with, and non-deterministically guesses, the unknown binary value of any non-least significant value bit.
We create two such gadgets, one for each carry bit propagated via the output glue of an {\tt Inc\_write} gadget associated with any non-most significant indicator bit.
For each gadget, its input glue matches the output glue of a unique {\tt Inc\_write} gadget associated with any non-most significant indicator bit, whereas its output glue indicates that the next gadget must be an {\tt Inc\_write} gadget associated with any non-least significant value bit. Its output glue also propagates the incoming {\tt carry} bit.
Therefore, for each $c \in \{0,1\}$, create from the generic gadget in Figure~\ref{fig:Inc-read} the following gadget:
\begin{align*}
{\tt Inc\_read}( & \langle {\tt non\_least\_significant\_value\_bit}, {\tt carry}=c \rangle, \\
			 	& \langle {\tt non\_least\_significant\_value\_bit} = c, {\tt carry}= 0 \rangle, \\
				& \langle {\tt non\_least\_significant\_value\_bit} = 1 - c, {\tt carry}= c \rangle  ).
\end{align*}
For example, gadgets 11 and 15 in Figure~\ref{fig:example-inc-n-5} are the {\tt Inc\_read} gadgets for all non-most significant value bits in the $s=5$ example.
Note that the output glues of these gadgets are configured in the spirit of a half adder as follows.
Suppose, for some integer $2 \leq i \leq k$, that these gadgets are associated with $b_i$, and $c$ is the carry bit propagated via the output glue of an {\tt Inc\_write} gadget associated with $b'_{i-1}$.
If these gadgets non-deterministically guess the binary value of $b_i$ as $b \in \{0,1\}$, then their output glues propagate the corresponding sum bit $b \oplus c$ and carry bit $b \wedge c$.
%

%
% Inc_write for non-least significant value bits.
%
%
After an {\tt Inc\_read} gadget associated with any non-least significant value bit, the next gadget in an increment step is an {\tt Inc\_write} gadget that is associated with any non-least significant value bit.
We create three such gadgets, one for each pair of binary values $b$ and $c$ such that:
\begin{enumerate}
	\item $b$ is the binary value of the bit that was propagated via the output glue of a previous {\tt Inc\_read} gadget associated with the same non-least significant value bit, and
	\item $c$ is the carry bit propagated via the output glue of a previous {\tt Inc\_read} gadget associated with the same non-least significant value bit, but
	\item $b$ and $c$ are not both 1.
\end{enumerate}
We exclude the combination $b = c = 1$ because it does not correspond to an output glue of an {\tt Inc\_read} gadget associated with any non-least significant value bit.
For each gadget, its input glue matches a corresponding output glue of an {\tt Inc\_read} gadget associated with the same non-least significant value bit, whereas its output glue indicates that the next gadget must be an {\tt Inc\_read} gadget associated with any non-least significant indicator bit. Its output glue also  propagates the incoming {\tt carry} bit.
Therefore, for each $(b,c)\in\left\{ (0,0), (0,1), (1,0) \right\}$, if $b = 0$ (resp., $b = 1$), then create from the generic gadget in Figure~\ref{fig:Inc-write-0} (resp., Figure~\ref{fig:Inc-write-1}) the gadget:
	\begin{align*}
{\tt Inc\_write\_}b( & \langle {\tt non\_least\_significant\_value\_bit} = b, {\tt carry}=c \rangle, \\
			 	&  \langle {\tt non\_least\_significant\_indicator\_bit}, {\tt carry}=c \rangle  ).
\end{align*}
For example, gadgets 12 and 16 in Figure~\ref{fig:example-inc-n-5} are the {\tt Inc\_write} gadgets associated with non-least significant bits in the $s=5$ example.
%

%
% Inc_read for non-least significant indicator bits.
%
After an {\tt Inc\_write} gadget associated with any non-least significant value bit, the next gadget in an increment step is an {\tt Inc\_read} gadget that is associated with, and non-deterministically guesses, the unknown binary value of any non-least significant indicator bit.
We create two such gadgets, one for each carry bit value propagated via the output glue of an {\tt Inc\_write} gadget associated with any non-least significant value bit. 
For each newly created {\tt Inc\_read} gadget, its input glue matches the output glue of the {\tt Inc\_write} gadget associated with the preceding non-least significant value bit, whereas its output glue depends on the non-deterministically guessed value of the indicator bit with which the gadget is associated:
\begin{enumerate}
	\item If the guessed value is $0$, then the corresponding output glue of the gadget indicates that the next gadget must be an {\tt Inc\_write\_}$0$ gadget associated with any non-least significant indicator bit but $b'_k$.
	\item Otherwise, the corresponding output glue of the gadget indicates that the next gadget must be an {\tt Inc\_write\_}$0$ gadget associated with, and only with, $b'_k$.
\end{enumerate}
Regardless of the non-deterministically guessed value, both output glues of this gadget propagate the incoming carry bit.
Therefore, for each $c \in \{0,1\}$, create from the generic gadget in Figure~\ref{fig:Inc-read} the gadget:

\begin{align*}
{\tt Inc\_read}( & \langle {\tt non\_least\_significant\_indicator\_bit}, {\tt carry}=c \rangle, \\
			 	& \langle {\tt non\_most\_significant\_indicator\_bit} = 0, {\tt carry}=c \rangle, \\
				& \langle {\tt most\_significant\_indicator\_bit} = 0, {\tt carry}=c \rangle  ).
\end{align*}
For example, gadgets 13 and 17 in Figure~\ref{fig:example-inc-n-5} are the {\tt Inc\_read} gadgets for $b'_2$ and $b'_3$, respectively, in the $s=5$ example.
These gadgets effectively determine whether to terminate the increment step.
Suppose that for some integer $2 \leq i \leq k$, the indicator bit with which these gadgets are associated is $b'_i$ and $c \in \{0,1\}$ is the carry bit propagated via the output glue of the {\tt Inc\_write} gadget associated with $b_i$.
If either gadget non-deterministically guesses $b'_i$ correctly as $0$, then the increment step continues.
If $b'_i = 1$ and $c = 0$, then the transition to the subsequent copy step is initiated.
If $b'_i = 1$ and $c = 1$, then the process of terminating the counter is initiated. 
In every case, the next gadget will be an {\tt Inc\_write\_}$0$ gadget, thus ensuring that all indicator bits other than $b'_1$ are $0$ after the current increment step.
%

%
% Inc_write for b'_k.
%

%
The penultimate gadget in an increment step is an {\tt Inc\_write\_}$0$ gadget associated with, and only with, $b'_k$. 
We create two such gadgets as follows.

One of these gadgets has an input glue that matches the second output glue of an {\tt Inc\_read} gadget associated with any non-least significant indicator bit that propagates a carry of 0. Its output glue initiates the transition to the subsequent, non-final copy step. Therefore, create from the generic gadget in Figure~\ref{fig:Inc-write-0} the gadget:
\begin{align*}
{\tt Inc\_write\_}0( & \langle {\tt most\_significant\_indicator\_bit} = 0, {\tt carry} = 0 \rangle, \\ & {\tt inc\_to\_copy}  ).
\end{align*}

The other one of these gadgets has an input glue that matches the second output glue of an {\tt Inc\_read} gadget associated with any non-least significant indicator bit that propagates a carry of 1. Its output glue initiates the transition to the subsequent copy step and also marks it as the final copy step, thereby initiating the  termination of the counter. Therefore, create from the generic gadget in Figure~\ref{fig:Inc-write-0} the gadget:
\begin{align*}
{\tt Inc\_write\_}0( & \langle {\tt most\_significant\_indicator\_bit} = 0, {\tt carry} = 1 \rangle, \\ & {\tt inc\_to\_copy\_done}  ).
\end{align*}
For example, gadget 18 in Figure~\ref{fig:example-inc-n-5} is the {\tt Inc\_write\_}$0$ gadget for $b'_3$ in the $s=5$ example. This gadget initiates the transition to the subsequent, non-final copy step.
%

%
% Inc_to_copy
%
If the output glue of the {\tt Inc\_write\_}$0$ gadget associated with $b'_k$ is {\tt inc\_to\_copy}, then the last gadget in an increment step is an {\tt Inc\_to\_copy} gadget associated with, and only with, $b'_k$, which initiates the transition to the subsequent, non-final copy step.
Its input glue matches the output glue of the {\tt Inc\_write\_}$0$ gadget associated with $b'_k$ and an incoming carry bit of 0, whereas its output glue indicates that a {\tt Copy\_read} gadget (to be described below) is the next gadget and that the next copy step is not the final step in the counter.
Therefore, create from the generic gadget in Figure~\ref{fig:Inc-to-copy-gen} the gadget:
\begin{align*}
{\tt Inc\_to\_copy}( & {\tt inc\_to\_copy}, \\ & \langle {\tt most\_significant\_indicator\_bit}, {\tt done} = 0 \rangle  ).
\end{align*}
For example, gadget 19 in Figure~\ref{fig:example-inc-n-5} is the {\tt Inc\_to\_copy} gadget in the $s=5$ example.
%

%
% Inc_to_copy_done
%
If the output glue of the {\tt Inc\_write\_}$0$ gadget associated with $b'_k$ is {\tt inc\_to\_copy\_done}, then the last gadget in an increment step is an {\tt Inc\_to\_copy\_last} gadget associated with, and only with, $b'_k$, which initiates the transition to the final copy step.
Its input glue matches the output glue of the {\tt Inc\_write\_}$0$ gadget associated with $b'_k$ and an incoming carry bit of 1, whereas its output glue indicates that a {\tt Copy\_read} gadget is the next gadget and that the subsequent copy step is the final step in the counter.
Therefore, create from the generic gadget in Figure~\ref{fig:Inc-to-copy-last} the gadget:
\begin{align*}
{\tt Inc\_to\_copy\_last}( & {\tt inc\_to\_copy\_done}, \\ & \langle {\tt most\_significant\_indicator\_bit}, {\tt done} = 1 \rangle  ).
\end{align*}
Note that at the conclusion of the final increment step, every value bit is equal to  0.

\emph{Step 2d.}
%
% COPY
%
We now define the gadgets for copy steps.
In general, after every increment step that ends with an {\tt Inc\_to\_copy} gadget, a copy step will follow.
Assume that, prior to each copy step, $b'_1 = 1$ and all other indicator bits are $0$, and that, if this is the final copy step, all of the value bits are $0$.
In a copy step, a row of {\tt Copy} gadgets self-assemble to the right without changing any of the value bits of the counter.
Each {\tt Copy} gadget, like its {\tt Inc} counterpart, has a fixed width and a height proportional to $h$, but it self-assembles to the right.
We use four types of {\tt Copy} gadgets in our construction, namely {\tt Copy\_read}, {\tt Copy\_write\_}$0$, {\tt Copy\_write\_}$1$, and {\tt Copy\_to\_inc}.
The assemblies for the {\tt Copy\_read} gadget are shown in Figure~\ref{fig:Copy-read}.
\begin{figure}[h!]
    \centering
    \begin{subfigure}[t]{0.3\textwidth}
        \centering
        \includegraphics[width=1.2in]{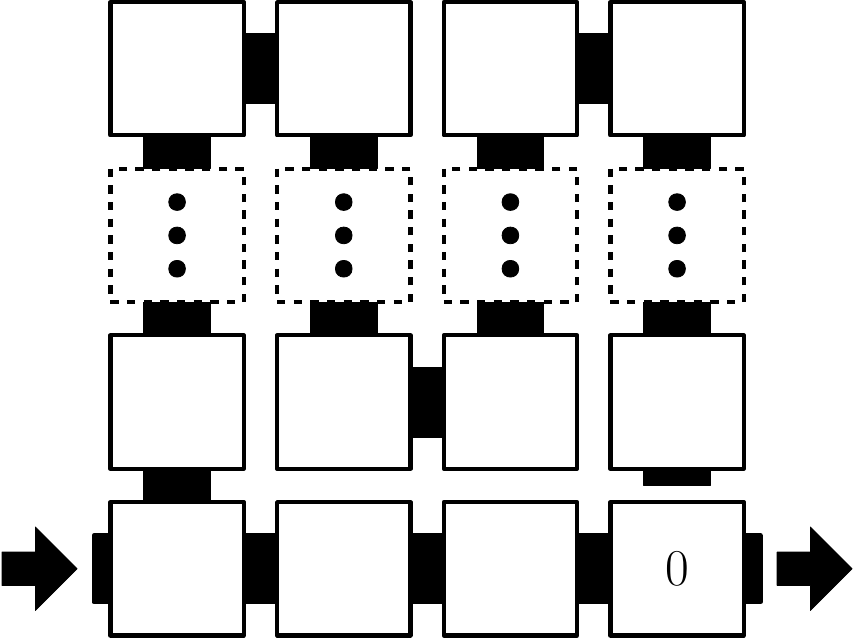}
        \caption{\label{fig:Copy-read-0-good} Guess a 0 bit correctly. }
    \end{subfigure}
    ~
    \begin{subfigure}[t]{0.3\textwidth}
        \centering
        \includegraphics[width=1.2in]{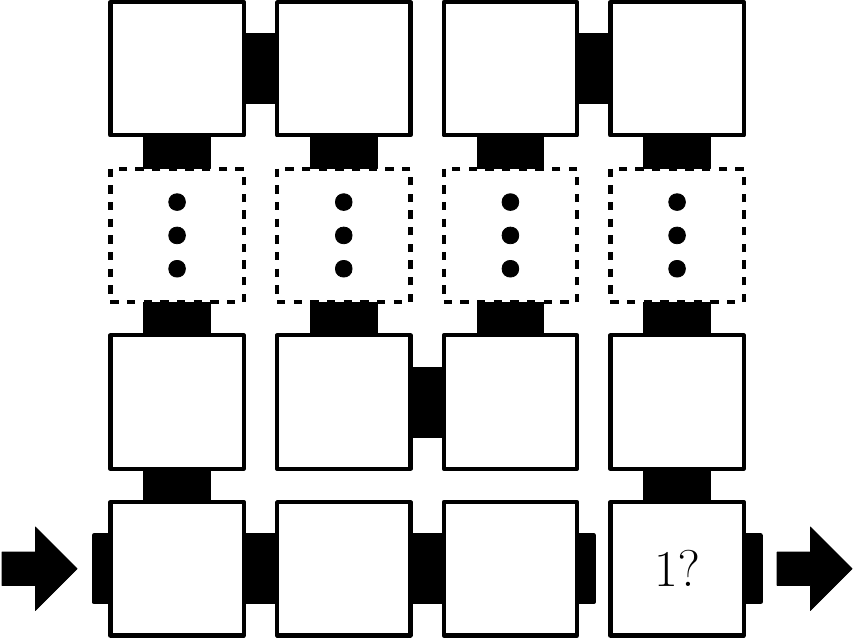}
        \caption{\label{fig:Copy-read-0-bad} Incorrectly guess a 0 bit as a 1. The idea is that $h$ is chosen such that this scenario is unlikely to occur. }
    \end{subfigure}
    ~
    \begin{subfigure}[t]{0.3\textwidth}
        \centering
        \includegraphics[width=1.2in]{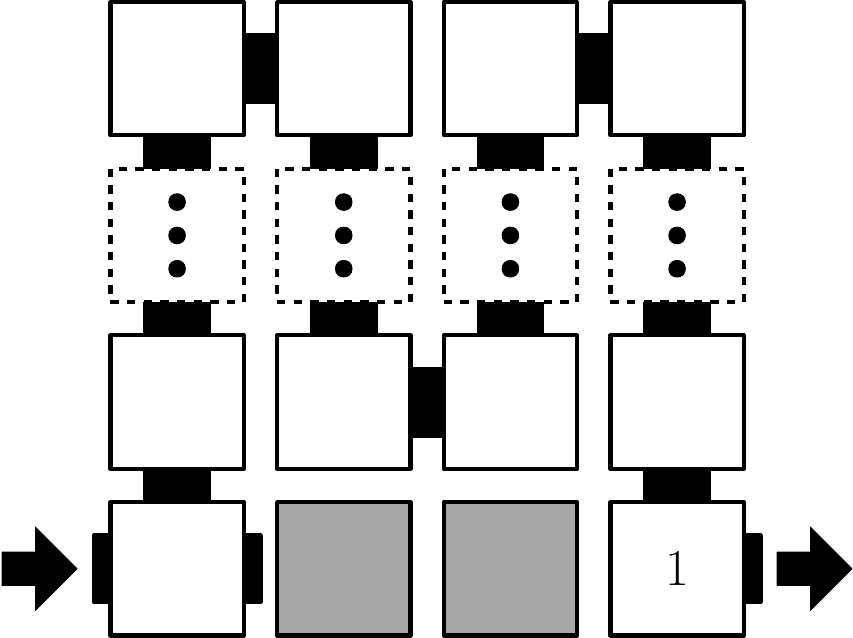}
        \caption{\label{fig:Copy-read-1-good} Always guess a 1 bit correctly as a 1. The grey tiles represent tiles that are guaranteed to be placed before the input tile of the {\tt Copy\_read} is placed.}
    \end{subfigure}
    \caption{\label{fig:Copy-read} The assemblies for the generic {\tt Copy\_read} gadget are shown in parts (\subref{fig:Copy-read-0-good}) and~(\subref{fig:Copy-read-0-bad}). Part (\subref{fig:Copy-read-1-good}) is not a gadget assembly but rather the unique subassembly of the gadget assembly in part (\subref{fig:Copy-read-0-bad}) that corresponds to correctly guessing a 1 bit. Note that the point at which the lower-right tile is placed is a POC in $\mathcal{T}$ and the grey tiles in part~(\subref{fig:Copy-read-1-good}) represent a portion of some assembly that rigs the corresponding competition.  }
\end{figure}
We use two generic {\tt Copy\_write} gadgets in our construction, namely  {\tt Copy\_write\_}$0$ and {\tt Copy\_write\_}$1$, shown in Figures~\ref{fig:Copy-write-0} and~\ref{fig:Copy-write-1}, respectively.
\begin{figure}[h!]
    \centering
    \begin{subfigure}[t]{0.4\textwidth}
        \centering
        \includegraphics[width=1.5in]{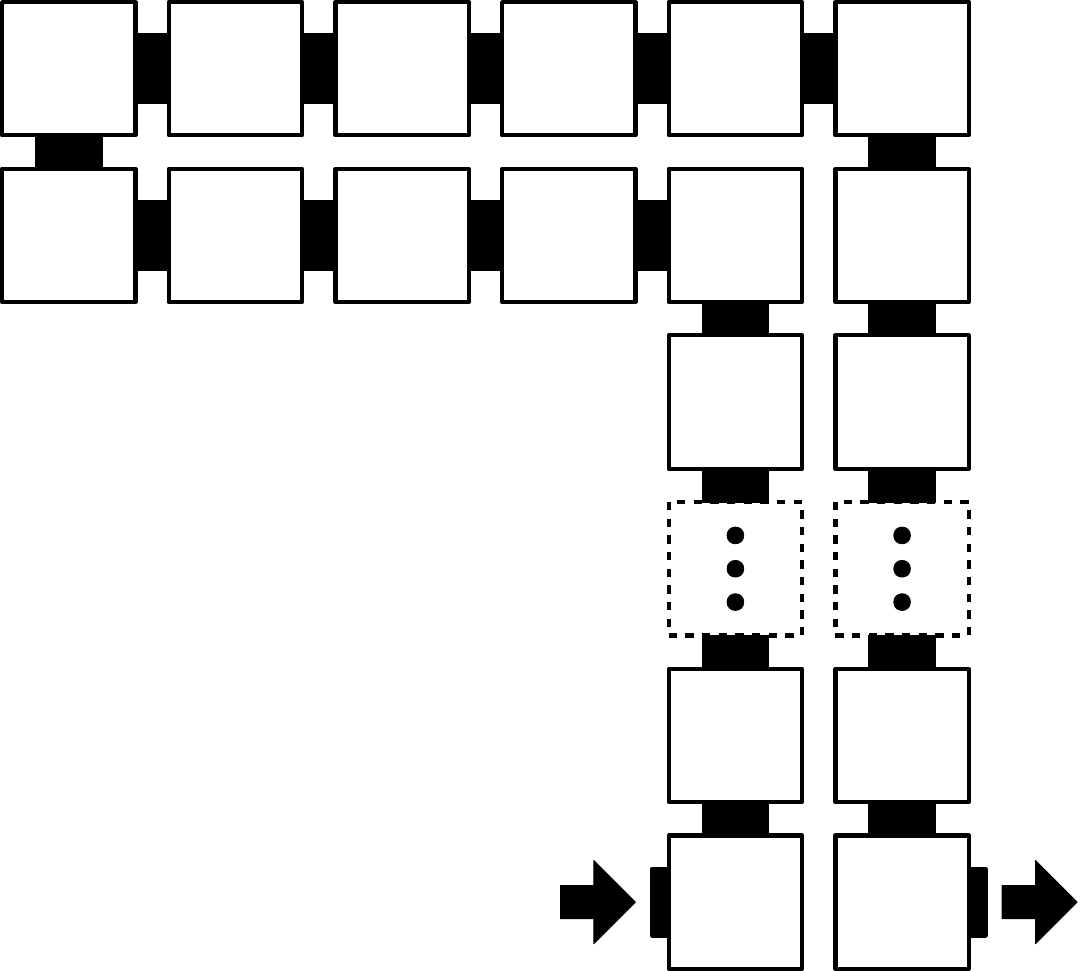}
        \caption{\label{fig:Copy-write-0} {\tt Copy\_write\_0} }
    \end{subfigure}
    ~
    \begin{subfigure}[t]{0.4\textwidth}
        \centering
        \includegraphics[width=1.5in]{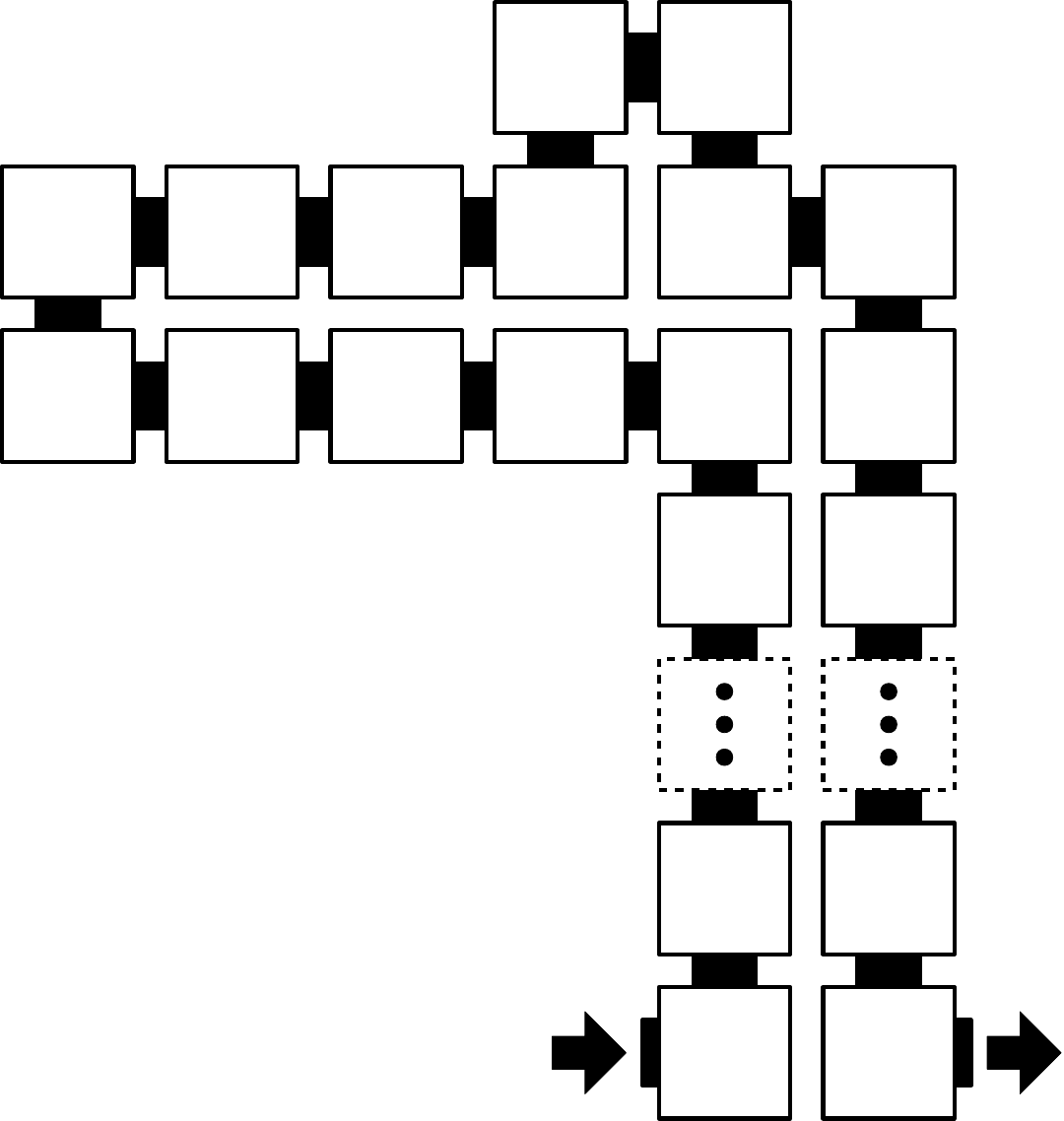}
        \caption{\label{fig:Copy-write-1} {\tt Copy\_write\_1} }
    \end{subfigure}
    \caption{\label{fig:Copy-write} The assemblies for the generic {\tt Copy\_write} gadgets. }
\end{figure}
The transition from a completed copy step to the subsequent increment step is facilitated by a {\tt Copy\_to\_inc} gadget, whose generic version is shown in Figure~\ref{fig:copy-to-inc}.
\begin{figure}[h!]
    \centering
        \begin{subfigure}[t]{0.4\textwidth}
        	\centering
        	\includegraphics[width=0.6in]{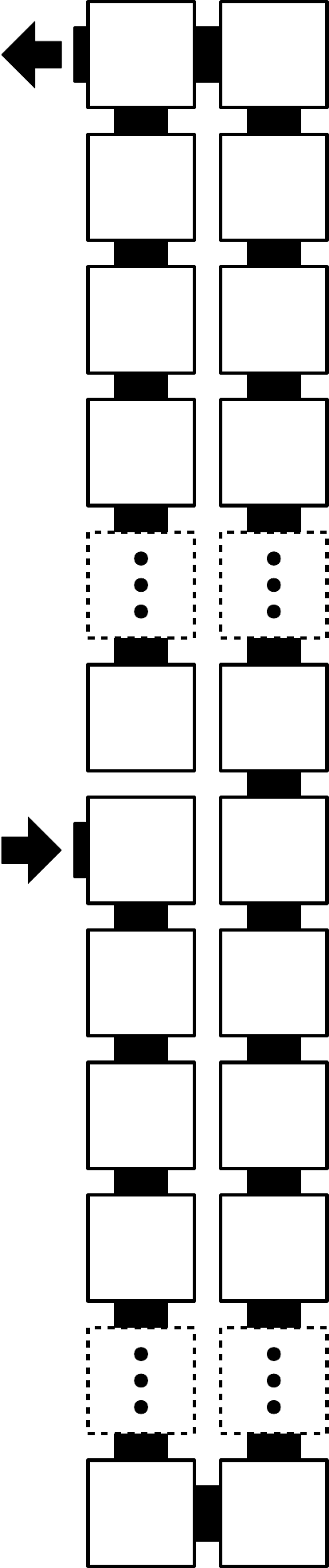}
        	\caption{\label{fig:copy-to-inc} {\tt Copy\_to\_inc}}
        \end{subfigure}
        ~
        \begin{subfigure}[t]{0.4\textwidth}
        	\centering
        	\includegraphics[width=0.75in]{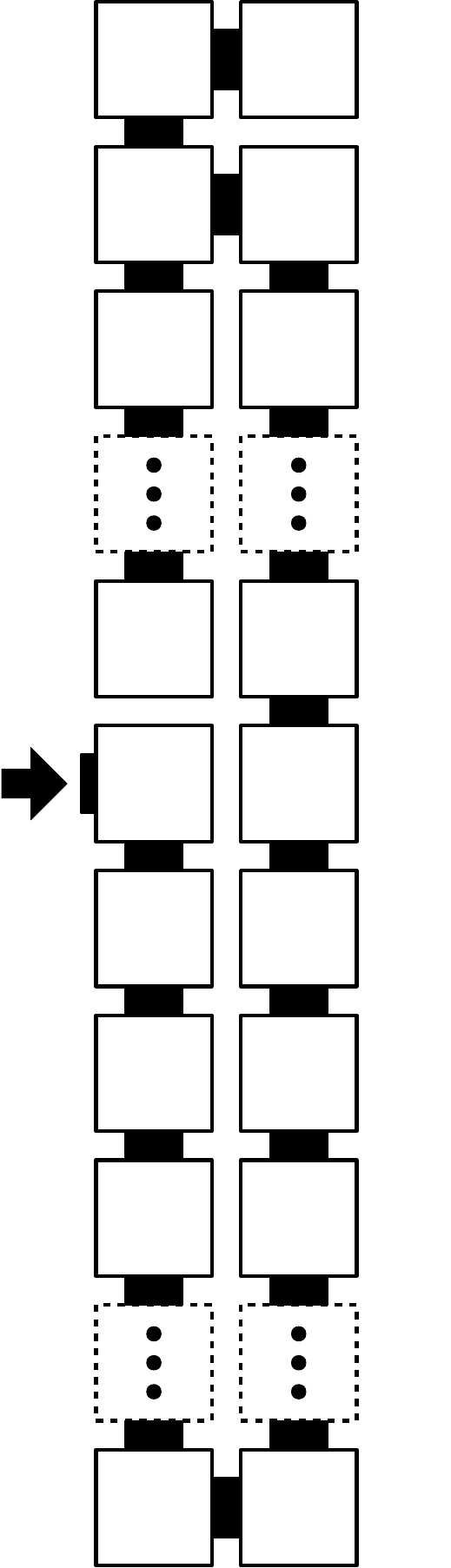}
        	\caption{\label{fig:last} {\tt Last}}
        \end{subfigure}
        \caption{\label{fig:copy-to-inc-and-last} The assemblies for the generic {\tt Copy\_to\_inc} gadget and the {\tt Last} gadget.  }
\end{figure}

Figure~\ref{fig:example-copy-n-5} is a continuation of Figure~\ref{fig:example-inc-n-5} and depicts, at a high-level, the first copy step for the counter in the  $s=5$ example.
\begin{figure}[h!]
    \centering
        \centering
        \includegraphics[width=\linewidth]{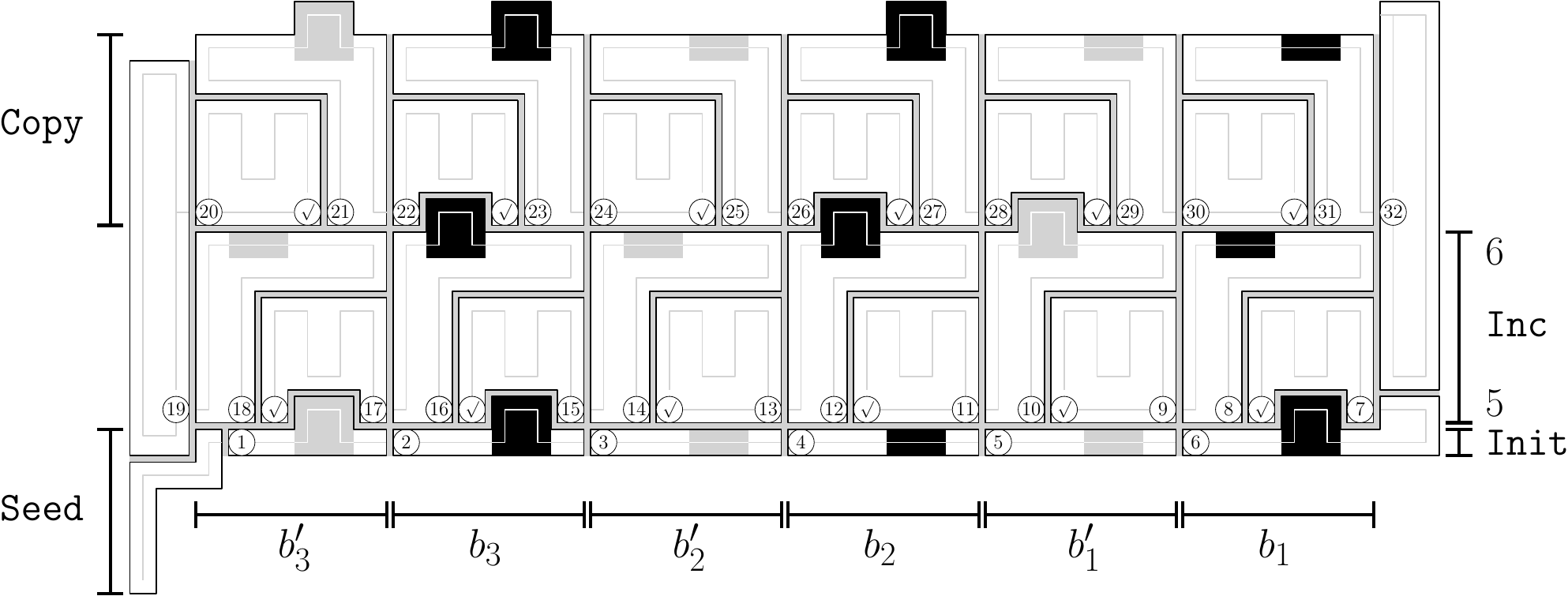}

        \caption{\label{fig:example-copy-n-5} A high-level example showing the self-assembly of the gadgets that comprise the first copy step of the counter.  Note that the top row of black regions correspond, from left to right, to the value bits $b_3 = 1$, $b_2 = 1$, and $b_1 = 0$, which is the binary representation of 6, just like in the increment row below it. The top row of grey regions correspond, from left to right, to the indicator bits $b'_3 = 1$, $b'_2 = 0$, and $b'_1 = 0$.}
\end{figure}
Like the {\tt Inc} gadgets created for an increment step, each {\tt Copy} gadget is associated either with a single, specific bit or with a group of bits in the value of the counter.
To summarize, we create:
\begin{itemize}
	\item two {\tt Copy\_read} gadgets associated with, and only with, $b'_k$,
	\item three {\tt Copy\_write} gadgets associated with any non-least significant indicator bit,
	\item two {\tt Copy\_read} gadgets associated with any non-least significant value bit,
	\item three {\tt Copy\_write} gadgets associated with any non-least significant value bit,
	\item two {\tt Copy\_read} gadgets associated with any non-most significant indicator bit,
	\item two {\tt Copy\_write\_0} gadgets associated with, and only with, $b'_1$,
	\item two {\tt Copy\_read} gadgets associated with, and only with, $b_1$,
	\item three {\tt Copy\_write} gadgets associated with, and only with, $b_1$,
	\item one {\tt Copy\_to\_inc} gadget associated with, and only with, $b_1$, and
	\item a unique {\tt Last} gadget associated with, and only with, $b_1$. 
\end{itemize}
Moreover, all of these gadgets are created so that:
\begin{enumerate}
	\item they all self-assemble to the right,
	\item at the conclusion of any copy step, it is the case that all indicator bits, except for $b'_k$, are equal to $0$, and that $b'_k$ is equal to $1$  if and only if this is not the final copy step,
	\item at the conclusion of the final copy step, it is the case that all value bits are equal to 0
	\item for each output glue of every {\tt Copy} gadget except {\tt Copy\_to\_inc} and {\tt Last}, there exists a unique {\tt Copy} gadget with a matching input glue, and
        \item for the output glue of the {\tt Copy\_to\_inc} gadget, there exists a unique {\tt Inc} gadget with a matching input glue.
\end{enumerate}
Prior to every copy step, i.e., after the previous increment step, it must be the case that $b'_1 = 1$ and all other indicator bits are $0$. The {\tt Copy\_read} gadget associated with any non-most significant indicator bit can determine if the index of the indicator bit with which it is associated is $1$ by correctly guessing a $1$.  This {\tt Copy\_read} gadget then exposes its output glue, which only the input glue of the {\tt Copy\_write\_0} gadget associated with $b'_1$ matches.
Therefore, we need not associate with each bit (indicator or value) a corresponding {\tt Copy} gadget. It follows that $O(1)$ {\tt Copy} gadgets suffice for our construction.
Since the size of each {\tt Copy} gadget is $\Theta(h)$ and the set of tile types for a gadget must be disjoint from the set of tile types for every other gadget, the number of tile types contributed to $T$ by all the {\tt Copy} gadgets is $\Theta(h)$.
We now create the {\tt Copy} gadgets.
%

%
% Copy_read for b'_k.
%
The first gadget in a copy step is a {\tt Copy\_read} gadget associated with, and only with, $b'_k$, which non-deterministically guesses its unknown binary value. 
We create two such gadgets, one for each binary value $d$ propagated by the previous gadget that indicates whether this is the final copy step.
For each gadget, its input glue indicates that it is associated with $b'_k$ and whether this is the final copy step. It is the unique glue over all gadgets in our construction that matches the output glue of the unique {\tt Inc\_to\_copy} gadget, thus ensuring that this gadget is the first {\tt Copy} gadget in any copy step. 
The output glues of each gadget indicate that the next gadget must be a {\tt Copy\_write\_1} gadget associated with any non-least significant indicator bit and whether this is the final copy step.
Therefore, for each $d \in \{0,1\}$, create from the generic gadget in Figure~\ref{fig:Copy-read} the gadget:
\begin{align*}
{\tt Copy\_read}( & \langle {\tt most\_significant\_indicator\_bit}, {\tt done} = d\rangle, \\
			 	&  \langle {\tt non\_least\_significant\_indicator\_bit} = 1-d, {\tt done} = d \rangle, 
				\\
				&  \langle {\tt non\_least\_significant\_indicator\_bit} = 1-d, {\tt done} = d \rangle  
			  ).
\end{align*}
For example, gadget 20 in Figure~\ref{fig:example-copy-n-5} is the unique {\tt Copy\_read} gadget associated with $b'_k$ in the $s=5$ example. The corresponding gadget assembly is shown in Figure~\ref{fig:Copy-read-0-good}.
Note that the output glues of these gadgets ensure that the next gadget is a {\tt Copy\_write\_}$(1-d)$ gadget and thus that $b'_k = 1-d$ at the conclusion of the copy step. 
Furthermore, the gadget that assembles in the last copy row (i.e., the one whose input glue contains {\tt done} = 1) never makes a read error since its output glue always encodes the fact that a bit value of 0 was read in this case.

%
% Copy_write for non-least significant indicator bits.
%
The second gadget in a copy step is a {\tt Copy\_write} gadget that is associated with any non-least significant indicator bit.
We create three such gadgets, one for each pair of binary values $b$ and $d$ such that:
\begin{enumerate}
	\item $b$ is the value that was propagated via the output glue of the previous {\tt Copy\_read} gadget associated with the same non-least significant indicator bit, e.g., the {\tt Copy\_read} gadget associated with $b'_k$,
	\item $d$ is the value propagated by the previous gadget that indicates whether this is the final copy step, but
	\item $b$ and $d$ cannot both be equal to 1.
\end{enumerate} 
We exclude the combination where $b=d=1$ because, if this is the last copy step (i.e., $d=1$), then all the indicator bits must be 0 at the conclusion of the step.
For each gadget created in this step, its input glue matches a corresponding output glue of the {\tt Copy\_read} gadget associated with the same indicator bit. Its input glue also indicates whether this is the final copy step.
The output glue of each created gadget indicates that the next gadget must be a {\tt Copy\_read} gadget associated with any non-least significant value bit. Its output glue also indicates whether this is the final copy step.
Therefore, for each $(b,d) \in \{(0,0), (0,1), (1,0)\}$, if $b = 0$ (resp., $b = 1$), then create from the generic gadget in Figure~\ref{fig:Copy-write-0} (resp., Figure~\ref{fig:Copy-write-1}) the gadget:
\begin{align*}
{\tt Copy\_write\_}b( 	& \langle {\tt non\_least\_significant\_indicator\_bit} = b, {\tt done} = d \rangle, \\
			 			& \langle {\tt non\_least\_significant\_value\_bit}, {\tt done} = d ).
\end{align*}
For example, gadgets 21 and 25 in Figure~\ref{fig:example-copy-n-5} are the {\tt Copy\_write} gadgets associated with non-least significant indicator bits in the $s=5$ example.
%

%
% Copy_read for non-least significant value bits
%
After any {\tt Copy\_write} gadget associated with any non-least significant indicator bit, e.g., the {\tt Copy\_write\_0} gadget associated with $b'_k$ in the final copy step, the next gadget in a copy step is a {\tt Copy\_read} gadget associated with, and that non-deterministically guesses, the unknown binary value of any non-least significant value bit. 
We create two such gadgets, one for each binary value $d$ propagated by the previous gadget that indicates whether this is the final copy step.
For each gadget, its input glue matches the output glue of a {\tt Copy\_write} gadget associated with any non-least significant indicator bit.
The output glues of each gadget indicate that the next gadget must be a {\tt Copy\_write} gadget associated with any non-least significant value bit. The output glues also indicate whether this is the final copy step. 
Therefore, for $d=0$, create from the generic gadget in Figure~\ref{fig:Copy-read} the gadget:
\begin{align*}
{\tt Copy\_read}( & \langle {\tt non\_least\_significant\_value\_bit}, {\tt done} = 0\rangle, \\
			 	&  \langle {\tt non\_least\_significant\_value\_bit} = 0, {\tt done} = 0\rangle, 
				\\
				&  \langle {\tt non\_least\_significant\_value\_bit} = 1, {\tt done} = 0\rangle 
			  ).
\end{align*}
For example, gadgets 22 and 26 in Figure~\ref{fig:example-copy-n-5} are the {\tt Copy\_read} gadgets associated with non-least significant value bits in the non-final copy row in the $s=5$ example.

Then, for $d=1$, create from the generic gadget in Figure~\ref{fig:Copy-read} the gadget:
\begin{align*}
{\tt Copy\_read}( & \langle {\tt non\_least\_significant\_value\_bit}, {\tt done} = 1\rangle, \\
			 	&  \langle {\tt non\_least\_significant\_value\_bit} = 0, {\tt done} = 1\rangle, 
				\\
				&  \langle {\tt non\_least\_significant\_value\_bit} = 0, {\tt done} = 1\rangle 
			  ).
\end{align*}

Note that this gadget, which assembles in the last copy row, never makes a read error since its output glue always encodes the fact that a bit value of 0 was read in this case.

For example, gadgets 74 and 78 in Figure~\ref{fig:high-level-overview} (see below) are the {\tt Copy\_read} gadgets associated with non-least significant value bits in the final copy row in the $s=5$ example.

%

%
% Copy_write for non-least significant value bits.
%
After any {\tt Copy\_read} gadget associated with any non-least significant value bit, the next gadget in a copy step is a {\tt Copy\_write} gadget that is associated with any non-least significant value bit.
We create three such gadgets, one for each pair of binary values $b$ and $d$ such that:
\begin{enumerate}
	\item $b$ is the value that was propagated via the output glue of the previous {\tt Copy\_read} gadget associated with the same non-least significant value bit, and
	\item $d$ is the value propagated by the previous gadget that indicates whether this is the final copy step, but
	\item $b$ and $d$ cannot both be equal to 1.
\end{enumerate}
We exclude the combination where $b=d=1$ because, if this is the final copy step, then all the value bits must be 0 at the conclusion of the step.
For each gadget, its input glue matches a corresponding output glue of the {\tt Copy\_read} gadget associated with the same value bit, whereas its output glue indicates that the next gadget must be a {\tt Copy\_read} gadget associated with any non-most significant indicator bit. Its output glue also indicates whether this is the final copy step.
Therefore, for each $(b,d) \in \{(0,0),(0,1),(1,0)\}$, if $b = 0$ (resp., $b = 1$), then create from the generic gadget in Figure~\ref{fig:Copy-write-0} (resp., Figure~\ref{fig:Copy-write-1}) the gadget:
\begin{align*}
{\tt Copy\_write\_}b( 	& \langle {\tt non\_least\_significant\_value\_bit} = b, {\tt done} = d \rangle, \\
			 			& \langle {\tt non\_most\_significant\_indicator\_bit}, {\tt done} = d\rangle  ).
\end{align*}
For example, gadgets 23 and 27 in Figure~\ref{fig:example-copy-n-5} are the {\tt Copy\_write} gadgets associated with non-least significant value bits in the $s=5$ example.
%
%

%
% Copy_read for non-most significant indicator bits.
%
After any {\tt Copy\_write} gadget associated with any non-least significant value bit, the next gadget in a copy step is a {\tt Copy\_read} gadget associated with, and that non-deterministically guesses, the unknown binary value of any non-most significant indicator bit.
We create two such gadgets, one for each binary value $d$ propagated by the previous gadget that indicates whether this is the final copy step.
For each gadget, its input glue matches the output glue of a {\tt Copy\_write} gadget associated with any non-least significant value bit, whereas its output glue depends on the non-deterministically guessed value of the indicator bit with which the gadget is associated, as follows:
\begin{enumerate}
	\item If the guessed value is $0$, then the corresponding output glue of the gadget indicates that the next gadget must be a {\tt Copy\_write\_}$0$ gadget associated with any non-most significant indicator bit.
	\item Otherwise, the corresponding output glue of the gadget indicates that the next gadget must be a {\tt Copy\_write\_}$0$ gadget associated with, and only, with $b'_1$.
\end{enumerate}
In both cases, the output glue of this gadget indicates whether this is the final copy step.
Therefore, for each $d \in \{0,1\}$, create from the generic gadget in Figure~\ref{fig:Copy-read} the gadget:
\begin{align*}
{\tt Copy\_read}( & \langle {\tt non\_most\_significant\_indicator\_bit}, {\tt done} = d\rangle, \\
			 	&  \langle {\tt non\_most\_significant\_indicator\_bit} = 0, {\tt done} = d\rangle,
				\\
				&  \langle {\tt least\_significant\_indicator\_bit} = 0, {\tt done} = d\rangle 
			  ).
\end{align*}
For example, gadgets 24 and 28 in Figure~\ref{fig:example-copy-n-5} are the {\tt Copy\_read} gadgets associated with non-most significant indicator bits in the $s=5$ example.
Note that these gadgets ensure that all indicator bits other than $b'_k$ have the value 0 at the conclusion of any copy step, and that, if this is the final copy step, $b'_k$ also has  the value 0. 
Furthermore, the gadgets that assemble in the last copy row (i.e., the ones whose input glue contains {\tt done} = 1) never make a read error since their output glue always encodes the fact that a bit value of 0 was read in this case.

%
% Copy_write for b'_1.
%
If the {\tt Copy\_read} gadget associated with $b'_1$ non-deterministically guesses $b'_1 = 1$, then the next gadget is a {\tt Copy\_write\_0} gadget associated with, and only with, $b'_1$. 
We create two such gadgets, one for each binary value $d$ propagated by the previous gadget that indicates whether this is the final copy step.
The input glue of each gadget matches the second output glue of the {\tt Copy\_read} gadget associated with $b'_1$, whereas the output glue of each gadget indicates that the next gadget must be a {\tt Copy\_read} gadget associated with, and only with, $b_1$. The output glue also indicates whether this is the final copy step. 
Therefore, for each $d \in \{0,1\}$, create from the generic gadget in Figure~\ref{fig:Copy-write-0} the gadget:
\begin{align*}
{\tt Copy\_write\_0}( 	& \langle {\tt least\_significant\_indicator\_bit} = 0, {\tt done} = d \rangle, \\
			 			& \langle {\tt least\_significant\_value\_bit}, {\tt done} = d \rangle  ).
\end{align*}
For example, gadget 29 in Figure~\ref{fig:example-copy-n-5} is the {\tt Copy\_write\_}$0$ gadget associated with $b'_1$ in the $s=5$ example.
%

%
% Copy_read for b_1.
%
After the {\tt Copy\_write\_0} associated with $b'_1$, the next gadget in a copy step is a {\tt Copy\_read} gadget associated with, and that non-deterministically guesses, the unknown binary value of $b_1$.
We create two such gadgets, one for each binary value $d$ propagated by the previous gadget that indicates whether this is the final copy step.
For each gadget, its input glue matches the output glue of the {\tt Copy\_write\_0} gadget associated with $b'_1$, whereas its output glue indicates the next gadget must be a {\tt Copy\_write\_0} associated with, and only with, $b_1$. The output glue also indicates whether this is the final copy step.
Therefore, for $d=0$, create from the generic gadget in Figure~\ref{fig:Copy-read} the gadget:
\begin{align*}
{\tt Copy\_read}( & \langle {\tt least\_significant\_value\_bit}, {\tt done} = 0\rangle, \\
			 	&  \langle {\tt least\_significant\_value\_bit} = 0, {\tt done} = 0\rangle ,
				\\
				&  \langle {\tt least\_significant\_value\_bit} = 1, {\tt done} = 0\rangle
			  ).
\end{align*}
For example, gadget 30 in Figure~\ref{fig:example-copy-n-5} is the {\tt Copy\_read} gadget associated with $b_1$ in the $s=5$ example.

Then, for $d=1$, create from the generic gadget in Figure~\ref{fig:Copy-read} the gadget:
\begin{align*}
{\tt Copy\_read}( & \langle {\tt least\_significant\_value\_bit}, {\tt done} = 1\rangle, \\
			 	&  \langle {\tt least\_significant\_value\_bit} = 0, {\tt done} = 1\rangle ,
				\\
				&  \langle {\tt least\_significant\_value\_bit} = 0, {\tt done} = 1\rangle
			  ).
\end{align*}
For example, gadget 82 in Figure~\ref{fig:high-level-overview} (see below) is the {\tt Copy\_read} gadget associated with $b_1$ in the final copy row in the $s=5$ example.

%
% Copy_write for b_1.
%
After the {\tt Copy\_read} gadget associated with $b_1$, the next and penultimate gadget in a copy step is a {\tt Copy\_write} gadget associated with, and only with, $b_1$. 
We create three such gadgets, one for each pair of binary values $b$ and $d$ such that:
\begin{enumerate}
	\item $b$ is the value that was propagated via the output glue of the previous {\tt Copy\_read} gadget associated with $b_1$, and
	\item $d$ is the value propagated by the previous gadget that indicates whether this is the final copy step, but
	\item $b$ and $d$ cannot both be equal to 1.
\end{enumerate}
We exclude the combination where $b=d=1$ because, if this is the final copy step, then all the value bits must be 0 at the conclusion of the step.
For each gadget, its input glue matches a corresponding output glue of the {\tt Copy\_read} gadget associated with $b_1$.
If the output glue of the {\tt Copy\_read} gadget associated with $b_1$ indicates that this is not the final copy step, then the output glue of each gadget initiates the transition to the subsequent increment step.
Therefore, for each $b \in \{0,1\}$, if $b = 0$ (resp., $b = 1$) create from the generic gadget in Figure~\ref{fig:Copy-write-0} (resp., Figure~\ref{fig:Copy-write-1}) the gadget:
\begin{align*}
{\tt Copy\_write\_}b( 	& \langle {\tt least\_significant\_value\_bit} = b, {\tt done} = 0 \rangle, \\
			 			& {\tt copy\_to\_inc}  ).
\end{align*}
However, if the output glue of the {\tt Copy\_read} gadget associated with $b_1$ indicates that this is the final copy step, then the output glue of the last {\tt Copy\_write\_0} gadget ensures that the next gadget must be the {\tt Last} gadget. 
Therefore, create from the generic gadget in Figure~\ref{fig:Copy-write-0} the gadget:
\begin{align*}
{\tt Copy\_write\_0}( 	& \langle {\tt least\_significant\_value\_bit} = 0, {\tt done} = 1 \rangle, \\
			 			&  {\tt copy\_to\_last}  ).
\end{align*}
For example, gadget 31 in Figure~\ref{fig:example-copy-n-5} is the {\tt Copy\_write} gadget associated with $b_1$ in the $s=5$ example.
%

%
% Copy_to_inc for b_1.
%
If the output glue of the {\tt Copy\_write} gadget associated with $b_1$ indicates that this is not the final copy step, then the last gadget in a copy step is a unique {\tt Copy\_to\_inc} gadget associated with, and only with, $b_1$, which transitions from the now-concluded, non-final copy step to the subsequent increment step. 
Its input glue matches the output glue of a {\tt Copy\_write} gadget associated with $b_1$, whereas its output glue indicates that the next gadget is the unique {\tt Inc\_read} gadget associated with $b_1$, which is the first gadget in the subsequent increment step.  
Therefore, create from the generic gadget in Figure~\ref{fig:copy-to-inc} the gadget:
\begin{align*}
{\tt Copy\_to\_inc}( 	& {\tt copy\_to\_inc}, \\
						& \langle {\tt least\_significant\_value\_bit}, {\tt carry} = 1 \rangle ).
\end{align*}
For example, gadget 32 in Figure~\ref{fig:example-copy-n-5} is the {\tt Copy\_to\_inc} gadget associated with $b_1$ in the $s=5$ example.
%

%
% Last
%
If the output glue of the {\tt Copy\_write} gadget associated with $b_1$ indicates that this is the final copy step, then the last gadget in this copy step, and thus in our construction, is a unique {\tt Last} gadget associated with, and only with, $b_1$.
Its input glue matches the output glue of a {\tt Copy\_write} gadget associated with $b_1$. It is the unique gadget in our construction with no output glue.
Therefore, create from the generic gadget in Figure~\ref{fig:last} the gadget:
\begin{align*}
{\tt Last}( 	& {\tt copy\_to\_last} ).
\end{align*}
Figure~\ref{fig:high-level-overview} is a high-level depiction of the terminal result of our $s=5$ example.
Note that the {\tt Last} gadget is depicted in this example and is gadget number 84.
\begin{figure}[h!]
    \centering
        \centering
        \includegraphics[width=\linewidth]{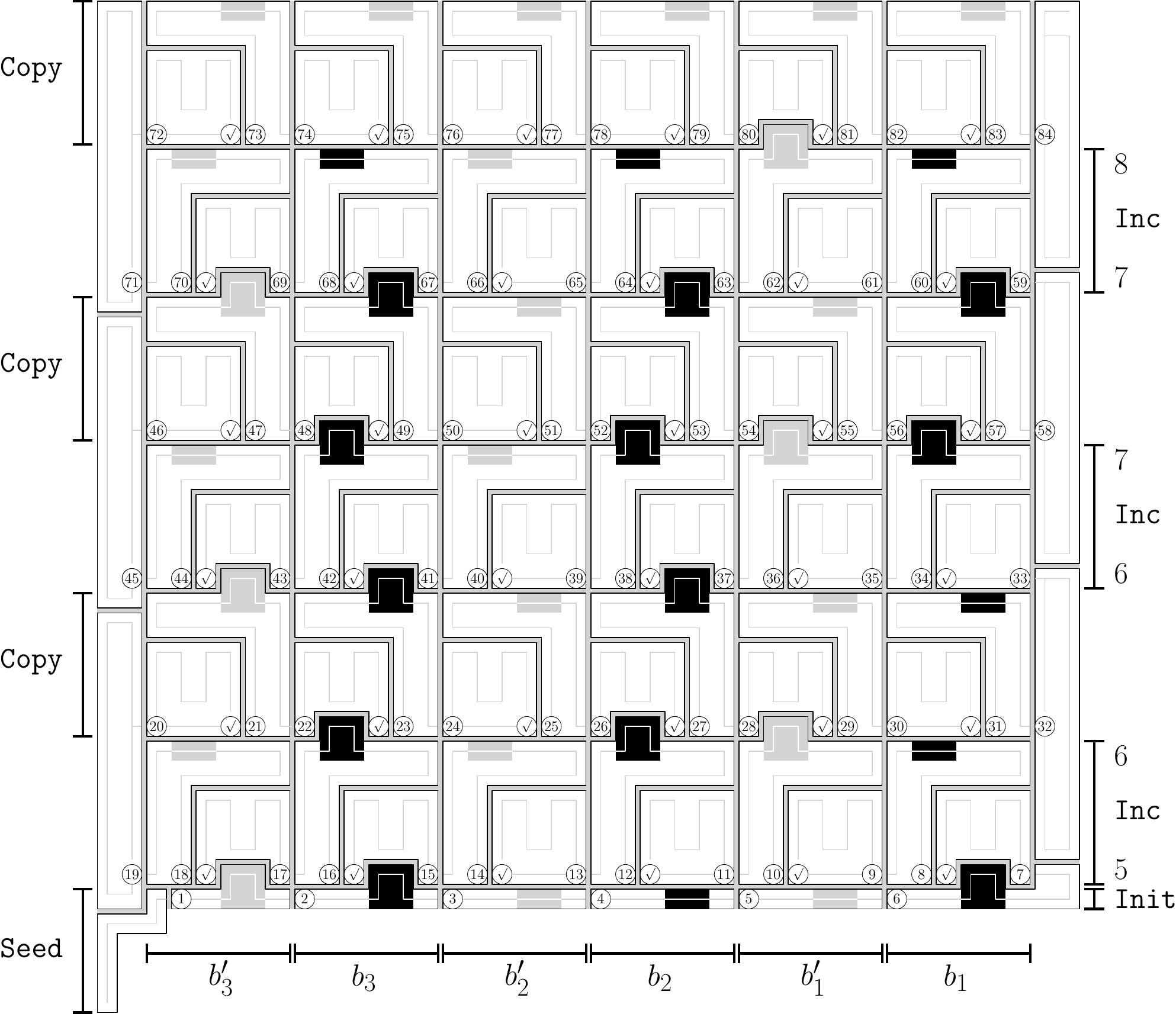}
        \caption{\label{fig:high-level-overview} High-level depiction of the conclusion of our $s=5$ example in which every {\tt Copy\_read} and {\tt Inc\_read} gadget guesses correctly the bit with which it is associated. }
\end{figure}

\emph{Step 2e.}
We now compute the number of tile types in $T$.
Note that $T$ is comprised of exactly the tile types contributed by all of the previously created gadgets. 
The total numbers of tile types contributed by the {\tt Seed}, {\tt Init}, {\tt Inc}, and {\tt Copy} gadgets are $O(e) = O(h)$, $O\left( k \right)$, $O\left( h \right)$, and $O\left( h \right)$, respectively.
Since $k = O\left( h \right)$, we have $\left| T\right| = O\left( h \right) = O\left( \log N + \log \frac{1}{\delta} \right)$.

\subsection{Description of producible assemblies that are not $w$-correct}
\label{sec:incorrect-assemblies}

Now that we have fully described the counter construction from the
perspective of correct assembly, let us consider the possible results
of incorrect assembly. In our counter construction, non-determinism is
exclusively located within the read gadgets that are used to guess the
binary value of both indicator and value bits in copy steps and
increment steps of the counter. By design, these gadgets may make a
single type of error, namely reading a 0 as a 1. The reverse
error is never made. We now discuss the consequences of read-gadget
errors on each type of bits in our construction.

Recall that indicator bits are all zeros except for 1) the leftmost
bit in each right-to-left-assembling increment row and 2) the
rightmost bit in each left-to-right-assembling copy row. Since a 1
will never be incorrectly read as a 0 and 1-valued
indicator bits are used to detect the end of a row and initiate the
beginning of the subsequent one, each row in our counter construction
will never exceed the intended $2k$-bit width. However, if a 0-valued
indicator bit is mistakenly read as a 1, then a row may stop
assembling prematurely. We consider increment and copy rows in turn.

Read errors at indicator bits in increment rows cause the following behaviors.
\begin{itemize}[itemsep=0mm]
\item {\bf[Ind.Inc.1 error]} A read error at $b'_1$ does not lead to an
  incorrect assembly because the two output glues of this read gadget
  are identical, i.e., a 1 bit will be correctly written no matter
  what, in preparation for the subsequent copy row.
\item {\bf[Ind.Inc.i error]} A read error at $b'_i$, for any $1 < i <
  k$, always leads to a 1-valued indicator bit being written, followed
  by the initiation of an {\tt Inc\_to\_copy} or {\tt
    Inc\_to\_copy\_last} gadget. Since both of these gadgets attempt
  to grow downward right after their input tile assembles, this growth
  will be blocked by a previously placed tile in the seed or copy row
  below the current one. Therefore, when the first indicator bit error
  occurs in an increment row, the early termination of that row causes
  the termination of the counter construction in the middle of that
  incomplete increment row.
\item {\bf[Ind.Inc.k error]} A read error at $b'_k$, whose value is
  always 1, can never occur, as discussed above.
\end{itemize}

Read errors at indicator bits in copy rows cause the following behaviors.
\begin{itemize}[itemsep=0mm]
\item {\bf[Ind.Cop.1 error]} A read error at $b'_1$, whose value is
  always 1, can never lead to an incorrect assembly, as discussed
  above.
\item {\bf[Ind.Cop.i error]} A read error at $b'_i$, for any $1 < i <
  k$, always leads to a 0-valued indicator bit being written, followed
  by the (in)correct reading and then coyping of the value bit $b_i$,
  followed in turn by the initiation of a {\tt Copy\_to\_inc} or {\tt
    Last} gadget. Since both of these gadgets attempt
  to grow downward right after their input tile assembles, this growth
  will be blocked by a previously placed tile in the increment row
  below the current one. Therefore, when the first indicator bit error
  occurs in a copy row, the early termination of that row causes the
  termination of the counter construction in the middle of that
  incomplete copy row.
\item {\bf[Ind.Cop.k error]} A read error at $b'_k$ does not lead to
  an incorrect assembly because the two output glues of this read
  gadget are identical, i.e., a 1 bit will be correctly written no
  matter what, in preparation for the subsequent increment row.
\end{itemize}

Note that read errors in the final copy row are not possible because all {\tt Copy\_read} gadgets corresponding to indicator bits and whose input glue contains {\tt done=1} always correctly have the bit value 0 encoded in their output glues.

We now turn our attention to possible errors occurring while reading value bits.

{\bf[Val.Cop error]} In copy rows, a 0 mistakenly read as a 1 would cause the
counter's value to be erroneously incremented by some positive value,
which would cause the counter construction to eventually terminate
with fewer rows, and thus a smaller height, than expected in the
correct assembly.

Note that read errors in the final copy row are not possible because all {\tt Copy\_read} gadgets corresponding to value bits and whose input glue contains {\tt done=1} always correctly have the bit value 0 encoded in their output glues.

Read errors at value bits in increment rows cause the following behaviors.

\begin{itemize}[itemsep=0mm]
\item {\bf[Val.Inc.C0 error]} If the incoming carry is equal to 0,
  then the outgoing carry is still 0 but the value bit at this
  position ends up being incorrectly written as a 1, leading to the
  counter's value being incremented by more than one.
\item {\bf[Val.Inc.C1 error]} If the incoming carry is equal to 1,
  then the value bit at this position ends up being incorrectly
  written as a 0, but the outgoing carry is also incorrectly set to 1,
  again leading to the counter's value being incremented by more than
  one.
\end{itemize}

In both cases, the counter construction would eventually terminate
with fewer rows, and thus a smaller height, than expected in the
correct assembly.

In conclusion, we have identified nine distinct types of read error,
distinguished above with an identifier between square brackets, that
either cannot occur in our construction or else lead to the premature
termination of the assembly process (either by shortening at least one
row or by decreasing the number of rows in the counter).
In other words, all of our terminal assemblies are finite and such
that $P=Y$ is a finite set. However, other design choices could have
led to infinite terminal assemblies such that $P$ would be infinite
and different from $Y$. For example, had we chosen to represent the
bit value 0 with a protruding bump and the bit value 1 with the
absence of such a bump (i.e., the reverse of our current
representation scheme), then a read error could have caused a value
bit of 1 to be read as a 0, leading to the counter's value being
erroneously decremented, which could result in the counter's values
going up and down and thus yield infinite terminal assemblies, with
$Y$ being a strict subset of an infinite set $P$.

\subsection{Last four steps in the proof of the main lemma}

This subsection contains steps 3 through 6 in the proof of Lemma~\ref{lem:high-probability-counter}.

\emph{Step 3.}
We now define the set $P$ of POCs of $\mathcal{T}$ and the set $Y \subseteq P$ of essential POCs of $\mathcal{T}$.
With these sets, we can then define a corresponding winner function $w:Y \rightarrow T$.
To that end, let $P$ be the POC set for $\mathcal{T}$ with essential POC set $Y = \left\{ \vec{y}_1, \ldots, \vec{y}_{4km} \right\}$, where the elements of $Y$ are the locations within the domain of the unique correct terminal assembly of $\mathcal{T}$ (defined below) at which either the lower-right tile of a {\tt Copy\_read} gadget attaches or the lower-left tile of an {\tt Inc\_read} gadget attaches. We order the set $Y$ according to the sequence in which the gadgets self-assemble, which is a fixed order by the way we create the gadgets. 
Based on the argument given in Section~\ref{sec:incorrect-assemblies},
it is not possible for the counter to erroneously decrement its value
and get stuck in an infinite loop. In fact, in our construction, $Y$ is equal to
the finite set $P=P_{\mathrm{inc}} \cup P_{\mathrm{copy}}$ where
$P_{\mathrm{inc}} =
\left\{(4+6i,e+1+2jh)~|~i\in\{0,1,\ldots,k-1\}\ \mathrm{and\ }
j\in\{0,1,\ldots,m-1\}\right\}$ is the set of POCs located within an increment
row and $P_{\mathrm{copy}} =
\left\{\left(5+6i,e+1+(2j+1)h\right)~|~i\in\{0,1,\ldots,k-1\}\ \mathrm{and\ }
j\in\{0,1,\ldots,m-1\}\right\}$ is the set of POCs located within a copy row.
\begin{figure}[h!]
    \centering
        \centering
        \includegraphics[width=.7\linewidth]{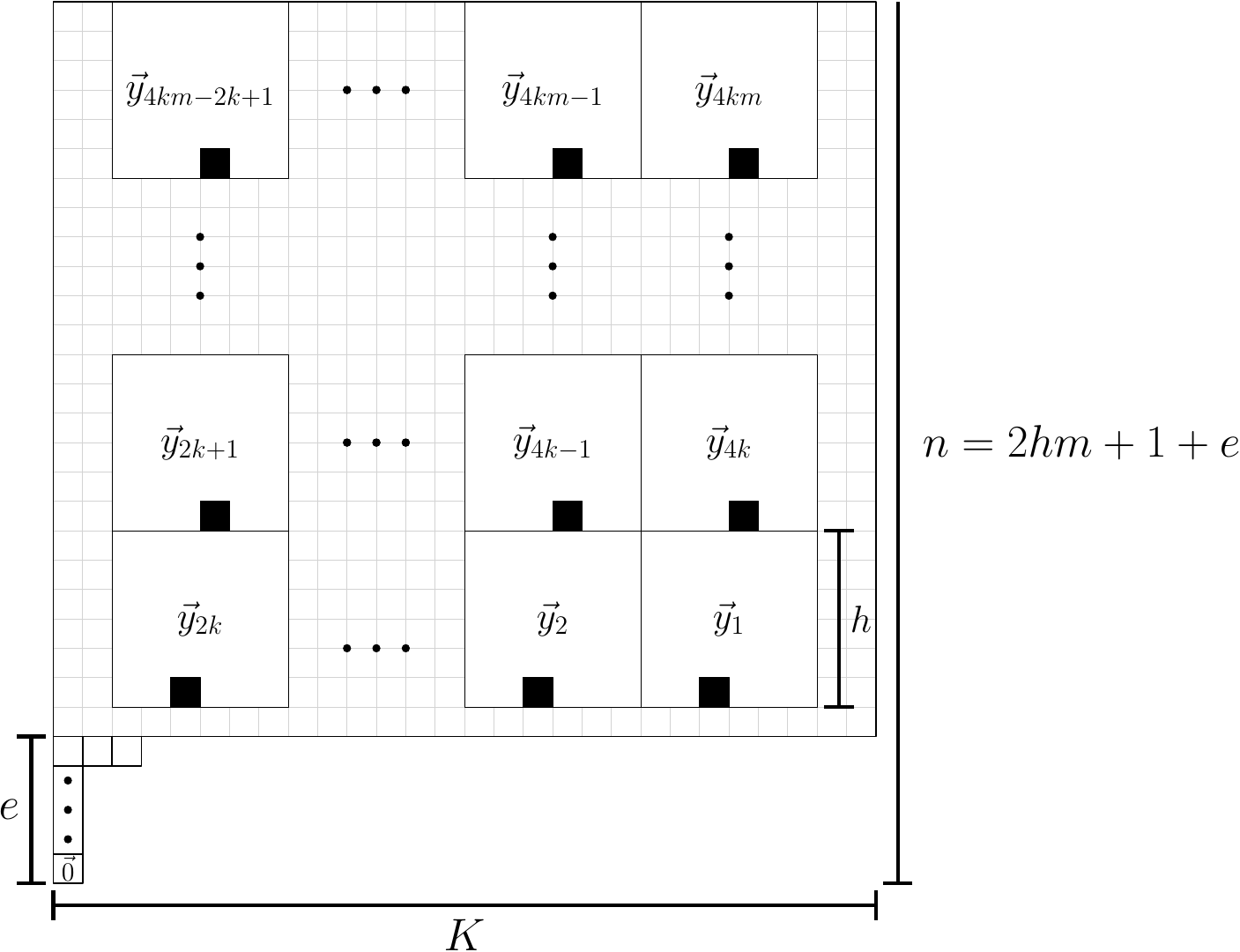}
        \caption{\label{fig:POCs} Depiction of the set of POCs for $\mathcal{T}$. The lower-left tile is assumed to be placed at the origin. The black squares make up $Y=P$. }
\end{figure}
Moreover, every POC of $\mathcal{T}$ is either the point at which the lower-right tile of a {\tt Copy\_read} gadget attaches or the point at which the lower-left tile of an {\tt Inc\_read} gadget attaches.
This means we can define the winner function $w:Y\rightarrow T$ as follows: for every $\vec{y} \in Y$ such that $\vec{y}$ is the point of the lower-right tile of a {\tt Copy\_read} gadget or the lower-left tile of an {\tt Inc\_read} gadget, then map $\vec{y}$ to the unique tile type in $T$ that corresponds to the gadget correctly guessing the binary value of the bit with which it is associated. 

\emph{Step 4.}
We now prove that $\mathcal{T}$ is $w$-sequentially non-deterministic such that $\alpha$ is the unique $w$-correct $\mathcal{T}$-terminal assembly and $\dom{\alpha} = L_e \cup R_{K,n,e}$.

First, we show that $\mathcal{T}$ satisfies all of the assumptions of
Definition~\ref{def:seq-non-deterministic}. Recall that $P$, $Y$, and
$w$ were defined in the previous step. The set $S_P = S_Y$ can be
similarly inferred from this earlier analysis. Furthermore, a
$w$-correct $\mathcal{T}$-producing assembly sequence $\vec{\alpha}$
resulting in $\alpha$ such that $Y \subseteq \dom{\alpha}$ can be
inferred from Figure~\ref{fig:high-level-overview}. By Corollary~\ref{cor:unique-w-correct-assembly},  $\alpha$ is unique.

Second, we now show that $\mathcal{T}$ satisfies the three conditions of Definition~\ref{def:seq-non-deterministic}.
%

%%%%%%%%%%%%%%%%%%

\paragraph{\hspace*{5mm}$\bullet$\hspace*{1mm} \bf Condition~\ref{def:snd-1} of Definition~\ref{def:seq-non-deterministic}.}

$\mathcal{T}$ is directionally deterministic because it satisfies Definition~\ref{def:dd-ad-tas}, as we now show. Let $D\subset\mathbb{Z}^2$ denote the finite set of points at which a tile is placed by at least one assembly in $\termasm{\mathcal{T}}$.
If read errors were impossible, then each point in $D$ would have exactly one type of tile placed at it and would thus meet the requirements of Definition~\ref{def:dd-ad-pt} vacuously. In contrast, the possibility of read errors leads to varying numbers of tile types being placed at different points in $D$.

In the following discussion, we say that a set of points is \emph{covered} by a gadget if, in a $w$-correct producible assembly, only that type of gadget places tiles at the points in the set.

Furthermore, we call \emph{bit bump points} the two points that make up a bit bump. If the correct value of this bit is a 1, then the bit bump points are covered by an {\tt Init} gadget (if the bit belongs to the initial value of the counter) or a {\tt Write gadget} that assembled in the previous row (if the bit belongs to any increment or copy row). If the correct value of this bit is a 0, then the bit bump points are covered by a {\tt Read} gadget. In the latter case, bit bump points are the two inner points within the shorter competing path in the {\tt Read} gadget. Therefore, we refer to these two points as ``bit bump points'' whether we are discussing them in the context of the {\tt Write} gadget or the {\tt Read} gadget, even though only one of these gadgets strictly covers these points in the $w$-correct terminal assembly.

Recall (see Section~\ref{sec:incorrect-assemblies}) that one or more read errors may lead to early termination of the assembly process, either in the middle of an incomplete row or with a number of completed rows that is smaller than the number of rows in the $w$-correct terminal assembly. In such cases, some gadgets may not get to assemble in the resulting incorrect terminal assembly. We will use a phrase of the form ``at most one of so many tile types is placed at this point'' in these cases. In all other cases, we will write that ``exactly one of so many tile types is placed at this point''.

To prove that all of the points in $D$ meet the requirements of Definition~\ref{def:dd-ad-pt} in the presence of read errors, we now partition $D$ into subsets of points based on the gadget that covers them, following the order in which the corresponding gadget creation statements appear in Section~\ref{sec:first-two-steps}.
\paragraph{\bf (a) Points covered by gadgets that self-assemble before a tile is placed at the first POC.} These points include:
\begin{itemize}
  \item $\vec{s}$, which satisfies Definition~\ref{def:dd-ad-pt} trivially,
  \item all of the other points covered by the {\tt Seed} gadget,
  \item all of the points covered by the {\tt Init} gadgets, and
  \item the non-POC points covered by the {\tt Inc\_read} gadget associated with the first POC, i.e., the {\tt Inc\_read} gadget associated with $b_1$.
\end{itemize}
Since no read errors may occur before a tile is placed at the first POC, all of the points in this case have exactly one type of tile placed at each one of them in any producible assembly. Therefore, condition~\ref{def:dd-ad-pt-1} of Definition~\ref{def:dd-ad-pt} holds trivially because $t_{\alpha}$ and $t_{\beta}$ cannot be distinct, which means that condition~\ref{def:dd-ad-pt-1} is satisfied vacuously by these points. Note that the bit bump points covered by this read gadget need special attention  because the tiles that are placed there may belong to one of two gadgets, depending on the value of $b_1$ in $s$. If this value is 1, then the last {\tt Init} gadget is an {\tt Init\_right\_1} gadget and its two bit bump points protrude into the {\tt Inc\_read} we are considering here. So, in this case, these points are covered by the {\tt Init\_right\_1} gadget. On the other hand, if the value of $b_1$ in $s$ is 0, then these bit bump points are covered by the {\tt Inc\_read} gadget and belong to its shorter competing path. A similar situation will exist for all {\tt Inc\_read} and {\tt Copy\_read} gadgets. What is different in the first increment row is that no errors happen before all of the bits in $s$ are assembled. Therefore, in this case, exactly one type of tile is placed at each one of them in any producible assembly.
\paragraph{\bf (b) All POCs associated with $b_1$ in increment steps.}
Since the value of the incoming carry bit at the read gadget associated with $b_1$ in an increment step is always 1 and it is possible for two bit values to be read by this gadget (due to the non-deterministic occurrence of a Val.Inc.C1 error), it follows that exactly one of two types of tiles may be placed at this POC in the first increment step and that at most one of two types of tiles may be placed at this POC in later increment steps (since, in this case, any type of read error in an earlier step of the counter may have caused an early termination). Note that both tile types belong to the same read gadget since the incoming carry is always 1. However, the tile type that corresponds to reading the bit value 0 attaches from the east, whereas the tile type that corresponds to reading the bit value 1 attaches from the north. Thus, condition~\ref{def:dd-ad-pt-2a} of Definition~\ref{def:dd-ad-pt} is satisfied by these POCs.
\paragraph{\bf (c) Non-POC points covered by the {\tt Inc\_read} gadget associated with $b_1$ in all (but the first) increment steps.} The POC was handled in case (b) above. Furthermore, since the incoming carry bit is always equal to 1, up to one gadget may assemble in this case. Therefore, the non-POC points belonging to the longer competing path in this gadget have at most one type of tile placed at each one of them in
any producible assembly, in which case condition~\ref{def:dd-ad-pt-1} of Definition~\ref{def:dd-ad-pt} is trivially satisfied. Finally, we consider the bit bump points in this read gadget. 

Since this gadget is associated with $b_1$ after the first increment step, it is possible that a Val.Inc.C1 error may have occurred in a previous increment step at $b_1$, e.g., $b_1 = 0$ in a previous increment step but the associated read gadget erroneously read it as a 1 and the following write gadget  wrote the bit value $b_1 = 0$ because the incoming carry bit is always 1 in this case. 
The absence of a bit bump correctly protruding up from the previous copy step means the shorter competing path of the read gadget may erroneously assemble toward the POC while propagating the carry bit value of 1. 
In this case, it is also possible that a Val.Cop error may have
occurred in a previous copy step at $b_1$ resulting in an incorrect $b_1$ value in this increment step.
Regardless of the type of error that may have occurred, at most one of two
types of tiles may be placed at any one of the two bit bump points.
Now, for each one these points, the two distinct tile types that may be placed there belong to two assemblies that do not agree, because one of them must have made a read error at an earlier point that is also in the domain of the other, correct assembly that is about to place a tile at that same bit bump point. Therefore, condition~\ref{def:dd-ad-pt-1} of Condition~\ref{def:dd-ad-pt} is satisfied by these two points.

\paragraph{\bf (d) Points covered by {\tt Inc\_write} gadgets associated with $b_1$.} First, we consider the {\it input point}, i.e., the point  where the input tile attaches. Since, by case (b) above, exactly two types of tiles may be placed at the preceding POC, exactly one of two types of input tiles may attach at the input point. Since these two tile types belong to two assemblies that do not agree, because one of them must have made a read error at an earlier point that is also in the domain of the other, correct assembly that is about to place a tile at that same input point. Therefore, condition~\ref{def:dd-ad-pt-1} of Condition~\ref{def:dd-ad-pt} is satisfied by the input point of this {\tt Inc\_write} gadget.

Second, we consider the bit bump points that are part of this {\tt
  Inc\_write} gadget in the case $b_1=1$, i.e., the gadget is an {\tt
  Inc\_write\_1} gadget. Then the carry bit value associated with this
gadget must be 0. Thus exactly one type of {\tt Inc\_write\_1} gadget
may assemble here and the tiles at both bit bump points belong to
it. On the other hand, if $b_1=0$, then an {\tt Inc\_write\_0} gadget
assembles here and the two points we are now considering are both
covered by a {\tt Copy\_read} gadget associated with $b_1$ within the
next copy row. Note that one of two types of {\tt Copy\_read} gadget
may assemble in that row, since the value of the {\tt done} bit in
that row may be equal to 0 or 1. So, one of at most two tiles may be
placed by read gadgets at each one of the points under consideration
here. Therefore, given the two possible values of $b_1$, and since all
of the gadgets involved have pairwise-disjoint tile sets, at most one
of three tile types may be placed at each one of the points under
consideration. Finally, we observe that condition~\ref{def:dd-ad-pt-1} of Definition~\ref{def:dd-ad-pt} is
satisfied by both points. For example, consider all of the pairs of
tiles that may be placed at the easternmost one of these points. If
these tile types both belong to a {\tt Copy\_read} gadget, then they
belong to two assemblies that do not agree since at least one previous
read error caused them to assemble two distinct read
gadgets. Similarly, if one of the tile types in the pair belongs to a
{\tt Copy\_read} gadget while the other tile belongs to an {\tt
  Inc\_write\_1} gadget, then they also belong to two assemblies that do
not agree since at least one previous read error caused them to write different bit values. A similar reasoning can be applied to the second bit bump point.

Third, we consider all of the other points covered by these {\tt
  Inc\_write} gadgets.  Since at most two write gadgets may assemble in
this case, one of at most two tile types may attach at each point in this
set. Again, these two tiles must attach to assemblies that do
not agree, and condition~\ref{def:dd-ad-pt-1} of
Definition~\ref{def:dd-ad-pt} is satisfied by all of these points.

\paragraph{\bf (e) Points covered by {\tt Inc\_read} gadgets associated with $b'_1$.} In this case, the value of the bit being read is guaranteed to be a 0. This is because the {\tt Init} gadget corresponding to $b'_1$ always encodes a 0 bit. Then the corresponding read gadget in all increment steps always exposes an output glue that encodes the bit value 1, which means that the corresponding write  gadget in  all increment steps always writes a 1 bit. Finally, since a 1 bit can never be misread and the write gadgets are deterministic, the value of the $b'_1$ bit in the next increment row is guaranteed to be a 0. Now, since the value of the incoming carry bit may be either 0 or 1 (due to the non-deterministic occurrence of a Val.Inc.C1 error at $b_1$),  at most one of two {\tt Inc\_read} gadgets may assemble here, whose tile sets are disjoint. Thus, for any non-POC point covered by these gadgets, condition~\ref{def:dd-ad-pt-1} of Definition~\ref{def:dd-ad-pt} is satisfied by these points (since the assemblies to which the tiles attach cannot agree).

At the POC, at most one of exactly four tile types may attach, corresponding to the four pairs resulting from combining two possible carry bit values with two possible values of the bit being read. Now consider all possible pairs of tile types that may attach at the POC.
\begin{itemize}
\item If the two tiles correspond to reading distinct bit values (i.e., exactly one of them makes a read error), whether they belong to the same gadget or not, then the tiles attach from different sides and condition~\ref{def:dd-ad-pt-2a} in Definition~\ref{def:dd-ad-pt} is satisfied.
\item In the two remaining cases, the two tiles in a pair belong to different gadgets but correspond to reading the same bit value (i.e., they either both make a read error or neither one of them does). Thus they attach from the same side but to distinct tiles and condition~\ref{def:dd-ad-pt-2b} in Definition~\ref{def:dd-ad-pt} is satisfied.
\end{itemize}

\paragraph{\bf (f) Points covered by {\tt Inc\_write} gadgets associated with $b'_i$ for $k > i \geq 1$.} First, note that the case of the bit bump points covered by an {\tt Inc\_write\_1} gadget assembling here can be handled with a similar reasoning to the one used in case (d) above, with at most one of three distinct tile types being placed at each one of the points under consideration. Second, since each one of the {\tt Inc\_read} gadgets that may precede this gadget in the assembly sequence exposes an output glue that always encodes a unique bit value (along with the carry bit value), namely a 1 if $i=1$ and a 0 if $i>1$, only two of the four corresponding {\tt Inc\_write} gadgets may ever assemble in this part of the counter (because of the two possible values of the carry bit). Since all of these {\tt Inc\_write} gadgets are deterministic and have disjoint tile sets, at most one of two distinct tile types may attach at any point covered by this gadget in this second case. These distinct tiles must attach to assemblies that do not agree, because they must have made different types of read errors (and/or different numbers of them) at points in the intersection of their domains. Therefore, condition~\ref{def:dd-ad-pt-1} of Definition~\ref{def:dd-ad-pt} is satisfied by these points.

\paragraph{\bf (g) Points covered by {\tt Inc\_read} gadgets associated with $b_i$ for $k \geq i >1$.} We distinguish two cases.

Case $i = 2$: Ind.Inc.1 errors do not cause early termination. Then, since only two distinct tile types (corresponding to the two possible values of the carry bit) may be placed at the POC preceding this gadget, up to two {\tt Inc\_read} gadgets may assemble in this case, whose tile sets are disjoint. Thus, for any point covered by these gadgets other than the bit bump points and the POC, condition~\ref{def:dd-ad-pt-1} of Definition~\ref{def:dd-ad-pt} is satisfied, because the two tiles are distinct and the assemblies they attach to cannot agree.

Now, at each one of the bit bump points of this gadget, one additional
tile type may attach, namely in the case where the bit bump points of
the corresponding {\tt Copy\_write\_1} gadget propagating the {\tt
  done = 0} flag in the previous copy row protrude into this {\tt
  Inc\_read} gadget. In this case, at most one of three tile types may
attach at each one of the two points under consideration. Following a
reasoning similar to the one used in case (d) above, condition~\ref{def:dd-ad-pt-1} of
Definition~\ref{def:dd-ad-pt} is satisfied by these two points.

Finally,  consider the POC associated with this gadget. Since a Val.Inc.C0 error is possible but the case where both the bit value that is read and the carry bit value that is propagated are equal to 1 is not possible, it follows that at most one of three tile types may attach at this point. Then one can apply a similar reasoning to the one we used for the POC in step (e) above to infer that Definition~\ref{def:dd-ad-pt} is satisfied by the POC.

Case $i>2$: An Ind.Inc.i error may occur at $b'_{i-1}$ that causes the corresponding {\tt Inc\_write} gadget to expose an output glue that allows for the input tile of an {\tt Inc\_to\_copy} gadget or an {\tt Inc\_to\_copy\_last} gadget to attach at the input point for this gadget, in addition to the  input tiles for the two {\tt Inc\_read} gadgets discussed in the previous case. As a result, at most one of four distinct tile types may attach at the input point of this gadget. Since these four tiles belong to distinct gadgets whose tile sets are pairwise disjoint and the assemblies they attach to cannot agree,  condition~\ref{def:dd-ad-pt-1} in Definition~\ref{def:dd-ad-pt} is satisfied by this point. Finally, since both the {\tt Inc\_to\_copy} gadget and {\tt Inc\_to\_copy\_last} gadgets attempt to assemble southward from their input tile and that point is blocked by the previous row in our counter, the rest of the analysis given in the first case for the non-input tiles of this gadget carries over to this case.

\paragraph{\bf (h) Points covered by {\tt Inc\_write} gadgets associated with $b_i$ for $k\geq i >1$.}
First, we consider the input point. Since the {\tt Inc\_read} gadgets associated with $b_i$ expose one of three possible output glues, at most one of three tile types may attach at the input point of this gadget. Since these tiles belong to different gadgets with pairwise disjoint tile sets and they attach to assemblies that cannot agree, condition~\ref{def:dd-ad-pt-1} in Definition~\ref{def:dd-ad-pt} is satisfied by this point.

Second, the fact that the bit bump points (if any) for this gadget  satisfy condition~\ref{def:dd-ad-pt-1} in Definition~\ref{def:dd-ad-pt} follows from a similar reasoning to the one we used in case (d) above, which we omit here.

Third, we consider all of the other points covered by these {\tt
  Inc\_write} gadgets. Since at most one of three distinct write gadgets may assemble in this case, at most one of three tile types may attach at each point in this set.  Since these tiles belong to different gadgets with pairwise disjoint tile sets and they attach to assemblies that cannot agree, condition~\ref{def:dd-ad-pt-1} in Definition~\ref{def:dd-ad-pt} is satisfied by this point.

\paragraph{\bf (i) Points covered by {\tt Inc\_read} gadgets associated with $b'_i$ for $k \geq i > 1$.}
We distinguish two cases. First, if $i < k$, since two distinct output
glues may be exposed by the {\tt Inc\_write} gadget corresponding to
$b_i$, at most one of two distinct gadgets may assemble in this case,
each one encoding a possible value of the incoming carry bit. Since
the tile sets of these {\tt Inc\_read} gadgets are disjoint and the
assemblies to which these tiles attach cannot agree, for any point
covered by these gadgets including the bit bump points but not the POC, condition~\ref{def:dd-ad-pt-1} in
Definition~\ref{def:dd-ad-pt} is satisfied. Note that we include the bit bump points because the output tiles of the {\tt Copy\_read} gadget below the current {\tt Inc\_read} gadget only induce the assembly of a {\tt Copy\_write\_0}, which means there can never be a bit bump protruding up into the shorter competing path of the current {\tt Inc\_read} gadget.

At the POC, at most one of four tile types may attach, corresponding
to the four pairs resulting from combining two possible carry bit
values with two possible values of the bit being read. The same
four-case reasoning we used in step (e) above can be applied here to
infer that Definition~\ref{def:dd-ad-pt} is satisfied by the POC.

Second, if $i=k$, the value of $b'_k$ is guaranteed to be equal to,
and also correctly read as, a 1. While the same reasoning as in the
previous case applies to the non-POC points, the situation at the POC
is different because at most one of only two distinct tile types may
attach in this case, namely the tiles attaching from the north and
reading the indicator bit value 1 in the two {\tt Inc\_read} gadgets
propagating different carry bit values. Since these two tiles belong
to different gadgets with disjoint tile sets and they thus attach to
two different tiles belonging to the same two tile sets, respectively,
condition~\ref{def:dd-ad-pt-2b} in
Definition~\ref{def:dd-ad-pt} is satisfied by the POC of this gadget.

\paragraph{\bf (j) Points covered by the {\tt Inc\_write} gadgets associated with $b'_k$.}
First, we consider the input point. Since the {\tt Inc\_read} gadgets
associated with $b'_k$ expose one of two possible output glues
corresponding to the possible values of the incoming carry bit, at
most one of two distinct tile types may attach at the input point of
this gadget. Since the assemblies to which these tiles attach cannot
agree, condition~\ref{def:dd-ad-pt-1} in
Definition~\ref{def:dd-ad-pt} is satisfied by the input point of this gadget.

Second, we need not discuss the bit bump points that would protrude into the next row because each one of the {\tt Inc\_write} gadgets associated with $b'_k$ always assembles the indicator bit value 0.

Third, we consider all of the other points covered by these {\tt
  Inc\_write} gadgets. Since at most one of two distinct write gadgets
may assemble in this case and those gadgets are deterministic, at most
one of two distinct tile types may attach at each point in this set.
Since the assemblies to which these tiles attach cannot agree, the condition~\ref{def:dd-ad-pt-1} in
Definition~\ref{def:dd-ad-pt} is satisfied by all of the other points
covered by these gadgets.

\paragraph{\bf (k) Points covered by the {\tt Inc\_to\_copy} gadgets.}
Note that, in the event of a Val.Inc.C1 error at $b_k$, the outgoing carry bit is erroneously set to 1.
In this case, the subsequent read and write gadgets associated with $b'_k$ initiate the incorrect assembly of an {\tt Inc\_to\_copy\_last} gadget instead of the correct assembly of an {\tt Inc\_to\_copy} gadget, which would assemble in the absence of such a Val.Inc.C1 error.
The respective input tiles of the {\tt Inc\_to\_copy} and {\tt Inc\_to\_copy\_last} gadgets attach to different types of tiles of the {\tt Inc\_write\_}1 gadget associated with $b'_k$ depending on the value of the incoming carry bit for the associated read gadget. 
This implies that the two distinct input tiles must attach to
assemblies that do not agree and thus that condition~\ref{def:dd-ad-pt-1} in Definition~\ref{def:dd-ad-pt} is
satisfied by the input point of the {\tt Inc\_to\_copy} gadget.
Moreover, the {\tt Inc\_to\_copy} and {\tt Inc\_to\_copy\_last}
gadgets have two disjoint tile sets. Therefore, again, condition~\ref{def:dd-ad-pt-1} in Definition~\ref{def:dd-ad-pt} is
satisfied by every point covered by the tiles in both gadgets that attach after the input tile.

\paragraph{\bf (l) Points covered by the bottom two tiles in the {\tt Inc\_to\_copy\_last} gadget.}
Note that the {\tt Inc\_to\_copy\_last} gadget has an additional top row of tiles, compared to the {\tt Inc\_to\_copy} gadget. These two tiles may potentially attach at  the points in the bottom row of tiles of a subsequent {\tt Inc\_to\_copy} or the {\tt Inc\_to\_copy\_last} gadget, e.g., if the Val.Inc.C1 error occurs in the penultimate increment step. Since the tile sets of these two gadgets are distinct and the two tiles that may attach at any one of these two points must attach to assemblies that cannot agree, condition~\ref{def:dd-ad-pt-1} in Definition~\ref{def:dd-ad-pt} is
satisfied by the points in the bottom row of tiles of an {\tt Inc\_to\_copy} or the {\tt Inc\_to\_copy\_last} gadget.

  \paragraph{\bf (m) Points covered by the {\tt Copy} and {\tt Last} gadgets:}

A similar case-by-case analysis as the foregoing one for the increment
step gadgets can be applied to the points covered by the copy step
gadgets, namely all of the {\tt Copy} gadgets and the {\tt Last}
gadget, in order to establish that the requirements imposed by
Definition~\ref{def:dd-ad-pt} are satisfied in those cases as
well. This analysis is omitted here in the interest of concision.
Even though the copy gadgets do not propagate a carry bit, they do
propagate the {\tt done} bit (to indicate whether or not the current row
is the final one), which can be set erroneously (i.e., prematurely)
due to a read error in the previous increment step. Since this flag
also has two values, like the carry bit, the cases to consider in copy
steps are similar to those in the increment steps. Moreover, it is
possible for a copy step to terminate early, similarly to how an
increment step does so, but via the placement of the input tile of a
{\tt Copy\_to\_inc} after an Ind.Cop.i error.

In conclusion, all of the points covered by these copy step gadgets are
directionally deterministic. Therefore, the TAS $\mathcal{T}$ is directionally
deterministic.

\paragraph{\hspace*{5mm}$\bullet$\hspace*{1mm} \bf Condition~\ref{def:snd-3} of Definition~\ref{def:seq-non-deterministic}.}

We observe that the gadgets of our construction are created such
  that, if a gadget has an input tile, then its input glue binds to, and only
  to, the output glue of some other gadget. Similarly, for each gadget, if
  the gadget has an output tile, then its output glue is such that there is a
  unique gadget with a matching input glue.
In other words, the gadgets self-assemble in a strict sequential order.
It follows that, for all $\mathcal{T}$-producing assembly sequences
$\vec{\alpha}$, if $Y \subseteq \dom{\res{\vec{\alpha}}}$, then for
all integers $1 \leq i < |Y|$, $\textmd{index}_{\vec{\alpha}}\left(
\vec{y}_i \right) < \textmd{index}_{\vec{\alpha}}\left( \vec{y}_{i+1}
\right)$.
Thus, $\mathcal{T}$ satisfies condition~\ref{def:snd-3} of Definition~\ref{def:seq-non-deterministic}.

\paragraph{\hspace*{5mm}$\bullet$\hspace*{1mm} \bf Condition~\ref{def:snd-4} of Definition~\ref{def:seq-non-deterministic}.}
The fact that $S_P \cap P = \emptyset$ directly follows from the construction of the {\tt Copy\_read} and {\tt Inc\_read} gadgets, since 1) a POC is always a point located in one of the bottom corners of such a gadget and the corresponding starting point is always located in the opposite bottom corner of the same gadget and 2) the intersection of the respective sets of points covered by any two distinct read gadgets is always empty.

\emph{Step 5.}  First, since the final gadget to assemble in $\alpha$
is the {\tt Last} gadget (see Figure~\ref{fig:last}) and its
upper-right corner both coincides with the upper-right corner of
$\alpha$ and is the point on the perimeter of $\alpha$ where the last
tile of this gadget attaches, it follows that every $w$-correct
$\mathcal{T}$-producing assembly sequence resulting in $\alpha$
attaches the last tile at the upper-right corner of $\alpha$.

Second, we check for the absence of an outward-facing, positive-strength
glue in all of the tiles placed on the perimeter of $\alpha$ in any
$w$-correct $\mathcal{T}$-producing assembly sequence resulting in
$\alpha$.

\begin{itemize}[itemsep=0mm]
\item {\bf Tiles on the perimeter of $\alpha$ with an exposed south
  side:} None of the tiles with an exposed south side in the {\tt Seed}
  gadget (see Figure~\ref{fig:seed-gadgets}) or {\tt Init} gadgets
  (see Figure~\ref{fig:init-gadgets}) has a positive-strength glue on
  this side.
\item {\bf Tiles on the perimeter of $\alpha$ with an exposed west
  side:} None of the tiles with an exposed west side in the {\tt Seed}
  gadget (see Figure~\ref{fig:seed-gadgets}), {\tt Inc\_to\_copy}
  gadget (see Figure~\ref{fig:Inc-to-copy-gen}), or {\tt
    Inc\_to\_copy\_last} gadget (see
  Figure~\ref{fig:Inc-to-copy-last}) has a positive-strength glue on
  this side.
\item {\bf Tiles on the perimeter of $\alpha$ with an exposed north
  side:} None of the tiles with an exposed north side in the {\tt
  Inc\_to\_copy\_last} gadget (see Figure~\ref{fig:Inc-to-copy-last}),
  {\tt Copy\_write} gadgets (see Figure~\ref{fig:Copy-write}), or {\tt
    Last} gadget (see Figure~\ref{fig:last}) has a positive-strength
  glue on this side.
\item {\bf Tiles on the perimeter of $\alpha$ with an exposed east
  side:} None of the tiles with an exposed east side in the {\tt Last}
  gadget (see Figure~\ref{fig:last}), {\tt Copy\_to\_inc} gadget (see
  Figure~\ref{fig:copy-to-inc}), {\tt Init\_right\_0} gadget (see
  Figure~\ref{fig:init-gadgets}(c)), {\tt Init\_right\_1} gadget (see
  Figure~\ref{fig:init-gadgets}(e)), or {\tt Seed} gadget (see
  Figure~\ref{fig:seed-gadgets}) has a positive-strength glue on this
  side. Note that, while the output tile in the {\tt Seed} gadget has
  a positive-strength, outward-facing glue on its east side, this side
  is not exposed in $\alpha$ because this output tile binds with the
  input tile in the {\tt Init\_left} gadget (see
  Figure~\ref{fig:init-gadgets}(a)).
\end{itemize}

Therefore, in every $w$-correct $\mathcal{T}$-producing assembly
sequence resulting in $\alpha$, no tile placed on the perimeter of
$\alpha$ has an outward-facing, positive-strength glue.

\emph{Step 6.}
We now prove that $\mathcal{T}$ strictly self-assembles $\alpha$ with probability at least $1 - \delta$.
\emph{Step 6a.}
We first show that for every competition $\mathcal{C}$ of $\mathcal{T}$, $\textmd{Pr}[\mathcal{C}] \geq 1 - \frac{\delta}{N^2}$.  
Let $\mathcal{C}$ be a competition in $\mathcal{T}$ with corresponding competing winning and losing paths $\piw$ and $\pil$, respectively.
Note that $\mathcal{C}$ corresponds to either an {\tt Inc\_read} or a {\tt Copy\_read} gadget.
We distinguish two cases based on the correct value of the bit to be
read. If the correct bit value for $\mathcal{C}$ is 1, then the losing
path $\pil$ is blocked by a {\tt Write} gadget that already assembled
in the previous row of the counter. In this case, $\mathcal{C}$ is 
rigged in $\mathcal{T}$ and the {\tt Read} gadget is guaranteed to read the
correct bit value. Thus, $\textmd{Pr}[\mathcal{C}] = 1 \geq 1 -
\frac{\delta}{N^2}$. In contrast, in the case where the correct bit
value for $\mathcal{C}$ is 0, then $\mathcal{C}$ is not rigged in $\mathcal{T}$.  By
the construction of the {\tt Inc\_read} (see
Figure~\ref{fig:Inc-read}) and {\tt Copy\_read} (see
Figure~\ref{fig:Copy-read}) gadgets, the length of $\piw$ and $\pil$
are $l_{\textmd{win}} = 4$ and $l_{\textmd{lose}} = 4h-10$, respectively.  Furthermore, by
construction of the {\tt Inc\_write} (see Figure~\ref{fig:Inc-write})
and {\tt Copy\_write} (see Figure~\ref{fig:Copy-write}) gadgets, each
row has a height $h$ that is greater than or equal to 6.  We now show
that, in this case, $\textmd{Pr}[\mathcal{C}] \geq 1 -
\frac{\delta}{N^2}$ also holds. Our proof will use the following two
lemmas.

\begin{lemma} % lemma 36
  \label{lem:closed-form-PrC}
  $\forall n \in \mathbb{Z}^+\backslash\{1\}$,
  $\displaystyle\sum_{i=0}^{n-2}  \frac{(i+2)(i+1)}{2^{i+4}} =
  1- \frac{n^2+3n+4}{2^{n+2}}$.
\end{lemma}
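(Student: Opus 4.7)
The plan is to prove this closed-form identity by induction on $n$. The base case $n=2$ is a direct computation: the left-hand side reduces to the single term $\frac{(0+2)(0+1)}{2^{0+4}} = \frac{2}{16} = \frac{1}{8}$, while the right-hand side evaluates to $1 - \frac{4 + 6 + 4}{2^4} = 1 - \frac{14}{16} = \frac{1}{8}$, so the two sides agree.

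For the inductive step, I would assume the identity holds for some $n = k \geq 2$ and prove it for $n = k+1$. I would split off the last summand at $i = k-1$:
$$
\sum_{i=0}^{k-1} \frac{(i+2)(i+1)}{2^{i+4}} = \left(\sum_{i=0}^{k-2} \frac{(i+2)(i+1)}{2^{i+4}}\right) + \frac{(k+1)k}{2^{k+3}}.
$$
Applying the inductive hypothesis to the parenthesized sum replaces it with $1 - \frac{k^2+3k+4}{2^{k+2}}$, and then I would combine the two correction terms over the common denominator $2^{k+3}$, yielding
$$
1 - \frac{2(k^2+3k+4) - k(k+1)}{2^{k+3}} = 1 - \frac{k^2 + 5k + 8}{2^{k+3}}.
$$
Finally I would verify that $(k+1)^2 + 3(k+1) + 4 = k^2 + 5k + 8$, so the above equals $1 - \frac{(k+1)^2 + 3(k+1) + 4}{2^{(k+1)+2}}$, which is the claimed closed form at $n = k+1$.

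There is no real obstacle here; the only place to be careful is aligning the powers of $2$ in the denominators (the inductive hypothesis contributes $2^{k+2}$ while the new term contributes $2^{k+3}$) and the polynomial identity $(k+1)^2 + 3(k+1) + 4 = k^2 + 5k + 8$. As an alternative, one could derive the identity in one shot via the substitution $j = i+2$ and the standard generating-function identity $\sum_{j \ge 0} j(j-1) x^j = \frac{2x^2}{(1-x)^3}$ evaluated at $x = 1/2$, but the induction is cleaner and self-contained.
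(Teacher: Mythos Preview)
Your proposal is correct and follows essentially the same approach as the paper's own proof: induction on $n$ with base case $n=2$, splitting off the final summand in the inductive step, applying the hypothesis, and combining over the common denominator $2^{k+3}$ to reach $1 - \frac{k^2+5k+8}{2^{k+3}} = 1 - \frac{(k+1)^2+3(k+1)+4}{2^{(k+1)+2}}$. The computations match line for line.
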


\begin{proof}
  We proceed by induction on $n$.

  {\bf Basis step} When $n=2$,
  $\displaystyle\sum_{i=0}^{2-2}  \frac{(i+2)(i+1)}{2^{i+4}} = \frac{1}{8}$ and
  $\displaystyle 1- \frac{2^2+3\cdot 2+4}{2^{2+2}}=1-\frac{14}{16}=\frac{1}{8}$. Thus the lemma holds for $n=2$.

  {\bf Inductive step} Assume that $\displaystyle\sum_{i=0}^{k-2}
  \frac{(i+2)(i+1)}{2^{i+4}} = 1- \frac{k^2+3k+4}{2^{k+2}}$ for an
  arbitrary $k \in \mathbb{Z}^+\backslash\{1\}$. We now prove that the
  lemma holds for $n=k+1$.

\[
\begin{array}{llll}
\displaystyle\sum_{i=0}^{(k+1)-2}
  \frac{(i+2)(i+1)}{2^{i+4}} & = & \displaystyle\sum_{i=0}^{k-2}
  \frac{(i+2)(i+1)}{2^{i+4}}+\frac{(k+1)k}{2^{k+3}} & \\
  & = & \displaystyle 1- \frac{k^2+3k+4}{2^{k+2}} + \frac{(k+1)k}{2^{k+3}} & \textmd{ inductive hypothesis} \vspace*{1mm}\\
  & = & \displaystyle 1- \frac{2(k^2+3k+4)-(k+1)k}{2^{k+3}} &\vspace*{1mm}\\
  & = & \displaystyle 1- \frac{k^2+5k+8}{2^{k+3}} &\vspace*{1mm}\\
  & = & \displaystyle 1- \frac{(k+1)^2+3(k+1)+4}{2^{(k+1)+2}} &\\    
\end{array}
\]
\end{proof}

\begin{lemma} % lemma 37
  \label{lem:inequality-for-PrC}
$\forall h \in \mathbb{N}$ such that $h\geq 6$, $2^{3h-8} \geq 16h^2 -68h + 74$.
\end{lemma}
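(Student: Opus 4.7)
The plan is to prove the inequality by induction on $h$, starting from the base case $h=6$.

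For the base case $h=6$, I would simply verify the inequality by direct computation: $2^{3\cdot 6 - 8} = 2^{10} = 1024$, while $16\cdot 36 - 68\cdot 6 + 74 = 576 - 408 + 74 = 242$, and $1024 \geq 242$.

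For the inductive step, assuming $2^{3h-8} \geq 16h^2 - 68h + 74$ for some $h \geq 6$, I would multiply both sides by $8$ to obtain
\[
2^{3(h+1)-8} = 8 \cdot 2^{3h-8} \geq 8(16h^2 - 68h + 74) = 128h^2 - 544h + 592,
\]
and then it suffices to show that
\[
128h^2 - 544h + 592 \;\geq\; 16(h+1)^2 - 68(h+1) + 74 \;=\; 16h^2 - 36h + 22,
\]
which rearranges to the quadratic inequality $112h^2 - 508h + 570 \geq 0$. The roots of the quadratic $112x^2 - 508x + 570$ are $x = \tfrac{508 \pm \sqrt{508^2 - 4\cdot 112 \cdot 570}}{224} = \tfrac{508 \pm 52}{224}$, which equal $2.5$ and $\tfrac{57}{28} \approx 2.04$. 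Since both roots are strictly less than $6$ and the leading coefficient is positive, the quadratic is nonnegative for every $h \geq 6$, completing the inductive step.

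I do not anticipate any real obstacle here; the only mildly nontrivial check is confirming that the roots of the quadratic $112h^2 - 508h + 570$ lie well below the base case $h = 6$, which is immediate from the discriminant computation. The multiplicative slack introduced by the factor of $8$ on the left-hand side at each inductive step easily dominates the quadratic growth on the right-hand side, so the induction goes through cleanly.
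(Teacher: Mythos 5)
Your proof is correct and follows essentially the same route as the paper: induction on $h$ with base case $h=6$, multiplying the inductive hypothesis by $8$, and then verifying that $128h^2-544h+592 \geq 16(h+1)^2-68(h+1)+74$. The only difference is a cosmetic one in the final algebraic step: you verify the resulting quadratic inequality $112h^2-508h+570 \geq 0$ by computing its roots via the discriminant, whereas the paper instead bounds the quantity from below using $k^2 \geq 6k$ (valid for $k\geq 6$) and then discards a nonnegative term; both are routine and correct.
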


\begin{proof}
  We proceed by induction on $h$.

  {\bf Basis step} When $h=6$, $2^{3h-8}=2^{10}=1024$ and $16h^2 -
  68h + 74 = 242$. Thus the lemma holds for $h=6$.

  {\bf Inductive step} Assume that $2^{3k-8} \geq 16k^2 -68k + 74$ for
  an arbitrary $k \in \mathbb{N}$ such that $k\geq 6$. We now prove
  that the lemma holds for $h=k+1$.

\[
\begin{array}{llll}
  2^{3(k+1)-8} & = & 8\cdot  2^{3k-8} &\\
  & \geq & 8\cdot(16k^2 -68k + 74) & \textmd{ inductive hypothesis}\\
  & = & 8\cdot 16k^2 -544k + 592 & \\
  & = & 16k^2 + (7\cdot 16k^2 -544k) + 592 & \\
  & \geq & 16k^2 + (7\cdot 16\cdot 6k -544k) + 592 & \textmd{ } k \geq 6\\
  & = & 16k^2 + 128k + 592 & \\
  & \geq & 16k^2 -36k + 22 &  \textmd{ } k > 0\\
  & = & 16(k+1)^2 - 68(k+1) + 74.&\\
\end{array}
\]
\end{proof}

We are now ready to prove our lower bound on  Pr[$\mathcal{C}$] as follows.

\[
\begin{array}{llll}
  \textmd{Pr}[\mathcal{C}] & = &
  \displaystyle\sum_{i=0}^{l_{\textmd{lose}}-2}
  \left({l_{\textmd{win}}+i-2 \choose l_{\textmd{win}}-2}\cdot \left(\frac{1}{2}\right)^{l_{\textmd{win}}+i-1}\right)
    & \textmd{ Corollary~\ref{cor:competition-probability-sum}} \\
  & = &
  \displaystyle\sum_{i=0}^{l_{\textmd{lose}}-2}\left({i+2 \choose 2}\cdot \left(\frac{1}{2}\right)^{i+3}\right) & \textmd{ } l_{\textmd{win}} = 4 \\

  & = &
  \displaystyle\sum_{i=0}^{l_{\textmd{lose}}-2}
  \frac{(i+2)(i+1)}{2^{i+4}} & \\
  & = &
  \displaystyle 1 - \frac{\left(l_{\textmd{lose}}\right)^2 + 3l_{\textmd{lose}} + 4}{2^{l_{\textmd{lose}}+2}}
  & \textmd{ Lemma~\ref{lem:closed-form-PrC}}\vspace*{1mm}\\
  & = &
  \displaystyle 1 - \frac{(4h-10)^2 + 3(4h-10) + 4}{2^{4h-8}}
  & \textmd{ } l_{\textmd{lose}}=4h-10\vspace*{1mm}\\
  & = &
  \displaystyle 1 - \frac{16h^2 - 68h + 74}{2^{4h-8}}
  & \vspace*{1mm}\\
  & \geq &
  \displaystyle 1 - \frac{2^{3h-8}}{2^{4h-8}}
  & \textmd{  Lemma~\ref{lem:inequality-for-PrC} and } h\geq 6 \vspace*{1mm} \\
  & = & \displaystyle 1 - \frac{1}{2^h}\\
  & = & \displaystyle 1 - \frac{1}{2^{\left\lceil 2\log N + \log \frac{1}{\delta} \right \rceil + 5}}
  & \textmd{ definition of } h\\
  & \geq
  & \displaystyle 1 - \frac{1}{2^{2\log N + \log \frac{1}{\delta}}} & \vspace*{1mm}\\
  & = & \displaystyle 1 - \frac{\delta}{N^2}.&
\end{array}
\]
In conclusion,  for every competition $\mathcal{C}$ of $\mathcal{T}$, $\textmd{Pr}[\mathcal{C}] \geq 1 - \frac{\delta}{N^2}$.

\emph{Step 6b.}
Finally, we invoke Theorem~\ref{thm:local-non-determinism-theorem} and bound the resulting product of probabilities using Bernoulli's inequality.
Let $r = 4km \leq N^2$ and $\mathcal{C}_1,\ldots, \mathcal{C}_r$ be the competitions in $\mathcal{T}$. 
Then, by Theorem~\ref{thm:local-non-determinism-theorem}, $\mathcal{T}$ strictly self-assembles $\dom{\alpha} = L_e \cup R_{K,n,e}$ with probability at least:
\[
\begin{array}{llll}
\displaystyle\prod_{i=1}^{r}{\textmd{Pr}\left[ \mathcal{C}_i \right]} & \geq & \displaystyle \left( 1 - \frac{\delta}{N^2} \right)^r & \\
 & \geq & \displaystyle \left( 1 - \frac{\delta}{r} \right)^r & \textmd{ }r \leq N^2 \\
	& \geq & \displaystyle 1 + r\left( \frac{-\delta}{r} \right) & \textmd{ Bernoulli's inequality with even $r$} \\
	& = & 1 - \delta. &
\end{array}
\]

\subsection{Proof of the main theorem}

Lemma~\ref{lem:high-probability-counter} can be used to prove the following theorem:
\begin{theorem}
\label{thm:high-probability-squares}
For each $N \in \mathbb{Z}^+$ and real $\delta$ such that $0<\delta<1$, there exists a TAS $\mathcal{T}_{N,\delta} = \left(T_{N,\delta}, \sigma_{N,\delta}, 1\right)$ that strictly self-assembles $\{0,\ldots,N-1\}^2$ with probability at least $1 - \delta$ and such that $|T_{N,\delta}| = O\left( \log N + \log \frac{1}{\delta} \right)$.
\end{theorem}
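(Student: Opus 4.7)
The approach is to compose a chain of P-shaped assemblies, each produced by Lemma~\ref{lem:high-probability-counter} and all sharing the same counter tile set, together with a deterministic filler that covers the narrow strips left uncovered by the P-shapes' stems. The two properties of the lemma tailor-made for this composition are (1d) the existence of a unique upper-right corner tile placed last in every $w$-correct assembly sequence, which furnishes a ``hand-off'' point, and (1e) the absence of any other outward-facing positive-strength glue on each P-shape's perimeter, which prevents unintended interactions between adjacent P-shapes or between a P-shape and the filler.

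I would first invoke Lemma~\ref{lem:high-probability-counter} with height $n \approx N$ and an error parameter $\delta' = \delta / \mathrm{poly}(N)$ small enough that, after Bernoulli's inequality is applied over all $O(N^2)$ competitions appearing in the final construction, the overall failure probability is at most $\delta$. This yields a tile set $T_0$ of cardinality $O\!\left(\log N + \log \tfrac{1}{\delta}\right)$ whose TAS produces one P-shape of width $K = \Theta(\log N)$ and height $n$. I would then build $\mathcal{T}_{N,\delta}$ from $T_0$ by: (a) removing $T_0$'s seed tile and adding a fresh tile type that seeds the first P-shape at the origin; (b) replacing the unique upper-right corner tile of each P-shape with a variant that exposes an east-facing glue triggering the next P-shape to be seeded $K$ columns to the east; and (c) adding $O(\log N)$ tile types forming a deterministic comb that fills the bottom strip of height $e = O(\log N)$ and width $N$ which the stems of the P-shapes leave uncovered. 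Because all P-shape invocations re-use the same tile set $T_0$, the final tile count is $|T_0| + O(\log N) = O\!\left(\log N + \log \tfrac{1}{\delta}\right)$.

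To complete the proof, I would verify that $\mathcal{T}_{N,\delta}$ is $w$-sequentially non-deterministic for the natural winner function whose unique $w$-correct terminal assembly has domain $\{0,\ldots,N-1\}^2$: the essential POCs are exactly those inherited from the constituent P-shapes and can be ordered lexicographically, first by P-shape index (an ordering enforced by the hand-off tiles and property 1e) and then by the within-P-shape order provided by Lemma~\ref{lem:high-probability-counter}. Applying Theorem~\ref{thm:local-non-determinism-theorem} and Bernoulli's inequality to the at most $N^2$ competitions, each with success probability at least $1 - \delta/N^2$ by our choice of $\delta'$, yields overall success probability at least $1 - \delta$.

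The main obstacle is the geometric design in step (b): making the per-P-shape positions, hand-off glues, and bottom filler fit together so that the chained P-shapes plus filler exactly tile $\{0,\ldots,N-1\}^2$ with no gaps or overlaps, while simultaneously ensuring that no hand-off tile introduces an extraneous POC or blocks a required competing path of any neighboring P-shape. Property 1e of Lemma~\ref{lem:high-probability-counter} is what makes this possible, since it localizes all new growth to the single modified corner tile and leaves every other boundary inert.
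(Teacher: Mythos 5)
Your proposal takes a genuinely different route from the paper and, as stated, contains gaps that prevent it from working. The paper (see Figure~\ref{fig:square_construction_high_level}) invokes Lemma~\ref{lem:high-probability-counter} only \emph{three} times, producing three TASes with pairwise \emph{disjoint} tile sets, rotates the second and third by $90^\circ$ and $180^\circ$, places the three P-shapes around the perimeter of the square, and fills the interior with a constant number (five) of additional tile types. You instead propose chaining $\Theta(N/K)$ P-shapes, all in the same orientation and all sharing a single tile set $T_0$, with an $O(\log N)$-size filler. This is a different decomposition, and two obstacles arise that the paper's approach deliberately avoids.

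First, the hand-off geometry does not close up. Property~(1d) of Lemma~\ref{lem:high-probability-counter} says the last tile of each $w$-correct assembly sequence is placed at the \emph{upper-right} corner $(K-1, n-1)$ of the P-shape, whereas the seed of a P-shape produced by the lemma sits at the \emph{bottom} of its stem, i.e., at local coordinate $(0,0)$. If the next P-shape is placed $K$ columns to the east with the same orientation, its seed must appear at $(K,0)$, which is roughly $n \approx N$ rows below the hand-off tile at $(K-1, n-1)$. Bridging that gap deterministically would require $\Theta(N)$ extra tile types, destroying the bound. Placing the next seed at $(K, n-1)$ instead would require the next P-shape to grow \emph{downward}, i.e., you would need a rotated copy of the lemma's TAS — which is precisely what the paper does, but with a fresh (disjoint) tile set per rotation, not a reused one.

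Second, reusing a single tile set $T_0$ across all P-shapes creates a termination problem. If every completed P-shape exposes the same east-facing hand-off glue, then nothing tells the assembly that the last P-shape has been reached, so the chain would keep seeding new P-shapes past column $N-1$, over-assembling and violating strict self-assembly of $\{0,\ldots,N-1\}^2$. Encoding the number of P-shapes would require an index or a small counter, reintroducing extra tile types and sequencing concerns. Related to this, your comb filler for the bottom strip has a timing hazard: the comb tiles in chunk $i$ may begin assembling (and crossing into chunk $i+1$) before the stem of P-shape $i+1$, which is supposed to serve as the right-hand wall, has been placed. The paper avoids both of these issues entirely: with only three disjoint P-shapes around the perimeter there is no open-ended chain, and the constant-size interior filler grows inside a region bounded on all needed sides by tiles that have already been placed. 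I would suggest revising your proposal to follow the three-rotated-P-shape perimeter strategy, or, if you want to keep the chaining idea, addressing the hand-off coordinates, the termination condition, and the filler-vs-stem race explicitly.
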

The construction of $\mathcal{T}_{N,\delta}$ follows from three invocations of Lemma~\ref{lem:high-probability-counter}, a high-level depiction of which is shown in Figure~\ref{fig:square_construction_high_level}.
\begin{figure}[h!]
    \centering
        \centering
        \includegraphics[width=.4\linewidth]{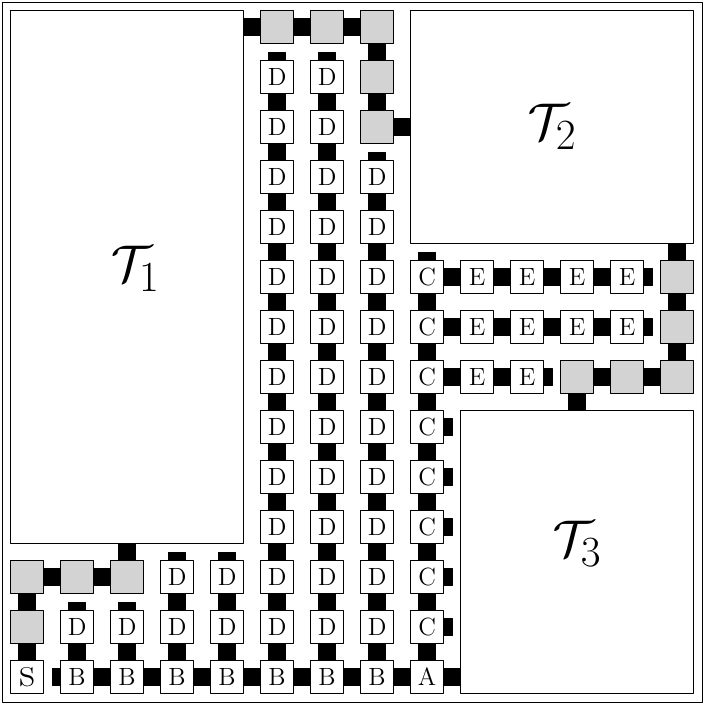}
        \caption{\label{fig:square_construction_high_level} Construction of $\mathcal{T}_{N,\delta}$ via three invocations of  Lemma~\ref{lem:high-probability-counter} resulting in $\mathcal{T}_1$, $\mathcal{T}_2$ and $\mathcal{T}_3$, respectively. The latter two are rotated clockwise by 90 and 180 degrees, respectively, and then translated appropriately. The `S' tile in the lower left corner denotes $\sigma_{N,\delta}$. The A, B, C, D and E new tile types suffice to fill in the interior region of the square.}
\end{figure}
Note that the details of a full proof of Theorem~\ref{thm:high-probability-squares} can be inferred from both Figure~\ref{fig:square_construction_high_level} and the proof of Lemma~\ref{lem:high-probability-counter}.
Therefore, we merely give a proof sketch for Theorem~\ref{thm:high-probability-squares}.
\begin{proofsketch}
To prove Theorem~\ref{thm:high-probability-squares}, proceed as follows:
\begin{enumerate}[itemsep=0mm]
	\item Handle the corner cases when $N$ is too small or $\delta$ is set too close to $0$.
	\item Invoke Lemma~\ref{lem:high-probability-counter} three times to get $\mathcal{T}_1$, $\mathcal{T}_2$, and $\mathcal{T}_3$, respectively, making sure that the resulting tile sets $T_1$, $T_2$, and $T_3$ are pairwise disjoint.
	\item Define $\sigma_{N,\delta}$ to be the seed assembly of $\mathcal{T}_1$.
	\item Rotate $\mathcal{T}_2$ and $\mathcal{T}_3$ clockwise by 90 and 180 degrees, respectively, and then translate their rotated seed assemblies appropriately.
	\item Concatenate $\mathcal{T}_2$ to $\mathcal{T}_1$ in the following sense:
	\begin{enumerate}[label=\theenumi\alph*.,itemsep=0mm]
	
		\item Modify the last tile placed by $\mathcal{T}_1$ to expose a positive strength glue pointing east.
		\item Modify the {\tt Seed} gadget of $\mathcal{T}_2$ to expose a positive strength glue pointing west (after rotation) that matches the exposed glue of the last tile placed by $\mathcal{T}_1$.
		\item Concatenate $\mathcal{T}_3$ to $\mathcal{T}_2$ in a similar fashion.
	\end{enumerate}
	\item Let $\mathcal{T}'_1$, $\mathcal{T}'_2$, and $\mathcal{T}'_3$ be the modified versions of $\mathcal{T}_1$, $\mathcal{T}_2$, and $\mathcal{T}_3$, respectively.
	\item Let $T_{N,\delta}$ be the union of the tile sets of $\mathcal{T}'_1$, $\mathcal{T}'_2$, and $\mathcal{T}'_3$.
	\item Add to $T_{N,\delta}$ five additional tile types to fill in the interior region of the $N \times N$ square. 
	\item Define the set of essential POCs $Y=P$ and a winner function $w:Y\rightarrow T_{N,\delta}$ such that the domain of the unique $\mathcal{T}_{N,\delta}$-terminal assembly is $\{0,\ldots, N-1\}^2$.
	\item Prove that $\mathcal{T}_{N,\delta}$ is $w$-sequentially deterministic.
	\item Prove that $\mathcal{T}_{N,\delta}$ strictly self-assembles $\{0,\ldots,N-1\}^2$ with probability at least $1 - \delta$ using the facts that 1) $h$ was set sufficiently large in Lemma~\ref{lem:high-probability-counter} and 2) the proof step 6b above carries over when the value of $r$ is multiplied by 3.
	\item Conclude that $\left| T_{N,\delta} \right| = O\left( \log N + \log \frac{1}{\delta} \right)$ because $\left| T_{N,\delta} \right|$ is equal to five plus the sum of the cardinalities of the respective tile sets of $\mathcal{T}'_1$, $\mathcal{T}'_2$, and $\mathcal{T}'_3$, each one of which is $O\left( \log N + \log \frac{1}{\delta} \right)$.
\end{enumerate}
\end{proofsketch}

\section{Conclusion}
In this paper, we introduced the notion of sequential non-determinism, which is a general framework for analyzing the probabilistic correctness of a TAS in the PTAM that has a unique correct terminal assembly and an arbitrary number of (potentially infinite) incorrect assemblies. 
We then demonstrated that it is a sufficiently general framework to allow the efficient self-assembly of an $N \times N$ square at temperature 1 with high probability, i.e., Theorem~\ref{thm:high-probability-squares}.
Note that our sequential non-determinism framework allows for an infinite set of POCs.
This means our framework is applicable to a counter construction whose correct assembly is finite, but where the counter may erroneously double counts arbitrarily many times and therefore must be supported by infinitely many inessential POCs.
However, observe that, even if our framework did require the set of POCs to be a finite subset of the domain of the unique correct terminal assembly of the counter, then it would  still be possible to implement constructions that utilize a counter.
The idea is to geometrically represent the binary value 0 in the usual manner but immediately followed by the geometric representation of the binary value 1, thhee latter bTeing the error detection bit.
Then, the gadgets that non-deterministically guess the binary value of a geometrically-encoded bit are modified as follows:
\begin{enumerate}
	\item If a geometrically-represented bit has the value 0 and is guessed correctly, then a path of tiles is hard-coded to self-assemble up and over the corresponding error detection bit (whose value is 1).
	\item If a geometrically-represented bit has the value 0 and is guessed incorrectly, then a path of tiles is hard-coded to self-assemble in such a way that it will not try to avoid the corresponding error detection bit (whose value is 1), and therefore be blocked.
\end{enumerate}
Note that the binary value 1 would be represented geometrically in the same fashion, where 1 is represented, followed by 0, but in this case,  a 1 bit cannot be guessed incorrectly. 
It is possible to implement such an error detection scheme by doubling the width of all the reading and writing gadgets for a counter construction to encode the additional error detection bit.
If the binary value 0 was ever guessed incorrectly, then the counter would terminate early before even reading the next bit value and thus would never erroneously double count the same value an arbitrary number of times and not require an infinite set of POCs.
Moreover, implementing such an error detection scheme would not result in an asymptotic increase in the number of tile types over the size of a corresponding standard counter and could be utilized in other non-counter constructions wherever sequences of binary values need to be guessed through self-assembly. 
Figure~\ref{fig:bit-read-error-detection} shows conceptual examples of how such an error detection scheme can be implemented.
\begin{figure}[h!]
    \centering
    \begin{subfigure}[t]{0.45\textwidth}
        \centering
        \includegraphics[width=\textwidth]{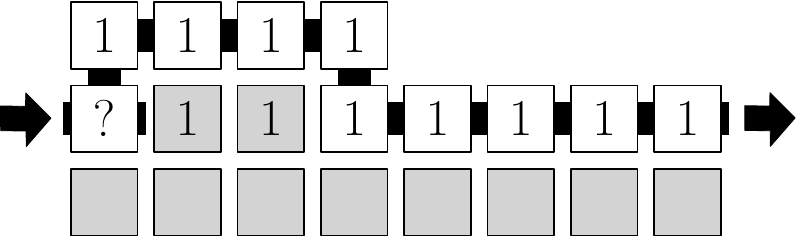}
        \caption{\label{fig:bit-read-error-detection-1-good} If the bit being guessed has the binary value 1, then it cannot be guessed incorrectly. }
    \end{subfigure}
    
    \vspace{10pt}
    \begin{subfigure}[t]{0.45\textwidth}
        \centering
        \includegraphics[width=\textwidth]{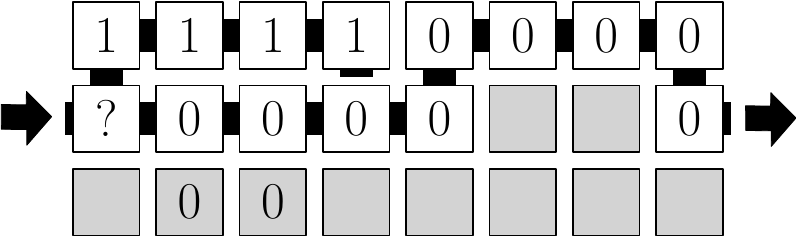}
        \caption{\label{fig:bit-read-error-detection-0-good} In this example, the bit being guessed has the binary value 0, which is guessed correctly. Thus, the `0' path is hard-coded to avoid the error-detection bit, i.e., the protrusion of tiles that represent the opposite of the binary value of the bit that was just guessed. }
    \end{subfigure}
    ~\hspace{20pt}
    \begin{subfigure}[t]{0.41\textwidth}
        \centering
        \includegraphics[width=\textwidth]{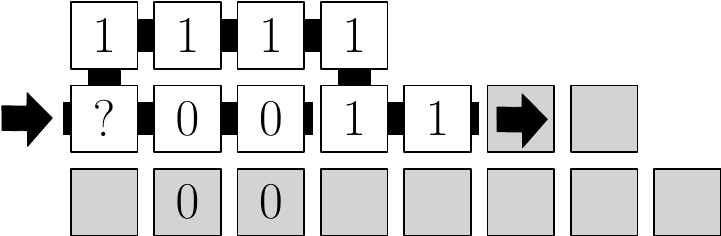}
        \caption{\label{fig:bit-read-error-detection-0-bad} In this example, the bit being guessed has the binary value 0, which is guessed incorrectly. Thus, the `1' path is hard-coded to anticipate the absence of the protrusion of tiles. However, the latter blocks the former, effectively detecting the erroneous guess.  }
    \end{subfigure}
    \caption{\label{fig:bit-read-error-detection} Non-deterministically guessing the binary value of a geometrically-encoded bit and detecting incorrect guesses. In this example, the geometrically-encoded bits are the grey tiles labeled with the binary values 0 or 1. The idea is that each bit is encoded in the usual manner with an upward protrusion representing the binary value 1 and a 0 represented with the absence of such a protrusion. Then, in a similar fashion, the opposite of the encoded bit is geometrically encoded a fixed distance away. In these examples, the grey tiles represent the geometrically-encoded bit and its opposite. The tile labeled with the `?' initiates the process of non-deterministically guessing the bit. }
\end{figure}
While the sequential non-determinism framework introduced here is sufficiently general to allow non-trivial self-assembly, as our Theorem~\ref{thm:high-probability-squares} demonstrates, we speculate that it can be generalized even further. 
For example, it remains future work to determine whether our sequential non-determinism framework can be extended to analyze the probabilistic correctness of the TASes used in \cite{CookFuSch11} for the probabilistic  assembly of squares and the efficient simulation of any time bounded Turing machines. However, we can already describe one direct application of our framework to this prior work.

Consider the self-assembly gadget illustrated in Figure~\ref{fig:conclusion}, which is in essence equivalent to the gadget depicted in Figure~4 of \cite{CookFuSch11}. This gadget exhibits multiple overlapping paths of tiles that self-assemble from the point of the 0 tile to the opposite point across the gap.
	\begin{figure}[h!]
    \centering
    \begin{subfigure}[t]{0.4\textwidth}
        \centering
        \includegraphics[width=\textwidth]{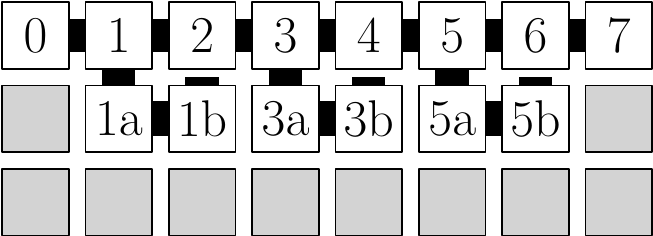}
        \caption{\label{fig:conclusion-1} Incorrectly placing the 7 tile. }
    \end{subfigure}
    ~
    \begin{subfigure}[t]{0.4\textwidth}
        \centering
        \includegraphics[width=\textwidth]{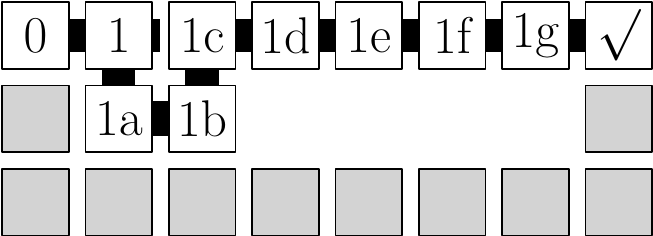}
        \caption{\label{fig:conclusion-3} First way to correctly place the $\surd$ tile. }
    \end{subfigure}

        \begin{subfigure}[t]{0.4\textwidth}
        \centering
        \includegraphics[width=\textwidth]{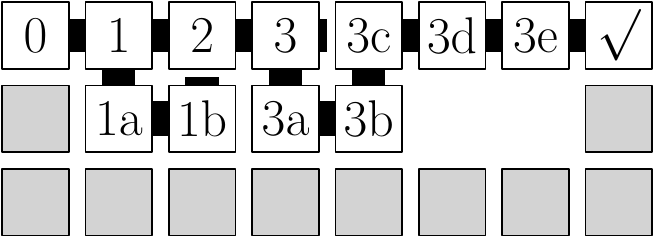}
        \caption{\label{fig:conclusion-2} Second way to correctly place the $\surd$ tile. }
    \end{subfigure}
        ~
        \begin{subfigure}[t]{0.4\textwidth}
        \centering
        \includegraphics[width=\textwidth]{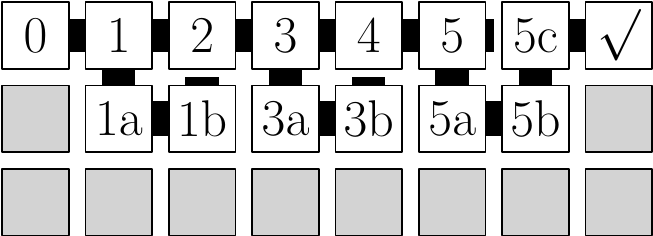}
        \caption{\label{fig:conclusion-4} Third way to correctly place the $\surd$ tile. }
    \end{subfigure}

        \caption{\label{fig:conclusion} This figure depicts the only
          four terminal assemblies for some TAS $\mathcal{T}$ in which
          competing paths all start from the top-left corner point and
          compete to place a tile at the top-right corner point. A
          winning path is any path that self-assembles to the right,
          then down, back up, and eventually places the correct tile
          $\surd$ at the top-right corner point.  (a) The only
          $\mathcal{T}$-terminal assembly that places an incorrect
          tile (namely, the 7 tile) at the top-right corner point. (b)
          through (d) The only three $\mathcal{T}$-terminal assemblies
          that place the correct tile at the top-right corner
          point.  
        }
\end{figure}
	Figure~\ref{fig:conclusion-1} depicts the unique $\mathcal{T}$-terminal assembly containing a path placing the wrong type of tile across the gap, whereas Figures~\ref{fig:conclusion-3} through~\ref{fig:conclusion-4} depict all of the  $\mathcal{T}$-terminal assemblies containing a path placing the correct type of tile across the gap.
	Therefore, this gadget exhibits a kind of generalized competition where the point across the gap can be ``won'' by exactly three out of four competing paths that all self-assemble within a fixed region.
Thus, unlike sequential non-determinism, any type of framework that applies to the gadget in Figure~\ref{fig:conclusion} would not admit a unique correct terminal assembly but rather a set of such terminal assemblies. 

However, careful study of the gadget in Figure~\ref{fig:conclusion} reveals that its TAS $\mathcal{T}$ has a finite set $P=Y=\{\vec{y}_1,\vec{y}_2,\vec{y}_3 \}$ of POCs with a corresponding set of starting points $S_Y=\{\vec{x}_1,\vec{x}_2,\vec{x}_3 \}$, where the elements of $Y$ are the points where the 2, 4, and 6 tiles are placed, respectively, and  the elements of $S_Y$ are the points where the 1, 3, and 5 tiles are placed, respectively. Furthermore, each one of the three pairs of competing paths for $\mathcal{T}$ is essentially identical to the pair of competing paths depicted in Figure~\ref{fig:intro-example1} after a 180-degree rotation. After checking that  $\mathcal{T}$ meets all of the requirements of sequential non-determinism with $r=3$, we can apply our framework to prove that the probability of the correct assembly of the gadget is indeed equal to the stated value of $1 - \left(\frac{7}{8}\right)^3$. Without the sequential non-determinism framework, doing so would involve the extremely tedious consideration of many tile attachment steps that do not affect the result of the computation, for example the placement of the 1a, 1b, 3a, and 3b tiles in the case of Figure~\ref{fig:conclusion-4}. With our framework, we can take advantage of the existence of a unique incorrect assembly, namely the one in  Figure~\ref{fig:conclusion-1}) to:
\begin{itemize}
  \item let $\alpha$ be the assembly shown in Figure~\ref{fig:conclusion-1},
\item define $w$ to be the winner function $\{\left(\vec{y}_1,2\right), \left(\vec{y}_2,4\right), \left(\vec{y}_3,6\right)\}$ and verify that $\mathcal{T}$ is $w$-sequentially non-deterministic,
\item compute the probability of each competition, each one involving only 4 tiles just like in Figure~\ref{fig:intro-example1-noAB}, to yield the same value $\frac{7}{8}$ for all three competitions,
\item use Theorem~\ref{thm:local-non-determinism-theorem} to compute a lower bound on the probability of the incorrect assembly of the gadget, namely $\frac{7}{8}\times \frac{7}{8} \times \frac{7}{8} = \left(\frac{7}{8}\right)^3$,
\item   use the facts that none of the competitions is rigged and only one $\mathcal{T}$-terminal assembly has a shape equal to $\dom{\alpha}$ to infer that the lower bound given by our theorem is tight, i.e., the probability of the incorrect assembly of the gadget is equal to $\left(\frac{7}{8}\right)^3$, and
\item infer the probability of correct assembly of the gadget in  Figure~\ref{fig:conclusion} using the complement rule, to yield $1-\left(\frac{7}{8}\right)^3$.
\end{itemize}
It remains future work to extend our framework to the study of the probabilistic correctness of TASes that assemble shapes by sequencing an arbitrary number of such gadgets as was done in \cite{CookFuSch11}.

Finally, Theorem~\ref{thm:high-probability-squares} improves upon the previous state-of-the-art bound by Cook, Fu and Schweller \cite{CookFuSch11}, but our result is still not optimal from an information-theoretic perspective.
After all, Cook, Fu and Schweller proved that for almost all positive integers $N$, if $\mathcal{T}=(T,\sigma,1)$ is a TAS that strictly self-assembles an $N \times N$ square with probability greater than $\frac{1}{2}$, then $|T| = \Omega\left(\frac{\log N}{\log \log N}\right)$.
It is future work to derermine 1) whether such an asymptotic lower bound can be achieved in the 2DPTAM and 2) the extent to which sequential non-determinism would support or hinder such a result.

\section*{Acknowledgement}
We thank Matthew Patitz for providing helpful insights regarding the implementation of several  aspects of our construction.  

\bibliographystyle{amsplain}
\bibliography{tam}

\end{document}